\documentclass[12pt,letterpaper]{article}%
\usepackage{amsmath}
\usepackage{rotating}
\usepackage{version}
\usepackage{amsfonts}
\usepackage{amssymb}
\usepackage{graphicx}
\usepackage{float}
\usepackage{url}
\usepackage[round,authoryear,compress]{natbib}
\usepackage{color}
\usepackage{xr}
\usepackage[normalem]{ulem}
\usepackage[labelfont={bf}]{caption}
\usepackage{wrapfig}
\usepackage{titletoc}
\usepackage[singlespacing,nodisplayskipstretch]{setspace}
\usepackage[pdftex,letterpaper,portrait,twocolumn=false,nomarginpar]{geometry}%
\setcounter{MaxMatrixCols}{30}
\newtheorem{theorem}{Theorem}

\newtheorem{assumption}{Assumption}

\newtheorem{definition}{Definition}

\newtheorem{lemma}{Lemma}

\newtheorem{remark}{Remark}

\newenvironment{proof}[1][Proof]{\noindent\textbf{#1.} }{\ \rule{0.5em}{0.5em}}
\geometry{margin=1.10in, paperwidth=8.5in, paperheight=12in}
\setlength{\bibsep}{0pt}
\begin{document}

\title{Identifying and exploiting alpha in linear asset pricing models with strong,
semi-strong, and latent factors\thanks{We are grateful to Juan Espinosa Torres
and Jing Kong for carrying out the computations and to Hayun Song for
excellent research assistance, all are PhD students at USC. We are also
grateful for helpful comments from the editor, Fabio Trojani, two anonymous
referees, Natalia Bailey, Alessio Sancetta, Takashi Yamagata and participants
at the 2023 Kansas Econometrics Institute, the 2023 NBER Summer Insitute, the
2023 IAAE Annual Conference in Oslo, University of California, Riverside,
April 2024, and ECMFE Workshop at University of Essex, October 2024.}}
\author{M. Hashem Pesaran\\University of Southern California,\ and Trinity College, Cambridge
\and Ron P. Smith\thanks{Corresponding author, email r.smith@bbk.ac.uk. }\\Birkbeck, University of London}
\maketitle

\begin{abstract}
The risk premia of traded factors are the sum of factor means and a parameter
vector we denote by $\boldsymbol{\phi}$\textbf{ }$\ $which is identified from
the cross-section regression of $\alpha_{i}$ on the vector of factor loadings,
$\boldsymbol{\beta}_{i}$.\textbf{ }If $\boldsymbol{\phi}$ is non-zero, then
$\alpha_{i}$ are non-zero and one can construct "phi-portfolios" which exploit
the systematic components of non-zero alpha. We show that for known values of
$\boldsymbol{\beta}_{i}$ and when $\boldsymbol{\phi}\ \ $is non-zero there
exist phi-portfolios that dominate mean-variance portfolios. The paper then
proposes a two-step bias corrected estimator of $\boldsymbol{\phi}$ \ and
derives its asymptotic distribution allowing for idiosyncratic pricing errors,
weak missing factors, and weak error cross-sectional dependence. Small sample
results from extensive Monte Carlo experiments show that the proposed
estimator has the correct size with good power properties. The paper also
provides an empirical application to a large number of U.S. securities with
risk factors selected from a large number of potential risk factors according
to their strength and constructs phi-portfolios and compares their Sharpe
ratios to mean variance and S\&P portfolios.

\end{abstract}

\textbf{JEL Classifications: }{\small C38, G10}

\textbf{Key Words: }F{\small actor strength, pricing errors, risk premia,
missing factors, pooled Lasso, mean-variance and phi-portfolios.}%
\thispagestyle{empty}%

\newpage%
\pagenumbering{arabic}%
\onehalfspacing
\setlength{\abovedisplayskip}{0pt}
\setlength{\belowdisplayskip}{0pt}
\setlength{\abovedisplayshortskip}{0pt}
\setlength{\belowdisplayshortskip}{0pt}%

\section{Introduction\label{Intro}}

There are two approaches to the estimation of risk premia and testing of
market efficiency, often referred as the beta and the SDF (stochastic discount
factor) methods, see \cite{Jagannathan2002}. This paper adopts the beta
method, and following the literature uses a linear factor pricing model (LFPM)
to explain the time series of excess returns on individual securities,
$r_{it}=R_{it}-r_{t}^{f}$, where $R_{it}$ is the return and $r_{t}^{f}$ the
risk free rate for $i=1,2,...,n;$ $t=1,2,....,T,$ by a set of observed
tradable risk factors. We use individual securities rather than portfolios
since, as we will show, if risk factors are not strong, large $n$ is required
for accurate estimation. \cite{Ang2020} discuss the general issues in the
choice between portfolios and individual stocks. \cite{Pesaran2023} discuss
both the use of portfolios and the relationship between the SDF and LFPM approaches.

The LFPM\ explains the excess return on each security, $r_{it},$ by an
intercept, $\mathit{\alpha}_{i}$, labelled alpha, and a $K\times1$ vector of
traded risk factors, $\mathbf{f}_{t}=(f_{1t},f_{2t},...,f_{Kt})^{\prime}$ with
loadings, $\boldsymbol{\beta}_{i}$:%
\begin{equation}
r_{it}=\mathit{\alpha}_{i}+\boldsymbol{\beta}_{i}^{\prime}\mathbf{f}%
_{t}+u_{it}. \label{LFPM}%
\end{equation}
Under the Arbitrage Pricing Theory (APT) due to
\citet{ROSS1976341}%
, we have
\begin{equation}
E\left(  r_{it}\right)  =c+\boldsymbol{\beta}_{i}^{\prime}\boldsymbol{\lambda
}+\eta_{i}, \label{APT}%
\end{equation}
where $\eta_{i}$ is a firm-specific idiosyncratic pricing error. Ross allowed
$\eta_{i}$ to be non-zero for some $i$ but required that these errors are
bounded in the sense that $\sum_{i=1}^{n}\eta_{i}^{2}<C<\infty$. We relax this
condition and impose weaker restrictions on $\eta_{i}$'s as discussed in
sub-section \ref{pricing}. But the paper's main contribution lies in the fact
that even if there are no idiosyncratic pricing errors, it is still possible
to have non-zero $\mathit{\alpha}_{i}$. To see this, taking unconditional
expectations of (\ref{LFPM}) and using (\ref{APT}) we note that
\[
E(r_{it})=\mathit{\alpha}_{i}+\boldsymbol{\beta}_{i}^{\prime}\mathbf{\mu
}=c+\boldsymbol{\beta}_{i}^{\prime}\boldsymbol{\lambda}+\eta_{i}.
\]
where $E(\mathbf{f}_{t})=\boldsymbol{\mu}$, which in turn yields%
\begin{equation}
\mathit{\alpha}_{i}=c+\boldsymbol{\beta}_{i}^{\prime}\left(
\boldsymbol{\lambda}-\boldsymbol{\mu}\right)  +\eta_{i},\text{ for
}i=1,2,...,n \label{alphai}%
\end{equation}
We denote $\boldsymbol{\lambda}-\boldsymbol{\mu}$ by $\boldsymbol{\phi}$, and
consider the implications of a non-zero $\boldsymbol{\phi}$ for portfolio
optimization. We show that when $\boldsymbol{\phi\neq0}$, one can construct
what we call "phi-portfolios" that exploit the systematic components of
$\mathit{\alpha}_{i}$, as captured by the non-zero elements of
$\boldsymbol{\phi}\ $\ in (\ref{alphai}).\footnote{We are grateful to one of
the anonymous reviewers for suggesting the idea of phi-portfolios, as a way of
motivating and interpreting the role of $\boldsymbol{\phi}$ in asset pricing
models.} In contrast, the idiosyncratic pricing errors, $\eta_{i}$, cannot be
exploited, as it is likely to involve insider trading. But our proposed
phi-portfolios can be implemented and applies irrespective of whether
$\eta_{i}=0$ or not.

The extent to which alpha can be exploited is discussed in sub-section
(\ref{pricing}) where it is shown that $\sum_{i=1}^{n}\left(  \mathit{\alpha
}_{i}-\mathit{\bar{\alpha}}\right)  ^{2}$ depends on the magnitude of
$\boldsymbol{\phi}^{\prime}\boldsymbol{\phi}$ and the strength of the risk
factors. Since this alpha can be exploited by the construction of the
phi-portfolios, this is not consistent with a no arbitrage condition. While it
is true that $E(r_{it})=\boldsymbol{\beta}_{i}^{\prime}\boldsymbol{\lambda}$,
estimating $\boldsymbol{\lambda}$\ does not tell us whether $\boldsymbol{\phi
}\neq0,$\ which can be exploited to obtain Sharpe ratios that are larger than
the Sharpe of the tangency portfolio. The claim that tangency portfolio has
the highest Sharpe ratio rests on the implicit assumption that
$\boldsymbol{\phi=0}$, even if there are no idiosyncratic pricing errors
($\eta_{i}=0$ for all $i$).

The focus of much of the literature has been on testing $\mathit{\alpha}%
_{i}=0$, or on estimating the risk premium $\boldsymbol{\lambda}$. But given
that estimating a non-zero $\boldsymbol{\phi}$ \ provides a way to identify
and exploit the alpha in a linear factor pricing model for large $n,$
$\boldsymbol{\phi}$ is an interesting parameter in itself, which will be the
focus of this paper. Furthermore, the tangency condition used to construct
mean-variance (MV) portfolios has the implicit assumption that $\mathit{\alpha
}_{i}=0$, and the MV portfolio implied by the tangency condition is not
efficient unless $\boldsymbol{\phi=0}$. If $\boldsymbol{\phi\neq0}$ there is
no guarantee that the tangency portfolio will achieve a maximal Sharpe ratio,
and is likely to be dominated by the phi-portfolio. Specifically, abstracting
from model and parameter uncertainties, we show that when $\boldsymbol{\phi
\neq0}$ the Sharpe of the tangency portfolio is bounded in the number of
securities, whilst the Sharpe of the phi portfolio rises in $n$ and there
exists a sufficiently large $n$ that ensures that the Sharpe of the
phi-portfolio will exceed that of the tangency portfolio.

More specifically, first we introduce our proposed phi-portfolio in terms of
the factor loadings $\mathbf{B}_{n}=\left(  \boldsymbol{\beta}_{1}%
,\boldsymbol{\beta}_{2},...,\boldsymbol{\beta}_{n}\right)  ^{\prime}$ and
$\boldsymbol{\phi}$, and compare its limiting properties (as $n\rightarrow
\infty$) with the standard MV portfolio. We show that for known factor
loadings\ and non-zero $\boldsymbol{\phi}$ there exist phi-portfolios with
strictly positive returns, given by $\boldsymbol{\phi}^{\prime}%
\boldsymbol{\phi}>0$, that are fully diversified (their variance tends to zero
with $n$). The rate at which the variance of phi-portfolio returns tends to
zero will depend on the strength of the traded risk factors compared to the
strength of the missing (latent) factors, highlighting the importance of
factor strengths in portfolio analysis. Since the phi-portfolios have Sharpe
ratios that tend to infinity with $n$, they dominate MV portfolios whose
Sharpe ratio is bounded in $n$. Note that unlike MV\ portfolios, the
phi-portfolios do not require knowledge of the inverse of the covariance
matrix of returns, which is particularly difficult to estimate accurately when
$n$ is large. As is usual, these theoretical results apply to population
values where there are no restrictions on trading many individual securities.

Next we consider estimation of and inference about $\boldsymbol{\phi}$ using a
large number of individual securities, taking account of firm-specific pricing
errors, traded factors that are not strong, missing latent factors, and panel
data sets where the time dimension, $T$, is small relative to $n$. Factor
strength plays a central role in our analysis of $\boldsymbol{\phi}$. We use a
measure of factor strength, $\alpha_{k},$ developed in
\citet{bailey2016exponent, bailey2021measurement}%
, which is defined in terms of the proportion of non-zero factor loadings,
$\beta_{ik}$.\footnote{Alpha is used both for the LFPM intercepts and the
measure of factor strength, because these are established usages, but it will
be clear from context and subscripts which is being referred to.} A factor is
strong if this proportion is very close to unity, it is semi-strong if
$1>\alpha_{k}>1/2$, and it is weak if $\alpha_{k}<1/2$. Use of this measure
allows us to be precise about the degree of pervasiveness and show how the
strengths of the observed factors, the missing factors and the pricing errors,
each influence estimation and inference about $\boldsymbol{\phi}$.

In practice, exploitation of a non-zero $\boldsymbol{\phi}$ requires $n$ to be
large and rebalancing such long-short portfolios for so many securities may
incur high transactions costs or not be feasible. In addition, model
uncertainty, estimation uncertainty, time variation in both $\boldsymbol{\beta
}_{i}$ and in conditional volatility pose additional difficulties in
implementing a strategy to exploit the potential returns revealed by
$\boldsymbol{\phi}$. In developing the theory we will abstract from such
practical difficulties, but in the empirical section we illustrate some of
these issues with a comparison of the performance of $\boldsymbol{\phi}$ based
portfolios relative to MV portfolios which would face similar difficulties.

We estimate $\boldsymbol{\phi}=(\phi_{1},\phi_{2},...,\phi_{K})^{\prime}$,
using a two-step estimator. In the first step we estimate the intercepts,
$\hat{\alpha}_{i}$, and the factor loadings, $\boldsymbol{\hat{\beta}}_{i}$,
from least squares regressions of excess returns on an intercept and risk
factors. In the second step, $\boldsymbol{\phi}$ is estimated from the cross
section regression of $\hat{\alpha}_{i}$ on $\boldsymbol{\hat{\beta}}_{i}$. As
with the two-step estimator of $\boldsymbol{\lambda}$, such a two-step
estimator of $\phi_{k}$ will also be biased, and requires bias-correction.
Following
\citet{shanken1992estimation}%
, we consider a bias-corrected version of the two-step estimator of
$\boldsymbol{\phi}$, which we denote by $\boldsymbol{\tilde{\phi}}_{nT}$. We
develop the asymptotic distribution of $\boldsymbol{\tilde{\phi}}_{nT}$ under
quite general set of assumptions regarding the idiosyncratic pricing errors,
error cross-sectional dependence, and the presence of missing (latent)
factors. The paper also investigates the implications of factor strengths for
the precision with which $\boldsymbol{\phi}$ can be estimated. The LFPM,
following
\citet{cham1983arbi}%
, assumes that all the observed factors are strong and the eigenvalues of the
covariance matrix of the errors are bounded.

In developing the arbitrage pricing theory, APT,
\citet{ROSS1976341}%
, whose concerns were primarily theoretical, assumed the factors had mean
zero: $\boldsymbol{\mu}=0,$ so $\boldsymbol{\phi}=\boldsymbol{\lambda}$, is
the risk premium. For traded factors under market efficiency, where
$\boldsymbol{\phi}=\mathbf{0}$, the risk premium is the factor mean
$\boldsymbol{\mu}=\boldsymbol{\lambda}$. Were one interested in estimating
$\boldsymbol{\lambda}$ there may be statistical reasons to estimate $\phi_{k}$
and $\mu_{k}$ separately, then summing them to obtain an estimate of
$\lambda_{k}$. The factor mean, $\mu_{k}=E(f_{kt})$, can be estimated
consistently at the regular $\sqrt{T}$ rate directly using time series data on
the risk factors, $f_{kt}$, for $t=1,2,...,T,$ and does not require knowledge
of the factor loadings or $n$. In contrast, estimation of $\phi_{k}$ requires
panel data to estimate the factor loadings and hence both $n$ and $T$
dimensions are important. In some cases it may be beneficial to use different
time series dimensions, $T_{\mu}$ and $T_{\phi},$ to estimate $\mu_{k}$ and
$\phi_{k}$, respectively. For example, when factor loadings are subject to
breaks it is advisable to use a relatively short sample, and when some factors
are not sufficiently strong a large $n$ is required. Thus one could use large
$T$ to estimate $\mu_{k}$ and small $T$ to estimate $\phi_{k}$ and
$\lambda_{k}$ can be simply estimated by adding the estimates of $\mu_{k}$ and
$\phi_{k}.$ We do not pursue the use of different $T$, but if the same $T$ is
used to estimate the mean of factor $k,$ say $\hat{\mu}_{k,T},$ and the
bias-corrected $\tilde{\phi}_{k,nT}$ then their sum is the same as the
\citet{shanken1992estimation}
bias-corrected risk premium for factor $k,$ $\tilde{\lambda}_{k,nT}.$ This
decomposition can be used to obtain the rate at which $\tilde{\lambda}_{k,nT}$
converges to its true value, $\lambda_{0,k}=\phi_{0,k}+\mu_{0,k}$.

The main theoretical results of the paper are set out around five theorems
under a number of key assumptions, with proofs provided in the mathematical
appendix. The small sample properties of $\boldsymbol{\tilde{\phi}}_{nT}$ are
investigated using extensive Monte Carlo experiments, allowing for a mixture
of strong and semi-strong observed factors, latent factors, pricing errors,
GARCH effects and non-Gaussian errors. Small sample results are in line with
our theoretical findings, and confirm that the bias-corrected estimator,
$\boldsymbol{\tilde{\phi}}_{nT}$, has the correct size and good power
properties for samples with time series dimensions of $T=120$ and $T=240$.
They also show, in accord with the theory, that the precision with which
$\phi_{k}$ is estimated falls with $\alpha_{k}$, the strength of the $k^{th}$ factor.

Our theoretical derivations and Monte Carlo simulations assume that the list
of relevant observed factors is known. But in practice the relevant factors
must be selected. Extending the theory to the high dimensional case where
factors are selected, rather than given, is beyond the scope of the present
paper. Since the rate of convergence of $\boldsymbol{\tilde{\phi}}_{nT}$ to
its true value is given by $\sqrt{T}n^{(\alpha_{k}+\alpha_{\min}-1)/2}$, and
the Monte Carlo confirms the crucial role of factor strengths in estimation
and inference on $\boldsymbol{\phi,}$ it seems sensible to select factors on
the basis of their factor strength. Weak factors whose strength is around
$1/2$ can be ignored and absorbed into the error term.

The above selection procedure is applied in a high dimensional setting with
both a large number of securities ($n$ from $1,090$ to $1,175)$ and a large
number of potential risk factors ($m$ from $177$ to $189)$, taken from the
\citet{chen2022open}%
, which can all be traded. We used monthly data over the period
$1996m1-2022m12$ and considered balanced panels obtained by including
\textit{all} existing stocks in a given month for which there are $T$
observations. We considered $T=120$ and $240$ months, and focus on the latter
which we found to be more reliable, given the large number of securities being
considered. Various procedures could be used to select risk factors for a
given security, $i$. We used Lasso for this purpose and then selected a subset
of these factors that were chosen by a sufficiently large number of securities
in the sample, and whose estimated strength were above the given threshold
value of $\alpha_{k}>0.75$. We refer to this selection procedure as pooled
Lasso. Using this procedure with $T=240$ we ended up with $7$ risk factors for
the sample ending in $2015$, declining to $4$ in $2021$. Interestingly, the
three Fama-French factors were always included in the set of factors selected
by pooled Lasso.

Accordingly, we considered three linear asset pricing models for our portfolio
analysis: the pooled Lasso selected at the start of our evaluation sample,
denoted by PL7, the Fama-French three factors model, FF3, and the Fama-French
five factors model, FF5, which includes two factors not in PL7. The three
models are estimated using rolling samples of size $T=240$, starting with a
sample ending in $2015m12$ and finishing with a sample ending in $2022m11$.
The hypothesis that $\boldsymbol{\phi}=\mathbf{0}$ was rejected for all $84$
rolling samples and all three models, albeit less strongly over the post
Covid-19 period. The test results suggested possible unexploited return
opportunities, and to investigate this possibility further, we constructed
phi-portfolios and compared their Sharpe ratios with the ones based on
standard MV portfolios over the full sample evaluation sample,
$2016m1-2022m12$, and sample ending $2019m12$, that excludes the Covid-19
period. We find that in five out of the six cases (3 models 2 samples) the
phi-portfolio has a higher SR than the corresponding MV portfolio. The
exception is the FF5 pre Covid-19. This illustrates that if $\boldsymbol{\phi
}\neq0,$ it is possible to construct a portfolio that outperforms the mean
variance portfolio. In both the pre Covid-19 sample and the full sample the
highest SR was obtained by the PL7 phi-portfolios, which also outperformed the
S\&P500. The SRs for the sample ending in $2022$ were substantially lower than
the sample ending in $2019$, consistent with a falling value of the
probability that $\boldsymbol{\phi}$ was non-zero.

\textbf{Related literature: }On estimation of risk premia, following Fama and
MacBeth and
\citet{shanken1992estimation}%
, estimation of risk premia is further examined by
\citet{shanken2007estimating}%
,
\citet{kan2013pricing}%
, and
\citet{BAI201531}%
. The survey paper by
\citet{jagannathan2010analysis}
provide further references.

Testing for market efficiency dates back to
\citet{jensen1968performance}
who proposes testing a$_{i}=0$ for each $i$ separately.
\citet{gibbons1989test}
provide a joint test for the case where the errors are Gaussian and $n<T$.
\citet{gagliardini2016time}
develop two-pass regressions of individual stock returns, allowing
time-varying risk premia, and propose a standardised Wald test.
\citet{raponi2019testing}
propose a test of pricing error in cross section regression for fixed number
of time series observations. They use a bias-corrected estimator of
\citet{shanken1992estimation}
to standardise their test statistic.
\citet{ma2020testing}
employ polynomial spline techniques to allow for time variations in factor
loadings when testing for alphas.
\citet{feng2022high}
propose a max-of-square type test of the intercepts instead of the average
used in the literature, and recommend using a combination of the two testing
procedures.
\citet{he2022testing}
propose two statistics, a Wald type statistic which require $n$ and $T$ to be
of the same order of magnitude and a standardised t-ratio.
\citet{kleibergen2009tests}
considers testing in the case where the loadings are small.
\citet{pesaran2012testing, pesaran2023testing}
consider testing that the intercepts in the LFPM are zero when $n$ is large
relative to $T$ and there may be non Gaussian errors and weakly
cross-correlated errors.

A large number of risk factors have been considered in the empirical
literature. We use the
\citet{fama1993common}
three factors in our Monte Carlo design. In our empirical application we use
the five factors proposed by
\citet{fama2015five}
and the large set of factors provided by
\citet{chen2022open}%
.
\citet{harvey2019census}
document over 400 factors published in top finance journals.
\citet{dello2022missing}
argue that despite the hundreds of systematic risk factors considered in the
literature, there is still a sizable pricing error and that this can be
explained by asset specific risk that reflects market frictions and behavioral
biases. There is a large Bayesian literature, including
\citet{chib2020comparing}%
, and
\citet{hwang2022bayesian}
on selecting factor models. The issue of factor selection is also addressed
by
\citet{fama2018choosing}%
.

Strong and weak factors in asset returns are considered by
\citet{ANATOLYEV2022103}%
,
\citet{connor2022semi}%
, and \cite{giglio2023test}.
\citet{beaulieu2020arbitrage}
discuss the lack of identification of risk premia when many of the loadings
are zero. There has also been concern about the consequences of omitted
factors.
\citet{giglio2021asset}
discuss the problem and try to deal with it using a three-pass method which is
valid even when not all factors in the model are specified or observed using
principal components of the test assets.
\citet{onatski2012asymptotics}
and
\citet{lettau2020estimating, lettau2020factors}
provide extensive discussions of weak factor and latent factors, respectively.
More recent contributions include \cite{BaiNg2023Weak} and
\cite{uematsu2023inference}.

There is also a large literature on portfolio construction.
\cite{Herskovic2019} discuss low cost methods of hedging risk factors.
\cite{Preite2024} derive an SDF in which there is compensation for
unsystematic risk within the framework of the APT. \cite{Korsaye2021}
introduce model-free smart SDFs which give rise to non-parametric SDF bounds
for testing asset pricing models.\ \ \cite{Daniel2020} discuss the common
practice of creating characteristic portfolios by sorting on characteristics
associated with average returns and show that these portfolios capture not
only the priced risk associated with the characteristic but also unpriced
risk. \cite{Quaini2023} propose an estimator of tradable factor risk premia.

\textbf{Paper's outline}: The rest of the paper is organized as follows:
Section \ref{LFPMAPT} provides the framework for estimation of
$\boldsymbol{\phi}.$ Section \ref{Assumptions} sets out the assumptions and
states the main theorems. Theorem \ref{TFMbias} shows that the standard
Fama-MacBeth estimator is valid only when there are no pricing errors and
$n/T\rightarrow0$. Theorem \ref{Thzig} shows that the Shanken bias-corrected
estimator of $\lambda_{k}$ continues to be consistent for a fixed $T$ as
$n\rightarrow\infty$, even in presence of weak pricing errors and weak missing
common factors. Theorem \ref{Tfi} provides conditions under which the
bias-corrected estimator, $\boldsymbol{\tilde{\phi}}_{nT}$, is consistent for
$\boldsymbol{\phi}_{0}$, and derives the asymptotic distribution of
$\boldsymbol{\tilde{\phi}}_{nT},$ assuming the observed factors are strong
$(\alpha_{k}=1$ for all $k$). Theorem \ref{Tsemi} extends the results to cases
where one or more risk factors are semi-strong, and establishes the rate at
which $\boldsymbol{\tilde{\phi}}_{nT}$ converges to its true value, assuming
the idiosyncratic pricing errors are sufficiently weak, as discussed after the
theorem. For example, for a factor with strength $\alpha_{k}$ we show that
$\tilde{\phi}_{k,nT}-\phi_{0,k}=O_{p}\left(  T^{-1/2}n^{-(\alpha_{k}%
+\alpha_{\min}-1)/2}\right)  $, and as a consequence
\[
\tilde{\lambda}_{k,nT}-\lambda_{0,k}=O_{p}\left(  T^{-1/2}n^{-(\alpha
_{k}+\alpha_{\min}-1)/2}\right)  +O_{p}\left(  T^{-1/2}\right)  ,
\]
where $\lambda_{0,k}$ is the true value of the risk premia associated to
factor $f_{kt}$, and $\alpha_{\min}$ is the strength of the least strong
factor included. This consistency condition is weaker than the one derived by
\cite{giglio2023test} who also assume $\eta_{i}=0$, for all $i$. Finally,
Theorem \ref{Var} gives conditions for consistent estimation of the asymptotic
variance of $\boldsymbol{\tilde{\phi}}_{nT},$ using a suitable threshold
estimator of the covariance matrix. Section \ref{Simulations} presents the
Monte Carlo (MC) design, its calibration and a summary of the main findings.
Section \ref{FacSel}\ discusses the problem of factor selection from a large
number of potential factors. Section \ref{Empirical} gives the empirical
application using monthly data on a large number of individual US stocks and
risk factors over the period 1996-2021. It selects factors, estimates
$\boldsymbol{\phi}\mathbf{,}$ and compares the performance of MV and
phi-portfolios. Section \ref{Conclusion} provides some concluding remarks.

Detailed mathematical proofs are provided in a mathematical appendix. Further
information on data sources, MC calibration plus some supplementary material
for the empirical application are provided in the online supplement A. To save
space all MC results are given in the online supplement B.

\section{Identification and estimation of $\boldsymbol{\phi}$\label{LFPMAPT}}

Let $R_{it}$ denote the holding period return on traded security $i$, which
can be bought long or short without transaction costs, $r_{t}^{f}$ \ is the
risk free rate, and $r_{it}=R_{it}-r_{t}^{f}$ is the excess return. We start
with the linear factor pricing model (LFPM)
\begin{equation}
r_{it}=\mathit{\alpha}_{i}+\boldsymbol{\beta}_{i}^{\prime}\mathbf{f}%
_{t}+u_{it}, \label{StatM}%
\end{equation}
for $i=1,2,...,n$ and $t=1,2,...,T$, where $r_{it}$ is explained in terms of
the $K\times1$ vector of factors $\mathbf{f}_{t}=(f_{1t},f_{2t},...,f_{Kt}%
)^{\prime}$. The intercept $\mathit{\alpha}_{i}$ and the $K\times1$ vector of
factor loadings, $\boldsymbol{\beta}_{i}=(\beta_{i1},\beta_{i2},...,\beta
_{iK})^{\prime}$, are unknown. The idiosyncratic errors, $u_{it}$ have zero
means and are assumed to be serially uncorrelated. The factors, $\mathbf{f}%
_{t}$, are assumed to be covariance stationary with the constant mean
$\boldsymbol{\mu}=E\left(  \mathbf{f}_{t}\right)  $, and $\boldsymbol{\Sigma
}_{f}=E\left[  \left(  \mathbf{f}_{t}-\boldsymbol{\mu}\right)  \left(
\mathbf{f}_{t}-\boldsymbol{\mu}\right)  ^{\prime}\right]  $.

Under the Arbitrage Pricing Theory (APT) due to Ross (1976) the pricing
errors, $\eta_{i}$ for $i=1,2,...,n$ defined by
\begin{equation}
\eta_{i}=E\left(  r_{it}\right)  -c-\boldsymbol{\beta}_{i}^{\prime
}\boldsymbol{\lambda}\mathbf{,} \label{APTmean}%
\end{equation}
are bounded such that
\begin{equation}
\sum_{i=1}^{n}\eta_{i}^{2}<C<\infty, \label{APTRoss}%
\end{equation}
where $c$ is zero-beta expected excess return, $\boldsymbol{\lambda}$ is the
$K\times1$ vector of risk premia. Under APT, we have
\begin{equation}
E\left(  r_{it}\right)  =c+\boldsymbol{\beta}_{i}^{\prime}\boldsymbol{\lambda
}_{0}+\eta_{i}, \label{Urit}%
\end{equation}
and reduces to the standard beta representation of an unconditional asset
pricing model, if $c+\eta_{i}=0$. In this case, $\boldsymbol{\beta}%
_{i}=-Cov(r_{it},m_{t})/var(m_{t})$, where $m_{t}$ is the stochastic discount
factor (SDF), which satisfies the moment condition $E_{t}(r_{i,t+1}m_{t+1}%
)=0$. See \cite{Pesaran2023}\textbf{ }for a discussion of the link. For
empirical assessment, we consider the more general unconditional pricing model
(\ref{Urit}), and relate to the linear factor pricing model. To this end,
taking unconditional expectations of (\ref{StatM}) and using the
APT\ condition we have
\begin{equation}
E(r_{it})=\mathit{\alpha}_{i}+\boldsymbol{\beta}_{i}^{\prime}\boldsymbol{\mu
}=\boldsymbol{\beta}_{i}^{\prime}\boldsymbol{\lambda}+c+\eta_{i}.
\label{meuiR}%
\end{equation}
which in turn yields%
\begin{equation}
\mathit{\alpha}_{i}=c+\boldsymbol{\beta}_{i}^{\prime}\boldsymbol{\phi}%
+\eta_{i}\text{, for }i=1,2,...,n. \label{ai}%
\end{equation}
where
\begin{equation}
\boldsymbol{\phi}=\boldsymbol{\lambda}-\boldsymbol{\mu}, \label{phi0}%
\end{equation}
Under the APT condition, (\ref{APTRoss}), $\boldsymbol{\phi}$ can be
identified from the cross section regression of $\alpha_{i}$ on
$\boldsymbol{\beta}_{i}$. We relax the APT\ condition and derive restrictions
on the degree to which idiosyncratic pricing errors, $\eta_{i}$, can be
pervasive to achieve identification.

The focus of the literature has been on testing for alpha, $\mathit{\alpha
}_{i}=0$, and the estimation of the risk premia, $\boldsymbol{\lambda}$, using
panel data on excess returns, $\left\{  r_{it},1,2,...,n\text{; }%
t=1,2,...,T\right\}  ,$ and $\mathbf{F}$, the $T\times K$ matrix of
observations on the factors. It is clear that $\boldsymbol{\phi}$ plays an
important role in tests for alpha in LFPM, and a non-zero $\boldsymbol{\phi}$
implies non-zero alphas, which in turn implies exploitable excess profitable
opportunities. More specifically, we show that for known values of
$\boldsymbol{\beta}_{i},$ $i=1,2,...,n\,$\ and when $\boldsymbol{\phi}$ is
non-zero, there exists phi-based portfolios with non-zero means that are fully
diversified (their variance tends to zero with $n$), namely have Sharpe ratios
that tend to infinity with $n$, and hence dominate the MV portfolios. For the
MV portfolios to be efficient it is necessary that $\boldsymbol{\phi=0}$.

\subsection{Why $\boldsymbol{\phi}$ matters: introduction of the
phi-portfolios}

Substitute the expression for $\mathit{\alpha}_{i}$ given by (\ref{ai}) in
(\ref{StatM}) to obtain%

\begin{equation}
r_{it}=c+\boldsymbol{\beta}_{i}^{\prime}\boldsymbol{\phi}+\eta_{i}%
+\boldsymbol{\beta}_{i}^{\prime}\mathbf{f}_{t}+u_{it},\text{ for }i=1,2,...,n,
\label{rit}%
\end{equation}
and write the $n$ return equations more compactly as%
\begin{equation}
\mathbf{r}_{\circ t}=c\mathbf{\tau}_{n}+\mathbf{B}_{n}\boldsymbol{\phi
}+\mathbf{B}_{n}\mathbf{f}_{t}+\boldsymbol{\eta}_{n}+\mathbf{u}_{\circ t},
\label{ritV}%
\end{equation}
where $\mathbf{r}_{\circ t}=(r_{1t},r_{2t},....,r_{nt})^{\prime},$
$\boldsymbol{\tau}_{n}$ is an $n$-dimensional vector of ones, $\mathbf{B}%
_{n}=(\boldsymbol{\beta}_{\circ1},\boldsymbol{\beta}_{\circ2}%
,...,\boldsymbol{\beta}_{\circ K}),$ $\boldsymbol{\beta}_{\circ k}=(\beta
_{1k},\beta_{2k},...,\beta_{nk})^{\prime},$ $\mathbf{u}_{\circ t}%
=(u_{1t},u_{2t},....,u_{nt})^{\prime}$, $\boldsymbol{\eta}_{n}=\left(
\eta_{1},\eta_{2},...,\eta_{n}\right)  ^{\prime}$, and $\mathbf{V}%
_{u}=E\left(  \mathbf{u}_{\circ t}\mathbf{u}_{\circ t}^{\prime}\right)  $.
Suppose that the factors, $\mathbf{f}_{t},$ are traded, and $\boldsymbol{\phi
}^{\prime}\boldsymbol{\phi}>0$. Consider the $n\times1$ vector of
phi-portfolio weights,$\mathbf{w}_{\phi}$, given by%

\begin{equation}
\mathbf{w}_{\phi}=\mathbf{M}_{n}\mathbf{B}_{n}\left(  \mathbf{B}_{n}^{\prime
}\mathbf{M}_{n}\mathbf{B}_{n}\right)  ^{-1}\boldsymbol{\phi}, \label{wphi}%
\end{equation}
where $\mathbf{M}_{n}=\mathbf{I}_{n}-n^{-1}\boldsymbol{\tau}_{n}%
\boldsymbol{\tau}_{n}^{\prime}$. Finally, consider the long-short hedged
portfolio return
\begin{equation}
\rho_{t,\phi}=\mathbf{w}_{\phi}^{\prime}\mathbf{r}_{\circ t}-\boldsymbol{\phi
}^{\prime}\mathbf{f}_{t}=\boldsymbol{\phi}^{\prime}\left[  \left(
\mathbf{B}_{n}^{\prime}\mathbf{M}_{n}\mathbf{B}_{n}\right)  ^{-1}%
\mathbf{B}_{n}^{\prime}\mathbf{M}_{n}\mathbf{r}_{\circ t}-\mathbf{f}%
_{t}\right]  , \label{phiport}%
\end{equation}
and using (\ref{ritV}) note that
\begin{equation}
\rho_{t,\phi}=\boldsymbol{\phi}^{\prime}\boldsymbol{\phi}+\mathbf{w}_{\phi
}^{\prime}\boldsymbol{\eta}_{n}+\mathbf{w}_{\phi}^{\prime}\mathbf{u}_{\circ
t}\mathbf{.} \label{phiport1}%
\end{equation}
For a given vector of pricing errors, $\boldsymbol{\eta}_{n}$, $E\left(
\rho_{t,\phi}\right)  =\boldsymbol{\phi}^{\prime}\boldsymbol{\phi}%
+\mathbf{w}_{\phi}^{\prime}\boldsymbol{\eta}_{n}$, and $Var\left(
\rho_{t,\phi}\right)  =\mathbf{w}_{\phi}^{\prime}\mathbf{V}_{u}\mathbf{w}%
_{\phi}\mathbf{,}$ where $\mathbf{V}_{u}=E(\mathbf{u}_{\circ t}\mathbf{u}%
_{\circ t}^{\prime})$. Using (\ref{wphi}) it follows that
\[
\left\Vert \mathbf{w}_{\phi}^{\prime}\boldsymbol{\eta}_{n}\right\Vert ^{2}%
\leq\left\Vert \boldsymbol{\eta}_{n}\right\Vert ^{2}\left\Vert \mathbf{w}%
_{\phi}\right\Vert ^{2}\leq\left\Vert \boldsymbol{\eta}_{n}\right\Vert
^{2}\left\Vert \boldsymbol{\phi}\right\Vert ^{2}\lambda_{max}\left[  \left(
\mathbf{B}_{n}^{\prime}\mathbf{M}_{n}\mathbf{B}_{n}\right)  ^{-1}\right]
=\frac{\left\Vert \boldsymbol{\eta}_{n}\right\Vert ^{2}\left\Vert
\boldsymbol{\phi}\right\Vert ^{2}}{\lambda_{min}\left(  \mathbf{B}_{n}%
^{\prime}\mathbf{M}_{n}\mathbf{B}_{n}\right)  }.
\]
Similarly,
\begin{equation}
Var\left(  \rho_{t,\phi}\right)  \leq\left\Vert \mathbf{w}_{\phi}\right\Vert
^{2}\lambda_{max}\left(  \mathbf{V}_{u}\right)  =\frac{\lambda_{max}\left(
\mathbf{V}_{u}\right)  \left\Vert \boldsymbol{\phi}\right\Vert ^{2}}%
{\lambda_{min}\left(  \mathbf{B}_{n}^{\prime}\mathbf{M}_{n}\mathbf{B}%
_{n}\right)  }. \label{varphi}%
\end{equation}
Hence, $E\left(  \boldsymbol{\rho}_{t,\phi}\right)  \rightarrow
\boldsymbol{\phi}^{\prime}\boldsymbol{\phi}$, and $Var\left(  \rho_{t,\phi
}\right)  \rightarrow0$, as $n\rightarrow\infty$, if
\begin{equation}
\frac{\left\Vert \boldsymbol{\eta}_{n}\right\Vert ^{2}}{\lambda_{min}\left(
\mathbf{B}_{n}^{\prime}\mathbf{M}_{n}\mathbf{B}_{n}\right)  }\rightarrow
0\text{, and }\frac{\lambda_{max}\left(  \mathbf{V}_{u}\right)  }%
{\lambda_{min}\left(  \mathbf{B}_{n}^{\prime}\mathbf{M}_{n}\mathbf{B}%
_{n}\right)  }\rightarrow0. \label{fulldiv}%
\end{equation}
These conditions are met if $\lambda_{min}\left(  \mathbf{B}_{n}^{\prime
}\mathbf{M}_{n}\mathbf{B}_{n}\right)  \rightarrow\infty$, $\lambda
_{max}\left(  \mathbf{V}_{u}\right)  <C$, and the APT condition (\ref{APTRoss}%
) holds. The first two conditions follow if the LFPM given by (\ref{StatM}) is
an approximate factor model as assumed by
\citet{cham1983arbi}%
, namely when the factors are strong and the errors, $u_{it}$, are weakly
cross correlated.\footnote{The diversification conditions in (\ref{fulldiv})
are met more generally, and accommodate the presence of semi-strong traded
factors in $f_{t}$, and less restrictive conditions on $\mathbf{V}_{u}$.and
the pricing errors $\eta_{i}.$ See Remark \ref{Remfulldiv} below.}

Therefore, knowledge of $\boldsymbol{\phi}^{\prime}\boldsymbol{\phi}$ and its
statistical significance can play an important role in portfolio analysis. To
illustrate this point suppose that $c=0$ and $\boldsymbol{\eta}_{n}%
=\mathbf{0}$, but $\boldsymbol{\phi}^{\prime}\boldsymbol{\phi}>0$, and
consider the hedged return $\rho_{t,\phi}=\boldsymbol{\phi}^{\prime}\left[
\left(  \mathbf{B}_{n}^{\prime}\mathbf{V}_{u}^{-1}\mathbf{B}_{n}\right)
^{-1}\mathbf{B}_{n}\mathbf{V}_{u}^{-1}\mathbf{r}_{\circ t}-\mathbf{f}%
_{t}\right]  $. Then under LFPM we have\footnote{When $c\neq0$, it is not
possible to eliminate $c$ (which is unpriced), and at the same time exploit
the error covariance matrix $\mathbf{V}_{u}$.}%
\begin{align*}
\rho_{t,\phi} &  =\boldsymbol{\phi}^{\prime}\left[  \left(  \mathbf{B}%
_{n}^{\prime}\mathbf{V}_{u}^{-1}\mathbf{B}_{n}\right)  ^{-1}\mathbf{B}%
_{n}^{\prime}\mathbf{V}_{u}^{-1}\left(  \mathbf{B}_{n}\boldsymbol{\phi}%
_{0}+\mathbf{B}_{n}\mathbf{f}_{t}+\mathbf{u}_{\circ t}\right)  -\mathbf{f}%
_{t}\right]  \\
&  =\boldsymbol{\phi}^{\prime}\boldsymbol{\phi}+\boldsymbol{\phi}^{\prime
}\left(  \mathbf{B}_{n}^{\prime}\mathbf{V}_{u}^{-1}\mathbf{B}_{n}\right)
^{-1}\mathbf{B}_{n}^{\prime}\mathbf{V}_{u}^{-1}\mathbf{u}_{\circ t},
\end{align*}
and its squared Sharpe ratio is given by
\begin{equation}
SR_{\phi}^{2}=\frac{\left(  \boldsymbol{\phi}^{\prime}\boldsymbol{\phi
}\right)  ^{2}}{\boldsymbol{\phi}^{\prime}\left(  \mathbf{B}_{n}^{\prime
}\mathbf{V}_{u}^{-1}\mathbf{B}_{n}\right)  ^{-1}\boldsymbol{\phi}%
}.\label{SRsquared1}%
\end{equation}
Since,
\[
\boldsymbol{\phi}^{\prime}\left(  \mathbf{B}_{n}^{\prime}\mathbf{V}_{u}%
^{-1}\mathbf{B}_{n}\right)  ^{-1}\boldsymbol{\phi}\leq\left(  \boldsymbol{\phi
}^{\prime}\boldsymbol{\phi}\right)  \lambda_{\max}\left[  \left(
\mathbf{B}_{n}^{\prime}\mathbf{V}_{u}^{-1}\mathbf{B}_{n}\right)  ^{-1}\right]
=\frac{\boldsymbol{\phi}^{\prime}\boldsymbol{\phi}}{\lambda_{min}\left(
\mathbf{B}_{n}^{\prime}\mathbf{V}_{u}^{-1}\mathbf{B}_{n}\right)  },
\]
then%
\begin{equation}
SR_{\phi}^{2}\geq\left(  \boldsymbol{\phi}^{\prime}\boldsymbol{\phi}\right)
\text{ }\lambda_{min}\left(  \mathbf{B}_{n}^{\prime}\mathbf{V}_{u}%
^{-1}\mathbf{B}_{n}\right)  .\label{SRsquared2}%
\end{equation}
As a result, if $\lambda_{min}\left(  \mathbf{B}_{n}^{\prime}\mathbf{V}%
_{u}^{-1}\mathbf{B}_{n}\right)  \rightarrow\infty$ as $n\rightarrow\infty$,
$SR_{\phi}^{2}$ increases in $n$ without bounds if $\boldsymbol{\phi}^{\prime
}\boldsymbol{\phi}>0$. In contrast, it is easily seen that the Sharpe ratio of
the standard mean-variance (MV) portfolio, defined by $SR_{MV}^{2}%
=\boldsymbol{\mu}_{R}^{\prime}\mathbf{V}_{R}^{-1}\boldsymbol{\mu}_{R}$ is
bounded in $n$, and in consequence $SR_{\phi}^{2}$ will eventually dominate
$SR_{MV}^{2}$ if $\boldsymbol{\phi}^{\prime}\boldsymbol{\phi}>0$. To see this,
note that under the LFPM given by (\ref{rit}) with $c=0$ and $\boldsymbol{\eta
}_{n}=\mathbf{0}$, the MV portfolio is given by $\rho_{MV,t}=\boldsymbol{\mu
}_{R}^{\prime}\mathbf{V}_{R}^{-1}\mathbf{r}_{\circ t}$, where $\boldsymbol{\mu
}_{R}=\mathbf{B}_{n}\boldsymbol{\lambda}$ and $\mathbf{V}_{R}=\mathbf{B}%
_{n}\mathbf{\Sigma}_{f}\mathbf{B}_{n}^{\prime}+\mathbf{V}_{u}$. Also, since
$\mathbf{B}_{n}\mathbf{\Sigma}_{f}\mathbf{B}_{n}^{\prime}$ is rank deficient
then
\[
\mathbf{V}_{R}^{-1}=\mathbf{V}_{u}^{-1}\mathbf{-V}_{u}^{-1}\mathbf{B}\left(
\mathbf{\Sigma}_{f}^{-1}+\mathbf{B}_{n}^{\prime}\mathbf{V}_{u}^{-1}%
\mathbf{B}_{n}\right)  ^{-1}\mathbf{B}_{n}^{\prime}\mathbf{V}_{u}^{-1},
\]
and it follows that
\begin{align}
SR_{MV}^{2} &  =\boldsymbol{\mu}_{R}^{\prime}\mathbf{V}_{R}^{-1}%
\boldsymbol{\mu}_{R}-\boldsymbol{\lambda}^{\prime}\mathbf{\Sigma}_{f}%
^{-1}\left(  \mathbf{\Sigma}_{f}^{-1}+\mathbf{B}_{n}^{\prime}\mathbf{V}%
_{u}^{-1}\mathbf{B}_{n}\right)  ^{-1}\mathbf{\Sigma}_{f}^{-1}%
\boldsymbol{\lambda}\label{SRsqMV}\\
&  \leq\boldsymbol{\lambda}^{\prime}\mathbf{\Sigma}_{f}^{-1}%
\boldsymbol{\lambda}=\left(  \boldsymbol{\mu+\phi}\right)  ^{\prime
}\mathbf{\Sigma}_{f}^{-1}\left(  \boldsymbol{\mu+\phi}\right)  <C\text{.}%
\nonumber
\end{align}
Hence, the Sharpe ratio of the MV portfolio continues to be bounded in $n$
even if $\boldsymbol{\phi\neq0}$. Whether $\boldsymbol{\phi=0}$ or not affects
the magnitude of $SR_{MV}^{2}$ but does not alter the fact that $SR_{MV}^{2}$
will be bounded if $\lambda_{min}\left(  \mathbf{B}_{n}^{\prime}\mathbf{V}%
_{u}^{-1}\mathbf{B}_{n}\right)  \rightarrow\infty$ as $n\rightarrow\infty$.
Furthermore, it is not possible to improve over the MV portfolio by using only
the factor loadings. The optimal portfolio formed using the $K$ beta-based
portfolios, {\small $\boldsymbol{\rho}_{B,t}=$ }$\left(  \mathbf{B}%
_{n}^{\prime}\mathbf{V}_{u}^{-1}\mathbf{B}_{n}\right)  ^{-1}\mathbf{B}%
_{n}^{\prime}\mathbf{V}_{u}^{-1}${\small $\mathbf{r}_{\circ t}$, }has the same
Sharpe ratio as the mean-variance portfolio. See Lemma \ref{Sharpe} for a proof.

In short, to make sure that the mean-variance portfolio is efficient we must
have $\boldsymbol{\phi}^{\prime}\boldsymbol{\phi}=0$, and it is of special
interest to estimate $\boldsymbol{\phi}$ and develop reliable procedures for
testing its statistical significance, using a large number of securities. In
practice, the number of tradeable securities, $n$, might not be sufficiently
large, and there are important specification and estimation uncertainties, and
the Sharpe ratio of the phi-based portfolio, $\boldsymbol{\rho}_{t,\phi}$, is
likely to be bounded even if $\boldsymbol{\phi}^{\prime}\boldsymbol{\phi}>0$.
We turn to these issues in the empirical application provided in\ Section
\ref{Empirical}.

\subsection{Fama-MacBeth and Shanken estimators of risk premia\label{FM}}

It will prove convenient to write (\ref{StatM}) in matrix notation by stacking
the excess returns by $t=1,2,...,T$, for each security $i$
\begin{equation}
\mathbf{r}_{i\circ}=\mathit{\alpha}_{i}\boldsymbol{\tau}_{T}+\mathbf{F}%
\boldsymbol{\beta}_{i}+\mathbf{u}_{i\circ},\text{ for }i=1,2,...,n,
\label{mfi}%
\end{equation}
where $\mathbf{r}_{i\circ}=(r_{i1},r_{i2},...,r_{iT})^{\prime}$,
$\mathbf{F=(f}_{1},\mathbf{f}_{2},...,\mathbf{f}_{T})^{\prime}$,
$\mathbf{u}_{i\circ}=\left(  u_{i1},u_{i2},...,u_{iT}\right)  ^{\prime}$, and
$\boldsymbol{\tau}_{T}$ is a $T\times1$ vector of ones. Similarly, stacking
the excess returns by $i$ for each $t$ we have (\ref{ritV}) which we rewrite
as
\begin{equation}
\mathbf{r}_{\circ t}=\boldsymbol{\alpha}_{n}+\boldsymbol{B}_{n}\mathbf{f}%
_{t}+\mathbf{u}_{\circ t},\text{ for }t=1,2,...,T, \label{mfT}%
\end{equation}
where $\boldsymbol{\alpha}_{n}=(\alpha_{1},\alpha_{2},...,\alpha_{n})^{\prime
}=c\mathbf{\tau}_{n}+\mathbf{B}_{n}\boldsymbol{\phi}+\mathbf{\eta}_{n}$.

The risk premia are usually estimated using a two-pass procedure suggested by
\citet{fama1973risk}%
. The first-pass runs time series regressions of excess returns, $r_{it}$, on
the $K$ observed factors to give estimates of the factor loadings,
$\boldsymbol{\beta}_{i}:$
\begin{equation}
\text{ }\boldsymbol{\hat{\beta}}_{iT}=\left(  \mathbf{F}^{\prime}%
\mathbf{M}_{T}\mathbf{F}\right)  ^{-1}\mathbf{F}^{\prime}\mathbf{M}%
_{T}\mathbf{r}_{i\circ}. \label{betaihat}%
\end{equation}
The second-pass runs a cross section regression of\ average returns, $\bar
{r}_{i\circ}=T^{-1}\sum_{t=1}^{T}r_{it}$ on the estimated factor loadings, to
obtain the FM estimator of $\boldsymbol{\lambda}$, namely
\begin{equation}
\boldsymbol{\hat{\lambda}}_{nT}=\left(  \mathbf{\hat{B}}_{nT}^{\prime
}\mathbf{M}_{n}\mathbf{\hat{B}}_{nT}\right)  ^{-1}\mathbf{\hat{B}}%
_{nT}^{\prime}\mathbf{M}_{n}\mathbf{\bar{r}}_{n\circ}, \label{lambdahat}%
\end{equation}
where $\mathbf{\hat{B}}_{nT}=(\boldsymbol{\hat{\beta}}_{1T},\boldsymbol{\hat
{\beta}}_{2T},...,\boldsymbol{\hat{\beta}}_{nT})^{\prime},$ $\mathbf{\bar{r}%
}_{n\circ}=(\bar{r}_{1T},\bar{r}_{2T},...,\bar{r}_{nT})^{\prime},$
$\mathbf{M}_{T}=\mathbf{I}_{T}-T^{-1}\boldsymbol{\tau}_{T}\boldsymbol{\tau
}_{T}^{\prime}$, $\boldsymbol{\tau}_{T}$ is a $T$-dimensional vector of ones,
$\mathbf{M}_{n}=\mathbf{I}_{n}-n^{-1}\boldsymbol{\tau}_{n}\boldsymbol{\tau
}_{n}^{\prime}$, and $\boldsymbol{\tau}_{n}$ is an n-dimensional vector of ones.

As is well known, when $T$ is finite FM's two-pass estimator is biased due the
errors in estimation of factor loadings that do not vanish. The small $T$ bias
of the two-pass estimator of $\boldsymbol{\lambda}$ has been a source of
concern in the empirical literature. Under standard regularity conditions and
as $n\rightarrow\infty$, we have%
\begin{equation}
\boldsymbol{\hat{\lambda}}_{nT}-\boldsymbol{\lambda}_{0}\rightarrow_{p}\left[
\mathbf{\Sigma}_{\beta\beta}+\frac{\overline{\sigma}^{2}}{T}\left(
\frac{\mathbf{F}^{\prime}\mathbf{M}_{T}\mathbf{F}}{T}\right)  ^{-1}\right]
^{-1}\left(  \mathbf{\Sigma}_{\beta\beta}\mathbf{d}_{fT}-\frac{\overline
{\sigma}^{2}}{T}\left(  \frac{\mathbf{F}^{\prime}\mathbf{M}_{T}\mathbf{F}}%
{T}\right)  ^{-1}\boldsymbol{\lambda}_{0}\right)  , \label{Bias}%
\end{equation}
where $\mathbf{d}_{fT}=\mathbf{\hat{\mu}}_{T}-\mathbf{\mu}_{0},$
$\boldsymbol{\lambda}_{0}$ and $\mathbf{\mu}_{0}$ are the true values of
$\boldsymbol{\lambda}$ and $\mathbf{\mu}$, respectively, $\mathbf{\Sigma
}_{\beta\beta}=\lim_{n\rightarrow\infty}\left(  n^{-1}\mathbf{B}_{n}^{\prime
}\mathbf{M}_{n}\mathbf{B}_{n}\right)  $, and $\overline{\sigma}^{2}%
=\lim_{n\rightarrow\infty}n^{-1}\sum_{i=1}^{n}\sigma_{i}^{2}>0$. Following
\citet{shanken1992estimation}%
, $\overline{\sigma}_{n}^{2}$ can be consistently estimated (for a fixed $T$)
by
\begin{equation}
\widehat{\bar{\sigma}}_{nT}^{2}=\frac{\sum_{t=1}^{T}\sum_{i=1}^{n}\hat{u}%
_{it}^{2}}{n(T-K-1)}, \label{zigbar}%
\end{equation}
where
\begin{equation}
\hat{u}_{it}=r_{it}-\hat{\alpha}_{iT}-\boldsymbol{\hat{\beta}}_{iT}^{\prime
}\mathbf{f}_{t}, \label{res}%
\end{equation}
and as before $\hat{\alpha}_{iT}$ and $\boldsymbol{\hat{\beta}}_{iT}$ are the
OLS estimators of $\mathit{\alpha}_{i}$ and $\boldsymbol{\beta}_{i}$. Using
these results the bias-corrected version of the two-pass estimator is given
by\footnote{See also
\citet{shanken2007estimating}%
,
\citet{kan2013pricing}%
, and
\citet{BAI201531}%
, and the survey paper by
\citet{jagannathan2010analysis}
for further references.}%
\begin{equation}
\boldsymbol{\tilde{\lambda}}_{nT}=\mathbf{H}_{nT\ }^{-1}\left(  \frac
{\mathbf{\hat{B}}_{nT}^{\prime}\mathbf{M}_{n}\mathbf{\bar{r}}_{n\circ}}%
{n}\right)  , \label{lambda_BC}%
\end{equation}
where%
\begin{equation}
\mathbf{H}_{nT\ }=\frac{\mathbf{\hat{B}}_{nT}^{\prime}\mathbf{M}%
_{n}\mathbf{\hat{B}}_{nT}}{n}-T^{-1}\widehat{\bar{\sigma}}_{nT}^{2}\left(
\frac{\mathbf{F}^{\prime}\mathbf{M}_{T}\mathbf{F}}{T}\right)  ^{-1}.
\label{HnT1}%
\end{equation}
When all the risk factors are strong, under certain regularity conditions,
there exists a fixed $T_{0}$ such that for all $T>T_{0}$, then
\begin{equation}
p\lim_{n\rightarrow\infty}\left(  \boldsymbol{\tilde{\lambda}}_{nT}\right)
=\boldsymbol{\lambda}_{T}^{\ast}=\boldsymbol{\lambda}_{0}+\left(
\boldsymbol{\hat{\mu}}_{T}-\boldsymbol{\mu}_{0}\right)  , \label{lambda*}%
\end{equation}
where $\boldsymbol{\mu}_{0}$ indicates the true value of the factor mean.
Shanken refers to $\boldsymbol{\lambda}_{T}^{\ast}$ as "ex-post" risk premia
to be distinguished from $\boldsymbol{\lambda}_{0}$, referred to as "ex ante"
risk premia. See also Section 3.7 of
\citet{jagannathan2010analysis}%
.

In this paper we exploit Shanken's bias correction procedure by applying it to
$\boldsymbol{\phi}=$ $\boldsymbol{\lambda}-\boldsymbol{\mu}$ which we identify
directly using (\ref{ai}) from the regression of $\mathit{\alpha}_{i}$ on
$\boldsymbol{\beta}_{i}$ for $i=1,2,...,n$, assuming the idiosyncratic pricing
errors, $\eta_{i}$, are sufficiently weak relative to the strengths of the
risk factors in a sense which will be made precise below.

\subsection{Estimation of $\boldsymbol{\phi}$\label{ESTphi}}

In view of (\ref{ai}), the estimation of $\boldsymbol{\phi}$ can be carried
out following a two-step procedure whereby in the first step $\mathit{\alpha
}_{i}$ and $\boldsymbol{\beta}_{i}$ are estimated from the least squares
regressions of $r_{it}$ on an intercept and $\mathbf{f}_{t}$, and these are
then used in a second step regression to estimate $\boldsymbol{\phi}$,
namely,
\begin{equation}
\boldsymbol{\hat{\phi}}_{nT}\text{ }=\left(  \mathbf{\hat{B}}_{nT}^{\prime
}\mathbf{M}_{n}\mathbf{\hat{B}}_{nT}\right)  ^{-1}\mathbf{\hat{B}}%
_{nT}^{\prime}\mathbf{M}_{n}\boldsymbol{\hat{\alpha}}_{nT}, \label{phihat1}%
\end{equation}
where $\boldsymbol{\hat{\alpha}}_{nT}=(\hat{\alpha}_{1T},\hat{\alpha}%
_{1T},...,\hat{\alpha}_{nT})^{\prime}=\mathbf{\bar{r}}_{nT}-\mathbf{\hat{B}%
}_{nT}\boldsymbol{\hat{\mu}}_{T},$ and as before $\mathbf{\hat{B}}%
_{nT}=(\boldsymbol{\hat{\beta}}_{1T},\boldsymbol{\hat{\beta}}_{2T}%
,...,\boldsymbol{\hat{\beta}}_{nT})^{\prime}$. This estimator is consistent
for $\boldsymbol{\phi}_{0}$ so long as $n$ and $T\rightarrow\infty$, and
bias-corrections are necessary to ensure the large $n$ consistency of the
estimator when $T$ is fixed. A Shanken type bias-corrected estimator of
$\boldsymbol{\phi}_{0}$ is given by
\begin{equation}
\boldsymbol{\tilde{\phi}}_{nT}=\mathbf{H}_{nT\ }^{-1}\left[  \frac
{\mathbf{\hat{B}}_{nT}^{\prime}\mathbf{M}_{n}\hat{\alpha}_{nT}}{n}%
+T^{-1}\widehat{\bar{\sigma}}_{nT}^{2}\left(  \frac{\mathbf{F}^{\prime
}\mathbf{M}_{T}\mathbf{F}}{T}\right)  ^{-1}\boldsymbol{\hat{\mu}}_{T}\right]
, \label{phitilda}%
\end{equation}
where $\mathbf{H}_{nT\ }$ and $\widehat{\bar{\sigma}}_{nT}^{2}$ are given by
(\ref{HnT1}) and (\ref{zigbar}), respectively. It is also easily established
that
\begin{equation}
\boldsymbol{\tilde{\phi}}_{nT}=\boldsymbol{\tilde{\lambda}}_{nT}%
-\boldsymbol{\hat{\mu}}_{T}, \label{phitilda2}%
\end{equation}
and for a fixed $T$ and as $n\rightarrow\infty$, we have
\[
p\lim_{n\rightarrow\infty}\boldsymbol{\tilde{\phi}}_{nT}=p\lim_{n\rightarrow
\infty}\boldsymbol{\tilde{\lambda}}_{nT}-\boldsymbol{\hat{\mu}}_{T}.
\]
Hence, upon using (\ref{lambda*})%
\begin{equation}
p\lim_{n\rightarrow\infty}\boldsymbol{\tilde{\phi}}_{nT}=\boldsymbol{\lambda
}_{0}+\left(  \boldsymbol{\hat{\mu}}_{T}-\boldsymbol{\mu}_{0}\right)
-\boldsymbol{\hat{\mu}}_{T}=\boldsymbol{\lambda}_{0}-\boldsymbol{\mu}%
_{0}=\boldsymbol{\phi}_{0}\text{, } \label{Conphitilt}%
\end{equation}
and there exists a fixed $T_{0}$ such that for all $T>T_{0},$
$\boldsymbol{\tilde{\phi}}_{nT}$ converges to $\boldsymbol{\phi}_{0}$ as
$n\rightarrow\infty$. \ Also using (\ref{lambda*}) and (\ref{phitilda2}), and
noting that $\boldsymbol{\lambda}_{0}-\boldsymbol{\mu}_{0}=\boldsymbol{\phi
}_{0}$, interestingly we have%
\[
\boldsymbol{\tilde{\lambda}}_{nT}-\boldsymbol{\lambda}_{T}^{\ast
}=\boldsymbol{\tilde{\phi}}_{nT}+\boldsymbol{\hat{\mu}}_{T}%
-\boldsymbol{\lambda}_{T}^{\ast}=\boldsymbol{\tilde{\phi}}_{nT}%
-\boldsymbol{\phi}_{0}.
\]
So inference using the Shanken bias-corrected estimator of
$\boldsymbol{\lambda}$ around $\boldsymbol{\lambda}_{T}^{\ast}$, is the same
as making inference using $\boldsymbol{\tilde{\phi}}_{nT}$ around
$\boldsymbol{\phi}_{0}$.

The asymptotic distribution of $\boldsymbol{\tilde{\phi}}_{nT}$ depends on
both $n$ and $T$. Assuming the observed factors are strong and under certain
regularity conditions, to be introduced below, we have
\begin{equation}
\sqrt{nT}\left(  \boldsymbol{\tilde{\phi}}_{nT}-\boldsymbol{\phi}_{0}\right)
\rightarrow_{d}N\left(  \mathbf{0,\Sigma}_{\beta\beta}^{-1}\mathbf{V}_{\xi
}\mathbf{\Sigma}_{\beta\beta}^{-1}\right)  , \label{Dphitilt}%
\end{equation}
where
\[
\mathbf{V}_{\xi}=\left(  1+\boldsymbol{\lambda}_{0}^{\prime}\mathbf{\Sigma
}_{f}^{-1}\boldsymbol{\lambda}_{0}\right)  p\lim_{n\rightarrow\infty}\left[
n^{-1}\mathbf{B}_{n}^{\prime}\mathbf{M}_{n}\mathbf{V}_{u}\mathbf{M}%
_{n}\mathbf{B}_{n\ }\right]  \emph{.}%
\]
The variance of $\boldsymbol{\tilde{\phi}}_{nT}$ is consistently estimated by%
\begin{equation}
\widehat{Var\left(  \boldsymbol{\tilde{\phi}}_{nT}\right)  }=T^{-1}%
n^{-1}\mathbf{H}_{nT\ }^{-1}\mathbf{\hat{V}}_{\xi,nT}\mathbf{H}_{nT\ }^{-1},
\label{Varphitilt}%
\end{equation}
where $\mathbf{H}_{nT\ }$ is given by (\ref{HnT1}),%
\begin{equation}
\mathbf{\hat{V}}_{\xi,nT}=\left(  1+\hat{s}_{nT}\right)  \left(
n^{-1}\mathbf{\hat{B}}_{n}^{\prime}\mathbf{M}_{n}\mathbf{\tilde{V}}%
_{u}\mathbf{M}_{n}\mathbf{\hat{B}}_{n}\right)  , \label{GnT}%
\end{equation}
and%
\begin{equation}
\hat{s}_{nT}=\boldsymbol{\tilde{\lambda}}_{nT}^{^{\prime}}\left(
\frac{\mathbf{F}^{\prime}\mathbf{M}_{T}\mathbf{F}}{T}\right)  ^{-1}%
\boldsymbol{\tilde{\lambda}}_{nT}, \label{shatnT}%
\end{equation}
$\boldsymbol{\tilde{\lambda}}_{nT}$ is defined by (\ref{lambda_BC}), and
$\mathbf{\tilde{V}}_{u}\mathbf{\ }$is a suitable estimator of $\mathbf{V}%
_{u}=E\left(  \mathbf{u}_{\circ t}\mathbf{u}_{\circ t}^{\prime}\right)  $. How
to estimate $\mathbf{V}_{u}$ and the conditions under which $Var\left(
\boldsymbol{\tilde{\phi}}_{nT}\right)  $ is consistently estimated by
$\widehat{Var\left(  \boldsymbol{\tilde{\phi}}_{nT}\right)  }$ is discussed in
sub-section \ref{Vphihat}.

Note that even when $\mathbf{V}_{u}=\sigma^{2}\mathbf{I}_{T}$ the variance of
$\boldsymbol{\tilde{\phi}}_{nT}$ does not reduce to $\sigma^{2}%
\boldsymbol{\Sigma}_{\beta\beta}^{-1}$, the standard least squares formula
used for the case of known factor loadings. When the loadings are estimated
the scaling term $\left(  1+\boldsymbol{\lambda}_{0}^{\prime}%
\boldsymbol{\Sigma}_{f}^{-1}\boldsymbol{\lambda}_{0}\right)  $ is required and
its neglect can lead to serious over-rejection even if $n/T\rightarrow0$ as
$n$ and $T\rightarrow\infty$.

\subsection{Factor strength \label{facstrength}}

In this paper we deviate from the standard literature and allow the observed
and latent factors to have different degrees of strength, depending on how
pervasively they impact the security returns.
\citet{bailey2021measurement}
define the strength of factor, $f_{kt}$, in terms of the number of its
non-zero factor loadings. For a factor to be strong almost all of its $n$
loadings must differ from zero. Given our focus on estimation of risk premia,
we adopt the following definition which directly relates to the covariance of
$\boldsymbol{\beta}_{i}.$ See also
\citet{chudik2011weak}%
.

\begin{definition}
\label{strengths}(Factor strengths) The strength of factor $f_{kt}$ is
measured by its degree of pervasiveness as defined by the exponent
$\alpha_{_{k}}$ in%
\begin{equation}
\sum_{i=1}^{n}\left(  \beta_{ik}-\bar{\beta}_{k}\right)  ^{2}=\ominus
(n^{\alpha_{k}}), \label{factor strength}%
\end{equation}
and $0<\alpha_{k}\leq1$. We refer to $\left\{  \alpha_{k},\text{
}k=1,2,...,K\right\}  $ as factor strengths. Factor $f_{kt}$ is said to be
strong if $\alpha_{k}=1$, semi-strong if $1>\alpha_{k}>1/2$, and weak if
$0\leq\alpha_{k}\leq1/2$. Condition (\ref{factor strength}) applies
irrespective of whether the loadings, $\beta_{ik}$, are viewed as
deterministic or stochastic.\ 
\end{definition}

In the above definition $\ominus_{p}\left(  n^{\alpha_{k}}\right)  $ denotes
the rate at which additional securities add to the factor's strength and
$\alpha_{k}$ can be viewed as a logarithmic expansion rate in terms of $n$ and
relates to the proportion of non-zero factor loadings. In the literature it is
commonly assumed that the covariance matrix of factor loadings defined by%
\begin{equation}
\boldsymbol{\Sigma}_{\beta\beta}=p\lim_{n\rightarrow\infty}\left[  n^{-1}%
\sum_{i=1}^{n}\left(  \boldsymbol{\beta}_{i}-\boldsymbol{\bar{\beta}}%
_{n}\right)  \left(  \boldsymbol{\beta}_{i}-\boldsymbol{\bar{\beta}}%
_{n}\right)  ^{\prime}\right]  , \label{Zigbeta}%
\end{equation}
is positive definite, where $\boldsymbol{\bar{\beta}}_{n}=n^{-1}\sum_{i=1}%
^{n}\boldsymbol{\beta}_{i}=(\bar{\beta}_{1},\bar{\beta}_{2},...,\bar{\beta
}_{k})^{\prime}$. For $\mathbf{\Sigma}_{\beta\beta}$ to be positive definite
matrix it is \textit{necessary} that all the $K$ risk factors under
consideration are strong in the sense that
\begin{equation}
p\lim_{n\rightarrow\infty}\left[  n^{-1}\sum_{i=1}^{n}\left(  \beta_{ik}%
-\bar{\beta}_{k}\right)  ^{2}\right]  >0\text{, for }k=1,2,...,K.
\label{Sfact}%
\end{equation}
In terms of our definition of factor strength, $\mathbf{\Sigma}_{\beta\beta}%
$\ will be positive definite if all the observed factors are strong, namely if
$\alpha_{k}=1$ for $k=1,2,...,K$. However, such an assumption is quite
restrictive and is unlikely to be satisfied for many risk factors being
considered in the literature.
\citet{bailey2021measurement}
show that, apart from the market factor, only a handful of 144 factors in the
literature considered by
\citet{feng2020taming}
come close to being strong. \cite{giglio2023test} consider the estimation of
PCA-based risk premia in presence of weak factors. However, their definition
of factor strength involves both $n~$and $T$, and is best viewed as a
consistency condition rather than factor strength as such. See the discussion
following Theorem \ref{Tsemi}. Our notion of factor strength, $\alpha_{k}$, is
in line with the recent literature. See, for example, \cite{BaiNg2023Weak} and
\cite{uematsu2023inference}.

\subsection{Missing factor}

We now turn to the structure of the errors, $u_{it}$, in the returns
equations, and consider two possible sources of error cross-sectional
dependence: a missing or latent factor and production networks. The issue of
missing factors has been investigated in the recent literature by
\citet{giglio2021asset}
and
\citet{ANATOLYEV2022103}%
. The issue of production networks has been investigated in the recent
literature by
\citet{herskovic2018networks}%
, who derives two risk factors based on the changes in network concentration
and network sparsity, and
\citet{gofman2020production}%
, who focus on the vertical dimension of production by modelling a supply
chain, in terms of supplier-customer links. They find that the further away a
firm is from final consumers the higher its return. They use this to create a
factor TMB (top minus bottom). Both sources of cross-sectional error
dependence could be important, since network dependence cannot be represented
using latent factor models. See Section 3 of \cite{chudik2011weak}.

To allow for both forms of error cross-sectional dependence we consider the
following decomposition of $u_{it}$
\begin{equation}
u_{it}=\gamma_{i}g_{t}+v_{it}, \label{uit}%
\end{equation}
where $g_{t}$ is the missing (latent) factor and $v_{it}$ is weakly
cross-correlated in the sense of approximate factor models due to
\cite{Chamberlain1983} and \cite{cham1983arbi}. Here we allow for a single
missing factor to simplify the exposition, but note that increasing the number
of missing factors has little impact on our analysis, so long as the number of
missing factors is fixed. Using the normalization $E(g_{t}^{2})=1$, and
assuming that $\gamma_{i}g_{t}$ and $v_{it}$ are independently distributed
then $E(u_{it}u_{jt})=\sigma_{ij}=\gamma_{i}\gamma_{j}+\sigma_{v,ij}$, and as
shown in Lemma \ref{L1}, $n^{-1}\sum_{i=1}^{n}\sum_{j=1}^{n}\left\vert
\sigma_{ij}\right\vert =O(1)$ so long as the strength of $g_{t},$
$\alpha_{\gamma}<1/2$ and $\lambda_{\max}(\mathbf{V}_{v})<\infty.$ This is
despite the fact that $\lambda_{\max}(\mathbf{V}_{u})=O(n^{\alpha_{\gamma}})$,
where $\mathbf{V}_{v}=E\left(  \mathbf{v}_{i}\mathbf{v}_{i}^{\prime}\right)  $
and $\mathbf{V}_{u}=E\left(  \mathbf{u}_{i}\mathbf{u}_{i}^{\prime}\right)
$.\footnote{Note that Chamberlain's approximate factor model specification
requires $\lambda_{\max}(\mathbf{V}_{u})=O(1)$ and is violated if
$\alpha_{\gamma}>0$.} In the Monte Carlo experiments, we consider the
possibility of missing factors, as well as weak spatial and network
cross-dependence that satisfy conditions of approximate factor models.

\subsection{Pricing errors and market efficiency\label{pricing}}

The APT condition (\ref{APTRoss}), given by (18)\ in Theorem II of
\citet{ROSS1976341}%
, ensures that under APT the idiosyncratic pricing errors are sparse. In this
paper we relax the\ Ross's condition to
\begin{equation}
\sum_{i=1}^{n}\eta_{i}^{2}=O(n^{\alpha_{\eta}}), \label{APTg}%
\end{equation}
where the exponent $\alpha_{\eta}$ measures the degrees of pervasiveness of
pricing errors. Deviations from APT are measured in terms of $\alpha_{\eta}$
($0\leq\alpha_{\eta}<1$). We investigate the robustness of our proposed
estimator of $\boldsymbol{\phi}$ to $\alpha_{\eta}$. This extension is
important for tests of market efficiency where the null of interest is
$H_{0}:$ $\mathit{\alpha}_{i}=c$ for all $i$ in (\ref{ai}). We note that under
the alternative hypothesis $H_{1}:$ $\mathit{\alpha}_{i}=c+\boldsymbol{\beta
}_{i}^{\prime}\boldsymbol{\phi}+\eta_{i}$, therefore it is desirable to
develop a test of $\boldsymbol{\phi}=\boldsymbol{0}$ which is robust to a
wider class of pricing errors than those entertained originally by Ross, where
$\alpha_{\eta}=0$.

Under the alternative hypothesis the power of testing $H_{0}$ will depend on
the rate at which $\sum_{i=1}^{n}\left(  \mathit{\alpha}_{i}-\mathit{\bar
{\alpha}}\right)  ^{2}$ rises with $n$. This in turn depends on the degree of
pervasiveness of the idiosyncratic pricing errors, $\alpha_{\eta}$, the
strength of the factors, $\alpha_{j}$, for $j=1,2,...,K$ and the magnitude of
$\boldsymbol{\phi}^{\prime}\boldsymbol{\phi}\,$. Using $\mathit{\alpha}%
_{i}=c+\boldsymbol{\beta}_{i}^{\prime}\boldsymbol{\phi}+\eta_{i}$%
\[
\sum_{i=1}^{n}\left(  \mathit{\alpha}_{i}-\mathit{\bar{\alpha}}\right)
^{2}=\boldsymbol{\phi}^{\prime}\left(  \sum_{i=1}^{n}\left(  \boldsymbol{\beta
}_{i}-\boldsymbol{\bar{\beta}}_{n}\right)  \left(  \boldsymbol{\beta}%
_{i}-\boldsymbol{\bar{\beta}}_{n}\right)  ^{\prime}\right)  \boldsymbol{\phi
}+\sum_{i=1}^{n}\left(  \eta_{i}-\bar{\eta}\right)  ^{2}+2\boldsymbol{\phi
}^{\prime}\sum_{i=1}^{n}\left(  \boldsymbol{\beta}_{i}-\boldsymbol{\bar{\beta
}}_{n}\right)  \eta_{i},
\]
and in the absence of idiosyncratic pricing errors
\begin{equation}
\sum_{i=1}^{n}\left(  \mathit{\alpha}_{i}-\mathit{\bar{\alpha}}\right)
^{2}\geq n\left(  \boldsymbol{\phi}^{\prime}\boldsymbol{\phi}\right)
\lambda_{\min}\left(  n^{-1}\sum_{i=1}^{n}\left(  \boldsymbol{\beta}%
_{i}-\boldsymbol{\bar{\beta}}_{n}\right)  \left(  \boldsymbol{\beta}%
_{i}-\boldsymbol{\bar{\beta}}_{n}\right)  ^{\prime}\right)  . \label{alphaE}%
\end{equation}
When the risk factors are all strong $\lambda_{\min}\left(  n^{-1}\sum
_{i=1}^{n}\left(  \boldsymbol{\beta}_{i}-\boldsymbol{\bar{\beta}}_{n}\right)
\left(  \boldsymbol{\beta}_{i}-\boldsymbol{\bar{\beta}}_{n}\right)  ^{\prime
}\right)  >0$, then $\sum_{i=1}^{n}\left(  \mathit{\alpha}_{i}-\mathit{\bar
{\alpha}}\right)  ^{2}=\ominus(n)$ if and only if $\boldsymbol{\phi}%
\neq\boldsymbol{0}$. Namely, the extent to which alpha can be exploited will
depend on the magnitudes of $\phi_{j}$ and the strength of the risk factors.

\begin{remark}
\label{Remfulldiv}Having formalized the concepts of factor strength, missing
factors, and the less restrictive APT condition given by (\ref{APTg}), it is
now of interest to revisit the conditions under which the phi-portfolio fully
diversifies. Consider (\ref{fulldiv}) and note that
\[
\frac{\left\Vert \boldsymbol{\eta}_{n}\right\Vert ^{2}}{\lambda_{min}\left(
\mathbf{B}_{n}^{\prime}\mathbf{M}_{n}\mathbf{B}_{n}\right)  }=O\left[
n^{-\left(  \alpha_{min}-\alpha_{\eta}\right)  }\right]  \text{, and }%
\frac{\lambda_{max}\left(  \mathbf{V}_{u}\right)  }{\lambda_{min}\left(
\mathbf{B}_{n}^{\prime}\mathbf{M}_{n}\mathbf{B}_{n}\right)  }=O\left[
n^{-\left(  \alpha_{min}-\alpha_{\gamma}\right)  }\right]  .
\]
Using these results in (\ref{phiport1})\ and (\ref{varphi}) and we have
\[
E\left(  \rho_{t,\phi}\right)  =\boldsymbol{\phi}^{\prime}\boldsymbol{\phi
}+O\left[  n^{-\left(  \alpha_{min}-\alpha_{\eta}\right)  }\right]  \text{,
and }Var\left(  \rho_{t,\phi}\right)  =O\left[  n^{-\left(  \alpha
_{min}-\alpha_{\gamma}\right)  }\right]
\]
Therefore, for phi-portfolio to dominate the MV portfolio in addition to
$\boldsymbol{\phi}^{\prime}\boldsymbol{\phi}>0$, it is also required that the
strengths of the traded factors $\alpha_{k}$, $k=1,2,...,K$ are strictly
larger than the strength of the missing factor, $\alpha_{\gamma},$ as well as
the strength of the idiosyncratic pricing errors, $\alpha_{\eta}$.
\end{remark}

\begin{remark}
Similarly, If we allow for idiosyncratic pricing errors, missing factors and
non-strong risk factors the limiting expression for $\sum_{i=1}^{n}\left(
\mathit{\alpha}_{i}-\mathit{\bar{\alpha}}\right)  ^{2}$ becomes%
\begin{equation}
\sum_{i=1}^{n}\left(  \mathit{\alpha}_{i}-\mathit{\bar{\alpha}}\right)
^{2}=\ominus(\sum_{j=1}^{K}n^{\alpha_{j}}\phi_{j}^{2})+O\left(  n^{\alpha
_{\eta}}\right)  +O\left(  n^{\alpha_{\gamma}}\right)  . \label{alphaG}%
\end{equation}
In this more general case for alpha to be exploitable it is necessary that
$\phi_{j}$ associated with the strongest factor is non-zero, and $\alpha
_{\max}=\max_{j}(\alpha_{j})$ is larger than $\alpha_{\eta}$ and
$\alpha_{\gamma}$.
\end{remark}

\section{Assumptions and theorems\label{Assumptions}}

We make the following standard assumptions about $\mathbf{f}_{t},g_{t}$,
$v_{it}$, $\boldsymbol{\beta}_{i}$, $\eta_{i}$, and $\gamma_{i}$ (the drivers
of asset returns):

\begin{assumption}
\label{factors} (Observed common factors) (a) The $K\times1$ vector of
observed risk factors, $\mathbf{f}_{t}$, follows the general linear process
\begin{equation}
\mathbf{f}_{t}=\boldsymbol{\mu}+\sum_{\ell=0}^{\infty}\mathbf{\Psi}_{\ell
}\mathbf{\zeta}_{t-\ell}, \label{ft}%
\end{equation}
where $\left\Vert \boldsymbol{\mu}\right\Vert <C$, $\mathbf{\zeta}%
_{t}\thicksim IID(\mathbf{0},\mathbf{I}_{K})$, and $\mathbf{\Psi}_{\ell}$ are
$K\times K$ exponentially decaying matrices such that $\left\Vert
\mathbf{\Psi}_{\ell}\right\Vert <C\rho^{\ell}$ for some $C>0$ and $0<\rho<1$.
(b) The $T\times K$ data matrix $\mathbf{F=(f}_{1},\mathbf{f}_{2}%
,...,\mathbf{f}_{T})^{\prime}$ is full column rank and there exists $T_{0}$
such that for all $T>T_{0}$, $\boldsymbol{\hat{\Sigma}}_{f}=T^{-1}%
\mathbf{F}^{\prime}\mathbf{M}_{T}\mathbf{F}$ is a positive definite matrix,
$\lambda_{\max}\left[  (T^{-1}\mathbf{F}^{\prime}\mathbf{M}_{T}\mathbf{F)}%
^{-1}\right]  <C$, $\boldsymbol{\hat{\Sigma}}_{f}\rightarrow_{p}%
\boldsymbol{\Sigma}_{f}=E\left(  \mathbf{f}_{t}-\boldsymbol{\mu}_{0}\right)
\left(  \mathbf{f}_{t}-\boldsymbol{\mu}_{0}\right)  ^{\prime}>\mathbf{0},$
where $\boldsymbol{\mu}_{0}$ is the true value of $\boldsymbol{\mu}$.
\end{assumption}

\begin{assumption}
\label{loadings}(Observed factor loadings) (a) The factor loadings $\beta
_{ik}$ for $i=1,2,...,n$ and $k=1,2,...,K$ are stochastically bounded such
that $sup_{ik}E\left(  \beta_{ik}^{2}\right)  <C$,
\begin{equation}
\sum_{i=1}^{n}\left(  \beta_{ik}-\bar{\beta}_{k}\right)  ^{2}=\ominus
_{p}(n^{\alpha_{k}})\text{, for }k=1,2,...,K. \label{As}%
\end{equation}
(b)The $n\times K$ matrix of factor loadings, $\mathbf{B}_{n}%
=(\boldsymbol{\beta}_{\circ1},\boldsymbol{\beta}_{\circ2}%
,...,\boldsymbol{\beta}_{\circ K}),$ where $\boldsymbol{\beta}_{\circ
k}=(\beta_{1k},\beta_{2k},...,\beta_{nk})^{\prime}$ satisfy
\begin{equation}
0<c<\lambda_{min}\left(  \mathbf{D}_{\alpha}^{-1}\mathbf{B}_{n}^{\prime
}\mathbf{M}_{n}\mathbf{B}_{n}\mathbf{D}_{\alpha}^{-1}\right)  <\lambda
_{max}\left(  \mathbf{D}_{\alpha}^{-1}\mathbf{B}_{n}^{\prime}\mathbf{M}%
_{n}\mathbf{B}_{n}\mathbf{D}_{\alpha}^{-1}\right)  <C<\infty, \label{lmax}%
\end{equation}
for some small and large positive constants, $c$ and $C$, where $\mathbf{M}%
_{n}=\mathbf{I}_{n}-n^{-1}\boldsymbol{\tau}_{n}\boldsymbol{\tau}_{n}^{\prime}%
$, $\boldsymbol{\tau}_{n}=(1,1,...,1)^{\prime},$ and $\mathbf{D}_{\alpha}%
\,$\ is the $n\times n$ diagonal matrix%
\begin{equation}
\mathbf{D}_{\alpha}=Diag(n^{\alpha_{1}/2},n^{\alpha_{2}/2},....,n^{\alpha
_{K}/2}). \label{Dn}%
\end{equation}

\end{assumption}

\begin{assumption}
\label{Latent factor} (latent factor) (a) The latent factor, $g_{t}$, in
(\ref{uit}) is distributed independently of $\mathbf{f}_{t^{\prime}},$ for all
$t$ and $t^{\prime}$, $g_{t}$ is serially independent with mean zero,
$E(g_{t})=0,$ $E(g_{t}^{2})=1$, and a finite fourth order moment,
$sup_{t}E(g_{t}^{4})<C$. (b) The loadings $\gamma_{i}$ are such that
$sup_{i}\left\vert \gamma_{i}\right\vert <C$ and
\begin{equation}
\sum_{i=1}^{n}\left\vert \gamma_{i}\right\vert =O(n^{\alpha_{\gamma}}).
\label{normg}%
\end{equation}

\end{assumption}

\begin{assumption}
\label{Errors} (idiosyncratic errors) (a) The errors $\left\{  v_{it}\text{,
}i=1,2,...,n;\text{ }t=1,2,...,T\right\}  $ are distributed independently of
the factors $f_{k,t^{\prime}}$, and $g_{t}$, for all $i,t,t^{\prime}$ and
$k=1,2,...,K$, and their associated loadings $\beta_{ik},$ and $\gamma_{i}$.
They are serially independent with $E(v_{it})=0$ and finite fourth order
moments $E(v_{it}^{4})<\infty$, and covariances $E(v_{it}v_{jt})=\sigma
_{v,ij}$, such that
\begin{equation}
\sup_{i}\sum_{j=1}^{n}\left\vert \sigma_{v,ij}\right\vert <\infty,\text{ and
}\sup_{i}\sum_{j=1}^{n}Cov(v_{it}^{2},v_{jt}^{2})<\infty,\text{ } \label{Covv}%
\end{equation}
with $\lambda_{\min}\left(  \mathbf{V}_{v}\right)  >0$, where $\mathbf{V}%
_{v}=(\sigma_{v,ij})$. (b) The degree of cross-sectional dependence of
$v_{it}$ is sufficiently weak so that
\begin{equation}
T^{-1/2}n^{-1/2}\sum_{t=1}^{T}\sum_{i=1}^{n}(\beta_{ik}-\bar{\beta}_{k}%
)v_{it}\rightarrow_{d}N(0,\omega_{k}^{2})\text{, for }k=1,2,...,K,
\label{Disv}%
\end{equation}
where
\begin{equation}
\omega_{k}^{2}=p\lim_{n\rightarrow\infty}n^{-\alpha_{k}}\sum_{i=1}^{n}%
\sum_{j=1}^{n}(\beta_{ik}-\bar{\beta}_{k})(\beta_{jk}-\bar{\beta}_{k}%
)\sigma_{v,ij}. \label{Varv}%
\end{equation}

\end{assumption}

\begin{assumption}
\label{PriceError} (Pricing errors) The pricing errors, $\eta_{i}$, for
$i=1,2,...,n$ are individually bounded, $sup_{j}\left\vert \eta_{j}\right\vert
<C$\thinspace, and are distributed independently of the factor loadings,
$\beta_{jk}$, and $\gamma_{j}$ for all $i,j$ and $k=1,2,...,K$, as well as
satisfying the condition
\begin{equation}
\sum_{i=1}^{n}\left\vert \eta_{i}\right\vert =O\left(  n^{\alpha_{\eta}%
}\right)  , \label{NormetaD}%
\end{equation}
with $\alpha_{\eta}<1/2$.
\end{assumption}

\begin{remark}
Under Assumption \ref{factors} $E\left(  \mathbf{f}_{t}\right)  =\mathbf{\mu
},$ and $Var(\mathbf{f}_{t})=\boldsymbol{\Sigma}_{f}=\sum_{\ell=0}^{\infty
}\mathbf{\Psi}_{\ell}\mathbf{\Psi}_{\ell}^{\prime}$. Also since $\left\Vert
\boldsymbol{\Sigma}_{f}\right\Vert \leq\sum_{\ell=0}^{\infty}\left\Vert
\mathbf{\Psi}_{\ell}\right\Vert ^{2}$ it then follows from part (a) of
Assumption \ref{factors} that $\left\Vert \boldsymbol{\Sigma}_{f}\right\Vert
<C$.
\end{remark}

\begin{remark}
\label{remarkstrength} Under Assumption \ref{loadings}
\begin{equation}
\mathbf{D}_{\alpha}^{-1}\mathbf{B}_{n}^{\prime}\mathbf{M}_{n}\mathbf{B}%
_{n}\mathbf{D}_{\alpha}^{-1}\rightarrow_{p}\mathbf{\Sigma}_{\beta\beta
}(\boldsymbol{\alpha}\mathbf{)}>0, \label{zigbb}%
\end{equation}
where $\mathbf{\Sigma}_{\beta\beta}(\boldsymbol{\alpha}\mathbf{)}$ is a
$k\times k$ symmetric positive definite matrix which is a function of
$\boldsymbol{\alpha}=\mathbf{(}\alpha_{1},\alpha_{2},...,\alpha_{K})^{\prime}%
$. This follows from (\ref{lmax}) since for any non-zero $n\times1$ vector
$\boldsymbol{\kappa}$,
\[
\boldsymbol{\kappa}^{\prime}\mathbf{D}_{\alpha}^{-1}\mathbf{B}_{n}^{\prime
}\mathbf{M}_{n}\mathbf{B}_{n}\mathbf{D}_{\alpha}^{-1}\boldsymbol{\kappa}%
\geq\left(  \mathbf{\kappa}^{\prime}\mathbf{\kappa}\right)  \lambda
_{min}\left(  \mathbf{D}_{\alpha}^{-1}\mathbf{B}_{n}^{\prime}\mathbf{M}%
_{n}\mathbf{B}_{n}\mathbf{D}_{\alpha}^{-1}\right)  >0.
\]
In the standard case where the factors are all strong ($\alpha_{k}=1$ for all
$k$), the above limit reduces to $n^{-1}\mathbf{B}_{n}^{\prime}\mathbf{M}%
_{n}\mathbf{B}_{n}\rightarrow_{p}\mathbf{\Sigma}_{\beta\beta}(\boldsymbol{\tau
}_{K}\mathbf{)=\Sigma}_{\beta\beta}>0$.
\end{remark}

\begin{remark}
The high level condition (\ref{Disv}) in Assumption \ref{Errors} is required
for establishing the asymptotic normality of the estimator of
$\boldsymbol{\phi}_{0}$, and is clearly met when $v_{it}$ and/or $\beta_{ik}$
are independently distributed. It is also possible to establish (\ref{Disv})
under weaker conditions assuming that $v_{it}$ and/or $\beta_{ik}$ satisfy
some time-series type mixing conditions applied to cross section.
\end{remark}

\begin{remark}
The exponent parameter, $\alpha_{\eta}$, of the pricing condition in
(\ref{NormetaD}), can be viewed as the degree to which pricing errors are
pervasive in large economies (as $n\rightarrow\infty$). Letting
$\boldsymbol{\eta}_{n}=(\eta_{1},\eta_{2},...,\eta_{n})^{\prime}\,$\ we have
\begin{equation}
\sum_{i=1}^{n}\eta_{i}^{2}=\left\Vert \boldsymbol{\eta}_{n}\right\Vert
^{2}\leq\left\Vert \boldsymbol{\eta}_{n}\right\Vert _{\infty}\left\Vert
\boldsymbol{\eta}_{n}\right\Vert _{1}=sup_{j}\left\vert \eta_{j}\right\vert
\left(  \sum_{i=1}^{n}\left\vert \eta_{i}\right\vert \right)  , \label{sqeta}%
\end{equation}
and under Assumption (\ref{PriceError}) it also follows that
\begin{equation}
\sum_{i=1}^{n}\eta_{i}^{2}=O\left(  n^{\alpha_{\eta}}\right)  . \label{Snorm}%
\end{equation}
Similarly%
\begin{equation}
\sum_{i=1}^{n}\gamma_{i}^{2}=O(n^{\alpha_{\gamma}}). \label{Snormg}%
\end{equation}

\end{remark}

\begin{remark}
Whilst (\ref{NormetaD}) implies (\ref{Snorm}), the reverse does not follow. By
allowing for $\alpha_{\eta}>0$ we are relaxing the Ross's boundedness
condition that requires setting $\alpha_{\eta}=0$.
\end{remark}

\begin{remark}
The assumption that the observed and missing factors, $\mathbf{f}_{t}$ and
$g_{t^{\prime}},$ are distributed independently is not restrictive and can be
relaxed. For example, suppose that
\[
g_{t}=\mu_{g}+\boldsymbol{\theta}^{\prime}\mathbf{f}_{t}+v_{gt},
\]
where $\boldsymbol{f}_{t}$ and $v_{gt}$ are independently distributed. Then
using (\ref{uit}) we have%
\[
u_{it}=\gamma_{i}\mu_{g}+\gamma_{i}\left(  \boldsymbol{\theta}^{\prime
}\mathbf{f}_{t}\right)  +\gamma_{i}v_{gt}+v_{it},
\]
and the return equation (\ref{StatM}) can be written as
\[
r_{it}=\left(  \mathit{\alpha}_{i}+\gamma_{i}\mu_{g}\right)  +\left(
\boldsymbol{\beta}_{i}+\gamma_{i}\boldsymbol{\theta}\right)  ^{\prime
}\mathbf{f}_{t}+\gamma_{i}v_{gt}+v_{it},
\]
with $v_{gt}$ now acting as the missing common factor, which, by construction,
is distributed independently of $\mathbf{f}_{t}$.
\end{remark}

\begin{remark}
Assumptions \ref{Latent factor} and \ref{Errors} allow $u_{it}$ to be
cross-sectionally weakly correlated, but require the errors to be serially
uncorrelated. This requirement is not strong for asset pricing models, since
realized returns are only mildly serially correlated and most likely such
serial dependence will be captured by the serial correlation in the observed factors.
\end{remark}

As we shall see, to estimate and conduct inference on the risk premia
associated with the observed factors, $f_{kt}$, we require $\alpha_{k}%
>\alpha_{\gamma}<1/2$, where $\alpha_{\gamma}$ denotes the strength of the
latent factor, $g_{t},$ and similarly defined by $\sum_{i=1}^{n}\gamma_{i}%
^{2}=\ominus(n^{\alpha_{\gamma}})$. Namely, the latent factor must be
sufficiently weak so that ignoring it will be inconsequential, and observed
factors sufficiently strong so that they can be distinguished from the weak
latent factor.

The main theoretical results of the paper are set out around five theorems.
Theorem \ref{TFMbias} considers the Fama-MacBeth two-step estimator and
derives its limiting property as $n$ and $T\rightarrow\infty$. To eliminate
the bias of Fama-MacBeth estimator we require $n/T\rightarrow0$, and to
eliminate the effects of pricing errors we need $Tn^{\alpha_{\eta}%
}/n\rightarrow0$, which results in a contradiction. Thus the Fama-MacBeth
estimator is valid only when there are no pricing errors ($\eta_{i}=0$ for all
$i$) and when $n/T\rightarrow0$. Theorem \ref{Thzig} provides a proof that the
estimator of $\bar{\sigma}_{n}^{2}$ (denoted by $\widehat{\bar{\sigma}}%
_{nT}^{2}$) proposed by
\citet{shanken1992estimation}
continues to be unbiased for a fixed $T$ as $n\rightarrow\infty$, even under
the general setting of the current paper that allows for missing factors as
well as pricing errors. Theorem \ref{Thzig} also establishes that
$\widehat{\bar{\sigma}}_{nT}^{2}-\bar{\sigma}_{n}^{2}\rightarrow
O_{p}(n^{-1/2}T^{-1/2})$, which is essential for establishing the results for
the bias-corrected estimator of $\boldsymbol{\phi}_{0}$, namely
$\boldsymbol{\tilde{\phi}}_{nT}$ given by (\ref{phitilda}), summarized in
Theorem \ref{Tfi}. This theorem provides conditions under which
$\boldsymbol{\tilde{\phi}}_{nT}$ is a consistent estimator of
$\boldsymbol{\phi}_{0}$, and derives its asymptotic distribution assuming the
observed factors are strong, again allowing for pricing errors, a missing
factor, and other forms of weak error cross-sectional dependence. Theorem
\ref{Tsemi} extends the results of Theorem \ref{Tfi} to the case where one or
more of the observed risk factors are semi-strong and shows how factor
strength impacts the precision with which the elements of $\boldsymbol{\phi
}_{0}$ are estimated. Finally, Theorem \ref{Var} presents the conditions under
which the asymptotic variance of $\boldsymbol{\tilde{\phi}}_{nT}$ can be
consistently estimated.

\begin{theorem}
\label{TFMbias}(Small $T$ bias of Fama-MacBeth estimator of
$\boldsymbol{\lambda}$) Consider the multi-factor linear return model
(\ref{mfT}) with the missing factor $g_{t}$ in $u_{it}$ as defined by
(\ref{uit}) and the associated risk premia, $\boldsymbol{\lambda}$, defined by
(\ref{APTmean}). Suppose that Assumptions \ref{factors}, \ref{loadings},
\ref{Errors}, \ref{Latent factor} and \ref{PriceError} hold and all observed
factors are strong. Suppose further that the true value of the risk premia,
$\boldsymbol{\lambda}_{0}$, is estimated by Fama-MacBeth two-pass estimator,
$\boldsymbol{\hat{\lambda}}_{nT}$, defined by (\ref{lambdahat}). Then for any
fixed $T>T_{0}$ such that $\lambda_{\min}\left(  T^{-1}\mathbf{F}^{\prime
}\mathbf{M}_{T}\mathbf{F}\right)  >0,$ we have (as $n\rightarrow\infty$)%
\begin{equation}
\boldsymbol{\hat{\lambda}}_{nT}-\boldsymbol{\lambda}_{0}=\left(
\boldsymbol{\hat{\mu}}_{T}-\boldsymbol{\mu}_{0}\right)  -\frac{\bar{\sigma
}^{2}}{T}\left[  \boldsymbol{\Sigma}_{\beta\beta}+\bar{\sigma}^{2}\frac{1}%
{T}\left(  \frac{\mathbf{F}^{\prime}\mathbf{M}_{T}\mathbf{F}}{T}\right)
^{-1}\right]  ^{-1}\left(  \frac{\mathbf{F}^{\prime}\mathbf{M}_{T}\mathbf{F}%
}{T}\right)  ^{-1}\boldsymbol{\lambda}_{T}^{\ast}+o_{p}(1), \label{bias}%
\end{equation}
where $\boldsymbol{\hat{\mu}}_{T}=T^{-1}\sum_{t=1}^{T}\mathbf{f}_{t}$,%
\[
\mathbf{\Sigma}_{\beta\beta}=\lim_{n\rightarrow\infty}\left(  \frac
{\mathbf{B}_{n}^{\prime}\mathbf{M}_{n}\mathbf{B}_{n}}{n}\right)  \text{, and
}\overline{\sigma}^{2}=\lim_{n\rightarrow\infty}n^{-1}\sum_{i=1}^{n}\sigma
_{i}^{2}>0.
\]

\end{theorem}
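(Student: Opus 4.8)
The plan is to fix $T>T_{0}$ and let $n\rightarrow\infty$, reducing the two-pass estimator to a handful of cross-sectional averages whose probability limits can be read off one at a time. First I would substitute the first-pass regression (\ref{mfi}) into (\ref{betaihat}) and use $\mathbf{M}_{T}\boldsymbol{\tau}_{T}=\mathbf{0}$ to obtain the errors-in-variables decomposition $\boldsymbol{\hat{\beta}}_{iT}=\boldsymbol{\beta}_{i}+\boldsymbol{\xi}_{i}$, where $\boldsymbol{\xi}_{i}=(\mathbf{F}^{\prime}\mathbf{M}_{T}\mathbf{F})^{-1}\mathbf{F}^{\prime}\mathbf{M}_{T}\mathbf{u}_{i\circ}$, and stack the rows into $\boldsymbol{\Xi}_{n}$ so that $\mathbf{\hat{B}}_{nT}=\mathbf{B}_{n}+\boldsymbol{\Xi}_{n}$. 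In parallel, averaging (\ref{mfi}) over $t$ and inserting the true-value form $\mathit{\alpha}_{i}=c+\boldsymbol{\beta}_{i}^{\prime}\boldsymbol{\phi}_{0}+\eta_{i}$ of (\ref{ai}) gives $\mathbf{\bar{r}}_{n\circ}=c\boldsymbol{\tau}_{n}+\mathbf{B}_{n}\boldsymbol{\lambda}_{T}^{\ast}+\boldsymbol{\eta}_{n}+\mathbf{\bar{u}}_{n}$, where $\mathbf{\bar{u}}_{n}$ stacks the time-averaged errors $\bar{u}_{i}=T^{-1}\sum_{t=1}^{T}u_{it}$, because $\boldsymbol{\phi}_{0}+\boldsymbol{\hat{\mu}}_{T}=\boldsymbol{\lambda}_{0}+(\boldsymbol{\hat{\mu}}_{T}-\boldsymbol{\mu}_{0})=\boldsymbol{\lambda}_{T}^{\ast}$ by (\ref{phi0}) and (\ref{lambda*}).

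Substituting both expressions into (\ref{lambdahat}) and annihilating the constant with $\mathbf{M}_{n}\boldsymbol{\tau}_{n}=\mathbf{0}$, I would write the estimation error around the ex-post premium as
\[
\boldsymbol{\hat{\lambda}}_{nT}-\boldsymbol{\lambda}_{T}^{\ast}=\left(  n^{-1}\mathbf{\hat{B}}_{nT}^{\prime}\mathbf{M}_{n}\mathbf{\hat{B}}_{nT}\right)^{-1}\left[  -n^{-1}\mathbf{\hat{B}}_{nT}^{\prime}\mathbf{M}_{n}\boldsymbol{\Xi}_{n}\boldsymbol{\lambda}_{T}^{\ast}+n^{-1}\mathbf{\hat{B}}_{nT}^{\prime}\mathbf{M}_{n}\boldsymbol{\eta}_{n}+n^{-1}\mathbf{\hat{B}}_{nT}^{\prime}\mathbf{M}_{n}\mathbf{\bar{u}}_{n}\right].
\]
Adding $\boldsymbol{\hat{\mu}}_{T}-\boldsymbol{\mu}_{0}$ to both sides recovers $\boldsymbol{\hat{\lambda}}_{nT}-\boldsymbol{\lambda}_{0}$ and isolates the first term on the right of (\ref{bias}); the remaining work is to identify the three cross-sectional limits.

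The core computation is the limit of the quadratic-in-error average. Writing $u_{it}=\gamma_{i}g_{t}+v_{it}$ as in (\ref{uit}) and using $E(g_{t}g_{s})=E(v_{it}v_{is})=0$ for $t\neq s$ together with $(\mathbf{F}^{\prime}\mathbf{M}_{T}\mathbf{F})^{-1}\mathbf{F}^{\prime}\mathbf{M}_{T}\mathbf{M}_{T}\mathbf{F}(\mathbf{F}^{\prime}\mathbf{M}_{T}\mathbf{F})^{-1}=(\mathbf{F}^{\prime}\mathbf{M}_{T}\mathbf{F})^{-1}$, I would show $n^{-1}\boldsymbol{\Xi}_{n}^{\prime}\mathbf{M}_{n}\boldsymbol{\Xi}_{n}\rightarrow_{p}\bar{\sigma}^{2}(\mathbf{F}^{\prime}\mathbf{M}_{T}\mathbf{F})^{-1}$, the rank-one missing-factor contribution being $O_{p}(n^{\alpha_{\gamma}-1})=o_{p}(1)$ by (\ref{Snormg}). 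Since $\mathbf{B}_{n}$ is independent of the errors, the cross term $n^{-1}\mathbf{B}_{n}^{\prime}\mathbf{M}_{n}\boldsymbol{\Xi}_{n}\rightarrow_{p}\mathbf{0}$, so the denominator converges to $\boldsymbol{\Sigma}_{\beta\beta}+\bar{\sigma}^{2}(\mathbf{F}^{\prime}\mathbf{M}_{T}\mathbf{F})^{-1}$ and $n^{-1}\mathbf{\hat{B}}_{nT}^{\prime}\mathbf{M}_{n}\boldsymbol{\Xi}_{n}\rightarrow_{p}\bar{\sigma}^{2}(\mathbf{F}^{\prime}\mathbf{M}_{T}\mathbf{F})^{-1}$. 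Inserting these and rewriting $(\mathbf{F}^{\prime}\mathbf{M}_{T}\mathbf{F})^{-1}=T^{-1}(T^{-1}\mathbf{F}^{\prime}\mathbf{M}_{T}\mathbf{F})^{-1}$ reproduces exactly the $\bar{\sigma}^{2}/T$ bias term in (\ref{bias}).

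Finally I would dispatch the two remainder terms. For the pricing errors, Cauchy--Schwarz with $\sum_{i=1}^{n}\Vert\boldsymbol{\beta}_{i}-\boldsymbol{\bar{\beta}}_{n}\Vert^{2}=O(n)$ (all factors strong) and $\sum_{i=1}^{n}\eta_{i}^{2}=O(n^{\alpha_{\eta}})$ from (\ref{Snorm}) gives $\Vert n^{-1}\mathbf{\hat{B}}_{nT}^{\prime}\mathbf{M}_{n}\boldsymbol{\eta}_{n}\Vert=O_{p}(n^{(\alpha_{\eta}-1)/2})=o_{p}(1)$ since $\alpha_{\eta}<1/2$. For the averaged errors, the decisive cancellation is that $\mathbf{\bar{u}}_{n}$ enters $\boldsymbol{\xi}_{i}$ only through $\mathbf{F}^{\prime}\mathbf{M}_{T}$, so that $n^{-1}\boldsymbol{\Xi}_{n}^{\prime}\mathbf{M}_{n}\mathbf{\bar{u}}_{n}$ has limiting mean $\bar{\sigma}^{2}(\mathbf{F}^{\prime}\mathbf{M}_{T}\mathbf{F})^{-1}\mathbf{F}^{\prime}\mathbf{M}_{T}(T^{-1}\boldsymbol{\tau}_{T})=\mathbf{0}$ because $\mathbf{M}_{T}\boldsymbol{\tau}_{T}=\mathbf{0}$, while its $\mathbf{B}_{n}$-part is $o_{p}(1)$ by the same weak-dependence and $\alpha_{\gamma}<1/2$ arguments. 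The main obstacle is the law of large numbers for $n^{-1}\boldsymbol{\Xi}_{n}^{\prime}\mathbf{M}_{n}\boldsymbol{\Xi}_{n}$: its entries are quadratic forms in the $\mathbf{v}_{i\circ}$, and guaranteeing convergence to the expectation rather than mere boundedness requires both parts of (\ref{Covv}), namely $\sup_{i}\sum_{j=1}^{n}\vert\sigma_{v,ij}\vert<\infty$ and $\sup_{i}\sum_{j=1}^{n}Cov(v_{it}^{2},v_{jt}^{2})<\infty$, to drive the cross-sectional variance of the average to zero.
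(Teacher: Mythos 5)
Your proposal is correct and follows essentially the same route as the paper's proof: the same errors-in-variables decomposition $\mathbf{\hat{B}}_{nT}=\mathbf{B}_{n}+\mathbf{U}_{nT}\mathbf{G}_{T}$, the same identity $\mathbf{\bar{r}}_{n\circ}=c\boldsymbol{\tau}_{n}+\mathbf{B}_{n}\boldsymbol{\lambda}_{T}^{\ast}+\boldsymbol{\eta}_{n}+\mathbf{\bar{u}}_{n\circ}$, the same expansion of $\boldsymbol{\hat{\lambda}}_{nT}-\boldsymbol{\lambda}_{T}^{\ast}$, and the same key limits ($n^{-1}\mathbf{G}_{T}^{\prime}\mathbf{U}_{nT}^{\prime}\mathbf{M}_{n}\mathbf{U}_{nT}\mathbf{G}_{T}\rightarrow_{p}\bar{\sigma}^{2}(\mathbf{F}^{\prime}\mathbf{M}_{T}\mathbf{F})^{-1}$ with the missing-factor part killed by $\alpha_{\gamma}<1/2$, vanishing cross terms, and $o_{p}(1)$ pricing-error and averaged-error terms), which the paper isolates as Lemmas \ref{L1} and \ref{L2}. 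The only cosmetic difference is your Cauchy--Schwarz bound $O_{p}(n^{(\alpha_{\eta}-1)/2})$ for the pricing-error term, where the paper exploits independence and the $\ell_{1}$ condition (\ref{NormetaD}) to get the sharper $O_{p}(n^{-1+\alpha_{\eta}})$; both suffice under Assumption \ref{PriceError}.
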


The proof is provided in Section \ref{PFMbias} of the mathematical appendix.

To derive the asymptotic distribution of $\boldsymbol{\hat{\lambda}}%
_{nT}-\boldsymbol{\lambda}_{0}$ it is required that both $n$ and
$T\rightarrow\infty$, jointly. Also, noting that
\[
\boldsymbol{\hat{\lambda}}_{nT}-\boldsymbol{\lambda}_{0}=\left(
\boldsymbol{\hat{\mu}}_{T}-\boldsymbol{\mu}_{0}\right)  +\left(
\boldsymbol{\hat{\phi}}_{nT}-\boldsymbol{\phi}_{0}\right)  ,
\]
it is clear that increasing $n$ is not relevant for the distribution of
$\boldsymbol{\hat{\mu}}_{T}-\boldsymbol{\mu}_{0}$, but joint $n$ and $T$
asymptotics are required when investigating the distribution of
$\boldsymbol{\hat{\phi}}_{nT}-\boldsymbol{\phi}_{0}$. Focussing on the latter,
and using result (\ref{phidis1})\ in the Appendix, we have%
\begin{align*}
&  \left(  n^{-1}\mathbf{\hat{B}}_{nT}^{\prime}\mathbf{M}_{n}\mathbf{\hat{B}%
}_{nT}\right)  \sqrt{nT}\left(  \boldsymbol{\hat{\phi}}_{nT}-\boldsymbol{\phi
}_{0}\right)  =n^{-1/2}T^{1/2}\mathbf{B}_{n}^{\prime}\mathbf{M}_{n}%
\boldsymbol{\eta}_{n}+n^{-1/2}T^{1/2}\mathbf{G}_{T}^{\prime}\mathbf{U}%
_{nT}^{\prime}\mathbf{M}_{n}\boldsymbol{\eta}_{n}\\
&  +n^{-1/2}T^{1/2}\mathbf{G}_{T}^{\prime}\mathbf{U}_{nT}^{\prime}%
\mathbf{M}_{n}\mathbf{\bar{u}}_{n\circ}-n^{-1/2}T^{1/2}\mathbf{G}_{T}^{\prime
}\mathbf{U}_{nT}^{\prime}\mathbf{U}_{nT}\mathbf{G}_{T}\boldsymbol{\lambda}%
_{T}^{\ast}.
\end{align*}

Where $\boldsymbol{U}_{nT}=\left(  \mathbf{u}_{1\circ},\mathbf{u}_{2\circ
},...,\mathbf{u}_{n\circ}\right)  ^{\prime},$ $\mathbf{u}_{i\circ}=\left(
u_{i1},u_{i2},...,u_{iT}\right)  ^{\prime},$ $\mathbf{G}_{T}=$ $\mathbf{M}%
_{T}\mathbf{F}\left(  \mathbf{F}^{\prime}\mathbf{M}_{T}\mathbf{F}\right)
^{-1}$, $\overline{\mathbf{u}}_{n\circ}=(\overline{u}_{1\circ},\overline
{u}_{2\circ},...,\overline{u}_{n\circ})^{\prime},$ and $\overline{u}_{i\circ
}=T^{-1}\sum_{t=1}^{T}u_{it}.$ Consider first the terms that include the
pricing errors, $\boldsymbol{\eta}_{n}$, and using the results in Lemma
\ref{L2} note that
\[
n^{-1/2}T^{1/2}\mathbf{B}_{n}^{\prime}\mathbf{M}_{n}\boldsymbol{\eta}%
_{n}=O_{p}\left(  T^{1/2}n^{-1/2+\alpha_{\eta}}\right)  ,\text{ }%
n^{-1/2}T^{1/2}\mathbf{G}_{T}^{\prime}\mathbf{U}_{nT}^{\prime}\mathbf{M}%
_{n}\boldsymbol{\eta}_{n}=O_{p}\left(  n^{-1/2+\frac{\alpha_{\eta}%
+\alpha_{\gamma}}{2}}\right)  .
\]
It is clear that the effects of pricing errors on the distribution of
$\boldsymbol{\hat{\phi}}_{nT}$ vanish only if $T^{1/2}n^{-1/2+\alpha_{\eta}%
}\rightarrow0$, and $\alpha_{\eta}+\alpha_{\gamma}<1$. Also
\begin{align*}
n^{-1/2}T^{1/2}\mathbf{G}_{T}^{\prime}\mathbf{U}_{nT}^{\prime}\mathbf{M}%
_{n}\mathbf{\bar{u}}_{n\circ}  &  =O_{p}\left(  T^{-1/2}\right)  \text{, }\\
n^{-1/2}T^{1/2}\mathbf{G}_{T}^{\prime}\mathbf{U}_{nT}^{\prime}\mathbf{U}%
_{nT}\mathbf{G}_{T}\boldsymbol{\lambda}_{T}^{\ast}  &  =\sqrt{\frac{n}{T}}%
\bar{\sigma}_{n}^{2}\left(  \frac{\mathbf{F}^{\prime}\mathbf{M}_{T}\mathbf{F}%
}{T}\right)  ^{-1}\boldsymbol{\lambda}_{T}^{\ast}+O_{p}\left(  T^{-1/2}%
\right)  .
\end{align*}
Finally, for the first two terms involving $\mathbf{B}_{n}$ and $\mathbf{U}%
_{nT}$ we have%
\begin{equation}
n^{-1/2}T^{1/2}\mathbf{B}_{n}^{\prime}\mathbf{M}_{n}\left(  \mathbf{\bar{u}%
}_{n\circ}-\mathbf{U}_{nT}\mathbf{G}_{T}\boldsymbol{\lambda}_{T}^{\ast
}\right)  =O_{p}\left(  1\right)  . \label{Dis}%
\end{equation}
It is clear that the small $T$ bias of the asymptotic distribution of the
two-step estimator, given by $\sqrt{\frac{n}{T}}\bar{\sigma}_{n}^{2}\left(
\frac{\mathbf{F}^{\prime}\mathbf{M}_{T}\mathbf{F}}{T}\right)  ^{-1}%
\boldsymbol{\lambda}_{T}^{\ast}$, does not vanish unless, $n/T\rightarrow0$.
At the same time for the pricing errors to have no impact on the distribution
of the two-step estimator we must have $Tn^{a_{\eta}}/n\rightarrow0$. Both
conditions cannot be met simultaneously. It is possible to derive the
asymptotic distribution of $\boldsymbol{\hat{\phi}}_{nT}$, and hence that of
$\boldsymbol{\hat{\lambda}}_{nT}$, when $n/T\rightarrow0$ and $\mathbf{\eta
=0}$, but these are quite restrictive conditions, and to avoid them we follow
\citet{shanken1992estimation} and instead consider a bias-corrected version of
$\boldsymbol{\hat{\phi}}_{nT}$, namely $\boldsymbol{\tilde{\phi}}_{nT}$ given
by (\ref{phitilda}). As noted earlier $\boldsymbol{\tilde{\phi}}%
_{nT}=\boldsymbol{\tilde{\lambda}}_{nT}-\boldsymbol{\hat{\mu}}_{T}$, where
$\boldsymbol{\tilde{\lambda}}_{nT}$ is the bias-corrected version of
$\boldsymbol{\hat{\lambda}}_{nT}$ originally proposed by Shanken.

To investigate the asymptotic properties of $\boldsymbol{\tilde{\phi}}_{nT}$
we first need to establish conditions under which $\widehat{\bar{\sigma}}%
_{nT}^{2}$, defined by (\ref{zigbar}), is a consistent estimator of
$\bar{\sigma}_{n}^{2}=n^{-1}\sum_{i=1}^{n}\sigma_{i}^{2}$, which enters the
bias-corrected estimator. The proof of consistency in the literature does not
allow for missing factors or pricing errors and only considers the case where
$T$ is fixed as $n\rightarrow\infty$. For derivation of asymptotic
distribution of $\boldsymbol{\tilde{\phi}}_{nT}$ we also need to consider the
limiting properties of $\widehat{\bar{\sigma}}_{nT}^{2}$ under joint $n$ and
$T$ asymptotics. The following theorem provides the required results for
$\widehat{\bar{\sigma}}_{nT}^{2}$ as an estimator of $\bar{\sigma}_{n}^{2}$.

\begin{theorem}
\label{Thzig} Consider $\widehat{\bar{\sigma}}_{nT}^{2}$, the estimator of
$\bar{\sigma}_{n}^{2}$ given by (see (\ref{zigbar})),%
\begin{equation}
\widehat{\bar{\sigma}}_{nT}^{2}=\frac{\sum_{t=1}^{T}\sum_{i=1}^{n}\hat{u}%
_{it}^{2}}{n(T-K-1)}, \label{AdjZig}%
\end{equation}
and suppose that Assumptions \ref{factors}, \ref{Latent factor}, and
\ref{Errors}, are satisfied. Then for a fixed $T$%
\begin{equation}
\lim_{n\rightarrow\infty}E\left(  \widehat{\bar{\sigma}}_{nT}^{2}\right)
=\bar{\sigma}^{2}, \label{unbiasZig}%
\end{equation}
where $\bar{\sigma}^{2}=\lim_{n\rightarrow\infty}\bar{\sigma}_{n}^{2}$, and
$\bar{\sigma}_{n}^{2}=n^{-1}\sum_{i=1}^{n}\sigma_{i}^{2}$. Furthermore
\begin{equation}
\widehat{\bar{\sigma}}_{nT}^{2}-\bar{\sigma}_{n}^{2}=O_{p}\left(
T^{-1/2}n^{-1/2}\right)  . \label{orderzig}%
\end{equation}

\end{theorem}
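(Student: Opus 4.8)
The plan is to exploit the fact that the first-pass residuals strip out both the intercept and the observed factors, so that $\widehat{\bar{\sigma}}_{nT}^{2}$ is a normalized sum of error quadratic forms whose mean and variance can be computed directly. Writing $\mathbf{X}_{F}=(\boldsymbol{\tau}_{T},\mathbf{F})$ and $\bar{\mathbf{M}}=\mathbf{I}_{T}-\mathbf{X}_{F}(\mathbf{X}_{F}^{\prime}\mathbf{X}_{F})^{-1}\mathbf{X}_{F}^{\prime}$, the OLS residual vector underlying (\ref{res}) is $\hat{\mathbf{u}}_{i\circ}=\bar{\mathbf{M}}\mathbf{r}_{i\circ}=\bar{\mathbf{M}}\mathbf{u}_{i\circ}$, since $\bar{\mathbf{M}}\boldsymbol{\tau}_{T}=\mathbf{0}$ and $\bar{\mathbf{M}}\mathbf{F}=\mathbf{0}$. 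Under Assumption \ref{factors}(b), for $T>T_{0}$ the matrix $\mathbf{X}_{F}$ has full column rank $K+1$, so $\bar{\mathbf{M}}$ is idempotent with $\mathrm{tr}(\bar{\mathbf{M}})=\mathrm{tr}(\bar{\mathbf{M}}^{2})=T-K-1$. Thus $\sum_{t=1}^{T}\hat{u}_{it}^{2}=\mathbf{u}_{i\circ}^{\prime}\bar{\mathbf{M}}\mathbf{u}_{i\circ}$, and using (\ref{uit}) with $\mathbf{u}_{i\circ}=\gamma_{i}\mathbf{g}+\mathbf{v}_{i\circ}$ and $\mathbf{g}=(g_{1},\dots,g_{T})^{\prime}$, I would expand the numerator of $\widehat{\bar{\sigma}}_{nT}^{2}$ into three pieces: a latent-factor piece $(\sum_{i}\gamma_{i}^{2})\,\mathbf{g}^{\prime}\bar{\mathbf{M}}\mathbf{g}$, a cross piece $2\mathbf{g}^{\prime}\bar{\mathbf{M}}(\sum_{i}\gamma_{i}\mathbf{v}_{i\circ})$, and an idiosyncratic piece $\sum_{i}\mathbf{v}_{i\circ}^{\prime}\bar{\mathbf{M}}\mathbf{v}_{i\circ}$.

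For the unbiasedness claim (\ref{unbiasZig}) I would take expectations conditional on $\mathbf{F}$. By Assumption \ref{Latent factor}(a), $\mathbf{g}$ is independent of $\mathbf{F}$ and serially independent with $E(\mathbf{g}\mathbf{g}^{\prime})=\mathbf{I}_{T}$, so $E(\mathbf{g}^{\prime}\bar{\mathbf{M}}\mathbf{g}\mid\mathbf{F})=\mathrm{tr}(\bar{\mathbf{M}})=T-K-1$; by Assumption \ref{Errors}(a), $v_{it}$ is serially independent with $E(\mathbf{v}_{i\circ}\mathbf{v}_{i\circ}^{\prime}\mid\mathbf{F})=\sigma_{v,ii}\mathbf{I}_{T}$, so $E(\mathbf{v}_{i\circ}^{\prime}\bar{\mathbf{M}}\mathbf{v}_{i\circ}\mid\mathbf{F})=\sigma_{v,ii}(T-K-1)$; and the cross piece has zero mean because $g$ and $v$ are independent and mean zero. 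Summing and recalling $\sigma_{i}^{2}=\gamma_{i}^{2}+\sigma_{v,ii}$ gives $E(\sum_{t,i}\hat{u}_{it}^{2})=(T-K-1)\sum_{i}\sigma_{i}^{2}$, hence $E(\widehat{\bar{\sigma}}_{nT}^{2})=\bar{\sigma}_{n}^{2}$ exactly for every fixed $T>T_{0}$; letting $n\rightarrow\infty$ and using $\bar{\sigma}_{n}^{2}\rightarrow\bar{\sigma}^{2}$ yields (\ref{unbiasZig}).

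For the rate (\ref{orderzig}) I would center each of the three pieces and bound its variance, dividing throughout by $n(T-K-1)=O(nT)$. The latent piece contributes $(\sum_{i}\gamma_{i}^{2})\big[\mathbf{g}^{\prime}\bar{\mathbf{M}}\mathbf{g}-(T-K-1)\big]$; since $\sum_{i}\gamma_{i}^{2}=O(n^{\alpha_{\gamma}})$ by (\ref{Snormg}) and the bracket is a mean-zero quadratic form in $\mathbf{g}$ with variance $O(T)$ (using $\sup_{t}E(g_{t}^{4})<C$ and $\mathrm{tr}(\bar{\mathbf{M}}^{2})=T-K-1$), this term is $O_{p}(n^{\alpha_{\gamma}}T^{1/2})$, which after normalization is $O_{p}(n^{\alpha_{\gamma}-1}T^{-1/2})=o_{p}(n^{-1/2}T^{-1/2})$ because $\alpha_{\gamma}<1/2$. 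The cross piece, conditional on $\mathbf{F}$ and $\mathbf{g}$, has variance $4\,(\boldsymbol{\gamma}^{\prime}\mathbf{V}_{v}\boldsymbol{\gamma})\,\mathbf{g}^{\prime}\bar{\mathbf{M}}\mathbf{g}$, and since $\boldsymbol{\gamma}^{\prime}\mathbf{V}_{v}\boldsymbol{\gamma}\leq\lambda_{\max}(\mathbf{V}_{v})\sum_{i}\gamma_{i}^{2}=O(n^{\alpha_{\gamma}})$ with $\lambda_{\max}(\mathbf{V}_{v})<\infty$ implied by (\ref{Covv}), its unconditional variance is $O(n^{\alpha_{\gamma}}T)$, so it too is $o_{p}(n^{-1/2}T^{-1/2})$ after dividing by $nT$.

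The binding term, and the main obstacle, is the idiosyncratic piece $\sum_{i}\xi_{i}$ with $\xi_{i}=\mathbf{v}_{i\circ}^{\prime}\bar{\mathbf{M}}\mathbf{v}_{i\circ}-\sigma_{v,ii}(T-K-1)$, for which I must show $\mathrm{Var}(\sum_{i}\xi_{i})=O(nT)$. Writing $m_{ts}=(\bar{\mathbf{M}})_{ts}$ and using joint serial independence of the cross-section vectors $\mathbf{v}_{\circ t}$, the fourth-moment expansion of $\mathrm{Cov}(\xi_{i},\xi_{j}\mid\mathbf{F})$ leaves only matched-time contributions, which collapse to a piece proportional to $\sigma_{v,ij}^{2}\,\mathrm{tr}(\bar{\mathbf{M}}^{2})$ and a piece proportional to $\mathrm{Cov}(v_{it}^{2},v_{jt}^{2})$ weighted by $\sum_{t}m_{tt}^{2}\leq T-K-1$. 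Summing over $i,j$, the first uses $\sum_{i,j}\sigma_{v,ij}^{2}\leq\big(\sup_{ij}|\sigma_{v,ij}|\big)\,n\,\sup_{i}\sum_{j}|\sigma_{v,ij}|=O(n)$ and the second uses $\sup_{i}\sum_{j}\mathrm{Cov}(v_{it}^{2},v_{jt}^{2})<\infty$, both from (\ref{Covv}); each therefore gives $O(nT)$, so $\sum_{i}\xi_{i}=O_{p}((nT)^{1/2})$, which after normalization is $O_{p}(n^{-1/2}T^{-1/2})$. The delicate bookkeeping here is keeping the trace factors (which scale like $T$) separate from the cross-sectional absolute-summability factors (which scale like $n$); this is exactly where the weak cross-sectional dependence conditions (\ref{Covv}) do their work. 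Collecting the three orders establishes (\ref{orderzig}).
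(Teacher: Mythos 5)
Your proposal is correct and reaches both conclusions under exactly the stated assumptions, but it takes a genuinely different route from the paper's proof. The common starting point is the same: your $\bar{\mathbf{M}}$ is the paper's idempotent matrix $\mathbf{R}_{T}=\mathbf{M}_{T}-\mathbf{M}_{T}\mathbf{F}\left(\mathbf{F}^{\prime}\mathbf{M}_{T}\mathbf{F}\right)^{-1}\mathbf{F}^{\prime}\mathbf{M}_{T}$ (the annihilator of $(\boldsymbol{\tau}_{T},\mathbf{F})$, by Frisch--Waugh), and both proofs use $\hat{\mathbf{u}}_{i\circ}=\bar{\mathbf{M}}\mathbf{u}_{i\circ}$. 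From there the paper keeps the composite error $u_{it}$ intact and splits the \emph{projection}, writing $v\widehat{\bar{\sigma}}_{nT}^{2}$ as a raw trace minus a mean part minus a factor part ($\mathbf{Q}$-term), then centering to obtain the three terms $a_{nT}$, $b_{nT}$, $c_{nT}$ of (\ref{ziggap}); all the cross-sectional and latent-factor structure is then delegated to Lemma \ref{L1}, whose bounds $a_{n,tt}=n^{-1}\sum_{i}\left(u_{it}^{2}-\sigma_{i}^{2}\right)=O_{p}(n^{-1/2})$ and $a_{n,tt^{\prime}}=O_{p}(n^{-1/2})$ (this is where $\alpha_{\gamma}<1/2$ enters) combine with serial independence to deliver the $T^{-1/2}n^{-1/2}$ rate. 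You do the opposite: you keep the projection intact and split the \emph{error} via (\ref{uit}) into latent, cross, and idiosyncratic pieces, controlling each by a direct variance computation against the trace identities $Tr(\bar{\mathbf{M}})=Tr(\bar{\mathbf{M}}^{2})=T-K-1$. Both bookkeepings give exact unbiasedness for every fixed $(n,T)$ --- in yours because the degrees-of-freedom correction is precisely the trace of the idempotent annihilator, in the paper's because $a_{nT}$, $b_{nT}$, $c_{nT}$ each have mean zero by independence of $\mathbf{F}$ and $\mathbf{U}_{nT}$. What your route buys is self-containedness and transparency: the reader sees exactly where $\alpha_{\gamma}<1/2$ bites (the $\left(\sum_{i}\gamma_{i}^{2}\right)$-weighted quadratic form in $\mathbf{g}$ is $O_{p}\left(n^{\alpha_{\gamma}-1}T^{-1/2}\right)$ after normalization), and your fourth-moment expansion of $Cov(\xi_{i},\xi_{j})$ isolates the two distinct uses of (\ref{Covv}). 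What the paper's route buys is modularity: the Lemma \ref{L1} quantities $a_{n,tt}$, $a_{n,tt^{\prime}}$, $b_{n,t}$ are proved once for the composite $u_{it}$ and then reused in Lemma \ref{L2} and in the proofs of Theorems \ref{TFMbias}, \ref{Tfi} and \ref{Tsemi}, so the work is amortized across the appendix. Two details you should state rather than leave implicit: $\sup_{i,j}\left\vert\sigma_{v,ij}\right\vert<C$ (needed for your bound $\sum_{i,j}\sigma_{v,ij}^{2}=O(n)$) follows from (\ref{Covv}) because each entry is dominated by its row sum, and $\lambda_{\max}\left(\mathbf{V}_{v}\right)\leq\sup_{i}\sum_{j}\left\vert\sigma_{v,ij}\right\vert$ is the standard row-sum bound for symmetric matrices.
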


For a proof see sub-section \ref{ProofThzig} in the Appendix.

Result (\ref{orderzig}) shows that $\widehat{\bar{\sigma}}_{nT}^{2}$ continues
to be a consistent estimator of $\bar{\sigma}^{2}=\lim_{n\rightarrow\infty
}\bar{\sigma}_{n}^{2}$ for a fixed $T$ as $n\rightarrow\infty$, even in the
presence of pricing errors and a missing common factor. This result also holds
when one or more of the factors are semi-strong.

Equipped with the above result we are now in a position to present the theorem
that sets out the asymptotic distribution of $\boldsymbol{\tilde{\phi}}_{nT}$.

\begin{theorem}
\label{Tfi}Consider, $\boldsymbol{\tilde{\phi}}_{nT}$, the bias-corrected
estimators of $\boldsymbol{\phi}_{0}$ given by (\ref{phitilda}). Suppose
Assumptions \ref{factors}, \ref{loadings}, \ref{Errors}, \ref{Latent factor}
and \ref{PriceError} hold, all the observed factors are strong, ($\alpha
_{k}=1$, for $k=1,2,...,K$), and the strength of the missing factor,
$\alpha_{\gamma}$ defined by (\ref{normg}), satisfies $\alpha_{\gamma}$
$<1/2$.
\begin{align}
\boldsymbol{\tilde{\phi}}_{nT}-\boldsymbol{\phi}_{0}  &  =O_{p}\left(
T^{-1/2}n^{^{-1/2}}\right)  +O_{p}\left(  T^{-1/2}n^{-1+\frac{\alpha_{\eta
}+\alpha_{\gamma}}{2}}\right) \label{phigap}\\
&  +O_{p}\left(  n^{-1+\alpha_{\eta}}\right)  +O_{p}\left(  T^{-1}%
n^{-1/2}\right)  ,\nonumber
\end{align}
where $\alpha_{\eta}\,$denotes the degree of pervasiveness of the pricing
errors defined by (\ref{APTg}). (a) When $T$ is fixed, $\alpha_{\gamma}<1/2$
and $\alpha_{\eta}<1$, then there exists $T_{0}$ such that for all $T>T_{0}$
\begin{equation}
p\lim_{n\rightarrow\infty}\left(  \boldsymbol{\tilde{\phi}}_{nT}\right)
=\boldsymbol{\phi}_{0}\text{.} \label{phicon}%
\end{equation}
Also
\begin{equation}
\sqrt{nT}\left(  \boldsymbol{\tilde{\phi}}_{nT}-\boldsymbol{\phi}_{0}\right)
=\mathbf{\Sigma}_{\beta\beta}^{-1}\boldsymbol{\xi}_{nT}+O_{p}\left(
n^{-\frac{1}{2}+\frac{\alpha_{\eta+\alpha_{\gamma}}}{2}}\right)  +O_{p}\left(
T^{1/2}n^{-1/2+\alpha_{\eta}}\right)  +O_{p}\left(  T^{-1/2}\right)  ,
\label{phiDis}%
\end{equation}
where $\mathbf{\Sigma}_{\beta\beta}=p\lim_{n\rightarrow\infty}\left(
n^{-1}\mathbf{B}_{n}^{\prime}\mathbf{M}_{n}\mathbf{B}_{n\ }\right)  ,$%
\begin{equation}
\boldsymbol{\xi}_{nT}=n^{-1/2}T^{-1/2}\mathbf{B}_{n}^{\prime}\mathbf{M}%
_{n}\mathbf{U}_{nT}\mathbf{a}_{T}, \label{Tegzi}%
\end{equation}
and $\mathbf{a}_{T}=\mathbf{\tau}_{T}-\mathbf{M}_{T}\mathbf{F}(T^{-1}%
\mathbf{F}^{\prime}\mathbf{M}_{T}\mathbf{F)}^{-1}\boldsymbol{\lambda}%
_{T}^{\ast}$. (b) If $\alpha_{\gamma}<1/2$, $\alpha_{\eta}<1/2,$ and
$\sqrt{\frac{T}{n}}n^{\alpha_{\eta}}\rightarrow0$, as $n$ and $T\rightarrow
\infty$ jointly, then%
\begin{equation}
\sqrt{nT}\left(  \boldsymbol{\tilde{\phi}}_{nT}-\boldsymbol{\phi}_{0}\right)
\rightarrow_{d}N\left(  \mathbf{0,\Sigma}_{\beta\beta}^{-1}\mathbf{V}_{\xi
}\mathbf{\Sigma}_{\beta\beta}^{-1}\right)  , \label{Dphi}%
\end{equation}
where
\begin{equation}
\mathbf{V}_{\xi}=\left(  1+\boldsymbol{\lambda}_{0}^{\prime}\mathbf{\Sigma
}_{f}^{-1}\boldsymbol{\lambda}_{0}\right)  \text{ }p\lim_{n\rightarrow\infty
}\left(  n^{-1}\mathbf{B}_{n}^{\prime}\mathbf{M}_{n}\mathbf{V}_{u}%
\mathbf{M}_{n}\mathbf{B}_{n\ }\right)  . \label{Vegzi}%
\end{equation}

\end{theorem}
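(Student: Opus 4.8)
The plan is to reduce everything to one exact algebraic identity for $\mathbf{H}_{nT}\big(\boldsymbol{\tilde{\phi}}_{nT}-\boldsymbol{\phi}_0\big)$ and then read off orders term by term. First I would use the first-pass decomposition $\mathbf{\hat{B}}_{nT}=\mathbf{B}_n+\mathbf{U}_{nT}\mathbf{G}_T$ and $\hat{\alpha}_{iT}=\alpha_i+\bar{u}_{i\circ}-\mathbf{u}_{i\circ}^{\prime}\mathbf{G}_T\boldsymbol{\hat{\mu}}_T$, both immediate from (\ref{betaihat}) and $\mathbf{M}_T\boldsymbol{\tau}_T=\mathbf{0}$. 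Substituting (\ref{ai}) gives $\alpha_i-\boldsymbol{\beta}_i^{\prime}\boldsymbol{\phi}_0=c+\eta_i$, and since $\boldsymbol{\hat{\mu}}_T+\boldsymbol{\phi}_0=\boldsymbol{\lambda}_T^{\ast}$ by (\ref{lambda*}), stacking over $i$ yields $\hat{\alpha}_{nT}-\mathbf{\hat{B}}_{nT}\boldsymbol{\phi}_0=c\boldsymbol{\tau}_n+\boldsymbol{\eta}_n+\mathbf{\bar{u}}_{n\circ}-\mathbf{U}_{nT}\mathbf{G}_T\boldsymbol{\lambda}_T^{\ast}$. Pre-multiplying by $n^{-1}\mathbf{\hat{B}}_{nT}^{\prime}\mathbf{M}_n$ (the $c\boldsymbol{\tau}_n$ term dies because $\mathbf{M}_n\boldsymbol{\tau}_n=\mathbf{0}$) and combining with the correction term of (\ref{phitilda}), again using $\boldsymbol{\hat{\mu}}_T+\boldsymbol{\phi}_0=\boldsymbol{\lambda}_T^{\ast}$, produces the master identity in which $\boldsymbol{\phi}_0$ has been eliminated.

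Next I would isolate the leading term. Expanding $\mathbf{\hat{B}}_{nT}=\mathbf{B}_n+\mathbf{U}_{nT}\mathbf{G}_T$, the pieces $n^{-1}\mathbf{B}_n^{\prime}\mathbf{M}_n\mathbf{\bar{u}}_{n\circ}$ and $-n^{-1}\mathbf{B}_n^{\prime}\mathbf{M}_n\mathbf{U}_{nT}\mathbf{G}_T\boldsymbol{\lambda}_T^{\ast}$ combine, via $T^{-1}\boldsymbol{\tau}_T-\mathbf{G}_T\boldsymbol{\lambda}_T^{\ast}=T^{-1}\mathbf{a}_T$, into $(nT)^{-1}\mathbf{B}_n^{\prime}\mathbf{M}_n\mathbf{U}_{nT}\mathbf{a}_T=(nT)^{-1/2}\boldsymbol{\xi}_{nT}$, the $O_p(T^{-1/2}n^{-1/2})$ leading term. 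The essential cancellation is in $n^{-1}\mathbf{G}_T^{\prime}\mathbf{U}_{nT}^{\prime}\mathbf{M}_n\mathbf{U}_{nT}\mathbf{G}_T\boldsymbol{\lambda}_T^{\ast}$: since the errors are serially uncorrelated, $n^{-1}\mathbf{U}_{nT}^{\prime}\mathbf{M}_n\mathbf{U}_{nT}=\bar{\sigma}_n^2\mathbf{I}_T+O_p(n^{-1/2})$ and $\mathbf{G}_T^{\prime}\mathbf{G}_T=\big(\mathbf{F}^{\prime}\mathbf{M}_T\mathbf{F}\big)^{-1}$, so this piece equals $T^{-1}\bar{\sigma}_n^2\big(T^{-1}\mathbf{F}^{\prime}\mathbf{M}_T\mathbf{F}\big)^{-1}\boldsymbol{\lambda}_T^{\ast}+O_p(T^{-1}n^{-1/2})$ and is annihilated by the Shanken bias-correction term up to the estimation error $\widehat{\bar{\sigma}}_{nT}^2-\bar{\sigma}_n^2=O_p(T^{-1/2}n^{-1/2})$ supplied by Theorem \ref{Thzig}, leaving $O_p(T^{-1}n^{-1/2})$.

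Then I would bound the residual cross terms using the Lemma \ref{L2} orders recorded after Theorem \ref{TFMbias}: $n^{-1}\mathbf{B}_n^{\prime}\mathbf{M}_n\boldsymbol{\eta}_n=O_p(n^{-1+\alpha_\eta})$ (from $\sup_i|\beta_{ik}-\bar{\beta}_k|\sum_i|\eta_i-\bar{\eta}|$ and (\ref{NormetaD})), $n^{-1}\mathbf{G}_T^{\prime}\mathbf{U}_{nT}^{\prime}\mathbf{M}_n\boldsymbol{\eta}_n=O_p\big(T^{-1/2}n^{-1+(\alpha_\eta+\alpha_\gamma)/2}\big)$, and $n^{-1}\mathbf{G}_T^{\prime}\mathbf{U}_{nT}^{\prime}\mathbf{M}_n\mathbf{\bar{u}}_{n\circ}=O_p(T^{-1}n^{-1/2})$. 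Collecting these orders and premultiplying by $\mathbf{H}_{nT}^{-1}=O_p(1)$ gives (\ref{phigap}). For part (a), $\mathbf{H}_{nT}\rightarrow_p\boldsymbol{\Sigma}_{\beta\beta}$ for fixed $T>T_0$ (the $T^{-1}$ loading-estimation inflation in $n^{-1}\mathbf{\hat{B}}_{nT}^{\prime}\mathbf{M}_n\mathbf{\hat{B}}_{nT}$ being exactly removed by the correction), so each order in (\ref{phigap}) vanishes as $n\rightarrow\infty$ whenever $\alpha_\gamma<1/2$ and $\alpha_\eta<1$, giving (\ref{phicon}); multiplying by $\sqrt{nT}$ and replacing $\mathbf{H}_{nT}^{-1}$ by $\boldsymbol{\Sigma}_{\beta\beta}^{-1}$ (the difference being negligible higher-order) yields (\ref{phiDis}).

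Finally, for part (b) I would prove $\boldsymbol{\xi}_{nT}\rightarrow_d N(\mathbf{0},\mathbf{V}_\xi)$. The latent-factor part of $\mathbf{U}_{nT}$ contributes $n^{-1/2}T^{-1/2}\big(\sum_i(\boldsymbol{\beta}_i-\boldsymbol{\bar{\beta}}_n)\gamma_i\big)\big(\sum_t g_t a_{tT}\big)=O_p(n^{-1/2+\alpha_\gamma})$, where $a_{tT}$ is the $t$-th element of $\mathbf{a}_T$; this is $o_p(1)$ precisely because $\alpha_\gamma<1/2$, so the limit is driven by $v_{it}$ and governed by Assumption \ref{Errors}(b) with the data-dependent weights $\mathbf{a}_T$. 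The variance follows from $Var(\boldsymbol{\xi}_{nT}\mid\mathbf{F})=\big(T^{-1}\mathbf{a}_T^{\prime}\mathbf{a}_T\big)\,n^{-1}\mathbf{B}_n^{\prime}\mathbf{M}_n\mathbf{V}_u\mathbf{M}_n\mathbf{B}_n$ together with $T^{-1}\mathbf{a}_T^{\prime}\mathbf{a}_T=1+\boldsymbol{\lambda}_T^{\ast\prime}\big(T^{-1}\mathbf{F}^{\prime}\mathbf{M}_T\mathbf{F}\big)^{-1}\boldsymbol{\lambda}_T^{\ast}\rightarrow1+\boldsymbol{\lambda}_0^{\prime}\boldsymbol{\Sigma}_f^{-1}\boldsymbol{\lambda}_0$ (using $\boldsymbol{\tau}_T^{\prime}\mathbf{M}_T=\mathbf{0}$), giving (\ref{Vegzi}). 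The remainder orders in (\ref{phiDis}) vanish since $\alpha_\eta+\alpha_\gamma<1$, $\sqrt{T/n}\,n^{\alpha_\eta}\rightarrow0$, and $T\rightarrow\infty$, delivering (\ref{Dphi}). The main obstacle is the joint-$(n,T)$ central limit theorem for $\boldsymbol{\xi}_{nT}$: the weights $\mathbf{a}_T$ are themselves random and of growing dimension, so one must verify a Lyapunov-type condition for the double array $(\boldsymbol{\beta}_i-\boldsymbol{\bar{\beta}}_n)v_{it}a_{tT}$ under only the weak cross-sectional dependence of Assumption \ref{Errors}, while simultaneously controlling the quadratic-form concentration $n^{-1}\mathbf{U}_{nT}^{\prime}\mathbf{M}_n\mathbf{U}_{nT}-\bar{\sigma}_n^2\mathbf{I}_T=O_p(n^{-1/2})$ uniformly in the $T$ entries as $T$ grows.
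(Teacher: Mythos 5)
Your proposal is correct and follows essentially the same route as the paper's own proof: the same identity $\mathbf{H}_{nT}\left(\boldsymbol{\tilde{\phi}}_{nT}-\boldsymbol{\phi}_{0}\right)=\mathbf{s}_{nT}$ built from $\mathbf{\hat{B}}_{nT}=\mathbf{B}_{n}+\mathbf{U}_{nT}\mathbf{G}_{T}$, the same collapse of the leading term into $\boldsymbol{\xi}_{nT}=n^{-1/2}T^{-1/2}\mathbf{B}_{n}^{\prime}\mathbf{M}_{n}\mathbf{U}_{nT}\mathbf{a}_{T}$ via $T^{-1}\boldsymbol{\tau}_{T}-\mathbf{G}_{T}\boldsymbol{\lambda}_{T}^{\ast}=T^{-1}\mathbf{a}_{T}$, the same Shanken cancellation using Theorem \ref{Thzig}, the same Lemma \ref{L2} orders for the pricing-error and cross terms, and the same $u_{it}=\gamma_{i}g_{t}+v_{it}$ split with Assumption \ref{Errors}(b) delivering the CLT and $T^{-1}\mathbf{a}_{T}^{\prime}\mathbf{a}_{T}\rightarrow1+\boldsymbol{\lambda}_{0}^{\prime}\mathbf{\Sigma}_{f}^{-1}\boldsymbol{\lambda}_{0}$ the variance weight. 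The uniform-in-$T$ control of the quadratic form that you flag as the remaining obstacle is precisely what the paper's Lemma \ref{L2}, result (\ref{GUUG}), supplies.
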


For a proof see sub-section \ref{ProofTfi} in the Appendix.

Result (\ref{phigap}) establishes the finite $T$ consistency of
$\boldsymbol{\tilde{\phi}}_{nT}$ for $\boldsymbol{\phi}_{0}$ so long as
$\alpha_{\gamma}<1/2$ and $\alpha_{\eta}<1$, thus extending the Shanken result
to a much more general setting. To the best of our knowledge the asymptotic
distribution in (\ref{Dphi}) is new and shows that the asymptotic covariance
matrix of $\boldsymbol{\tilde{\phi}}_{nT}$ includes the term
$\boldsymbol{\lambda}_{T}^{\ast\prime}(T^{-1}\mathbf{F}^{\prime}\mathbf{M}%
_{T}\mathbf{F)}^{-1}\boldsymbol{\lambda}_{T}^{\ast}$, that arises from the
first stage estimation of the factor loadings, and must be included in the
analysis for valid inference. It is also clear that this additional term does
not vanish with $T\rightarrow\infty$, and tends to $\boldsymbol{\lambda}%
_{0}^{\prime}\mathbf{\Sigma}_{f}^{-1}\boldsymbol{\lambda}_{0}\geq\left(
\boldsymbol{\lambda}_{0}^{\prime}\boldsymbol{\lambda}_{0}\right)
\lambda_{\max}\left(  \mathbf{\Sigma}_{f}^{-1}\right)  =\left(
\boldsymbol{\lambda}_{0}^{\prime}\boldsymbol{\lambda}_{0}\right)
\lambda_{\min}\left(  \mathbf{\Sigma}_{f}\right)  >0$, which is strictly
non-zero unless $\boldsymbol{\lambda}_{0}=\mathbf{0}$. Shanken type bias
correction addresses the mean of the asymptotic distribution of
$\boldsymbol{\tilde{\phi}}_{nT}$, but not its covariance.

The $O_{p}\left(  T^{-1/2}\right)  $ term in (\ref{phiDis}) arises from the
sampling errors involved in the estimation of the factor loadings and
$\bar{\sigma}_{n}^{2}$, and tends to zero at the regular $\sqrt{T}$ rate. But
$n$ has to be sufficiently large to eliminate the effects of pricing errors on
identification of $\boldsymbol{\phi}_{0}$, as dictated by condition
$\sqrt{\frac{T}{n}}n^{\alpha_{\eta}}\rightarrow0,$ as $n$ and $T\rightarrow
\infty$.\footnote{The condition $\sqrt{\frac{T}{n}}n^{\alpha_{\eta}%
}\rightarrow0$ can be weakened somewhat to $\sqrt{\frac{T}{n}}n^{\alpha_{\eta
}/2}\rightarrow0$ if we also assume that $\beta_{ik}-\bar{\beta}_{k}$ are
independently distributed over $i$, but will still require $n$ to be larger
than $T$.} The requirement that $T$ need not be too large relative to $n$ for
estimation of $\boldsymbol{\phi}_{0}$ is consistent with separating the
estimation of $\boldsymbol{\phi}_{0}$ from that of $\boldsymbol{\mu}_{0}$,
allowing the use a relatively small $T$ and a large $n$ to estimate
$\boldsymbol{\phi}_{0}$ and a relatively large $T$ when estimating
$\boldsymbol{\mu}_{0}.$

\subsection{What if one or more of the risk factors are semi-strong?}

We now turn to an intermediate case where one or more of the observed factors
are semi-strong, in the sense that their factor strength, $\alpha_{k}$ lies
between $1/2$ and $1$. The case of weak risk factors is already covered in the
proceeding analysis, and such factors can be included in the error term,
$u_{it}$, with little consequence for the estimation of risk premia of the
remaining factors that are strong or semi-strong. Weak factors do not have any
explanatory power and can be dropped from the analysis.

When one or more of the observed factors is semi-strong $\mathbf{\Sigma
}_{\beta\beta}$ is no longer positive definite and Theorem \ref{Tfi} does not
apply, but it is possible to adapt the proofs to establish the limiting
properties of $\tilde{\phi}_{k,nT}\boldsymbol{\ }$(the $k^{th}$ element of
$\boldsymbol{\tilde{\phi}}_{nT}$)\ for different values of \thinspace
$\alpha_{k}\,$.

To this end, analogously to $\boldsymbol{\tilde{\phi}}_{nT}$, we introduce the
following estimator of $\boldsymbol{\phi}_{0}$%
\begin{equation}
\boldsymbol{\tilde{\phi}}_{nT}\left(  \boldsymbol{\alpha}\right)
=\mathbf{H}_{nT}^{-1}\left(  \boldsymbol{\alpha}\right)  \left[
\mathbf{D}_{\alpha}^{-1}\mathbf{\hat{B}}_{nT}^{\prime}\mathbf{M}%
_{n}\mathbf{\hat{\alpha}}_{nT}+\frac{n}{T}\widehat{\bar{\sigma}}_{nT}%
^{2}\mathbf{D}_{\alpha}^{-1}\left(  \frac{\mathbf{F}^{\prime}\mathbf{M}%
_{T}\mathbf{F}}{T}\right)  ^{-1}\boldsymbol{\hat{\mu}}\right]  ,
\label{phialpha0}%
\end{equation}
where
\begin{equation}
\mathbf{H}_{nT}\left(  \boldsymbol{\alpha}\right)  =\mathbf{D}_{\alpha}%
^{-1}\mathbf{\hat{B}}_{nT}^{\prime}\mathbf{M}_{n}\mathbf{\hat{B}}%
_{nT}\mathbf{D}_{\alpha}^{-1}-\frac{n}{T}\widehat{\bar{\sigma}}_{nT}%
^{2}\mathbf{D}_{\alpha}^{-1}\left(  \frac{\mathbf{F}^{\prime}\mathbf{M}%
_{T}\mathbf{F}}{T}\right)  ^{-1}\mathbf{D}_{\alpha}^{-1}. \label{Halpha}%
\end{equation}
It is now easily seen that%
\begin{equation}
\mathbf{D}_{\alpha}\left(  \boldsymbol{\tilde{\phi}}_{nT}\left(
\boldsymbol{\alpha}\right)  -\boldsymbol{\phi}_{0}\right)  =\mathbf{H}%
_{nT}^{-1}\left(  \boldsymbol{\alpha}\right)  \mathbf{q}_{nT}\left(
\boldsymbol{\alpha}\right)  , \label{phialpha}%
\end{equation}
where%
\begin{equation}
\mathbf{q}_{nT}\left(  \boldsymbol{\alpha}\right)  =\mathbf{D}_{\alpha}%
^{-1}\mathbf{\hat{B}}_{nT}^{\prime}\mathbf{M}_{n}\mathbf{\hat{\alpha}}%
_{nT}+\frac{n}{T}\widehat{\bar{\sigma}}_{nT}^{2}\mathbf{D}_{\alpha}%
^{-1}\left(  \frac{\mathbf{F}^{\prime}\mathbf{M}_{T}\mathbf{F}}{T}\right)
^{-1}\boldsymbol{\hat{\mu}}_{T}-\mathbf{H}_{nT}(\boldsymbol{\alpha}%
)\mathbf{D}_{\alpha}\boldsymbol{\phi}_{0}. \label{qnT}%
\end{equation}
$\mathbf{D}_{\alpha}$ is defined by (\ref{Dn}), and $\boldsymbol{\alpha
}=\mathbf{(}\alpha_{1},\alpha_{2},...,\alpha_{K})^{\prime}$. It is easily
established that numerically $\boldsymbol{\tilde{\phi}}_{nT}\left(
\boldsymbol{\alpha}\right)  $ is identical to $\boldsymbol{\tilde{\phi}}_{nT}%
$, and its introduction is primarily for the purpose of establishing the
limiting properties of $\tilde{\phi}_{k,nT}-\phi_{0,k}$ that do depend on
$\alpha_{k}$. Note that
\[
\mathbf{H}_{nT}\left(  \boldsymbol{\alpha}\right)  =n\mathbf{D}_{\alpha}%
^{-1}\mathbf{H}_{nT}\mathbf{D}_{\alpha}^{-1},\text{ and }\mathbf{q}%
_{nT}\left(  \boldsymbol{\alpha}\right)  =n\mathbf{D}_{\alpha}^{-1}%
\mathbf{s}_{nT}%
\]
where $\mathbf{s}_{nT}$ and $\mathbf{H}_{nT}$ are already defined by
(\ref{snT0}) and (\ref{A-HnT}). Using these in (\ref{phialpha}) we have%
\[
\mathbf{D}_{\alpha}\left(  \boldsymbol{\tilde{\phi}}_{nT}\left(
\boldsymbol{\alpha}\right)  -\boldsymbol{\phi}_{0}\right)  =\left(
n\mathbf{D}_{\alpha}^{-1}\mathbf{H}_{nT}\mathbf{D}_{\alpha}^{-1}\right)
^{-1}n\mathbf{D}_{\alpha}^{-1}\mathbf{s}_{nT}=\mathbf{D}_{\alpha}%
\mathbf{H}_{nT}^{-1}\mathbf{s}_{nT},
\]
and it follows that $\boldsymbol{\tilde{\phi}}_{nT}\left(  \boldsymbol{\alpha
}\right)  -\boldsymbol{\phi}_{0}=\mathbf{H}_{nT}^{-1}\mathbf{s}_{nT}%
=\boldsymbol{\tilde{\phi}}_{nT}\left(  \boldsymbol{\tau}_{K}\right)
=\boldsymbol{\tilde{\phi}}_{nT}$. See (\ref{A-phitilda}).

The convergence results for $\boldsymbol{\tilde{\phi}}_{nT}\left(
\boldsymbol{\alpha}\right)  $ are set out in the following theorem.

\begin{theorem}
\label{Tsemi}Consider, $\boldsymbol{\tilde{\phi}}_{nT}\left(
\boldsymbol{\alpha}\right)  $, the bias-corrected estimators of
$\boldsymbol{\phi}_{0}$ given by (\ref{phialpha0}), and suppose Assumptions
\ref{factors}, \ref{loadings}, \ref{Errors}, \ref{Latent factor} and
\ref{PriceError} hold, the strength of observed factors, $\boldsymbol{f}%
_{t}=(f_{1t},f_{2t},...,f_{Kt})^{\prime},$ is given by $\boldsymbol{\alpha
}\mathbf{=(}\alpha_{1},\alpha_{2},...,\alpha_{K})^{\prime}$, and the strength
of the missing factor, $g_{t}$, defined by (\ref{normg}) is $\alpha_{\gamma}$.
Let $\alpha_{\min}=\min_{k}(\alpha_{k})$ and suppose that $\alpha_{\gamma
}<1/2$. Then%
\[
\mathbf{H}_{nT}\left(  \boldsymbol{\alpha}\right)  =\mathbf{D}_{\alpha}%
^{-1}\mathbf{B}_{n}^{\prime}\mathbf{M}_{n}\mathbf{B}_{n}\mathbf{D}_{\alpha
}^{-1}+O_{p}\left(  T^{-1}n^{-\alpha_{\min}+1/2}\right)  ,
\]
where $\mathbf{H}_{nT}\left(  \boldsymbol{\alpha}\right)  $ is given by
(\ref{Halpha}), and by part (b) of Assumption \ref{loadings}, $\mathbf{H}%
_{nT}\left(  \boldsymbol{\alpha}\right)  \rightarrow_{p}\mathbf{\Sigma}%
_{\beta\beta}(\boldsymbol{\alpha}\mathbf{)}>0$, for any fixed $T>T_{0}$ such
that $\lambda_{\max}\left(  \frac{\mathbf{F}^{\prime}\mathbf{M}_{T}\mathbf{F}%
}{T}\right)  ^{-1}<C$ and $\alpha_{\min}>1/2>\alpha_{\gamma}$. Also
\begin{align}
\tilde{\phi}_{k,nT}\left(  \boldsymbol{\alpha}\right)  -\phi_{0,k}  &
=O_{p}\left(  n^{-(\alpha_{k}+\alpha_{\min})/2+1/2}T^{-1/2}\right)
+O_{p}\left(  n^{\frac{-\left(  \alpha_{k}+\alpha_{\min}\right)  +\left(
\alpha_{\eta+\alpha_{\gamma}}\right)  }{2}}T^{-1/2}\right) \label{Psemi}\\
&  +O_{p}\left(  n^{-\left(  \alpha_{k}+\alpha_{\min}\right)  /2+\alpha_{\eta
}}\right)  +O_{p}\left(  n^{-(\alpha_{k}+\alpha_{\min})/2+1/2}T^{-1}\right)
.\nonumber
\end{align}

\end{theorem}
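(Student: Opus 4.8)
The plan is to reduce everything to the exact identity (\ref{phialpha}), $\mathbf{D}_{\alpha}\left(\boldsymbol{\tilde{\phi}}_{nT}(\boldsymbol{\alpha})-\boldsymbol{\phi}_{0}\right)=\mathbf{H}_{nT}^{-1}(\boldsymbol{\alpha})\,\mathbf{q}_{nT}(\boldsymbol{\alpha})$, and then read off the $k$th coordinate as $\tilde{\phi}_{k,nT}(\boldsymbol{\alpha})-\phi_{0,k}=n^{-\alpha_{k}/2}\left[\mathbf{H}_{nT}^{-1}(\boldsymbol{\alpha})\mathbf{q}_{nT}(\boldsymbol{\alpha})\right]_{k}$. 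Throughout I would substitute $\mathbf{\hat{B}}_{nT}=\mathbf{B}_{n}+\mathbf{U}_{nT}\mathbf{G}_{T}$, with $\mathbf{G}_{T}=\mathbf{M}_{T}\mathbf{F}(\mathbf{F}^{\prime}\mathbf{M}_{T}\mathbf{F})^{-1}$, and $\hat{\boldsymbol{\alpha}}_{nT}=\boldsymbol{\alpha}_{n}+\mathbf{\bar{u}}_{n\circ}-\mathbf{U}_{nT}\mathbf{G}_{T}\boldsymbol{\hat{\mu}}_{T}$ with $\boldsymbol{\alpha}_{n}=c\boldsymbol{\tau}_{n}+\mathbf{B}_{n}\boldsymbol{\phi}_{0}+\boldsymbol{\eta}_{n}$. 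The essential device is that the sandwiching by $\mathbf{D}_{\alpha}^{-1}$ in (\ref{Halpha}) and (\ref{qnT}) replaces every ``divide by $n$'' used in the strong-factor proof of Theorem \ref{Tfi} by the correct factor-strength normalisation, so the whole argument is a re-bookkeeping of that proof with $\mathbf{D}_{\alpha}$ in place of $n^{1/2}\mathbf{I}_{K}$.

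For $\mathbf{H}_{nT}(\boldsymbol{\alpha})$, I would expand $\mathbf{D}_{\alpha}^{-1}\mathbf{\hat{B}}_{nT}^{\prime}\mathbf{M}_{n}\mathbf{\hat{B}}_{nT}\mathbf{D}_{\alpha}^{-1}$ into the leading block $\mathbf{D}_{\alpha}^{-1}\mathbf{B}_{n}^{\prime}\mathbf{M}_{n}\mathbf{B}_{n}\mathbf{D}_{\alpha}^{-1}$, the two cross blocks involving $\mathbf{B}_{n}^{\prime}\mathbf{M}_{n}\mathbf{U}_{nT}\mathbf{G}_{T}$, and the quadratic block $\mathbf{G}_{T}^{\prime}\mathbf{U}_{nT}^{\prime}\mathbf{M}_{n}\mathbf{U}_{nT}\mathbf{G}_{T}$. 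Using (\ref{Covv}) and Assumption \ref{Latent factor}, the expectation of the quadratic block equals $\frac{n}{T}\bar{\sigma}_{n}^{2}(\mathbf{F}^{\prime}\mathbf{M}_{T}\mathbf{F}/T)^{-1}$ up to lower-order terms, and is therefore annihilated by the bias-correction in (\ref{Halpha}) once Theorem \ref{Thzig} supplies $\widehat{\bar{\sigma}}_{nT}^{2}-\bar{\sigma}_{n}^{2}=O_{p}(T^{-1/2}n^{-1/2})$; its centred fluctuation is $O_{p}(n^{1/2}T^{-1})$ entrywise, and the two copies of $\mathbf{D}_{\alpha}^{-1}$ contribute at most $n^{-\alpha_{\min}}$, giving the stated remainder $O_{p}(T^{-1}n^{-\alpha_{\min}+1/2})$. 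The cross blocks are of smaller order, so by Remark \ref{remarkstrength}, $\mathbf{H}_{nT}(\boldsymbol{\alpha})\rightarrow_{p}\mathbf{\Sigma}_{\beta\beta}(\boldsymbol{\alpha})>0$ whenever $\alpha_{\min}>1/2>\alpha_{\gamma}$, and in particular $\mathbf{H}_{nT}^{-1}(\boldsymbol{\alpha})=O_{p}(1)$.

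Substituting the same expansions into (\ref{qnT}) makes the deterministic signal $\mathbf{D}_{\alpha}^{-1}\mathbf{B}_{n}^{\prime}\mathbf{M}_{n}\mathbf{B}_{n}\boldsymbol{\phi}_{0}$ cancel against $\mathbf{H}_{nT}(\boldsymbol{\alpha})\mathbf{D}_{\alpha}\boldsymbol{\phi}_{0}$, and the mean of the noise quadratic cancel against the bias-correction, leaving four stochastic pieces: (A) the signal--noise term $\mathbf{D}_{\alpha}^{-1}\mathbf{B}_{n}^{\prime}\mathbf{M}_{n}\mathbf{U}_{nT}\mathbf{b}_{T}$ with $\mathbf{b}_{T}=T^{-1}\boldsymbol{\tau}_{T}-\mathbf{G}_{T}\boldsymbol{\hat{\mu}}_{T}$ (together with the matching cross term $\mathbf{D}_{\alpha}^{-1}\mathbf{B}_{n}^{\prime}\mathbf{M}_{n}\mathbf{U}_{nT}\mathbf{G}_{T}\boldsymbol{\phi}_{0}$); (B) the pricing-error term $\mathbf{D}_{\alpha}^{-1}\mathbf{B}_{n}^{\prime}\mathbf{M}_{n}\boldsymbol{\eta}_{n}$; (C) the error/pricing-error cross term $\mathbf{D}_{\alpha}^{-1}\mathbf{G}_{T}^{\prime}\mathbf{U}_{nT}^{\prime}\mathbf{M}_{n}\boldsymbol{\eta}_{n}$; and (D) the centred quadratic $\mathbf{D}_{\alpha}^{-1}\mathbf{G}_{T}^{\prime}\mathbf{U}_{nT}^{\prime}\mathbf{M}_{n}\mathbf{U}_{nT}\mathbf{G}_{T}\boldsymbol{\phi}_{0}$ (the residual $\widehat{\bar{\sigma}}_{nT}^{2}$-error being negligible by Theorem \ref{Thzig}). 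I would bound each piece by pairing the $n^{-\alpha_{l}/2}$ from $\mathbf{D}_{\alpha}^{-1}$ with: (\ref{Varv}) and $\Vert\mathbf{b}_{T}\Vert^{2}=O(T^{-1})$ for (A), giving $O_{p}(T^{-1/2})$; the crude bound $|\boldsymbol{\beta}_{\circ l}^{\prime}\mathbf{M}_{n}\boldsymbol{\eta}_{n}|\leq\sup_{i}|\beta_{il}-\bar{\beta}_{l}|\sum_{i}|\eta_{i}|=O(n^{\alpha_{\eta}})$ from (\ref{NormetaD}) for (B); the $\gamma_{i}$--$\eta_{i}$ interaction controlled by (\ref{normg}) and (\ref{NormetaD}) for (C), giving $O_{p}(T^{-1/2}n^{(\alpha_{\eta}+\alpha_{\gamma})/2})$; and the quadratic-fluctuation order $O_{p}(n^{1/2}T^{-1})$ for (D). Because the inner factors in (B)--(D) do not carry $\alpha_{l}$, their worst entry is scaled by the operator norm $\Vert\mathbf{D}_{\alpha}^{-1}\Vert=n^{-\alpha_{\min}/2}$; propagating this through the bounded $\mathbf{H}_{nT}^{-1}(\boldsymbol{\alpha})$ and then multiplying by the $n^{-\alpha_{k}/2}$ that selects coordinate $k$ produces precisely the common prefactor $n^{-(\alpha_{k}+\alpha_{\min})/2}$ and hence the four rates displayed in (\ref{Psemi}).

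The main obstacle, I expect, is twofold. First, establishing the quadratic-fluctuation rate $O_{p}(n^{1/2}T^{-1})$ (entrywise) rigorously: this needs the fourth-moment and summable-covariance conditions on the $v_{it}$ in (\ref{Covv}) to control $\mathrm{Var}(\mathbf{U}_{nT}^{\prime}\mathbf{M}_{n}\mathbf{U}_{nT})$ in the presence of the missing factor $\gamma_{i}g_{t}$, and it is this term that forces the restriction $\alpha_{\min}>1/2$ (so that $T^{-1}n^{-\alpha_{\min}+1/2}\rightarrow0$) in order that $\mathbf{H}_{nT}(\boldsymbol{\alpha})$ remain asymptotically positive definite. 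Second, the bookkeeping that separates the coordinate-specific scaling $n^{-\alpha_{k}/2}$ from the weakest-factor scaling $n^{-\alpha_{\min}/2}$: one must verify that the $\alpha_{l}$-cancellation occurs in the signal piece (A) while the $\mathbf{G}_{T}$- and $\boldsymbol{\eta}_{n}$-driven pieces are genuinely governed by $\alpha_{\min}$, which is what couples $\alpha_{k}$ and $\alpha_{\min}$ in the final convergence rate $\sqrt{T}\,n^{(\alpha_{k}+\alpha_{\min}-1)/2}$.
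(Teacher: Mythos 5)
Your proposal follows essentially the same route as the paper's own proof: the exact identity (\ref{phialpha}), the observation that $\mathbf{H}_{nT}(\boldsymbol{\alpha})=n\mathbf{D}_{\alpha}^{-1}\mathbf{H}_{nT}\mathbf{D}_{\alpha}^{-1}$ and $\mathbf{q}_{nT}(\boldsymbol{\alpha})=n\mathbf{D}_{\alpha}^{-1}\mathbf{s}_{nT}$ so that the strong-factor decomposition from Theorem \ref{Tfi} can be recycled with $\mathbf{D}_{\alpha}$ in place of $n^{1/2}\mathbf{I}_{K}$, Lemma \ref{L2}-type bounds on the same four stochastic pieces, and the coordinate read-off in which $n^{-\alpha_{k}/2}$ from $\mathbf{D}_{\alpha}^{-1}$ and $n^{-\alpha_{\min}/2}$ from the worst entry of $\mathbf{q}_{nT}(\boldsymbol{\alpha})$ combine into the prefactor $n^{-(\alpha_{k}+\alpha_{\min})/2}$ in (\ref{Psemi}). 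The only substantive difference is that you bound $\mathbf{q}_{nT}(\boldsymbol{\alpha})$ entrywise, whereas the paper uses the cruder vector-norm bound $\left\Vert \mathbf{q}_{nT}(\boldsymbol{\alpha})\right\Vert \leq n^{1-\alpha_{\min}/2}\left\Vert \mathbf{s}_{nT}\right\Vert$; your entrywise treatment of the signal term (A) in fact yields the sharper $O_{p}(T^{-1/2})$ per coordinate, which still implies the first term of (\ref{Psemi}), so nothing is lost.

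One caveat. Your claim that in the expansion of $\mathbf{H}_{nT}(\boldsymbol{\alpha})$ "the cross blocks are of smaller order" than $O_{p}\left(T^{-1}n^{-\alpha_{\min}+1/2}\right)$ is not supported by the bounds you invoke: from (\ref{BUG}), $n^{-1}\mathbf{B}_{n}^{\prime}\mathbf{M}_{n}\mathbf{U}_{nT}\mathbf{G}_{T}=O_{p}\left(T^{-1/2}n^{-1/2}\right)$, so sandwiching by $\mathbf{D}_{\alpha}^{-1}$ gives the cross blocks the order $O_{p}\left(T^{-1/2}n^{-\alpha_{\min}+1/2}\right)$, which is \emph{larger} in the $T$ direction than the quadratic-fluctuation term. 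You should be aware that the paper's own proof contains exactly the same inconsistency: it derives the cross term as $O_{p}\left(T^{-1/2}n^{-\alpha_{\min}+1/2}\right)$ and then states the remainder in (\ref{Halpha})'s expansion as $O_{p}\left(T^{-1}n^{-\alpha_{\min}+1/2}\right)$, so the displayed remainder rate for $\mathbf{H}_{nT}(\boldsymbol{\alpha})$ appears to be a typo (it should carry $T^{-1/2}$). This shared slip is immaterial for everything downstream: for any fixed $T>T_{0}$ both rates vanish as $n\rightarrow\infty$ once $\alpha_{\min}>1/2>\alpha_{\gamma}$, so $\mathbf{H}_{nT}(\boldsymbol{\alpha})\rightarrow_{p}\mathbf{\Sigma}_{\beta\beta}(\boldsymbol{\alpha})>0$ holds, and the rates in (\ref{Psemi}) use only $\mathbf{H}_{nT}^{-1}(\boldsymbol{\alpha})=O_{p}(1)$, not the precise order of its remainder.
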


See sub-section \ref{ProofThsemi} of the Appendix for a proof.

The result in (\ref{Psemi}) establishes the consistency of $\tilde{\phi
}_{k,nT}\left(  \boldsymbol{\alpha}\right)  =\tilde{\phi}_{k,nT}$ even if
$f_{kt}$ is semi-strong so long as $n\rightarrow\infty$, and $\alpha_{\min
}>1/2,$ $\alpha_{k}+\alpha_{\min}>\alpha_{\eta}+\alpha_{\gamma}$, and
$\alpha_{k}+\alpha_{\min}>2\alpha_{\eta}$. Clearly, these results reduce to
the case of strong factors where $\alpha_{\min}=\alpha_{k}=1$. Turning to the
asymptotic distribution of $\tilde{\phi}_{k,nT}\left(  \boldsymbol{\alpha
}\right)  $, again only convergence rates are affected, and instead of the
regular rate of $\sqrt{nT}$, we have $\ \sqrt{T}n^{(\alpha_{k}+\alpha_{\min
}-1)/2}$, and using (\ref{Psemi}) we have%
\begin{align}
&  \sqrt{T}n^{(\alpha_{k}+\alpha_{\min}-1)/2}\left(  \tilde{\phi}%
_{k,nT}\left(  \boldsymbol{\alpha}\right)  -\phi_{0,k}\right) \label{phikrate}%
\\
&  =O_{p}(1)+O_{p}\left(  n^{-1/2+\frac{\left(  \alpha_{\eta}+\alpha_{\gamma
}\right)  }{2}}\right)  +O_{p}\left(  \sqrt{T}n^{-1/2+\alpha_{\eta}}\right)
+O_{p}\left(  T^{-1/2}\right)  .\nonumber
\end{align}
The conditions needed for eliminating the effects of the pricing errors are
the same as before and are given by $\alpha_{\eta}+\alpha_{\gamma}<1$ and
$\sqrt{T}n^{-1/2+\alpha_{\eta}}\rightarrow0$. The asymptotic distribution is
unaffected except for the slower rate of convergence alluded to above. It is
also of interest to note that adding semi-strong factors can adversely affect
the convergence rate of the strong factor with $\alpha_{k}=1$. As an example
suppose the asset pricing model contains two factors, one strong, $\alpha
_{1}=1$ and one semi strong with $\alpha_{2}<1$.$\,\ $Then the convergence
rate of $\tilde{\phi}_{1,nT}\left(  \boldsymbol{\alpha}\right)  -\phi_{0,1}$
is given by$\sqrt{T}$ $n^{(1+\alpha_{2}-1)/2}$ which is slower than the rate
we would have obtained for $\tilde{\phi}_{1,nT}\left(  \boldsymbol{\alpha
}\right)  -\phi_{0,1}$ if both factors were strong ($\alpha_{min}=\alpha
_{k}=1)$, namely the regular rate of $\sqrt{nT}$.

Furthermore, when conditions $\alpha_{\eta}+\alpha_{\gamma}<1$ and $\sqrt
{T}n^{-1/2+\alpha_{\eta}}\rightarrow0$ are met we have
\begin{equation}
\tilde{\phi}_{k,nT}\left(  \boldsymbol{\alpha}\right)  -\phi_{0,k}%
=O_{p}\left(  T^{-1/2}n^{-(\alpha_{k}+\alpha_{\min}-1)/2}\right)  ,
\label{phiorder}%
\end{equation}
and $\phi_{0,k}$ is consistently estimated if $T^{-1/2}n^{-(\alpha_{k}%
+\alpha_{\min}-1)/2}\rightarrow0$. Also, using (\ref{phitilda2}), an estimator
of the risk premia, $\lambda_{k}$, is given by $\tilde{\lambda}_{k,nT}\left(
\boldsymbol{\alpha}\right)  =\tilde{\phi}_{k,nT}\left(  \boldsymbol{\alpha
}\right)  +\hat{\mu}_{k,T}$, where $\hat{\mu}_{k,T}=T^{-1}\sum_{t=1}^{T}%
f_{kt}$. Hence%
\[
\tilde{\lambda}_{k,nT}\left(  \boldsymbol{\alpha}\right)  -\lambda
_{0,k}=\left[  \tilde{\phi}_{k,nT}\left(  \boldsymbol{\alpha}\right)
-\phi_{0,k}\right]  +\left(  \hat{\mu}_{k,T}-\mu_{0,k}\right)  .
\]
Under Assumption \ref{factors} $\hat{\mu}_{k,T}-\mu_{0,k}=O_{p}\left(
T^{-1/2}\right)  $, and using (\ref{phiorder}) it then follows that
\begin{equation}
\tilde{\lambda}_{k,nT}\left(  \boldsymbol{\alpha}\right)  -\lambda_{0,k}%
=O_{p}\left(  T^{-1/2}n^{-(\alpha_{k}+\alpha_{\min}-1)/2}\right)
+O_{p}\left(  T^{-1/2}\right)  , \label{ratelambda}%
\end{equation}
and $\tilde{\lambda}_{k,nT}\left(  \boldsymbol{\alpha}\right)  $ is a
consistent estimator of $\lambda_{0,k}$ if $T\rightarrow\infty$ as well as
$T^{-1/2}n^{-(\alpha_{k}+\alpha_{\min}-1)/2}\rightarrow0$. More specifically,
suppose $T=\ominus\left(  n^{d}\right)  $ for some $d>0$, where $\ominus
\left(  \cdot\right)  $ denotes $T$ and $n^{d}$ are of the same order of
magnitude. Then for any $d>0$, the condition for consistency of $\tilde
{\lambda}_{k,nT}\left(  \boldsymbol{\alpha}\right)  $ is given by $\alpha
_{k}+\alpha_{min}+d>1$ and $d>0$. In the case where all risk factors have the
same strength, $\alpha$, the consistency condition reduces to $\alpha>\left(
1-d\right)  /2$, which is weaker than the one derived by \cite{giglio2023test}%
, namely $n/(\left\Vert \beta\right\Vert ^{2}T)\rightarrow0$, where, in terms
of our notation, $\left\Vert \beta\right\Vert ^{2}=\ominus\left(  n^{\alpha
}\right)  $. This latter condition will be met if $\alpha>1-d$. In practice
where $T$ is small relative to $n$, the accuracy of $\tilde{\lambda}%
_{k,nT}\left(  \boldsymbol{\alpha}\right)  $ as an estimator $\lambda_{0,k}$
does depend on $\alpha$, and our weaker condition on $\alpha>\left(
1-d\right)  /2$ is advantageous.

\subsection{Consistent estimation of the variance of $\boldsymbol{\tilde{\phi
}}_{nT}$\label{Vphihat}}

To carry out inference on $\boldsymbol{\phi}_{0}$, or any of its elements
individually, we require a consistent estimator of $Var\left(
\boldsymbol{\tilde{\phi}}_{nT}\right)  $. Using (\ref{Dphi}) and (\ref{Vegzi})
we first note that $\mathbf{\Sigma}_{\beta\beta}$ is consistently estimated by
$\mathbf{H}_{nT\ }$ given by (\ref{HnT1}). Therefore, it is sufficient to find
a suitable estimator of $\mathbf{V}_{u}=(\sigma_{ij})$ such that
$\mathbf{V}_{\xi}$ given by (\ref{Vegzi}) is consistently estimated. Under
suitable sparsity restrictions $\mathbf{V}_{u}$ can be consistently estimated
using the various thresholding procedures advanced in the statistical
literature by
\citet{bickel2008covariance, bickel2008regularized}%
,
\citet{cai2011adaptive}%
, and
\citet[BPS]{BPS2019multiple}%
.
\citet{fan2011high, fan2013large}
also show that the adaptive threshold technique of Cai and Liu applies equally
to the residuals from an approximate factor model. Here we consider the
threshold estimator proposed by BPS which does not require cross-validation
and is shown to have desirable small sample properties. It is given by
$\mathbf{\tilde{V}}_{u}=\left(  \tilde{\sigma}_{ij}\right)  $
\begin{align}
\tilde{\sigma}_{ii}  &  =\hat{\sigma}_{ii}\nonumber\\
\tilde{\sigma}_{ij}  &  =\hat{\sigma}_{ij}\mathbf{1}\left[  \left\vert
\hat{\rho}_{ij}\right\vert >T^{-1/2}c_{\alpha}(n,\delta)\right]  ,\text{
}i=1,2,\ldots,n-1,\text{ }j=i+1,\ldots,n, \label{threshold}%
\end{align}
where
\begin{equation}
\hat{\sigma}_{ij}=\frac{1}{T}\sum_{t=1}^{T}\hat{u}_{it}\hat{u}_{jt},\text{
}\hat{\rho}_{ij}=\frac{\hat{\sigma}_{ij}}{\sqrt{\hat{\sigma}_{ii}\hat{\sigma
}_{jj}}},\text{ \ }\hat{u}_{it}=r_{it}-\hat{\alpha}_{i,T}-\boldsymbol{\hat
{\beta}}_{i,T}^{\prime}\mathbf{f}_{t}, \label{sij}%
\end{equation}
and $c_{p}(n,d)=\Phi^{-1}\left(  1-\frac{p}{2n^{d}}\right)  ,$ is a normal
critical value function, $p$ is the the nominal size of testing of
$\sigma_{ij}=0$, ($i\neq j$) and $d$ is chosen to take account of the
$n(n-1)/2$ multiple tests being carried out. Monte Carlo experiments carried
out by BPS suggest setting $d=2$. The variance estimator given by
(\ref{threshold}) does not require a knowledge of the factor strength and
applies to risk factors of differing degrees.

Under Assumptions \ref{factors}, \ref{Latent factor}, and \ref{Errors},
$\left\Vert \mathbf{V}_{u}\right\Vert =O\left(  n^{\alpha_{\gamma}}\right)  $,
and using results in
\citet{fan2011high, fan2013large}
we have%
\begin{equation}
\left\Vert \mathbf{\tilde{V}}_{u}-\mathbf{V}_{u}\right\Vert =O_{p}\left(
n^{\alpha_{\gamma}}\sqrt{\frac{\ln(n)}{T}}\right)  . \label{NormVu}%
\end{equation}
Consider the following estimator of $\mathbf{V}_{\xi}$
\[
\mathbf{\hat{V}}_{\xi,nT}=\left(  1+\hat{s}_{nT}\right)  \left(
n^{-1}\mathbf{\hat{B}}_{nT}^{\prime}\mathbf{M}_{n}\mathbf{\tilde{V}}%
_{u}\mathbf{M}_{n}\mathbf{\hat{B}}_{nT}\right)  \text{.}%
\]
where $\hat{s}_{nT}=\boldsymbol{\tilde{\lambda}}_{nT}^{^{\prime}}\left(
T^{-1}\mathbf{F}^{\prime}\mathbf{M}_{T}\mathbf{F}\right)  ^{-1}%
\boldsymbol{\tilde{\lambda}}_{nT}$. Under Assumption \ref{factors},
$T^{-1}\mathbf{F}^{\prime}\mathbf{M}_{T}\mathbf{F\rightarrow}_{p}%
\mathbf{\Sigma}_{f}$ $\ $and using the results above we have
$\boldsymbol{\tilde{\lambda}}_{nT}=\boldsymbol{\tilde{\phi}}_{nT}%
+\boldsymbol{\hat{\mu}}_{T}\rightarrow_{p}\boldsymbol{\phi}_{0}%
+\boldsymbol{\mu}_{0}=\boldsymbol{\lambda}_{0}$. Hence, $\hat{s}%
_{nT}\rightarrow_{p}\boldsymbol{\lambda}_{0}^{\prime}\mathbf{\Sigma}_{f}%
^{-1}\boldsymbol{\lambda}_{0}$ as $n,T\rightarrow\infty$, jointly, and it is
sufficient to show that
\begin{equation}
n^{-1}\mathbf{\hat{B}}_{nT}^{\prime}\mathbf{M}_{n}\mathbf{\tilde{V}}%
_{u}\mathbf{M}_{n}\mathbf{\hat{B}}_{nT}-n^{-1}\mathbf{B}_{n}^{\prime
}\mathbf{M}_{n}\mathbf{V}_{u}\mathbf{M}_{n}\mathbf{B}_{n\ }\rightarrow
_{p}\mathbf{0}\text{.} \label{VarCon}%
\end{equation}
The following theorem provides a formal statement of the conditions under
which $\mathbf{\hat{V}}_{\xi,nT}$ is a consistent estimator of $\mathbf{V}%
_{\xi}$.

\begin{theorem}
\label{Var}Suppose Assumptions \ref{factors}, \ref{loadings}, \ref{Errors},
\ref{Latent factor} and \ref{PriceError} hold, and all the observed factors
are strong, ($\alpha_{k}=1$, for $k=1,2,...,K$), and the strength of the
missing factor, $g_{t}$, defined by (\ref{normg}), $\alpha_{\gamma}$ $<1/2$.
Then%
\begin{equation}
\left\Vert \mathbf{\hat{V}}_{\xi,nT}-\mathbf{V}_{\xi}\right\Vert =O_{p}\left(
n^{\alpha_{\gamma}}\sqrt{\frac{\ln(n)}{T}}\right)  , \label{NormVaregzi}%
\end{equation}
where
\begin{equation}
\mathbf{\hat{V}}_{\xi,nT}=\left(  1+\hat{s}_{nT}\right)  \left(
n^{-1}\mathbf{\hat{B}}_{nT}^{\prime}\mathbf{M}_{n}\mathbf{\tilde{V}}%
_{u}\mathbf{M}_{n}\mathbf{\hat{B}}_{nT}\right)  , \label{Varegzihat}%
\end{equation}
$\mathbf{\tilde{V}}_{u}=\left(  \tilde{\sigma}_{ij}\right)  $, $\tilde{\sigma
}_{ij}$ is the threshold estimator of $\sigma_{ij}$ given by (\ref{threshold}%
), and
\begin{equation}
\mathbf{V}_{\xi}=\left(  1+\boldsymbol{\lambda}_{0}^{\prime}\mathbf{\Sigma
}_{f}^{-1}\boldsymbol{\lambda}_{0}\right)  \text{ }p\lim_{n\rightarrow\infty
}\left(  n^{-1}\mathbf{B}_{n}^{\prime}\mathbf{M}_{n}\mathbf{V}_{u}%
\mathbf{M}_{n}\mathbf{B}_{n\ }\right)  , \label{Varegzi}%
\end{equation}

\end{theorem}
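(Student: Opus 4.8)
The plan is to build on the two ingredients already in hand: the operator-norm rate $\|\mathbf{\tilde{V}}_u-\mathbf{V}_u\|=O_p(n^{\alpha_\gamma}\sqrt{\ln(n)/T})$ in (\ref{NormVu}), borrowed from the thresholding theory of Fan et al., and the scalar convergence $\hat{s}_{nT}\rightarrow_p\boldsymbol{\lambda}_0^\prime\boldsymbol{\Sigma}_f^{-1}\boldsymbol{\lambda}_0$ recorded just before the theorem. Since $1+\hat{s}_{nT}$ is bounded in probability and the finite-$n$ matrix $n^{-1}\mathbf{B}_n^\prime\mathbf{M}_n\mathbf{V}_u\mathbf{M}_n\mathbf{B}_n$ is $O(1)$, the triangle inequality reduces the whole statement to controlling the matrix factor, i.e. to establishing (\ref{VarCon}) at the rate $O_p(n^{\alpha_\gamma}\sqrt{\ln(n)/T})$; the scalar gap $\hat{s}_{nT}-\boldsymbol{\lambda}_0^\prime\boldsymbol{\Sigma}_f^{-1}\boldsymbol{\lambda}_0$ and the $p\lim$ gap are then of lower order and are absorbed.

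First I would split the matrix difference in (\ref{VarCon}) as $\mathrm{(I)}+\mathrm{(II)}$, where $\mathrm{(I)}=n^{-1}\mathbf{\hat{B}}_{nT}^\prime\mathbf{M}_n(\mathbf{\tilde{V}}_u-\mathbf{V}_u)\mathbf{M}_n\mathbf{\hat{B}}_{nT}$ carries the covariance-estimation error and $\mathrm{(II)}=n^{-1}(\mathbf{\hat{B}}_{nT}^\prime\mathbf{M}_n\mathbf{V}_u\mathbf{M}_n\mathbf{\hat{B}}_{nT}-\mathbf{B}_n^\prime\mathbf{M}_n\mathbf{V}_u\mathbf{M}_n\mathbf{B}_n)$ carries the loading-estimation error. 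For $\mathrm{(I)}$, submultiplicativity gives $\|\mathrm{(I)}\|\le(n^{-1}\|\mathbf{M}_n\mathbf{\hat{B}}_{nT}\|^2)\,\|\mathbf{\tilde{V}}_u-\mathbf{V}_u\|$; writing $\mathbf{\hat{B}}_{nT}=\mathbf{B}_n+\mathbf{E}_{nT}$ with $\mathbf{E}_{nT}=\mathbf{U}_{nT}\mathbf{G}_T$ and using the strong-factor normalization one shows $n^{-1}\mathbf{\hat{B}}_{nT}^\prime\mathbf{M}_n\mathbf{\hat{B}}_{nT}\rightarrow_p\boldsymbol{\Sigma}_{\beta\beta}$, so the leading bracket is $O_p(1)$ and $\|\mathrm{(I)}\|=O_p(n^{\alpha_\gamma}\sqrt{\ln(n)/T})$ by (\ref{NormVu}). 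This is the term that sets the advertised rate.

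The real work is in $\mathrm{(II)}$, which after the same substitution splits into two cross products and one quadratic form in $\mathbf{E}_{nT}$, each sandwiching the true $\mathbf{V}_u$. The delicate feature is that $\mathbf{V}_u$ is not bounded in operator norm, since the missing factor forces $\lambda_{\max}(\mathbf{V}_u)=O(n^{\alpha_\gamma})$, so I need the estimation-error factors to be small enough to compensate. The key estimates are $\|\mathbf{G}_T\|=O_p(T^{-1/2})$ together with the quadratic-form bounds $\|\mathbf{U}_{nT}^\prime\mathbf{M}_n\mathbf{U}_{nT}\|=O_p(n)$ and $\|\mathbf{U}_{nT}^\prime\mathbf{M}_n\mathbf{V}_u\mathbf{M}_n\mathbf{U}_{nT}\|=O_p(n)$. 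These follow by writing $\mathbf{u}_{\circ t}=g_t\boldsymbol{\gamma}_n+\mathbf{v}_{\circ t}$ and using serial uncorrelatedness, so that the expected diagonal equals $\mathrm{tr}(\mathbf{M}_n\mathbf{V}_u)=O(n)$ and $\mathrm{tr}((\mathbf{M}_n\mathbf{V}_u)^2)=O(n)$; the latter is $O(n)$ rather than $O(n^{1+\alpha_\gamma})$ precisely because $\alpha_\gamma<1/2$ makes the rank-one contribution $(\boldsymbol{\gamma}_n^\prime\mathbf{M}_n\boldsymbol{\gamma}_n)^2=O(n^{2\alpha_\gamma})$ lower order, with the weak-dependence moment conditions of Assumption \ref{Errors} controlling the fluctuations. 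Combining, the cross products are $O_p(n^{\alpha_\gamma}T^{-1/2})$ and the quadratic form is $O_p(T^{-1})$, both $o_p(n^{\alpha_\gamma}\sqrt{\ln(n)/T})$ as $T\rightarrow\infty$, and hence dominated by $\mathrm{(I)}$.

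Reassembling gives $\|\mathrm{(I)}+\mathrm{(II)}\|=O_p(n^{\alpha_\gamma}\sqrt{\ln(n)/T})$, and multiplying by the bounded factor $1+\hat{s}_{nT}$ yields (\ref{NormVaregzi}). The step I expect to be the main obstacle is the pair of quadratic-form bounds on $\mathbf{U}_{nT}^\prime\mathbf{M}_n(\cdot)\mathbf{U}_{nT}$: a naive operator-norm split of the form $\|\mathbf{U}_{nT}\|^2\lambda_{\max}(\mathbf{V}_u)$ would cost an extra factor and overshoot the target rate, so one must instead exploit the trace identities that hold because $u_{it}$ is serially uncorrelated (Assumptions \ref{Latent factor} and \ref{Errors}) and the subcritical latent-factor strength $\alpha_\gamma<1/2$, which together keep every trace at $O(n)$ rather than at $O(n^{1+\alpha_\gamma})$ or $O(Tn^{\alpha_\gamma})$.
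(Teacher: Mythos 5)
Your proposal is correct and follows essentially the same route as the paper's proof: the same reduction to (\ref{VarCon}) after disposing of the scalar factor $1+\hat{s}_{nT}$, the same substitution $\mathbf{\hat{B}}_{nT}=\mathbf{B}_{n}+\mathbf{U}_{nT}\mathbf{G}_{T}$ combined with the split $\mathbf{\tilde{V}}_{u}=(\mathbf{\tilde{V}}_{u}-\mathbf{V}_{u})+\mathbf{V}_{u}$, and the same key inputs, namely (\ref{NormVu}), $\Vert\mathbf{V}_{u}\Vert=O(n^{\alpha_{\gamma}})$ from Lemma \ref{L1}, and the Lemma \ref{L2} bounds; your terms (I) and (II) are just a regrouping of the paper's seven terms $\mathbf{A}_{1,nT},\ldots,\mathbf{A}_{7,nT}$. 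The one place you diverge is the quadratic term $n^{-1}\mathbf{G}_{T}^{\prime}\mathbf{U}_{nT}^{\prime}\mathbf{M}_{n}\mathbf{V}_{u}\mathbf{M}_{n}\mathbf{U}_{nT}\mathbf{G}_{T}$, and two remarks are in order. First, your worry that a "naive operator-norm split" would overshoot is unfounded: the paper does exactly that, bounding this term by $\lambda_{\max}\left(n^{-1}\mathbf{G}_{T}^{\prime}\mathbf{U}_{nT}^{\prime}\mathbf{M}_{n}\mathbf{U}_{nT}\mathbf{G}_{T}\right)\Vert\mathbf{V}_{u}\Vert=O_{p}(T^{-1})\,O(n^{\alpha_{\gamma}})$ using (\ref{GUUG}), and $T^{-1}n^{\alpha_{\gamma}}$ is already $o\left(n^{\alpha_{\gamma}}\sqrt{\ln(n)/T}\right)$, just as the cross terms at $O_{p}(T^{-1/2}n^{\alpha_{\gamma}})$ are within the target rate; so your extra work here buys a sharper intermediate bound ($O_{p}(T^{-1})$ instead of $O_{p}(T^{-1}n^{\alpha_{\gamma}})$) but is not needed. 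Second, your intermediate claims $\Vert\mathbf{U}_{nT}^{\prime}\mathbf{M}_{n}\mathbf{U}_{nT}\Vert=O_{p}(n)$ and $\Vert\mathbf{U}_{nT}^{\prime}\mathbf{M}_{n}\mathbf{V}_{u}\mathbf{M}_{n}\mathbf{U}_{nT}\Vert=O_{p}(n)$ are full operator-norm statements about $T\times T$ matrices that do not follow from diagonal expectations alone (the latent-factor component contributes up to $O_{p}(n^{\alpha_{\gamma}}T)$ and the idiosyncratic part up to $O_{p}(\max(n,T))$, so the claims can fail when $T$ grows fast relative to $n$). What you actually need, and what your trace-plus-serial-uncorrelatedness argument delivers, is only the sandwiched version: conditionally on $\mathbf{F}$, $E\left[\mathbf{G}_{T}^{\prime}\mathbf{U}_{nT}^{\prime}\mathbf{M}_{n}\mathbf{V}_{u}\mathbf{M}_{n}\mathbf{U}_{nT}\mathbf{G}_{T}\mid\mathbf{F}\right]=\left(\mathbf{G}_{T}^{\prime}\mathbf{G}_{T}\right)\mathrm{tr}\left[\left(\mathbf{M}_{n}\mathbf{V}_{u}\right)^{2}\right]=O_{p}(T^{-1})\,O(n)$, using $n^{-1}\mathrm{tr}(\mathbf{V}_{u}^{2})=O(1)$ from (\ref{NormVu2}); state it this way (this is the analogue, with $\mathbf{V}_{u}$ inserted, of what the paper proves in (\ref{GUUG})) and your argument closes cleanly.
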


For a proof see sub-section \ref{ProofTvar} in the Appendix.

This theorem shows that consistent estimation of $Var\left(
\boldsymbol{\tilde{\phi}}_{nT}\right)  $ can be achieved by using a suitable
threshold estimator of $\mathbf{V}_{u}$, so long as the strength of the
missing factor, $\alpha_{\gamma}$, is sufficiently weak in the sense that
$n^{\alpha_{\gamma}}\sqrt{\ln(n)/T}\rightarrow0$ as $n,T\rightarrow\infty$.

\section{Small sample properties of the estimators and tests for $\boldsymbol{\phi}$\label{Simulations}}

\subsection{Monte Carlo Design}

This section presents Monte Carlo simulations to investigate the small sample
properties of estimators and tests for $\boldsymbol{\phi}_{0}$. In the
empirical application of the next section the factors are selected from a
large list. But here we assume $K=3$ and mimic the 3 Fama-French factors,
namely the market return minus the risk free rate, MKT, the value factor (high
minus low book to market portfolios, HML) and the size factor (small minus big
portfolios, SMB). These are denoted by $f_{kt},$ $k=M,H,S$.\footnote{Data on
factors and the risk free rate are downloaded from Kenneth French's data
library:
https://mba.tuck.dartmouth.edu/pages/faculty/ken.french/data\_library.html}
For further details see Section \ref{EMC} of the online supplement A.

\subsubsection{Loadings and factor strengths}

To calibrate the loadings, $\beta_{ik}$, we used excess returns on a large
number securities observed over the shorter sample covering the 20 years
$2002m1$ $-2021m12$ ($T=240$). Monthly returns for NYSE and NASDAQ stocks code
10 and 11 from CRSP were downloaded from Wharton Research Data Services and
converted to excess returns over the risk free rate, taken from Kenneth
French's webpages, in percent per month. Only stocks with available data for
the full sample were included, yielding a balanced panel, and to avoid
outliers influencing the results, stocks with a kurtosis greater than 16 were
excluded. There were $1289$ stocks before exclusion on the basis of kurtosis
and $1175$ after. The summary statistics giving mean, median, standard
deviation of the estimates of $\beta_{ik}$ and their histograms are provided
in the online supplement A.

For factor strength, we considered a range of DGPs. Given the evidence that
most factors, other than the market factor, are not strong, we focus on the
case where there is one strong factor, namely the market factor with
$\alpha_{M}=1,$ plus two semi-strong factors, with the value factor, $HML$,
being quite strong with $\alpha_{H}=0.85,$ and the size factor, $SML$, being
only moderately strong with $\alpha_{S}=0.65$. These estimates are also
informed by the results provided in
\citet{bailey2021measurement}
who propose methods for estimation of factor strength. For a given factor
strength, $\alpha_{k}$, the associated loadings, $\beta_{ik}$, are generated
as $\boldsymbol{\beta}_{k}=(\beta_{1k},\beta_{2k},,...,\beta_{\alpha_{k}%
},0,0,...0)$ where $n_{\alpha_{k}}=\lfloor n^{\alpha_{k}}\rfloor$ the integer
part of $n^{\alpha_{k}}$, with non-zero and zero values of $\boldsymbol{\beta
}_{k}$ given by
\begin{align*}
\beta_{ik}  &  \sim IIDN(\mu_{\beta_{k}},\sigma_{\beta_{k}}^{2})\text{, for
}i=1,2,....,\lfloor n^{\alpha_{k}}\rfloor,\\
\beta_{ik}  &  =0\text{ for }i=\lfloor n^{\alpha_{k}}\rfloor+1,\lfloor
n^{\alpha_{k}}\rfloor+2,...,n,
\end{align*}
where $\lfloor n^{\alpha_{k}}\rfloor$ denotes the integer part of
$n^{\alpha_{k}}$. \ Since the security returns are randomly generated, it does
not matter how zero and non-zero values of $\beta_{ik}$ are distributed across
$i$. Also, the zero loadings can also be replaced by an exponentially decaying
sequence without any implications for the simulation results.\footnote{See
also footnote 5 of
\citet
*[p.942]{bailey2016exponent}.} We also set
\begin{align*}
\mu_{\beta_{M}}  &  =1\text{, }\sigma_{\beta_{M}}=0.4;\text{ }\mu_{\beta_{H}%
}=0.2\text{, }\sigma_{\beta_{H}}=0.5\\
\mu_{\beta_{S}}  &  =0.6\text{, }\sigma_{\beta_{S}}=0.5,
\end{align*}
which match the mean and standard deviation of the estimates of $\beta_{ik}$.
See above.

\subsubsection{Generation of pricing errors}

The pricing errors in (\ref{APTRoss}) can be considered as firm-specific
characteristics and are set as $\boldsymbol{\eta}_{n}=\left(  \eta_{1}%
,\eta_{2},...,\eta_{n_{\eta}},0,0,...,0\right)  ^{\prime}.$ The non-zero
loadings of $\boldsymbol{\eta}_{n}$ for $i\leq n_{\eta}=\lfloor n^{\alpha
_{\eta}}\rfloor$\ are drawn from $IIDU(0.7,0.9)$, and $\eta_{i}=0$ for
$i=n_{\eta}+1,n_{\eta}+2,....,n.$ We consider $\alpha_{\eta}=(0,0.3).$\textbf{
}When\textbf{ }$\alpha_{\eta}=0$ we have $\eta_{i}=0$ ,\textbf{\ }for all $i.$
As in the case of factor loadings the non-zero values of $\boldsymbol{\eta
}_{n}$ must be randomly allocated to different groups.

\subsubsection{Generation of return equation errors}

The return equation errors, $u_{it}$, are generated following (\ref{uit}) as a
combination of a missing factor, $g_{t}\sim IIDN\left(  0,1\right)  $ plus an
idiosyncratic error, $v_{jt}.$ The loadings $\boldsymbol{\gamma}=\left(
\gamma_{1},\gamma_{2},...,\gamma_{n_{\gamma}},0,0,...,0\right)  ^{\prime}$ of
the missing factor are set as%
\begin{align*}
\gamma_{i}  &  \sim IIDU(0.7,0.9)\text{, for }i=1,2,....,\lfloor
n^{\alpha_{\gamma}}\rfloor,\\
\gamma_{i}  &  =0\text{, for }i=\lfloor n^{\alpha_{\gamma}}\rfloor+1,\lfloor
n^{\alpha_{\gamma}}\rfloor+2,...,n,
\end{align*}
where $\alpha_{\gamma}$ is the strength of the missing factor $g_{t}$. We
consider $\alpha_{\gamma}=1/4$ and $1/2$.

For the idiosyncratic errors, $v_{it}$, we consider spatial as well as a block
diagonal specification, with the spatial specification including a diagonal
specification as the special case. Under the spatial specification the
idiosyncratic errors are generated as the first order spatial autoregressive
model $v_{it}=\rho_{\varepsilon}\sum_{j=1}^{n}w_{ij}v_{jt}+\kappa
\varepsilon_{it},$ which can be written in matrix notation as $\mathbf{v}%
_{t}=\rho_{\varepsilon}\mathbf{Wv}_{t}+\kappa\boldsymbol{\varepsilon}_{t},$
and solved for as $\mathbf{v}_{t}=\kappa\left(  \mathbf{I}_{n}-\rho
_{\varepsilon}\mathbf{W}\right)  ^{-1}\boldsymbol{\varepsilon}_{t}$. Adding
the missing factor now yields%
\begin{equation}
\mathbf{u}_{t}=\boldsymbol{\gamma}\text{ }g_{t}+\kappa\left(  \mathbf{I}%
_{n}-\rho_{\varepsilon}\mathbf{W}\right)  ^{-1}\boldsymbol{\varepsilon}_{t}.
\label{vSAR}%
\end{equation}
The spatial coefficient $\rho_{\varepsilon}$ is such that $|\rho_{\varepsilon
}|<1$, $\mathbf{W=(}w_{ij})$ with $w_{ii}=0$, and $\sum_{j=1}^{n}w_{ij}=1$.
The diagonal case is obtained by setting $\rho_{\varepsilon}=0$, with
$\rho_{\varepsilon}=0.5$ characterizing the SAR specification. The weight
matrix $\mathbf{W}=(w_{ij})$ is set to follow the familiar rook pattern where
all its elements are set to zero except for $w_{i+1,i}=w_{j-1,j}=0.5$ for
$i=1,2,...,n-2$ and $j=3,4...,n$, with $w_{1,2}=w_{n,n-1}=1$.

Under the block error covariance specification, $\mathbf{v}_{t}$ is generated
as $\mathbf{v}_{t}=\kappa\mathbf{\hat{S}}\boldsymbol{\varepsilon}_{t},$ where
$\mathbf{\hat{S}}$ is a block diagonal matrix with its $b^{th}$ block given by
$\mathbf{\hat{S}}_{b}$ for $b=1,2,...,B$, and $\boldsymbol{\varepsilon}%
_{t}=(\mathbf{\varepsilon}_{1t}^{\prime},\mathbf{\varepsilon}_{2t}^{\prime
},...,\mathbf{\varepsilon}_{Bt}^{\prime})^{\prime}$, and
$\boldsymbol{\varepsilon}_{bt}=(\varepsilon_{b,1t},\varepsilon_{b,2t}%
,...,\varepsilon_{b,n_{b},t})^{\prime}$. $\mathbf{\hat{S}}$ is set as a
Cholesky factor of the correlation matrix of $\mathbf{u}_{t}$. Denoting this
correlation matrix by $\mathbf{\hat{R}}_{u},$
\[
\mathbf{\hat{R}}_{u}=\left[  Diag(\mathbf{\hat{V}}_{Bu})\right]
^{-1/2}\mathbf{\hat{V}}_{Bu}\left[  Diag(\mathbf{\hat{V}}_{Bu})\right]
^{-1/2}=Diag(\mathbf{\hat{R}}_{bu},\text{ }b=1,2,...,B),
\]
where $\mathbf{\hat{V}}_{Bu}$ is the threshold estimator of $\mathbf{V}_{u}$
subject to the additional restriction that $\mathbf{V}_{u}$ is block diagonal.
For each block $\mathbf{\hat{R}}_{bu}$ we set the number of distinct non-zero
elements of this block equal to the integer part of $[n_{b}(n_{b}-1)/2]\times
q_{b}$ where $q_{b}$ is the proportion of non-zero distinct elements in block
$b$ of our calibrated sample and computed by the calibration over the sample
$2001m10-2021m9$. The non-zero elements are drawn randomly from $IIDU(0,0.5)$.
Similarly, adding the missing factor, we have
\begin{equation}
\mathbf{u}_{t}=\boldsymbol{\gamma}g_{t}+\kappa\mathbf{\hat{S}}%
\boldsymbol{\varepsilon}_{t}. \label{vblock}%
\end{equation}

The block diagonal structure is intended to capture possible within industry
correlations not picked up by observed or weak missing factors, with each
block representing an industry or sector. To calibrate the block structure
estimates of the pair-wise correlations between the residuals of the return
regressions using the Fama-French three factors of the $T=240$ sample ending
in $2021$ were obtained. Then all the statistically insignificant correlations
were set to zero, allowing for the multiple testing nature of the tests. For
the majority of securities (668 out of the 1168), the pair-wise return
correlations were not statistically significant. The securities with a
relatively large number of non-zero correlations were either in the banking or
energy related industries. Considering stocks by 2-digit SIC classifications,
a division into $B=14$ contiguous groups ranging in size from $33$ to $145$
stocks, seemed sensible. More detail on the process is given in Section
\ref{BlockCov} of the online supplement A.

The primitive errors, $\varepsilon_{it}$ for $i=1,2,...,n$ in (\ref{vSAR}) and
(\ref{vblock})\ are generated as $\varepsilon_{it}=\sqrt{\sigma_{ii}}%
\varpi_{it}$, where $\varpi_{it}\sim IIDN\left(  0,1\right)  $, and
$\varepsilon_{it}=\sqrt{\sigma_{ii}}\left[  \sqrt{\frac{v-2}{v}}\varpi
_{it}\right]  ,$ where $\varpi_{it}\sim IID$ $t(v)$, with $t(v)$ denotes a
standard $t$ distributed variate with $v=5$ degrees of freedom. Also
$\sigma_{ii}\thicksim IID$ \ $0.5(1+\chi_{1}^{2})$ $,\ for$\ $i=1,2,...,n_{b}$
and $b=1,2,...,B$. In this way, it is ensured that $Var(\varepsilon
_{b,it})=\sigma_{b,ii}$, and on average $E\left[  Var(\varepsilon
_{b,it})\right]  =E(\sigma_{ii})=1$, under both Gaussian and t-distributed
errors. Note that $Var\left(  \nu_{b,it}\right)  =v/(v-2)$. All the
experiments are designed to give an $R^{2}$ of about $0.3,$ similar to that
obtained in the empirical applications. For further details see sub-section
\ref{Regfit} of the online supplement A.

\subsubsection{Experiments}

In total, we consider 12 experimental designs: six designs with Gaussian
errors and six with $t(5)$ distributed errors. We considered designs with
GARCH effects, with and without pricing errors, $\eta_{i}$, and with and
without the missing factor, $g_{t}$. We also considered designs with spatial
patterns in the idiosyncratic errors, $v_{it}$. All experiments are
implemented using $R=2,000$ replications. Details of of the 12 experiments are
summarized in Table S-1 of the online supplement B (MC results).

\subsubsection{Alternative estimators of $\mathbf{V}_{u}$}

Subsection \ref{Vphihat} considered consistent estimation of the variance of
$\tilde{\phi}_{nT}$ using $\mathbf{\tilde{V}}_{u}$ a threshold estimator for
$\mathbf{V}_{u},$ given by equation (\ref{threshold})$.$ For comparison
purposes we also considered two other estimators of $\mathbf{V}_{u}.$ These
were the sample covariance matrix $\mathbf{\hat{V}}_{u}=\sum_{t=1}%
^{T}\mathbf{\hat{u}}_{t}\mathbf{\hat{u}}_{t}^{^{\prime}}/T$ and a diagonal
covariance matrix, where the off-diagonal elements of $\mathbf{\hat{V}}_{u},$
$\hat{\sigma}_{ij},$ are set to zero. Thus we have three designs for the
return error covariance matrix, $\mathbf{V}_{u},$ and three different
estimators of it. A comparison of the results for the different covariance
matrices is available on request. The diagonal estimator, as to be expected,
performed poorly when the true covariance matrix was not diagonal,
particularly for the spatial error covariance matrix and when the strength of
the missing factor was close to $1/2$. For these designs the sample and
threshold estimators of the covariance matrix generally performed similarly
and given that there is a theoretical justification for the threshold
estimator and there are structures of the error covariance matrix for which
the sample estimator is unlikely to perform well we report the results using
the threshold estimator in the simulations below.

\subsection{Monte Carlo results}

We focus on a comparison of two-step (defined by (\ref{phihat1}) and the
bias-corrected (BC) estimator (defined by (\ref{phitilda})), and report bias,
root mean square error (RMSE) and size for testing $H_{0j}\,:\phi_{0k}=0,$
$k=M,H,S$ at the five per cent nominal level, for all $n=100,500,1,000,3,000$
and $T=60,$ $120,$ $240$ combinations. The results for all 12 experiments are
summarized in Tables S-A-E1 to S-A-E12 in the online supplement B. In terms of
bias and RMSE the two-step estimator does much better than the bias-corrected
(BC) estimator when $T=60$ and $n=100$, but this gap closes quickly as $n$ is
increased. In fact for $T=60$ and $n=3,000,$ the bias and RMSE of the BC
estimator (at $0.0010$ and $0.0607)$ are much less than those of the two-step
estimator (at -$0.0080$ and $0.1489)$. This pattern continues to hold when
$T=120$ and $240$. Bias correction can cause the RMSE to "blow up" for small
samples, such as $n=100$ and $T=60$, but for $n=500$ and above the
bias-corrected estimator always has a smaller RMSE than the two-step
estimator. As discussed in the theoretical section, having a large $n$ is
important for the properties of the estimators.

But most importantly, the two-step estimator is subject to substantial size
distortions, particularly when $T$ is small relative to $n$. As predicted by
the theory, the degree of over-rejection of the tests based on FM estimator
falls with $T,$ but increases with $n$. For example, the two-step test sizes
rise from $11.1\%$ when $T=60$ and $n=100$ to $60.9\%$ when $T=60$ and
$n=3,000$. Increasing $T$ reduces the size distortion of the two-step
estimator but test sizes are still substantially above the $5\%$ nominal value
when $n$ is large. The strong tendency of the tests based on the
two-step\ estimator to over-reject could be an important contributory factor
leading to false discovery of a large number of apparently significant factors
in the literature. In contrast, sizes of the tests based on the BC estimator,
using the variance estimator given by (\ref{Varphitilt}), are all close to its
nominal value, irrespective of the factor strength or sample size
combinations. We only note some elevated test sizes in the case of the
experimental design 12, and when we consider the semi-strong factors. The
highest test size of $7.85$ per cent is obtained for the least strong factor,
$f_{st}$, when $n=3000$ and $T=60$. See Table S-A-E12 of the online supplement B.

We also experimented with raising $\alpha_{\eta}$ from $0.3$ to $0.5,$ making
the pricing errors much more pervasive and $\rho_{\varepsilon}$ from $0.5$ to
$0.85,$ introducing more spatial correlation$.$ This increased the rejection
rate in experiments 9 and 10.

\subsubsection{Empirical power functions}

Plots of the empirical power functions for testing the null hypothesis
$H_{0j}\,:\phi_{0k}=0,$ $k=M,H,S$, are also provided in the online supplement
B for the 12 experiments (Figures S-A-E1 to S-A-E12). All power functions have
the familiar bell curve shape and tend to unity as $n$ and/or $T$ are
increased, showing the test has satisfactory power, particularly for $n$
sufficiently large even when $T=60$. Again the power functions are quite
similar across the 12 different experiments and show similar patterns for
strong and semi-strong factors. However, this similarity hides the fact that
the test of $\phi_{k}=0$ for the strong factor is much more powerful than
corresponding tests for the semi-strong factors, with the test power declining
as factor strength is reduced.

\subsubsection{Differences in performance of strong and semi-strong factors}

These differences in the effects of factor strength on the power of the test
of $\phi_{k}=0$ are in line with our theoretical results, and are also
reflected in the rate at which the RMSE of the estimators of $\phi_{M}$,
$\phi_{H}$ and $\phi_{S}$ fall with $n$. For example, using results in Tables
S-B-E10-12 in the online supplement B for the bias-corrected estimator in the
case of design 12 with $T=240,$ we note that the ratio of RMSE of $n=3,000$ to
$n=100$ is $17\%$ for the strong factor $\left(  \alpha_{M}=1\right)  $,
$23\%$ for the first semi-strong factor with $\alpha_{H}=0.85$, and $36\%$ for
the second semi-strong factor with $\alpha_{S}=0.65.$ As strength falls one
needs larger cross section samples of securities to attain the same level of
precision. In the case of the two-step estimator there was the same pattern,
but the fall in the RMSE with $n$ was much slower. The ratio for $T=240$ of
RMSE of $n=3,000$ to $n=100$ is $25\%$ for $\phi_{M}$, rather than $17\%$ (for
the bias-corrected estimator); $48\%$ rather than $23\%$ for $\phi_{H}$. 

\subsubsection{Misspecification: Semi-strong versus weak factors}

So far, we have assumed that the DGP is correctly specified with two
semi-strong and no observed weak factors. Here we consider the implications of
incorrectly excluding semi-strong factors or correctly including weak factors
on the small sample properties of the bias-corrected\ estimator of $\phi_{M},$
the coefficient of the strong factor. Using the same DGP (which includes one
strong factor and two semi-strong factors), we carried out additional MC
experiments (designs 1-12) where we also estimated $\phi_{M}$ without the
semi-strong factors being included in the regressions. Comparative results,
with and without the semi-strong factors, are summarized in Tables S-C-E1-3 to
S-C-E10-12 of the online supplement B. We find that incorrectly excluding
semi-strong factors can be quite costly, both in terms of bias and RMSE as
well as size distortions. In terms of RMSE it was almost always better to
estimate the model with the semi strong factors included. The exception was
for the case of $T=60,$ $n=100,$ where including the semi-strong factors
caused the RMSE to blow up. Size distortions resulting from the exclusion of
the semi-strong factors tended to be more pronounced for large $n$ and $T$
samples. These conclusions were not sensitive to the choice of the
experimental design. For these experiments the lesson seems to be that it is
important to have $n$ large and include relevant semi-strong factors provided
that they are sufficiently strong.

When the DGP includes one strong factor ($\alpha_{M}=1$) and two weak factors
($\alpha_{H}=$ $\alpha_{S}=0.5)$, in terms of the bias and RMSE for $\phi_{M}$
it is unambiguously better to exclude the weak factors from the regression,
even though they are in the DGP. Weak factors are best treated as missing and
absorbed in the error term.

\subsection{Main conclusions from MC experiments}

The conclusions from the Monte Carlo simulations are that the bias corrected
estimator of $\boldsymbol{\phi}_{0}\boldsymbol{,}$ generally works well.
Although, it can generate a large RMSE for small $n$ and $T$, this can be
solved by increasing $n.$ This performance is robust to non-Gaussian errors,
GARCH effects, missing weak factors, pricing errors and weak cross-sectional
dependence. Test sizes are generally correct and the power good. The rate at
which RMSEs decline with $n$ depends on the strength of the underlying
factors. Semi-strong factors need much larger values of $n$ for precise
estimation. Tests of the joint significance of $\phi_{M}=\phi_{H}=\phi_{S}=0,$
not reported here, also performed well, as might be expected given the good
power performance of the separate induced tests which are reported. Including
weak factors could be harmful, but there are potential advantages of adding
semi-strong factors, although the issue of how best to select such factors is
an open question to which we now turn.

\section{Factor selection when the number of securities and the number of factors are both large\label{FacSel}}

Our theoretical derivations and Monte Carlo simulations both assume that the
number of risk factors included in the return regressions is fixed and the
factors are known. For the empirical application we face the additional
challenge of selecting a small number of relevant risk factors from a possibly
large number of potential factors, $m$. This problem has been the subject of a
number of recent studies.
\citet{harvey2016and}
propose a multiple testing approach aimed at controlling the false discovery
rate in the process of factor selection, and in a more recent paper
\citet{harvey2020false}
suggest using a double-bootstrap method to calibrate the t-statistic used for
controlling the desired level of the false discovery rate.
\citet{giglio2021asset}
suggest applying the double-selection Lasso procedure by
\citet{belloni2014inference}
to second pass regressions. None of these methods distinguish between strong,
semi-strong or weak factors in their selection process, whereas the theory and
simulations presented above indicate the importance of factor strength for
estimation and inference.

Here we propose an alternative selection procedure where we first estimate the
strength of all the $m$ factors under consideration, and then select factors
with strength above a given threshold, the value of which is informed by the
convergence results of Theorem \ref{Tsemi}. This theorem showed that if factor
$f_{kt}$ has strength $\alpha_{k},$ then for a given $T$ the BC estimator of
$\phi_{k}$ converges to its true value, $\phi_{0k}$, at the rate of
$n^{(\alpha_{k}+\alpha_{\min}-1)/2}$, where $\alpha_{\min}=\min_{i}(\alpha
_{i})$. As is recognized in non-parametric estimation literature, if the rate
of convergence is less than $1/3,$ the gain in precision with $n$ is so slow
that the estimator may not be that useful.\footnote{The
\citet{manski1985semiparametric}
maximum score estimator for a binary response model has $n^{1/3}$ convergence
and this is regarded as very slow and there are suggested modifications such
as
\citet{horowitz1992smoothed}
to increase the rate of convergence to $n^{2/5}$.} To achieve rate of
$n^{1/3}$ we need to set the threshold value of $\alpha_{k}$, denoted by
$\underline{\alpha}$, such that $\underline{\alpha}+\alpha_{\min}>1+2/3$. The
smallest value of such a threshold is obtained when $\alpha_{\min
}=\underline{\alpha}$ or if $\underline{\alpha}>1/2+1/3.$ Given the threshold
the main issue is how to estimate factor strength. This problem is already
addressed in \citet*[BKP]{bailey2021measurement} when $m$ is fixed. In this
setting they base their estimation on the statistical significance of $f_{kt}$
in the first stage time-series regressions of excess returns on all the
factors under consideration, whilst allowing for the $n$ multiple testing
problem which their approach entails. When $m\,$(the number of factors) is
also large the first stage regressions will also be subject to the multiple
testing problem and penalized regression techniques such as Lasso or the one
covariate at a time (OCMT) selection procedure technique proposed by
\citet{chudik2018one}
could be used. Irrespective of selection technique used at the level of
individual security returns, we end up with $n$ different subsets of the $m$
factors under consideration. Factor strengths can then be estimated similarly
to BKP\ from their selection frequencies across the $n$ securities.

To be more specific, denote the set of $m$ factors under consideration by
$\mathcal{S}$ and denote the set of selected factors for security $i$ by
$\hat{S}_{i}$ and their numbers by $\hat{m}_{i}=|\hat{S}_{i}|$. Clearly
$\hat{S}_{i}\subseteq\mathcal{S}$, and $\hat{m}_{i}\leq m$, for $i=1,2,...,n$.
Then compute the proportion of stocks in which the $k^{th}$ factor is
selected, $\hat{\pi}_{k},$ for $k\in\{1,2,...,m\}$ based on $\hat{S}_{i},$
$i=1,2,...,n$ by $\hat{\pi}_{k}=\frac{1}{n}\sum_{i=1}^{n}\mathcal{I}\{k\in
\hat{S}_{i}\}$. Then the strength of the $k^{th}$ factor is measured by%
\begin{equation}
\hat{\alpha}_{k}=%
\begin{cases}
1+\frac{\ln\hat{\pi}_{k}}{\ln n},\text{ if }\hat{\pi}_{k}>0,\\
0,\quad\quad\quad\text{ if }\hat{\pi}_{k}=0.
\end{cases}
\label{alphaj}%
\end{equation}
The transformation from $\hat{\pi}_{k}$ to $\hat{\alpha}_{k}$ is explained and
justified in BKP, where it is shown that considering strength aids
interpretation because it is not dependent on $n$. It is beyond the scope of
the present paper to provide theoretical justification for the proposed factor
selection procedure, but using extensive Monte Carlo experiments
\citet{yoo2022factor}
has shown that the proposed method has desirable small sample properties
whether Lasso or OCMT is used for factor selection at the level of individual
security returns.

\section{An empirical application using a large number of U.S. securities and a large number of risk factors\label{Empirical}}

This section uses the results above in the explanation of monthly returns for
a large number, $n,$ of U.S. securities, by a large active set of $m$
potential risk factors. We first briefly describe the sources and
characteristics of the data for the stock returns and factors, which cover
different sub-samples over the period $1996m1-2022m12.$ We then consider the
selection of a subset of $K$ factors from the active set. Finally we test
$\boldsymbol{\phi}_{0}=0$, and construct and evaluate phi-portfolios and
corresponding mean-variance\textbf{ } portfolios for alternative models.

Monthly returns (inclusive of dividends) for NYSE and NASDAQ stocks from CRSP
with codes $10$ and $11$ were downloaded from Wharton Research Data Services.
They were converted to excess returns by subtracting the risk free rate, which
was taken from Kenneth French's data base. To obtain balanced panels of stock
returns and factors, only variables for which there was data for the full
sample under consideration were used. Excess returns are measured in percent
per month. To avoid outliers influencing the results, stocks with a kurtosis
greater than 16 were excluded. To examine factor selection, four samples were
considered, each had $20$ years of data, $T=240$, ending in $2015m12$,
$2017m12$, $2019m12$, $2021m12$. Filtering out the stocks with kurtosis larger
than $16$ removed about $100$ of the roughly $1200$ stocks. The number of
stocks ($n$) considered for each of the four $T=240$ samples are given in
panel A of Table \ref{FactorSelPL240}. Further detail is given in Section
\ref{DEMP} of the online supplement A.\footnote{Summary statistics for the
excess returns across the different samples are given in Table
\ref{tab:ret_sumtat} of the online supplement A.} For analysis of the
phi-portfolios, the factors selected in the sample ending in $2015m12$ were
used to construct portfolios up to $2022m12.$

\subsection{Factor selection}

For factor selection we used a sample of $T=240$ observations\footnote{Some
results for $T=120$ are included in the online supplement.}. The set of
factors considered combine the $5$ Fama-French factors with the $207$ factors
from the
\citet{chen2022open}%
, Open Source Asset Pricing webpages, both downloaded July 6 2022. Only
factors with data for the full sample were considered so the return
regressions constitute a balanced panel. The number of factors in each of the
four $20$ year samples ending in the years $2015$, $2017$, $2019$, $2021$ is
also given in panel A of Table \ref{FactorSelPL240}, and range between $187$
to $199$. Summary statistics for the factors in the active set are given in
\ref{SumFac} of the online supplement A.

To implement the factor selection procedure set out in Section \ref{FacSel},
Lasso is used to carry out selection in the return regressions for individual
securities and we refer to the factor selection procedure as pooled Lasso
(PL). As is well known, Lasso does not work well with too many highly
correlated regressors, therefore, factors with an absolute correlation with
the market factor greater than $0.70$ were dropped. This still left between
$177$ and $190$ risk factors in the active set $\mathcal{S}$, depending on the
sample period (see panel A of Table \ref{FactorSelPL240}).

Specifically, Lasso was applied $n$ times to the regressions of excess
returns, $r_{it}=R_{it}-r_{t}^{f}$, for $i=1,2,...,n$, on the $177$ to $190$
factors in the active set, $\mathcal{S}$, to select the sub-set $\hat{S}_{i}$
for each $i$ over the four 20-year samples, separately. Following the
literature, the tuning parameters in the Lasso algorithm were set by ten-fold
cross-validation.\footnote{The post-Lasso and one covariate multiple testing
(OCMT) approach of
\citet
*[CKP]{chudik2018one}. were also investigated, but Lasso seemed to work
reasonably well. The details of the Lasso procedure used are given in Section
2.2 of the online supplement of CKP paper.} Interestingly, the market factor
was selected by Lasso for almost all the securities, thus confirming the
pervasive nature of the market factor. No other factor came close to being
selected for all the securities. Lasso tended to choose a lot of non-market
factors and every factor got chosen in at least one return regression. The
mean number of non-market factors chosen by Lasso fell from $11.6$ in the
$2015$ sample to $9.9$ in the $2021$ sample. The median was lower, falling
from $10$ to $8$. There was a long right tail because Lasso tended to choose a
very large number of non-market factors for some securities, ranging from a
maximum of $48$ in the 2021 sample to $54$ in the $2017$ sample.

Apart from the market factor, there are no systematic patterns for the rest of
selected factors in the return equations for the individual securities.
Following the theory set out in Section \ref{FacSel}, the $K$ factors used to
estimate $\boldsymbol{\phi}$ are chosen on the basis of their factor
strength.\footnote{The idea of using factor strength could also be viewed as a
kind of averaging of the factors selected in individual return regressions.} A
minimum threshold of $0.7$ was used. This is below the threshold value of
$\underline{\alpha}=1/2+1/3$ required to achieve the convergence rate of
$n^{1/3}$, and is intended to capture borderline semi-strong factors. We also
consider the values of $0.75,$and $0.80$ that are quite close to
$\underline{\alpha}$. The number of selected factors for different choices of
factor strength threshold is given in panel B of Table \ref{FactorSelPL240},
for the four different samples.

Using the threshold of $0.70$, $17$ factors (inclusive of the market factor)
were selected for the sample ending in $2015$, with the number of selected
factors declining to $15,13$ and $11,$ for the samples ending $2017,$ $2019$
and $2021$, respectively. At the other extreme, setting the threshold at
$0.80$, the number of selected factors dropped to $4$ for the samples ending
in $2015$, $2017$ and $2021$, and $2$ for the sample ending in $2021$. Since
$17$ factors seemed too large and $2$ factors too small, the threshold value
was set at the intermediate value of $0.75$. We considered always conditioning
on the market factor, but since Lasso almost always selected it, this was unnecessary.%

\floatstyle{plaintop}
\restylefloat{table}%
\begin{table}[H]%

\begin{center}%
\begin{tabular}
[c]{ccccc}\hline\hline
$T=240$ with \ end dates & 2021 & 2019 & 2017 & 2015\\\hline
\multicolumn{5}{c}{}\\
\multicolumn{5}{c}{Panel A: Number of stocks and factors under consideration}%
\\
Number of stocks & 1289 & 1276 & 1243 & 1181\\
Number of stocks with kurtosis $<16$ & 1175 & 1143 & 1132 & 1090\\
Number of non-market factors & 187 & 198 & 199 & 197\\
Number of non-market factors with $r<0.70$ & 177 & 189 & 190 & 189\\
\multicolumn{5}{c}{}\\
\multicolumn{5}{c}{Panel B: Number of selected factors by strength
threshold}\\
Number with strength $>0.80$ & 2 & 4 & 4 & 4\\
Number with strength $>0.75$ & 4 & 6 & 7 & 7\\
Number with strength $>0.70$ & 11 & 13 & 15 & 17\\\hline\hline
\end{tabular}

\end{center}

{\footnotesize \textit{Note:} Panel A shows the number of stocks and risk
factors used before and after filtering by the specified criterion. Panel B
shows the number of risk factors selected with strength greater than the
specified threshold level using Lasso to select factors at the level of
individual securities. }%
\caption
{Summary statistics for the number of stocks and number of selected factors using the factor strength threshold  of 0.75, for four twenty years ($T=240$) samples ending in 2021, 2019, 2017 and 2015}%
\label{FactorSelPL240}%
\end{table}%
%

\onehalfspacing
The list of factors selected by pooled Lasso for the four samples are given in
Table \ref{TabRiskF}. The three Fama-French factors, Market, HML and SMB, are
all selected in all four periods.\footnote{We also considered selecting the
risk factors using the generalized one covariate at a time (OCMT) method
proposed by
\citet{sharifvaghefi2022variable}%
. Using GOCMT the Fama-French three factors were again amongst the five
strongest factors selected. The use of GOCMT\ for factor selection is also
investigated by
\citet{yoo2022factor}%
, using Monte Carlo and empirical applications.} Of the Fama-French three,
only the market factor is strong, with estimated strength in excess of $0.98$
across the four periods.\footnote{These results also support the choice of 3
Fama-French factors and their strength used in our Monte Carlo simulations.}
The other factor which is selected across all the four periods is "short
selling". This is proposed by
\citet{dechow2001short}
who argue that short-sellers target firms that are priced high relative to
fundamentals. It measures the extent to which investors are shorting the
market as reflected in Compustat data. Two additional factors are selected in
periods ending in $2019$ and earlier. One is "Beta Tail Risk" proposed by
\citet{kelly2014tail}
which estimates a time-varying tail exponent from the cross section of
returns. The other is "Cash Based Operating Profitability" (CBOP) suggested
by
\citet{ball2016accruals}%
. This is operating profit less accruals, with working capital and R\&D
adjustments. For periods ending in 2017 and 2015 the "Sin Stock" indicator
proposed by
\citet{hong2009price}
is also selected. It takes the value of unity if the stock in question is
involved in producing alcohol, tobacco, and gaming. They find that such stocks
are held less by norm-constrained institutions such as pension plans.

The factor strengths are relatively stable across the periods, with many of
the estimates close to the threshold value of $0.75$. Apart from the market
factor only SMB, Short Selling and Beta Tail Risk factors have strengths in
excess of $0.85$ when averaged across the four periods. From the large number
of factors in the active set we have ended up with relatively few factors that
are reasonably strong and for which $\boldsymbol{\phi}_{0}$ can be estimated
reasonably accurately.\footnote{We do not report estimates of individual
$\phi_{k}.$ Because of correlations between the loadings, the sign, size and
significance of the coefficients are difficult to interpret and for
phi-portfolio construction, discussed below, what matters is $\boldsymbol{\phi
}^{\prime}\boldsymbol{\phi}$ which determines the return on the portfolio.}%

\singlespacing
%

\begin{table}[htbp]%

\begin{center}%
\begin{tabular}
[c]{ccccccccc}\hline\hline
End date & 2021 \  & \multicolumn{2}{c}{2019} & \multicolumn{2}{c}{2017} &
\multicolumn{3}{c}{2015}\\
Selected Factors & \multicolumn{8}{c}{Estimated strength ($\alpha$)}\\\hline
Mkt. & 0.99 &  & 0.98 &  & 0.98 &  & 0.98 & \\
SMB & 0.90 &  & 0.84 &  & 0.86 &  & 0.86 & \\
Short Selling & 0.77 &  & 0.85 &  & 0.85 &  & 0.83 & \\
HML & 0.76 &  & 0.77 &  & 0.76 &  & 0.75 & \\
BetaTailRisk &  &  & 0.87 &  & 0.86 &  & 0.86 & \\
CBOP &  &  & 0.76 &  & 0.77 &  & 0.77 & \\
Sin Stock &  &  & . &  & 0.76 &  & 0.76 & \\\hline\hline
\end{tabular}

\end{center}

{\footnotesize \textit{Note:} The risk factors listed are the market factor
(Mkt.), size (SMB), Short Selling that measures the extent of short sales in
the market, the value factor (HML), the cash-based operating profitability
factor (CBOP), Beta Tail Risk, and Sin Stock which is a binary indicator
taking the value of unity if the stock in question is involved in so called
"Sin" industries producing alcohol, tobacco, and gaming. Further details on
these risk factors are provided by
\citet{chen2022open}%
.}%
\caption{Selected factors with estimated strength in excess of the threshold 0.75 for the samples of size
T=240 ending in 2021, 2019, 2017 and 2015 }\label{TabRiskF}%
\end{table}%
\onehalfspacing

The strengths of the selected factors are also closely related to the average
measures of fit often used in the literature. Here we consider both
$Ave\bar{R}^{2}=n^{-1}\sum_{i=1}^{n}\bar{R}_{i}^{2}$, a simple average of the
fit of the individual return regressions adjusted for degrees of freedom,
$\bar{R}_{i}^{2}=1-(T-K-1)^{-1}\sum_{t=1}^{T}\hat{u}_{it}^{2}/T^{-1}\sum
_{t=1}^{T}\left(  r_{it}-\bar{r}_{i\circ}\right)  ^{2}$, and the adjusted
pooled $R^{2}$ defined by $\overline{PR}^{2}=1-\widehat{\bar{\sigma}}_{nT}%
^{2}/s_{r,nT}^{2}$, where $\widehat{\bar{\sigma}}_{nT}^{2}$ is the
bias-corrected estimator of $\bar{\sigma}_{n}^{2}$ defined by (\ref{AdjZig})
and $=$ $\left(  nT\right)  ^{-1}\sum_{i=1}^{n}\sum_{t=1}^{T}\left(
r_{it}-\bar{r}_{i\circ}\right)  ^{2}$. Both of these measures behave very
similarly, but the pooled version is less sensitive to outliers. As shown in
Appendix \ref{appPRsquared}, for sufficiently large $n$ and $T$,
$\overline{PR}^{2}$ is dominated by the contribution of the most strong
factor(s). Since the only strong factor selected is the market factor in Table
\ref{tab:RsquaredFAandPL}\ we report the $Ave\bar{R}^{2}$ and $\overline
{PR}^{2}$ in the case of return regressions which just include the market
factor and those which include all other factors with strength in excess of
$0.75$. First, we note that the $\overline{PR}^{2}$ values are generally lower
than the $Ave\bar{R}^{2}$. Second, the additional factors do add to the fit,
but their relative contributions vary considerably across sample sizes and
periods. In general, the marginal contribution of non-market factors tend to
be smaller when $T$ is larger, which is consistent with the theory for
adjusted pooled $R^{2}$ set out in Section \ref{appPRsquared} of the online
supplement A.%

\singlespacing
\begin{table}[htbp]%

\begin{center}%
\begin{tabular}
[c]{cccccccccccc}\hline\hline
\multicolumn{3}{c}{End Year} &  & \multicolumn{2}{c}{2021} &
\multicolumn{2}{c}{2019} & \multicolumn{2}{c}{2017} & \multicolumn{2}{c}{2015}%
\\\cline{5-12}%
\multicolumn{3}{c}{No. of stocks, $n$} &  & \multicolumn{2}{c}{1175} &
\multicolumn{2}{c}{1143} & \multicolumn{2}{c}{1132} & \multicolumn{2}{c}{1090}%
\\
\multicolumn{3}{c}{No. of selected factors} &  & \multicolumn{2}{c}{4} &
\multicolumn{2}{c}{6} & \multicolumn{2}{c}{7} & \multicolumn{2}{c}{7}\\\hline
&  &  &  & Mkt. & Selected & Mkt. & Selected & Mkt. & Selected & Mkt. &
Selected\\
&  & $T$ &  &  &  &  &  &  &  &  & \\
$Ave\bar{R}^{2}$ &  & 240 &  & 0.23 & 0.29 & 0.19 & 0.28 & 0.17 & 0.27 &
0.17 & 0.27\\
&  & 120 &  & 0.24 & 0.33 & 0.25 & 0.35 & 0.26 & 0.38 & 0.26 & 0.37\\
&  &  &  &  &  &  &  &  &  &  & \\
$\overline{PR}^{2}$ &  & 240 &  & 0.20 & 0.26 & 0.17 & 0.26 & 0.16 & 0.26 &
0.16 & 0.26\\
&  & 120 &  & 0.19 & 0.27 & 0.20 & 0.29 & 0.23 & 0.34 & 0.23 &
0.34\\\hline\hline
\end{tabular}

\end{center}

{\footnotesize \textit{Note:} This table shows for each of the four end years
and the two sample sizes the adjusted average and pooled }${\footnotesize R}%
^{2}${\footnotesize (}${\footnotesize Ave\bar{R}}^{2}${\footnotesize and
}$\overline{{\footnotesize PR}}^{{\footnotesize 2}}${\footnotesize ) for the
return regressions when using market factor alone or factors selected with
strength higher than }${\footnotesize 0.75}${\footnotesize . The list of
selected factors are given in Table \ref{TabRiskF}. }%
\caption{Average and pooled R squared  for the return regressions when using market factor alone or factors chosen by Pooled Lasso plus 0.75 threshold }\label{tab:RsquaredFAandPL}%
\end{table}%
\onehalfspacing

\subsection{Testing for non-zero $\boldsymbol{\phi}$}

In principle, if $\boldsymbol{\phi}\neq0$ there are potentially exploitable
excess returns. In practice, to construct an effective phi-portfolio a large
number of securities is required and rebalancing such long-short portfolios
for so many securities may not be feasible or may incur high transactions
costs. In addition, model uncertainty, estimation uncertainty, time variation
in both $\boldsymbol{\beta}_{i}$ and in conditional volatility pose additional
difficulties in implementing a strategy to exploit the potential returns
revealed by $\boldsymbol{\phi}$. We will abstract from such practical
difficulties to provide some indication of the performance of phi-portfolios
relative to alternatives which would face similar difficulties.

As our preferred asset pricing model, we consider the seven factors selected
by pooled Lasso using the sample ending in $2015m12,$ which we label as
PL7.\footnote{The selection of the PL7 model was reported in the earlier
version of the paper submitted for publication and was not informed by the
performance the phi-portfolio that we report in this version of the paper.}
Recall that the PL7 includes the 3 Fama-French factors (Mkt., SMB, and HML)
plus the four risk factors, Short Selling, CBOP, Beta Tail Risk, and Sin
Stocks. But given uncertainties that surround the problem of model selection
we also considered the two popular FF factor models, namely FF3 and FF5. The
latter augments FF3 with RMW (robust minus weak operating profitability) and
CMA (conservative minus aggressive investment portfolios). The three models
are estimated using twenty-year rolling windows covering the $84$ months from
$2015m12$ to $2022m11$, so that we can generate out of sample return forecasts
for the months $2016m1-2022m12$.\footnote{Due to entry and exit of securities
the number of securities included in our analysis varied across the rolling
sample periods. We started with $n=1,090$ securities for the first rolling
sample ending in $2015m12$, with the number of available securities with $240$
months of data falling to $953$ by $2017$, $838$ by $2019$, $767$ by $2021$,
and $736$ by $2022$.} For all $3\times84$ model-sample interactions we
computed the following Wald test statistics
\[
W_{t\left\vert T\right.  }^{2}=\boldsymbol{\tilde{\phi}}_{t\left\vert
T\right.  }^{\prime}\left[  \widehat{Var\left(  \boldsymbol{\tilde{\phi}%
}_{t\left\vert T\right.  }\right)  }\right]  ^{-1}\boldsymbol{\tilde{\phi}%
}_{t\left\vert T\right.  },
\]
for testing the null hypothesis $\boldsymbol{\phi}=0,$ where
$\boldsymbol{\tilde{\phi}}_{t\left\vert T\right.  }$ and $\widehat{Var\left(
\boldsymbol{\tilde{\phi}}_{t\left\vert T\right.  }\right)  }$ denote the
rolling versions of (\ref{phitilda}) and (\ref{Varphitilt}).\footnote{The
formulae for the rolling estimates are provided in the sub-section \ref{Roll}
of the online appendix.} For the PL7 the range of the test statistic was from
141.8 to 25.8, as compared to the 5 per cent $\chi^{2}(7)$ critical value of
14.07. Thus the hypothesis that $\boldsymbol{\phi}=0$ is strongly rejected in
all the 84 rolling sample for PL7. This is also true for the FF5 and FF3
models where the test statistic ranged from 162.7 to 25.0, and from 67.9 to
25.8, compared to the 5 per cent critical values of 11.07 and 7.8, respectively.

The rolling values of the Wald statistics for testing $\boldsymbol{\phi}=0$
for the three models are shown in Figure 1. The horizontal line (pink)
represents the critical value of the $\chi_{7}^{2}$ distribution at the 5 per
cent level. The time profiles of these test statistics clearly show that
$\boldsymbol{\phi}=0$ is rejected for all rolling samples and for all three
models. But there is also a clear downward trend showing that the evidence
against $\boldsymbol{\phi}=0$ has been getting weaker over time, irrespective
of the choice of asset pricing model.%

\begin{figure}[ptb]%
\centering
\caption{Rolling chi-squared statistics for testing $\boldsymbol{\phi}=0$
using a window of size $240$ for FF3, FF5, and PL7 models. }%
\includegraphics[
height=2.8357in,
width=4.5861in
]%
{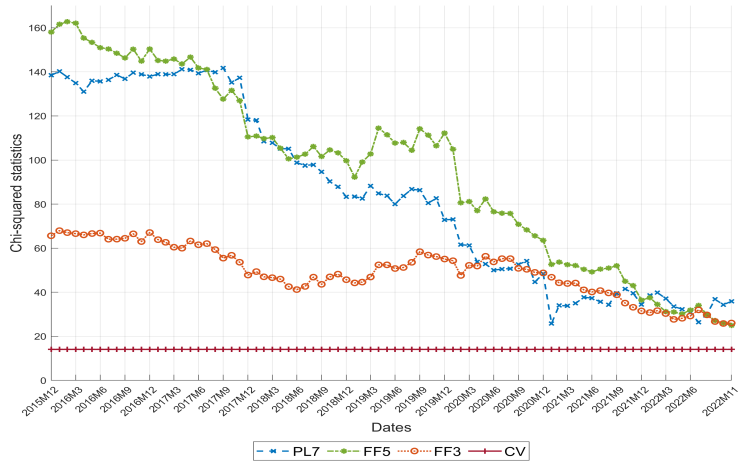}%
\end{figure}
\bigskip

\subsection{Comparative performance of phi and MV portfolios}

Having established that most likely $\boldsymbol{\phi\neq0}$ for the asset
pricing models we have considered, we now turn to the performance of
phi-portfolios based on these models. Using the recursive version of the
phi-portfolio given by (\ref{phiport}), we consider the following
phi-portfolio returns%

\[
\hat{\rho}_{t+1,\phi}=\boldsymbol{\tilde{\phi}}_{t\left\vert T\right.
}\left[  \left(  \mathbf{\hat{B}}_{t\left\vert T\right.  }^{\prime}%
\mathbf{M}_{n}\mathbf{\hat{B}}_{t\left\vert T\right.  }\right)  ^{-1}%
\mathbf{\hat{B}}_{t\left\vert T\right.  }\mathbf{M}_{n}\mathbf{r}_{\circ
,t+1}-\mathbf{f}_{t+1}\right]  ,
\]
for $t=2016m1,2016m2,...,2022m12$, using the rolling estimates $\mathbf{\hat
{B}}_{t\left\vert T\right.  }=\left(  \mathbf{\hat{\beta}}_{1t\left\vert
T\right.  },\mathbf{\hat{\beta}}_{2t\left\vert T\right.  }....,\mathbf{\hat
{\beta}}_{nt\left\vert T\right.  }\right)  ^{\prime}$ with $T=240$, for each
of the three factor models, FF3, FF5 and PL7. We compare the annualised Sharpe
ratios of phi-portfolios with the ones based on associated MV\ portfolios,
given by $\rho_{t+1,MV}=\boldsymbol{\mu}_{R}^{\prime}\mathbf{V}_{R}%
^{-1}\mathbf{r}_{\circ,t+1}$.\footnote{Given our focus on the Sharpe ratios,
we have set the scaling of the MV portfolio to unity.} Although in principle,
MV portfolios can be constructed without a reference to a particular factor
model, reliable estimation of $\boldsymbol{\mu}_{R}$ and $\mathbf{V}_{R}^{-1}$
are challenging when $n$ is relatively large. For example, the rolling sample
covariance matrix estimator of $\mathbf{V}_{R}$, given by $\mathbf{\mathring
{V}}_{R,t\left\vert T\right.  }=T^{-1}\sum_{\tau=t-T+1}^{t}\left(
\mathbf{r}_{\circ,\tau}-\mathbf{\bar{r}}_{\circ,t\left\vert T\right.
}\right)  \left(  \mathbf{r}_{\circ,\tau}-\mathbf{\bar{r}}_{\circ,t\left\vert
T\right.  }\right)  ^{\prime}$, with $\mathbf{\bar{r}}_{\circ,t\left\vert
T\right.  }=T^{-1}\sum_{\tau=t-T+1}^{t}\mathbf{r}_{\circ,\tau}$, will be
singular when $n>T$, and can be very poorly estimated if $T$ is not
sufficiently large relative to $n$. There is a vast literature on consistent
estimation of high dimensional covariance matrices like $\mathbf{V}_{R}.$
\cite{fan2011high} use observed factors while \cite{fan2013large} use
principal components to filter out the effects of strong factors, in both
cases assuming $\mathbf{V}_{u}$ is sparse, and then using a threshold method
to estimate it. Shrinkage estimators of $\mathbf{V}_{R}$ are also proposed in
the literature with a recent survey provided by \cite{LedoitWolfsurvey2022}.
However, the shrinkage estimators require $n$ and $T$ to be of the same order
of magnitude and do not work well when $n$ is much larger than $T$, as in the
present application. We follow \cite{fan2011high} and base our estimation of
$\boldsymbol{\mu}_{R}$ and $\mathbf{V}_{R}$ on the same factor model used to
construct the phi-portfolio. For a given factor model, characterized by
$c\mathbf{,B}$, and $\mathbf{F}$, we compute the MV portfolio returns as%

\[
\hat{\rho}_{t+1.MV}=\mathbf{\hat{\mu}}_{R,t\left\vert T\right.  }^{\prime
}\mathbf{\hat{V}}_{R,t\left\vert T\right.  }^{-1}\mathbf{r}_{\circ,t+1},
\]
where $\mathbf{\hat{\mu}}_{R,t\left\vert T\right.  }=\hat{c}_{t\left\vert
T\right.  }\mathbf{\tau}_{n}+\mathbf{\hat{B}}_{t\left\vert T\right.
}\boldsymbol{\tilde{\lambda}}_{t\left\vert T\right.  }$, $\boldsymbol{\tilde
{\lambda}}_{t\left\vert T\right.  }=\boldsymbol{\tilde{\phi}}_{t\left\vert
T\right.  }+\boldsymbol{\hat{\mu}}_{t\left\vert T\right.  },$
\[
\mathbf{\hat{V}}_{R,t\left\vert T\right.  }=\mathbf{\hat{B}}_{t\left\vert
T\right.  }^{\prime}\left(  \frac{\mathbf{F}_{t\left\vert T\right.  }^{\prime
}\mathbf{M}_{T}\mathbf{F}_{t\left\vert T\right.  }}{T}\right)  ^{-1}%
\mathbf{\hat{B}}_{t\left\vert T\right.  }\mathbf{+\mathbf{\tilde{V}}%
}_{u,t\left\vert T\right.  }\mathbf{,}%
\]
$\boldsymbol{\hat{\mu}}_{t\left\vert T\right.  }=\boldsymbol{\bar{f}%
}_{t\left\vert T\right.  }=T^{-1}\sum_{\tau=t-T+1}^{t}\mathbf{f}_{\tau}$, and
$\mathbf{\mathbf{\tilde{V}}}_{u,t\left\vert T\right.  }$ is the rolling
estimate of $\mathbf{V}_{u}$. The algorithms used to compute the recursive
estimates for the MV portfolio can be found in the sub-section \ref{Roll} in
the online supplement A.

Table \ref{tab:portfolios} presents annualised SR of the phi-portfolios, for
the FF3, FF5, and PL7 models, and their corresponding MV portfolios for two
samples, both beginning in 2016m1, one ending in 2019m12, pre Covid-19, and
one ending in 2022m12. \ In five of the six SR ratios reported in this table,
the phi-portfolio has a higher SR than the corresponding MV portfolio. The
exception is the SR associated to the FF5 model for the pre Covid-19 sample.
This illustrates that if $\boldsymbol{\phi}\neq\mathbf{0},$ it is possible to
construct portfolios that outperform the mean-variance\textbf{ } portfolio.
Amongst the 3 models considered, the phi-portfolio based on the PL7 model
performed best during the pre Covid-19 and the full sample, even beating the
S\&P 500. The Sharpe ratio of the phi-portfolio based on the Pl7 model was
1.95 compared to 0.94 for the S\&P 500 during the pre Covid-19 period, and
fell sharply to 0.65 for the full sample as compared to 0.58 for the S\&P 500.
The SR for the MV portfolio using the same model were 0.87 and 0.42 for the
two samples, respectively.

The sharp decline in the SRs as we add the post Covid-19 years is in line with
the strong downward trend in the Wald statistics for the test of
$\boldsymbol{\phi}=\mathbf{0}$ shown in Figure 1. As is well known SRs have
large standard errors, and in the case of our application that are around 1,
so none of the SRs are significantly different from zero, with the possible
except of the largest SR of 1.95 for phi-portfolio based on PL7 model for the
pre Covid-19 period. We also note that the reported Sharpe ratios do not allow
for transaction costs and the fact that shorting might not be feasible for all
the securities.%

\begin{table}[H]%
%

\caption
{Annualised Sharpe ratios of realized monthly returns for alternative portfolios based on 240 month rolling window estimates.}%
%

\begin{centering}%
%

\begin{tabular}
[c]{cccccc}\hline\hline
& \multicolumn{2}{c}{Mean-variance portfolios} &  &
\multicolumn{2}{c}{Phi-portfolios}\\\cline{2-3}\cline{5-6}
& 2016$m$1 - 2019$m$12 & 2016$m$1 - 2022$m$12 &  & 2016$m$1-2019$m$12 &
2016$m$1-2022$m$12\\
Models & Pre Covid-19 & Full sample &  & Pre Covid-19 & Full sample\\\hline
&  &  &  &  & \\
FF3 & 0.57 & 0.35 &  & 0.86 & 0.51\\
FF5 & 0.49 & 0.43 &  & 0.35 & 0.57\\
Lasso 7 & 0.87 & 0.42 &  & 1.95 & 0.65\\\hline\hline
\end{tabular}
%

\end{centering}%
\label{tab:portfolios}%
\vspace{3mm}%

\emph{{\footnotesize {}\textit{Note}}}{\footnotesize {}: The annualised Sharpe
ratio (SR) is computed as }$\sqrt{12}\bar{\rho}/s_{\rho}${\footnotesize ,
where }$\bar{\rho}$ {\footnotesize is the mean of monthly returns, and
}$s_{\rho}$ {\footnotesize is the standard deviation of monthly returns. For
comparison the SR of the monthly returns on S\&P 500 were 0.94 and 0.58 over
the periods 2016m1-2019m12, and 2016m1-2022m12, respectively.}%

\end{table}%

\section{Concluding remarks\label{Conclusion}%
\onehalfspacing
}

In this paper we have highlighted the importance of decomposing the risk
premia, $\boldsymbol{\lambda}$, into the the factor mean, $\boldsymbol{\mu}$,
and $\boldsymbol{\phi}$, and writing the alpha of security $i$,
$\mathit{\alpha}_{i}$, in terms of $\boldsymbol{\phi}$ and the idiosyncratic
pricing errors. We have shown that when $\boldsymbol{\phi\neq0}$, it is
possible to construct a portfolio, denoted as phi-portfolio, that dominates
the associated mean-variance portfolio when the number securities, $n$, is
sufficiently large and the risk factors are sufficiently strong. Given the
pivotal role played by $\boldsymbol{\phi}$ for estimating the risk premia, for
formation of large $n$ portfolios, and for tests of market efficiency, we have
focussed on estimation of $\boldsymbol{\phi}$, and its asymptotic distribution
under quite a general setting that allows for missing factors and
idiosyncratic pricing errors. Since factor means, $\boldsymbol{\mu}$, can be
estimated at the regular rate of $T^{-1/2}$ from time series data, it is
relatively straightforward to develop a mixed strategy for estimation of
$\boldsymbol{\lambda}$ by adding a time series estimate of $\boldsymbol{\mu}$
to the bias-corrected estimator of $\boldsymbol{\phi}$. If we use the same
time series sample, such an estimator reduces to the Shanken bias-corrected
estimator of $\boldsymbol{\lambda}$. But in practice, given the concern over
the instability of the factor loadings $\beta_{ik}$ over time, one could use
relatively long time series, say $T_{\mu}$, when estimating $\boldsymbol{\mu}%
$, and a shorter time series, say $T_{\phi}<T_{\mu}$, when estimating
$\boldsymbol{\phi}$. The distributional and small sample properties of such a
mixed estimator of risk premia is a topic for further research.

Our theoretical and Monte Carlo results further highlight the important role
played by factor strengths in estimation and inference on $\boldsymbol{\phi}$,
and hence on $\boldsymbol{\lambda}$. For a fixed $T_{\phi}$, factors with
strength below $2/3$ lead to estimates of $\boldsymbol{\phi}$ \ with
convergence rate of $n^{-1/3}$ or worse, and their use in asset pricing models
can be justified only when $n$ is very large. We have also shown that weak
factors, with strength below $1/2$ are best treated as missing and absorbed in
the error term. We have shown that estimation of $\boldsymbol{\phi}$ for
strong or semi-strong factors is robust to weak missing factors, and the
explicit inclusion of weak factors in the empirical analysis is likely to have
adverse spill over effects on the estimates of $\boldsymbol{\phi\ }$\ for
strong and semi-strong factors. In view of these results we have proposed a
factor selection procedure where only factors with strength above $1/2+1/3$
are included in the asset pricing model. Developing a formal statistical
theory for the proposed selection is another topic for future research.

The paper also provides an empirical application to a large number of U.S.
securities with risk factors selected from a large number of potential risk
factors according to their strength, and use a pooled Lasso approach to select
$7$ risk factors out of over $180$. We find strong statistical evidence
against $\boldsymbol{\phi=0}$ for the selected model as well as for the two
popular Fama-French models (FF3 and FF5). Using rolling estimates of
$\boldsymbol{\phi}$ we also construct phi portfolios with better Sharpe ratios
as compared to associated mean-variance\textbf{ } portfolios. But we also warn
that these portfolio comparisons are preliminary and need to be further
investigated by allowing for transaction costs, and the feasibility of the
long-short trading strategies that are involved.

\pagebreak%

\singlespacing
\bibliographystyle{econometrica}
\bibliography{pslapmainV2}

@Article{ANATOLYEV2022103,
  Title                    = {Factor models with many assets: strong factors, weak factors, and the two-pass procedure},
  Author                   = {Stanislav Anatolyev and Anna Mikusheva},
  Journal                  = {Journal of Econometrics},
  Year                     = {2022},
  Pages                    = {103-126},
  Volume                   = {229},

  Abstract                 = {This paper re-examines the problem of estimating risk premia in unconditional linear factor pricing models. Typically, the data used in the empirical literature are characterized by weakness of some pricing factors, strong cross-sectional dependence in the errors, and (moderately) high cross-sectional dimensionality. Using an asymptotic framework where the number of assets/portfolios grows with the time span of the data while the risk exposures of weak factors are local-to-zero, we show that the conventional two-pass estimation procedure delivers inconsistent estimates of the risk premia. We propose a new estimation procedure based on sample-splitting instrumental variables regression. The proposed estimator of risk premia is robust to weak included factors and to the presence of strong unaccounted cross-sectional error dependence. We prove the consistency of the new estimator, establish asymptotically valid inferences using Wald statistics, verify performance of the new procedure in simulations, and revisit some empirical studies.},
  Doi                      = {https://doi.org/10.1016/j.jeconom.2021.01.002},
  ISSN                     = {0304-4076},
  Keywords                 = {Factor models, Risk premium, Two-pass procedure, Strong and weak factors, Dimension asymptotics},
  Url                      = {https://www.sciencedirect.com/science/article/pii/S0304407621000130}
}

@Article{Ang2020,
  Title                    = {Using Stocks or Portfolios in Tests of Factor Models},
  Author                   = {Ang, Andrew and Liu, Jun and Schwarz, Krista},
  Journal                  = {Journal of Financial and Quantitative Analysis},
  Year                     = {2020},
  Number                   = {3},
  Pages                    = {709 750},
  Volume                   = {55},

  Doi                      = {10.1017/S0022109019000255},
  Publisher                = {Cambridge University Press}
}

@Article{BaiNg2023Weak,
  Title                    = {Approximate factor models with weaker loadings},
  Author                   = {Jushan Bai and Serena Ng},
  Journal                  = {Journal of Econometrics},
  Year                     = {2023},
  Pages                    = {1893-1916},
  Volume                   = {235},

  Owner                    = {pesaran},
  Timestamp                = {2024.04.15}
}

@Article{BAI201531,
  Title                    = {Fama-MacBeth two-pass regressions: improving risk premia estimates},
  Author                   = {Jushan Bai and Guofu Zhou},
  Journal                  = {Finance Research Letters},
  Year                     = {2015},
  Pages                    = {31-40},
  Volume                   = {15},

  Abstract                 = {In this paper, we provide the asymptotic theory for the widely used Fama and MacBeth (1973) two-pass risk premia estimates in the usual case of a large number of assets. We demonstrate analytically and using simulations that the standard OLS and GLS estimators can contain large bias when the time series sample size is small, but our proposed OLS and GLS estimators can reduce the bias significantly.},
  Doi                      = {https://doi.org/10.1016/j.frl.2015.08.001},
  ISSN                     = {1544-6123},
  Keywords                 = {Cross-section, Fama–MacBeth, Risk premia, Asset pricing models},
  Url                      = {https://www.sciencedirect.com/science/article/pii/S1544612315000689}
}

@Article{bailey2021measurement,
  Title                    = {Measurement of factor strength: theory and practice},
  Author                   = {Bailey, Natalia and Kapetanios, George and Pesaran, M Hashem},
  Journal                  = {Journal of Applied Econometrics},
  Year                     = {2021},
  Pages                    = {587--613},
  Volume                   = {36},

  Publisher                = {Wiley Online Library}
}

@Article{bailey2016exponent,
  Title                    = {Exponent of cross-sectional dependence: Estimation and inference},
  Author                   = {Bailey, Natalia and Kapetanios, George and Pesaran, M Hashem},
  Journal                  = {Journal of Applied Econometrics},
  Year                     = {2016},
  Pages                    = {929--960},
  Volume                   = {31},

  Publisher                = {Wiley Online Library}
}

@Article{BPS2019multiple,
  Title                    = {A multiple testing approach to the regularisation of large sample correlation matrices},
  Author                   = {Bailey, Natalia and Pesaran, M Hashem and Smith, L Vanessa},
  Journal                  = {Journal of Econometrics},
  Year                     = {2019},
  Pages                    = {507--534},
  Volume                   = {208},

  Publisher                = {Elsevier}
}

@Article{ball2016accruals,
  Title                    = {Accruals, cash flows, and operating profitability in the cross section of stock returns},
  Author                   = {Ball, Ray and Gerakos, Joseph and Linnainmaa, Juhani T and Nikolaev, Valeri},
  Journal                  = {Journal of Financial Economics},
  Year                     = {2016},
  Pages                    = {28--45},
  Volume                   = {121},

  Publisher                = {Elsevier}
}

@Article{beaulieu2020arbitrage,
  Title                    = {Arbitrage pricing, weak beta, strong beta: identification-robust and simultaneous inference},
  Author                   = {Beaulieu, Marie-Claude and Dufour, Jean-Marie and Khalaf, Lynda},
  Journal                  = {Cireq Working paper 15-2020},
  Year                     = {2020},

  Publisher                = {CIRANO}
}

@Article{belloni2014inference,
  Title                    = {Inference on treatment effects after selection among high-dimensional controls},
  Author                   = {Belloni, Alexandre and Chernozhukov, Victor and Hansen, Christian},
  Journal                  = {The Review of Economic Studies},
  Year                     = {2014},
  Pages                    = {608--650},
  Volume                   = {81},

  Publisher                = {Oxford University Press}
}

@Article{bickel2008covariance,
  Title                    = {Covariance regularization by thresholding},
  Author                   = {Bickel, Peter J and Levina, Elizaveta},
  Journal                  = {The Annals of Statistics},
  Year                     = {2008},
  Pages                    = {2577--2604},
  Volume                   = {36},

  Publisher                = {Institute of Mathematical Statistics}
}

@Article{bickel2008regularized,
  Title                    = {Regularized estimation of large covariance matrices},
  Author                   = {Bickel, Peter J and Levina, Elizaveta},
  Journal                  = {The Annals of Statistics},
  Year                     = {2008},
  Pages                    = {199--227},
  Volume                   = {36},

  Publisher                = {Institute of Mathematical Statistics}
}

@Article{cai2011adaptive,
  Title                    = {Adaptive thresholding for sparse covariance matrix estimation},
  Author                   = {Cai, Tony and Liu, Weidong},
  Journal                  = {Journal of the American Statistical Association},
  Year                     = {2011},
  Pages                    = {672--684},
  Volume                   = {106},

  Publisher                = {Taylor \& Francis}
}

@Article{Chamberlain1983,
  Title                    = {Funds, factors, and diversification in arbitrage pricing models},
  Author                   = {Chamberlain, Gary},
  Journal                  = {Econometrica},
  Year                     = {1983},
  Pages                    = {1305-1323},
  Volume                   = {51},

  Owner                    = {pesaran},
  Timestamp                = {2023.01.28}
}

@Article{cham1983arbi,
  Title                    = {Arbitrage, Factor Structure, and Mean-Variance Analysis on Large Asset Markets},
  Author                   = {Gary Chamberlain and Michael Rothschild},
  Journal                  = {Econometrica},
  Year                     = {1983},
  Pages                    = {1281--1304},
  Volume                   = {51},

  Abstract                 = {We examine the implications of arbitrage in a market with many assets. The absence of arbitrage opportunities implies that the linear functionals that give the mean and cost of a portfolio are continuous; hence there exist unique portfolios that represent these functionals. These portfolios span the mean-variance efficient set. We resolve the question of when a market with many assets permits so much diversification that risk-free investment opportunities are available. Ross [12, 14] showed that if there is a factor structure, then the mean returns are approximately linear functions of factor loadings. We define an approximate factor structure and show that this weaker restriction is sufficient for Ross' result. If the covariance matrix of the asset returns has only K unbounded eigenvalues, then there is an approximate factor structure and it is unique. The corresponding K eigenvectors converge and play the role of factor loadings. Hence only a principal component analysis is needed in empirical work.},
  ISSN                     = {00129682, 14680262},
  Publisher                = {[Wiley, Econometric Society]},
  Url                      = {http://www.jstor.org/stable/1912275},
  Urldate                  = {2023-01-10}
}

@Article{chen2022open,
  Title                    = {Open source cross-sectional asset pricing},
  Author                   = {Chen, Andrew Y and Zimmermann, Tom},
  Journal                  = {Critical Finance Review},
  Year                     = {2022},
  Pages                    = {207-264},
  Volume                   = {27},

  Doi                      = {http://dx.doi.org/10.2139/ssrn.3604626}
}

@Article{chib2020comparing,
  Title                    = {On comparing asset pricing models},
  Author                   = {Chib, Siddhartha and Zeng, Xiaming and Zhao, Lingxiao},
  Journal                  = {The Journal of Finance},
  Year                     = {2020},
  Pages                    = {551--577},
  Volume                   = {75},

  Publisher                = {Wiley Online Library}
}

@Article{chudik2018one,
  Title                    = {A one covariate at a time, multiple testing approach to variable selection in high-dimensional linear regression models},
  Author                   = {Chudik, Alexander and Kapetanios, George and Pesaran, M Hashem},
  Journal                  = {Econometrica},
  Year                     = {2018},
  Pages                    = {1479--1512},
  Volume                   = {86},

  Publisher                = {Wiley Online Library}
}

@Article{chudik2011weak,
  Title                    = {Weak and strong cross-section dependence and estimation of large panels},
  Author                   = {Chudik, Alexander and Pesaran, M Hashem and Tosetti, Elisa},
  Journal                  = {The Econometrics Journal},
  Year                     = {2011},
  Pages                    = {C45-C90},
  Volume                   = {14},

  Doi                      = {https://doi.org/10.1111/j.1368-423X.2010.00330.x},
  Publisher                = {Oxford University Press Oxford, UK}
}

@Article{connor2022semi,
  Title                    = {Semi-strong factors in asset returns},
  Author                   = {Connor, Gregory and Korajczyk, Robert A},
  Journal                  = {Journal of Financial Econometrics},
  Year                     = {2022},
  Number                   = {1},
  Volume                   = {22},

  Doi                      = {https://doi.org/10.1093/jjfinec/nbac028}
}

@Article{Daniel2020,
  Title                    = {{The Cross-Section of Risk and Returns}},
  Author                   = {Daniel, Kent and Mota, Lira and Rottke, Simon and Santos, Tano},
  Journal                  = {The Review of Financial Studies},
  Year                     = {2020},

  Month                    = {04},
  Number                   = {5},
  Pages                    = {1927-1979},
  Volume                   = {33},

  Abstract                 = {{A common practice in the finance literature is to create characteristic portfolios by sorting on characteristics associated with average returns. We show that the resultant portfolios are likely to capture not only the priced risk associated with the characteristic but also unpriced risk. We develop a procedure to remove this unpriced risk using covariance information estimated from past returns. We apply our methodology to the five Fama-French characteristic portfolios. The squared Sharpe ratio of the optimal combination of the resultant characteristic-efficient portfolios is 2.13, compared with 1.17 for the original characteristic portfolios.}},
  Doi                      = {10.1093/rfs/hhaa021},
  Eprint                   = {https://academic.oup.com/rfs/article-pdf/33/5/1927/33209842/hhaa021.pdf},
  ISSN                     = {0893-9454},
  Url                      = {https://doi.org/10.1093/rfs/hhaa021}
}

@Article{dechow2001short,
  Title                    = {Short-sellers, fundamental analysis, and stock returns},
  Author                   = {Patricia M Dechow and Amy P Hutton and Lisa Meulbroek and Richard G Sloan},
  Journal                  = {Journal of Financial Economics},
  Year                     = {2001},
  Pages                    = {77-106},
  Volume                   = {61},

  Abstract                 = {Firms with low ratios of fundamentals (such as earning and book values) to market values are known to have systematically lower future stock returns. We document that short-sellers position themselves in the stock of such firms, and then cover their positions as the ratios mean-revert. We also show that short-sellers refine their trading strategies to minimize transactions costs and maximize their investment returns. Our evidence is consistent with short-sellers using information in these ratios to take positions in stocks with lower expected future returns.},
  Doi                      = {https://doi.org/10.1016/S0304-405X(01)00056-3},
  ISSN                     = {0304-405X},
  Keywords                 = {Short-sellers, Fundamental analysis, Trading strategies},
  Url                      = {https://www.sciencedirect.com/science/article/pii/S0304405X01000563}
}

@Article{Preite2024,
  Title                    = {Cross-Sectional Asset Pricing with Unsystematic Risk},
  Author                   = {Dello Preite, Massimo and Uppal, Raman and Zaffaroni, Paolo and Zviadadze, Irina},
  Journal                  = {SSRN: https://ssrn.com/abstract=4135146},
  Year                     = {2024}
}

@Article{dello2022missing,
  Title                    = {What is Missing in Asset-Pricing Factor Models?},
  Author                   = {Dello Preite, Massimo and Uppal, Raman and Zaffaroni, Paolo and Zviadadze, Irina},
  Journal                  = {Available at SSRN 4135146},
  Year                     = {2022}
}

@Article{fama2018choosing,
  Title                    = {Choosing factors},
  Author                   = {Fama, Eugene F and French, Kenneth R},
  Journal                  = {Journal of Financial Economics},
  Year                     = {2018},
  Pages                    = {234--252},
  Volume                   = {128},

  Publisher                = {Elsevier}
}

@Article{fama2015five,
  Title                    = {A five-factor asset pricing model},
  Author                   = {Fama, Eugene F and French, Kenneth R},
  Journal                  = {Journal of Financial Economics},
  Year                     = {2015},
  Pages                    = {1--22},
  Volume                   = {116},

  Publisher                = {Elsevier}
}

@Article{fama1993common,
  Title                    = {Common risk factors in the returns on stocks and bonds},
  Author                   = {Fama, Eugene F and French, Kenneth R},
  Journal                  = {Journal of Financial Economics},
  Year                     = {1993},
  Pages                    = {3--56},
  Volume                   = {33},

  Publisher                = {Elsevier}
}

@Article{fama1973risk,
  Title                    = {Risk, return, and equilibrium: empirical tests},
  Author                   = {Fama, Eugene F and MacBeth, James D},
  Journal                  = {Journal of Political Economy},
  Year                     = {1973},
  Pages                    = {607--636},
  Volume                   = {81},

  Publisher                = {The University of Chicago Press}
}

@Article{fan2013large,
  Title                    = {Large covariance estimation by thresholding principal orthogonal complements},
  Author                   = {Fan, Jianqing and Liao, Yuan and Mincheva, Martina},
  Journal                  = {Journal of the Royal Statistical Society: Series B (Statistical Methodology)},
  Year                     = {2013},
  Pages                    = {603--680},
  Volume                   = {75},

  Publisher                = {Wiley Online Library}
}

@Article{fan2011high,
  Title                    = {High dimensional covariance matrix estimation in approximate factor models},
  Author                   = {Fan, Jianqing and Liao, Yuan and Mincheva, Martina},
  Journal                  = {Annals of Statistics},
  Year                     = {2011},
  Pages                    = {3320-3356},
  Volume                   = {39},

  Publisher                = {NIH Public Access}
}

@Article{feng2020taming,
  Title                    = {Taming the factor zoo: a test of new factors},
  Author                   = {Feng, Guanhao and Giglio, Stefano and Xiu, Dacheng},
  Journal                  = {The Journal of Finance},
  Year                     = {2020},
  Pages                    = {1327--1370},
  Volume                   = {75},

  Publisher                = {Wiley Online Library}
}

@Article{feng2022high,
  Title                    = {High-dimensional test for alpha in linear factor pricing models with sparse alternatives},
  Author                   = {Feng, Long and Lan, Wei and Liu, Binghui and Ma, Yanyuan},
  Journal                  = {Journal of Econometrics},
  Year                     = {2022},
  Pages                    = {152--175},
  Volume                   = {229},

  Publisher                = {Elsevier}
}

@Article{gagliardini2016time,
  Title                    = {Time-varying risk premium in large cross-sectional equity data sets},
  Author                   = {Gagliardini, Patrick and Ossola, Elisa and Scaillet, Olivier},
  Journal                  = {Econometrica},
  Year                     = {2016},
  Pages                    = {985--1046},
  Volume                   = {84},

  Publisher                = {Wiley Online Library}
}

@Article{gibbons1989test,
  Title                    = {A test of the efficiency of a given portfolio},
  Author                   = {Gibbons, Michael R and Ross, Stephen A and Shanken, Jay},
  Journal                  = {Econometrica},
  Year                     = {1989},
  Pages                    = {1121--1152},
  Volume                   = {57},

  Publisher                = {JSTOR}
}

@Article{giglio2021asset,
  Title                    = {Asset pricing with omitted factors},
  Author                   = {Giglio, Stefano and Xiu, Dacheng},
  Journal                  = {Journal of Political Economy},
  Year                     = {2021},
  Pages                    = {1947--1990},
  Volume                   = {129},

  Publisher                = {The University of Chicago Press Chicago, IL}
}

@Article{giglio2023test,
  Title                    = {Test assets and weak factors},
  Author                   = {Giglio, Stefano and Xiu, Dacheng and Zhang, Dake},
  Journal                  = {Chicago Booth Research Paper 21-04, Available at SSRN: https://ssrn.com/abstract=3768081 or http://dx.doi.org/10.2139/ssrn.3768081},
  Year                     = {2023}
}

@Article{gofman2020production,
  Title                    = {Production networks and stock returns: The role of vertical creative destruction},
  Author                   = {Gofman, Michael and Segal, Gill and Wu, Youchang},
  Journal                  = {The Review of Financial Studies},
  Year                     = {2020},
  Pages                    = {5856--5905},
  Volume                   = {33},

  Publisher                = {Oxford University Press}
}

@Article{harvey2020false,
  Title                    = {False (and missed) discoveries in financial economics},
  Author                   = {Harvey, Campbell R and Liu, Yan},
  Journal                  = {The Journal of Finance},
  Year                     = {2020},
  Pages                    = {2503--2553},
  Volume                   = {75},

  Publisher                = {Wiley Online Library}
}

@Article{harvey2019census,
  Title                    = {A census of the factor zoo},
  Author                   = {Harvey, Campbell R and Liu, Yan},
  Journal                  = {Available at SSRN 3341728},
  Year                     = {2019}
}

@Article{harvey2016and,
  Title                    = {... and the cross-section of expected returns},
  Author                   = {Harvey, Campbell R and Liu, Yan and Zhu, Heqing},
  Journal                  = {The Review of Financial Studies},
  Year                     = {2016},
  Pages                    = {5--68},
  Volume                   = {29},

  Publisher                = {Oxford University Press}
}

@Article{he2022testing,
  Title                    = {Testing Asset Pricing Models Using Pricing Error Information},
  Author                   = {He, Ai and He, Songrun and Huang, Dashan and Zhou, Guofu},
  Journal                  = {Available at SSRN 3143752},
  Year                     = {2022},

  Url                      = {https://ssrn.com/abstract=3143752 or http://dx.doi.org/10.2139/ssrn.3143752}
}

@Article{herskovic2018networks,
  Title                    = {Networks in production: asset pricing implications},
  Author                   = {Herskovic, Bernard},
  Journal                  = {The Journal of Finance},
  Year                     = {2018},
  Pages                    = {1785--1818},
  Volume                   = {73},

  Publisher                = {Wiley Online Library}
}

@Article{Herskovic2019,
  Title                    = {Hedging risk factors},
  Author                   = {Herskovic, Bernard and Moreira, Alan and Muir, Tyler},
  Journal                  = {SSRN: https://ssrn.com/abstract=3148693},
  Year                     = {2019}
}

@Article{hong2009price,
  Title                    = {The price of sin: the effects of social norms on markets},
  Author                   = {Hong, Harrison and Kacperczyk, Marcin},
  Journal                  = {Journal of Financial Economics},
  Year                     = {2009},
  Pages                    = {15--36},
  Volume                   = {93},

  Publisher                = {Elsevier}
}

@Article{horowitz1992smoothed,
  Title                    = {A smoothed maximum score estimator for the binary response model},
  Author                   = {Horowitz, Joel L},
  Journal                  = {Econometrica},
  Year                     = {1992},
  Pages                    = {505--531},
  Volume                   = {60},

  Publisher                = {JSTOR}
}

@Article{hwang2022bayesian,
  Title                    = {Bayesian selection of asset pricing factors using individual stocks},
  Author                   = {Hwang, Soosung and Rubesam, Alexandre},
  Journal                  = {Journal of Financial Econometrics},
  Year                     = {2022},
  Pages                    = {716--761},
  Volume                   = {20},

  Publisher                = {Oxford University Press}
}

@InCollection{jagannathan2010analysis,
  Title                    = {The analysis of the cross-section of security returns},
  Author                   = {Jagannathan, Ravi and Skoulakis, Georgios and Wang, Zhenyu},
  Booktitle                = {Handbook of Financial Econometrics: Applications},
  Publisher                = {Elsevier},
  Year                     = {2010},

  Address                  = {Amsterdam},
  Chapter                  = {14},
  Editor                   = {Yacine Ait-Sahalia and L.ars P. Hansen},
  Pages                    = {73--134},
  Volume                   = {2},

  Journal                  = {Handbook of financial econometrics: Applications}
}

@Article{Jagannathan2002,
  Title                    = {Empirical Evaluation of Asset-Pricing Models: A Comparison of the SDF and Beta Methods},
  Author                   = {Jagannathan, Ravi and Wang, Zhenyu},
  Journal                  = {The Journal of Finance},
  Year                     = {2002},
  Number                   = {5},
  Pages                    = {2337-2367},
  Volume                   = {57},

  Abstract                 = {ABSTRACT The stochastic discount factor (SDF) method provides a unified general framework for econometric analysis of asset-pricing models. There have been concerns that, compared to the classical beta method, the generality of the SDF method comes at the cost of efficiency in parameter estimation and power in specification tests. We establish the correct framework for comparing the two methods and show that the SDF method is as efficient as the beta method for estimating risk premiums. Also, the specification test based on the SDF method is as powerful as the one based on the beta method.},
  Doi                      = {https://doi.org/10.1111/1540-6261.00498},
  Eprint                   = {https://onlinelibrary.wiley.com/doi/pdf/10.1111/1540-6261.00498},
  Url                      = {https://onlinelibrary.wiley.com/doi/abs/10.1111/1540-6261.00498}
}

@Article{jensen1968performance,
  Title                    = {The performance of mutual funds in the period 1945-1964},
  Author                   = {Jensen, Michael C},
  Journal                  = {The Journal of Finance},
  Year                     = {1968},
  Pages                    = {389--416},
  Volume                   = {23},

  Publisher                = {JSTOR}
}

@Article{kan2013pricing,
  Title                    = {Pricing model performance and the two-pass cross-sectional regression methodology},
  Author                   = {Kan, Raymond and Robotti, Cesare and Shanken, Jay},
  Journal                  = {The Journal of Finance},
  Year                     = {2013},
  Pages                    = {2617--2649},
  Volume                   = {68},

  Publisher                = {Wiley Online Library}
}

@Article{kelly2014tail,
  Title                    = {Tail risk and asset prices},
  Author                   = {Kelly, Bryan and Jiang, Hao},
  Journal                  = {The Review of Financial Studies},
  Year                     = {2014},
  Pages                    = {2841--2871},
  Volume                   = {27},

  Publisher                = {Oxford University Press}
}

@Article{kleibergen2009tests,
  Title                    = {Tests of risk premia in linear factor models},
  Author                   = {Kleibergen, Frank},
  Journal                  = {Journal of Econometrics},
  Year                     = {2009},
  Pages                    = {149--173},
  Volume                   = {149},

  Publisher                = {Elsevier}
}

@Article{Korsaye2021,
  Title                    = {Smart Stochastic Discount Factors},
  Author                   = {Korsaye, Sofonias Alemu and Quaini, Alberto and Trojani, Fabio},
  Journal                  = {Swiss Finance Institute Research Paper No. 21-51, Available at SSRN: https://ssrn.com/abstract=3878300},
  Year                     = {2021}
}

@Article{LedoitWolfsurvey2022,
  Title                    = {The Power of (Non-)Linear Shrinking: A Review and Guide to Covariance Matrix Estimation},
  Author                   = {Olivier Ledoit and Michael Wolf},
  Journal                  = {Journal of Financial Econometrics},
  Year                     = {2022},
  Pages                    = {187-218},
  Volume                   = {20},

  Owner                    = {pesaran},
  Timestamp                = {2024.04.15}
}

@Article{lettau2020estimating,
  Title                    = {Estimating latent asset-pricing factors},
  Author                   = {Lettau, Martin and Pelger, Markus},
  Journal                  = {Journal of Econometrics},
  Year                     = {2020},
  Pages                    = {1--31},
  Volume                   = {218},

  Publisher                = {Elsevier}
}

@Article{lettau2020factors,
  Title                    = {Factors that fit the time series and cross-section of stock returns},
  Author                   = {Lettau, Martin and Pelger, Markus},
  Journal                  = {The Review of Financial Studies},
  Year                     = {2020},
  Pages                    = {2274--2325},
  Volume                   = {33},

  Publisher                = {Oxford University Press}
}

@Article{ma2020testing,
  Title                    = {Testing alphas in conditional time-varying factor models with high-dimensional assets},
  Author                   = {Ma, Shujie and Lan, Wei and Su, Liangjun and Tsai, Chih-Ling},
  Journal                  = {Journal of Business \& Economic Statistics},
  Year                     = {2020},
  Pages                    = {214--227},
  Volume                   = {38},

  Publisher                = {Taylor \& Francis}
}

@Article{manski1985semiparametric,
  Title                    = {Semiparametric analysis of discrete response: asymptotic properties of the maximum score estimator},
  Author                   = {Manski, Charles F},
  Journal                  = {Journal of Econometrics},
  Year                     = {1985},
  Pages                    = {313--333},
  Volume                   = {27},

  Publisher                = {Elsevier}
}

@Article{onatski2012asymptotics,
  Title                    = {Asymptotics of the principal components estimator of large factor models with weakly influential factors},
  Author                   = {Onatski, Alexei},
  Journal                  = {Journal of Econometrics},
  Year                     = {2012},
  Pages                    = {244--258},
  Volume                   = {168},

  Publisher                = {Elsevier}
}

@Article{Pesaran2023,
  Title                    = {Arbitrage pricing theory, the stochastic discount factor and estimation of risk premia from portfolios},
  Author                   = {M. Hashem Pesaran and Ron P. Smith},
  Journal                  = {Econometrics and Statistics},
  Year                     = {2023},
  Pages                    = {17-30},
  Volume                   = {26},

  Abstract                 = {The arbitrage pricing theory (APT) attributes differences in expected returns to exposure to systematic risk factors. Two aspects of the APT are considered. Firstly, the factors in the statistical asset pricing model are related to a theoretically consistent set of factors defined by their conditional covariation with the stochastic discount factor (SDF) used to price securities within inter-temporal asset pricing models. It is shown that risk premia arise from non-zero correlation of observed factors with SDF and the pricing errors arise from the correlation of the errors in the statistical model with SDF. Secondly, the estimates of factor risk premia using portfolios are compared to those obtained using individual securities. It is shown that in the presence of pricing errors consistent estimation of risk premia requires a large number of not fully diversified portfolios. Also, in general, it is not possible to rank estimators using individual securities and portfolios in terms of their small sample bias.},
  Doi                      = {https://doi.org/10.1016/j.ecosta.2021.11.005},
  ISSN                     = {2452-3062},
  Keywords                 = {Arbitrage Pricing Theory, Stochastic Discount Factor, portfolios, factor strength, identification of risk premia, two-pass regressions, Fama-MacBeth},
  Url                      = {https://www.sciencedirect.com/science/article/pii/S2452306221001362}
}

@Article{pesaran2023testing,
  Title                    = {Testing for alpha in linear factor pricing models with a large number of securities},
  Author                   = {Pesaran, M Hashem and Yamagata, Takashi},
  Journal                  = {Journal of Financial Econometrics},
  Year                     = {2024},
  Pages                    = {407-460},
  Volume                   = {22},

  Publisher                = {(forthcoming)}
}

@Article{pesaran2012testing,
  Title                    = {Testing CAPM with a large number of assets},
  Author                   = {Pesaran, M Hashem and Yamagata, Takashi},
  Journal                  = {IZA Discussion Papers No. 6469},
  Year                     = {2012},

  Booktitle                = {IZA Discussion Papers No. 6469}
}

@Article{Quaini2023,
  Title                    = {Tradable Factor Risk Premia and Oracle Tests of Asset Pricing Models},
  Author                   = {Quaini, Alberto and Trojani, Fabio and Yuan, Ming},
  Journal                  = {Swiss Finance Institute Research Paper No. 23-81, Available at SSRN: https://ssrn.com/abstract=4574683},
  Year                     = {2023}
}

@Article{raponi2019testing,
  Title                    = {Testing beta-pricing models using large cross-sections},
  Author                   = {Raponi, Valentina and Robotti, Cesare and Zaffaroni, Paolo},
  Journal                  = {The Review of Financial Studies},
  Year                     = {2019},
  Pages                    = {2796-2842},
  Volume                   = {33},

  Abstract                 = {{We propose a methodology for estimating and testing beta-pricing models when a large number of assets is available for investment but the number of time-series observations is fixed. We first consider the case of correctly specified models with constant risk premia, and then extend our framework to deal with time-varying risk premia, potentially misspecified models, firm characteristics, and unbalanced panels. We show that our large cross-sectional framework poses a serious challenge to common empirical findings regarding the validity of beta-pricing models. In the context of pricing models with Fama-French factors, firm characteristics are found to explain a much larger proportion of variation in estimated expected returns than betas.Authors have furnished an Internet Appendix, which is available on the Oxford University Press Web site next to the link to the final published paper online.}},
  Doi                      = {10.1093/rfs/hhz064},
  Eprint                   = {https://academic.oup.com/rfs/article-pdf/33/6/2796/33242116/hhz064.pdf},
  ISSN                     = {0893-9454},
  Url                      = {https://doi.org/10.1093/rfs/hhz064}
}

@Article{ROSS1976341,
  Title                    = {The arbitrage theory of capital asset pricing},
  Author                   = {Stephen A Ross},
  Journal                  = {Journal of Economic Theory},
  Year                     = {1976},
  Pages                    = {341-360},
  Volume                   = {13},

  Doi                      = {https://doi.org/10.1016/0022-0531(76)90046-6},
  ISSN                     = {0022-0531},
  Url                      = {https://www.sciencedirect.com/science/article/pii/0022053176900466}
}

@Article{shanken1992estimation,
  Title                    = {On the estimation of beta-pricing models},
  Author                   = {Shanken, Jay},
  Journal                  = {The Review of Financial Studies},
  Year                     = {1992},
  Pages                    = {1--33},
  Volume                   = {5},

  Publisher                = {Oxford University Press}
}

@Article{shanken2007estimating,
  Title                    = {Estimating and testing beta pricing models: alternative methods and their performance in simulations},
  Author                   = {Shanken, Jay and Zhou, Guofu},
  Journal                  = {Journal of Financial Economics},
  Year                     = {2007},
  Pages                    = {40--86},
  Volume                   = {84},

  Publisher                = {Elsevier}
}

@Article{sharifvaghefi2022variable,
  Title                    = {Variable selection in linear regressions with many highly correlated covariates},
  Author                   = {Sharifvaghefi, Mahrad},
  Journal                  = {Available at SSRN 4159979},
  Year                     = {2023}
}

@Article{uematsu2023inference,
  Title                    = {Inference in sparsity-induced weak factor models},
  Author                   = {Uematsu, Yoshimasa and Yamagata, Takashi},
  Journal                  = {Journal of Business \& Economic Statistics},
  Year                     = {2023},
  Number                   = {1},
  Pages                    = {126--139},
  Volume                   = {41},

  Publisher                = {Taylor \& Francis}
}

@Article{yoo2022factor,
  Title                    = {Factor strengths in high-dimensional setting with applications to finance and macroeconomics},
  Author                   = {Yoo, J.S.},
  Journal                  = {PhD Job Market Paper},
  Year                     = {2022},
  Note                     = {Department of Economics, University of Southern California}
}
\newpage%

\appendix
%

\numberwithin{equation}{section}%
%

\numberwithin{lemma}{section}%
%

\setcounter{page}{1}
\renewcommand{\thepage}{A\arabic{page}}%

\section{{\protect\small Mathematical Appendix}}

\subsection{{\protect\small Introduction}\label{Appintro}}%

\onehalfspacing

{\small In this mathematical appendix we first introduce the notations used in
our mathematical treatment, and state and establish a number of lemmas. We
then provide detailed proofs of Theorems \ref{TFMbias}, \ref{Thzig},
\ref{Tfi}, \ref{Tsemi} and \ref{Var} presented in the paper. }

{\small \textbf{Notations: }Generic positive finite constants are denoted by
$C$ when large, and $c$ when small. They can take different values at
different instances. $\lambda_{\max}\left(  \boldsymbol{A}\right)  $ and
$\lambda_{\min}\left(  \boldsymbol{A}\right)  $ denote the maximum and minimum
eigenvalues of $\boldsymbol{A}$. $\boldsymbol{A}>0$ denotes that
$\boldsymbol{A}$ is a positive definite matrix. $\left\Vert \boldsymbol{A}%
\right\Vert =\lambda_{\max}^{1/2}(\mathbf{A}^{\prime}\mathbf{A)}$, $\left\Vert
\boldsymbol{A}\right\Vert _{F}=\left[  Tr(\mathbf{A}^{\prime}\mathbf{A)}%
\right]  ^{1/2}$, $\left\Vert \boldsymbol{A}\right\Vert _{p}=\left(
E\left\Vert \boldsymbol{A}\right\Vert ^{p}\right)  ^{1/p}$, for $p\geq2$
denote spectral, Frobenius, and $\ell_{p}$ norms of matrix $\boldsymbol{A}$,
respectively. If $\left\{  f_{n}\right\}  _{n=1}^{\infty}$ is any real
sequence and $\left\{  g_{n}\right\}  _{n=1}^{\infty}$ is a sequences of
positive real numbers, then $f_{n}=O(g_{n})$, if there exists $C$ such that
$\left\vert f_{n}\right\vert /g_{n}\leq C$ for all $n$. $f_{n}=o(g_{n})$ if
$f_{n}/g_{n}\rightarrow0$ as $n\rightarrow\infty$. Similarly, $f_{n}%
=O_{p}(g_{n})$ if $f_{n}/g_{n}$ is stochastically bounded, and $f_{n}%
=o_{p}(g_{n}),$ if $f_{n}/g_{n}\rightarrow_{p}0$, where $\rightarrow_{p}%
$denotes convergence in probability. If $\left\{  f_{n}\right\}
_{n=1}^{\infty}$ and $\left\{  g_{n}\right\}  _{n=1}^{\infty}$ are both
positive sequences of real numbers, then $f_{n}=\ominus\left(  g_{n}\right)  $
if there exists $n_{0}\geq1$ such that $\inf_{n\geq n_{0}}\left(  f_{n}%
/g_{n}\right)  \geq C,$ and $\sup_{n\geq n_{0}}\left(  f_{n}/g_{n}\right)
\leq C$. }

\subsection{{\protect\small Statement of lemmas and their proofs.}}

\begin{lemma}
\label{Sharpe}{\small Suppose the linear factor pricing model (\ref{ritV})
holds subject to the restrictions $c=0$ and $\mathbf{\eta=0}$, $\left\Vert
\mathbf{\lambda}\right\Vert <C$, and $\lambda_{min}\left(  \mathbf{\Sigma}%
_{f}^{-1}\right)  >0$. Consider the $K\times1\,$\ beta-based portfolios,
$\mathbf{\rho}_{B,t}=$ $\mathbf{W}_{B}^{\prime}\mathbf{r}_{\circ t}$, formed
from the factor loadings, $\mathbf{B}_{n}$, where $\mathbf{W}_{B}^{\prime
}=\left(  \mathbf{B}_{n}^{\prime}\mathbf{V}_{u}^{-1}\mathbf{B}_{n}\right)
^{-1}\mathbf{B}_{n}^{\prime}\mathbf{V}_{u}^{-1}$. Then the optimal portfolio
formed using the $K$ beta-based portfolios, $\boldsymbol{\rho}_{B,t}$, has the
same Sharpe ratio as the mean-variance portfolio, given by $\rho
_{MV,t}=\boldsymbol{\mu}_{R}^{\prime}\mathbf{V}_{R}^{-1}\mathbf{r}_{\circ t}$,
where $\boldsymbol{\mu}_{R}=\mathbf{B}_{n}\boldsymbol{\lambda}$ and
$\mathbf{V}_{R}=\mathbf{B}_{n}\mathbf{\Sigma}_{f}\mathbf{B}_{n}^{\prime
}+\mathbf{V}_{u}$:
\begin{equation}
SR_{B}^{2}=SR_{MV}^{2}\leq\mathbf{\lambda}^{\prime}\mathbf{\Sigma}_{f}%
^{-1}\boldsymbol{\lambda}. \label{eqSR}%
\end{equation}
Further if the factors are sufficiently strong such that $\lambda_{min}\left(
\mathbf{B}_{n}^{\prime}\mathbf{V}_{u}^{-1}\mathbf{B}_{n}\right)
\rightarrow\infty$, then%
\begin{equation}
SR_{B}^{2}=SR_{MV}^{2}\rightarrow\mathbf{\lambda}^{\prime}\mathbf{\Sigma}%
_{f}^{-1}\boldsymbol{\lambda}\text{, as }n\rightarrow\infty\text{.}
\label{limSR}%
\end{equation}
}
\end{lemma}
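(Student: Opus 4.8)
The plan is to reduce both squared Sharpe ratios to the same quadratic form in $\boldsymbol{\lambda}$ and then read off the equality, the bound, and the limit. Setting $c=0$ and $\boldsymbol{\eta}_n=\mathbf{0}$ in (\ref{ritV}) gives $\mathbf{r}_{\circ t}=\mathbf{B}_n(\boldsymbol{\phi}+\mathbf{f}_t)+\mathbf{u}_{\circ t}$, so that $\boldsymbol{\mu}_R=E(\mathbf{r}_{\circ t})=\mathbf{B}_n\boldsymbol{\lambda}$ and $\mathbf{V}_R=\mathbf{B}_n\boldsymbol{\Sigma}_f\mathbf{B}_n'+\mathbf{V}_u$, as stated. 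Writing $\mathbf{Q}=\mathbf{B}_n'\mathbf{V}_u^{-1}\mathbf{B}_n$ (positive definite under the maintained rank and covariance conditions), I would first compute the mean and covariance of the $K$ beta-based portfolios $\boldsymbol{\rho}_{B,t}=\mathbf{W}_B'\mathbf{r}_{\circ t}$. Since $\mathbf{W}_B'\mathbf{B}_n=\mathbf{Q}^{-1}\mathbf{B}_n'\mathbf{V}_u^{-1}\mathbf{B}_n=\mathbf{I}_K$, the mean collapses to $\mathbf{W}_B'\boldsymbol{\mu}_R=\boldsymbol{\lambda}$, and a direct expansion of $\mathbf{W}_B'\mathbf{V}_R\mathbf{W}_B$ using $\mathbf{W}_B=\mathbf{V}_u^{-1}\mathbf{B}_n\mathbf{Q}^{-1}$ yields $\mathbf{W}_B'\mathbf{V}_R\mathbf{W}_B=\boldsymbol{\Sigma}_f+\mathbf{Q}^{-1}$. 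Hence the optimal combination of these $K$ portfolios has squared Sharpe ratio $SR_B^2=\boldsymbol{\lambda}'(\boldsymbol{\Sigma}_f+\mathbf{Q}^{-1})^{-1}\boldsymbol{\lambda}$.

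The central step is to show that $SR_{MV}^2=\boldsymbol{\mu}_R'\mathbf{V}_R^{-1}\boldsymbol{\mu}_R=\boldsymbol{\lambda}'\mathbf{B}_n'\mathbf{V}_R^{-1}\mathbf{B}_n\boldsymbol{\lambda}$ reduces to the same form. Applying the Woodbury identity to $\mathbf{V}_R^{-1}=(\mathbf{V}_u+\mathbf{B}_n\boldsymbol{\Sigma}_f\mathbf{B}_n')^{-1}$ gives $\mathbf{B}_n'\mathbf{V}_R^{-1}\mathbf{B}_n=\mathbf{Q}-\mathbf{Q}(\boldsymbol{\Sigma}_f^{-1}+\mathbf{Q})^{-1}\mathbf{Q}$, and a second application of the identity (now in the form $(\mathbf{Q}^{-1}+\boldsymbol{\Sigma}_f)^{-1}=\mathbf{Q}-\mathbf{Q}(\boldsymbol{\Sigma}_f^{-1}+\mathbf{Q})^{-1}\mathbf{Q}$) identifies this with $(\boldsymbol{\Sigma}_f+\mathbf{Q}^{-1})^{-1}$. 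Therefore $SR_{MV}^2=\boldsymbol{\lambda}'(\boldsymbol{\Sigma}_f+\mathbf{Q}^{-1})^{-1}\boldsymbol{\lambda}=SR_B^2$, establishing the equality in (\ref{eqSR}). An equivalent, more geometric way to see this is to check, again via Woodbury, that the MV weight $\mathbf{V}_R^{-1}\boldsymbol{\mu}_R=\mathbf{V}_R^{-1}\mathbf{B}_n\boldsymbol{\lambda}$ lies in the column span of $\mathbf{W}_B=\mathbf{V}_u^{-1}\mathbf{B}_n\mathbf{Q}^{-1}$, so the MV portfolio is itself attainable as a linear combination of the $K$ beta-based portfolios and the two optimizations coincide.

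For the inequality in (\ref{eqSR}), since $\mathbf{Q}^{-1}$ is positive definite we have $\boldsymbol{\Sigma}_f+\mathbf{Q}^{-1}\succ\boldsymbol{\Sigma}_f$ in the Loewner order, hence $(\boldsymbol{\Sigma}_f+\mathbf{Q}^{-1})^{-1}\prec\boldsymbol{\Sigma}_f^{-1}$ by antitonicity of matrix inversion on positive definite matrices, which gives $SR_B^2=SR_{MV}^2\le\boldsymbol{\lambda}'\boldsymbol{\Sigma}_f^{-1}\boldsymbol{\lambda}$; the conditions $\|\boldsymbol{\lambda}\|<C$ and $\lambda_{\min}(\boldsymbol{\Sigma}_f^{-1})>0$ guarantee the right-hand side is finite and the relevant inverses well defined. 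Finally, for (\ref{limSR}), note that $\|\mathbf{Q}^{-1}\|=1/\lambda_{\min}(\mathbf{Q})\to0$ whenever $\lambda_{\min}(\mathbf{B}_n'\mathbf{V}_u^{-1}\mathbf{B}_n)\to\infty$, so by continuity of matrix inversion $(\boldsymbol{\Sigma}_f+\mathbf{Q}^{-1})^{-1}\to\boldsymbol{\Sigma}_f^{-1}$ and the bounded quadratic form converges to $\boldsymbol{\lambda}'\boldsymbol{\Sigma}_f^{-1}\boldsymbol{\lambda}$. I expect no serious obstacle here; the only point requiring care is the bookkeeping in the two Woodbury applications, and in particular recognizing that the identical matrix $(\boldsymbol{\Sigma}_f+\mathbf{Q}^{-1})^{-1}$ emerges both from the beta-portfolio covariance and from the reduced MV quadratic form, which is what forces $SR_B^2=SR_{MV}^2$.
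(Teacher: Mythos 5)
Your proposal is correct and follows essentially the same route as the paper's proof: compute the moments of the $K$ beta-based portfolios to obtain $SR_{B}^{2}=\boldsymbol{\lambda}^{\prime}\left(  \boldsymbol{\Sigma}_{f}+\mathbf{Q}^{-1}\right)  ^{-1}\boldsymbol{\lambda}$ with $\mathbf{Q}=\mathbf{B}_{n}^{\prime}\mathbf{V}_{u}^{-1}\mathbf{B}_{n}$, apply the Woodbury identity to $\mathbf{V}_{R}^{-1}$ to reduce $SR_{MV}^{2}$ to the same quadratic form, and then bound and pass to the limit. The only cosmetic differences are that the paper matches the two quadratic forms via the factorization $\left(  \boldsymbol{\Sigma}_{f}+\mathbf{Q}^{-1}\right)  ^{-1}=\boldsymbol{\Sigma}_{f}^{-1}\left(  \boldsymbol{\Sigma}_{f}^{-1}+\mathbf{Q}\right)  ^{-1}\mathbf{Q}$ together with an add-and-subtract step rather than your second Sherman--Morrison--Woodbury application, establishes the inequality by computing $SR_{MV}^{2}-\boldsymbol{\lambda}^{\prime}\boldsymbol{\Sigma}_{f}^{-1}\boldsymbol{\lambda}=-\boldsymbol{\lambda}^{\prime}\boldsymbol{\Sigma}_{f}^{-1}\left(  \boldsymbol{\Sigma}_{f}^{-1}+\mathbf{Q}\right)  ^{-1}\boldsymbol{\Sigma}_{f}^{-1}\boldsymbol{\lambda}\leq0$ rather than invoking Loewner antitonicity, and proves (\ref{limSR}) via the eigenvalue bound $\lambda_{\min}\left(  \boldsymbol{\Sigma}_{f}^{-1}+\mathbf{Q}\right)  \geq\lambda_{\min}\left(  \boldsymbol{\Sigma}_{f}^{-1}\right)  +\lambda_{\min}\left(  \mathbf{Q}\right)  $ rather than continuity of matrix inversion.
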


\begin{proof}
{\small Under $c=0$ and $\mathbf{\eta=0}$}, {\small $\mathbf{r}_{\circ
t}=\mathbf{B}\left(  \mathbf{f}_{t}+\boldsymbol{\phi}\right)  +\mathbf{u}%
_{\circ t}$, $\mathbf{\rho}_{B,t}=\mathbf{f}_{t}+\boldsymbol{\phi}%
+\mathbf{W}^{\prime}\mathbf{u}_{\circ t}$, and
\[
E\left(  \mathbf{\rho}_{B,t}\right)  =\boldsymbol{\lambda}\text{, and
}Var\left(  \mathbf{\rho}_{B,t}\right)  =\boldsymbol{\Sigma}_{f}%
+\mathbf{W}^{\prime}\mathbf{V}_{u}\mathbf{W}=\boldsymbol{\Sigma}_{f}+\left(
\mathbf{B}^{\prime}\mathbf{V}_{u}^{-1}\mathbf{B}\right)  ^{-1}.
\]
The best linear combination of these $K$ portfolios is obtained by finding the
$K\times1$ weight vector, $\mathbf{w}_{f},$ that minimizes $Var\left(
\mathbf{w}_{f}^{\prime}\mathbf{\rho}_{B,t}\right)  $ for a given mean,
$\mathbf{w}_{f}^{\prime}\boldsymbol{\lambda}$. The solution to this
optimization problem is given by }
\end{proof}

{\small
\[
\mathbf{w}_{f}=\kappa^{-1}\left(  \boldsymbol{\Sigma}_{f}+\left(
\mathbf{B}_{n}^{\prime}\mathbf{V}_{u}^{-1}\mathbf{B}_{n}\right)  ^{-1}\right)
^{-1}\boldsymbol{\lambda},
\]
where $\kappa$ is a risk aversion coefficient. The squared Sharpe ratio of
$\rho_{B,t}=\mathbf{w}_{f}^{\prime}\mathbf{\rho}_{B,t}$ is given by
\begin{equation}
SR_{B}^{2}=\boldsymbol{\lambda}^{\prime}\left(  \mathbf{\Sigma}_{f}+\left(
\mathbf{B}_{n}^{\prime}\mathbf{V}_{u}^{-1}\mathbf{B}_{n}\right)  ^{-1}\right)
^{-1}\boldsymbol{\lambda}, \label{SR2B1}%
\end{equation}
which can be written equivalently as%
\begin{align}
SR_{B}^{2}  &  =\boldsymbol{\lambda}^{\prime}\left[  \left(  \mathbf{B}%
_{n}^{\prime}\mathbf{V}_{u}^{-1}\mathbf{B}_{n}\right)  ^{-1}\left(
\mathbf{\Sigma}_{f}^{-1}+\mathbf{B}_{n}^{\prime}\mathbf{V}_{u}^{-1}%
\mathbf{B}_{n}\right)  \mathbf{\Sigma}_{f}\right]  ^{-1}\boldsymbol{\lambda
}\label{SR2B}\\
&  =\boldsymbol{\lambda}^{\prime}\mathbf{\Sigma}_{f}^{-1}\left(
\mathbf{\Sigma}_{f}^{-1}+\mathbf{B}_{n}^{\prime}\mathbf{V}_{u}^{-1}%
\mathbf{B}_{n}\right)  ^{-1}\left(  \mathbf{B}_{n}^{\prime}\mathbf{V}_{u}%
^{-1}\mathbf{B}_{n}\right)  \boldsymbol{\lambda}\nonumber
\end{align}
The squared Sharpe ratio of the mean-variance efficient portfolio is given by
\[
SR_{MV}^{2}=\boldsymbol{\lambda}^{\prime}\mathbf{B}_{n}^{\prime}\left(
\mathbf{B}_{n}\mathbf{\Sigma}_{f}\mathbf{B}_{n}^{\prime}+\mathbf{V}%
_{u}\right)  ^{-1}\mathbf{B}_{n}\boldsymbol{\lambda}.
\]
However, since $\mathbf{B}_{n}\mathbf{\Sigma}_{f}\mathbf{B}_{n}^{\prime}$ is
rank deficient then}

{\small
\[
\left(  \mathbf{B}_{n}\mathbf{\Sigma}_{f}\mathbf{B}_{n}^{\prime}%
+\mathbf{V}_{u}\right)  ^{-1}\mathbf{=V}_{u}^{-1}\mathbf{-V}_{u}%
^{-1}\mathbf{B}\left(  \mathbf{\Sigma}_{f}^{-1}+\mathbf{B}_{n}^{\prime
}\mathbf{V}_{u}^{-1}\mathbf{B}_{n}\right)  ^{-1}\mathbf{B}_{n}^{\prime
}\mathbf{V}_{u}^{-1},
\]
and%
\begin{align*}
SR_{MV}^{2}  &  =\boldsymbol{\lambda}^{\prime}\mathbf{B}_{n}^{\prime}\left[
\mathbf{V}_{u}^{-1}\mathbf{-V}_{u}^{-1}\mathbf{B}\left(  \mathbf{\Sigma}%
_{f}^{-1}+\mathbf{B}_{n}^{\prime}\mathbf{V}_{u}^{-1}\mathbf{B}_{n}\right)
^{-1}\mathbf{B}_{n}^{\prime}\mathbf{V}_{u}^{-1}\right]  \mathbf{B}%
_{n}\boldsymbol{\lambda}\\
&  =\boldsymbol{\lambda}^{\prime}\mathbf{B}_{n}^{\prime}\mathbf{V}_{u}%
^{-1}\mathbf{B}_{n}\boldsymbol{\lambda}-\boldsymbol{\lambda}^{\prime
}\mathbf{B}_{n}^{\prime}\mathbf{V}_{u}^{-1}\mathbf{B}_{n}\left(
\mathbf{\Sigma}_{f}^{-1}+\mathbf{B}_{n}^{\prime}\mathbf{V}_{u}^{-1}%
\mathbf{B}_{n}\right)  ^{-1}\mathbf{B}_{n}^{\prime}\mathbf{V}_{u}%
^{-1}\mathbf{B}_{n}\boldsymbol{\lambda}%
\end{align*}
Furthermore,%
\begin{align*}
&  \boldsymbol{\lambda}^{\prime}\mathbf{B}_{n}^{\prime}\mathbf{V}_{u}%
^{-1}\mathbf{B}_{n}\left(  \mathbf{\Sigma}_{f}^{-1}+\mathbf{B}_{n}^{\prime
}\mathbf{V}_{u}^{-1}\mathbf{B}_{n}\right)  ^{-1}\mathbf{B}_{n}^{\prime
}\mathbf{V}_{u}^{-1}\mathbf{B}_{n}\boldsymbol{\lambda}\\
&  =\boldsymbol{\lambda}^{\prime}\left(  \mathbf{B}_{n}^{\prime}\mathbf{V}%
_{u}^{-1}\mathbf{B}_{n}+\mathbf{\Sigma}_{f}^{-1}-\mathbf{\Sigma}_{f}%
^{-1}\right)  \left(  \mathbf{\Sigma}_{f}^{-1}+\mathbf{B}_{n}^{\prime
}\mathbf{V}_{u}^{-1}\mathbf{B}_{n}\right)  ^{-1}\mathbf{B}_{n}^{\prime
}\mathbf{V}_{u}^{-1}\mathbf{B}_{n}\boldsymbol{\lambda}\\
&  =\boldsymbol{\lambda}^{\prime}\mathbf{B}_{n}^{\prime}\mathbf{V}_{u}%
^{-1}\mathbf{B}_{n}\boldsymbol{\lambda}-\boldsymbol{\lambda}^{\prime
}\mathbf{\Sigma}_{f}^{-1}\left(  \mathbf{\Sigma}_{f}^{-1}+\mathbf{B}%
_{n}^{\prime}\mathbf{V}_{u}^{-1}\mathbf{B}_{n}\right)  ^{-1}\mathbf{B}%
_{n}^{\prime}\mathbf{V}_{u}^{-1}\mathbf{B}_{n}\boldsymbol{\lambda}.
\end{align*}
Hence%
\begin{equation}
SR_{MV}^{2}=\boldsymbol{\lambda}^{\prime}\mathbf{\Sigma}_{f}^{-1}\left(
\mathbf{\Sigma}_{f}^{-1}+\mathbf{B}_{n}^{\prime}\mathbf{V}_{u}^{-1}%
\mathbf{B}_{n}\right)  ^{-1}\mathbf{B}_{n}^{\prime}\mathbf{V}_{u}%
^{-1}\mathbf{B}_{n}\boldsymbol{\lambda}. \label{SR2MV}%
\end{equation}
The first part of result (\ref{eqSR}) now follows from a direct comparison of
(\ref{SR2MV}) and (\ref{SR2B}). To establish the second part note that
\begin{align*}
SR_{MV}^{2}-\boldsymbol{\lambda}^{\prime}\mathbf{\Sigma}_{f}^{-1}%
\boldsymbol{\lambda}  &  =\boldsymbol{\lambda}^{\prime}\left[  \mathbf{\Sigma
}_{f}^{-1}\left(  \mathbf{\Sigma}_{f}^{-1}+\mathbf{B}_{n}^{\prime}%
\mathbf{V}_{u}^{-1}\mathbf{B}_{n}\right)  ^{-1}\mathbf{B}_{n}^{\prime
}\mathbf{V}_{u}^{-1}\mathbf{B}_{n}-\mathbf{\Sigma}_{f}^{-1}\right]
\boldsymbol{\lambda}\\
&  =\boldsymbol{\lambda}^{\prime}\left[  \mathbf{\Sigma}_{f}^{-1}\left(
\mathbf{\Sigma}_{f}^{-1}+\mathbf{B}_{n}^{\prime}\mathbf{V}_{u}^{-1}%
\mathbf{B}_{n}\right)  ^{-1}\left(  \mathbf{B}_{n}^{\prime}\mathbf{V}_{u}%
^{-1}\mathbf{B}_{n}+\mathbf{\Sigma}_{f}^{-1}-\mathbf{\Sigma}_{f}^{-1}\right)
-\mathbf{\Sigma}_{f}^{-1}\right]  \boldsymbol{\lambda}\\
&  =\boldsymbol{\lambda}^{\prime}\left[  \mathbf{\Sigma}_{f}^{-1}\left(
\mathbf{I}_{k}-\left(  \mathbf{\Sigma}_{f}^{-1}+\mathbf{B}_{n}^{\prime
}\mathbf{V}_{u}^{-1}\mathbf{B}_{n}\right)  ^{-1}\mathbf{\Sigma}_{f}%
^{-1}\right)  -\mathbf{\Sigma}_{f}^{-1}\right]  \boldsymbol{\lambda}\\
&  =-\boldsymbol{\lambda}^{\prime}\mathbf{\Sigma}_{f}^{-1}\left(
\mathbf{\Sigma}_{f}^{-1}+\mathbf{B}_{n}^{\prime}\mathbf{V}_{u}^{-1}%
\mathbf{B}_{n}\right)  ^{-1}\mathbf{\Sigma}_{f}^{-1}\boldsymbol{\lambda}.
\end{align*}
Since $\boldsymbol{\lambda}^{\prime}\mathbf{\Sigma}_{f}^{-1}\left(
\mathbf{\Sigma}_{f}^{-1}+\mathbf{B}_{n}^{\prime}\mathbf{V}_{u}^{-1}%
\mathbf{B}_{n}\right)  ^{-1}\mathbf{\Sigma}_{f}^{-1}\boldsymbol{\lambda}\geq
0$, it then follows that $SR_{MV}^{2}\leq\boldsymbol{\lambda}^{\prime
}\mathbf{\Sigma}_{f}^{-1}\boldsymbol{\lambda}$. To establish result
(\ref{limSR}), we first note that
\[
\left\vert SR_{MV}^{2}-\boldsymbol{\lambda}^{\prime}\mathbf{\Sigma}_{f}%
^{-1}\boldsymbol{\lambda}\right\vert \leq\left\Vert \boldsymbol{\lambda
}\right\Vert ^{2}\left\Vert \mathbf{\Sigma}_{f}^{-1}\right\Vert ^{2}\left\Vert
\left(  \mathbf{\Sigma}_{f}^{-1}+\mathbf{B}_{n}^{\prime}\mathbf{V}_{u}%
^{-1}\mathbf{B}_{n}\right)  ^{-1}\right\Vert .
\]
where $\left\Vert \boldsymbol{\lambda}\right\Vert <C$, and $\left\Vert
\mathbf{\Sigma}_{f}^{-1}\right\Vert ^{2}=\lambda_{max}\left(  \mathbf{\Sigma
}_{f}^{-1}\right)  =1/\lambda_{min}\left(  \mathbf{\Sigma}_{f}\right)  <C$,
since by Assumption $\lambda_{min}\left(  \mathbf{\Sigma}_{f}\right)  >0$.
Also
\begin{align*}
\left\Vert \left(  \mathbf{\Sigma}_{f}^{-1}+\mathbf{B}_{n}^{\prime}%
\mathbf{V}_{u}^{-1}\mathbf{B}_{n}\right)  ^{-1}\right\Vert  &  =\lambda
_{\text{max}}\left[  \left(  \mathbf{\Sigma}_{f}^{-1}+\mathbf{B}_{n}^{\prime
}\mathbf{V}_{u}^{-1}\mathbf{B}_{n}\right)  ^{-1}\right] \\
&  =\frac{1}{\lambda_{\text{min}}\left(  \mathbf{\Sigma}_{f}^{-1}%
+\mathbf{B}_{n}^{\prime}\mathbf{V}_{u}^{-1}\mathbf{B}_{n}\right)  }.
\end{align*}
Since $\mathbf{\Sigma}_{f}^{-1}$ and $\mathbf{B}_{n}^{\prime}\mathbf{V}%
_{u}^{-1}\mathbf{B}_{n}$ are both symmetric matrices, then (see Section 5.3.2
in Lutkepohl, 1996)
\[
\lambda_{\text{min}}\left(  \mathbf{\Sigma}_{f}^{-1}+\mathbf{B}_{n}^{\prime
}\mathbf{V}_{u}^{-1}\mathbf{B}_{n}\right)  \geq\lambda_{\text{min}}\left(
\mathbf{\Sigma}_{f}^{-1}\right)  +\lambda_{min}\left(  \mathbf{B}_{n}^{\prime
}\mathbf{V}_{u}^{-1}\mathbf{B}_{n}\right)  ,
\]
and
\[
\left\Vert \left(  \mathbf{\Sigma}_{f}^{-1}+\mathbf{B}_{n}^{\prime}%
\mathbf{V}_{u}^{-1}\mathbf{B}_{n}\right)  ^{-1}\right\Vert \leq\frac
{1}{\lambda_{\text{min}}\left(  \mathbf{\Sigma}_{f}^{-1}\right)
+\lambda_{min}\left(  \mathbf{B}_{n}^{\prime}\mathbf{V}_{u}^{-1}\mathbf{B}%
_{n}\right)  }.
\]
Hence
\[
\left\Vert SR_{MV}^{2}-\boldsymbol{\lambda}^{\prime}\mathbf{\Sigma}_{f}%
^{-1}\boldsymbol{\lambda}\right\Vert \leq\frac{\left\Vert \mathbf{\lambda
}\right\Vert ^{2}\left\Vert \mathbf{\Sigma}_{f}^{-1}\right\Vert ^{2}}%
{\lambda_{\text{min}}\left(  \mathbf{\Sigma}_{f}^{-1}\right)  +\lambda
_{min}\left(  \mathbf{B}_{n}^{\prime}\mathbf{V}_{u}^{-1}\mathbf{B}_{n}\right)
},
\]
$\left\Vert SR_{MV}^{2}-\boldsymbol{\lambda}^{\prime}\mathbf{\Sigma}_{f}%
^{-1}\boldsymbol{\lambda}\right\Vert \rightarrow0$, if $\lambda_{min}\left(
\mathbf{B}_{n}^{\prime}\mathbf{V}_{u}^{-1}\mathbf{B}_{n}\right)
\rightarrow\infty$, and result (\ref{limSR}) follows. }

\begin{lemma}
\label{L1}{\small Consider the errors $\left\{  u_{it}%
,i=1,2,...,n;t=1,2,...,T\right\}  $ defined by (\ref{uit}) and suppose that
Assumptions \ref{Latent factor} and \ref{Errors} hold with $\alpha_{\gamma
}<1/2$, and $\mathbf{V}_{u}=(\sigma_{ij})$. Set $\overline{u}_{i\circ}%
=T^{-1}\sum_{t=1}^{T}u_{it},$ and $E(u_{it}^{2})=\sigma_{i}^{2}$. Then
\begin{equation}
\left\Vert \mathbf{V}_{u}\right\Vert =\lambda_{\max}\left(  \mathbf{V}%
_{u}\right)  \leq\sup_{i}\sum_{j=1}^{n}\left\vert \sigma_{ij}\right\vert
=O\left(  n^{\alpha_{\gamma}}\right)  , \label{NormVu1}%
\end{equation}%
\begin{equation}
n^{-1}\sum_{i=1}^{n}\sum_{j=1}^{n}\left\vert \sigma_{ij}\right\vert =O\left(
1\right)  \text{, and }n^{-1}\sum_{i=1}^{n}\sum_{j=1}^{n}\sigma_{ij}^{2}=O(1).
\label{NormVu2}%
\end{equation}
Also, for any $t$ and $t^{\prime}$%
\begin{equation}
a_{n,tt}=\frac{1}{n}\sum_{i=1}^{n}\left(  u_{it}^{2}-\sigma_{i}^{2}\right)
=O_{p}\left(  n^{-1/2}\right)  , \label{aa}%
\end{equation}%
\begin{equation}
Var(a_{n,tt})=\frac{1}{n^{2}}\sum_{i=1}^{n}\sum_{j=1}^{n}Cov(u_{it}^{2}%
,u_{jt}^{2})=O(n^{-1}), \label{Va}%
\end{equation}%
\begin{equation}
a_{n,tt^{\prime}}=n^{-1}\sum_{i=1}^{n}u_{it}u_{it^{\prime}}=O_{p}\left(
n^{-1/2}\right)  \text{, for }t\neq t^{\prime}, \label{aa2}%
\end{equation}%
\begin{equation}
Var\left(  a_{n,tt^{\prime}}\right)  =n^{-2}\sum_{i=1}^{n}\sum_{j=1}^{n}%
\sigma_{ij}^{2}=O\left(  n^{-1}\right)  \text{, for }t\neq t^{\prime},
\label{Vatt'}%
\end{equation}%
\begin{equation}
b_{n,t}=\frac{1}{n}\sum_{i=1}^{n}(u_{it}\overline{u}_{i\circ}-T^{-1}\sigma
_{i}^{2})=O_{p}\left(  T^{-1/2}n^{-1/2}\right)  , \label{ab}%
\end{equation}
and
\begin{equation}
Var(b_{n,t})=O\left(  T^{-1}n^{-1}\right)  . \label{Vb}%
\end{equation}
}
\end{lemma}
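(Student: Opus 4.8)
The plan is to reduce every statement to the elementary identity
$$\sigma_{ij}=E(u_{it}u_{jt})=\gamma_i\gamma_j+\sigma_{v,ij},$$
which follows from the decomposition (\ref{uit}) together with $E(g_t)=0$, $E(g_t^2)=1$ and the independence of $g_t$ from $\{v_{it}\}$; in particular $\sigma_i^2=\gamma_i^2+\sigma_{v,ii}$. Two inputs are used repeatedly: $\sup_i|\gamma_i|<C$ with $\sum_j|\gamma_j|=O(n^{\alpha_\gamma})$ from Assumption \ref{Latent factor} (see (\ref{normg})), and the absolute summability conditions (\ref{Covv}) of Assumption \ref{Errors}. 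I would also record at the outset that $a_{n,tt}$, $a_{n,tt'}$ and $b_{n,t}$ all have mean zero, using $E(u_{it}u_{is})=0$ for $s\neq t$ (immediate from serial independence of $g_t$ and of $v_{it}$) and $E(u_{it}^2)=\sigma_i^2$; hence the stochastic-order claims (\ref{aa}), (\ref{aa2}) and (\ref{ab}) follow from the variance bounds (\ref{Va}), (\ref{Vatt'}) and (\ref{Vb}) by Chebyshev's inequality. The real content is therefore the three variance computations.

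For the norm bounds, since $\mathbf{V}_u$ is symmetric its spectral norm is dominated by the maximum absolute row sum, and the triangle inequality gives $\sum_j|\sigma_{ij}|\le|\gamma_i|\sum_j|\gamma_j|+\sum_j|\sigma_{v,ij}|$, so the two summability inputs yield (\ref{NormVu1}). For (\ref{NormVu2}) the decisive observation is that $n^{-1}\big(\sum_i|\gamma_i|\big)^2=O(n^{2\alpha_\gamma-1})=o(1)$, which is exactly where $\alpha_\gamma<1/2$ is needed: it kills the latent-factor contribution, leaving the $O(1)$ bound from absolute summability of $(\sigma_{v,ij})$. The same $\gamma$-versus-$v$ split, after $\sigma_{ij}^2\le 2\gamma_i^2\gamma_j^2+2\sigma_{v,ij}^2$ and $\sum_i\gamma_i^2=O(n^{\alpha_\gamma})$ (cf.\ (\ref{Snormg})), controls $n^{-1}\sum_i\sum_j\sigma_{ij}^2$.

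For (\ref{Va}) I would expand $Cov(u_{it}^2,u_{jt}^2)$ using $u_{it}=\gamma_ig_t+v_{it}$. Most cross terms vanish because $g_t$ has mean zero and is independent of the $v$'s (any term carrying an odd power of $g_t$ or a single isolated factor $v_{it}$ drops out), leaving $Cov(u_{it}^2,u_{jt}^2)=\gamma_i^2\gamma_j^2\big(E(g_t^4)-1\big)+4\gamma_i\gamma_j\sigma_{v,ij}+Cov(v_{it}^2,v_{jt}^2)$. Summed over $i,j$, the first term is $O(n^{2\alpha_\gamma})=o(n)$, the second is $O(n)$ by $\sup_i|\gamma_i|<C$ and absolute summability, and the third is $O(n)$ by the second condition in (\ref{Covv}); dividing by $n^2$ gives (\ref{Va}). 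For (\ref{Vatt'}) the simplification is cleaner: the time slices at $t$ and $t'$ are independent and $E(u_{it}u_{it'})=0$, so $E(u_{it}u_{it'}u_{jt}u_{jt'})=E(u_{it}u_{jt})\,E(u_{it'}u_{jt'})=\sigma_{ij}^2$, whence $Var(a_{n,tt'})=n^{-2}\sum_i\sum_j\sigma_{ij}^2=O(n^{-1})$ by (\ref{NormVu2}).

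Finally, for (\ref{Vb}) I would split $u_{it}\overline{u}_{i\circ}=T^{-1}u_{it}^2+T^{-1}\sum_{s\neq t}u_{it}u_{is}$ into a diagonal and an off-diagonal part. The cross-covariance between the two parts vanishes, since each off-diagonal factor leaves an isolated $u_{js'}$ with $s'\neq t$ of zero mean once the independent time slice is factored out. Only two terms survive: the diagonal--diagonal term equals $T^{-2}\sum_i\sum_j Cov(u_{it}^2,u_{jt}^2)=O(n/T^2)$ by the computation behind (\ref{Va}), and the off-diagonal--off-diagonal term reduces, via the same factorization used for (\ref{Vatt'}), to $T^{-2}(T-1)\sum_i\sum_j\sigma_{ij}^2=O(n/T)$; dividing by $n^2$ gives (\ref{Vb}). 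The main obstacle throughout is the bookkeeping in these fourth-order moment expansions---tracking which mixed terms vanish by the oddness and independence of $g_t$ and by serial independence across distinct time indices---rather than any single hard estimate; once the surviving terms are isolated, each is bounded by one of the two summability inputs together with the restriction $\alpha_\gamma<1/2$.
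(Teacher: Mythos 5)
Your proposal is correct and takes essentially the same route as the paper's own proof: the same identity $\sigma_{ij}=\gamma_{i}\gamma_{j}+\sigma_{v,ij}$, the same fourth-moment expansion of $Cov(u_{it}^{2},u_{jt}^{2})$, the same factorization $E(u_{it}u_{it^{\prime}}u_{jt}u_{jt^{\prime}})=\sigma_{ij}^{2}$ for $t\neq t^{\prime}$, and the same use of (\ref{normg}), (\ref{Covv}) and $\alpha_{\gamma}<1/2$. The only differences are organizational: you deduce (\ref{aa}), (\ref{aa2}) and (\ref{ab}) from the variance bounds via Chebyshev, and you obtain (\ref{Vb}) by writing $b_{n,t}=T^{-1}a_{n,tt}+T^{-1}\sum_{s\neq t}a_{n,ts}$ and reusing (\ref{Va}) and (\ref{Vatt'}), whereas the paper expands the quadruple sum $\sum_{s}\sum_{s^{\prime}}E\left(  u_{it}u_{jt}u_{is}u_{js^{\prime}}\right)$ directly---the surviving terms are identical.
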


\begin{proof}
{\small Result (\ref{NormVu1}) follows noting that under Assumptions
\ref{Latent factor} and \ref{Errors}, $\sigma_{ij}=\gamma_{i}\gamma_{j}%
+\sigma_{v,ij},$ $\sum_{j=1}^{n}\left\vert \sigma_{ij}\right\vert \leq\sup
_{i}\left\vert \gamma_{i}\right\vert \sum_{j=1}^{n}\left\vert \gamma
_{j}\right\vert +\sum_{j=1}^{n}\left\vert \sigma_{v,ij}\right\vert ,$ and by
assumption $\sup_{i}\left\vert \gamma_{i}\right\vert <C$, $sup_{i}\sum
_{j=1}^{n}\left\vert \sigma_{v,ij}\right\vert <C$, and $\sum_{j=1}%
^{n}\left\vert \gamma_{j}\right\vert =O\left(  n^{\alpha_{\gamma}}\right)  $.
To prove (\ref{NormVu2})%
\begin{align*}
n^{-1}\sum_{i=1}^{n}\sum_{j=1}^{n}\left\vert \sigma_{ij}\right\vert  &  \leq
n^{-1}\sum_{i=1}^{n}\sum_{j=1}^{n}\left\vert \gamma_{i}\right\vert \left\vert
\gamma_{j}\right\vert +n^{-1}\sum_{i=1}^{n}\sum_{j=1}^{n}\left\vert
\sigma_{v,ij}\right\vert \\
&  =n^{-1}\left(  \sum_{i=1}^{n}\left\vert \gamma_{i}\right\vert \right)
^{2}+n^{-1}\sum_{i=1}^{n}\sum_{j=1}^{n}\left\vert \sigma_{v,ij}\right\vert .
\end{align*}
By assumption $n^{-1}\sum_{i=1}^{n}\sum_{j=1}^{n}\left\vert \sigma
_{v,ij}\right\vert =O(1)$, and $\sum_{i=1}^{n}\left\vert \gamma_{i}\right\vert
=O(n^{\alpha_{\gamma}})$, then, in view of (\ref{normg}) and (\ref{Covv})
$n^{-1}\sum_{i=1}^{n}\sum_{j=1}^{n}\left\vert \sigma_{ij}\right\vert =O\left(
n^{-1+2\alpha_{\gamma}}\right)  +O(1)=O(1),$ since $\alpha_{\gamma}<1/2$.
Similarly $\sigma_{ij}^{2}=\gamma_{i}^{2}\gamma_{j}^{2}+\sigma_{v,ij}%
^{2}+2\gamma_{i}\gamma_{j}\sigma_{v,ij}$, and
\[
n^{-1}\sum_{i=1}^{n}\sum_{j=1}^{n}\sigma_{ij}^{2}=n^{-1}Tr(\mathbf{V}_{u}%
^{2})=n^{-1}\left(  \sum_{i=1}^{n}\gamma_{i}^{2}\right)  ^{2}+n^{-1}\sum
_{i=1}^{n}\sum_{j=1}^{n}\sigma_{v,ij}^{2}+2n^{-1}\mathbf{\gamma}^{\prime
}\mathbf{V}_{v}\mathbf{\gamma.}%
\]
But
\begin{align}
n^{-1}\sum_{i=1}^{n}\sum_{j=1}^{n}\sigma_{v,ij}^{2}  &  =n^{-1}Tr\left(
\mathbf{V}_{v}^{2}\right)  \leq\lambda_{\max}^{2}\left(  \mathbf{V}%
_{v}\right)  =O(1),\label{Vu2}\\
\boldsymbol{\gamma}^{\prime}\mathbf{V}_{v}\boldsymbol{\gamma}  &  \leq\left(
\boldsymbol{\gamma}^{\prime}\boldsymbol{\gamma}\right)  \lambda_{\max}\left(
\mathbf{V}_{u}\right)  =O\left(  n^{\alpha_{\gamma}}\right)  .\nonumber
\end{align}
Overall
\[
n^{-1}\sum_{i=1}^{n}\sum_{j=1}^{n}\sigma_{ij}^{2}=O\left(  n^{-1+2\alpha
_{\gamma}}\right)  +O(1)+O\left(  n^{-1+\alpha_{\gamma}}\right)  ,
\]
and since $\alpha_{\gamma}<1/2$, then it follows that $n^{-1}\sum_{i=1}%
^{n}\sum_{j=1}^{n}\sigma_{ij}^{2}=O(1)$. To prove (\ref{aa}) we first note
that (using the normalization $E(g_{t}^{2})=1$)
\begin{equation}
u_{it}^{2}-E(u_{it}^{2})=(g_{t}^{2}-1)\gamma_{i}^{2}+\left(  v_{it}^{2}%
-\sigma_{v,ii}\right)  +2g_{t}\gamma_{i}v_{it}, \label{u2it}%
\end{equation}
and
\begin{align*}
a_{n,tt}  &  =(g_{t}^{2}-1)\left(  n^{-1}\sum_{i=1}^{n}\gamma_{i}^{2}\right)
+n^{-1}\sum_{i=1}^{n}\left(  v_{it}^{2}-\sigma_{v,ii}\right)  +2g_{t}\left(
n^{-1}\sum_{i=1}^{n}\gamma_{i}v_{it}\right) \\
&  =O\left(  n^{-1+\alpha_{\gamma}}\right)  +O_{p}\left(  n^{-1/2}\right)
+O\left(  n^{-1+\alpha_{\gamma}/2}\right)  =O_{p}(n^{-1/2})\text{, since
}\alpha_{\gamma}<1/2.
\end{align*}
To establish (\ref{Va}), using (\ref{u2it}) we first note that ( for given
loadings $\gamma_{i}$)
\[
Cov\left(  u_{it}^{2},u_{jt}^{2}\right)  =\gamma_{i}^{2}\gamma_{j}^{2}\left[
E\left(  g_{t}^{2}-1\right)  ^{2}\right]  +Cov(v_{it}^{2},v_{jt}%
^{2})+4E\left(  g_{t}^{2}\right)  \gamma_{i}\gamma_{j}\sigma_{v,ij},
\]%
\begin{align*}
Var(a_{n,tt})  &  =n^{-2}\sum_{i=1}^{n}\sum_{j=1}^{n}Cov(u_{it}^{2},u_{jt}%
^{2})=Var(g_{t}^{2})\left(  n^{-1}\sum_{i=1}^{n}\gamma_{i}^{2}\right)  ^{2}\\
&  +n^{-2}\sum_{i=1}^{n}\sum_{j=1}^{n}Cov(v_{it}^{2},v_{jt}^{2})+4E\left(
g_{t}^{2}\right)  n^{-2}\mathbf{\gamma}^{\prime}\mathbf{V}_{v}\mathbf{\gamma
}=O\left(  n^{-2+2\alpha_{\gamma}}\right)  +O(n^{-1})+O\left(  n^{-2+\alpha
_{\gamma}}\right)  \text{,}%
\end{align*}
and since $\alpha_{\gamma}<1/2$, then $Var(a_{n,tt})=O\left(  n^{-1}\right)
,$ as required (which also corroborate (\ref{aa})). Consider now (\ref{aa2})
and since $u_{it}$ is serially independent then $E\left(  a_{n,tt^{\prime}%
}\right)  =0$ for $t\neq t^{\prime}$, and we have%
\[
E\left(  u_{it}u_{it^{\prime}}u_{jt}u_{jt^{\prime}}\right)  =E\left(
u_{it}u_{jt}u_{it^{\prime}}u_{jt^{\prime}}\right)  =E\left(  u_{it}%
u_{jt}\right)  E\left(  u_{it^{\prime}}u_{jt^{\prime}}\right)  =\sigma
_{ij}^{2}\text{ for }t\neq t^{\prime}\text{,}%
\]
and
\begin{align*}
Var\left(  a_{n,tt^{\prime}}\right)   &  =E\left(  a_{n,tt^{\prime}}%
^{2}\right)  =n^{-2}\sum_{i=1}^{n}\sum_{j=1}^{n}E\left(  u_{it}u_{it^{\prime}%
}u_{jt}u_{jt^{\prime}}\right)  \text{, for }t\neq t^{\prime}\\
&  =n^{-2}\sum_{i=1}^{n}\sum_{j=1}^{n}\sigma_{ij}^{2}=O\left(  n^{-1}\right)
\text{, using (\ref{NormVu2}),}%
\end{align*}
which establishes (\ref{Vatt'}) and (\ref{aa2}). To prove (\ref{ab}) set
$z_{it}=u_{it}\overline{u}_{i\circ}-T^{-1}\sigma_{i}^{2}$, and write
$b_{n,t}=\frac{1}{n}\sum_{i=1}^{n}z_{it}$. Also note that $u_{it}\overline
{u}_{i\circ}=T^{-1}\sum_{s=1}^{T}u_{it}u_{is},$ and given that $\left\{
u_{it}\right\}  $ is serially independent then $E(z_{it})=0$ and
$E(b_{n,t})=0.$ and%
\begin{align}
Var(b_{n,t})  &  =n^{-2}\sum_{i=1}^{n}\sum_{j=1}^{n}E\left(  z_{it}%
z_{jt}\right)  =n^{-2}\sum_{i=1}^{n}\sum_{j=1}^{n}E(u_{it}u_{jt}\overline
{u}_{i\circ}\overline{u}_{j\circ})-T^{-2}\bar{\sigma}_{n}^{4}\nonumber\\
&  =n^{-2}T^{-2}\sum_{i=1}^{n}\sum_{j=1}^{n}\sum_{s=1}^{T}\sum_{s^{\prime}%
=1}^{T}E\left(  u_{it}u_{jt}u_{is}u_{js^{\prime}}\right)  -T^{-2}\bar{\sigma
}_{n}^{4}. \label{Vbt1}%
\end{align}
Also $E\left(  u_{it}u_{jt}u_{is}u_{js^{\prime}}\right)  =0$ for all $t$ if
$s\neq s^{\prime}$. We are left with one case where $s=s^{\prime}=t$, and one
case where $s=s^{\prime}\neq t$. Hence
\begin{align*}
\sum_{i=1}^{n}\sum_{j=1}^{n}\sum_{s=1}^{T}\sum_{s^{\prime}=1}^{T}E\left(
u_{it}u_{jt}u_{is}u_{js^{\prime}}\right)   &  =\sum_{i=1}^{n}\sum_{j=1}%
^{n}\sum_{s=1}^{T}E\left(  u_{it}u_{jt}u_{is}u_{js}\right) \\
&  =\sum_{i=1}^{n}\sum_{j=1}^{n}\sum_{s=1,s=t}^{T}E\left(  u_{it}u_{jt}%
u_{is}u_{js}\right)  +\sum_{i=1}^{n}\sum_{j=1}^{n}\sum_{s=1,s\neq t}%
^{T}E\left(  u_{it}u_{jt}u_{is}u_{js}\right) \\
&  =\sum_{i=1}^{n}\sum_{j=1}^{n}E\left(  u_{it}^{2}u_{jt}^{2}\right)
+\sum_{i=1}^{n}\sum_{j=1}^{n}\sum_{s=1,s\neq t}^{T}E\left(  u_{it}%
u_{jt}\right)  E\left(  u_{is}u_{js}\right) \\
&  =\sum_{i=1}^{n}\sum_{j=1}^{n}\left[  Cov\left(  u_{it}^{2},u_{jt}%
^{2}\right)  +\sigma_{i}^{2}\sigma_{j}^{2}\right]  +(T-1)\sum_{i=1}^{n}%
\sum_{j=1}^{n}\sigma_{ij}^{2},
\end{align*}
and hence
\[
\sum_{i=1}^{n}\sum_{j=1}^{n}\sum_{s=1}^{T}\sum_{s^{\prime}=1}^{T}E\left(
u_{it}u_{jt}u_{is}u_{js^{\prime}}\right)  =\sum_{i=1}^{n}\sum_{j=1}%
^{n}Cov\left(  u_{it}^{2},u_{jt}^{2}\right)  +\left(  \sum_{i=1}^{n}\sigma
_{i}^{2}\right)  ^{2}+(T-1)\sum_{i=1}^{n}\sum_{j=1}^{n}\sigma_{ij}^{2},
\]%
\begin{align*}
Var(b_{n,t})=  &  n^{-2}T^{-2}\sum_{i=1}^{n}\sum_{j=1}^{n}\sum_{s=1}^{T}%
\sum_{s^{\prime}=1}^{T}E\left(  u_{it}u_{jt}u_{is}u_{js^{\prime}}\right)
-T^{-2}\bar{\sigma}_{n}^{4}\\
&  =T^{-2}n^{-2}\sum_{i=1}^{n}\sum_{j=1}^{n}Cov\left(  u_{it}^{2},u_{jt}%
^{2}\right)  +T^{-2}\bar{\sigma}_{n}^{4}+(T-1)\sum_{i=1}^{n}\sum_{j=1}%
^{n}\sigma_{ij}^{2}-T^{-2}\bar{\sigma}_{n}^{4}.
\end{align*}
Using this result in (\ref{Vbt1}) we have%
\[
Var(b_{n,t})=n^{-2}T^{-2}\sum_{i=1}^{n}\sum_{j=1}^{n}Cov\left(  u_{it}%
^{2},u_{jt}^{2}\right)  +\frac{T-1}{T^{2}}\left(  n^{-2}\sum_{i=1}^{n}%
\sum_{j=1}^{n}\sigma_{ij}^{2}\right)  .
\]
Now using (\ref{Va}) and (\ref{NormVu2}) we have $Var(b_{n,t})=O\left(
T^{-1}n^{-1}\right)  $, which establishes (\ref{Vb}), and result (\ref{ab})
follows by Markov inequality. }
\end{proof}

\begin{lemma}
\label{L2}{\small Consider the $n\times T$ error matrix $\boldsymbol{U}%
_{nT}=\left(  \mathbf{u}_{1\circ},\mathbf{u}_{2\circ},...,\mathbf{u}_{n\circ
}\right)  ^{\prime},$ where $\mathbf{u}_{i\circ}=\left(  u_{i1},u_{i2}%
,...,u_{iT}\right)  ^{\prime},$ the $n\times k$ matrix of factor loadings,
$\mathbf{B}_{n}=(\boldsymbol{\beta}_{\circ1},\boldsymbol{\beta}_{\circ
2},...,\boldsymbol{\beta}_{\circ K}),$ where $\boldsymbol{\beta}_{\circ
k}=(\beta_{1k},\beta_{2k},...,\beta_{nk})^{\prime},$ the $n\times1$ vector of
pricing errors $\boldsymbol{\eta}_{n}=(\eta_{1},\eta_{2},...,\eta_{n}%
)^{\prime},$ with the pervasiveness coefficient, $\alpha_{\eta}$, the observed
factors, $\mathbf{f}_{k}=(f_{k1},f_{k2},...,f_{kT})^{\prime}$ are strong (with
$\alpha_{k}=1$, for $k=1,2,...,K)$, the missing factor $g_{t}$, has strength
$\alpha_{\gamma}<1/2$, $\mathbf{G}_{T}=$ $\mathbf{M}_{T}\mathbf{F}\left(
\mathbf{F}^{\prime}\mathbf{M}_{T}\mathbf{F}\right)  ^{-1}$, $\mathbf{F=(f}%
_{1},\mathbf{f}_{2},...,\mathbf{f}_{K})$, $\mathbf{M}_{n}=\mathbf{I}_{n}%
-\frac{1}{n}\boldsymbol{\tau}_{n}\boldsymbol{\tau}_{n}^{\prime},$
$\overline{\mathbf{u}}_{n\circ}=(\overline{u}_{1\circ},\overline{u}_{2\circ
},...,\overline{u}_{n\circ})^{\prime},$ $\overline{u}_{i\circ}=T^{-1}%
\sum_{t=1}^{T}u_{it}$, $\overline{\sigma}_{n}^{2}=n^{-1}\sum_{i=1}^{n}%
\sigma_{i}^{2}$, and $\boldsymbol{\tau}_{n}$ and $\boldsymbol{\tau}_{T}$ are,
respectively, $n\times1$ and $T\times1$ vectors of ones. Suppose that
Assumptions \ref{factors}, \ref{Latent factor}, \ref{Errors}, \ref{loadings}
and \ref{PriceError} hold. Then%
\begin{equation}
n^{-1}\mathbf{B}_{n}^{\prime}\mathbf{M}_{n}\boldsymbol{\eta}_{n}=O_{p}\left(
n^{-1+\alpha_{\eta}}\right)  , \label{ABeta}%
\end{equation}%
\begin{equation}
n^{-1}\mathbf{B}_{n}^{\prime}\mathbf{M}_{n}\overline{\mathbf{u}}_{n\circ
}=O_{p}\left(  T^{-1/2}n^{-1/2}\right)  , \label{ABMubar}%
\end{equation}%
\begin{equation}
n^{-1}\mathbf{B}_{n}^{\prime}\mathbf{M}_{n}\mathbf{U}_{nT}\mathbf{G}_{T}%
=O_{p}\left(  T^{-1/2}n^{-1/2}\right)  , \label{BUG}%
\end{equation}%
\begin{equation}
n^{-1}\mathbf{G}_{T}^{\prime}\mathbf{U}_{nT}^{\prime}\mathbf{M}_{n}%
\boldsymbol{\eta}_{n}=O_{p}\left(  T^{-1/2}n^{-1+\frac{\alpha_{\eta}%
+\alpha_{\gamma}}{2}}\right)  , \label{GUeta}%
\end{equation}%
\begin{equation}
n^{-1}\mathbf{G}_{T}^{\prime}\mathbf{U}_{nT}^{\prime}\mathbf{M}_{n}%
\overline{\mathbf{u}}_{n\circ}=O_{p}\left(  T^{-1}n^{-1/2}\right)  ,
\label{GUubar}%
\end{equation}%
\begin{equation}
\mathbf{G}_{T}^{\prime}\left(  n^{-1}\mathbf{U}_{nT}^{\prime}\mathbf{M}%
_{n}\mathbf{U}_{nT}-\bar{\sigma}_{n}^{2}\mathbf{I}_{T}\right)  \mathbf{G}%
_{T}=O_{p}\left(  T^{-1}n^{-1/2}\right)  . \label{GUUG}%
\end{equation}
}
\end{lemma}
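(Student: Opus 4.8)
The plan is to prove all six bounds by a single second-moment method: each left-hand side is, after removing a vanishing or negligible mean, a centered random quantity, so I would compute its conditional variance given the factors $\mathbf{F}$ and the loadings, bound that variance using Lemma \ref{L1} together with the growth conditions in Assumptions \ref{Latent factor} and \ref{PriceError}, and convert to an $O_p$ statement by Markov's inequality. Two preliminary facts are used throughout. First, from $u_{it}=\gamma_i g_t+v_{it}$, the serial independence of $g_t$ and $v_{it}$, and $E(g_t^2)=1$, one gets $E(\mathbf{u}_{i\circ}\mathbf{u}_{j\circ}^{\prime})=\sigma_{ij}\mathbf{I}_T$ with $\sigma_{ij}=\gamma_i\gamma_j+\sigma_{v,ij}$. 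Second, writing $\mathbf{g}_{T,l}$ for the $l$-th column of $\mathbf{G}_T$, one has $\mathbf{G}_T^{\prime}\mathbf{G}_T=(\mathbf{F}^{\prime}\mathbf{M}_T\mathbf{F})^{-1}$, so by Assumption \ref{factors} $\left\Vert \mathbf{g}_{T,l}\right\Vert^2=O(T^{-1})$, and crucially $\boldsymbol{\tau}_T^{\prime}\mathbf{G}_T=\mathbf{0}$ since $\mathbf{M}_T\boldsymbol{\tau}_T=\mathbf{0}$.

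For (\ref{ABeta}) I would avoid Cauchy--Schwarz (which only gives $O_p(n^{-1/2+\alpha_{\eta}/2})$); instead, writing the $k$-th entry as $n^{-1}\sum_i(\beta_{ik}-\bar\beta_k)\eta_i$ and bounding it by $n^{-1}(\sup_i|\beta_{ik}-\bar\beta_k|)\sum_i|\eta_i|$, the boundedness of the demeaned loadings and $\sum_i|\eta_i|=O(n^{\alpha_{\eta}})$ from (\ref{NormetaD}) yield the stated $O_p(n^{-1+\alpha_{\eta}})$. For (\ref{ABMubar}), (\ref{BUG}) and (\ref{GUeta}) I would take conditional variances. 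Each reduces to a quadratic form $\sum_i\sum_j c_i c_j\sigma_{ij}$ times a scalar of order $T^{-1}$ (namely $1$, $\left\Vert \mathbf{g}_{T,l}\right\Vert^2$, and $\left\Vert \mathbf{g}_{T,l}\right\Vert^2$, respectively), where $c_i=\beta_{ik}-\bar\beta_k$ for the first two and $c_i=\eta_i-\bar\eta$ for (\ref{GUeta}). Using $n^{-1}\sum_i\sum_j|\sigma_{ij}|=O(1)$ from Lemma \ref{L1}, the loading case gives $\sum_i\sum_j c_ic_j\sigma_{ij}=O(n)$ and hence $O_p(T^{-1/2}n^{-1/2})$. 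For (\ref{GUeta}) I would split $\sigma_{ij}=\gamma_i\gamma_j+\sigma_{v,ij}$: the first piece contributes $(\sum_i(\eta_i-\bar\eta)\gamma_i)^2\le(\sum_i\eta_i^2)(\sum_i\gamma_i^2)=O(n^{\alpha_{\eta}+\alpha_{\gamma}})$ by Cauchy--Schwarz and (\ref{Snorm})--(\ref{Snormg}), while the second is $O(n^{\alpha_{\eta}})$ via $\sup_i\sum_j|\sigma_{v,ij}|<\infty$; the dominant $O(n^{\alpha_{\eta}+\alpha_{\gamma}})$ then gives $O_p(T^{-1/2}n^{-1+(\alpha_{\eta}+\alpha_{\gamma})/2})$.

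The delicate bounds are (\ref{GUubar}) and (\ref{GUUG}), which I expect to be the main obstacle: a crude spectral-norm bound on $\mathbf{E}_{nT}:=n^{-1}\mathbf{U}_{nT}^{\prime}\mathbf{U}_{nT}-\bar\sigma_n^2\mathbf{I}_T$ (whose entries are the $a_{n,tt'}$ of Lemma \ref{L1}, each $O_p(n^{-1/2})$) loses a full factor of $T$ and only yields $O_p(n^{-1/2})$. The key is to reduce both to quadratic forms in $\mathbf{E}_{nT}$. For (\ref{GUubar}) the $l$-th entry equals $T^{-1}\mathbf{g}_{T,l}^{\prime}(\bar\sigma_n^2\mathbf{I}_T+\mathbf{E}_{nT})\boldsymbol{\tau}_T$, and since $\mathbf{g}_{T,l}^{\prime}\boldsymbol{\tau}_T=0$ the diagonal part drops out, leaving $T^{-1}\mathbf{g}_{T,l}^{\prime}\mathbf{E}_{nT}\boldsymbol{\tau}_T$. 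For (\ref{GUUG}) the centering contribution from $\mathbf{M}_n$ is $(n^{-1}\boldsymbol{\tau}_n^{\prime}\mathbf{U}_{nT}\mathbf{G}_T)^{\prime}(n^{-1}\boldsymbol{\tau}_n^{\prime}\mathbf{U}_{nT}\mathbf{G}_T)=O_p(T^{-1}n^{-1})$ (using $\left\Vert \mathbf{g}_{T,l}\right\Vert^2=O(T^{-1})$ and $n^{-2}\sum_i\sum_j|\sigma_{ij}|=O(n^{-1})$), leaving the entry $\mathbf{g}_{T,l}^{\prime}\mathbf{E}_{nT}\mathbf{g}_{T,m}$.

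Both reduced objects are mean zero, and I would compute their variances exactly, exploiting that $\mathrm{Cov}(a_{n,tt'},a_{n,ss'})$ is $O(n^{-1})$ when $\{t,t'\}=\{s,s'\}$ and vanishes otherwise by serial independence of $u_{it}$. Summing only the matching index sets and collapsing the weights via $\boldsymbol{\tau}_T^{\prime}\mathbf{g}_{T,l}=0$ and $\left\Vert \mathbf{g}_{T,l}\right\Vert^2=O(T^{-1})$, the variances come out as $O(T^{-2}n^{-1})$ in both cases, delivering the sharp $O_p(T^{-1}n^{-1/2})$. Throughout, the recentering contributions arising from $\mathbf{M}_n$ and from replacing $\bar u_{i\circ}$ by its demeaned version are treated separately and shown to be of strictly smaller order, so they do not affect the stated rates.
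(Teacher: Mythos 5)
Your proposal is correct, and for four of the six bounds it is essentially the paper's own argument: (\ref{ABeta}) via the $\ell_{1}$ bound $\sum_{i}\left\vert \eta_{i}\right\vert =O(n^{\alpha_{\eta}})$ rather than Cauchy--Schwarz, and (\ref{ABMubar}), (\ref{BUG}), (\ref{GUeta}) via conditional variances that reduce to quadratic forms $\sum_{i}\sum_{j}c_{i}c_{j}\sigma_{ij}$ controlled by Lemma \ref{L1} (your split $\sigma_{ij}=\gamma_{i}\gamma_{j}+\sigma_{v,ij}$ in (\ref{GUeta}) is interchangeable with the paper's bound $\boldsymbol{\eta}_{n}^{\prime}\mathbf{V}_{u}\boldsymbol{\eta}_{n}\leq\left\Vert \boldsymbol{\eta}_{n}\right\Vert ^{2}\lambda_{\max}(\mathbf{V}_{u})$). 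Where you genuinely depart from the paper is in (\ref{GUubar}) and (\ref{GUUG}). The paper proves (\ref{GUubar}) by reducing the $k$-th entry to $T^{-1}\sum_{t}(f_{kt}-\bar{f}_{k})b_{n,t}$ with $b_{n,t}=n^{-1}\sum_{i}(u_{it}\bar{u}_{i\circ}-T^{-1}\sigma_{i}^{2})$ and invoking $Var(b_{n,t})=O(T^{-1}n^{-1})$ from Lemma \ref{L1}, and proves (\ref{GUUG}) through the explicit three-term decomposition $T\delta_{kk^{\prime}}=A_{kk^{\prime}}-B_{kk^{\prime}}+C_{kk^{\prime}}$ (diagonal $a_{n,tt}$ terms, the $\mathbf{M}_{n}$ recentering, and off-diagonal $a_{n,tt^{\prime}}$ terms), each bounded by a separate variance computation. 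You instead treat both with a single calculation: write everything as bilinear forms in $\mathbf{E}_{nT}=n^{-1}\mathbf{U}_{nT}^{\prime}\mathbf{U}_{nT}-\bar{\sigma}_{n}^{2}\mathbf{I}_{T}$, use $\mathbf{G}_{T}^{\prime}\boldsymbol{\tau}_{T}=\mathbf{0}$ to annihilate the $\bar{\sigma}_{n}^{2}\mathbf{I}_{T}$ part in (\ref{GUubar}), exploit that $Cov(a_{n,tt^{\prime}},a_{n,ss^{\prime}})$ vanishes unless $\{t,t^{\prime}\}=\{s,s^{\prime}\}$ and is $O(n^{-1})$ otherwise, and collapse the weights via $\left\Vert \mathbf{g}_{T,l}\right\Vert ^{2}=O(T^{-1})$. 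This is equivalent in content (indeed $b_{n,t}=T^{-1}\left(  \mathbf{E}_{nT}\boldsymbol{\tau}_{T}\right)  _{t}$, so the paper's auxiliary object is yours in disguise), but your organization is arguably cleaner: it makes transparent why the rate is $T^{-1}n^{-1/2}$ rather than $n^{-1/2}$ (each $O(T^{-1/2})$-norm weight vector buys a factor $T^{-1/2}$), it dispenses with the separate $b_{n,t}$ results in Lemma \ref{L1}, and it handles the $\mathbf{M}_{n}$ corrections uniformly as $O_{p}(T^{-1}n^{-1})$ outer products; your checks of these corrections are also correct.

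One small repair is needed in your proof of (\ref{ABeta}): you bound the entry by $n^{-1}\left(  \sup_{i}\left\vert \beta_{ik}-\bar{\beta}_{k}\right\vert \right)  \sum_{i}\left\vert \eta_{i}\right\vert $, but Assumption \ref{loadings} only gives $\sup_{ik}E(\beta_{ik}^{2})<C$, not an almost-sure bound, so $\sup_{i}\left\vert \beta_{ik}-\bar{\beta}_{k}\right\vert $ need not be $O_{p}(1)$ as $n\rightarrow\infty$ (with only second moments it can grow polynomially in $n$, which would degrade your rate back toward the Cauchy--Schwarz one you were trying to beat). The paper's fix keeps your rate and is immediate: condition on $\boldsymbol{\eta}_{n}$, use the assumed independence of loadings and pricing errors to get $E\left(  \left\vert \pi_{k,n}\right\vert \left\vert \boldsymbol{\eta}_{n}\right.  \right)  \leq\left[  \sup_{i,k}E\left\vert \beta_{ik}-\bar{\beta}_{k}\right\vert \right]  \left(  n^{-1}\sum_{i}\left\vert \eta_{i}\right\vert \right)  =O(n^{-1+\alpha_{\eta}})$, and apply Markov's inequality. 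With that substitution your argument goes through exactly as stated.
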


\begin{proof}
{\small To establish (\ref{ABeta}) we first note that the $k^{th}$ element of
$n^{-1}\mathbf{B}_{n}^{\prime}\mathbf{M}_{n}\boldsymbol{\eta}_{n}$ can be
written as
\begin{equation}
\pi_{k,n}=n^{-1}\sum_{i=1}^{n}\left(  \beta_{ik}-\bar{\beta}_{k}\right)
\eta_{i}, \label{Aux1a}%
\end{equation}
where $\bar{\beta}_{k}=n^{-1}\boldsymbol{\tau}_{n}^{\prime}\boldsymbol{\beta
}_{\circ k}$. Since $\eta_{j}$ and $\beta_{ik}$ are distributed independently
for all $i$ and $j$, then $E\left(  \pi_{k,n}\right)  =0$, \footnote{Note that
$\sup_{ik}E\left\vert \beta_{ik}-\bar{\beta}_{k}\right\vert <C$ follows from
$\sup_{i}E\left\vert \beta_{ik}\right\vert ^{2}<C$, required by Assumption
\ref{loadings}.}
\[
E\left(  \left\vert \pi_{k,n}\right\vert \left\vert \boldsymbol{\eta}%
_{n}\right.  \right)  =n^{-1}\sum_{i=1}^{n}E\left\vert \beta_{ik}-\bar{\beta
}_{k}\right\vert \left\vert \eta_{i}\right\vert \leq\left[  \sup
_{i,k}E\left\vert \beta_{ik}-\bar{\beta}_{k}\right\vert \right]  \left(
n^{-1}\sum_{i=1}^{n}\left\vert \eta_{i}\right\vert \right)  =O\left(
n^{-1+\alpha_{\eta}}\right)  ,\text{ }%
\]
for $k=1,2,...,K,$ and by Markov inequality we have $n^{-1}\mathbf{B}%
_{n}^{\prime}\mathbf{M}_{n}\boldsymbol{\eta}_{n}=O_{p}\left(  n^{-1+\alpha
_{\eta}}\right)  $, as required. To establish (\ref{ABMubar}), noting that
\[
n^{-1}\overline{\mathbf{u}}_{n\circ}^{\prime}\mathbf{M}_{n}\mathbf{B}%
_{n}=\left[  n^{-1}\mathbf{\bar{u}}_{n\circ}^{\prime}\left(  \boldsymbol{\beta
}_{\circ1}-\mathbf{\tau}_{n}\bar{\beta}_{1}\right)  ,n^{-1}\mathbf{\bar{u}%
}_{n\circ}^{\prime}\left(  \boldsymbol{\beta}_{\circ2}-\mathbf{\tau}_{n}%
\bar{\beta}_{2}\right)  ,...,n^{-1}\mathbf{\bar{u}}_{n\circ}^{\prime}\left(
\boldsymbol{\beta}_{\circ K}-\mathbf{\tau}_{n}\bar{\beta}_{K}\right)  \right]
,
\]
and the $k^{th}$ element of $n^{-1}\mathbf{B}_{n}^{\prime}\mathbf{M}%
_{n}\overline{\mathbf{u}}_{n\circ}$ is given by $c_{k,nT}=n^{-1}\sum_{i=1}%
^{n}\left(  \beta_{ik}-\bar{\beta}_{k}\right)  \bar{u}_{i\circ}$. We have
$E\left(  c_{k,nT}\right)  =0$, and $Var\left(  c_{k,nT}\left\vert
\boldsymbol{\beta}_{\circ k}\right.  \right)  =n^{-2}\sum_{i=1}^{n}\sum
_{j=1}^{n}\left(  \beta_{ik}-\bar{\beta}_{k}\right)  \left(  \beta_{jk}%
-\bar{\beta}_{k}\right)  E\left(  \overline{u}_{i\circ}\overline{u}_{j\circ
}\right)  ,$ with $E\left(  \overline{u}_{i\circ}\overline{u}_{j\circ}\right)
=T^{-1}\sigma_{ij}$. Hence (recalling that $\mathbf{V}_{u}=(\sigma_{ij})$)%
\begin{equation}
Var\left(  c_{nT,k}\right)  =T^{-1}n^{-2}\sum_{i=1}^{n}\sum_{j=1}^{n}%
\sigma_{ij}E\left[  \left(  \beta_{ik}-\bar{\beta}_{k}\right)  \left(
\beta_{jk}-\bar{\beta}_{k}\right)  \right]  , \label{Vck}%
\end{equation}
and by Cauchy--Schwarz inequality and Assumption \ref{loadings} we have
$E\left[  \left(  \beta_{ik}-\bar{\beta}_{k}\right)  \left(  \beta_{jk}%
-\bar{\beta}_{k}\right)  \right]  \leq\left[  E\left(  \beta_{ik}-\bar{\beta
}_{k}\right)  ^{2}\right]  ^{1/2}\left[  E\left(  \beta_{jk}-\bar{\beta}%
_{k}\right)  ^{2}\right]  ^{1/2}<C$. Hence, $Var\left(  c_{nT,k}\right)  \leq
CT^{-1}n^{-2}\sum_{i=1}^{n}\sum_{j=1}^{n}\left\vert \sigma_{ij}\right\vert $.
Since $\alpha_{\gamma}<1/2$, then by (\ref{NormVu2}) we have $n^{-1}\sum
_{i=1}^{n}\sum_{j=1}^{n}\left\vert \sigma_{ij}\right\vert =O(1)$ and
$Var\left(  c_{nT,k}\right)  =O(T^{-1}n^{-1})$. Thus by Markov inequality it
follows that $c_{nT,k}=O_{p}\left(  T^{-1/2}n^{-1/2}\right)  $, for
$k=1,2,...,K$, and (\ref{ABMubar}) follows. To establish (\ref{BUG}) using
$\mathbf{G}_{T}=$ $\mathbf{M}_{T}\mathbf{F}\left(  \mathbf{F}^{\prime
}\mathbf{M}_{T}\mathbf{F}\right)  ^{-1}$ we have%
\[
n^{-1}\mathbf{B}_{n}^{\prime}\mathbf{M}_{n}\mathbf{U}_{nT}\mathbf{G}%
_{T}=n^{-1}T^{-1}\left(  \mathbf{B}_{n}^{\prime}\mathbf{M}_{n}\mathbf{U}%
_{nT}\mathbf{M}_{T}\mathbf{F}\right)  \left(  T^{-1}\mathbf{F}^{\prime
}\mathbf{M}_{T}\mathbf{F}\right)  ^{-1},
\]
and since $\left(  T^{-1}\mathbf{F}^{\prime}\mathbf{M}_{T}\mathbf{F}\right)
^{-1}$ is a positive definite matrix then it is sufficient to consider the
probability order of the $K\times K$ matrix $T^{-1}n^{-1}\mathbf{B}%
_{n}^{\prime}\mathbf{M}_{n}\mathbf{U}_{nT}\mathbf{M}_{T}\mathbf{F}$ $=$
$(q_{kk^{\prime}})$, where
\[
q_{kk^{\prime}}=T^{-1}n^{-1}\sum_{i=1}^{n}\sum_{t=1}^{T}\left(  \beta
_{ik}-\bar{\beta}_{k}\right)  \left(  f_{k^{\prime}t}-\bar{f}_{k^{\prime}%
}\right)  u_{it}=n^{-1}\sum_{i=1}^{n}\left(  \beta_{ik}-\bar{\beta}%
_{k}\right)  \psi_{iT},
\]
where $\psi_{iT}=T^{-1}\sum_{t=1}^{T}\left(  f_{k^{\prime}t}-\bar
{f}_{k^{\prime}}\right)  u_{it}$. Thus%
\begin{equation}
Var\left(  q_{kk^{\prime}}\left\vert \boldsymbol{\beta}_{\circ k}\right.
\right)  =n^{-2}\sum_{i=1}^{n}\sum_{j=1}^{n}\left(  \beta_{ik}-\bar{\beta}%
_{k}\right)  \left(  \beta_{jk}-\bar{\beta}_{k}\right)  Cov\left(  \psi
_{iT},\psi_{jT}\right)  . \label{Vqkk'}%
\end{equation}
Also%
\[
Cov\left(  \psi_{iT},\psi_{jT}\left\vert \mathbf{F}\right.  \right)  =E\left(
\psi_{iT}\psi_{jT}\left\vert \mathbf{F}\right.  \right)  =T^{-2}\sum_{t=1}%
^{T}\sum_{t^{\prime}=1}^{T}\left(  f_{k^{\prime}t}-\bar{f}_{k^{\prime}%
}\right)  \left(  f_{k^{\prime}t^{\prime}}-\bar{f}_{k^{\prime}}\right)
E\left(  u_{it}u_{jt^{\prime}}\right)  ,
\]
and $E(u_{it}u_{jt^{\prime}})=\sigma_{ij}$ for $t=t^{\prime}$ and $0$
otherwise ($t\neq t^{\prime})$. Then,%
\[
Cov\left(  \psi_{iT},\psi_{jT}\left\vert \mathbf{F}\right.  \right)
=T^{-2}\sum_{t=1}^{T}\left(  f_{k^{\prime}t}-\bar{f}_{k^{\prime}}\right)
^{2}E\left(  u_{it}u_{jt}\right)  =\sigma_{ij}T^{-2}\sum_{t=1}^{T}\left(
f_{k^{\prime}t}-\bar{f}_{k^{\prime}}\right)  ^{2}.
\]
Using this result in (\ref{Vqkk'}) now yields%
\[
Var\left(  q_{kk^{\prime}}\right)  =T^{-1}n^{-1}\left[  T^{-1}\sum_{t=1}%
^{T}E\left(  f_{k^{\prime}t}-\bar{f}_{k^{\prime}}\right)  ^{2}\right]
n^{-1}\sum_{i=1}^{n}\sum_{j=1}^{n}\sigma_{ij}E\left[  \left(  \beta_{ik}%
-\bar{\beta}_{k}\right)  \left(  \beta_{jk}-\bar{\beta}_{k}\right)  \right]
.
\]
The first term of the above is bounded since by assumption $\mathbf{f}_{t}$ is
stationary. The second term is bounded as established above (see the
derivations below (\ref{Vck})). Hence $Var\left(  q_{kk^{\prime}}\right)
=O(T^{-1}n^{-1})$ and result (\ref{BUG}) follows. To establish (\ref{GUeta})
note that $n^{-1}\mathbf{G}_{T}^{\prime}\mathbf{U}_{nT}^{\prime}\mathbf{M}%
_{n}\boldsymbol{\eta}_{n}=\left(  T^{-1}\mathbf{F}^{\prime}\mathbf{M}%
_{T}\mathbf{F}\right)  ^{-1}\left(  n^{-1}T^{-1}\mathbf{F}^{\prime}%
\mathbf{M}_{T}\mathbf{U}_{nT}^{\prime}\mathbf{M}_{n}\boldsymbol{\eta}%
_{n}\right)  ,$ where $\left(  T^{-1}\mathbf{F}^{\prime}\mathbf{M}%
_{T}\mathbf{F}\right)  ^{-1}=O_{p}(1)$. Also the $k^{th}$ element of
$n^{-1}T^{-1}\mathbf{F}^{\prime}\mathbf{M}_{T}\mathbf{U}_{nT}^{\prime
}\mathbf{M}_{n}\boldsymbol{\eta}_{n}$ is given by $p_{k}=T^{-1}\sum_{t=1}%
^{T}\left(  f_{kt}-\bar{f}_{k}\right)  c_{n,t},$ where $c_{n,t}=n^{-1}%
\sum_{i=1}^{n}u_{it}\eta_{i}=n^{-1}\boldsymbol{\eta}_{n}^{\prime}%
\mathbf{u}_{\circ t}$. Under Assumption (\ref{Errors}) $f_{kt}$ and $c_{n,t}$
are distributed independently, $E(p_{n,t})=0$, and $c_{n,t}$ are also serially
uncorrelated we have $\ Var(p_{k})=T^{-2}\sum_{t=1}^{T}E\left(  f_{kt}-\bar
{f}_{k}\right)  ^{2}Var(c_{n,t})$. Also noting that $Var(c_{n,t}\left\vert
\mathbf{\eta}\right.  )=n^{-2}\mathbf{\eta}_{n}^{\prime}\mathbf{V}%
_{u}\mathbf{\eta}_{n}$, then%
\[
Var(p_{k})=\left[  T^{-1}\sum_{t=1}^{T}E\left(  f_{kt}-\bar{f}_{k}\right)
^{2}\right]  \left(  T^{-1}n^{-2}\boldsymbol{\eta}_{n}^{\prime}\mathbf{V}%
_{u}\boldsymbol{\eta}_{n}\right)  .
\]
The first term is bounded, and it follows that
\[
Var(p_{k}\left\vert \boldsymbol{\eta}_{n}\right.  )<CT^{-1}n^{-2}%
\boldsymbol{\eta}_{n}^{\prime}\mathbf{V}_{u}\boldsymbol{\eta}_{n}\leq
CT^{-1}n^{-2}\left(  \boldsymbol{\eta}_{n}^{\prime}\boldsymbol{\eta}%
_{n}\right)  \lambda_{\max}(\mathbf{V}_{u})=O\left(  T^{-1}n^{-2+\alpha_{\eta
}+\alpha_{\gamma}}\right)  ,
\]
and (\ref{GUeta}) follows. To establish (\ref{GUubar}) note that%
\[
n^{-1}\mathbf{G}_{T}^{\prime}\mathbf{U}_{nT}^{\prime}\mathbf{M}_{n}%
\overline{\mathbf{u}}_{n\circ}=\left(  T^{-1}\mathbf{F}^{\prime}\mathbf{M}%
_{T}\mathbf{F}\right)  ^{-1}\left(  n^{-1}T^{-1}\mathbf{F}^{\prime}%
\mathbf{M}_{T}\mathbf{U}_{nT}^{\prime}\mathbf{M}_{n}\overline{\mathbf{u}%
}_{n\circ}\right)  ,
\]
and the $k^{th}$ element of $n^{-1}T^{-1}\mathbf{F}^{\prime}\mathbf{M}%
_{T}\mathbf{U}_{nT}^{\prime}\mathbf{M}_{n}\overline{\mathbf{u}}_{n\circ
}=(d_{k})$ is given by%
\begin{align*}
d_{k}  &  =\frac{1}{nT}\sum_{t=1}^{T}(f_{kt}-\bar{f}_{k})\sum_{i=1}^{n}\left(
u_{it}-\bar{u}_{i\circ}\right)  \bar{u}_{i\circ}=\frac{1}{nT}\sum_{t=1}%
^{T}\sum_{i=1}^{n}(f_{kt}-\bar{f}_{k})(u_{it}\bar{u}_{i\circ}-\bar{u}_{i\circ
}\bar{u}_{i\circ})\\
&  =\frac{1}{T}\sum_{t=1}^{T}(f_{kt}-\bar{f}_{k})\left[  n^{-1}\sum_{i=1}%
^{n}\left(  u_{it}\bar{u}_{i\circ}-T^{-1}\bar{\sigma}_{n}^{2}\right)  \right]
=\frac{1}{T}\sum_{t=1}^{T}(f_{kt}-\bar{f}_{k})b_{n,t},
\end{align*}
where $b_{n,t}=n^{-1}\sum_{i=1}^{n}\left(  u_{it}\bar{u}_{i\circ}-T^{-1}%
\sigma_{i}^{2}\right)  $. By assumption $u_{it}$ and $f_{kt^{\prime}}$ are
distributed independently and $b_{n,t}$ are also serially independent. Then
$E(d_{k})=0,$ and
\[
Var(d_{k})=\frac{1}{T^{2}}\sum_{t=1}^{T}(f_{kt}-\bar{f}_{k})^{2}Var(b_{n,t}).
\]
Now using (\ref{Vb}) in Lemma \ref{L1}, $Var(b_{n,t})=O\left(  T^{-1}%
n^{-1}\right)  ,$ and overall we have $Var(d_{k})=O\left(  T^{-2}%
n^{-1}\right)  ,$ and by Markov inequality $d_{k}=O_{p}\left(  T^{-1}%
n^{-1/2}\right)  $, as required. Finally, consider (\ref{GUUG})\textbf{ }and
note that
\begin{align*}
&  \mathbf{G}_{T}^{\prime}\left(  n^{-1}\mathbf{U}_{nT}^{\prime}\mathbf{M}%
_{n}\mathbf{U}_{nT}-\bar{\sigma}_{n}^{2}\mathbf{I}_{T}\right)  \mathbf{G}%
_{T}\\
&  =T^{-2}\left(  T^{-1}\mathbf{F}^{\prime}\mathbf{M}_{T}\mathbf{F}\right)
^{-1}\mathbf{F}^{\prime}\mathbf{M}_{T}\left[  n^{-1}\mathbf{U}_{nT}^{\prime
}\mathbf{M}_{n}\mathbf{U}_{nT}-\bar{\sigma}_{n}^{2}\mathbf{I}_{T}\right]
\mathbf{M}_{T}\mathbf{F}\left(  T^{-1}\mathbf{F}^{\prime}\mathbf{M}%
_{T}\mathbf{F}\right)  ^{-1}.
\end{align*}
Since by assumption $\left(  T^{-1}\mathbf{F}^{\prime}\mathbf{M}_{T}%
\mathbf{F}\right)  ^{-1}$ is a positive definite matrix for all $T$, and $K$
is fixed then it is sufficient to derive the probability order of the
$(k,k^{\prime})$ element of of the $K\times K$ matrix $\mathbf{\Delta}%
_{nT\ }=(\delta_{kk^{\prime}})$
\begin{align}
\mathbf{\Delta}_{nT\ }  &  =T^{-2}\mathbf{F}^{\prime}\mathbf{M}_{T}\left[
n^{-1}\mathbf{U}_{nT}^{\prime}\mathbf{M}_{n}\mathbf{U}_{nT}-\bar{\sigma}%
_{n}^{2}\mathbf{I}_{T}\right]  \mathbf{M}_{T}\mathbf{F}\nonumber\\
&  =T^{-1}\left[  n^{-1}T^{-1}\mathbf{F}^{\prime}\mathbf{M}_{T}\mathbf{U}%
_{nT}^{\prime}\mathbf{M}_{n}\mathbf{U}_{nT}\mathbf{M}_{T}\mathbf{F}%
-\bar{\sigma}_{n}^{2}T^{-1}\mathbf{F}^{\prime}\mathbf{M}_{T}\mathbf{F}\right]
\nonumber\\
&  =T^{-1}\left(  n^{-1}T^{-1}\mathbf{AM}_{n}\mathbf{A}^{\prime}\mathbf{-}%
\bar{\sigma}_{n}^{2}T^{-1}\mathbf{\mathbf{F}^{\prime}M}_{T}\mathbf{\mathbf{F}%
}\right)  =T^{-1}\left(  \mathbf{S-R}\right)  , \label{dkk'}%
\end{align}
where $\mathbf{A=\mathbf{F}^{\prime}\mathbf{M}}_{T}\mathbf{U}_{nT}^{\prime
}=(a_{ki})$. Denote the $(k,k^{\prime})$ elements of $\mathbf{S}=n^{-1}%
T^{-1}\mathbf{AM}_{n}\mathbf{A}^{\prime}$ and $\mathbf{R}=\bar{\sigma}_{n}%
^{2}T^{-1}\mathbf{\mathbf{F}^{\prime}M}_{T}\mathbf{\mathbf{F}}$ by
$s_{kk^{\prime}}$ and $r_{kk^{\prime}}$, respectively, and note that%
\begin{equation}
r_{kk^{\prime}}=\left(  n^{-1}\sum_{i=1}^{n}\sigma_{i}^{2}\right)  \left[
T^{-1}\sum_{t=1}^{T}(f_{kt}-\bar{f}_{k})(f_{k^{\prime}t}-\bar{f}_{k^{\prime}%
})\right]  \label{rkk'}%
\end{equation}
and $s_{kk^{\prime}}=n^{-1}T^{-1}\sum_{i=1}^{n}\left(  a_{ki}-\bar{a}%
_{k}\right)  a_{k^{\prime}i},$ where $a_{ki}=\sum_{t=1}^{T}\tilde{f}%
_{kt}u_{it}$, and $\bar{a}_{k}=\sum_{t=1}^{T}\tilde{f}_{kt}\bar{u}_{\circ t}$,
where $\tilde{f}_{kt}=f_{kt}-\bar{f}_{k}$. Then%
\begin{align}
s_{kk^{\prime}}  &  =n^{-1}T^{-1}\sum_{i=1}^{n}\left[  \sum_{t=1}^{T}\tilde
{f}_{kt}\left(  u_{it}-\bar{u}_{\circ t}\right)  \right]  \left[  \sum
_{t=1}^{T}\tilde{f}_{k^{\prime}t}u_{it},\right] \nonumber\\
&  =n^{-1}T^{-1}\sum_{t=1}^{T}\sum_{t^{\prime}=1}^{T}\sum_{i=1}^{n}\left(
u_{it}-\bar{u}_{\circ t}\right)  u_{it^{\prime}}\tilde{f}_{kt}\tilde
{f}_{k^{\prime}t^{\prime}}\nonumber\\
&  =n^{-1}T^{-1}\sum_{t=1}^{T}\sum_{t^{\prime}=1}^{T}\sum_{i=1}^{n}%
u_{it}u_{it^{\prime}}\tilde{f}_{kt}\tilde{f}_{k^{\prime}t^{\prime}}-T^{-1}%
\sum_{t=1}^{T}\sum_{t^{\prime}=1}^{T}\bar{u}_{\circ t}\bar{u}_{\circ
t^{\prime}}\tilde{f}_{kt}\tilde{f}_{k^{\prime}t^{\prime}}. \label{skk'}%
\end{align}
Using this result and $r_{kk^{\prime}}$ given by (\ref{rkk'}) in (\ref{dkk'})
now yields%
\begin{align*}
T\delta_{kk^{\prime}}  &  =(s_{kk^{\prime}}-r_{kk^{\prime}})\\
&  =n^{-1}T^{-1}\sum_{t=1}^{T}\sum_{t^{\prime}=1}^{T}\sum_{i=1}^{n}%
u_{it}u_{it^{\prime}}\tilde{f}_{kt}\tilde{f}_{k^{\prime}t^{\prime}}-T^{-1}%
\sum_{t=1}^{T}\sum_{t^{\prime}=1}^{T}\bar{u}_{\circ t}\bar{u}_{\circ
t^{\prime}}\tilde{f}_{kt}\tilde{f}_{k^{\prime}t^{\prime}}\\
&  -\bar{\sigma}_{n}^{2}\left(  T^{-1}\sum_{t=1}^{T}\tilde{f}_{kt}\tilde
{f}_{k^{\prime}t}\right)  -\bar{\sigma}_{n}^{2}\left(  T^{-1}\sum_{t=1}%
^{T}\tilde{f}_{kt}\tilde{f}_{k^{\prime}t}\right)  .
\end{align*}
Also writing the first two terms as sums of the elements with $t=t^{\prime}$
and those with $t\neq t^{\prime}$, we have
\begin{align}
T\delta_{kk^{\prime}}  &  =n^{-1}T^{-1}\sum_{t=1}^{T}\sum_{i=1}^{n}u_{it}%
^{2}\tilde{f}_{kt}\tilde{f}_{k^{\prime}t}-\bar{\sigma}_{n}^{2}\left(
T^{-1}\sum_{t=1}^{T}\tilde{f}_{kt}\tilde{f}_{k^{\prime}t}\right)  -T^{-1}%
\sum_{t=1}^{T}\bar{u}_{\circ t}^{2}\tilde{f}_{kt}\tilde{f}_{k^{\prime}%
t}+\nonumber\\
&  +n^{-1}T^{-1}\sum_{t\neq t^{\prime}}^{T}\sum_{i=1}^{n}u_{it}u_{it^{\prime}%
}\tilde{f}_{kt}\tilde{f}_{k^{\prime}t^{\prime}}-T^{-1}\sum_{t\neq t^{\prime}%
}^{T}\bar{u}_{\circ t^{\prime}}\tilde{f}_{kt}\tilde{f}_{k^{\prime}t^{\prime}%
}\nonumber\\
&  =T^{-1}\sum_{t=1}^{T}a_{n,tt}\tilde{f}_{kt}\tilde{f}_{k^{\prime}t}%
-T^{-1}\sum_{t=1}^{T}\sum_{t^{\prime}=1}^{T}\bar{u}_{\circ t}\bar{u}_{\circ
t^{\prime}}\tilde{f}_{kt}\tilde{f}_{k^{\prime}t}+T^{-1}\sum_{t\neq t^{\prime}%
}^{T}a_{n,tt^{\prime}}\tilde{f}_{kt}\tilde{f}_{k^{\prime}t^{\prime}%
},\nonumber\\
T\delta_{kk^{\prime}}  &  =A_{kk^{\prime}}-B_{kk^{\prime}}+C_{kk^{\prime}},
\label{delta_kk'}%
\end{align}
where $a_{n,tt}=n^{-1}\sum_{i=1}^{n}\left(  u_{it}^{2}-\sigma_{i}^{2}\right)
,$ and $a_{n,tt^{\prime}}=n^{-1}\sum_{i=1}^{n}u_{it}u_{it^{\prime}}$. Since
$u_{it}$ and $f_{kt^{\prime}}$ are distributed independently then $E\left(
A_{kk^{\prime}}\right)  =0$, and
\[
Var\left(  A_{kk^{\prime}}\right)  =T^{-2}\sum_{t=1}^{T}\sum_{t^{\prime}%
=1}^{T}E\left(  a_{n,tt}a_{n,tt^{\prime}}\right)  E\left(  \tilde{f}%
_{kt}\tilde{f}_{kt^{\prime}}\tilde{f}_{k^{\prime}t}\tilde{f}_{k^{\prime
}t^{\prime}}\right)  .
\]
Since $u_{it}$ is serially independent, then $E\left(  a_{n,tt}a_{n,tt^{\prime
}}\right)  =0$ for $t\neq t^{\prime}$ and%
\[
Var\left(  A_{kk^{\prime}}\right)  =T^{-2}\sum_{t=1}^{T}E\left(  a_{n,tt}%
^{2}\right)  E\left(  \tilde{f}_{kt}^{2}\tilde{f}_{k^{\prime}t}^{2}\right)
\leq\left[  \sup_{k,k^{\prime},t}E\left(  \tilde{f}_{kt}^{2}\tilde
{f}_{k^{\prime}t}^{2}\right)  \right]  \left[  T^{-2}\sum_{t=1}^{T}E\left(
a_{n,tt}^{2}\right)  \right]  .
\]
$\sup_{k,k^{\prime},t}E\left(  \tilde{f}_{kt}^{2}\tilde{f}_{k^{\prime}t}%
^{2}\right)  <C$ since by assumption $sup_{t,k}E\left(  \tilde{f}_{kt}%
^{4}\right)  <C$. Also (using (\ref{aa}) of Lemma \ref{L1})
\[
E\left(  a_{n,tt}^{2}\right)  =n^{-2}\sum_{i=1}^{n}\sum_{j=1}^{n}Cov\left(
u_{it}^{2},u_{jt}^{2}\right)  =O\left(  n^{-1}\right)  ,
\]
and we have $A_{kk^{\prime}}=O_{p}\left(  T^{-1}n^{-1/2}\right)  $. Now write
$B_{kk^{\prime}}$ as%
\[
B_{kk^{\prime}}=T^{-1}\sum_{t=1}^{T}\sum_{t^{\prime}=1}^{T}\bar{u}_{\circ
t}\bar{u}_{\circ t^{\prime}}\tilde{f}_{kt}\tilde{f}_{k^{\prime}t}=\left(
T^{-1/2}\sum_{t=1}^{T}\bar{u}_{\circ t}\tilde{f}_{kt}\right)  \left(
T^{-1/2}\sum_{t=1}^{T}\bar{u}_{\circ t}\tilde{f}_{k^{\prime}t}\right)
=q_{\circ k}q_{\circ k^{\prime}},
\]
where $q_{\circ k}=T^{-1/2}\sum_{t=1}^{T}\bar{u}_{\circ t}\tilde{f}_{kt}$.
Also $E(q_{\circ k})=0,$ and $Var\left(  q_{\circ k}\right)  =T^{-1}\sum
_{t=1}^{T}E\left(  \bar{u}_{\circ t}^{2}\right)  E\left(  \tilde{f}_{kt}%
^{2}\right)  =E\left(  \bar{u}_{\circ t}^{2}\right)  \left[  T^{-1}\sum
_{t=1}^{T}E\left(  \tilde{f}_{kt}^{2}\right)  \right]  $, where $E\left(
\bar{u}_{\circ t}^{2}\right)  =n^{-2}\sum_{i=1}^{n}\sum_{j=1}^{n}\sigma
_{ij}=O\left(  n^{-1}\right)  $. Hence, $T^{-1/2}\sum_{t=1}^{T}\bar{u}_{\circ
t}\tilde{f}_{kt}=O_{p}\left(  n^{-1/2}\right)  $, and $B_{kk^{\prime}}%
=O_{p}\left(  n^{-1}\right)  $. Finally, consider $C_{kk^{\prime}}=T^{-1}%
\sum_{t\neq t^{\prime}}^{T}a_{n,tt^{\prime}}\tilde{f}_{kt}\tilde{f}%
_{k^{\prime}t^{\prime}},$ and note that $E\left(  a_{n,tt^{\prime}}\right)
=n^{-1}\sum_{i=1}^{n}E\left(  u_{it}u_{it^{\prime}}\right)  =0$ for all $t\neq
t^{\prime},$which ensures that $E\left(  C_{kk^{\prime}}\right)  =0$.
Further,
\[
Var\left(  C_{kk^{\prime}}\right)  =T^{-2}\sum_{t\neq t^{\prime}}^{T}%
\sum_{s\neq s^{\prime}}^{T}E\left(  a_{n,tt^{\prime}}a_{n,ss^{\prime}}\right)
E\left(  \tilde{f}_{kt}\tilde{f}_{k^{\prime}t^{\prime}}\tilde{f}_{ks}\tilde
{f}_{k^{\prime}s^{\prime}}\right)  =T^{-2}\sum_{t\neq t^{\prime}}^{T}E\left(
a_{n,tt^{\prime}}^{2}\right)  E\left(  \tilde{f}_{kt}^{2}\tilde{f}_{k^{\prime
}t^{\prime}}^{2}\right)  ,
\]
$E\left(  a_{n,tt^{\prime}}^{2}\right)  =n^{-2}\sum_{i=1}^{n}\sum_{j=1}%
^{n}\sigma_{ij}^{2}$. See (\ref{Vatt'}) in Lemma \ref{L1}. Since by assumption
$E\left(  \tilde{f}_{kt}^{4}\right)  <C$, then we have%
\[
Var\left(  C_{kk^{\prime}}\right)  =\left(  n^{-2}\sum_{i=1}^{n}\sum_{j=1}%
^{n}\sigma_{ij}^{2}\right)  \left[  T^{-2}\sum_{t\neq t^{\prime}}^{T}E\left(
\tilde{f}_{kt}^{2}\tilde{f}_{k^{\prime}t^{\prime}}^{2}\right)  \right]
<C\left(  n^{-2}\sum_{i=1}^{n}\sum_{j=1}^{n}\sigma_{ij}^{2}\right)  ,
\]
and using (\ref{NormVu2}), it follows that $Var\left(  C_{kk^{\prime}}\right)
=O\left(  n^{-1}\right)  $, and hence $C_{kk^{\prime}}=O_{p}\left(
n^{-1/2}\right)  $. Using this result and the ones obtained for $A_{kk^{\prime
}}$ and $B_{kk^{\prime}}$ in (\ref{delta_kk'}) now yields}%
\[
{\small \delta_{kk^{\prime}}=T^{-1}\left[  O_{p}\left(  T^{-1}n^{-1/2}\right)
+O_{p}\left(  n^{-1}\right)  +O_{p}\left(  n^{-1/2}\right)  \right]
=O_{p}\left(  T^{-1}n^{-1/2}\right)  ,}%
\]
{\small as required. }
\end{proof}

\section{{\protect\small Proof of theorems in the paper}}

\subsection{{\protect\small Proof of theorem \ref{TFMbias}}\label{PFMbias}}

{\small Consider the two-pass estimator of $\boldsymbol{\lambda}$ defined by
(\ref{lambdahat}), which we reproduce here for convenience
\[
\boldsymbol{\hat{\lambda}}_{nT}=\left(  \mathbf{\hat{B}}_{nT}^{\prime
}\mathbf{M}_{n}\mathbf{\hat{B}}_{nT}\right)  ^{-1}\mathbf{\hat{B}}%
_{nT}^{\prime}\mathbf{M}_{n}\mathbf{\bar{r}}_{n\circ},
\]
where $\mathbf{\hat{B}}_{nT}=(\boldsymbol{\hat{\beta}}_{1},\boldsymbol{\hat
{\beta}}_{2},...,\boldsymbol{\hat{\beta}}_{n})^{\prime},$ $\mathbf{\bar{r}%
}_{n\circ}=(\bar{r}_{1\circ},\bar{r}_{2\circ},...,\bar{r}_{n\circ})^{\prime},$
$\bar{r}_{i\circ}=T^{-1}\sum_{t=1}^{T}r_{it},$%
\begin{equation}
\boldsymbol{\hat{\beta}}_{i}=(\mathbf{F}^{\prime}\mathbf{M}_{T}\mathbf{F)}%
^{-1}\mathbf{F}^{\prime}\mathbf{M}_{T}\mathbf{r}_{i\circ}, \label{betahati}%
\end{equation}
and $\mathbf{r}_{i\circ}=(r_{i1},r_{i2},...,r_{iT})^{\prime}.$ Under the
factor model (\ref{mfi})%
\begin{equation}
\mathbf{r}_{i\circ}=\boldsymbol{\alpha}_{i}\boldsymbol{\tau}_{T}%
+\mathbf{F}\boldsymbol{\beta}_{i}+\mathbf{u}_{i\circ}, \label{ridot}%
\end{equation}
where $\mathbf{u}_{i\circ}=(u_{i1},u_{i2},...,u_{iT})^{\prime},$ and hence
\begin{equation}
\boldsymbol{\hat{\beta}}_{i}=\boldsymbol{\beta}_{i}+(\mathbf{F}^{\prime
}\mathbf{M}_{T}\mathbf{F)}^{-1}\mathbf{F}^{\prime}\mathbf{M}_{T}%
\mathbf{u}_{i\circ}. \label{betahati1}%
\end{equation}
Stacking these results over $i$ yields%
\begin{equation}
\mathbf{\hat{B}}_{nT}=\mathbf{B}_{n}+\mathbf{U}_{nT}\mathbf{G}_{T},
\label{Bhat1}%
\end{equation}
where $\mathbf{U}_{nT}=\mathbf{(u}_{1\circ},\mathbf{u}_{2\circ},...,\mathbf{u}%
_{n\circ})^{\prime},$ and
\begin{equation}
\mathbf{G}_{T}=\mathbf{M}_{T}\mathbf{F}(\mathbf{F}^{\prime}\mathbf{M}%
_{T}\mathbf{F)}^{-1}. \label{GT}%
\end{equation}
Also%
\begin{equation}
\mathbf{r}_{\circ t}=\boldsymbol{\alpha}_{n}+\mathbf{B}_{n}\mathbf{f}%
_{t}+\mathbf{u}_{\circ t}, \label{rdott}%
\end{equation}
where $\mathbf{u}_{\circ t}=(u_{1t},u_{2t},...,u_{nt})^{\prime}$,
}$\boldsymbol{\alpha}${\small $_{n}=(\alpha_{1}$,$\alpha_{2},...,\alpha
_{n})^{\prime}$. Using (\ref{ai}) in the paper,%
\begin{equation}
\boldsymbol{\alpha}_{n}=c\boldsymbol{\tau}_{n}+\mathbf{B}_{n}\boldsymbol{\phi
}_{0}+\boldsymbol{\eta}_{n}, \label{Aai}%
\end{equation}
and (\ref{rdott}) can be written as $\mathbf{r}_{\circ t}=c\boldsymbol{\tau
}_{n}+\mathbf{B}_{n}\left(  \boldsymbol{\phi}_{0}\boldsymbol{+}\mathbf{f}%
_{t}\right)  +\mathbf{u}_{\circ t}+\boldsymbol{\eta}_{n}$. Now averaging over
$t$ yields
\begin{equation}
\mathbf{\bar{r}}_{n\circ}\text{\textbf{ }}=c\boldsymbol{\tau}_{n}%
+\mathbf{B}_{n}\boldsymbol{\lambda}_{T}^{\ast}+\mathbf{\bar{u}}_{n\circ
}\mathbf{+}\boldsymbol{\eta}_{n}, \label{rbar1}%
\end{equation}
where $\mathbf{\bar{r}}_{n\circ}=T^{-1}\sum_{t=1}^{T}\mathbf{r}_{\circ
t}=\left(  \bar{r}_{1\circ},\bar{r}_{2\circ},...,\bar{r}_{n\circ}\right)
^{\prime}$, $\mathbf{\bar{u}}_{n\circ}=T^{-1}\sum_{t=1}^{T}\mathbf{u}_{\circ
t}=\left(  \bar{u}_{1\circ},\bar{u}_{2\circ},...,\bar{u}_{n\circ}\right)
^{\prime}$, and
\begin{equation}
\boldsymbol{\lambda}_{T}^{\ast}=\boldsymbol{\phi}_{0}+\boldsymbol{\hat{\mu}%
}_{T}=\boldsymbol{\lambda}_{0}+\left(  \boldsymbol{\hat{\mu}}_{T}%
-\boldsymbol{\mu}_{0}\right)  . \label{lambdastar}%
\end{equation}
Using (\ref{rbar1}) in (\ref{lambdahat}) we have
\begin{align}
\boldsymbol{\hat{\lambda}}_{nT}  &  =\left(  \mathbf{\hat{B}}_{nT}^{\prime
}\mathbf{M}_{n}\mathbf{\hat{B}}_{nT}\right)  ^{-1}\mathbf{\hat{B}}%
_{nT}^{\prime}\mathbf{M}_{n}\left(  c\boldsymbol{\tau}_{n}+\mathbf{B}%
_{n}\boldsymbol{\lambda}_{T}^{\ast}+\mathbf{\bar{u}}_{n\circ}\mathbf{+}%
\boldsymbol{\eta}_{n}\right) \nonumber\\
&  =\left(  \mathbf{\hat{B}}_{nT}^{\prime}\mathbf{M}_{n}\mathbf{\hat{B}}%
_{nT}\right)  ^{-1}\mathbf{\hat{B}}_{nT}^{\prime}\mathbf{M}_{n}\left[
\mathbf{\hat{B}}_{nT}\boldsymbol{\lambda}_{T}^{\ast}-\left(  \mathbf{\hat{B}%
}_{nT}-\mathbf{B}_{n}\right)  \boldsymbol{\lambda}_{T}^{\ast}+\mathbf{\bar{u}%
}_{n\circ}\mathbf{+}\boldsymbol{\eta}_{n}\right]  . \label{lamb2}%
\end{align}
Also using (\ref{lambdastar}), and recalling that $\boldsymbol{\lambda}%
_{0}=\boldsymbol{\phi}_{0}+\boldsymbol{\mu}_{0}$, we have $\boldsymbol{\hat
{\lambda}}_{nT}-\boldsymbol{\lambda}_{T}^{\ast}=\boldsymbol{\hat{\lambda}%
}_{nT}-\boldsymbol{\lambda}_{0}-\left(  \boldsymbol{\lambda}_{T}^{\ast
}-\boldsymbol{\lambda}_{0}\right)  =\left(  \boldsymbol{\hat{\lambda}}%
_{nT}-\boldsymbol{\lambda}_{0}\right)  -\left(  \boldsymbol{\phi}%
_{0}+\boldsymbol{\hat{\mu}}_{T}-\boldsymbol{\lambda}_{0}\right)  ,$ which
yields $\boldsymbol{\hat{\lambda}}_{nT}-\boldsymbol{\lambda}_{0}%
=\boldsymbol{\hat{\lambda}}_{nT}-\boldsymbol{\lambda}_{T}^{\ast}+\left(
\boldsymbol{\hat{\mu}}_{T}-\boldsymbol{\mu}_{0}\right)  .$ Furthermore,
$\boldsymbol{\hat{\lambda}}_{nT}-\boldsymbol{\lambda}_{T}^{\ast}%
=\boldsymbol{\hat{\phi}}_{nT}-\boldsymbol{\phi}_{0},$where $\boldsymbol{\hat
{\phi}}_{nT}$ is the two-step estimator of $\boldsymbol{\phi}_{0}$ given by
(\ref{phihat1}). This results follows noting that $\boldsymbol{\hat{\lambda}%
}_{nT}=\boldsymbol{\hat{\phi}}_{nT}+\boldsymbol{\hat{\mu}}_{T}$, and
$\boldsymbol{\lambda}_{T}^{\ast}=\boldsymbol{\phi}_{0}+\boldsymbol{\hat{\mu}%
}_{T}$. Therefore,%
\begin{equation}
\boldsymbol{\hat{\lambda}}_{nT}-\boldsymbol{\lambda}_{0}=\left(
\boldsymbol{\hat{\phi}}_{nT}-\boldsymbol{\phi}_{0}\right)  +\left(
\boldsymbol{\hat{\mu}}_{T}-\boldsymbol{\mu}_{0}\right)  . \label{lam0}%
\end{equation}
We focus on deriving the asymptotic distribution of $\boldsymbol{\hat{\lambda
}}_{nT}-\boldsymbol{\lambda}_{T}^{\ast}=\boldsymbol{\hat{\phi}}_{nT}%
-\boldsymbol{\phi}_{0}$ since the panel (cross section) dimension does not
apply to the second component, $\left(  \boldsymbol{\hat{\mu}}_{T}%
-\boldsymbol{\mu}_{0}\right)  $. Now using (\ref{Bhat1}) in (\ref{lamb2}) and
after some simplifications we have%
\[
\left(  \mathbf{\hat{B}}_{nT}^{\prime}\mathbf{M}_{n}\mathbf{\hat{B}}%
_{nT}\right)  \boldsymbol{\hat{\lambda}}_{nT}=\mathbf{\hat{B}}_{nT}^{\prime
}\mathbf{M}_{n}\left[  \mathbf{\hat{B}}_{nT}\boldsymbol{\lambda}_{T}^{\ast
}-\left(  \mathbf{\hat{B}}_{nT}-\mathbf{B}_{n}\right)  \boldsymbol{\lambda
}_{T}^{\ast}+\mathbf{\bar{u}}_{n\circ}\mathbf{+}\boldsymbol{\eta}_{n}\right]
,
\]
or%
\begin{equation}
\left(  n^{-1}\mathbf{\hat{B}}_{nT}^{\prime}\mathbf{M}_{n}\mathbf{\hat{B}%
}_{nT}\right)  \left(  \boldsymbol{\hat{\phi}}_{nT}-\boldsymbol{\phi}%
_{0}\right)  =\mathbf{p}_{nT}, \label{phidis1}%
\end{equation}
where%
\begin{align}
\mathbf{p}_{nT}  &  =n^{-1}\mathbf{B}_{n}^{\prime}\mathbf{M}_{n}%
\boldsymbol{\eta}_{n}+n^{-1}\mathbf{G}_{T}^{\prime}\mathbf{U}_{nT}^{\prime
}\mathbf{M}_{n}\boldsymbol{\eta}_{n}+n^{-1}\mathbf{B}_{n}^{\prime}%
\mathbf{M}_{n}\mathbf{\bar{u}}_{n\circ}+n^{-1}\mathbf{G}_{T}^{\prime
}\mathbf{U}_{nT}^{\prime}\mathbf{M}_{n}\mathbf{\bar{u}}_{n\circ}\label{pnT}\\
&  -n^{-1}\mathbf{B}_{n}^{\prime}\mathbf{M}_{n}\mathbf{U}_{nT}\mathbf{G}%
_{T}\boldsymbol{\lambda}_{T}^{\ast}-n^{-1}\mathbf{G}_{T}^{\prime}%
\mathbf{U}_{nT}^{\prime}\mathbf{M}_{n}\mathbf{U}_{nT}\mathbf{G}_{T}%
\boldsymbol{\lambda}_{T}^{\ast},\nonumber
\end{align}
also
\begin{align}
n^{-1}\mathbf{\hat{B}}_{nT}^{\prime}\mathbf{M}_{n}\mathbf{\hat{B}}_{nT}  &
=n^{-1}\left(  \mathbf{B}_{n}\mathbf{+U}_{nT}\mathbf{G}_{T}\right)  ^{\prime
}\mathbf{M}_{n}\left(  \mathbf{B}_{n}\mathbf{+U}_{nT}\mathbf{G}_{T}\right)
\label{B'MB}\\
&  =n^{-1}\left(  \mathbf{B}_{n}^{\prime}\mathbf{M}_{n}\mathbf{B}_{n}\right)
+n^{-1}\left(  \mathbf{G}_{T}^{\prime}\mathbf{U}_{nT}^{\prime}\mathbf{M}%
_{n}\mathbf{B}_{n}\right)  +\nonumber\\
&  n^{-1}\left(  \mathbf{B}_{n}^{\prime}\mathbf{M}_{n}\mathbf{U}%
_{nT}\mathbf{G}_{T}\right)  +n^{-1}\left(  \mathbf{G}_{T}^{\prime}%
\mathbf{U}_{nT}^{\prime}\mathbf{M}_{n}\mathbf{U}_{nT}\mathbf{G}_{T}\right)
,\nonumber
\end{align}
Now using the results in Lemma \ref{L2} for case where all the observed
factors are strong, for a fixed $T$ and as $n\rightarrow\infty$ we have%
\begin{equation}
n^{-1}\left(  \mathbf{\hat{B}}_{nT}^{\prime}\mathbf{M}_{n}\mathbf{\hat{B}%
}_{nT}\right)  =\boldsymbol{\Sigma}_{\beta\beta}+\bar{\sigma}^{2}%
\mathbf{G}_{T}^{\prime}\mathbf{G}_{T}+o_{p}(1), \label{Bhat'MBhat}%
\end{equation}
where, using (\ref{GT}),%
\begin{equation}
\mathbf{G}_{T}^{\prime}\mathbf{G}_{T}=\frac{1}{T}\left(  \frac{\mathbf{F}%
^{\prime}\mathbf{M}_{T}\mathbf{F}}{T}\right)  ^{-1}. \label{GG}%
\end{equation}
Similarly, for the terms on the right hand side of (\ref{pnT}) we have%
\[
\mathbf{p}_{nT}=-\frac{\bar{\sigma}^{2}}{T}\left(  \frac{\mathbf{F}^{\prime
}\mathbf{M}_{T}\mathbf{F}}{T}\right)  ^{-1}\boldsymbol{\lambda}_{T}^{\ast
}+O_{p}\left(  n^{-1+\alpha_{\eta}}\right)  +O_{p}\left(  T^{-1/2}%
n^{-1/2}\right)  +O_{p}\left(  T^{-1/2}n^{-1+\frac{\alpha_{\eta}%
+\alpha_{\gamma}}{2}}\right)  .
\]
It is now easily seen that for a fixed $T$, and if $\alpha_{\eta}<1$ and
$\alpha_{\gamma}<1/2$, then as $n\rightarrow\infty$
\[
\mathbf{p}_{nT}\rightarrow_{p}-\frac{\bar{\sigma}^{2}}{T}\left(
\frac{\mathbf{F}^{\prime}\mathbf{M}_{T}\mathbf{F}}{T}\right)  ^{-1}%
\boldsymbol{\lambda}_{T}^{\ast}.
\]
Also, for a fixed $T$ by Assumption \ref{factors}, $\frac{\bar{\sigma}^{2}}%
{T}\left(  \frac{\mathbf{FM}_{T}\mathbf{F}}{T}\right)  ^{-1}$ is a positive
definite matrix, and by (\ref{Bhat'MBhat}) $n^{-1}\mathbf{\hat{B}}%
_{nT}^{\prime}\mathbf{M}_{n}\mathbf{\hat{B}}_{nT}\rightarrow_{p}%
\boldsymbol{\Sigma}_{\beta\beta}+\frac{\bar{\sigma}^{2}}{T}\left(
\frac{\mathbf{FM}_{T}\mathbf{F}}{T}\right)  ^{-1}$ which is also a positive
definite matrix, noting that under Assumption \ref{loadings}
$\boldsymbol{\Sigma}_{\beta\beta}$ is a positive definite matrix. Using these
results in (\ref{phidis1}) we now have%
\[
\boldsymbol{\hat{\phi}}_{nT}-\boldsymbol{\phi}_{0}=-\frac{\bar{\sigma}^{2}}%
{T}\left[  \boldsymbol{\Sigma}_{\beta\beta}+\bar{\sigma}^{2}\frac{1}{T}\left(
\frac{\mathbf{F}^{\prime}\mathbf{M}_{T}\mathbf{F}}{T}\right)  ^{-1}\right]
^{-1}\left(  \frac{\mathbf{F}^{\prime}\mathbf{M}_{T}\mathbf{F}}{T}\right)
^{-1}\boldsymbol{\lambda}_{T}^{\ast}+o_{p}(1)\text{, for a fixed }T\text{ as
}n\rightarrow\infty\text{.}%
\]
The bias of estimating $\mathbf{\lambda}_{0}$ by the two-step estimator also
contains the bias of estimating $\boldsymbol{\mu}_{0}$. Using the above result
in (\ref{lam0})\ we now have (for a fixed $T$ and as $n\rightarrow\infty$)
\[
\boldsymbol{\hat{\lambda}}_{nT}-\boldsymbol{\lambda}_{0}=\left(
\boldsymbol{\hat{\mu}}_{T}-\boldsymbol{\mu}_{0}\right)  -\frac{\bar{\sigma
}^{2}}{T}\left[  \boldsymbol{\Sigma}_{\beta\beta}+\bar{\sigma}^{2}\frac{1}%
{T}\left(  \frac{\mathbf{F}^{\prime}\mathbf{M}_{T}\mathbf{F}}{T}\right)
^{-1}\right]  ^{-1}\left(  \frac{\mathbf{F}^{\prime}\mathbf{M}_{T}\mathbf{F}%
}{T}\right)  ^{-1}\boldsymbol{\lambda}_{T}^{\ast}+o_{p}(1),
\]
which establishes Theorem \ref{TFMbias}. }

\subsection{{\protect\small Proof of theorem \ref{Thzig}}\label{ProofThzig}}

{\small Using the expression for $\hat{u}_{it}$ given by (\ref{res}), we have
$\hat{u}_{it}=\mathit{\alpha}_{i}-\hat{\alpha}_{iT}-\left(  \boldsymbol{\hat
{\beta}}_{i,T}-\boldsymbol{\beta}_{i}\right)  ^{\prime}\mathbf{f}_{t}+u_{it}$.
Since $\hat{u}_{it}$ are OLS residuals then for each $i$, we also have
$T^{-1}\sum_{t=1}^{T}\hat{u}_{it}=0$, and the above can be written
equivalently as $\hat{u}_{it}=u_{it}-\bar{u}_{i\circ}-\left(  \boldsymbol{\hat
{\beta}}_{i,T}-\boldsymbol{\beta}_{i}\right)  ^{\prime}\left(  \mathbf{f}%
_{t}-\boldsymbol{\hat{\mu}}_{T}\right)  ,$ for $i=1,2,...,n,$ and stacking
over $i$ now yields
\begin{equation}
\mathbf{\hat{u}}_{t}=\mathbf{u}_{t}-\mathbf{\bar{u}}-\left(  \mathbf{\hat{B}%
}_{nT}\mathbf{-B}_{n}\right)  \left(  \mathbf{f}_{t}-\boldsymbol{\hat{\mu}%
}_{T}\right)  =\mathbf{u}_{t}-\mathbf{\bar{u}}-\mathbf{U}_{nT}\mathbf{G}%
_{T}\left(  \mathbf{f}_{t}-\boldsymbol{\hat{\mu}}_{T}\right)  , \label{uhat}%
\end{equation}
and stacking over $t$%
\[
\mathbf{\hat{U}}_{nT}=\mathbf{U}_{nT}\mathbf{M}_{T}-\mathbf{U}_{nT}%
\mathbf{G}_{T}\mathbf{F}^{\prime}\mathbf{M}_{T}=\mathbf{U}_{nT}\left(
\mathbf{M}_{T}-\mathbf{G}_{T}\mathbf{F}^{\prime}\mathbf{M}_{T}\right)  .
\]
But $\mathbf{G}_{T}\mathbf{=M}_{T}\mathbf{F}\left(  \mathbf{F}^{\prime
}\mathbf{M}_{T}\mathbf{F}\right)  ^{-1}$ , and we have%
\[
\mathbf{\hat{U}}_{nT}=\mathbf{U}_{nT}\mathbf{R}_{T},\text{ }\mathbf{R}%
_{T}=\mathbf{M}_{T}-\mathbf{M}_{T}\mathbf{F}\left(  \mathbf{F}^{\prime
}\mathbf{M}_{T}\mathbf{F}\right)  ^{-1}\mathbf{F}^{\prime}\mathbf{M}_{T},
\]
where $\mathbf{R}_{T}^{2}=\mathbf{R}_{T}=\mathbf{R}_{T}^{\prime}$, $Tr\left(
\mathbf{R}_{T}\right)  =T-1-K$. Then%
\[
\widehat{\bar{\sigma}}_{nT}^{2}=\frac{\sum_{t=1}^{T}\sum_{i=1}^{n}\hat{u}%
_{it}^{2}}{n(T-K-1)}=\frac{Tr\left(  n^{-1}\mathbf{\hat{U}}_{nT}^{\prime
}\mathbf{\hat{U}}_{nT}\right)  }{T-K-1}.
\]
Also
\begin{align*}
n^{-1}T^{-1}E\left[  Tr\left(  \mathbf{U}_{nT}^{\prime}\mathbf{U}_{nT}\right)
\right]   &  =n^{-1}T^{-1}E\left(  \sum_{t=1}^{T}\sum_{i=1}^{n}u_{it}%
^{2}\right)  =n^{-1}\sum_{i=1}^{n}\sigma_{i}^{2}=\bar{\sigma}_{n}^{2},\\
E\left(  n^{-1}\mathbf{U}_{nT}^{\prime}\mathbf{U}_{nT}\right)   &  =n^{-1}%
\sum_{i=1}^{n}E\left(  \mathbf{u}_{i\circ}\mathbf{u}_{i\circ}^{\prime}\right)
=\bar{\sigma}_{n}^{2}\mathbf{I}_{T}.
\end{align*}
Let $v=T-K-1$ and note that%
\begin{align*}
v\widehat{\bar{\sigma}}_{nT}^{2}  &  =Tr\left(  n^{-1}\mathbf{\hat{U}}%
_{nT}^{\prime}\mathbf{\hat{U}}_{nT}\right)  =Tr\left(  n^{-1}\mathbf{U}%
_{nT}^{\prime}\mathbf{U}_{nT}\mathbf{R}_{T}\right)  =\\
&  =Tr\left(  n^{-1}\mathbf{U}_{nT}^{\prime}\mathbf{U}_{nT}\mathbf{M}%
_{T}\right)  -Tr\left(  n^{-1}\mathbf{F}^{\prime}\mathbf{M}_{T}\mathbf{U}%
_{nT}^{\prime}\mathbf{U}_{nT}\mathbf{M}_{T}\mathbf{F}\left(  \mathbf{F}%
^{\prime}\mathbf{M}_{T}\mathbf{F}\right)  ^{-1}\right) \\
&  =Tr\left(  n^{-1}\mathbf{U}_{nT}^{\prime}\mathbf{U}_{nT}\right)
-T^{-1}\mathbf{\tau}_{T}^{\prime}\left(  n^{-1}\mathbf{U}_{nT}^{\prime
}\mathbf{U}_{nT}\right)  \mathbf{\tau}_{T}-Tr\left[  \mathbf{Q}^{\prime
}\left(  n^{-1}\mathbf{U}_{nT}^{\prime}\mathbf{U}_{nT}\right)  \mathbf{Q}%
\right]
\end{align*}
where $\mathbf{Q=M}_{T}\mathbf{F}\left(  T^{-1}\mathbf{F}^{\prime}%
\mathbf{M}_{T}\mathbf{F}\right)  ^{-1/2}$. Consider the first term and note
that%
\begin{equation}
n^{-1}Tr\left(  \mathbf{U}_{nT}^{\prime}\mathbf{U}_{nT}\right)  =\sum
_{t=1}^{T}\left[  n^{-1}\sum_{i=1}^{n}\left(  u_{it}^{2}-\sigma_{i}%
^{2}\right)  \right]  +T\bar{\sigma}_{n}^{2}. \label{TrUU'}%
\end{equation}
Similarly%
\begin{align*}
T^{-1}\mathbf{\tau}_{T}^{\prime}\left(  n^{-1}\mathbf{U}_{nT}^{\prime
}\mathbf{U}_{nT}\right)  \mathbf{\tau}_{T}  &  =T^{-1}n^{-1}\mathbf{\tau}%
_{T}^{\prime}\left[  \mathbf{U}_{nT}^{\prime}\mathbf{U}_{nT}-E\left(
\mathbf{U}_{nT}^{\prime}\mathbf{U}_{nT}\right)  \right]  \mathbf{\tau}%
_{T}+\bar{\sigma}_{n}^{2},\\
Tr\left[  \mathbf{Q}^{\prime}\left(  n^{-1}\mathbf{U}_{nT}^{\prime}%
\mathbf{U}_{nT}\right)  \mathbf{Q}\right]   &  =Tr\left[  \mathbf{Q}^{\prime
}\left[  \mathbf{U}_{nT}^{\prime}\mathbf{U}_{nT}-E\left(  \mathbf{U}%
_{nT}^{\prime}\mathbf{U}_{nT}\right)  \right]  \mathbf{Q}\right]
+K\bar{\sigma}_{n}^{2}.
\end{align*}
Hence
\begin{equation}
\widehat{\bar{\sigma}}_{nT}^{2}-\bar{\sigma}_{n}^{2}=\left(  T/v\right)
\left(  a_{nT}+b_{nT}+c_{nT}\right)  , \label{ziggap}%
\end{equation}
where $a_{nT}=T^{-1}\sum_{t=1}^{T}\left[  n^{-1}\sum_{i=1}^{n}\left(
u_{it}^{2}-\sigma_{i}^{2}\right)  \right]  ,$ $\ b_{nT}=T^{-2}n^{-1}%
\mathbf{\tau}_{T}^{\prime}\left[  \mathbf{U}_{nT}^{\prime}\mathbf{U}%
_{nT}-E\left(  \mathbf{U}_{nT}^{\prime}\mathbf{U}_{nT}\right)  \right]
\mathbf{\tau}_{T},$ and $c_{nT}=T^{-1}n^{-1}Tr\left[  \mathbf{Q}^{\prime
}\left[  \mathbf{U}_{nT}^{\prime}\mathbf{U}_{nT}-E\left(  \mathbf{U}%
_{nT}^{\prime}\mathbf{U}_{nT}\right)  \right]  \mathbf{Q}\right]  $. Due to
the independence of $\mathbf{F\,}$and $\mathbf{U}_{nT}$, we have $E(a_{nT}%
)=0$, $E(b_{nT})=0$ and $E(c_{nT})=0$, and for any fixed $n$ and $T$ $E\left(
\widehat{\bar{\sigma}}_{nT}^{2}-\bar{\sigma}_{n}^{2}\right)  =0$, and for a
fixed $T,$ $\lim_{n\rightarrow\infty}E\left(  \widehat{\bar{\sigma}}_{nT}%
^{2}\right)  =\bar{\sigma}^{2},$ and result (\ref{unbiasZig}) follows. To
establish the probability order of $\widehat{\bar{\sigma}}_{nT}^{2}%
-\bar{\sigma}_{n}^{2}$, we consider the probability orders of $a_{nT},$
$b_{nT}$, and $c_{nT}$ in turn, noting that that $T/v=T/(T-K-1)=O(1)$. For
$a_{nT}$, using result (\ref{aa}) in Lemma \ref{L1}, and noting that $u_{it}$
are serially independent we have%
\begin{equation}
a_{nT}=O_{p}\left(  T^{-1/2}n^{-1/2}\right)  . \label{anT}%
\end{equation}
Consider now $b_{nT}$ and note that $b_{nT}=T^{-2}n^{-1}\sum_{i=1}^{n}\left[
\mathbf{\tau}_{T}^{\prime}\mathbf{u}_{i\circ}\mathbf{u}_{i\circ}^{\prime
}\mathbf{\tau}_{T}-E\left(  \mathbf{\tau}_{T}^{\prime}\mathbf{u}_{i\circ
}\mathbf{u}_{i\circ}^{\prime}\mathbf{\tau}_{T}\right)  \right]  .$ Also
$\mathbf{\tau}_{T}^{\prime}\mathbf{u}_{i\circ}\mathbf{u}_{i\circ}^{\prime
}\mathbf{\tau}_{T}=\sum_{t=1}^{T}\sum_{t^{\prime}=1}^{T}u_{it}u_{it^{\prime}}%
$, and%
\begin{align*}
b_{nT}  &  =T^{-2}n^{-1}\sum_{i=1}^{n}\sum_{t=1}^{T}\sum_{t^{\prime}=1}%
^{T}\left[  u_{it}u_{it^{\prime}}-E\left(  u_{it}u_{it^{\prime}}\right)
\right] \\
&  =T^{-2}\sum_{t=1}^{T}n^{-1}\sum_{i=1}^{n}\left(  u_{it}^{2}-\sigma_{i}%
^{2}\right)  +T^{-2}\sum_{t\neq t^{\prime}}^{T}\left(  n^{-1}\sum_{i=1}%
^{n}u_{it}u_{it^{\prime}}\right) \\
&  =T^{-2}\sum_{t=1}^{T}a_{n,tt}+T^{-2}\sum_{t\neq t^{\prime}}^{T}%
a_{n,tt^{\prime}},
\end{align*}
where $a_{n,tt}$ and $a_{n,tt^{\prime}}$ are both shown in Lemma \ref{L1} to
be $O_{p}(n^{-1/2})$. See equations (\ref{aa}) and (\ref{aa2}). Therefore,
given that $a_{n,tt}$ and $a_{n,tt^{\prime}}$ with $t\neq t^{\prime}$ are also
distributed independently over $t$ we have
\begin{equation}
b_{nT}=O_{p}(n^{-1/2}T^{-1/2})\text{.} \label{bnT}%
\end{equation}
Denote the $k^{th}$ column of $\mathbf{Q}$ by $\mathbf{q}_{k}=(q_{k1}%
,q_{2k},...,q_{Tk})^{\prime}$ (a $T\times1$ vector) and write $c_{nT}$ as%
\begin{align*}
c_{nT}  &  =\sum_{k=1}^{K}T^{-1}\left[  n^{-1}\sum_{i=1}^{n}\mathbf{q}%
_{k}^{\prime}\left[  \mathbf{u}_{i\circ}\mathbf{u}_{i\circ}^{\prime}-E\left(
\mathbf{u}_{i\circ}\mathbf{u}_{i\circ}^{\prime}\right)  \right]
\mathbf{q}_{k}\right] \\
&  =\sum_{k=1}^{K}T^{-1}\left[  n^{-1}\sum_{i=1}^{n}\sum_{t=1}^{T}%
\sum_{t^{\prime}=1}^{T}q_{kt}q_{kt^{\prime}}\left[  u_{it}u_{it^{\prime}%
}-E\left(  u_{it}u_{it^{\prime}}\right)  \right]  \right]  .
\end{align*}
Consider the $k^{th}$ term of the above sum, and note that
\[
c_{nT,k}=T^{-1}n^{-1}\sum_{i=1}^{n}\sum_{t=1}^{T}\sum_{t^{\prime}=1}^{T}%
q_{kt}q_{kt^{\prime}}\left[  u_{it}u_{it^{\prime}}-E\left(  u_{it}%
u_{it^{\prime}}\right)  \right]  =T^{-1}\sum_{t=1}^{T}\sum_{t^{\prime}=1}%
^{T}q_{kt}q_{kt^{\prime}}a_{n,tt^{\prime}},
\]
where $a_{n,tt^{\prime}}=n^{-1}\sum_{i=1}^{n}\left[  u_{it}u_{it^{\prime}%
}-E\left(  u_{it}u_{it^{\prime}}\right)  \right]  $. is defined by (\ref{aa})
and (\ref{aa2}) in Lemma \ref{L1}, with $Var(a_{n,tt^{\prime}})=O\left(
n^{-1}\right)  $ for all $t$ and $t^{\prime}$. Also $a_{n,tt^{\prime}}$ and
$a_{n,ss^{\prime}}$ are distributed independently if $t$ or $t^{\prime}$
differ from $s$ or $s^{\prime}$. Therefore, for all $t$ and $t^{\prime}$ for
all $t$
\[
Var\left(  c_{nT,k}\right)  =T^{-2}\sum_{t=1}^{T}\sum_{t^{\prime}=1}^{T}%
q_{kt}^{2}q_{kt^{\prime}}^{2}Var(a_{n,tt^{\prime}})\leq O\left(
n^{-1}\right)  \left(  T^{-2}\sum_{t=1}^{T}q_{kt}^{2}\right)  .
\]
But it is easily verified that $\mathbf{Q}^{\prime}\mathbf{Q=I}_{K}$ which
yields $\sum_{t=1}^{T}q_{kt}^{2}=1,$ and $Var\left(  c_{nT,k}\right)
=O\left(  T^{-2}n^{-1}\right)  $. Hence, $c_{nT,k}=O_{p}(T^{-1}n^{-1/2})$, for
$k=1,2,...,K$, which establishes that $c_{nT}=O_{p}(T^{-1}n^{-1/2}\dot{)}$.
The order result in (\ref{orderzig}) now follows using this result,
(\ref{anT}) and (\ref{bnT}) in (\ref{ziggap}). }

\subsection{{\protect\small Proof of theorem \ref{Tfi}}\label{ProofTfi}}

{\small The bias-corrected estimator of $\boldsymbol{\phi}_{0}$ is given by
(\ref{phitilda}) which we reproduce here and re-write as
\begin{equation}
\mathbf{H}_{nT\ }\left(  \boldsymbol{\tilde{\phi}}_{nT}-\boldsymbol{\phi}%
_{0}\right)  =\mathbf{s}_{nT}, \label{A-phitilda}%
\end{equation}
where%
\begin{equation}
\mathbf{s}_{nT}=\frac{\mathbf{\hat{B}}_{nT}^{\prime}\mathbf{M}_{n}%
\boldsymbol{\hat{\alpha}}_{nT}}{n}+T^{-1}\widehat{\bar{\sigma}}_{nT}%
^{2}\left(  \frac{\mathbf{F}^{\prime}\mathbf{M}_{T}\mathbf{F}}{T}\right)
^{-1}\mathbf{\hat{\mu}}_{T}-\mathbf{H}_{nT\ }\boldsymbol{\phi}_{0},
\label{snT0}%
\end{equation}%
\begin{equation}
\mathbf{H}_{nT\ }=\frac{\mathbf{\hat{B}}_{nT}^{\prime}\mathbf{M}%
_{n}\mathbf{\hat{B}}_{nT}}{n}-T^{-1}\widehat{\bar{\sigma}}_{nT}^{2}\left(
\frac{\mathbf{F}^{\prime}\mathbf{M}_{T}\mathbf{F}}{T}\right)  ^{-1}.
\label{A-HnT}%
\end{equation}
Also $\mathbf{\hat{a}}_{nT}=\mathbf{\bar{r}}_{n\circ}-\mathbf{\hat{B}}%
_{nT}\mathbf{\hat{\mu}}_{T}$, and $\mathbf{\bar{r}}_{n\circ}$\textbf{
}$=c\boldsymbol{\tau}_{n}+\mathbf{B}_{n}\boldsymbol{\lambda}_{T}^{\ast
}+\mathbf{\bar{u}}_{n\circ}\mathbf{+}\boldsymbol{\eta}_{n}$ (see
(\ref{rbar1})). Using these results and noting that $\boldsymbol{\lambda}%
_{T}^{\ast}=\boldsymbol{\lambda}_{0}+\left(  \boldsymbol{\hat{\mu}}%
_{T}-\boldsymbol{\mu}_{0}\right)  =\boldsymbol{\phi}_{0}+\boldsymbol{\hat{\mu
}}_{T}$, we have
\[
\mathbf{\hat{\alpha}}_{nT}=c\boldsymbol{\tau}_{n}+\mathbf{B}_{n}%
\boldsymbol{\phi}_{0}+\mathbf{\bar{u}}_{n\circ}+\boldsymbol{\eta}_{n}-\left(
\mathbf{\hat{B}}_{nT}-\mathbf{B}_{n}\right)  \boldsymbol{\hat{\mu}}_{T}%
\]
and $\mathbf{\bar{u}}_{n\circ}=(\bar{u}_{1\circ},\bar{u}_{2\circ},...,\bar
{u}_{n\circ})^{\prime}$. Then,%
\begin{align}
\mathbf{\hat{B}}_{nT}^{\prime}\mathbf{M}_{n}\boldsymbol{\hat{\alpha}}_{nT}  &
=\left(  \mathbf{\hat{B}}_{nT}^{\prime}\mathbf{M}_{n}\mathbf{B}_{n}\right)
\boldsymbol{\phi}_{0}+\mathbf{\hat{B}}_{nT}^{\prime}\mathbf{M}_{n}%
\mathbf{\bar{u}}_{n\circ}\label{B'Ma}\\
&  +\mathbf{\hat{B}}_{nT}^{\prime}\mathbf{M}_{n}\boldsymbol{\eta}%
_{n}-\mathbf{\hat{B}}_{nT}^{\prime}\mathbf{M}_{n}\left(  \mathbf{\hat{B}}%
_{nT}-\mathbf{B}_{n}\right)  \boldsymbol{\hat{\mu}}_{T}.\nonumber
\end{align}
Also
\begin{equation}
\mathbf{\hat{B}}_{nT}=\mathbf{B}_{n}\mathbf{+U}_{nT}\mathbf{G}_{T}, \label{UG}%
\end{equation}
where $\mathbf{U}_{nT}=\mathbf{(u}_{1\circ},\mathbf{u}_{2\circ},...,\mathbf{u}%
_{n\circ})^{\prime}$ and $\mathbf{G}_{T}$ is defined by (\ref{GG}). Using
these results together with (\ref{A-HnT}), the right hand side of
(\ref{A-phitilda}) can be written as
\begin{align*}
\mathbf{s}_{nT}  &  =\frac{\mathbf{B}_{n}^{\prime}\mathbf{M}_{n}%
\boldsymbol{\eta}_{n}}{n}+\frac{\mathbf{G}_{T}^{\prime}\mathbf{U}_{nT}%
^{\prime}\mathbf{M}_{n}\boldsymbol{\eta}_{n}}{n}+\frac{\mathbf{G}_{T}^{\prime
}\mathbf{U}_{nT}^{\prime}\mathbf{M}_{n}\mathbf{\bar{u}}_{n\circ}}{n}%
-\frac{\mathbf{B}_{n}^{\prime}\mathbf{M}_{n}\mathbf{U}_{nT}\mathbf{G}%
_{T}\boldsymbol{\lambda}_{T}^{\ast}}{n}\\
&  -\mathbf{G}_{T}^{\prime}\left(  \frac{\mathbf{U}_{nT}^{\prime}%
\mathbf{M}_{n}\mathbf{U}_{nT}}{n}-\widehat{\bar{\sigma}}_{nT}^{2}\right)
\mathbf{G}_{T}\boldsymbol{\lambda}_{T}^{\ast},
\end{align*}
where the last term can be decomposed as%
\[
\mathbf{G}_{T}^{\prime}\left(  \frac{\mathbf{U}_{nT}^{\prime}\mathbf{M}%
_{n}\mathbf{U}_{nT}}{n}-\widehat{\bar{\sigma}}_{nT}^{2}\right)  \mathbf{G}%
_{T}\boldsymbol{\lambda}_{T}^{\ast}=\mathbf{G}_{T}^{\prime}\left(
\frac{\mathbf{U}_{nT}^{\prime}\mathbf{M}_{n}\mathbf{U}_{nT}}{n}-\bar{\sigma
}_{n}^{2}\right)  \mathbf{G}_{T}\boldsymbol{\lambda}_{T}^{\ast}-\left(
\widehat{\bar{\sigma}}_{nT}^{2}-\bar{\sigma}_{n}^{2}\right)  \mathbf{G}%
_{T}^{\prime}\mathbf{G}_{T}\boldsymbol{\lambda}_{T}^{\ast}.
\]
Similarly, using (\ref{B'MB}), we have%
\begin{align*}
\mathbf{H}_{nT\ }  &  =n^{-1}\left(  \mathbf{B}_{n}^{\prime}\mathbf{M}%
_{n}\mathbf{B}_{n}\right)  +n^{-1}\left(  \mathbf{G}_{T}^{\prime}%
\mathbf{U}_{nT}^{\prime}\mathbf{M}_{n}\mathbf{B}_{n}\right)  +\\
&  n^{-1}\left(  \mathbf{B}_{n}^{\prime}\mathbf{M}_{n}\mathbf{U}%
_{nT}\mathbf{G}_{T}\right)  +\mathbf{G}_{T}^{\prime}\left(  \frac
{\mathbf{U}_{nT}^{\prime}\mathbf{M}_{n}\mathbf{U}_{nT}}{n}-\bar{\sigma}%
_{n}^{2}\right)  \mathbf{G}_{T}-T^{-1}\left(  \widehat{\bar{\sigma}}_{nT}%
^{2}-\bar{\sigma}_{n}^{2}\right)  \left(  \frac{\mathbf{F}^{\prime}%
\mathbf{M}_{T}\mathbf{F}}{T}\right)  ^{-1}.
\end{align*}
Using Theorem \ref{Thzig} and since by assumption $T^{-1}\mathbf{F}^{\prime
}\mathbf{M}_{T}\mathbf{F}$ is positive definite, then%
\[
\left(  \widehat{\bar{\sigma}}_{nT}^{2}-\bar{\sigma}_{n}^{2}\right)  \left(
\frac{\mathbf{F}^{\prime}\mathbf{M}_{T}\mathbf{F}}{T}\right)  ^{-1}%
=O_{p}\left(  T^{-1/2}n^{-1/2}\right)  .
\]
Also using results in Lemma \ref{L2}, we have%
\begin{equation}
\mathbf{H}_{nT}=n^{-1}\left(  \mathbf{B}_{n}^{\prime}\mathbf{M}_{n}%
\mathbf{B}_{n}\right)  +O_{p}\left(  T^{-1/2}n^{-1/2}\right)  . \label{AHnT}%
\end{equation}
Hence, $\mathbf{H}_{nT}\rightarrow_{p}\mathbf{\Sigma}_{\beta\beta}$ for a
fixed $T$ as $n\rightarrow\infty$, so long as $\alpha_{\gamma}<1/2$. Note also
that by Assumption $\mathbf{\Sigma}_{\beta\beta}$ is positive definite.
Further%
\begin{align}
\mathbf{s}_{nT}  &  =n^{-1}\left(  \mathbf{B}_{n}^{\prime}\mathbf{M}%
_{n}\mathbf{\bar{u}}_{n\circ}-\mathbf{B}_{n}^{\prime}\mathbf{M}_{n}%
\mathbf{U}_{nT}\mathbf{G}_{T}\boldsymbol{\lambda}_{T}^{\ast}\right)
+n^{-1}\left(  \mathbf{B}_{n}^{\prime}+\mathbf{G}_{T}^{\prime}\mathbf{U}%
_{nT}^{\prime}\right)  \mathbf{M}_{n}\boldsymbol{\eta}_{n}+n^{-1}%
\mathbf{G}_{T}^{\prime}\mathbf{U}_{nT}^{\prime}\mathbf{M}_{n}\mathbf{\bar{u}%
}_{n\circ}\label{snT1}\\
&  -\mathbf{G}_{T}^{\prime}\left(  \frac{\mathbf{U}_{nT}^{\prime}%
\mathbf{M}_{n}\mathbf{U}_{nT}}{n}-\bar{\sigma}_{n}^{2}\right)  \mathbf{G}%
_{T}\boldsymbol{\lambda}_{T}^{\ast}+T^{-1}\left(  \widehat{\bar{\sigma}}%
_{nT}^{2}-\bar{\sigma}_{n}^{2}\right)  \left(  \frac{\mathbf{F}^{\prime
}\mathbf{M}_{T}\mathbf{F}}{T}\right)  ^{-1}\boldsymbol{\lambda}_{T}^{\ast
}.\nonumber
\end{align}
Using (\ref{ABeta}) and (\ref{GUeta}) (in Lemma \ref{L2}) we have%
\[
n^{-1}\mathbf{B}_{n}^{\prime}\mathbf{M}_{n}\boldsymbol{\eta}_{n}%
+n^{-1}\mathbf{G}_{T}^{\prime}\mathbf{U}_{nT}^{\prime}\mathbf{M}%
_{n}\boldsymbol{\eta}_{n}=O_{p}\left(  n^{-1+\alpha_{\eta}}\right)
+O_{p}\left(  T^{-1/2}n^{-1+\frac{\alpha_{\eta+\alpha_{\gamma}}}{2}}\right)
,
\]
and (\ref{GUubar}) and (\ref{GUUG})%
\[
n^{-1}\mathbf{G}_{T}^{\prime}\mathbf{U}_{nT}^{\prime}\mathbf{M}_{n}%
\mathbf{\bar{u}}_{n\circ}-\mathbf{G}_{T}^{\prime}\left(  \frac{\mathbf{U}%
_{nT}^{\prime}\mathbf{M}_{n}\mathbf{U}_{nT}}{n}-\bar{\sigma}_{n}^{2}\right)
\mathbf{G}_{T}\boldsymbol{\lambda}_{T}^{\ast}=O_{p}\left(  T^{-1}%
n^{-1/2}\right)  .
\]
Further by (\ref{orderzig})%
\begin{equation}
T^{-1}\left(  \widehat{\bar{\sigma}}_{nT}^{2}-\bar{\sigma}_{n}^{2}\right)
\left(  \frac{\mathbf{F}^{\prime}\mathbf{M}_{T}\mathbf{F}}{T}\right)
^{-1}\boldsymbol{\lambda}_{T}^{\ast}=O_{p}\left(  T^{-3/2}n^{-1/2}\right)  .
\label{ziggap1}%
\end{equation}
Hence
\begin{align}
\mathbf{s}_{nT}  &  =n^{-1}\left(  \mathbf{B}_{n}^{\prime}\mathbf{M}%
_{n}\mathbf{\bar{u}}_{n\circ}-\mathbf{B}_{n}^{\prime}\mathbf{M}_{n}%
\mathbf{U}_{nT}\mathbf{G}_{T}\boldsymbol{\lambda}_{T}^{\ast}\right)
+O_{p}\left(  n^{-1+\alpha_{\eta}}\right) \label{snT2}\\
&  +O_{p}\left(  T^{-1/2}n^{-1+\frac{\alpha_{\eta+\alpha_{\gamma}}}{2}%
}\right)  +O_{p}\left(  T^{-1}n^{-1/2}\right)  +O_{p}\left(  T^{-3/2}%
n^{-1/2}\right)  .\nonumber
\end{align}
Using this result and (\ref{AHnT}) in (\ref{A-phitilda}) now yields
(\ref{phigap}), as required. To derive the asymptotic distribution of
$\boldsymbol{\tilde{\phi}}_{nT}-\boldsymbol{\phi}_{0}$ since by assumption
$\alpha_{\eta}+\alpha_{\gamma}<1$, then the dominant term of $\mathbf{s}_{nT}$
is given by
\begin{equation}
n^{-1}\left(  \mathbf{B}_{n}^{\prime}\mathbf{M}_{n}\mathbf{\bar{u}}_{n\circ
}-\mathbf{B}_{n}^{\prime}\mathbf{M}_{n}\mathbf{U}_{nT}\mathbf{G}%
_{T}\boldsymbol{\lambda}_{T}^{\ast}\right)  =O_{p}\left(  T^{-1/2}%
n^{-1/2}\right)  , \label{Doms_nT}%
\end{equation}
and to ensure that we end up with a non-degenerate, stable limiting
distribution, $\left(  \boldsymbol{\tilde{\phi}}_{nT}-\boldsymbol{\phi}%
_{0}\right)  $ needs to be scaled by $\sqrt{nT}$ with $n$ and $T\rightarrow
\infty$, jointly. To this end we first note that when $T$ is fixed
$\mathbf{H}_{nT}\rightarrow_{p}\mathbf{\Sigma}_{\beta\beta},$ as
$n\rightarrow\infty$ and we have%
\begin{equation}
\sqrt{nT}\left(  \boldsymbol{\tilde{\phi}}_{nT}-\boldsymbol{\phi}_{0}\right)
\overset{a}{\thicksim}\mathbf{\Sigma}_{\beta\beta}^{-1}\left(  \sqrt
{nT}\mathbf{s}_{nT}\right)  . \label{jointnT}%
\end{equation}
Using (\ref{snT2})%
\[
\sqrt{nT}\mathbf{s}_{nT}=\boldsymbol{\xi}_{nT}+O_{p}\left(  n^{-\frac{1}%
{2}+\frac{\alpha_{\eta+\alpha_{\gamma}}}{2}}\right)  +O_{p}\left(
T^{1/2}n^{-1/2+\alpha_{\eta}}\right)  +O_{p}\left(  T^{-1/2}\right)  ,
\]
where
\begin{equation}
\boldsymbol{\xi}_{nT}=T^{1/2}n^{-1/2}\left(  \mathbf{B}_{n}^{\prime}%
\mathbf{M}_{n}\mathbf{\bar{u}}_{n\circ}-\mathbf{B}_{n}^{\prime}\mathbf{M}%
_{n}\mathbf{U}_{nT}\mathbf{G}_{T}\boldsymbol{\lambda}_{T}^{\ast}\right)  .
\label{egzinT}%
\end{equation}
Using the above results in (\ref{jointnT}) we now have
\begin{equation}
\sqrt{nT}\left(  \boldsymbol{\tilde{\phi}}_{nT}-\boldsymbol{\phi}_{0}\right)
\overset{a}{\thicksim}\mathbf{\Sigma}_{\beta\beta}^{-1}\left[  \boldsymbol{\xi
}_{nT}+O_{p}\left(  n^{-\frac{1}{2}+\frac{\alpha_{\eta+\alpha_{\gamma}}}{2}%
}\right)  +O_{p}\left(  T^{1/2}n^{-1/2+\alpha_{\eta}}\right)  +O_{p}\left(
T^{-1/2}\right)  \right]  . \label{jointnTphiD}%
\end{equation}
Therefore, when condition $(T/n)^{1/2}n^{\alpha_{\eta}}\rightarrow0$, as $n$
and $T\rightarrow\infty$, is met we have%
\[
\sqrt{nT}\left(  \boldsymbol{\tilde{\phi}}_{nT}-\boldsymbol{\phi}_{0}\right)
\overset{a}{\thicksim}\mathbf{\Sigma}_{\beta\beta}^{-1}\boldsymbol{\xi}%
_{nT}+o_{p}(1).
\]
To derive the asymptotic distribution of $\boldsymbol{\xi}_{nT}$, we note that
$\mathbf{\bar{u}}_{n\circ}=T^{-1}\mathbf{U}_{nT}\boldsymbol{\tau}_{T},$ and
$\mathbf{G}_{T}\boldsymbol{\lambda}_{T}^{\ast}=T^{-1}\mathbf{M}_{T}%
\mathbf{F}(T^{-1}\mathbf{F}^{\prime}\mathbf{M}_{T}\mathbf{F)}^{-1}%
\boldsymbol{\lambda}_{T}^{\ast}$. Then, using these results in (\ref{egzinT})
\begin{equation}
\boldsymbol{\xi}_{nT}=\left(  \xi_{k,nT\ }\right)  =n^{-1/2}T^{-1/2}%
\mathbf{B}_{n}^{\prime}\mathbf{M}_{n}\mathbf{U}_{nT}\mathbf{a}_{T},
\label{egzinTv}%
\end{equation}
where $\mathbf{a}_{T}=\boldsymbol{\tau}_{T}-\mathbf{M}_{T}\mathbf{F}%
(T^{-1}\mathbf{F}^{\prime}\mathbf{M}_{T}\mathbf{F)}^{-1}\boldsymbol{\lambda
}_{T}^{\ast}=(a_{t})$. Also%
\begin{equation}
s_{a,T}^{2}=T^{-1}%
{\displaystyle\sum\limits_{t=1}^{T}}
a_{t}^{2}=T^{-1}\mathbf{a}_{T}^{\prime}\mathbf{a}_{T}=1+\boldsymbol{\lambda
}_{T}^{\ast\prime}(T^{-1}\mathbf{F}^{\prime}\mathbf{M}_{T}\mathbf{F)}%
^{-1}\boldsymbol{\lambda}_{T}^{\ast}, \label{saT}%
\end{equation}
where $\boldsymbol{\lambda}_{T}^{\ast}=\boldsymbol{\phi}_{0}+\boldsymbol{\hat
{\mu}}_{T}=\boldsymbol{\lambda}_{0}+\left(  \boldsymbol{\hat{\mu}}%
_{T}-\boldsymbol{\mu}_{0}\right)  $, and $\left(  \boldsymbol{\hat{\mu}}%
_{T}-\boldsymbol{\mu}_{0}\right)  =O_{p}\left(  T^{-1/2}\right)  $. Further
\begin{equation}
s_{a,T}^{2}\geq1\text{ and }s_{a,T}^{2}\leq1+\left(  \boldsymbol{\lambda}%
_{T}^{\ast\prime}\boldsymbol{\lambda}_{T}^{\ast}\right)  \lambda_{\max}\left[
(T^{-1}\mathbf{F}^{\prime}\mathbf{M}_{T}\mathbf{F)}^{-1}\text{ }\right]  <C,
\label{BsaT}%
\end{equation}
and $s_{a}^{2}=\lim_{T\rightarrow\infty}s_{a,T}^{2}=1+\boldsymbol{\lambda}%
_{0}^{\prime}\mathbf{\Sigma}_{f}^{-1}\boldsymbol{\lambda}_{0}^{\prime}.$ The
$k^{th}$ element of $\boldsymbol{\xi}_{nT}$ is given by }%
\[
{\small \xi_{k,nT\ }=n^{-1/2}T^{-1/2}\sum_{i=1}^{n}%
{\displaystyle\sum\limits_{t=1}^{T}}
a_{t}(\beta_{ik}-\bar{\beta}_{k})u_{it},}%
\]
{\small and using (\ref{uit}) we have%
\begin{align*}
\xi_{k,nT\ }  &  =\left(  T^{-1/2}%
{\displaystyle\sum\limits_{t=1}^{T}}
a_{t}g_{t}\right)  \left[  n^{-1/2}\sum_{i=1}^{n}(\beta_{ik}-\bar{\beta}%
_{k})\gamma_{i}\right] \\
&  +n^{-1/2}T^{-1/2}\sum_{i=1}^{n}%
{\displaystyle\sum\limits_{t=1}^{T}}
a_{t}(\beta_{ik}-\bar{\beta}_{k})v_{it}.
\end{align*}
Under Assumption \ref{Latent factor} $g_{t}$ is distributed independently of
$\boldsymbol{f}_{t}$ (and hence of $a_{t}$), as well as being serially
independent. Also $Var(T^{-1/2}%
{\displaystyle\sum\limits_{t=1}^{T}}
a_{t}g_{t})=s_{a,T}^{2}$ (recall that $E(g_{t})=0$ and $E(g_{t}^{2}%
)=1$),\thinspace and we have $T^{-1/2}%
{\displaystyle\sum\limits_{t=1}^{T}}
a_{t}g_{t}=O_{p}(1)$. Further $E\left\vert n^{-1/2}\sum_{i=1}^{n}(\beta
_{ik}-\bar{\beta}_{k})\gamma_{i}\right\vert \leq\sup_{i,k}E\left\vert
\beta_{ik}-\bar{\beta}_{k}\right\vert \left(  n^{-1/2}\sum_{i=1}^{n}\left\vert
\gamma_{i}\right\vert \right)  =O(n^{-1/2+\alpha_{\gamma}})$. Hence%
\begin{align}
\xi_{k,nT\ }  &  =n^{-1/2}T^{-1/2}%
{\displaystyle\sum\limits_{t=1}^{T}}
\sum_{i=1}^{n}a_{t}(\beta_{ik}-\bar{\beta}_{k})v_{it}+O(n^{-1/2+\alpha
_{\gamma}})\nonumber\\
&  =T^{-1/2}%
{\displaystyle\sum\limits_{t=1}^{T}}
a_{t}h_{nt}+O(n^{-1/2+\alpha_{\gamma}}), \label{egziknT}%
\end{align}
where $h_{nt}=n^{-1/2}\sum_{i=1}^{n}(\beta_{ik}-\bar{\beta}_{k})v_{it}$. Under
Assumption \ref{loadings}, $h_{nt}=n^{-1/2}\sum_{i=1}^{n}(\beta_{ik}%
-\bar{\beta}_{k})v_{it}\rightarrow_{d}N(0,\omega_{k}^{2})$, for $k=1,2,...,K$,
where
\[
\omega_{k}^{2}=p\lim_{n\rightarrow\infty}n^{-1}\sum_{i=1}^{n}\sum_{j=1}%
^{n}(\beta_{ik}-\bar{\beta}_{k})(\beta_{jk}-\bar{\beta}_{k})\sigma_{v,ij}>0,
\]
and $\omega_{k}^{2}\leq\sup_{i,k}E(\beta_{ik}-\bar{\beta}_{k})^{2}%
\lim_{n\rightarrow\infty}n^{-1}\sum_{i=1}^{n}\sum_{j=1}^{n}\left\vert
\sigma_{v,ij}\right\vert <C$. Also, since $v_{it}$ are serially independent
then there exists $T_{0}$ such that for any fixed $T>T_{0}$ and as
$n\rightarrow\infty$
\[
T^{-1/2}%
{\displaystyle\sum\limits_{t=1}^{T}}
a_{t}h_{nt}\rightarrow_{d}N\left(  0,\omega_{k}^{2}\left(  1+s_{aT}%
^{2}\right)  \right)  ,
\]
where $s_{aT}^{2}$ is defined by (\ref{saT}). Using this result in
(\ref{egziknT}) and noting that $\alpha_{\gamma}<1/2$, we also have for any
fixed $T$ and as $n\rightarrow\infty,$%
\[
\xi_{k,nT}\rightarrow_{d}N\left(  0,\omega_{k}^{2}\left(  1+s_{aT}^{2}\right)
\right)  \text{, for a fixed }T>T_{0}\text{ and as }n\rightarrow\infty\text{.
}%
\]
This result extends readily to the case where $n$ and $T\rightarrow\infty$,
jointly. In this case
\[
\xi_{k,nT}\rightarrow_{d}N\left(  0,\omega_{k}^{2}\left(  1+s_{a}^{2}\right)
\right)  ,\text{ where }s_{a}^{2}=1+\boldsymbol{\lambda}_{0}^{\prime
}\mathbf{\Sigma}_{f}^{-1}\boldsymbol{\lambda}_{0}.
\]
Similarly, We have (using \thinspace$u_{it}=\gamma_{i}g_{t}+v_{it}$)
\begin{align*}
Cov\left(  \xi_{k,nT},\xi_{k^{\prime},nT}\right)   &  =n^{-1}T^{-1}\sum
_{i=1}^{n}%
{\displaystyle\sum\limits_{j=1}^{n}}
{\displaystyle\sum\limits_{t=1}^{T}}
a_{t}^{2}(\beta_{ik}-\bar{\beta}_{k})(\beta_{jk^{\prime}}-\bar{\beta
}_{k^{\prime}})E\left(  u_{it}u_{jt}\right)  =\\
&  \left(  1+s_{aT}^{2}\right)  \left[  n^{-1}\sum_{i=1}^{n}%
{\displaystyle\sum\limits_{j=1}^{n}}
\gamma_{i}\gamma_{j}(\beta_{ik}-\bar{\beta}_{k})(\beta_{jk^{\prime}}%
-\bar{\beta}_{k^{\prime}})\right] \\
&  +\left(  1+s_{aT}^{2}\right)  \left[  n^{-1}\sum_{i=1}^{n}%
{\displaystyle\sum\limits_{j=1}^{n}}
\sigma_{v,ij}(\beta_{ik}-\bar{\beta}_{k})(\beta_{jk^{\prime}}-\bar{\beta
}_{k^{\prime}})\right]  .
\end{align*}
But%
\begin{align*}
E\left\vert n^{-1}\sum_{i=1}^{n}%
{\displaystyle\sum\limits_{j=1}^{n}}
\gamma_{i}\gamma_{j}(\beta_{ik}-\bar{\beta}_{k})(\beta_{jk^{\prime}}%
-\bar{\beta}_{k^{\prime}})\right\vert  &  \leq\sup_{i,k,k^{\prime}}\left\vert
(\beta_{ik}-\bar{\beta}_{k})(\beta_{jk^{\prime}}-\bar{\beta}_{k^{\prime}%
})\right\vert E(\beta_{ik}-\bar{\beta}_{k})^{2}\left(  n^{-1/2}%
{\displaystyle\sum\limits_{j=1}^{n}}
\left\vert \gamma_{i}\right\vert \right) \\
&  \leq\sup_{i,k}E(\beta_{ik}-\bar{\beta}_{k})^{2}\left(  n^{-1/2}%
{\displaystyle\sum\limits_{j=1}^{n}}
\left\vert \gamma_{i}\right\vert \right)  ^{2}=O\left(  n^{-1+2\alpha_{\gamma
}}\right)  .
\end{align*}
and similarly
\[
\left\vert n^{-1}\sum_{i=1}^{n}%
{\displaystyle\sum\limits_{j=1}^{n}}
\sigma_{v,ij}(\beta_{ik}-\bar{\beta}_{k})(\beta_{jk^{\prime}}-\bar{\beta
}_{k^{\prime}})\right\vert \leq\sup_{i,k}E(\beta_{ik}-\bar{\beta}_{k}%
)^{2}n^{-1}\sum_{i=1}^{n}%
{\displaystyle\sum\limits_{j=1}^{n}}
\left\vert \sigma_{v,ij}\right\vert <C.
\]
Hence $Cov\left(  \xi_{k,nT},\xi_{k^{\prime},nT}\right)  <C$, for all $k$ and
$k^{\prime}$. Using the above results in (\ref{egzinTv}, and noting that $K$
is fixed, we have $\boldsymbol{\xi}_{nT}\rightarrow_{d}N\left(  \mathbf{0,V}%
_{\xi}\right)  $, as $n$ and $T\rightarrow\infty,$where }%
\[
{\small \mathbf{V}_{\xi}=\left(  1+\boldsymbol{\lambda}_{0}^{\prime
}\mathbf{\Sigma}_{f}^{-1}\boldsymbol{\lambda}_{0}\right)  p\lim_{n\rightarrow
\infty}\left(  n^{-1}\mathbf{B}_{n}^{\prime}\mathbf{M}_{n}\mathbf{V}%
_{u}\mathbf{M}_{n}\mathbf{B}_{n\ }\right)  ,}%
\]
{\small noting that $s_{a,T}^{2}\rightarrow_{p}1+\boldsymbol{\lambda}%
_{0}^{\prime}\mathbf{\Sigma}_{f}^{-1}\boldsymbol{\lambda}_{0}$, where
$s_{a,T}^{2}$ is given by (\ref{saT}). Also recall from (\ref{jointnTphiD})
that $\sqrt{nT}\left(  \boldsymbol{\tilde{\phi}}_{nT}-\boldsymbol{\phi}%
_{0}\right)  =\mathbf{\Sigma}_{\beta\beta}^{-1}\boldsymbol{\xi}_{nT}%
+O_{p}\left(  n^{-\frac{1}{2}+\frac{\alpha_{\eta+\alpha_{\gamma}}}{2}}\right)
+O_{p}\left(  T^{1/2}n^{-1/2+\alpha_{\eta}}\right)  +O_{p}\left(
T^{-1/2}\right)  $. Hence, result (\ref{Dphi}) follows since by assumption
$\alpha_{\eta}<1/2$, $\alpha_{\gamma}<1/2$, and $(T/n)^{1/2}n^{\alpha_{\gamma
}}\rightarrow0$. }

\subsection{{\protect\small Proof of theorem \ref{Tsemi}}\label{ProofThsemi}}

{\small Using (\ref{phialpha}) and (\ref{Halpha}) and replacing $T^{-1}\left(
T^{-1}\mathbf{F}^{\prime}\mathbf{M}_{T}\mathbf{F}\right)  ^{-1}$ by
$\mathbf{G}_{T}^{\prime}\mathbf{G}_{T}$ we have (see (\ref{GT}))}

{\small
\begin{equation}
\mathbf{H}_{nT}\left(  \boldsymbol{\alpha}\right)  \mathbf{D}_{\alpha}\left(
\boldsymbol{\tilde{\phi}}_{nT}\left(  \boldsymbol{\alpha}\right)
-\boldsymbol{\phi}_{0}\right)  =\mathbf{q}_{nT}\left(  \boldsymbol{\alpha
}\right)  , \label{Aphigap}%
\end{equation}
where }%
\[
{\small \mathbf{q}_{nT}\left(  \boldsymbol{\alpha}\right)  =\mathbf{D}%
_{\alpha}^{-1}\mathbf{\hat{B}}_{nT}^{\prime}\mathbf{M}_{n}\mathbf{\hat{a}%
}_{nT}+n\widehat{\bar{\sigma}}_{nT}^{2}\mathbf{D}_{\alpha}^{-1}\mathbf{G}%
_{T}^{\prime}\mathbf{G}_{T}\mathbf{\hat{\mu}}_{T}-\mathbf{H}_{nT}\left(
\boldsymbol{\alpha}\right)  \mathbf{D}_{\alpha}\boldsymbol{\phi}_{0},}%
\]
{\small and }%
\[
{\small \mathbf{H}_{nT}\left(  \boldsymbol{\alpha}\right)  =\mathbf{D}%
_{\alpha}^{-1}\mathbf{\hat{B}}_{nT}^{\prime}\mathbf{M}_{n}\mathbf{\hat{B}%
}_{nT}\mathbf{D}_{\alpha}^{-1}-n\widehat{\bar{\sigma}}_{nT}^{2}\mathbf{D}%
_{\alpha}^{-1}\mathbf{G}_{T}^{\prime}\mathbf{G}_{T}\mathbf{D}_{\alpha}^{-1}.}%
\]
{\small But%
\begin{equation}
\mathbf{H}_{nT}\left(  \boldsymbol{\alpha}\right)  =n\mathbf{D}_{\alpha}%
^{-1}\mathbf{H}_{nT}\mathbf{D}_{\alpha}^{-1},\text{ and }\mathbf{q}%
_{nT}\left(  \boldsymbol{\alpha}\right)  =n\mathbf{D}_{\alpha}^{-1}%
\mathbf{s}_{nT}, \label{qnTalpha}%
\end{equation}
where $\mathbf{s}_{nT}$ and $\mathbf{H}_{nT}$ are already defined by
(\ref{snT0}) and (\ref{A-HnT}). Consider first the limiting property of
$\mathbf{H}_{nT}\left(  \boldsymbol{\alpha}\right)  $, and using (\ref{UG})
note that%
\begin{align*}
\mathbf{H}_{nT}\left(  \boldsymbol{\alpha}\right)   &  =\mathbf{D}_{\alpha
}^{-1}\mathbf{B}_{n}^{\prime}\mathbf{M}_{n}\mathbf{B}_{n}\mathbf{D}_{\alpha
}^{-1}+\mathbf{D}_{\alpha}^{-1}\mathbf{G}_{T}^{\prime}\mathbf{U}_{nT}^{\prime
}\mathbf{M}_{n}\mathbf{B}_{n}\mathbf{D}_{\alpha}^{-1}+\\
&  \mathbf{D}_{\alpha}^{-1}\mathbf{B}_{n}^{\prime}\mathbf{M}_{n}%
\mathbf{U}_{nT}\mathbf{G}_{T}\mathbf{D}_{\alpha}^{-1}+\mathbf{D}_{\alpha}%
^{-1}\mathbf{G}_{T}^{\prime}\mathbf{U}_{nT}^{\prime}\mathbf{M}_{n}%
\mathbf{U}_{nT}\mathbf{G}_{T}\mathbf{D}_{\alpha}^{-1}-n\widehat{\bar{\sigma}%
}_{nT}^{2}\mathbf{D}_{\alpha}^{-1}\mathbf{G}_{T}^{\prime}\mathbf{G}%
_{T}\mathbf{D}_{\alpha}^{-1},
\end{align*}
or%
\begin{align*}
\mathbf{H}_{nT}\left(  \boldsymbol{\alpha}\right)   &  =\mathbf{D}_{\alpha
}^{-1}\mathbf{B}_{n}^{\prime}\mathbf{M}_{n}\mathbf{B}_{n}\mathbf{D}_{\alpha
}^{-1}+\mathbf{D}_{\alpha}^{-1}\mathbf{G}_{T}^{\prime}\mathbf{U}_{nT}^{\prime
}\mathbf{M}_{n}\mathbf{B}_{n}\mathbf{D}_{\alpha}^{-1}+\mathbf{D}_{\alpha}%
^{-1}\mathbf{B}_{n}^{\prime}\mathbf{M}_{n}\mathbf{U}_{nT}\mathbf{G}%
_{T}\mathbf{D}_{\alpha}^{-1}\\
&  +n\mathbf{D}_{\alpha}^{-1}\left[  \mathbf{G}_{T}^{\prime}\left(
n^{-1}\mathbf{U}_{nT}^{\prime}\mathbf{M}_{n}\mathbf{U}_{nT}\right)
\mathbf{G}_{T}-\widehat{\bar{\sigma}}_{nT}^{2}\mathbf{G}_{T}^{\prime
}\mathbf{G}_{T}\right]  \mathbf{D}_{\alpha}^{-1}.
\end{align*}
Further%
\begin{align*}
&  n\mathbf{D}_{\alpha}^{-1}\left[  \mathbf{G}_{T}^{\prime}\left(
n^{-1}\mathbf{U}_{nT}^{\prime}\mathbf{M}_{n}\mathbf{U}_{nT}\right)
\mathbf{G}_{T}-\widehat{\bar{\sigma}}_{nT}^{2}\mathbf{G}_{T}^{\prime
}\mathbf{G}_{T}\right]  \mathbf{D}_{\alpha}^{-1}\\
&  =n\mathbf{D}_{\alpha}^{-1}\left[  \mathbf{G}_{T}^{\prime}\left(
n^{-1}\mathbf{U}_{nT}^{\prime}\mathbf{M}_{n}\mathbf{U}_{nT}\right)
\mathbf{G}_{T}-\bar{\sigma}_{n}^{2}\mathbf{G}_{T}^{\prime}\mathbf{G}%
_{T}\right]  \mathbf{D}_{\alpha}^{-1}-\left(  \widehat{\bar{\sigma}}_{nT}%
^{2}-\bar{\sigma}_{n}^{2}\right)  n\mathbf{D}_{\alpha}^{-1}\mathbf{G}%
_{T}^{\prime}\mathbf{G}_{T}\mathbf{D}_{\alpha}^{-1}.
\end{align*}
But by (\ref{orderzig}) $\widehat{\bar{\sigma}}_{nT}^{2}-\bar{\sigma}_{n}%
^{2}=O_{p}(T^{-1/2}n^{-1/2})$, and $\left\Vert \mathbf{D}_{\alpha}%
^{-1}\right\Vert =\lambda_{max}^{1/2}\left(  \mathbf{D}_{\alpha}^{-2}\right)
=n^{-\alpha_{\min}/2}$. Then, using results in Lemma \ref{L2} we have
\[
\left\Vert \mathbf{D}_{\alpha}^{-1}\mathbf{B}_{n}^{\prime}\mathbf{M}%
_{n}\mathbf{U}_{nT}\mathbf{G}_{T}\mathbf{D}_{\alpha}^{-1}\right\Vert \leq
n\left\Vert \mathbf{D}_{\alpha}^{-1}\right\Vert ^{2}\left\Vert n^{-1}%
\mathbf{B}_{n}^{\prime}\mathbf{M}_{n}\mathbf{U}_{nT}\mathbf{G}_{T}\right\Vert
=O_{p}\left(  T^{-1/2}n^{-\alpha_{\min}+1/2}\right)  ,
\]
and%
\begin{align*}
&  \left\Vert n\mathbf{D}_{\alpha}^{-1}\mathbf{G}_{T}^{\prime}\left(
n^{-1}\mathbf{U}_{nT}^{\prime}\mathbf{M}_{n}\mathbf{U}_{nT}-\widehat{\bar
{\sigma}}_{nT}^{2}\mathbf{I}_{T}\right)  \mathbf{G}_{T}\mathbf{D}_{\alpha
}^{-1}\right\Vert \\
&  \leq n\left\Vert \mathbf{D}_{\alpha}^{-1}\right\Vert \left\Vert
\mathbf{G}_{T}^{\prime}\left(  n^{-1}\mathbf{U}_{nT}^{\prime}\mathbf{M}%
_{n}\mathbf{U}_{nT}-\widehat{\bar{\sigma}}_{nT}^{2}\mathbf{I}_{T}\right)
\mathbf{G}_{T}\right\Vert =O_{p}\left(  T^{-1}n^{-\alpha_{\min}+1/2}\right)  .
\end{align*}
Hence
\[
\mathbf{H}_{nT}\left(  \boldsymbol{\alpha}\right)  =\mathbf{D}_{\alpha}%
^{-1}\mathbf{B}_{n}^{\prime}\mathbf{M}_{n}\mathbf{B}_{n}\mathbf{D}_{\alpha
}^{-1}+O_{p}\left(  T^{-1}n^{-\alpha_{\min}+1/2}\right)  ,
\]
and $\mathbf{H}_{nT}\left(  \boldsymbol{\alpha}\right)  \rightarrow
_{p}\mathbf{\Sigma}_{\beta\beta}(\boldsymbol{\alpha}\mathbf{)}$, as
$n\rightarrow\infty$, for a fixed $T$, so long as $\alpha_{\min}%
>1/2>\alpha_{\gamma}$. By Assumption \ref{loadings}, $\lim_{n\rightarrow
\infty}\left(  \mathbf{D}_{\alpha}^{-1}\mathbf{B}_{n}^{\prime}\mathbf{M}%
_{n}\mathbf{B}_{n}\mathbf{D}_{\alpha}^{-1}\right)  =\mathbf{\Sigma}%
_{\beta\beta}(\boldsymbol{\alpha}\mathbf{)}$ is a positive definite matrix.
Using this result in (\ref{Aphigap}) we have%
\begin{equation}
\mathbf{D}_{\alpha}\left(  \boldsymbol{\tilde{\phi}}_{nT}\left(
\boldsymbol{\alpha}\right)  -\boldsymbol{\phi}_{0}\right)
\overset{a}{\thicksim}\mathbf{\Sigma}_{\beta\beta}^{-1}(\boldsymbol{\alpha
}\mathbf{)q}_{nT}(\boldsymbol{\alpha}\mathbf{),} \label{Disalpha}%
\end{equation}
and (since $\mathbf{\Sigma}_{\beta\beta}(\boldsymbol{\alpha}\mathbf{)}$ is
positive definite, $\left\Vert \mathbf{\Sigma}_{\beta\beta}^{-1}%
(\boldsymbol{\alpha}\mathbf{)}\right\Vert <C$)
\[
\left\Vert \boldsymbol{\tilde{\phi}}_{nT}\left(  \boldsymbol{\alpha}\right)
-\boldsymbol{\phi}_{0}\right\Vert \leq\left\Vert \mathbf{D}_{\alpha}%
^{-1}\right\Vert \left\Vert \mathbf{\Sigma}_{\beta\beta}^{-1}%
(\boldsymbol{\alpha}\mathbf{)}\right\Vert \left\Vert \mathbf{q}_{nT}%
(\boldsymbol{\alpha}\mathbf{)}\right\Vert \leq Cn^{-\alpha_{\min}/2}\left\Vert
\mathbf{q}_{nT}(\boldsymbol{\alpha}\mathbf{)}\right\Vert .
\]
Using (\ref{qnTalpha})
\begin{equation}
\left\Vert \mathbf{q}_{nT}(\boldsymbol{\alpha}\mathbf{)}\right\Vert
=\left\Vert n\mathbf{D}_{\alpha}^{-1}\mathbf{s}_{nT}\right\Vert \leq
n^{1-\alpha_{\min}/2}\left\Vert \mathbf{s}_{nT}\right\Vert . \label{q_nTa}%
\end{equation}
Also from (\ref{snT2}) we have%
\begin{align}
\mathbf{s}_{nT}  &  =n^{-1}\left(  \mathbf{B}_{n}^{\prime}\mathbf{M}%
_{n}\mathbf{\bar{u}}_{n\circ}-\mathbf{B}_{n}^{\prime}\mathbf{M}_{n}%
\mathbf{U}_{nT}\mathbf{G}_{T}\boldsymbol{\lambda}_{T}^{\ast}\right)
+O_{p}\left(  n^{-1+\alpha_{\eta}}\right) \label{s_nTa}\\
&  +O_{p}\left(  T^{-1/2}n^{-1+\frac{\alpha_{\eta+\alpha_{\gamma}}}{2}%
}\right)  +O_{p}\left(  T^{-1}n^{-1/2}\right)  .\nonumber
\end{align}
Also using (\ref{Doms_nT}))%
\begin{equation}
n^{-1}\left(  \mathbf{B}_{n}^{\prime}\mathbf{M}_{n}\mathbf{\bar{u}}_{n\circ
}-\mathbf{B}_{n}^{\prime}\mathbf{M}_{n}\mathbf{U}_{nT}\mathbf{G}%
_{T}\boldsymbol{\lambda}_{T}^{\ast}\right)  =O_{p}\left(  T^{-1/2}%
n^{-1/2}\right)  . \label{Dom}%
\end{equation}
Substituting (\ref{Dom}) in (\ref{s_nTa}) and using the result in
(\ref{q_nTa}) we have
\begin{align}
\left\Vert \mathbf{q}_{nT}(\boldsymbol{\alpha}\mathbf{)}\right\Vert  &
=O_{p}\left(  n^{-\alpha_{\min}/2+1/2}T^{-1/2}\right)  +O_{p}\left(
T^{-1/2}n^{\frac{-\alpha_{\min}+\left(  \alpha_{\eta+\alpha_{\gamma}}\right)
}{2}}\right) \label{qknT}\\
&  +O_{p}\left(  n^{-\alpha_{\min}/2+\alpha_{\eta}}\right)  +O_{p}\left(
n^{-\alpha_{\min}/2+1/2}T^{-1}\right)  .\nonumber
\end{align}
Denote the $k^{th}$ element of $\mathbf{q}_{nT}(\boldsymbol{\alpha}\mathbf{)}$
by $q_{k,nT}(\boldsymbol{\alpha}\mathbf{)}$, we also have%
\begin{align*}
q_{k,nT}\left(  \boldsymbol{\alpha}\right)   &  =O_{p}\left(  n^{-\alpha
_{\min}/2+1/2}T^{-1/2}\right)  +O_{p}\left(  T^{-1/2}n^{\frac{-\alpha_{\min
}+\left(  \alpha_{\eta+\alpha_{\gamma}}\right)  }{2}}\right) \\
&  +O_{p}\left(  n^{-\alpha_{\min}/2+\alpha_{\eta}}\right)  +O_{p}\left(
n^{-\alpha_{\min}/2+1/2}T^{-1}\right)  .
\end{align*}
Also note that the $k^{th}$ element of $\mathbf{D}_{\alpha}\left(
\boldsymbol{\tilde{\phi}}_{nT}\left(  \boldsymbol{\alpha}\right)
-\boldsymbol{\phi}_{0}\right)  $ is given by $n^{\alpha_{k}/2}\left(
\tilde{\phi}_{k,nT}\left(  \boldsymbol{\alpha}\right)  -\phi_{0,k}\right)  $.
Hence, in view of (\ref{Disalpha}) and since $\mathbf{\Sigma}_{\beta\beta
}^{-1}(\boldsymbol{\alpha}\mathbf{)}$ is a positive definite matrix then the
probability order of $n^{\alpha_{k}/2}\left(  \tilde{\phi}_{k,nT}\left(
\boldsymbol{\alpha}\right)  -\phi_{0,k}\right)  $ must be the same as that of
$q_{k,nT}$, and hence (as required)%
\begin{align*}
\tilde{\phi}_{k,nT}\left(  \boldsymbol{\alpha}\right)  -\phi_{0,k}  &
=O_{p}\left(  n^{-(\alpha_{k}+\alpha_{\min})/2+1/2}T^{-1/2}\right)
+O_{p}\left(  n^{\frac{-\left(  \alpha_{k}+\alpha_{\min}\right)  +\left(
\alpha_{\eta+\alpha_{\gamma}}\right)  }{2}}T^{-1/2}\right) \\
&  +O_{p}\left(  n^{-\left(  \alpha_{k}+\alpha_{\min}\right)  /2+\alpha_{\eta
}}\right)  +O_{p}\left(  n^{-(\alpha_{k}+\alpha_{\min})/2+1/2}T^{-1}\right)  .
\end{align*}
}

\subsection{{\protect\small Proof of theorem \ref{Var}}\label{ProofTvar}}

{\small Using (\ref{Varegzihat}) and noting that $\hat{s}_{nT}\rightarrow
_{p}\boldsymbol{\lambda}_{0}^{\prime}\mathbf{\Sigma}_{f}^{-1}%
\boldsymbol{\lambda}_{0}$, then
\begin{align}
\mathbf{\hat{V}}_{\xi,nT}-\mathbf{V}_{\xi}  &  =\left(  1+\boldsymbol{\lambda
}_{0}^{\prime}\mathbf{\Sigma}_{f}^{-1}\boldsymbol{\lambda}_{0}\right)  \left[
n^{-1}\mathbf{\hat{B}}_{nT}^{\prime}\mathbf{M}_{n}\mathbf{\tilde{V}}%
_{u}\mathbf{M}_{n}\mathbf{\hat{B}}_{nT}-p\lim_{n\rightarrow\infty}\left(
n^{-1}\mathbf{B}_{n}^{\prime}\mathbf{M}_{n}\mathbf{V}_{u}\mathbf{M}%
_{n}\mathbf{B}_{n\ }\right)  \right]  +o_{p}(1)\nonumber\\
&  =\left(  1+\boldsymbol{\lambda}_{0}^{\prime}\mathbf{\Sigma}_{f}%
^{-1}\boldsymbol{\lambda}_{0}\right)  \left[  n^{-1}\mathbf{\hat{B}}%
_{nT}^{\prime}\mathbf{M}_{n}\mathbf{\tilde{V}}_{u}\mathbf{M}_{n}%
\mathbf{\hat{B}}_{nT}-n^{-1}\mathbf{B}_{n}^{\prime}\mathbf{M}_{n}%
\mathbf{V}_{u}\mathbf{M}_{n}\mathbf{B}_{n\ }\right]  +o_{p}(1). \label{GapVar}%
\end{align}
Also using (\ref{Bhat1}) we have%
\[
n^{-1}\mathbf{\hat{B}}_{nT}^{\prime}\mathbf{M}_{n}\mathbf{\tilde{V}}%
_{u}\mathbf{M}_{n}\mathbf{\hat{B}}_{nT}=n^{-1}\left(  \mathbf{B}%
_{n}+\mathbf{U}_{nT}\mathbf{G}_{n}\right)  ^{\prime}\mathbf{M}_{n}\left(
\mathbf{\tilde{V}}_{u}-\mathbf{V}_{u}+\mathbf{V}_{u}\right)  \mathbf{M}%
_{n}\left(  \mathbf{B}_{n}+\mathbf{U}_{nT}\mathbf{G}_{n}\right)  ,
\]
which, after some algebra, yields%
\[
n^{-1}\mathbf{\hat{B}}_{nT}^{\prime}\mathbf{M}_{n}\mathbf{\tilde{V}}%
_{u}\mathbf{M}_{n}\mathbf{\hat{B}}_{nT}-n^{-1}\mathbf{B}_{n}^{\prime
}\mathbf{M}_{n}\mathbf{V}_{u}\mathbf{M}_{n}\mathbf{B}_{n\ }=\sum_{j=1}%
^{7}\mathbf{A}_{j,nT},
\]
where
\begin{align*}
\mathbf{A}_{1,nT}  &  =n^{-1}\mathbf{B}_{n}^{\prime}\mathbf{M}_{n}\left(
\mathbf{\tilde{V}}_{u}-\mathbf{V}_{u}\right)  \mathbf{M}_{n}\mathbf{B}%
_{n}\text{, \ \ }\mathbf{A}_{2,nT}=n^{-1}\mathbf{G}_{n}^{\prime}%
\mathbf{U}_{nT}^{\prime}\mathbf{M}_{n}\left(  \mathbf{\tilde{V}}%
_{u}-\mathbf{V}_{u}\right)  \mathbf{M}_{n}\mathbf{U}_{nT}\mathbf{G}_{n},\\
\mathbf{A}_{3,nT}  &  =n^{-1}\mathbf{G}_{n}^{\prime}\mathbf{U}_{nT}^{\prime
}\mathbf{M}_{n}\mathbf{V}_{u}\mathbf{M}_{n}\mathbf{U}_{nT}\mathbf{G}%
_{n},\text{ \ }\mathbf{A}_{4,nT}=n^{-1}\mathbf{G}_{n}^{\prime}\mathbf{U}%
_{nT}^{\prime}\mathbf{M}_{n}\left(  \mathbf{\tilde{V}}_{u}-\mathbf{V}%
_{u}\right)  \mathbf{M}_{n}\mathbf{B}_{n},\text{ }\\
\mathbf{A}_{5,nT}  &  =n^{-1}\mathbf{G}_{n}^{\prime}\mathbf{U}_{nT}^{\prime
}\mathbf{M}_{n}\mathbf{V}_{u}\mathbf{M}_{n}\mathbf{B}_{n},\text{ \ }%
\mathbf{A}_{6,nT}=n^{-1}\mathbf{B}_{n}^{\prime}\mathbf{M}_{n}\left(
\mathbf{\tilde{V}}_{u}-\mathbf{V}_{u}\right)  \mathbf{M}_{n}\mathbf{U}%
_{nT}\mathbf{G}_{n},\\
\mathbf{A}_{7,nT}  &  =n^{-1}\mathbf{B}_{n}^{\prime}\mathbf{M}_{n}%
\mathbf{V}_{u}\mathbf{M}_{n}\mathbf{U}_{nT}\mathbf{G}_{n}.
\end{align*}
Considering the above terms in turn we note that
\[
\left\Vert \mathbf{A}_{1,nT}\right\Vert \leq n^{-1}\left\Vert \mathbf{B}%
_{n}^{\prime}\mathbf{M}_{n}\right\Vert ^{2}\left\Vert \mathbf{\tilde{V}}%
_{u}-\mathbf{V}_{u}\right\Vert =\lambda_{\max}\left(  n^{-1}\mathbf{B}%
_{n}^{\prime}\mathbf{M}_{n}\mathbf{B}_{n}\right)  \left\Vert \mathbf{\tilde
{V}}_{u}-\mathbf{V}_{u}\right\Vert .
\]
Also, under Assumption \ref{loadings} $\lambda_{\max}\left(  n^{-1}%
\mathbf{B}_{n}^{\prime}\mathbf{M}_{n}\mathbf{B}_{n}\right)  <C$ and using
(\ref{NormVu}) we have $\left\Vert \mathbf{A}_{1,nT}\right\Vert =O_{p}\left(
n^{\alpha_{\gamma}}\sqrt{\frac{\ln(n)}{T}}\right)  $. Similarly%
\[
\left\Vert \mathbf{A}_{2,nT}\right\Vert \leq n^{-1}\left\Vert \mathbf{G}%
_{n}^{\prime}\mathbf{U}_{nT}^{\prime}\mathbf{M}_{n}\right\Vert ^{2}\left\Vert
\mathbf{\tilde{V}}_{u}-\mathbf{V}_{u}\right\Vert =\lambda_{\max}\left(
n^{-1}\mathbf{G}_{n}^{\prime}\mathbf{U}_{nT}^{\prime}\mathbf{M}_{n}%
\mathbf{U}_{nT}\mathbf{G}_{n}\right)  \left\Vert \mathbf{\tilde{V}}%
_{u}-\mathbf{V}_{u}\right\Vert .
\]
Then using (\ref{GUUG}) $n^{-1}\mathbf{G}_{n}^{\prime}\mathbf{U}_{nT}^{\prime
}\mathbf{M}_{n}\mathbf{U}_{nT}\mathbf{G}_{n}\rightarrow_{p}\bar{\sigma}%
_{n}^{2}\mathbf{G}_{T}^{\prime}\mathbf{G}_{T}=T^{-1}\bar{\sigma}_{n}%
^{2}(T^{-1}\mathbf{F}^{\prime}\mathbf{M}_{T}\mathbf{F)}^{-1}=O(T^{-1})$, and
it follows that $\left\Vert \mathbf{A}_{2,nT}\right\Vert =O_{p}\left(
T^{-1}n^{\alpha_{\gamma}}\sqrt{\frac{\ln(n)}{T}}\right)  $. Turning to the
third term%
\[
\left\Vert \mathbf{A}_{3,nT}\right\Vert \leq n^{-1}\left\Vert \mathbf{G}%
_{n}^{\prime}\mathbf{U}_{nT}^{\prime}\mathbf{M}_{n}\mathbf{U}_{nT}%
\mathbf{G}_{n}\right\Vert \left\Vert \mathbf{V}_{u}\right\Vert =\lambda_{\max
}\left(  n^{-1}\mathbf{G}_{n}^{\prime}\mathbf{U}_{nT}^{\prime}\mathbf{M}%
_{n}\mathbf{U}_{nT}\mathbf{G}_{n}\right)  \left\Vert \mathbf{V}_{u}\right\Vert
.
\]
Also, by Lemma \ref{L1} $\left\Vert \mathbf{V}_{u}\right\Vert =O(n^{\alpha
_{\gamma}})$, and hence $\left\Vert \mathbf{A}_{3,nT}\right\Vert =O_{p}\left(
T^{-1}n^{\alpha_{\gamma}}\right)  $, and $\left\Vert \mathbf{A}_{4,nT}%
\right\Vert \leq\left\Vert n^{-1/2}\mathbf{G}_{n}^{\prime}\mathbf{U}%
_{nT}^{\prime}\mathbf{M}_{n}\right\Vert \left\Vert n^{-1/2}\mathbf{M}%
_{n}\mathbf{B}_{n}\right\Vert \left\Vert \mathbf{\tilde{V}}_{u}-\mathbf{V}%
_{u}\right\Vert $. Also, as shown above $\left\Vert n^{-1/2}\mathbf{M}%
_{n}\mathbf{B}_{n}\right\Vert =O_{p}(1)$, $\left\Vert n^{-1/2}\mathbf{G}%
_{n}^{\prime}\mathbf{U}_{nT}^{\prime}\mathbf{M}_{n}\right\Vert =O_{p}%
(T^{-1/2})$, then
\[
\left\Vert \mathbf{A}_{4,nT}\right\Vert =O_{p}\left(  T^{-1/2}n^{\alpha
_{\gamma}}\sqrt{\frac{\ln(n)}{T}}\right)  ,
\]%
\[
\left\Vert \mathbf{A}_{5,nT}\right\Vert \leq\left\Vert n^{-1/2}\mathbf{G}%
_{n}^{\prime}\mathbf{U}_{nT}^{\prime}\mathbf{M}_{n}\right\Vert \left\Vert
n^{-1/2}\mathbf{M}_{n}\mathbf{B}_{n}\right\Vert \left\Vert \mathbf{V}%
_{u}\right\Vert =O_{p}\left(  T^{-1/2}n^{\alpha_{\gamma}}\right)  ,
\]%
\[
\left\Vert \mathbf{A}_{6,nT}\right\Vert \leq\left\Vert n^{-1/2}\mathbf{G}%
_{n}^{\prime}\mathbf{U}_{nT}^{\prime}\mathbf{M}_{n}\right\Vert \left\Vert
n^{-1/2}\mathbf{M}_{n}\mathbf{B}_{n}\right\Vert \left\Vert \mathbf{\tilde{V}%
}_{u}-\mathbf{V}_{u}\right\Vert =O_{p}\left(  T^{-1/2}n^{\alpha_{\gamma}}%
\sqrt{\frac{\ln(n)}{T}}\right)  ,
\]
and $\left\Vert \mathbf{A}_{7,nT}\right\Vert \leq\left\Vert n^{-1/2}%
\mathbf{B}_{n}^{\prime}\mathbf{M}_{n}\right\Vert \left\Vert \mathbf{V}%
_{u}\right\Vert \left\Vert n^{-1/2}\mathbf{M}_{n}\mathbf{U}_{nT}\mathbf{G}%
_{n}\right\Vert =O_{p}\left(  T^{-1/2}n^{\alpha_{\gamma}}\right)  $. Overall,
\[
\left\Vert n^{-1}\mathbf{\hat{B}}_{n}^{\prime}\mathbf{M}_{n}\mathbf{\tilde{V}%
}_{u}\mathbf{M}_{n}\mathbf{\hat{B}}_{n}-n^{-1}\mathbf{B}_{n}^{\prime
}\mathbf{M}_{n}\mathbf{V}_{u}\mathbf{M}_{n}\mathbf{B}_{n\ }\right\Vert
=O_{p}\left(  n^{\alpha_{\gamma}}\sqrt{\frac{\ln(n)}{T}}\right)  ,
\]
which if used in (\ref{GapVar}) establishes (\ref{NormVaregzi}), as required.
}

{\small \pagebreak}

{\small \bigskip}%
\singlespacing
\thispagestyle{empty}%

\begin{center}
{\small \textbf{Online Supplement A: Data Sources and Calibration of Monte
Carlo Designs } }

{\small \bigskip}

{\small \bigskip To }

{\small \bigskip}

{\small \textbf{Identifying and exploiting alpha in linear asset pricing
models with strong, semi-strong, and latent factors} }

{\small \bigskip}

{\small \bigskip}

{\small by }

{\small \bigskip}

{\small \bigskip}

{\small M. Hashem Pesaran }

{\small University of Southern California, and Trinity College, Cambridge }

{\small \bigskip}

{\small Ron P. Smith }

{\small Birkbeck, University of London }

{\small \bigskip}

{\small \bigskip}

{\small October 2024 }
\end{center}

\pagebreak%
\setcounter{table}{0}%
\setcounter{section}{0}%
\setcounter{figure}{0}%
\setcounter{footnote}{0}%
%

\renewcommand{\thetable}{SA-\arabic{table}}%
\renewcommand{\thefigure}{SA-\arabic{figure}}%
\setcounter{page}{1}%
\renewcommand{\thepage}{SA-\arabic{page}}%
\renewcommand{\thesection}{S\arabic{section}}%
\renewcommand{\thesubsection}{S-\arabic{section}.\arabic{subsection}}%

\section{{\protect\small Introduction}}

{\small This online supplement provides the details of data sources for the
risk factors and the excess returns on securities used to calibrate the Monte
Carlo designs and to carry out the empirical applications reported in Sections
\ref{Simulations} and \ref{Empirical} of the main paper. Section \ref{DMC}
describes the data used to calibrate the Monte Carlo (MC) designs, and Section
\ref{EMC} provides the estimates that formed the basis of the calibration of
the parameters of the MC designs. Section \ref{BlockCov} provides evidence for
the choice of 14 blocks used in calibration of error covariances in the Monte
Carlo experiments. Section \ref{DEMP} describes the data for factors and
excess returns used in the empirical applications, and Section
\ref{appPRsquared} derives the relationship between pooled $R^{2}$ of return
regressions and the factor strengths used in the discussion of the empirical
results. }

\section{{\protect\small Data used to calibrate the Monte Carlo designs}%
\label{DMC}}

\subsection{{\protect\small Factors}}

{\small To calibrate the parameters of the three factor model used in the
Monte Carlo experiments we used monthly Fama-French three factor data series
over the long sample 1963m8-2021m12, downloaded from Kenneth French's
webpage.\footnote{See
https://mba.tuck.dartmouth.edu/pages/faculty/ken.french/data\_library.html}
The factors are the market return minus the risk free rate, denoted by MKT,
the value factor (high book to market minus low portfolios, HML) and the size
factor (small minus big portfolios, SMB). The risk free rate is also
downloaded from French's webpage. First order autoregressions, AR(1), were
estimated for all the three factors using the full data set, 1963m8-2021m12.
Then GARCH(1,1) models were then estimated on the residuals from the fitted
AR(1) regressions. }

\subsection{{\protect\small Excess returns}}

{\small To calibrate the factor loadings and other parameters of the excess
return regressions we used the shorter sample over the 20 years 2002m1 -
2021m12 ($T=240$). Monthly returns (inclusive of dividend payment) over 2002m1
- 2021m12 ($T=240$) for NYSE and NASDAQ stocks with share codes of 10 and 11
from CRSP were downloaded from Wharton Research Data Services and transformed
to firm-specific excess returns using the risk free rate from French's
webpage, and measured in percent, per month. Only stocks with data over the
period 2002m1-2021m12 were used to arrive at a balanced panel with $T=240$
monthly observations and a total number of $n=1,289$, securities. }

{\small To avoid extreme outliers influencing the estimates we first computed
mean, median, standard deviation, skewness and kurtosis for each security over
2002m1 - 2021m12. Table \ref{tab:sumstat} reports the mean, standard
deviations and interquartile range of these statistics over all the $1,289$
securities in our sample. The histograms of these summary statistics (mean,
median, standard deviation, skewness and kurtosis) of the individual stock
returns for the full sample over 2002m1 - 2021m12 are shown in Figure
\ref{fig:Histo_full}. As can be seen from these summary statistics, there are
outlier security returns with very large standard deviations. This is clear
from the cross firm standard deviation of 13.97 which is much larger than the
kurtosis of 9.24 (See Table \ref{tab:sumstat}), resulting from a number of
extreme outliers also seen from the long right tail of the histogram for the
distribution of kurtosis across firms.}

{\small
\begin{figure}[H]
\caption{Histogram and density function of the individual stock returns ($n=1,289$) over 2002m1-2021m12 ($T=240$)}
\label{fig:Histo_full} }

\begin{center}
{\small $%
\begin{array}
[c]{cc}%
{\parbox[b]{2.7337in}{\begin{center}
\fbox{\includegraphics[
height=2.0435in,
width=2.7337in
]%
{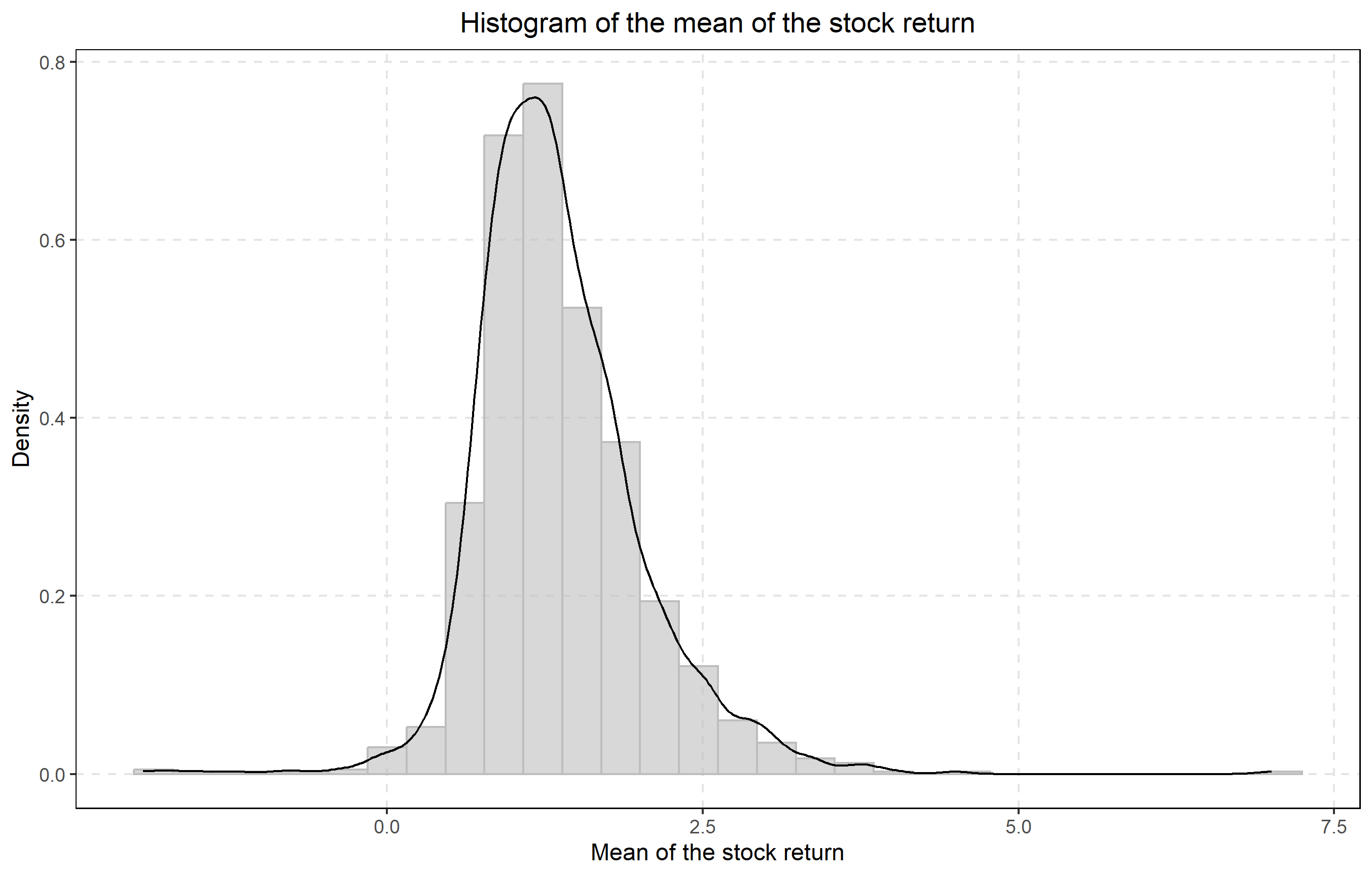}%
}\\
{}%
\end{center}}}
&
{\parbox[b]{2.7337in}{\begin{center}
\fbox{\includegraphics[
height=2.0435in,
width=2.7337in
]%
{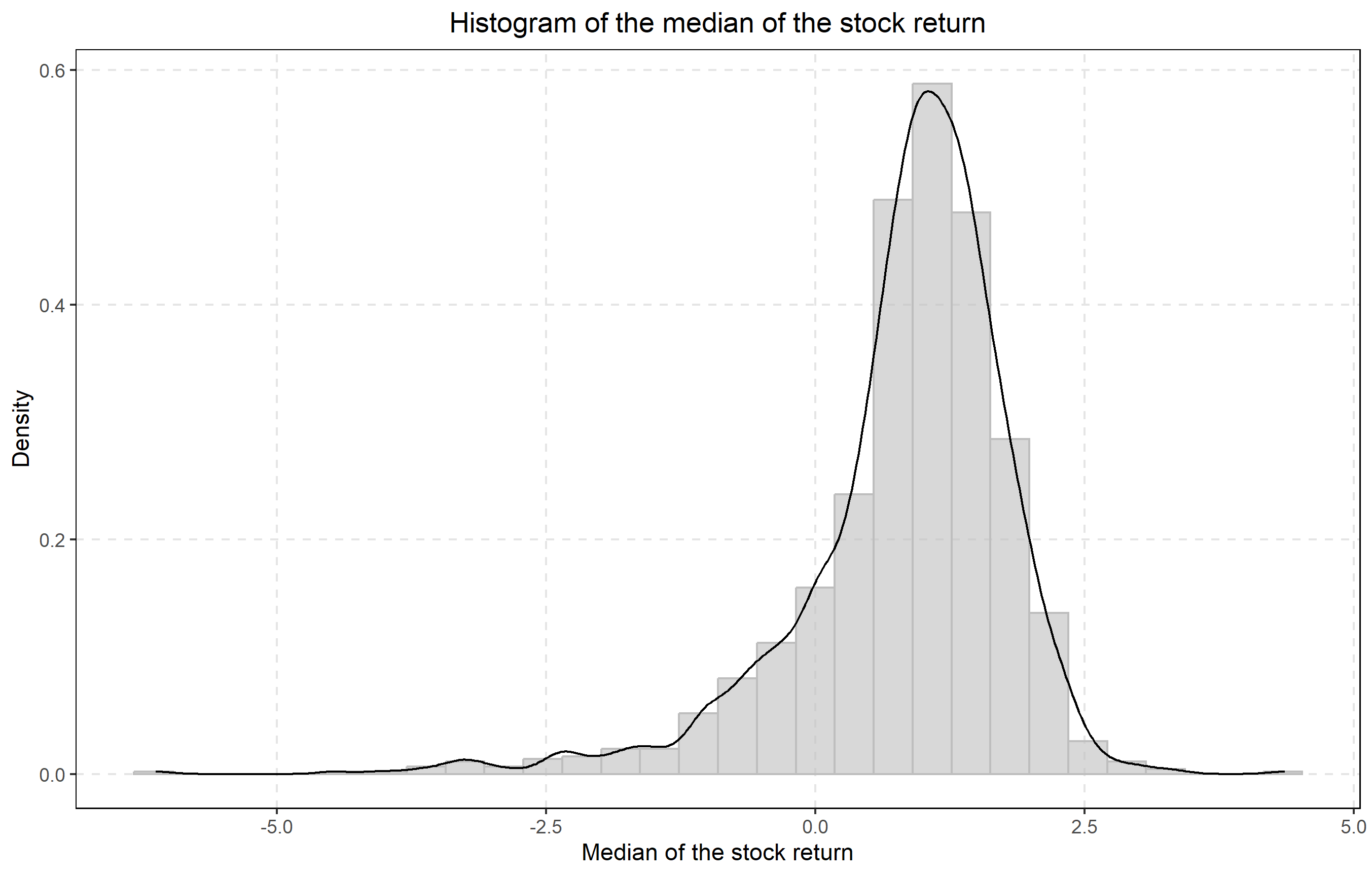}%
}\\
{}%
\end{center}}}
\\%
{\parbox[b]{2.7337in}{\begin{center}
\fbox{\includegraphics[
height=2.0435in,
width=2.7337in
]%
{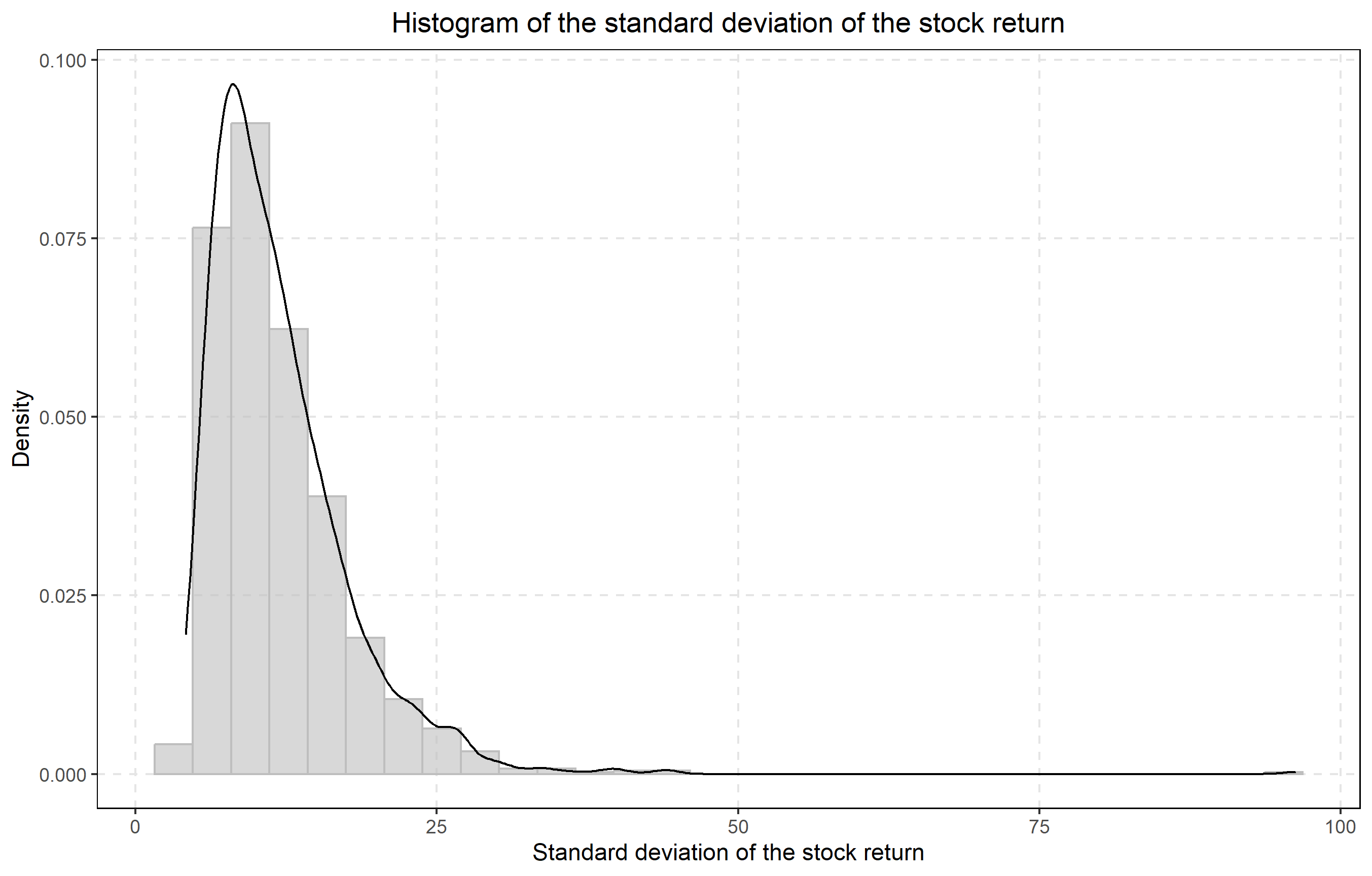}%
}\\
{}%
\end{center}}}
&
{\parbox[b]{2.7337in}{\begin{center}
\fbox{\includegraphics[
height=2.0435in,
width=2.7337in
]%
{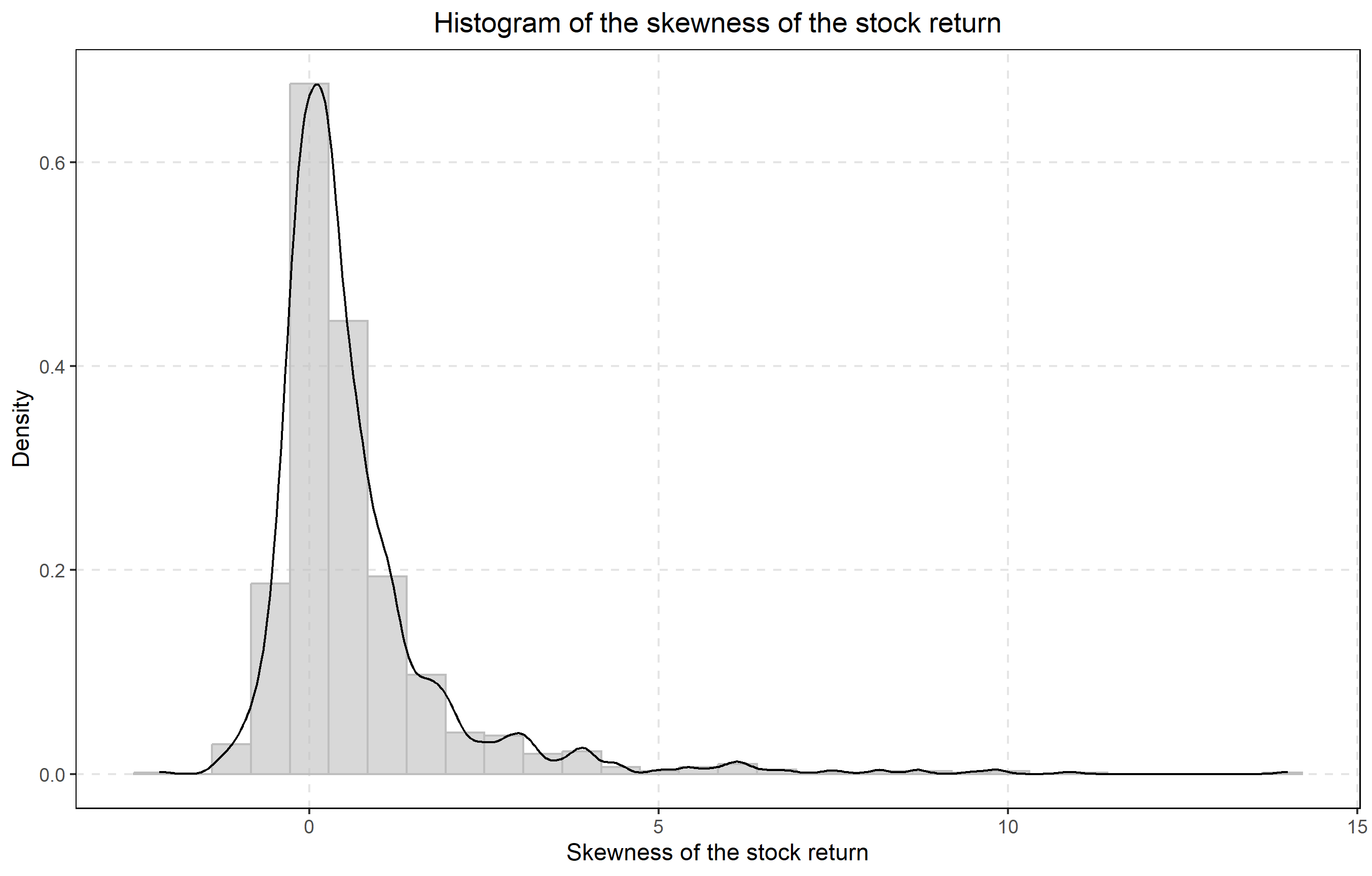}%
}\\
{}%
\end{center}}}
\\%
{\parbox[b]{2.7337in}{\begin{center}
\fbox{\includegraphics[
height=2.0435in,
width=2.7337in
]%
{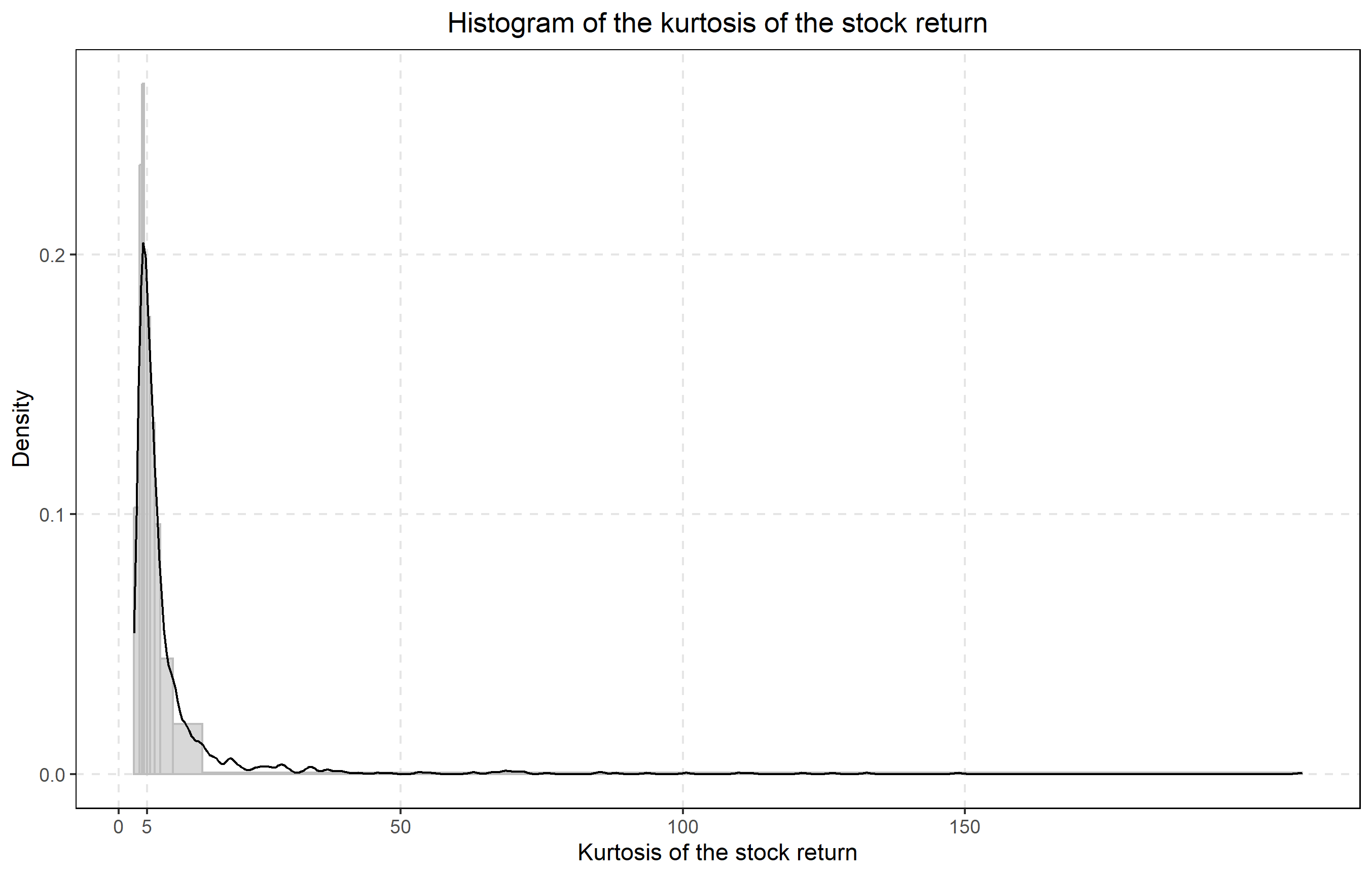}%
}\\
{}%
\end{center}}}
&
\end{array}
$ }
\end{center}

{\small
\end{figure}%
}

\subsection{{\protect\small Stocks with the kurtosis less than or equal to 14
and less than 16}}

{\small To reduce the influence of outlier returns on our results we
considered dropping stock excess returns having kurtosis in excess of 14 and
16. Excluding stocks with kurtosis less than or equal to 14, resulted in a
sample with $n=1,148$ securities, whilst if we use the cut off point of 16 we
ended up with $n=1,175$ securities. The mean, median, standard deviation,
skewness and kurtosis for each security over 2002m1 - 2021m12 of the whole and
two sub-samples, ($1,289,$ $1,148$ and $1,175).$ Then the average, standard
deviation and interquartile range (IQR) for each summary statistic of the
stocks from the two sub-sample are summarized respectively in Table
\ref{tab:sumstat}.}

{\small
\begin{table}[H]%
}

{\small \caption{The average, standard deviation and interquartile range of the
summary statistics of the individual stocks over 2002m1 - 2021m12 ($T=240$)}\label{tab:sumstat}%
}

\begin{center}
{\small
\begin{tabular}
[c]{lrcc}\hline\hline
& Average & Standard deviation & Interquartile range\\\hline
\multicolumn{4}{l}{}\\
\multicolumn{4}{l}{\textit{Panel A}: All stocks $(n=1289)$}\\
stock.mean & 1.3666 & 0.6593 & 0.7511\\
stock.median & 0.8360 & 0.9970 & 0.9310\\
stock.standard deviation & 11.9030 & 6.0082 & 6.5248\\
stock.skewness & 0.6411 & 1.3766 & 0.9290\\
stock.kurtosis & 9.2407 & 13.9786 & 4.0183\\
\multicolumn{4}{l}{}\\
\multicolumn{4}{l}{\textit{Panel B}: Kurtosis $\leq14$ $(n=1148)$}\\
stock.mean & 1.3303 & 0.5898 & 0.7032\\
stock.median & 0.9474 & 0.8816 & 0.8654\\
stock.standard deviation & 10.8837 & 4.3944 & 5.6429\\
stock.skewness & 0.2976 & 0.5996 & 0.7270\\
stock.kurtosis & 5.9885 & 2.3932 & 2.7762\\
\multicolumn{4}{l}{}\\
\multicolumn{4}{l}{\textit{Panel B}: Kurtosis $\leq16$ $(n=1175)$}\\
stock.mean & 1.3336 & 0.5922 & 0.7081\\
stock.median & 0.9363 & 0.8878 & 0.8764\\
stock.standard deviation & 10.9943 & 4.4695 & 5.6836\\
stock.skewness & 0.3219 & 0.6285 & 0.7607\\
stock.kurtosis & 6.1947 & 2.7222 & 2.9819\\\hline\hline
\end{tabular}
}
\end{center}

{\small
\end{table}%
}

{\small Histograms for mean, median, standard deviation, skewness and kurtosis
of the individual stock returns for two sub-samples are shown in Figure
\ref{fig:Histo_kur14} for the sub-sample with the kurtosis less than or equal
to 14 over 2002m1 - 2021m12 ($n=1,148$), and Figure \ref{fig:Histo_kur16} are
for the sub-sample with the kurtosis less than 16 over 2002m1 - 2021m12
($n=1,175$).}

{\small
\begin{figure}[H]%
\caption{Histogram and density function of the individual stock returns with kurtosis less than or equal to 14 over 2002m1-2021m12 ($T=240$) and $n=1,148$}
\label{fig:Histo_kur14}
\[%
\begin{array}
[c]{cc}%
{\parbox[b]{2.7337in}{\begin{center}
\fbox{\includegraphics[
height=2.0435in,
width=2.7337in
]%
{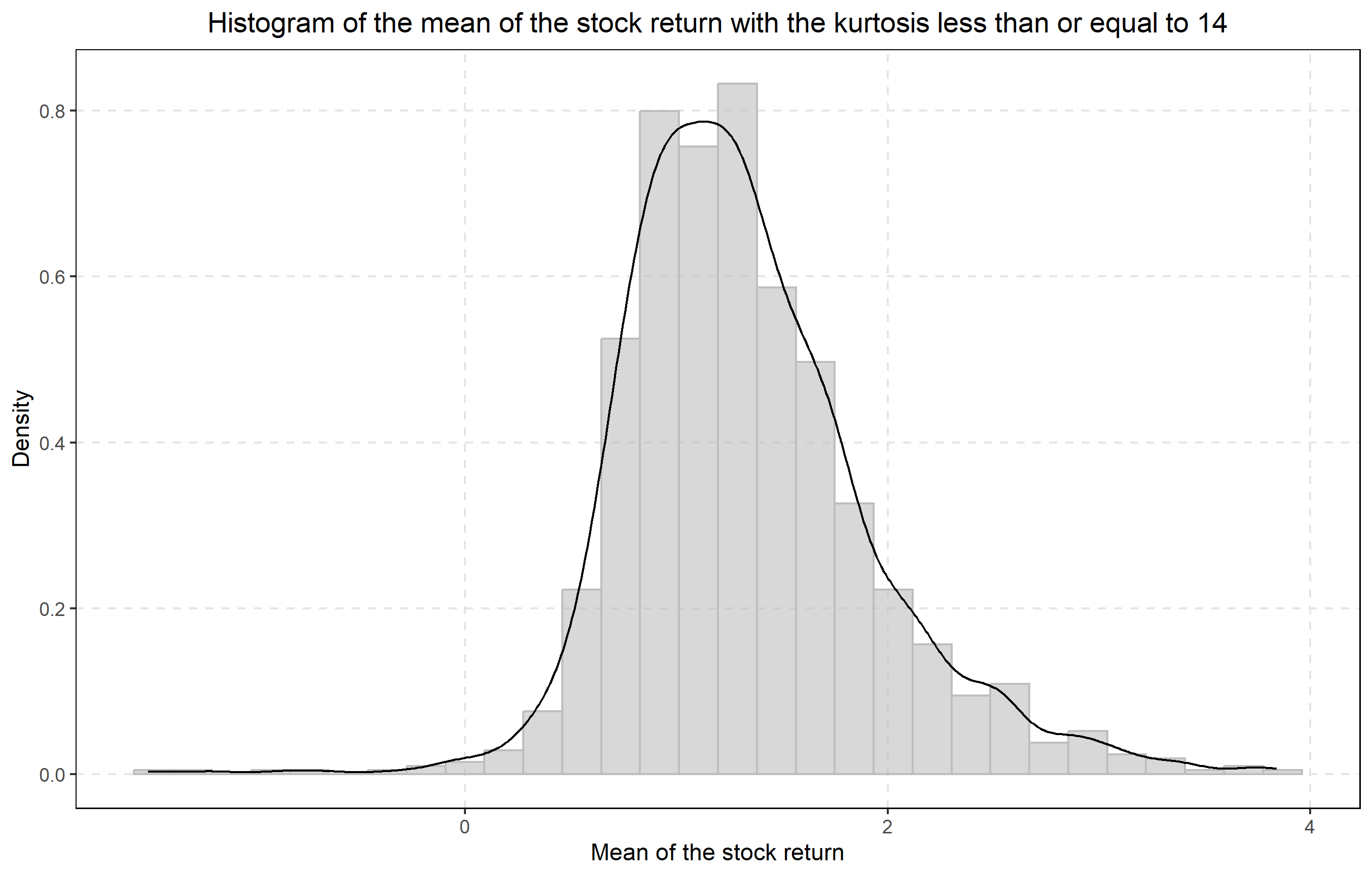}%
}\\
{}%
\end{center}}}
&
{\parbox[b]{2.7337in}{\begin{center}
\fbox{\includegraphics[
height=2.0435in,
width=2.7337in
]%
{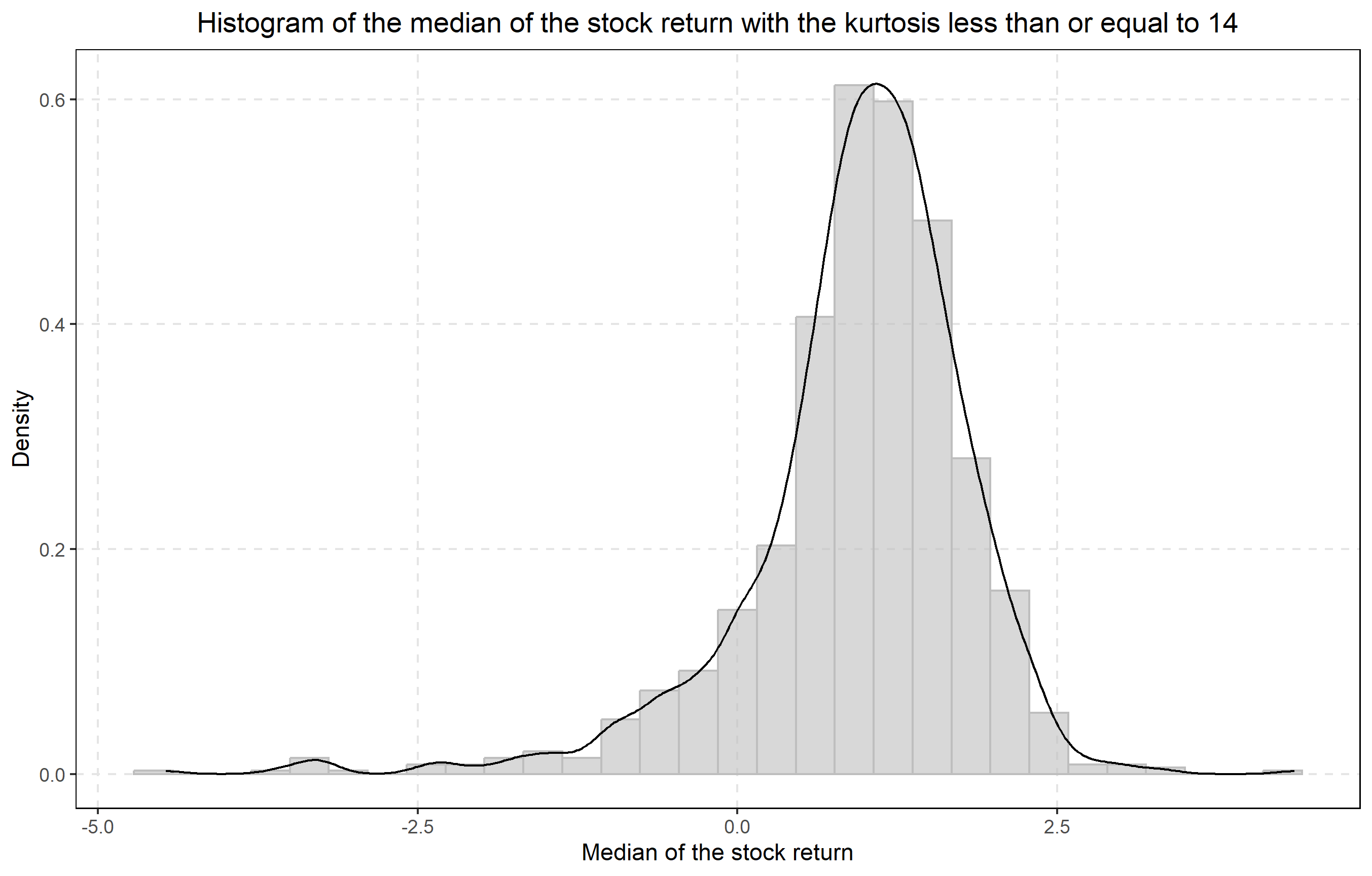}%
}\\
{}%
\end{center}}}
\\%
{\parbox[b]{2.7337in}{\begin{center}
\fbox{\includegraphics[
height=2.0435in,
width=2.7337in
]%
{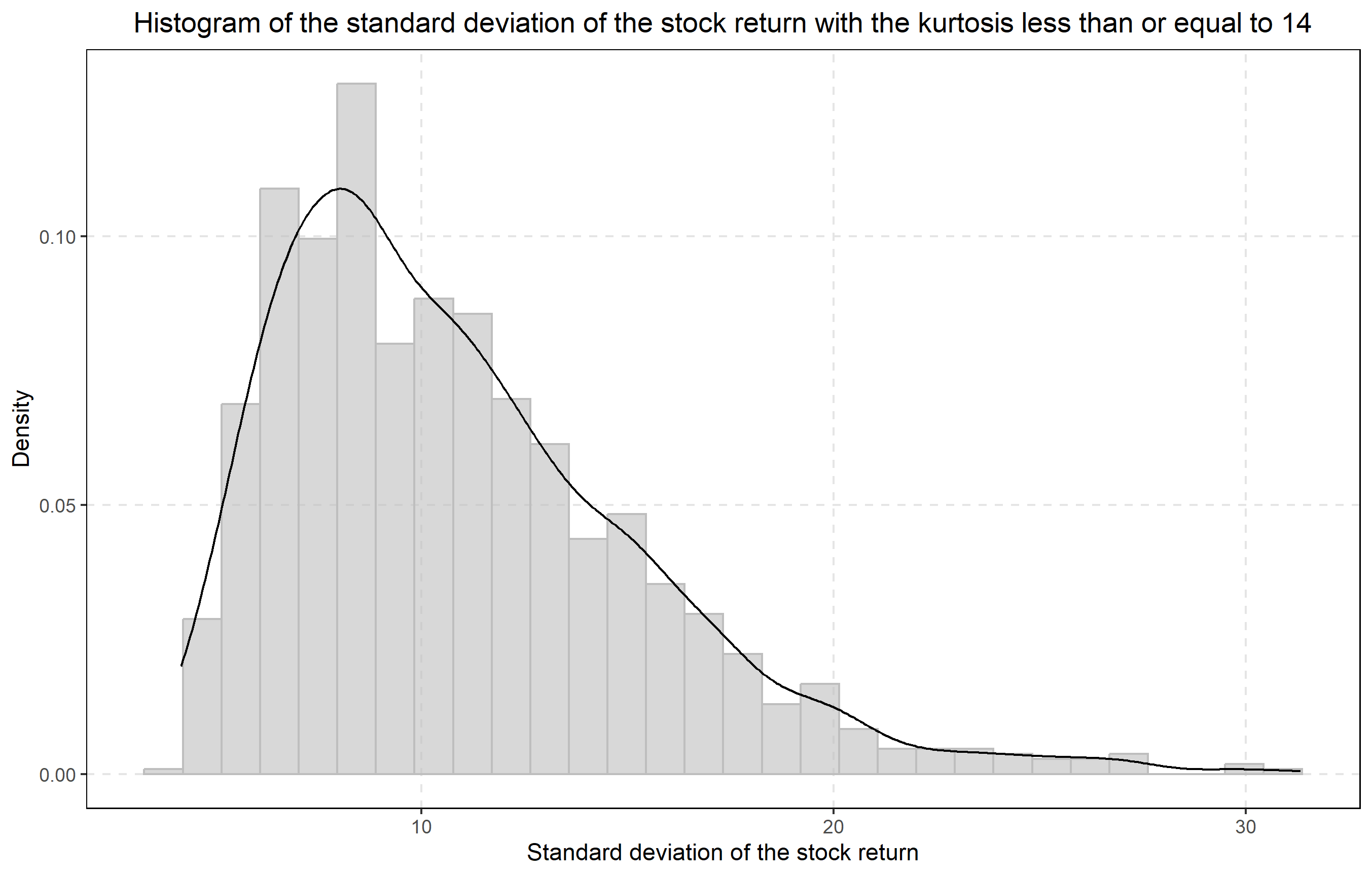}%
}\\
{}%
\end{center}}}
&
{\parbox[b]{2.6904in}{\begin{center}
\fbox{\includegraphics[
height=2.0435in,
width=2.6904in
]%
{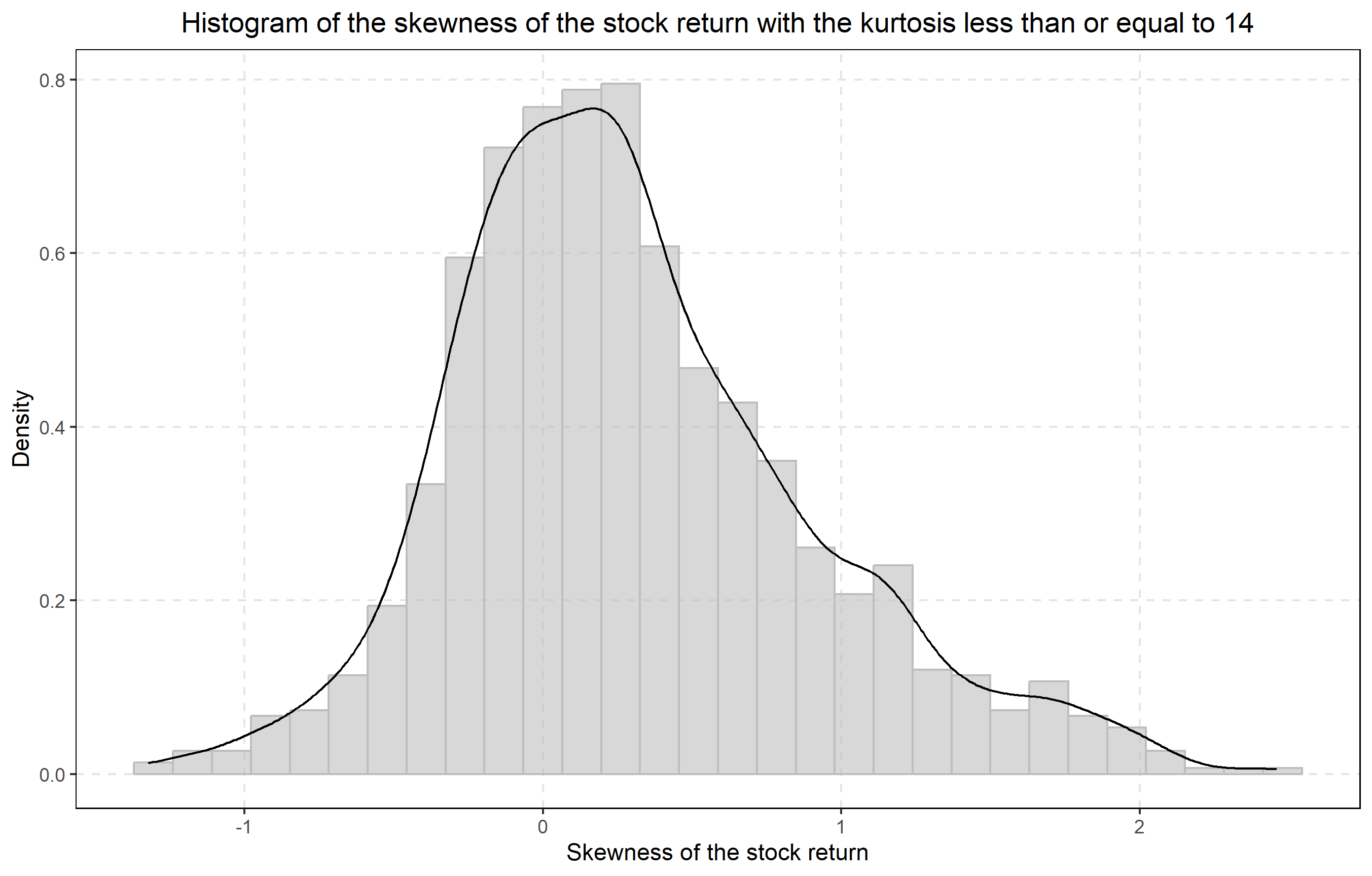}%
}\\
{}%
\end{center}}}
\\%
{\parbox[b]{2.7337in}{\begin{center}
\fbox{\includegraphics[
height=2.0435in,
width=2.7337in
]%
{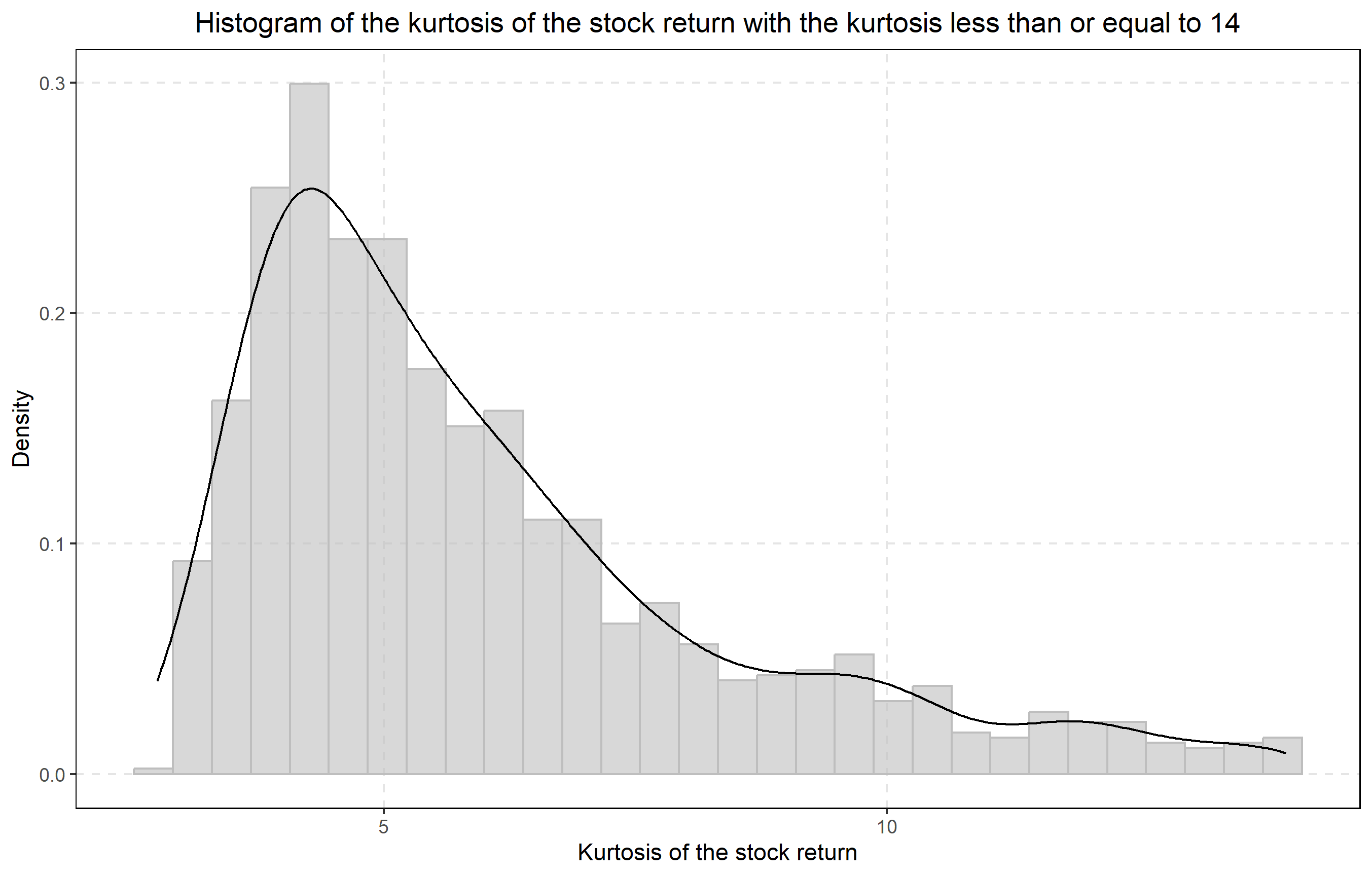}%
}\\
{}%
\end{center}}}
&
\end{array}
\]
}

{\small
\end{figure}%
}

{\small \pagebreak}

{\small
\begin{figure}[H]%
\caption{Histogram and density function of the individual stock returns with kurtosis less than or equal to 16 over 2002m1-2021m12 ($T=240$) and $n=1,148$}
\label{fig:Histo_kur16}}

{\small
\[%
\begin{array}
[c]{cc}%
{\parbox[b]{2.7337in}{\begin{center}
\fbox{\includegraphics[
height=2.0435in,
width=2.7337in
]%
{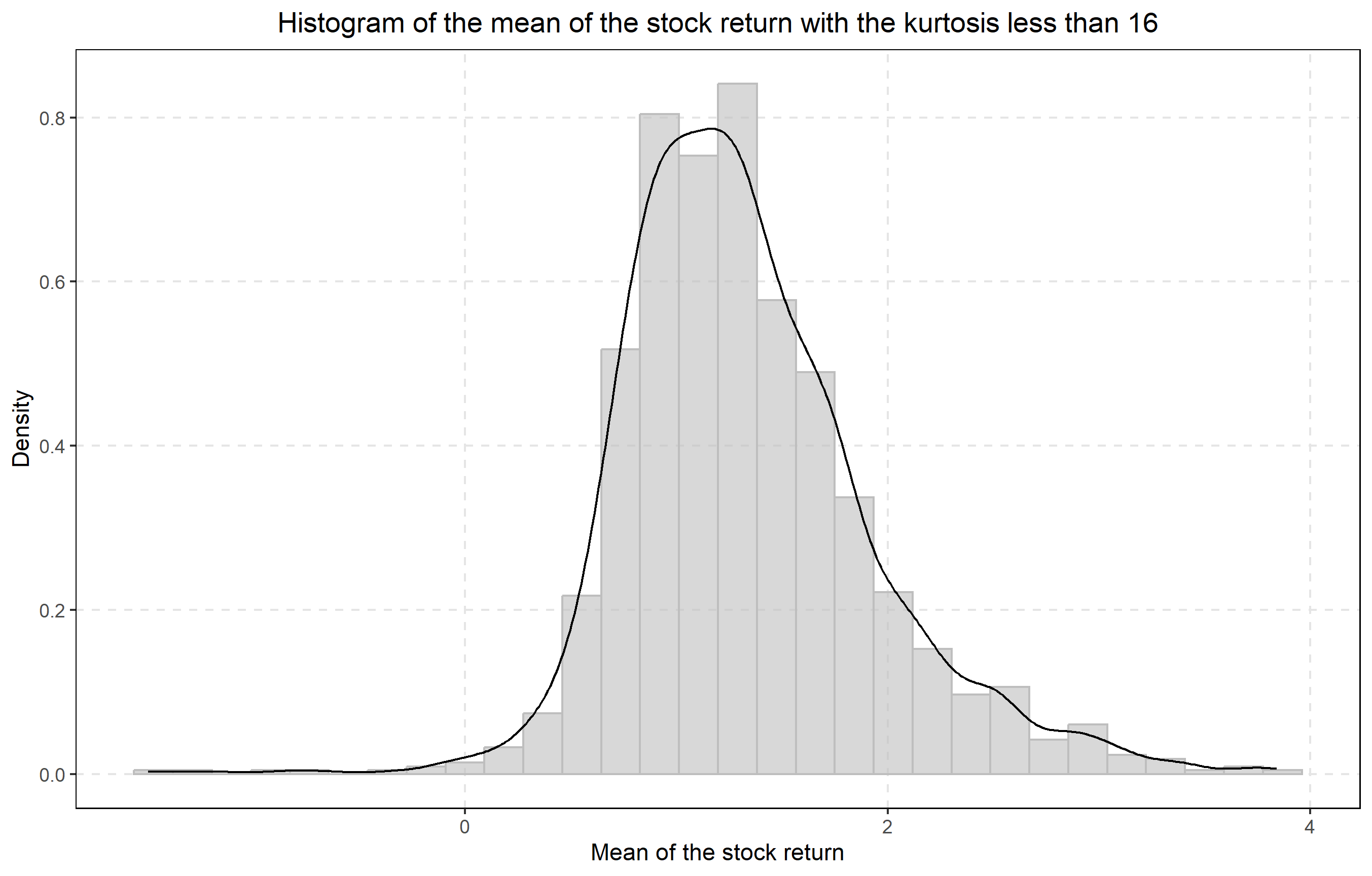}%
}\\
{}%
\end{center}}}
&
{\parbox[b]{2.7337in}{\begin{center}
\fbox{\includegraphics[
height=2.0435in,
width=2.7337in
]%
{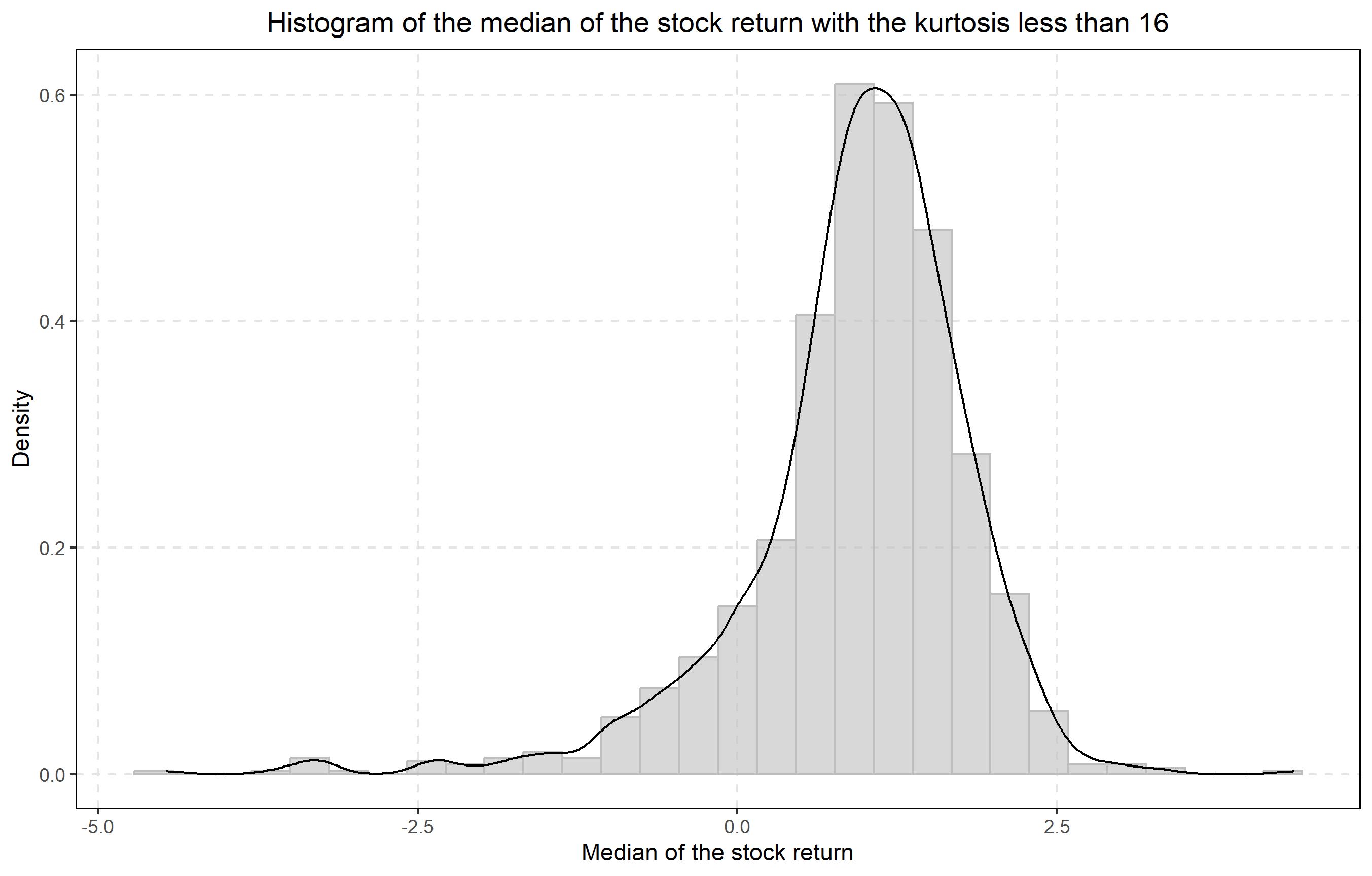}%
}\\
{}%
\end{center}}}
\\%
{\parbox[b]{2.7337in}{\begin{center}
\fbox{\includegraphics[
height=2.0435in,
width=2.7337in
]%
{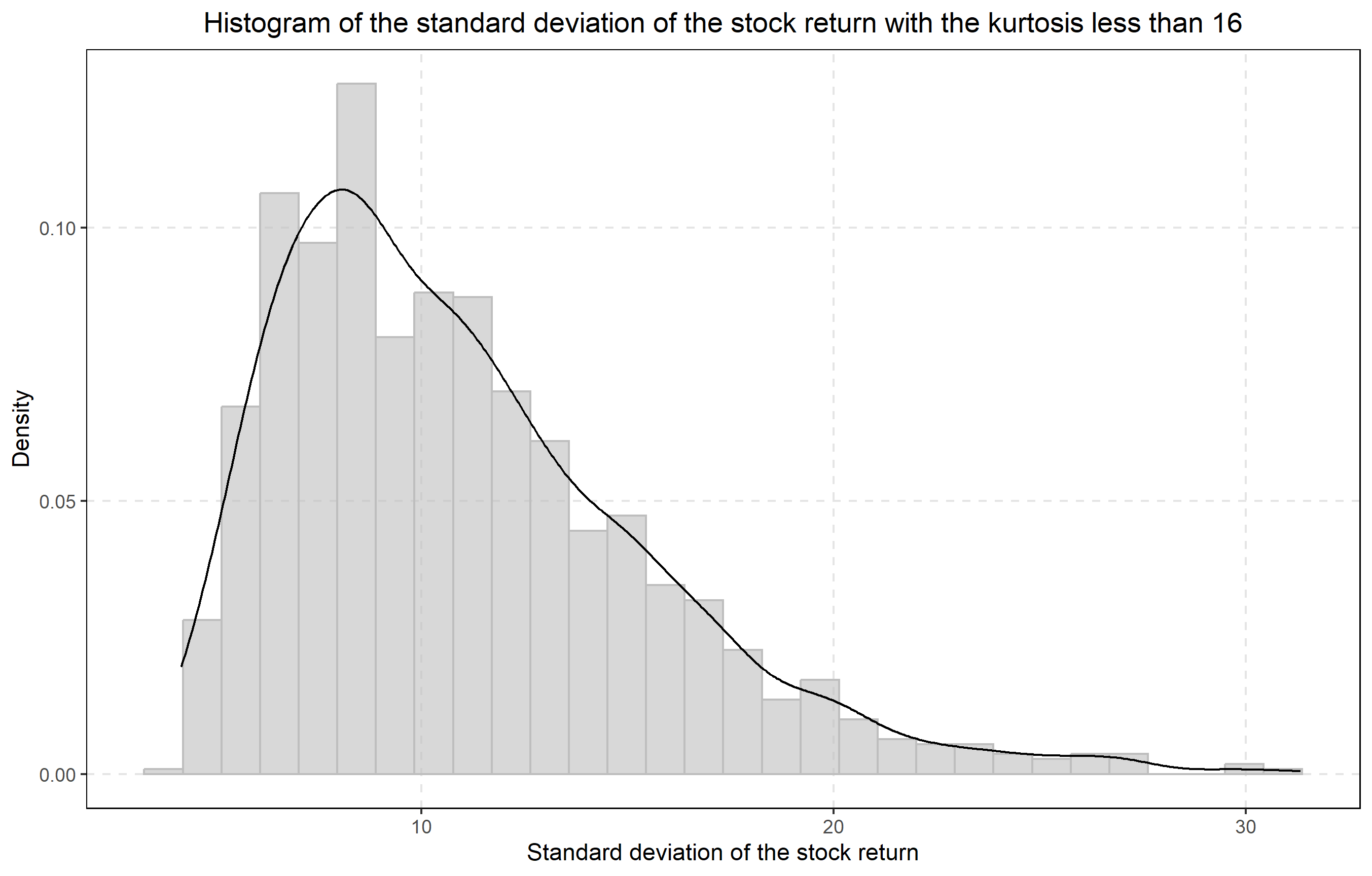}%
}\\
{}%
\end{center}}}
&
{\parbox[b]{2.7337in}{\begin{center}
\fbox{\includegraphics[
height=2.0435in,
width=2.7337in
]%
{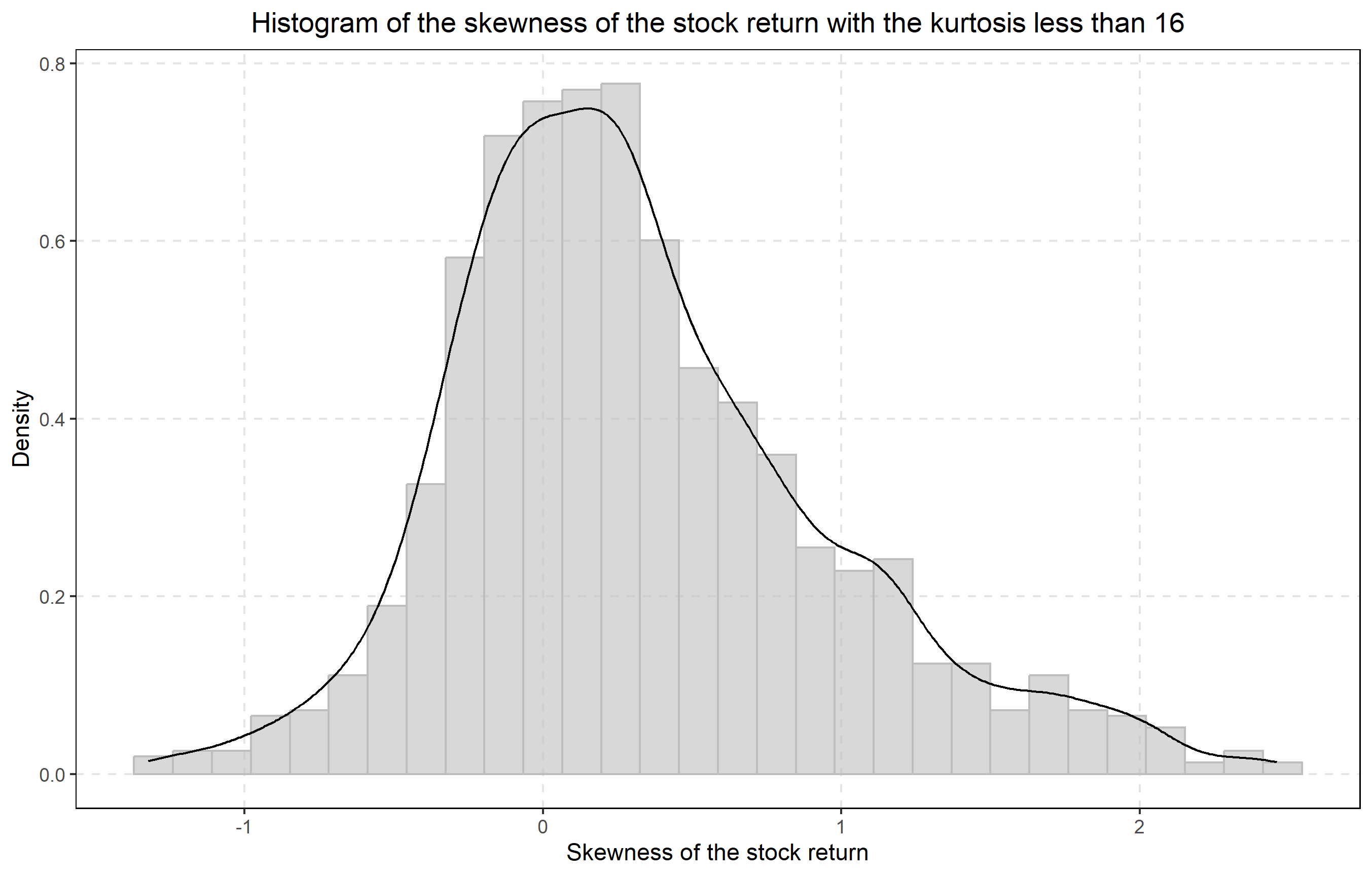}%
}\\
{}%
\end{center}}}
\\%
{\parbox[b]{2.7337in}{\begin{center}
\fbox{\includegraphics[
height=2.0435in,
width=2.7337in
]%
{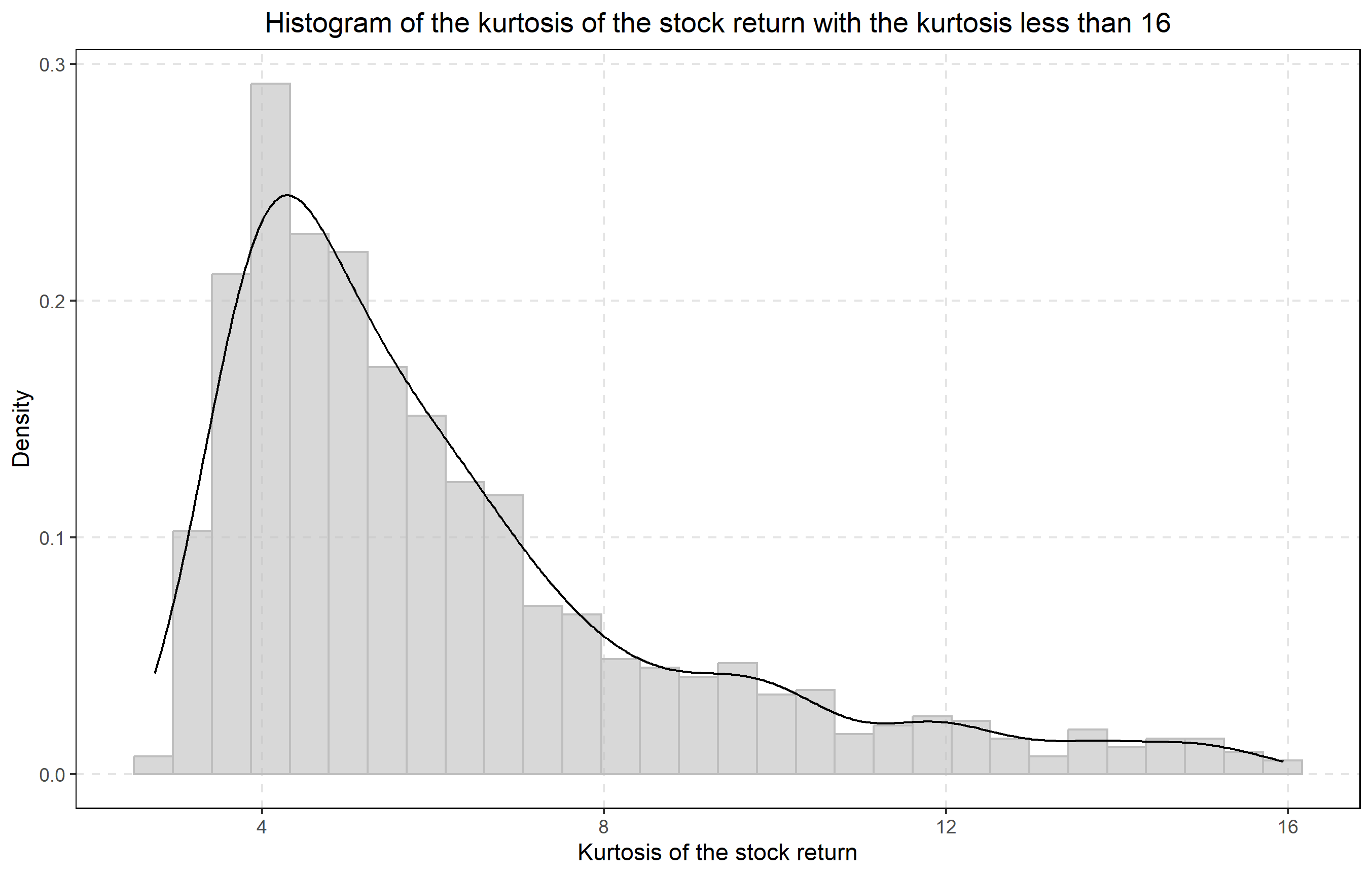}%
}\\
{}%
\end{center}}}
&
\end{array}
\]
}

{\small
\end{figure}%
}

{\small \pagebreak}

\section{{\protect\small The MC design and its calibration\label{EMC}}}

\subsection{{\protect\small Generation of returns}}

{\small Excess returns, $r_{it}$, are generated as
\begin{equation}
r_{it}=\mathit{\alpha}_{i}+\sum_{k=M,H,S}\beta_{ik}f_{kt}+u_{it},
\label{eqn:DGP}%
\end{equation}%
\[
=\mathit{\alpha}_{i}+\boldsymbol{\beta}_{i}^{\prime}\mathbf{f}_{t}+u_{it},
\]
for $i=1,2,..,n;$ $t=1,2,...,T$, with
\begin{equation}
\mathit{\alpha}_{i}=c+\boldsymbol{\beta}_{i}^{\prime}\boldsymbol{\phi}%
+\eta_{i}, \label{APT condition}%
\end{equation}
where $\mathbf{f}_{t}=(f_{Mt},f_{Ht},f_{St})^{\prime}$, and $\boldsymbol{\beta
}_{i}=(\beta_{Mi},\beta_{Hi},\beta_{\operatorname{Si}})^{\prime}$, and
$\boldsymbol{\phi}=(\phi_{M},\phi_{H},\phi_{S})^{\prime}$. }

\subsection{{\protect\small Generation of factors}}

{\small Factors are generated as first-order autoregressive, AR(1), processes
with GARCH(1,1) effects:%
\begin{equation}
f_{kt}=\ \mu_{k}(1-\rho_{k})+\rho_{k}f_{k,t-1}+\left(  1-\rho_{k}^{2}\right)
^{1/2}\sigma_{kt}\zeta_{kt}, \label{eqn:factor}%
\end{equation}%
\begin{equation}
\sigma_{kt}^{2}=\ (1-b_{k}-c_{k})\sigma_{k}^{2}+b_{k}\sigma_{k,t-1}^{2}%
+c_{k}\sigma_{k,t-1}^{2}\zeta_{k,t-1}^{2}, \label{eqn:GARCH}%
\end{equation}
for $k=M,H,S$, starting from $t=-49,...0,1,2,...,T,$ with $f_{k,-50}=0$ (and
$\sigma_{k,-50}=0$ in the case where $c_{k}\neq0$) to minimize the effects of
the initial values on the sample $f_{kt},t=1,2,...,T$ used in the simulations.
}

{\small The data generating process for the factors is calibrated using the
full set of Fama-French three factor data set covering the period
1963m8-2021m12. The calibrated parameter values are $\boldsymbol{\mu}=\left(
\mu_{M},\mu_{H},\mu_{S}\right)  ^{\prime}=\left(  0.59,0.27,0.23\right)
^{\prime},$ $\boldsymbol{\sigma}=\left(  \sigma_{M},\sigma_{H},\sigma
_{S}\right)  ^{\prime}=(4.45,2.86,3.03)^{\prime},$ and $\boldsymbol{\rho
}=\left(  \rho_{M},\rho_{H},\rho_{S}\right)  ^{\prime}%
=(0.06,0.17,0.07)^{\prime}$. Note that $Var(f_{kt})=\sigma_{k}^{2}$. The
parameters of (\ref{APT condition}) are also estimated using the
bias-corrected procedure and are set as $c=0.83$ and $\boldsymbol{\phi
}=\left(  -0.49,-0.35,0.16\right)  ^{\prime}$. To ensure that correlation
across the three factors match the Fama-French data we generated
$\boldsymbol{\zeta}_{t}=(\zeta_{Mt},\zeta_{Ht},\zeta_{St})^{\prime}$ as
$\boldsymbol{\zeta}_{t}=\mathbf{Q}_{\zeta}\boldsymbol{\omega}_{t},$ where
$\mathbf{Q}_{\zeta}$ is the Cholesky factor of $\mathbf{R}_{\zeta}$, the
correlation matrix of $\mathbf{f}_{t}$ given by%
\[
\mathbf{R}_{\zeta}=\left(
\begin{array}
[c]{ccc}%
1 & -0.21 & 0.28\\
-0.21 & 1 & -0.02\\
0.27 & -0.02 & 1
\end{array}
\right)  .
\]
We consider both Gaussian and non-Gaussian errors and generate
$\boldsymbol{\omega}_{t}$ as $IID(\mathbf{0,I}_{3})$, as well as a
multivariate t with $5$ degrees of freedom, namely $t(\mathbf{0}%
,\mathbf{I}_{3},5)$. The remaining parameters are set as $b_{k}=c_{k}=0$ to
generate homoskedastic errors, and $b_{k}=0.8$ and $c_{k}=0.1$ for $k=M,H,S$
to generate GARCH effects. }

\subsection{{\protect\small Estimation of factor models}}

{\small Consider the AR(1) processes with a GARCH(1,1) errors%
\begin{equation}
f_{kt}=\ \mu_{k}(1-\rho_{k})+\rho_{k}f_{k,t-1}+\left(  1-\rho_{k}^{2}\right)
^{1/2}\sigma_{kt}\zeta_{kt}, \label{eqn:factor OL}%
\end{equation}
}

{\small
\begin{equation}
\sigma_{kt}^{2}=\ (1-b_{k}-c_{k})\sigma_{k}^{2}+b_{k}\sigma_{k,t-1}^{2}%
+c_{k}\sigma_{k,t-1}^{2}\zeta_{k,t-1}^{2}, \label{eqn:GARCH OL}%
\end{equation}
where $f_{kt}$, $k=M,H,S$ and $t=1,...,T$, denote the the values of three
factors $MKT,HML,SMB$ in month $t$, respectively. The estimates of GARCH
parameters obtained using the sample 2002m1 - 2021m12 ($T=240$) are summarized
in Table \ref{tab:factor_para_20y}.}

{\small
\begin{table}[H]%
\caption{GARCH parameters for the models of three Fama-French factors for the
sample over 2002m1 - 2021m12 $(T=240)$}\label{tab:factor_para_20y} }

\begin{center}
{\small
\begin{tabular}
[c]{cccccc}\hline\hline
& $\hat\mu$ & $\hat\rho$ & $\hat\sigma$ & $\hat b$ & $\hat c$\\\hline
$MKT$ & 0.8030 & 0.0711 & 4.5703 & 0.6781 & 0.2395\\
& (0.3035) & (0.0648) & ($\cdot$) & (0.0854) & (0.0627)\\
$HML$ & --0.0805 & 0.1816 & 3.1513 & 0.7582 & 0.1987\\
& (0.2194) & (0.0639) & ($\cdot$) & (0.0986) & (0.0686)\\
$SMB$ & 0.1718 & --0.0267 & 2.5815 & 0.8353 & 0.0680\\
& (0.1651) & (0.0649) & ($\cdot$) & (0.1845) & (0.0606)\\\hline\hline
\end{tabular}
}
\end{center}

{\small
\end{table}%
}

{\small The correlation matrix of three factors $MKT$, $HML$, $SMB$ over
2001m1-2021m9 $(T=240)$ is
\[%
\begin{pmatrix}
1.00 & 0.20 & 0.35\\
0.20 & 1.00 & 0.35\\
0.35 & 0.35 & 1.00
\end{pmatrix}
.
\]
}

\subsection{{\protect\small Factor loadings estimates}}

{\small For each of the securities $i=1,2,...,1175$ (with kurtosis below 16),
and $t=1,2,...,240$, OLS regressions excess returns $y_{it}$ for security $i$
was run on an intercept and the three FF factors
\[
y_{it}=r_{it}-r_{t}^{f}=a_{i}+\sum_{k\in\{M,H,S\}}\beta_{ik}f_{kt}+u_{it},
\]
where $r_{it}$ is the return of $i^{th}$ security at time $t$, inclusive of
dividend (if any), and $r_{t}^{f}$ is the risk free rate. The sample mean and
standard deviation of the excess return for each individual stock, denoted as
$\bar{y}_{iT}$ and $s_{yiT}$ are computed as
\begin{equation}
\bar{y}_{iT}=T^{-1}\sum_{t=1}^{T}y_{it}, \label{eqn:mean_y}%
\end{equation}%
\begin{equation}
sd_{iT}(y)=\sqrt{(T-1)^{-1}\sum_{t=1}^{T}(y_{it}-\bar{y}_{iT})^{2}}.
\label{eqn:sd_y}%
\end{equation}
}

{\small The estimates $\hat{a}_{i,T}$, $\hat{\beta}_{ik,T}$, $k=M,H,S$ are
given by
\begin{equation}
\left(  \hat{a}_{i,T},\hat{\beta}_{iM,T},\hat{\beta}_{iH,T},\hat{\beta}%
_{iS,T}\right)  ^{\prime}=\boldsymbol{(F_{0}^{\prime}F}_{0}\boldsymbol{)}%
^{-1}\boldsymbol{F_{0}^{\prime}y_{i\circ}}, \label{eqn:abeta}%
\end{equation}
where $\boldsymbol{F}_{0}=\left(  \boldsymbol{\tau}_{T},\mathbf{F}\right)  $,
$\mathbf{F=}\left(  \boldsymbol{f}_{1},\boldsymbol{f}_{2},...,\boldsymbol{f}%
_{T}\right)  ^{\prime}$, $\boldsymbol{f}_{t}=(f_{Mt},f_{Ht},f_{St})^{\prime}$
and $\mathbf{y}_{i\circ}=\left(  y_{i1},y_{i2},...,y_{iT}\right)  ^{\prime}$.
The standard error of the $i^{th}$ regression, denoted as $s_{iT}$, is given
by
\begin{equation}
\hat{\sigma}_{iT}^{2}=(T-K-1)^{-1}\sum_{t=1}^{T}\hat{u}_{it}^{2},
\label{eqn:S_iT}%
\end{equation}
where $\hat{u}_{it}=y_{it}-\hat{a}_{iT}-\sum_{k\in\{M,H,S\}}\hat{\beta}%
_{ik,T}f_{kt}$. The coefficient of determination of the $i^{th}$ regression,
denoted by $R_{iT}^{2}$, is given by
\begin{equation}
R_{iT}^{2}=1-\frac{\sum_{t=1}^{T}\hat{u}_{it}^{2}}{\sum_{t=1}^{T}(y_{it}%
-\bar{y}_{iT})^{2}}. \label{eqn:R2_iT}%
\end{equation}
We compute the summary statistics: mean, median, standard deviation (S.D.),
skewness, kurtosis, interquartile range, minimum, maximum for the sample mean
and standard deviation of the excess returns, the estimates and the
corresponding standard error, $R$-squared of the regressions over 2002m1 -
2021m12 $(T=240)$, for the $n=1,175$ securities: $\bar{y}_{iT}$, $sd_{i,T}%
(y)$, $\hat{a}_{i,T}$, $\hat{\beta}_{iM,T}$, $\hat{\beta}_{iH,T}$, $\hat
{\beta}_{iS,T}$, $\hat{\sigma}_{iT}^{2}$ and $R_{iT}^{2}$ for $i=1,2,...,1175$%
, computed using \eqref{eqn:mean_y}-\eqref{eqn:R2_iT}. The results are
summarized in Table \ref{tab:betas_20y}.}

{\small
\begin{table}[H]%
\caption{The summary statistics of the estimates, standard error and R-squared
of the panel regression over 2002m1 - 2021m12 $(T=240)$ and $n=1,175$}\label{tab:betas_20y}%
}

\begin{center}
{\small
\begin{tabular}
[c]{lrrrrrrrr}\hline\hline
& mean & median & S.D. & skewness & kurtosis & IQR & min & max\\\hline
$\bar{y}$ & 1.2366 & 1.1572 & 0.5922 & 0.6219 & 4.9697 & 0.7085 & --1.5971 &
3.7434\\
$sd(y)$ & 11.0006 & 10.1115 & 4.4695 & 1.1241 & 4.4498 & 5.6826 & 4.1706 &
31.3322\\
$\hat{a}$ & 0.3749 & 0.3562 & 0.5810 & --0.1980 & 5.5804 & 0.6493 & --2.8882 &
2.9736\\
$\hat{\beta}_{M}$ & 0.9714 & 0.9362 & 0.4279 & 0.5352 & 3.3349 & 0.5810 &
--0.0689 & 2.8590\\
$\hat{\beta}_{H}$ & 0.2235 & 0.2093 & 0.5276 & --0.0916 & 4.7150 & 0.5915 &
--2.4498 & 2.9788\\
$\hat{\beta}_{S}$ & 0.6061 & 0.5790 & 0.5381 & 0.3552 & 3.2197 & 0.7495 &
--0.7504 & 2.7474\\
$\hat{\sigma}$ & 9.4123 & 8.2297 & 4.2794 & 1.3292 & 5.0599 & 5.3811 &
3.4523 & 30.6953\\
$R^{2}$ & 0.2840 & 0.2782 & 0.1383 & 0.2172 & 2.2941 & 0.2131 & 0.0050 &
0.6814\\\hline\hline
\end{tabular}
}
\end{center}

{\small
\end{table}%
}

{\small The histogram for the $\hat{a}_{T}$ and the $\hat{\beta}_{k,T}$ for
$k=M,H,S$ over 2002m1 - 2021m12 $(T=240)$, each using 1175 data points
$\hat{a}_{i,T}$ and the $\hat{\beta}_{ik,T}$ for $k=M,H,S$, $i=1,2,...,1175$,
is shown in the Figure \ref{fig:Betas_20y}.}

{\small
\begin{figure}[H]%
\caption{Histogram and density function of the coefficients of the panel regression over 2002m1-2021m12 ($T=240$) and $n=1,775$}
\label{fig:Betas_20y}
\[%
{\fbox{\includegraphics[
height=2.7397in,
width=3.6417in
]%
{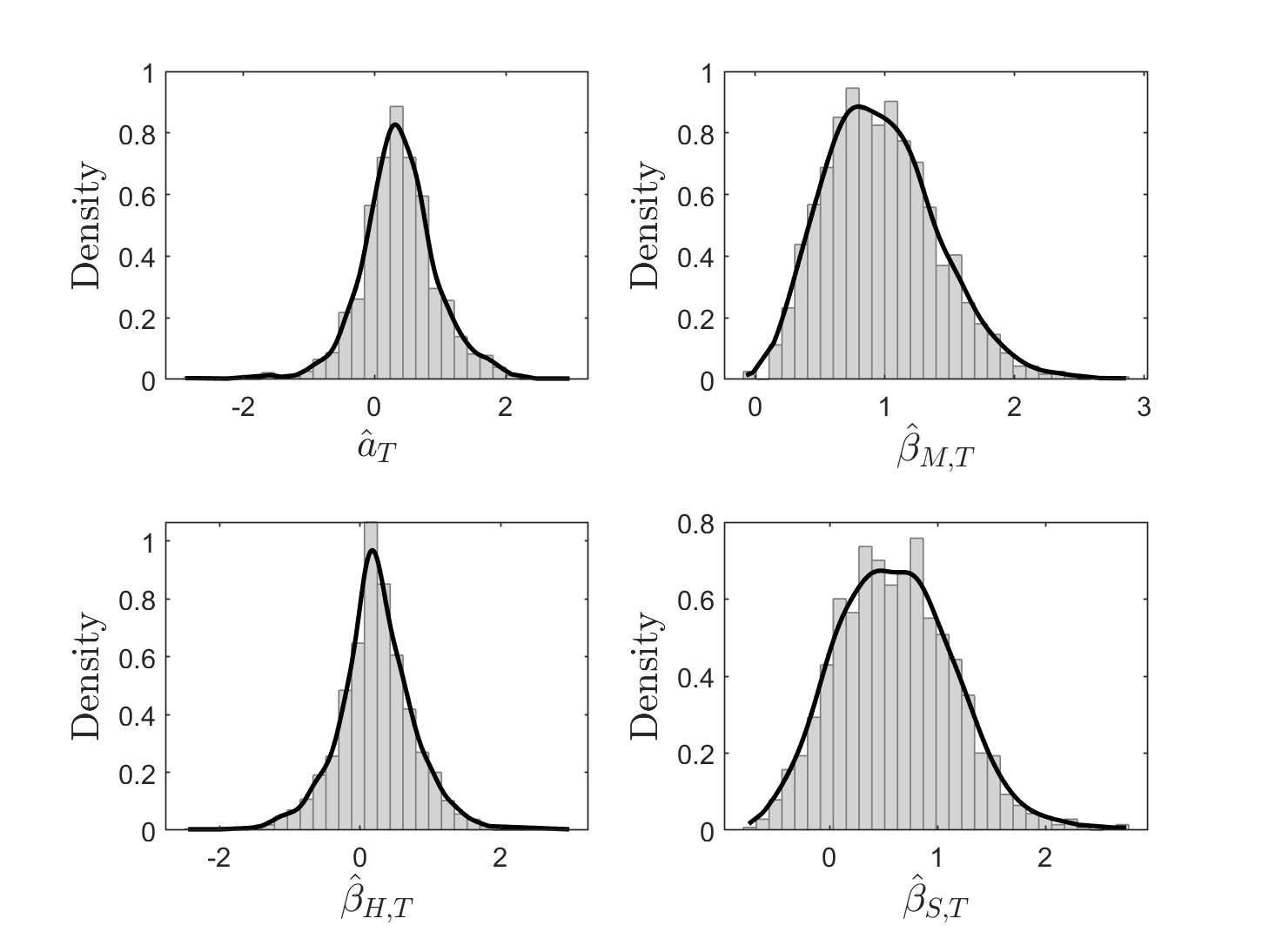}%
}}
\]
}

{\small
\end{figure}%
}

{\small The histogram for the $R_{T}^{2}$ over 2002m1 - 2021m12 $(T=240)$,
using 1,175 data points $R_{iT}^{2}$ for $i=1,2,...,1175$, is shown in the
Figure \ref{fig:R2_20y}.}

{\small
\begin{figure}[H]%
\caption{Histogram and density function of the R-squared of the panel regression over 2002m1-2021m12 ($T=240$) and $n=1,775$}
\label{fig:R2_20y}
\[%
{\fbox{\includegraphics[
height=2.738in,
width=3.6417in
]%
{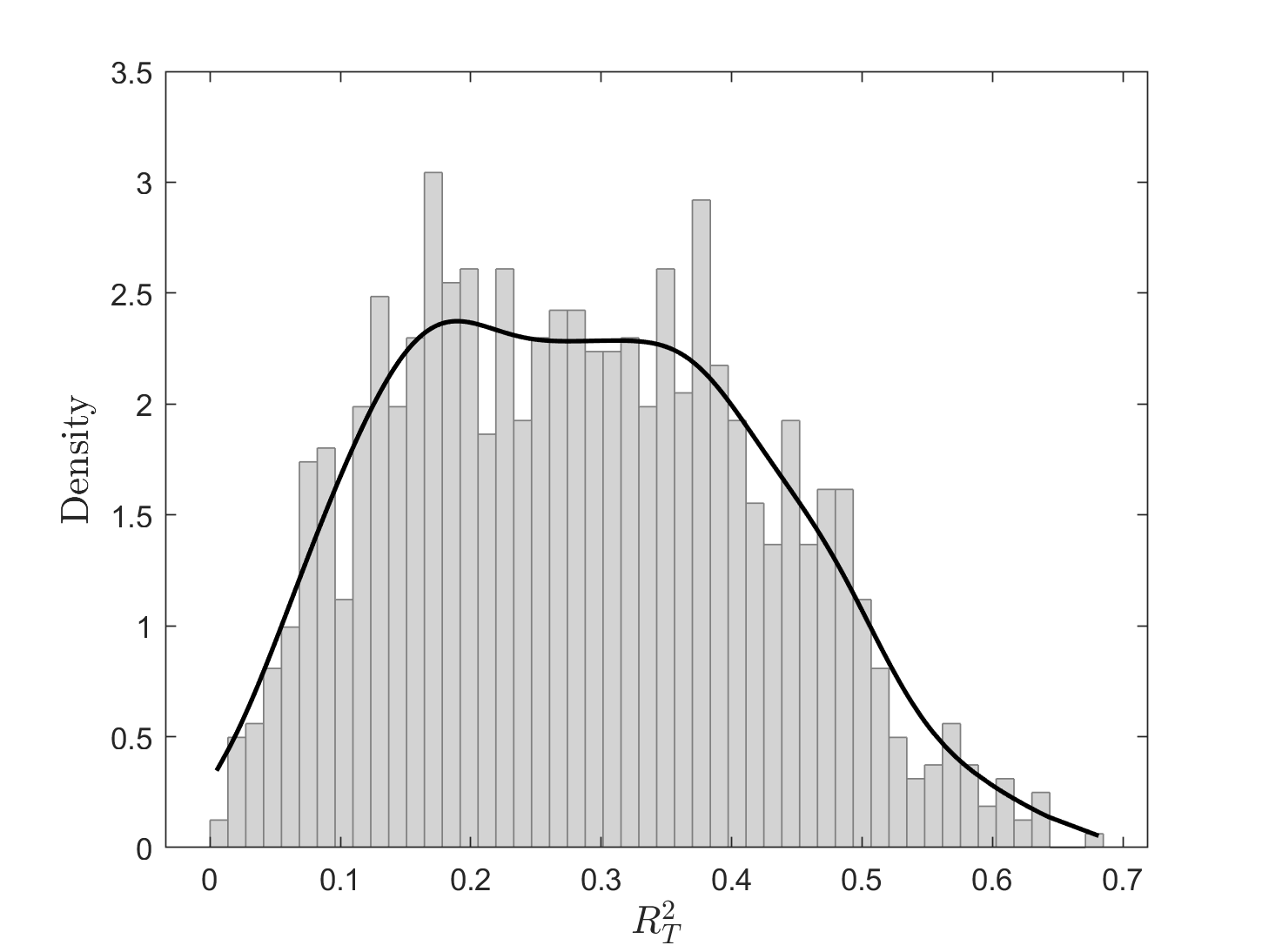}%
}}
\]%
\end{figure}%
}

\subsection{{\protect\small Calibrating the fit of return
regressions\label{Regfit}}}

{\small To see how $\kappa$ controls the regression fit, note that the $n$
return processes (\ref{eqn:DGP}) can be written more compactly in vector form
as%
\[
\mathbf{r}_{t}=\mathbf{a+Bf}_{t}+\mathbf{u}_{t},
\]
where $\mathbf{u}_{t}=\boldsymbol{\gamma}g_{t}+\kappa\mathbf{\hat{S}%
}\boldsymbol{\varepsilon}_{t},$ $\mathbf{a=(}\alpha_{1},\alpha_{2}%
,...,\alpha_{n}\,)^{\prime}$,$\,$with $a_{i}$ given by $a_{i}%
=c+\boldsymbol{\beta}_{i}^{\prime}\boldsymbol{\phi}+\eta_{i},$ and
\[
\mathbf{\hat{S}=}Diag(\mathbf{\hat{S}}_{b}\text{, }b=1,2,...,B).
\]
Overall, the DGP for the return regressions can be written compactly as%
\[
\mathbf{r}_{t}=c\boldsymbol{\tau}_{n}+\mathbf{B}\left(  \mathbf{f}%
_{t}+\boldsymbol{\phi}\right)  +\boldsymbol{\gamma}g_{t}+\kappa\mathbf{\hat
{S}}\boldsymbol{\varepsilon}_{t}+\boldsymbol{\eta}_{n},
\]
where $\boldsymbol{\eta}_{n}=\mathbf{(}\eta_{1},\eta_{2},...,\eta_{n}%
)^{\prime}$. We abstract from pricing errors and weak latent factor and
$\eta_{i}=0$, $g_{t}=0$, and set $\kappa$ such that the pooled $R^{2}$
($PR^{2}$) of return regressions can be controlled to be around $R_{0}%
^{2}=0.30$. We have
\[
PR_{nT}^{2}=1-\frac{n^{-1}T^{-1}\sum_{t=1}^{T}\sum_{i=1}^{n}E\left(
u_{it}^{2}\right)  }{n^{-1}T^{-1}\sum_{t=1}^{T}\sum_{i=1}^{n}Var(r_{it})}.
\]%
\begin{align*}
Var(r_{it})  &  =Var\left(  \boldsymbol{\beta}_{i}^{\prime}\mathbf{f}%
_{t}\right)  +Var(u_{it})=E\left(  \boldsymbol{\beta}_{i}^{\prime
}\mathbf{\Sigma}_{f}\boldsymbol{\beta}_{i}\right)  +E(u_{it}^{2})\\
& \\
&  =\text{Tr}\left[  \mathbf{\Sigma}_{f}E\left(  \boldsymbol{\beta}%
_{i}\boldsymbol{\beta}_{i}^{\prime}\right)  \right]  +E(u_{it}^{2}).
\end{align*}
Denote the $k^{th}$ element of $\boldsymbol{\beta}_{i}$ by $\beta_{ik}$, then
if $\beta_{ik}\sim IIDN(\mu_{\beta_{k}},\sigma_{\beta_{k}}^{2}),$ and
$\beta_{ik}$ are distributed independently over $k=1,2,...,K$, we have%
\[
E\left(  \boldsymbol{\beta}_{i}\boldsymbol{\beta}_{i}^{\prime}\right)
=Diag\left(  \mu_{\beta_{k}}^{2}+\sigma_{\beta_{k}}^{2}\text{, for
}k=1,2,...,K\right)  .
\]
Then,
\[
Var(r_{it})=\mathit{Tr}\left[  \mathbf{\Sigma}_{f}E\left(  \boldsymbol{\beta
}_{i}\boldsymbol{\beta}_{i}^{\prime}\right)  \right]  +E(u_{it}^{2}%
)=\sum_{k=1}^{K}\sigma_{fk}^{2}\left(  \mu_{\beta_{k}}^{2}+\sigma_{\beta_{k}%
}^{2}\right)  +E(u_{it}^{2}).
\]
Also,
\[
\sum_{i=1}^{n}E(u_{it}^{2})=Tr\left[  E\left(  \mathbf{u}_{t}\mathbf{u}%
_{t}^{\prime}\right)  \right]  ,
\]
and (when $g_{t}=0$) we have
\[
E\left(  \mathbf{u}_{t}\mathbf{u}_{t}^{\prime}\right)  =\mathbf{V}_{u}%
=\kappa^{2}\mathbf{\hat{S}}E\left(  \boldsymbol{\varepsilon}_{t}%
\boldsymbol{\varepsilon}_{t}^{\prime}\right)  \mathbf{\hat{S}}^{\prime}%
=\kappa^{2}\mathbf{\hat{S}V}_{\varepsilon}\mathbf{\hat{S}}^{\prime},
\]
where $\mathbf{V}_{\varepsilon}=Diag(\mathbf{V}_{b\varepsilon}^{(r)}%
,b=1,2,...,B)^{\prime}$. Hence,%
\[
n^{-1}T^{-1}\sum_{t=1}^{T}\sum_{i=1}^{n}E\left(  u_{it}^{2}\right)
=n^{-1}Tr(\mathbf{V}_{u})=\kappa^{2}n^{-1}Tr\left(  \mathbf{\hat{S}%
V}_{\varepsilon}\mathbf{\hat{S}}^{\prime}\right)
\]
and%
\[
PR_{nT}^{2}=1-\frac{\kappa^{2}n^{-1}Tr\left(  \mathbf{\hat{S}V}_{\varepsilon
}\mathbf{\hat{S}}^{\prime}\right)  }{\sum_{k=1}^{K}\sigma_{fk}^{2}\left(
\mu_{\beta_{k}}^{2}+\sigma_{\beta_{k}}^{2}\right)  +\kappa^{2}n^{-1}Tr\left(
\mathbf{\hat{S}V}_{\varepsilon}\mathbf{\hat{S}}^{\prime}\right)  }.
\]
To achieve $lim_{n\rightarrow\infty}PR_{nT}^{2}=R_{0}^{2}$, we need to set
(assuming all $K$ factors are strong)
\begin{equation}
\kappa^{2}=\frac{\sum_{k=1}^{K}\sigma_{fk}^{2}\left(  \mu_{\beta_{k}}%
^{2}+\sigma_{\beta_{k}}^{2}\right)  }{n^{-1}Tr\left(  \mathbf{\hat{S}%
V}_{\varepsilon}\mathbf{\hat{S}}^{\prime}\right)  }\left(  \frac{1-R_{0}^{2}%
}{R_{0}^{2}}\right)  . \label{kappa2S}%
\end{equation}
When there are no idiosyncratic error dependence, namely when $\mathbf{\hat
{S}=I}_{n}$, the above expression simplifies to
\begin{equation}
\kappa^{2}=\frac{\sum_{k=1}^{K}\sigma_{fk}^{2}\left(  \mu_{\beta_{k}}%
^{2}+\sigma_{\beta_{k}}^{2}\right)  }{n^{-1}Tr\left(  \mathbf{V}_{\varepsilon
}\right)  }\left(  \frac{1-R_{0}^{2}}{R_{0}^{2}}\right)  .
\label{fitallfactors}%
\end{equation}
}

{\small If we only use the market factor, we have}

{\small
\begin{equation}
\kappa^{2}=\frac{\sigma_{M}^{2}\left(  \mu_{\beta_{M}}^{2}+\sigma_{\beta_{M}%
}^{2}\right)  }{n^{-1}Tr\left(  \mathbf{V}_{\varepsilon}\right)  }\left(
\frac{1-R_{0}^{2}}{R_{0}^{2}}\right)  . \label{fitfirstFactor}%
\end{equation}
We expect that $n^{-1}Tr\left(  \mathbf{V}_{\varepsilon}\right)  \rightarrow
1$, if $E(\sigma_{ii})=1$, as under our DGP. }

\subsection{{\protect\small Estimation of FF factor strengths}}

{\small Denote by $t_{ik,T}=\hat{\beta}_{ik,T}\,/\,\text{s.e.}\left(
\hat{\beta}_{ik,T}\right)  $, the t-statistic corresponding to $\beta_{ik}.$
The total number of factor loadings of factor $k$, that are statistically
significant over $i=1,2,\ldots,n,$ $n=1,175$, is:
\[
\hat{D}_{nT,k}=\sum_{i=1}^{n}\hat{d}_{ik,nT}=\sum_{i=1}^{n}\mathbf{1}\left[
|t_{ik,T}|>c_{p}(n)\right]  \text{,}%
\]
where $\mathbf{1}\left(  A\right)  =1$ if $A>0$, and zero otherwise, and the
critical value function that allows for the multiple testing nature of the
problem, $c_{p}(n,\delta)$, is given by
\begin{equation}
c_{p}(n,\delta)=\Phi^{-1}\left(  1-\frac{p}{2n^{\delta}}\right)  , \label{CVf}%
\end{equation}
where $p$ is the nominal size, set, following
\citet*[BKP]{bailey2021measurement}, as $p=0.1,$ $\delta>0$ is the critical
value exponent, set $\delta=0.25,$ and $\Phi^{-1}(\cdot)$ is the inverse
cumulative distribution function of the standard normal distribution. Let
$\hat{\pi}_{nT,k}$ be the fraction of significant loadings of factor $k$, and
note that $\hat{\pi}_{nT,k}=\hat{D}_{nT,k}/n$. The strength of factor $k$,
denoted by $\alpha_{k0}$, for $k=1,2,\ldots,K$, $K=3$, is estimated by
\begin{equation}
\hat{\alpha}_{k}=%
\begin{cases}
1+\frac{\ln\hat{\pi}_{nT,k}}{\ln n},\text{ if }\hat{\pi}_{nT,k}>0,\\
0,\text{ if }\hat{\pi}_{nT,k}=0.
\end{cases}
\label{phaj}%
\end{equation}
The variance of the estimated strength of factor $k$ is given by
\[
Var(\hat{\alpha}_{k})=(\ln{n})^{-2}\psi_{n}(\alpha_{k0}),
\]
where
\[
\psi_{n}(\alpha_{k0})=p(n-n^{\alpha_{k0}})n^{-\delta-2\alpha_{k0}}\left(
1-\frac{p}{n^{\delta}}\right)  .
\]
So the standard error of the estimated strength of factor $k$ can be computed
by:
\begin{equation}
s.e.(\hat{\alpha}_{k})=\frac{\sqrt{\psi_{n}(\hat{\alpha}_{k})}}{\ln{n}}.
\label{eqn:se_alpha}%
\end{equation}
Mean and variance of the loadings associated with factor $k$ are given by%
\begin{align}
\hat{\mu}_{\beta_{kT}}\left(  \hat{\alpha}_{k}\right)   &  =\frac{\sum
_{i=1}^{n}\mathbf{1}\left[  |t_{ikT}|>c_{p}(n)\right]  \hat{\beta}_{ikT}}%
{\sum_{i=1}^{n}\mathbf{1}\left[  |t_{ikT}|>c_{p}(n)\right]  },
\label{meubetaj}\\
\hat{\sigma}_{\beta_{k}}^{2}\left(  \hat{\alpha}_{k}\right)   &  =\frac
{\sum_{i=1}^{n}\mathbf{1}\left[  |t_{ikT}|>c_{p}(n)\right]  \left(  \hat
{\beta}_{ikT}-\hat{\mu}_{\beta_{kT}}\left(  \hat{\alpha}_{k}\right)  \right)
^{2}}{\sum_{i=1}^{n}\mathbf{1}\left[  |t_{ikT}|>c_{p}(n)\right]  }.
\label{zigsqbetaj}%
\end{align}
In the case where a factor is strong, namely $\alpha_{k}=1$, then it must be
that $\mathbf{1}\left[  |t_{ikT}|>c_{p}(n)\right]  =1$ for all $i$. The
estimated factor strengths $\hat{\alpha}_{k}$ and corresponding standard
errors for $k=M,H,S$, using the sample over 2002m1 - 2021m12 $(T=240)$ and
$n=1,175$ are reported in \ref{tab:str_20y}.}

{\small
\begin{table}[H]%
}

{\small \caption{Strength of three FF factors estimated over 2001m1-2021m9 $(T=240$
and $n=1,175$)}\label{tab:str_20y} }

\begin{center}
{\small
\begin{tabular}
[c]{lccc}\hline\hline
& $M$ & $H$ & $S$\\\hline
$\hat\alpha$ & 0.9941 & 0.8373 & 0.9023\\
& (0.0001) & (0.0014) & (0.0008)\\\hline\hline
\end{tabular}
}
\end{center}

{\small \noindent{\footnotesize \textit{Note:} This table reports the
estimates of the factor strength using \eqref{phaj} and the standard errors
that are reported in () using \eqref{eqn:se_alpha}, for three factors $MKT,
HML$ and $SMB$, using the sample over 2001m1-2021m9 $(T=240)$ and $n=1,175$,
$K=3$. }
\end{table}%
}

\subsection{{\protect\small Estimates of $\boldsymbol{\phi}$ for the FF3
factors}}

{\small
\begin{table}[H]%
}

{\small \caption{The bias-corrected estimates of $c_{0}$ (intercept) and $\phi_{M}$, $\phi_{H}$, and $\phi_{S}$ $\ $for the sample 2002m1 - 2021m12 $(T=240)$ with $n=1175$ securities}
}

\begin{center}
{\small
\begin{tabular}
[c]{ccccc}\hline\hline
& $\hat{c}_{0}$ & $\hat{\phi}_{M}$ & $\hat{\phi}_{H}$ & $\hat{\phi}_{S}%
$\\\hline
& 0.8423 & --0.4808 & --0.3259 & 0.1195\\
&  & (0.0971) & (0.0780) & (0.0846)\\\hline\hline
\end{tabular}
}
\end{center}

{\small
\end{table}%
}

\subsection{{\protect\small Rolling estimates used in portfolio
construction\label{Roll}}}

{\small The rolling estimates for month $t$ are computed using analogous
expressions to those provided in sub-sections \ref{ESTphi} and \ref{Vphihat}
of the main paper with a rolling window size of $T=240$. For ease of
replication, the algorithms used to estimate the rolling estimates for
$t=2015m12,.....,2022m11$ are set out below:
\[
\mathbf{\hat{\beta}}_{it\left\vert T\right.  }=\left[  \sum_{\tau=t-T+1}%
^{t}\left(  \mathbf{f}_{\tau}-\mathbf{\bar{f}}_{t\left\vert T\right.
}\right)  \left(  \mathbf{f}_{\tau}-\mathbf{\bar{f}}_{t\left\vert T\right.
}\right)  ^{\prime}\right]  ^{-1}\sum_{\tau=t-T+1}^{t}\left(  \mathbf{f}%
_{\tau}-\mathbf{\bar{f}}_{t\left\vert T\right.  }\right)  r_{i\tau},
\]%
\[
\mathbf{\bar{f}}_{t\left\vert T\right.  }=T^{-1}\sum_{\tau=t-T+1}%
^{t}\mathbf{f}_{\tau},
\]%
\[
\boldsymbol{\tilde{\phi}}_{t\left\vert T\right.  }=\mathbf{\hat{H}%
}_{t\left\vert T\right.  \ }^{-1}\left[  \frac{\mathbf{\hat{B}}_{t\left\vert
T\right.  }^{\prime}\mathbf{M}_{n}\mathbf{\hat{\alpha}}_{t\left\vert T\right.
}}{n}+T^{-1}\widehat{\bar{\sigma}}_{t\left\vert T\right.  }^{2}\left(
\frac{\mathbf{F}_{t\left\vert T\right.  }^{\prime}\mathbf{M}_{T}%
\mathbf{F}_{t\left\vert T\right.  }}{T}\right)  ^{-1}\mathbf{\bar{f}%
}_{t\left\vert T\right.  }\right]  ,
\]%
\[
\mathbf{\hat{B}}_{t\left\vert T\right.  }=\left(  \mathbf{\hat{\beta}%
}_{1t\left\vert T\right.  },\mathbf{\hat{\beta}}_{2t\left\vert T\right.
}....,\mathbf{\hat{\beta}}_{nt\left\vert T\right.  }\right)  ^{\prime},
\]%
\[
\mathbf{\hat{\alpha}}_{t\left\vert T\right.  }=\mathbf{\bar{r}}_{t\left\vert
T\right.  }-\mathbf{\hat{B}}_{t\left\vert T\right.  }\mathbf{\bar{f}%
}_{t\left\vert T\right.  },
\]%
\[
\mathbf{\hat{H}}_{t\left\vert T\right.  \ }=\frac{\mathbf{\hat{B}%
}_{t\left\vert T\right.  }^{\prime}\mathbf{M}_{n}\mathbf{\hat{B}}_{t\left\vert
T\right.  }}{n}-T^{-1}\widehat{\bar{\sigma}}_{t\left\vert T\right.  }%
^{2}\left(  \frac{\mathbf{F}_{t\left\vert T\right.  }^{\prime}\mathbf{M}%
_{T}\mathbf{F}_{t\left\vert T\right.  }}{T}\right)  ^{-1},
\]%
\[
\widehat{\bar{\sigma}}_{t\left\vert T\right.  }^{2}=\frac{\sum_{t=\tau
-T+1}^{t}\sum_{i=1}^{n}\hat{u}_{i,\tau\left\vert T\right.  }^{2}}{n(T-K-1)},
\]
}

{\small
\[
\hat{u}_{i,\tau\left\vert T\right.  }=r_{i\tau}-\hat{\alpha}_{i,\tau\left\vert
T\right.  }-\mathbf{\hat{\beta}}_{i\tau\left\vert T\right.  }^{\prime
}\mathbf{f}_{\tau}\text{ },\text{ for }\tau=t,t-1,...,t-T+1,
\]%
\[
\mathbf{\bar{r}}_{t\left\vert T\right.  }=\left(  \bar{r}_{1,t\left\vert
T\right.  },\bar{r}_{2,t\left\vert T\right.  },...,\bar{r}_{nt\left\vert
T\right.  }\right)  ^{\prime}\text{, \ }\bar{r}_{i,t\left\vert T\right.
}=T^{-1}\sum_{\tau=t-T+1}^{t}r_{i\tau},
\]%
\[
\widehat{Var\left(  \boldsymbol{\tilde{\phi}}_{t\left\vert T\right.  }\right)
}=T^{-1}n^{-1}\mathbf{H}_{t\left\vert T\right.  \ }^{-1}\mathbf{\hat{V}}%
_{\xi,t\left\vert T\right.  }\mathbf{H}_{t\left\vert T\right.  \ }^{-1},
\]
,%
\[
\mathbf{\hat{V}}_{\xi,t\left\vert T\right.  }=\left(  1+\hat{s}_{t\left\vert
T\right.  }\right)  \left(  n^{-1}\mathbf{\hat{B}}_{t\left\vert T\right.
}^{\prime}\mathbf{M}_{n}\mathbf{\tilde{V}}_{u,t\left\vert T\right.
}\mathbf{M}_{n}\mathbf{\hat{B}}_{t\left\vert T\right.  }\right)  ,
\]%
\[
\hat{s}_{t\left\vert T\right.  }=\boldsymbol{\tilde{\lambda}}_{t\left\vert
T\right.  }^{^{\prime}}\left(  \frac{\mathbf{F}_{t\left\vert T\right.
}^{\prime}\mathbf{M}_{T}\mathbf{F}_{t\left\vert T\right.  }}{T}\right)
^{-1}\boldsymbol{\tilde{\lambda}}_{t\left\vert T\right.  },
\]%
\[
\boldsymbol{\tilde{\lambda}}_{t\left\vert T\right.  }=\boldsymbol{\tilde{\phi
}}_{t\left\vert T\right.  }+\mathbf{\bar{f}}_{t\left\vert T\right.  },
\]%
\[
c_{t\left\vert T\right.  }=\left(  \boldsymbol{\tau}_{n}^{\prime
}\boldsymbol{\tau}_{n}\right)  ^{-1}\boldsymbol{\tau}_{n}^{\prime}%
\mathbf{\hat{\alpha}}_{t\left\vert T\right.  }-\left(  \boldsymbol{\tau}%
_{n}^{\prime}\boldsymbol{\tau}_{n}\right)  ^{-1}\boldsymbol{\tau}_{n}^{\prime
}\mathbf{\hat{B}}_{t\left\vert T\right.  }\boldsymbol{\tilde{\phi}%
}_{t\left\vert T\right.  },
\]
where $\mathbf{M}_{T}=\mathbf{I}_{T}-T^{-1}\boldsymbol{\tau}_{T}%
\boldsymbol{\tau}_{T}^{\prime}$, $\boldsymbol{\tau}_{T}$ is a $T$-dimensional
vector of ones, $\mathbf{M}_{n}=\mathbf{I}_{n}-n^{-1}\boldsymbol{\tau}%
_{n}\boldsymbol{\tau}_{n}^{\prime}$, and $\boldsymbol{\tau}_{n}$ is an
n-dimensional vector of ones. Finally,}

{\small
\[
\mathbf{\tilde{V}}_{u,t\left\vert T\right.  }=\left(  \tilde{\sigma}%
_{ij,t|T}\right)  ,
\]
}

{\small
\begin{align*}
\tilde{\sigma}_{ij,t\left\vert T\right.  }  &  =\hat{\sigma}_{ii,t\left\vert
T\right.  }\\
\tilde{\sigma}_{ij,t\left\vert T\right.  }  &  =\hat{\sigma}_{ij,t\left\vert
T\right.  }\mathbf{1}\left[  \left\vert \hat{\rho}_{ij,t\left\vert T\right.
}\right\vert >T^{-1/2}c_{\alpha}(n,\delta)\right]  ,\text{ }i=1,2,\ldots
,n-1,\text{ }j=i+1,\ldots,n,
\end{align*}
where
\[
\hat{\sigma}_{ij,t\left\vert T\right.  }=\frac{1}{T}\sum_{t=1}^{T}\hat
{u}_{i,\tau\left\vert T\right.  }\hat{u}_{j,\tau\left\vert T\right.  },\text{
}\hat{\rho}_{ij,t\left\vert T\right.  }=\frac{\hat{\sigma}_{ij,t\left\vert
T\right.  }}{\sqrt{\hat{\sigma}_{ii,t\left\vert T\right.  }\hat{\sigma
}_{ii,t\left\vert T\right.  }}},\text{ }%
\]
$\hat{u}_{i,\tau\left\vert T\right.  }=r_{i\tau}-\hat{\alpha}_{i,\tau
\left\vert T\right.  }-\mathbf{\hat{\beta}}_{i\tau\left\vert T\right.
}^{\prime}\mathbf{f}_{\tau}$, and $c_{p}(n,d)=\Phi^{-1}\left(  1-\frac
{p}{2n^{d}}\right)  ,$ is a normal critical value function, $p$ is the the
nominal size of testing of $\sigma_{ij}=0$, ($i\neq j$) and $d=2$ is chosen to
take account of the $n(n-1)/2$ multiple tests being carried out. }

\section{{\protect\small Grouping of securities by their pair-wise correlations\label{BlockCov}}}

{\small The $T=240$ sample ending in $2021$ was used to estimate the pair-wise
correlations of the residuals from the $n=1,168$ returns regressions using the
Fama-French three factors. Then all the statistically insignificant
correlations were set to zero. Significance was determined allowing for the
multiple testing nature of the tests, using the critical value $c_{p}%
(n,\delta)=\Phi^{-1}\left(  1-\frac{p}{2n^{\delta}}\right)  $, with $p=0.05$
and $\delta=2$, since the number of pair-wise correlations is of order
$O(n^{2})$. See also \citet{BPS2019multiple}. For the majority of securities
(668 out of the 1,168), the pair-wise return correlations were not
statistically significant. The securities with a relatively large number of
non-zero correlations were either in the banking or energy related industries.
}

{\small Initially, the securities were grouped using the two digit codes from
the $1987$ standard industrial classification (SIC 1987). But this gave too
many groups, $62$. Many of the groups had very few members: only one security
for $3$ out of the $62$ groups and less than $10$ for $36$ groups. However,
code $60$ (banking) had $145$ securities. Therefore, it was decided to work
with the industrial classification based on a one digit level, and to
aggregate the codes with a small number of securities, taking out two digit
codes where there were large numbers in that code. We ended up with $14$
contiguous groups ranging in size from $33$ to $145$. Average correlations
were low overall, but the average absolute correlation of securities within
the groups was around 10 times that with firms outside the group. See Table
\ref{TabSec}, which gives averages of pair-wise correlations without
thresholding. These estimates suggested that a block diagonal structure with
$14$ blocks was a reasonable characterization which is used in the Monte Carlo
analysis. See Section \ref{Simulations} of the main paper.}

{\small
\begin{table}[H]%
\caption{Sector groupings by SIC codes and within and outside sector average
pair-wise correlations.} \label{TabSec} }

\begin{center}
{\small
\begin{tabular}
[c]{lrcrcrrrrr}\hline\hline
&  & SIC &  & Number &  & \multicolumn{3}{c}{Average correlations} &
\\\cline{7-10}
&  & codes &  & of stocks &  & within &  & outside & \\\hline
Agriculture \& mining &  & 0-17 &  & 51 &  & 0.0848 &  & --0.0043 & \\
Food processing etc &  & 20-27 &  & 82 &  & 0.0382 &  & 0.0074 & \\
Chemicals \& refining &  & 28-29 &  & 78 &  & 0.0234 &  & 0.0027 & \\
Metals &  & 30-34 &  & 65 &  & 0.0439 &  & 0.0070 & \\
Machinery \& equipment &  & 35 &  & 72 &  & 0.0380 &  & 0.0037 & \\
Electrical Equipment &  & 36 &  & 77 &  & 0.0601 &  & 0.0001 & \\
Transport equipment &  & 37 &  & 33 &  & 0.0800 &  & 0.0105 & \\
Misc. manufacturing &  & 38-39 &  & 78 &  & 0.0155 &  & 0.0027 & \\
Transport etc. &  & 40-49 &  & 108 &  & 0.0810 &  & 0.0035 & \\
Wholesale \& retail trade &  & 50-59 &  & 122 &  & 0.0357 &  & 0.0034 & \\
Banking &  & 60 &  & 145 &  & 0.1240 &  & --0.0037 & \\
Other finance &  & 61-67 &  & 98 &  & 0.0273 &  & 0.0061 & \\
Commercial Services &  & 70-79 &  & 114 &  & 0.0149 &  & 0.0031 & \\
Professional Services &  & 80-89 &  & 45 &  & 0.0172 &  & 0.0023 & \\
Total &  &  &  & 1168 &  &  &  &  & \\\hline\hline
\end{tabular}
}
\end{center}

{\small {\footnotesize \textit{Note}: This table gives the average
correlations within and between 14 groups selected based on one and two digits
SIC codes. It shows the number of stocks in each sector, the average pair-wise
correlation of stock returns within the sector as well as the average
pair-wise correlations of returns of stocks in a given sector with those
outside the sector.}%
\end{table}%
}

\section{{\protect\small Data used in the empirical application\label{DEMP}}}

\subsection{{\protect\small Security excess returns}}

{\small Monthly returns (inclusive of dividends) for NYSE and NASDAQ stocks
from CRSP with codes 10 and 11 were downloaded on July 2 2022 from Wharton
Research Data Services. They were converted to excess returns by subtracting
the risk free rate, which was taken from Kenneth French's data base. To obtain
balanced panels of stock returns and factors only variables where there was
data for the full sample under consideration were used. Excess returns are
measured in percent per month. To avoid outliers influencing the results,
stocks with a kurtosis greater than 16 were excluded. }

{\small Four main samples were considered, each had 20 years of data, $T=240$,
ending in 2015m12, 2017m12, 2019m12, 2021m12. Thus the earliest observation
used is for 1996m1. Sub-samples of the main samples of size $T=120$ and $T=60$
ending at the same dates were also examined. Table \ref{tab:ret_sumtat} gives
averages of the summary statistics across the individual stocks for the
various samples. These are very similar over these four periods. Mean returns
were high and substantially greater than the median reflecting the skewness of
returns, which was slightly less in the last period. Filtering out the stocks
with very high kurtosis removed about 100 of the roughly 1,200 stocks in each
period and reduced mean return, standard deviation, skewness as well as
kurtosis. }

{\small The 5 ($T=60$) and 10 ($T=120$) year sub-samples of the main samples,
ending at the same dates, showed very similar patterns. Because of a
requirement for a balanced panel, the shorter the sample the more stocks will
be eligible for inclusion. Compared with around 1,200 in the 20 year sample
there were around 2,000 in the 10 year sub-sample and around 2,500 in the 5
year sub-sample. Again filtering by kurtosis reduced the number of stocks by
about 100. There is more variation in means and medians in the shorter
sub-samples and the shorter the sample the lower the average kurtosis is.}

{\small
\begin{table}[H]%
\caption{Summary statistics for monthly returns in percent for NYSE and NASDAQ
stocks code 10 and 11 for 20 year ($T=240$), 10 year ($T=120$) and 5 year
($T=60$) samples ending at end of specified year}\label{tab:ret_sumtat} }

\begin{center}
{\small
\begin{tabular}
[c]{lrrrrrrrrrr}\hline\hline
& \multicolumn{4}{c}{All stocks} &  & \multicolumn{4}{c}{Stocks with kurtosis
$<16$} & \\\cline{1-5}\cline{7-11}%
End date & 2015 & 2017 & 2019 & 2021 &  & 2015 & 2017 & 2019 & 2021 &
\\\cline{1-5}\cline{7-11}%
\multicolumn{10}{l}{} & \\
\multicolumn{5}{l}{\textit{Panel A}: 20 year period, $T=240$} &  &  &  &  &  &
\\
Mean & 1.38 & 1.35 & 1.30 & 1.37 &  & 1.33 & 1.28 & 1.25 & 1.33 & \\
Median & 0.74 & 0.70 & 0.76 & 0.84 &  & 0.80 & 0.78 & 0.85 & 0.94 & \\
S.D. & 12.44 & 12.61 & 12.22 & 11.90 &  & 11.60 & 11.64 & 11.31 & 10.99 & \\
Skewness & 0.69 & 0.74 & 0.72 & 0.64 &  & 0.42 & 0.44 & 0.40 & 0.32 & \\
Kurtosis & 8.68 & 9.08 & 9.22 & 9.24 &  & 6.10 & 6.29 & 6.29 & 6.19 & \\
n & 1181 & 1243 & 1276 & 1289 &  & 1090 & 1132 & 1143 & 1175 & \\
\multicolumn{10}{l}{} & \\
\multicolumn{5}{l}{\textit{Panel B}: 10 year period, $T=120$} &  &  &  &  &  &
\\
Mean & 1.00 & 1.22 & 1.24 & 1.55 &  & 0.98 & 1.17 & 1.23 & 1.49 & \\
Median & 0.52 & 0.70 & 0.78 & 0.89 &  & 0.59 & 0.77 & 0.84 & 1.01 & \\
S.D. & 12.36 & 12.68 & 10.70 & 11.83 &  & 11.67 & 11.80 & 10.27 & 10.70 & \\
Skewness & 0.48 & 0.50 & 0.44 & 0.50 &  & 0.29 & 0.30 & 0.35 & 0.28 & \\
Kurtosis & 6.93 & 7.08 & 5.32 & 7.02 &  & 5.57 & 5.62 & 4.64 & 5.33 & \\
n & 2045 & 2024 & 1925 & 1871 &  & 1929 & 1907 & 1873 & 1766 & \\
\multicolumn{10}{l}{} & \\
\multicolumn{5}{l}{\textit{Panel C}: 5 year period, $T=60$} &  &  &  &  &  &
\\
Mean & 0.98 & 1.39 & 0.82 & 1.60 &  & 0.96 & 1.35 & 0.80 & 1.47 & \\
Median & 0.42 & 0.77 & 0.20 & 0.43 &  & 0.49 & 0.82 & 0.29 & 0.61 & \\
S.D. & 10.89 & 10.62 & 11.88 & 14.91 &  & 10.41 & 10.15 & 11.28 & 13.28 & \\
Skewness & 0.44 & 0.46 & 0.38 & 0.48 &  & 0.35 & 0.38 & 0.29 & 0.28 & \\
Kurtosis & 4.84 & 4.74 & 4.68 & 6.21 &  & 4.34 & 4.29 & 4.20 & 5.04 & \\
n & 2600 & 2425 & 2497 & 2512 &  & 2541 & 2373 & 2439 & 2388 & \\\hline\hline
\end{tabular}
}
\end{center}

{\small \vspace{1mm} {\footnotesize \textit{Note:} This table shows the
average values of the summary statistics of individual stocks described in
Section 3, and the number of stocks (n), for each sample. }
\end{table}%
}

\subsection{{\protect\small Risk factors}}

{\small For the empirical applications we combined the 5 Fama-French factors
with the 207 factors of \citet{chen2022open}, both downloaded on 6 July 2022.
The available risk factors at the end of each of the four 240 months samples
ending in 2015, 2017, 2019 and 2021 were then screened and factors whose
correlations (in absolute value) with the market factor were larger than 0.70
were dropped. The basic idea was to remove factors that were closely
correlated with market factor. However, the application of this filter only
reduced the number of factors in the active set by around 9-11. See Table 4 of
the main paper. The summary statistics for the factors in the active set are
summarized in Table \ref{SumFac}. It reports mean, median, pair-wise
correlation, standard deviation (S.D.), skewness and kurtosis of the
statistics indicated in the sub-headings of the tables for the $K$ factors
included in the active set for samples of size $T=240$ months ending in
December of 2015, 2017, 2019 and 2021. The summary statistics reported for the
"Mean" on the left panel of Table \ref{SumFac} are based on the time series
means of the individual factors, those under "Median" are based on the time
series medians of the individual factors, and so on.}

{\small
\begin{table}[H]
\caption{Summary statistics for mean, median, standard deviation, pairwise correlations, skewness and kurtosis of factors for $T=240$
samples at the end of specified years (2015, 2017, 2019 and 2021)} }

\begin{center}
{\small
\begin{tabular}
[c]{lrrrrrrrrrrrrrrr}\hline\hline
End date & 2015 &  & 2017 &  & 2019 &  & 2021 &  & 2015 &  & 2017 &  & 2019 &
& 2021\\\cline{1-8}\cline{10-16}%
$m$(\# Factors) & 190 &  & 191 &  & 190 &  & 178 &  & 190 &  & 191 &  & 190 &
& 178\\\cline{2-8}\cline{10-16}
& \multicolumn{7}{c}{Mean} &  & \multicolumn{7}{c}{Median}\\
Mean & 0.51 &  & 0.47 &  & 0.41 &  & 0.32 &  & 0.42 &  & 0.37 &  & 0.31 &  &
0.28\\
Median & 0.45 &  & 0.39 &  & 0.33 &  & 0.26 &  & 0.31 &  & 0.28 &  & 0.22 &  &
0.20\\
S.D. & 0.40 &  & 0.40 &  & 0.38 &  & 0.30 &  & 0.53 &  & 0.50 &  & 0.50 &  &
0.44\\
Skewness & 1.88 &  & 2.50 &  & 2.28 &  & 0.76 &  & 1.59 &  & 1.89 &  & 1.58 &
& 1.38\\
Kurtosis & 10.54 &  & 16.35 &  & 14.42 &  & 3.82 &  & 8.99 &  & 10.37 &  &
9.16 &  & 6.68\\
&  &  &  &  &  &  &  &  &  &  &  &  &  &  & \\\cline{2-8}\cline{10-16}
& \multicolumn{7}{c}{Standard deviation} &  & \multicolumn{7}{c}{Pair-wise
correlation}\\
Mean & 3.97 &  & 3.90 &  & 3.78 &  & 3.33 &  & 0.22 &  & 0.22 &  & 0.22 &  &
0.21\\
Median & 3.51 &  & 3.47 &  & 3.37 &  & 2.99 &  & 0.18 &  & 0.17 &  & 0.17 &  &
0.16\\
S.D. & 2.21 &  & 2.16 &  & 2.12 &  & 1.68 &  & 0.17 &  & 0.17 &  & 0.17 &  &
0.17\\
Skewness & 1.15 &  & 1.23 &  & 1.29 &  & 1.05 &  & 0.86 &  & 0.88 &  & 0.88 &
& 0.92\\
Kurtosis & 4.25 &  & 4.69 &  & 4.96 &  & 4.17 &  & 2.88 &  & 2.93 &  & 2.93 &
& 2.98\\
&  &  &  &  &  &  &  &  &  &  &  &  &  &  & \\\cline{2-8}\cline{10-16}
& \multicolumn{7}{c}{Skewness} &  & \multicolumn{7}{c}{Kurtosis}\\
Mean & 0.14 &  & 0.21 &  & 0.23 &  & 0.04 &  & 8.51 &  & 8.79 &  & 9.42 &  &
7.03\\
Median & 0.13 &  & 0.18 &  & 0.14 &  & 0.06 &  & 6.44 &  & 6.66 &  & 7.09 &  &
5.86\\
S.D. & 1.15 &  & 1.21 &  & 1.27 &  & 1.01 &  & 5.76 &  & 6.55 &  & 7.12 &  &
5.28\\
Skewness & 0.15 &  & 0.55 &  & 0.61 &  & -1.03 &  & 2.34 &  & 3.18 &  & 2.96 &
& 5.01\\
Kurtosis & 3.98 &  & 5.28 &  & 5.24 &  & 7.82 &  & 9.65 &  & 17.71 &  &
15.70 &  & 37.21\\\hline\hline
\end{tabular}
}
\end{center}

{\small \bigskip{\footnotesize \textit{Note:} The T=240 sample was used to
select factors and only factors where the absolute correlation coefficient
with the market factor is less than 0.70 are included.\label{SumFac}}%
\end{table}%
}

\section{{\protect\small \bigskip Pooled R squared and factor strengths\label{appPRsquared}}}

\begin{lemma}
{\small \label{PR2}Consider the factor model
\begin{equation}
r_{it}=\text{a}_{i}+\sum_{k=1}^{K}\beta_{ik}f_{kt}+u_{it}=\text{a}%
_{i}+\boldsymbol{\beta}_{i}^{\prime}\mathbf{f}_{t}+u_{it},\text{ for
}i=1,2,...,n;\text{ }t=1,2,...,T, \label{fac1}%
\end{equation}
and consider the following adjusted pooled measure of fit
\begin{equation}
\overline{PR}^{2}=1-\frac{\widehat{\bar{\sigma}}_{nT}^{2}}{\left(  nT\right)
^{-1}\sum_{i=1}^{n}\sum_{t=1}^{T}\left(  r_{it}-\bar{r}_{i\circ}\right)  ^{2}%
}, \label{AdjPR2A}%
\end{equation}
where $\widehat{\bar{\sigma}}_{nT}^{2}$ is the bias-corrected estimator of
$\left(  nT\right)  ^{-1}\sum_{i=1}^{n}\sum_{t=1}^{T}E\left(  u_{it}%
^{2}\right)  =n\sum_{i=1}^{n}\sigma_{i}^{2}=\bar{\sigma}_{n}^{2}$, given by
(\ref{AdjZig}) $\bar{r}_{i\circ}=T^{-1}\sum_{t=1}^{T}r_{it}$, and $\bar
{u}_{i\circ}=T^{-1}\sum_{t=1}^{T}u_{it}$. Then under Under Assumptions
\ref{Errors}, \ref{factors} and \ref{loadings} we have%
\begin{equation}
\overline{PR}_{nT}^{2}=\sum_{k=1}^{K}\ominus\left(  n^{\alpha_{k}-1}\right)
+O_{p}\left(  T^{-1/2}n^{-1+\frac{\alpha_{max}+\alpha_{\gamma}}{2}}\right)  ,
\label{PR2res}%
\end{equation}
where $\alpha_{k}$ is the strength of factor $f_{tk}$, $\alpha_{max}%
=max_{k}(\alpha_{k})$, and $\alpha_{\gamma}$ is the strength of the missing
factor. }
\end{lemma}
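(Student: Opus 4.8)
The plan is to write $\overline{PR}_{nT}^{2}=(s_{r,nT}^{2}-\widehat{\bar{\sigma}}_{nT}^{2})/s_{r,nT}^{2}$, where $s_{r,nT}^{2}=(nT)^{-1}\sum_{i=1}^{n}\sum_{t=1}^{T}(r_{it}-\bar{r}_{i\circ})^{2}$, and to control the numerator and denominator separately. First I would substitute $r_{it}-\bar{r}_{i\circ}=\boldsymbol{\beta}_{i}^{\prime}(\mathbf{f}_{t}-\bar{\mathbf{f}})+(u_{it}-\bar{u}_{i\circ})$, which follows from the model (\ref{fac1}) with $\bar{\mathbf{f}}=T^{-1}\sum_{t}\mathbf{f}_{t}$, and expand the square to obtain $s_{r,nT}^{2}=A_{nT}+B_{nT}+C_{nT}$ with
\[
A_{nT}=n^{-1}\sum_{i=1}^{n}\boldsymbol{\beta}_{i}^{\prime}\boldsymbol{\hat{\Sigma}}_{f}\boldsymbol{\beta}_{i},\quad B_{nT}=(nT)^{-1}\sum_{i=1}^{n}\sum_{t=1}^{T}(u_{it}-\bar{u}_{i\circ})^{2},\quad C_{nT}=2n^{-1}\sum_{i=1}^{n}\boldsymbol{\beta}_{i}^{\prime}\boldsymbol{\psi}_{i},
\]
where $\boldsymbol{\psi}_{i}=T^{-1}\sum_{t}(\mathbf{f}_{t}-\bar{\mathbf{f}})u_{it}$ and $\boldsymbol{\hat{\Sigma}}_{f}=T^{-1}\mathbf{F}^{\prime}\mathbf{M}_{T}\mathbf{F}$. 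Since $E(B_{nT})=\frac{T-1}{T}\bar{\sigma}_{n}^{2}$ concentrates on $\bar{\sigma}^{2}>0$, the denominator $s_{r,nT}^{2}=A_{nT}+B_{nT}+C_{nT}$ is $O_{p}(1)$ and bounded away from zero, so it will suffice to obtain the order of $s_{r,nT}^{2}-\widehat{\bar{\sigma}}_{nT}^{2}=A_{nT}+C_{nT}+(B_{nT}-\widehat{\bar{\sigma}}_{nT}^{2})$ and divide.

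The leading term is $A_{nT}=\mathrm{Tr}(\boldsymbol{\hat{\Sigma}}_{f}\mathbf{M}_{\beta})$ with $\mathbf{M}_{\beta}=n^{-1}\mathbf{B}_{n}^{\prime}\mathbf{B}_{n}$. Because $\mathbf{M}_{\beta}$ is positive semi-definite and, by part (b) of Assumption \ref{factors}, $0<c\le\lambda_{\min}(\boldsymbol{\hat{\Sigma}}_{f})\le\lambda_{\max}(\boldsymbol{\hat{\Sigma}}_{f})\le C$ for $T>T_{0}$, I would sandwich $A_{nT}$ between $\lambda_{\min}(\boldsymbol{\hat{\Sigma}}_{f})\,\mathrm{Tr}(\mathbf{M}_{\beta})$ and $\lambda_{\max}(\boldsymbol{\hat{\Sigma}}_{f})\,\mathrm{Tr}(\mathbf{M}_{\beta})$; this bypasses the off-diagonal cross-products $n^{-1}\sum_{i}\beta_{ik}\beta_{ik^{\prime}}$ entirely. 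Since $\mathrm{Tr}(\mathbf{M}_{\beta})=\sum_{k=1}^{K}n^{-1}\sum_{i}\beta_{ik}^{2}$, and the factor-strength condition of Definition \ref{strengths} and Assumption \ref{loadings} gives $n^{-1}\sum_{i}(\beta_{ik}-\bar{\beta}_{k})^{2}=\ominus(n^{\alpha_{k}-1})$ with the mean contribution $n\bar{\beta}_{k}^{2}$ of no larger order under the loading structure, I would conclude $n^{-1}\sum_{i}\beta_{ik}^{2}=\ominus(n^{\alpha_{k}-1})$ and hence $A_{nT}=\sum_{k=1}^{K}\ominus(n^{\alpha_{k}-1})$.

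For the cross term I would mimic the argument behind (\ref{BUG}) in Lemma \ref{L2}. Writing $C_{nT}=2\sum_{k=1}^{K}n^{-1}\sum_{i}\beta_{ik}\psi_{ik}$ with $\psi_{ik}=T^{-1}\sum_{t}(f_{kt}-\bar{f}_{k})u_{it}$, conditioning on $\mathbf{F}$ and the loadings and using serial independence yields $\mathrm{Cov}(\psi_{ik},\psi_{jk}\mid\mathbf{F})=\sigma_{ij}T^{-1}\hat{\sigma}_{f,kk}$, so the conditional variance of the $k$-term equals $T^{-1}\hat{\sigma}_{f,kk}n^{-2}\boldsymbol{\beta}_{\circ k}^{\prime}\mathbf{V}_{u}\boldsymbol{\beta}_{\circ k}\le T^{-1}\hat{\sigma}_{f,kk}n^{-2}\lambda_{\max}(\mathbf{V}_{u})\Vert\boldsymbol{\beta}_{\circ k}\Vert^{2}$. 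Invoking $\lambda_{\max}(\mathbf{V}_{u})=O(n^{\alpha_{\gamma}})$ from (\ref{NormVu1}) in Lemma \ref{L1} together with $\Vert\boldsymbol{\beta}_{\circ k}\Vert^{2}=\ominus(n^{\alpha_{k}})$ gives the bound $O(T^{-1}n^{-2+\alpha_{k}+\alpha_{\gamma}})$, whence by Markov's inequality the $k$-term is $O_{p}(T^{-1/2}n^{-1+(\alpha_{k}+\alpha_{\gamma})/2})$ and, taking the maximum over $k$, $C_{nT}=O_{p}(T^{-1/2}n^{-1+(\alpha_{\max}+\alpha_{\gamma})/2})$, which is exactly the stochastic error in (\ref{PR2res}).

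Finally I would handle $B_{nT}-\widehat{\bar{\sigma}}_{nT}^{2}$: with $E(B_{nT})=\frac{T-1}{T}\bar{\sigma}_{n}^{2}$ and Theorem \ref{Thzig} giving $\widehat{\bar{\sigma}}_{nT}^{2}-\bar{\sigma}_{n}^{2}=O_{p}(T^{-1/2}n^{-1/2})$, this difference reduces to a deterministic degrees-of-freedom term of order $T^{-1}$ plus $O_{p}(T^{-1/2}n^{-1/2})$ fluctuations. Collecting $A_{nT}+C_{nT}+(B_{nT}-\widehat{\bar{\sigma}}_{nT}^{2})$ and dividing by $s_{r,nT}^{2}=O_{p}(1)$, bounded below, delivers (\ref{PR2res}). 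I expect the obstacle to be twofold: the genuinely stochastic bookkeeping for $C_{nT}$, which is the delicate step and leans on the $\mathbf{V}_{u}$-norm bound of Lemma \ref{L1}; and confirming that the leftover $O(T^{-1})$ degrees-of-freedom term is dominated by the retained terms in the large-$n,T$ regime in which the expansion is used (it is clearly absorbed whenever a strong factor is present, since the leading term is then $\ominus(1)$), so that only $\sum_{k}\ominus(n^{\alpha_{k}-1})$ and the cross-term error survive in the stated result.
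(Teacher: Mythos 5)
Your proposal is correct and, in its skeleton, coincides with the paper's own proof: the same substitution $r_{it}-\bar{r}_{i\circ}=\boldsymbol{\beta}_{i}^{\prime}(\mathbf{f}_{t}-\boldsymbol{\hat{\mu}}_{T})+(u_{it}-\bar{u}_{i\circ})$ and three-term expansion of $s_{r,nT}^{2}$, the same eigenvalue sandwich of $n^{-1}\sum_{i}\boldsymbol{\beta}_{i}^{\prime}\boldsymbol{\hat{\Sigma}}_{f}\boldsymbol{\beta}_{i}$ between $\lambda_{\min}(\boldsymbol{\hat{\Sigma}}_{f})$ and $\lambda_{\max}(\boldsymbol{\hat{\Sigma}}_{f})$ times $n^{-1}\sum_{i}\boldsymbol{\beta}_{i}^{\prime}\boldsymbol{\beta}_{i}$, the same appeal to Theorem \ref{Thzig}, and the same final division step. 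The one place you genuinely depart is the cross term: the paper splits $u_{it}=\gamma_{i}g_{t}+v_{it}$ and bounds the two pieces separately --- Cauchy--Schwarz on $n^{-1}\sum_{i}\gamma_{i}\boldsymbol{\beta}_{i}^{\prime}$ combined with $T^{-1}\sum_{t}(\mathbf{f}_{t}-\boldsymbol{\hat{\mu}}_{T})g_{t}=O_{p}(T^{-1/2})$ for the latent-factor part, and a direct variance calculation giving $O_{p}(T^{-1/2}n^{-1/2})$ for the $v_{it}$ part --- whereas you compute the conditional variance in one shot as $T^{-1}\hat{\sigma}_{f,kk}\,n^{-2}\boldsymbol{\beta}_{\circ k}^{\prime}\mathbf{V}_{u}\boldsymbol{\beta}_{\circ k}$ and invoke $\lambda_{\max}(\mathbf{V}_{u})=O(n^{\alpha_{\gamma}})$ from (\ref{NormVu1}). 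Both routes deliver exactly the rate $O_{p}\left(T^{-1/2}n^{-1+(\alpha_{k}+\alpha_{\gamma})/2}\right)$, and yours is arguably tidier since it needs only the norm bound on $\mathbf{V}_{u}$ rather than the explicit latent-factor representation. One further remark in your favour: you explicitly retain the deterministic degrees-of-freedom term of order $T^{-1}$ in $B_{nT}-\widehat{\bar{\sigma}}_{nT}^{2}$, which the paper's own step (\ref{SSunT}) silently suppresses (since $E(B_{nT})=\frac{T-1}{T}\bar{\sigma}_{n}^{2}$, not $\bar{\sigma}_{n}^{2}$); as you observe, that term is absorbed whenever a strong factor is present, or more generally when $T$ grows at least like $n^{1-\alpha_{max}}$, but it is not dominated by the terms displayed in (\ref{PR2res}) in every $(n,T)$ regime --- a caveat that applies to the paper's stated result just as much as to your derivation.
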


\begin{proof}
{\small Using (\ref{orderzig}) we first recall that
\begin{equation}
\widehat{\bar{\sigma}}_{nT}^{2}-\bar{\sigma}_{n}^{2}=O_{p}\left(
T^{-1/2}n^{-1/2}\right)  . \label{orderzigA}%
\end{equation}
Now averaging (\ref{fac1}) over $t$ and forming deviations of $r_{it}$ from
its time average, $\bar{r}_{i\circ}$, we have (note that $\boldsymbol{\hat
{\mu}}_{T}=T^{-1}\sum_{t=1}^{T}\mathbf{f}_{t}$)%
\[
r_{it}-\bar{r}_{i\circ}=u_{it}-\bar{u}_{i\circ}+\boldsymbol{\beta}_{i}%
^{\prime}\left(  \mathbf{f}_{t}-\boldsymbol{\hat{\mu}}_{T}\right)  .
\]
Using this result we have
\begin{align}
\left(  nT\right)  ^{-1}\sum_{i=1}^{n}\sum_{t=1}^{T}\left(  r_{it}-\bar
{r}_{i\circ}\right)  ^{2}  &  =\left(  nT\right)  ^{-1}\sum_{i=1}^{n}%
\sum_{t=1}^{T}\left(  u_{it}-\bar{u}_{i\circ}\right)  ^{2}+n^{-1}\sum
_{i=1}^{n}\boldsymbol{\beta}_{i}^{\prime}\mathbf{\hat{\Sigma}}_{f}%
\boldsymbol{\beta}_{i}\label{SSrnT}\\
&  -2\left(  nT\right)  ^{-1}\sum_{i=1}^{n}\sum_{t=1}^{T}\left(  u_{it}%
-\bar{u}_{i\circ}\right)  \boldsymbol{\beta}_{i}^{\prime}\left(
\mathbf{f}_{t}-\boldsymbol{\hat{\mu}}_{T}\right)  ,\nonumber
\end{align}
where $\mathbf{\hat{\Sigma}}_{f}=T^{-1}\sum_{t=1}^{T}\left(  \mathbf{f}%
_{t}-\boldsymbol{\hat{\mu}}_{T}\right)  \left(  \mathbf{f}_{t}%
-\boldsymbol{\hat{\mu}}_{T}\right)  ^{\prime}$. For the first term we have
\begin{equation}
\left(  nT\right)  ^{-1}\sum_{i=1}^{n}\sum_{t=1}^{T}\left(  u_{it}-\bar
{u}_{i\circ}\right)  ^{2}-\bar{\sigma}_{n}^{2}=O_{p}(n^{-1/2}T^{-1/2}),
\label{SSunT}%
\end{equation}
which follows from the proof of Theorem \ref{Thzig} by setting
$\boldsymbol{\beta}_{i}=\mathbf{0}$ and $\mathbf{f}_{t}=\mathbf{0}$ in Section
\ref{ProofThzig}. For the cross product term we have
\begin{align*}
&  \left(  nT\right)  ^{-1}\sum_{i=1}^{n}\sum_{t=1}^{T}\left(  u_{it}-\bar
{u}_{i\circ}\right)  \boldsymbol{\beta}_{i}^{\prime}\left(  \mathbf{f}%
_{t}-\boldsymbol{\hat{\mu}}_{T}\right) \\
&  =\left(  nT\right)  ^{-1}\sum_{i=1}^{n}\boldsymbol{\beta}_{i}^{\prime}%
\sum_{t=1}^{T}\left(  \mathbf{f}_{t}-\boldsymbol{\hat{\mu}}_{T}\right)
\left(  u_{it}-\bar{u}_{i\circ}\right)  =\left(  nT\right)  ^{-1}\sum
_{i=1}^{n}\boldsymbol{\beta}_{i}^{\prime}\sum_{t=1}^{T}\left(  \mathbf{f}%
_{t}-\boldsymbol{\hat{\mu}}_{T}\right)  u_{it}\\
&  -\left(  nT\right)  ^{-1}\sum_{i=1}^{n}\bar{u}_{i\circ}\boldsymbol{\beta
}_{i}^{\prime}\sum_{t=1}^{T}\left(  \mathbf{f}_{t}-\boldsymbol{\hat{\mu}}%
_{T}\right)  =\left(  nT\right)  ^{-1}\sum_{i=1}^{n}\boldsymbol{\beta}%
_{i}^{\prime}\sum_{t=1}^{T}\left(  \mathbf{f}_{t}-\boldsymbol{\hat{\mu}}%
_{T}\right)  u_{it}=p_{nT}.
\end{align*}
Also, using (\ref{uit}),
\[
p_{nT}=\left(  nT\right)  ^{-1}\sum_{i=1}^{n}\boldsymbol{\beta}_{i}^{\prime
}\sum_{t=1}^{T}\left(  \mathbf{f}_{t}-\boldsymbol{\hat{\mu}}_{T}\right)
\left(  \gamma_{i}g_{t}+v_{it}\right)  =p_{1,nT}+p_{2,nT},
\]
where%
\[
p_{1,nT}=\left(  n^{-1}\sum_{i=1}^{n}\gamma_{i}\boldsymbol{\beta}_{i}^{\prime
}\right)  \left[  T^{-1}\sum_{t=1}^{T}\left(  \mathbf{f}_{t}-\boldsymbol{\hat
{\mu}}_{T}\right)  g_{t}\right]  ,
\]
and%
\[
p_{2,nT}=\left(  nT\right)  ^{-1}\sum_{i=1}^{n}\boldsymbol{\beta}_{i}^{\prime
}\sum_{t=1}^{T}\left(  \mathbf{f}_{t}-\boldsymbol{\hat{\mu}}_{T}\right)
v_{it}.
\]
Under Assumption \ref{Latent factor}, $\left(  \mathbf{f}_{t}-\boldsymbol{\hat
{\mu}}_{T}\right)  $ and $g_{t}$ are distributed independently and $g_{t}$ are
serially independent with $E\left(  g_{t}\right)  =0$ and $E(g_{t}^{2})=1$,
and it follows that
\[
Var\left[  T^{-1}\sum_{t=1}^{T}\left(  \mathbf{f}_{t}-\boldsymbol{\hat{\mu}%
}_{T}\right)  g_{t}\right]  =E\left[  T^{-2}\sum_{t=1}^{T}\left(
\mathbf{f}_{t}-\boldsymbol{\hat{\mu}}_{T}\right)  ^{2}E\left(  g_{t}%
^{2}\right)  \right]  =T^{-1}E\left(  T^{-1}\mathbf{F}^{\prime}\mathbf{M}%
_{T}\mathbf{F}\right)  =O\left(  T^{-1}\right)  .
\]
Hence, $T^{-1}\sum_{t=1}^{T}\left(  \mathbf{f}_{t}-\boldsymbol{\hat{\mu}}%
_{T}\right)  g_{t}=O_{p}\left(  T^{-1/2}\right)  $. Also (see (\ref{As}) and
(\ref{Snormg}))
\begin{align*}
\left\Vert n^{-1}\sum_{i=1}^{n}\gamma_{i}\boldsymbol{\beta}_{i}^{\prime
}\right\Vert  &  \leq n^{-1}\sum_{i=1}^{n}\left\vert \gamma_{i}\right\vert
\left\Vert \boldsymbol{\beta}_{i}\right\Vert =n^{-1}\sum_{i=1}^{n}\left\vert
\gamma_{i}\right\vert \left(  \boldsymbol{\beta}_{i}^{\prime}\boldsymbol{\beta
}_{i}\right)  ^{1/2}\\
&  \leq\left(  n^{-1}\sum_{i=1}^{n}\left\vert \gamma_{i}\right\vert
^{2}\right)  ^{1/2}\left(  n^{-1}\sum_{i=1}^{n}\boldsymbol{\beta}_{i}^{\prime
}\boldsymbol{\beta}_{i}\right)  ^{1/2}=O_{p}\left(  n^{\frac{-1+\alpha
_{\gamma}}{2}}\right)  O_{p}\left(  n^{\frac{-1+\alpha_{max}}{2}}\right) \\
&  =O_{p}\left(  n^{-1+\frac{\alpha_{max}+\alpha_{\gamma}}{2}}\right)  .
\end{align*}
Hence, $p_{1,nT}=O_{p}\left(  T^{-1/2}n^{-1+\frac{\alpha_{max}+\alpha_{\gamma
}}{2}}\right)  $. Consider $p_{2,nT}$ and recall that under Assumption
\ref{Errors} $v_{it}$ are serially independent, have zero means and are
distributed independently of $\left(  \mathbf{f}_{t}-\boldsymbol{\hat{\mu}%
}_{T}\right)  $ and $\boldsymbol{\beta}_{i}$\thinspace. Then $E\left(
p_{2,nT\ }\right)  =0$ and
\[
Var\left(  p_{2,nT\ }\left\vert \mathbf{F}\right.  \right)  =\frac{1}%
{nT}\left(  \frac{1}{n}\sum_{i=1}^{n}\sum_{j=1}^{n}\sigma_{ij,v}%
\boldsymbol{\beta}_{i}^{\prime}\mathbf{\hat{\Sigma}}_{f}\boldsymbol{\beta}%
_{i}\right)  ,
\]
where $E(v_{it}v_{jt})=\sigma_{ij,v}$. Also,
\[
\left\Vert \frac{1}{n}\sum_{i=1}^{n}\sum_{j=1}^{n}\sigma_{ij,v}%
\boldsymbol{\beta}_{i}^{\prime}\mathbf{\hat{\Sigma}}_{f}\boldsymbol{\beta}%
_{i}\right\Vert \leq\left(  sup_{i}\left\Vert \boldsymbol{\beta}%
_{i}\right\Vert \right)  ^{2}\left\Vert \mathbf{\hat{\Sigma}}_{f}\right\Vert
\left(  \frac{1}{n}\sum_{i=1}^{n}\sum_{j=1}^{n}\left\vert \sigma
_{ij,v}\right\vert \right)  ,
\]
and by assumption $sup_{i}\left\Vert \boldsymbol{\beta}_{i}\right\Vert <C$,
$E\left\Vert \mathbf{\hat{\Sigma}}_{f}\right\Vert <C$, and $n^{-1}\sum
_{i=1}^{n}\sum_{j=1}^{n}\left\vert \sigma_{ij,v}\right\vert =O(1)$. Hence,
$Var\left(  p_{2,nT\ }\right)  =O_{p}(n^{-1}T^{-1})$, and it follows that
$p_{2,nT}=O_{p}(T^{-1/2}n^{-1/2})$, and overall (since $\alpha_{\gamma}<1/2$
and $\alpha_{max}\leq1$)
\begin{equation}
p_{nT}=O_{p}(T^{-1/2}n^{-1/2})+O_{p}\left(  T^{-1/2}n^{-1+\frac{\alpha
_{max}+\alpha_{\gamma}}{2}}\right)  =O_{p}\left(  T^{-1/2}n^{-1+\frac
{\alpha_{max}+\alpha_{\gamma}}{2}}\right)  . \label{R2pnT}%
\end{equation}
Using (\ref{SSunT})\ and (\ref{R2pnT}) in (\ref{SSrnT}), we now have%
\[
\left(  nT\right)  ^{-1}\sum_{i=1}^{n}\sum_{t=1}^{T}\left(  r_{it}-\bar
{r}_{i\circ}\right)  ^{2}=\bar{\sigma}_{n}^{2}+n^{-1}\sum_{i=1}^{n}%
\boldsymbol{\beta}_{i}^{\prime}\mathbf{\hat{\Sigma}}_{f}\boldsymbol{\beta}%
_{i}+O_{p}\left(  T^{-1/2}n^{-1+\frac{\alpha_{max}+\alpha_{\gamma}}{2}%
}\right)  .
\]
Using this result and (\ref{orderzigA}) in (\ref{AdjPR2A}) yields%
\[
\overline{PR}_{nT}^{2}=1-\frac{\bar{\sigma}_{n}^{2}+O_{p}(T^{-1/2}n^{-1/2}%
)}{\bar{\sigma}_{n}^{2}+n^{-1}\sum_{i=1}^{n}\boldsymbol{\beta}_{i}^{\prime
}\mathbf{\hat{\Sigma}}_{f}\boldsymbol{\beta}_{i}+O_{p}(T^{-1/2}n^{-1/2}%
)+O_{p}\left(  T^{-1/2}n^{-1+\frac{\alpha_{max}+\alpha_{\gamma}}{2}}\right)
}.
\]
Since $O_{p}(T^{-1/2}n^{-1/2})$ is dominated by $\left(  T^{-1/2}%
n^{-1+\frac{\alpha_{max}+\alpha_{\gamma}}{2}}\right)  $, we end up with%
\begin{equation}
\overline{PR}_{nT}^{2}=\frac{n^{-1}\sum_{i=1}^{n}\boldsymbol{\beta}%
_{i}^{\prime}\mathbf{\hat{\Sigma}}_{f}\boldsymbol{\beta}_{i}/\bar{\sigma}%
_{n}^{2}+O_{p}\left(  T^{-1/2}n^{-1+\frac{\alpha_{max}+\alpha_{\gamma}}{2}%
}\right)  }{1+n^{-1}\sum_{i=1}^{n}\boldsymbol{\beta}_{i}^{\prime}%
\mathbf{\hat{\Sigma}}_{f}\boldsymbol{\beta}_{i}/\bar{\sigma}_{n}^{2}%
+O_{p}\left(  T^{-1/2}n^{-1+\frac{\alpha_{max}+\alpha_{\gamma}}{2}}\right)  }.
\label{PRnT2}%
\end{equation}
Hence, the order of $\overline{PR}_{nT}^{2}$ is governed by the pooled
signal-to-noise ratio defined by
\[
s_{nT}^{2}=\frac{n^{-1}\sum_{i=1}^{n}\boldsymbol{\beta}_{i}^{\prime
}\mathbf{\hat{\Sigma}}_{f}\boldsymbol{\beta}_{i}}{\bar{\sigma}_{n}^{2}}.
\]
However, under Assumption \ref{factors}
\begin{equation}
\lambda_{min}(\mathbf{\hat{\Sigma}}_{f})\frac{n^{-1}\sum_{i=1}^{n}%
\boldsymbol{\beta}_{i}^{\prime}\boldsymbol{\beta}_{i}}{\bar{\sigma}_{n}^{2}%
}\leq s_{nT}^{2}\leq\lambda_{max}(\mathbf{\hat{\Sigma}}_{f})\frac{n^{-1}%
\sum_{i=1}^{n}\boldsymbol{\beta}_{i}^{\prime}\boldsymbol{\beta}_{i}}%
{\bar{\sigma}_{n}^{2}}, \label{snT}%
\end{equation}
where $c<\lambda_{min}(\mathbf{\hat{\Sigma}}_{f})<\lambda_{max}(\mathbf{\hat
{\Sigma}}_{f})<C.$ Hence,
\[
c\left(  \frac{n^{-1}\sum_{i=1}^{n}\boldsymbol{\beta}_{i}^{\prime
}\boldsymbol{\beta}_{i}}{\bar{\sigma}_{n}^{2}}\right)  \leq s_{nT}^{2}\leq
C\left(  \frac{n^{-1}\sum_{i=1}^{n}\boldsymbol{\beta}_{i}^{\prime
}\boldsymbol{\beta}_{i}}{\bar{\sigma}_{n}^{2}}\right)  ,
\]
and it must be that
\[
s_{nT}^{2}=\ominus\left(  n^{-1}\sum_{i=1}^{n}\boldsymbol{\beta}_{i}^{\prime
}\boldsymbol{\beta}_{i}\right)  =\ominus\left[  \sum_{k=1}^{K}\left(
n^{-1}\sum_{i=1}^{n}\beta_{ik}^{2}\right)  \right]  .
\]
Also, under Assumption \ref{loadings}, $n^{-1}\sum_{i=1}^{n}\beta_{ik}%
^{2}=\ominus\left(  n^{\alpha_{k}-1}\right)  $. Hence,%
\[
s_{nT}^{2}=\sum_{k=1}^{K}\ominus\left(  n^{\alpha_{k}-1}\right)  ,
\]
which in view of (\ref{PRnT2}) now yields (\ref{PR2res}), as desired. }
\end{proof}

{\small
\singlespacing
\noindent
\textbf{References\bigskip}}

{\small
\noindent
Bailey, N., G. Kapetanios and M. H. Pesaran (2021). "Measurement of factor
strength: theory and practice", \textit{Journal of Applied Econometrics, }36,
587-613.}

{\small
\vspace{0.2cm}%
}

{\small
\noindent
Bailey, N., M. H. Pesaran, and L.V. Smith (2019). "A multiple testing approach
to the regularisation of sample correlation matrices", \textit{Journal of
Econometrics,} 208, 507-534.}

{\small
\vspace{0.2cm}%
}

{\small
\noindent
Chen, A. Y. and T. Zimmermann (2022). "Open source cross-sectional asset
pricing", \textit{Critical Finance Review}, 11, 207-264. }

{\small \pagebreak}

{\small \bigskip}%
\singlespacing
\thispagestyle{empty}%

\begin{center}
	{\small \textbf{Online Supplement B: Monte Carlo Results} }
	
	{\small \bigskip}
	
	{\small \bigskip To }
	
	{\small \bigskip}
	
	{\small \textbf{Identifying and exploiting alpha in linear asset pricing
			models with strong, semi-strong, and latent factors} }
	
	{\small \bigskip}
	
	{\small \bigskip}
	
	{\small by }
	
	{\small \bigskip}
	
	{\small \bigskip}
	
	{\small M. Hashem Pesaran }
	
	{\small University of Southern California, and Trinity College, Cambridge }
	
	{\small \bigskip}
	
	{\small Ron P. Smith }
	
	{\small Birkbeck, University of London }
	
	{\small \bigskip}
	
	{\small \bigskip}
	
	{\small October 2024 }
\end{center}

\pagebreak%
\setcounter{table}{0}%
\setcounter{section}{0}%
\setcounter{figure}{0}%
\setcounter{footnote}{0}%
%

\pagebreak%
\setcounter{page}{1}
\renewcommand{\thepage}{SB-\arabic{page}}%

\section{Introduction}

This online supplement provides detailed Monte Carlo results for all
experiments and risk factors. The summary tables below give the bias, RMSE and
size ($\times100$), for the DGP with one strong $(\alpha_{M}=1)$ and two
semi-strong factors $\left(  \alpha_{H}=0.85,\text{ }\alpha_{S}=0.65\right)  $
for the two-step and the bias-corrected estimators of $\boldsymbol{\phi}%
=(\phi_{M},\phi_{H},\phi_{S})^{\prime},$ for different sample sizes. These are
given for the twelve experimental designs listed in Table \ref{TabExperiments}
below. For experiments 8 and 9 we also report the results with larger values
for the parameter of the pricing errors ($\alpha_{\eta}=0.50$) and the spatial
coefficients ($\rho_{\varepsilon}=0.85$), denoted as Experiments 8a and 9a,
respectively. Following each table the empirical power functions for the bias
corrected estimator of $\boldsymbol{\phi}$ are displayed for different sample
sizes. The threshold estimator of the covariance matrix described in Section
3.2 of the main paper is used in computing the standard errors of the tests.

\section{List of Monte Carlo Experiments}

The full list of Monte Carlo experimenst is provided in Table
\ref{TabExperiments}. Six designs, the odd numbered ones, have errors in the
return equations that are Gaussian, six, the even numbered ones, have errors
that are $t$ distributed with 5 degrees of freedom. Designs 3 and 4 add GARCH
effects in the factor errors to designs 1 and 2, respectively. Designs 5 and
6, add the pricing error, $\eta_{i}$, to designs 3 and 4, and designs 7 and 8
further add the missing factor, $g_{t}$, to the error of the return equations.
Designs 1-8 have a diagonal covariance matrix for the idiosyncratic errors,
$v_{it}$. Designs 9 and 10 introduce spatial errors in the idiosyncratic
errors, $v_{it}$, and continue to allow for GARCH effects, pricing errors, and
a missing factor. Designs 11 and 12 generate $v_{it}$ with a block covariance
matrix structure, instead of the spatial pattern assumed in designs 9 and 10.
All experiments are implemented using $R=2,000$ replications.

\floatstyle{plaintop} \restylefloat{table}\renewcommand{\thetable}{S-\arabic{table}}%

	\begin{table}[H]%

		\begin{center}%
			\begin{tabular}
				[c]{clllll}\hline\hline
				& {\footnotesize Error} & {\footnotesize GARCH} & {\footnotesize Pricing} &
				{\footnotesize Missing} & {\footnotesize Error}\\
				& {\footnotesize distribution} & {\footnotesize effects} &
				{\footnotesize errors} & {\footnotesize factor} & {\footnotesize covariance}%
				\\\hline
				{\footnotesize 1} & {\footnotesize Gaussian} & ${\footnotesize b}%
				_{k}{\footnotesize =0,c}_{k}{\footnotesize =0}$ & {\footnotesize No} &
				{\footnotesize No} & $\sigma_{ij}=0${\footnotesize , }$i\neq j$\\
				{\footnotesize 2} & $t(5)${\footnotesize \ } & ${\footnotesize b}%
				_{k}{\footnotesize =0,c}_{k}{\footnotesize =0}$ & {\footnotesize No} &
				{\footnotesize No} & $\sigma_{ij}=0${\footnotesize , }$i\neq j$\\
				{\footnotesize 3} & {\footnotesize Gaussian} & ${\footnotesize b}%
				_{k}{\footnotesize =0.8,c}_{k}{\footnotesize =0.1}$ & {\footnotesize No} &
				{\footnotesize No} & $\sigma_{ij}=0${\footnotesize , }$i\neq j$\\
				{\footnotesize 4} & $t(5)${\footnotesize \ } & ${\footnotesize b}%
				_{k}{\footnotesize =0.8,c}_{k}{\footnotesize =0.1}$ & {\footnotesize No} &
				{\footnotesize No} & $\sigma_{ij}=0${\footnotesize , }$i\neq j$\\
				{\footnotesize 5} & {\footnotesize Gaussian} & ${\footnotesize b}%
				_{k}{\footnotesize =0.8,c}_{k}{\footnotesize =0.1}$ & ${\footnotesize \alpha
				}_{\eta}{\footnotesize =0.3}$ & {\footnotesize No} & $\sigma_{ij}%
				=0${\footnotesize , }$i\neq j$\\
				{\footnotesize 6} & $t(5)${\footnotesize \ } & ${\footnotesize b}%
				_{k}{\footnotesize =0.8,c}_{k}{\footnotesize =0.1}$ & ${\footnotesize \alpha
				}_{\eta}{\footnotesize =0.3}$ & {\footnotesize No} & $\sigma_{ij}%
				=0${\footnotesize , }$i\neq j$\\
				{\footnotesize 7} & {\footnotesize Gaussian} & ${\footnotesize b}%
				_{k}{\footnotesize =0.8,c}_{k}{\footnotesize =0.1}$ & ${\footnotesize \alpha
				}_{\eta}{\footnotesize =0.3}$ & ${\footnotesize \alpha}_{\gamma}%
				{\footnotesize =0.5}$ & $\sigma_{ij}=0${\footnotesize , }$i\neq j$\\
				{\footnotesize 8} & $t(5)${\footnotesize \ } & ${\footnotesize b}%
				_{k}{\footnotesize =0.8,c}_{k}{\footnotesize =0.1}$ & ${\footnotesize \alpha
				}_{\eta}{\footnotesize =0.3}$ & ${\footnotesize \alpha}_{\gamma}%
				{\footnotesize =0.5}$ & $\sigma_{ij}=0${\footnotesize , }$i\neq j$\\
				{\footnotesize 9} & {\footnotesize Gaussian} & ${\footnotesize b}%
				_{k}{\footnotesize =0.8,c}_{k}{\footnotesize =0.1}$ & ${\footnotesize \alpha
				}_{\eta}{\footnotesize =0.3}$ & ${\footnotesize \alpha}_{\gamma}%
				{\footnotesize =0.5}$ & ${\footnotesize \rho}_{\varepsilon}%
				{\footnotesize =0.5}$\\
				{\footnotesize 10} & $t(5)${\footnotesize \ } & ${\footnotesize b}%
				_{k}{\footnotesize =0.8,c}_{k}{\footnotesize =0.1}$ & ${\footnotesize \alpha
				}_{\eta}{\footnotesize =0.3}$ & ${\footnotesize \alpha}_{\gamma}%
				{\footnotesize =0.5}$ & ${\footnotesize \rho}_{\varepsilon}%
				{\footnotesize =0.5}$\\
				{\footnotesize 11} & {\footnotesize Gaussian} & ${\footnotesize b}%
				_{k}{\footnotesize =0.8,c}_{k}{\footnotesize =0.1}$ & ${\footnotesize \alpha
				}_{\eta}{\footnotesize =0.3}$ & ${\footnotesize \alpha}_{\gamma}%
				{\footnotesize =0.5}$ & {\footnotesize Block}\\
				{\footnotesize 12} & ${\footnotesize t(5)}$ & ${\footnotesize b}%
				_{k}{\footnotesize =0.8,c}_{k}{\footnotesize =0.1}$ & ${\footnotesize \alpha
				}_{\eta}{\footnotesize =0.3}$ & ${\footnotesize \alpha}_{\gamma}%
				{\footnotesize =0.5}$ & {\footnotesize Block}\\\hline\hline
			\end{tabular}

		\end{center}
		
		{\footnotesize Notes: t-distributed errors are denoted by }$t(5)$%
		{\footnotesize , }$b_{k}$ {\footnotesize and }$c_{k}$ {\footnotesize are the
			parameters of the GARCH(1,1), }$\alpha_{\eta}$ {\footnotesize is the strength
			of the pricing errors, }$\alpha_{\gamma}$ {\footnotesize refers to the
			strength of the missing factor, }$\sigma_{ij}=0${\footnotesize , }$i\neq j$
		{\footnotesize means that the error covariance is diagonal, }$\rho
		_{\varepsilon}$ {\footnotesize is the coefficient of spatial error process,
			and "Block" means that the error covariance matrix is block diagonal. See he
			online supplement A for further details.}
		
		\caption{List of experimental designs and their parameter values}
		\label{TabExperiments}%
\end{table}%

The simulation design, as presented in Table \ref{TabExperiments}, is aligned
with the naming convention employed for tables and figures. Specifically, if a
table is denoted as Table S-A-EX, it signifies that the table pertains to
Experiment X when the GDP and the panel regressions correctly include one
strong and two semi-strong factors. On the other hand, a table labeled as
Table S-B-EX relates to Experiment X when the DGP includes one strong and two
semi-strong factors, comparing the results when strong and semi-strong factors
are included (correct specification) with the ones where only the strong
factor is included (incorrect specification). Likewise, Table S-C-EX provides
the summary results for Experiment X, when the DGP contains one strong and two
weak factors, comparing the results to the case when strong and weak factors
are included (correct specification) with the results obtained when the weak
factors are excluded (incorrect specification). The aforementioned
nomenclature also applies to the figures that present the empirical power functions.

\pagebreak%

\renewcommand{\thetable}{S-A-E\arabic{table}}\renewcommand{\thefigure}{S-A-E\arabic{figure}}%
\setcounter{table}{0}%
%

	\begin{table}[H]%
		\caption{Bias, RMSE and size for the two-step and bias-corrected (BC)
			estimators of $\phi$, for Experiment 1 with one strong and two semi-strong
			factors}\label{tab:s-a-e1}
		
		\begin{center}
			{\footnotesize
				\begin{tabular}
					[c]{rrrrrrrrrr}\hline\hline
					&  & \multicolumn{2}{c}{Bias(x100)} &  & \multicolumn{2}{c}{RMSE(x100)} &  &
					\multicolumn{2}{c}{Size(x100)}\\\cline{3-4}\cline{6-7}\cline{9-10}%
					$\phi_{M}$=$-0.49$, $\alpha_{M}$=$1$ & n & {Two Step} & BC &  & {Two Step} &
					BC &  & {Two Step} & BC\\\cline{2-4}\cline{6-7}\cline{9-10}%
					{$T= 60$} & {\ 100} & 0.51 & -1.55 &  & 26.19 & 86.14 &  & 11.10 & 4.00\\
					{} & {\ 500} & -0.06 & 0.28 &  & 17.05 & 14.33 &  & 28.25 & 5.40\\
					{} & {1,000} & -0.43 & 0.10 &  & 15.80 & 10.12 &  & 41.35 & 6.20\\
					{} & {3,000} & -0.86 & -0.01 &  & 14.77 & 5.56 &  & 60.90 & 5.70\\
					&  &  &  &  &  &  &  &  & \\
					{$T=120$} & {\ 100} & 0.24 & 1.84 &  & 17.47 & 91.35 &  & 7.70 & 4.70\\
					{} & {\ 500} & 0.18 & 0.15 &  & 9.34 & 8.99 &  & 13.95 & 6.15\\
					{} & {1,000} & -0.04 & 0.01 &  & 7.57 & 6.22 &  & 19.15 & 5.65\\
					{} & {3,000} & -0.32 & -0.06 &  & 6.27 & 3.49 &  & 36.65 & 4.95\\
					&  &  &  &  &  &  &  &  & \\
					{$T=240$} & {\ 100} & 0.24 & -0.15 &  & 12.11 & 13.11 &  & 5.40 & 4.55\\
					{} & {\ 500} & 0.31 & 0.09 &  & 5.75 & 5.75 &  & 8.00 & 4.85\\
					{} & {1,000} & 0.22 & 0.03 &  & 4.34 & 4.06 &  & 10.60 & 5.30\\
					{} & {3,000} & 0.07 & -0.00 &  & 3.06 & 2.35 &  & 18.80 & 5.10\\
					$\phi_{H}$=$-0.35$, $\alpha_{H}$=$0.85$ &  &  &  &  &  &  &  &  & \\
					{$T= 60$} & {\ 100} & 2.63 & -0.89 &  & 29.26 & 140.65 &  & 17.55 & 3.65\\
					{} & {\ 500} & 4.07 & -1.16 &  & 25.10 & 24.87 &  & 45.65 & 5.90\\
					{} & {1,000} & 4.58 & -0.12 &  & 25.19 & 18.25 &  & 57.25 & 6.30\\
					{} & {3,000} & 4.81 & 0.30 &  & 26.13 & 11.76 &  & 75.10 & 5.90\\
					&  &  &  &  &  &  &  &  & \\
					{$T=120$} & {\ 100} & 2.51 & 3.60 &  & 19.56 & 190.94 &  & 9.95 & 4.75\\
					{} & {\ 500} & 2.97 & -0.47 &  & 13.83 & 13.46 &  & 25.65 & 5.55\\
					{} & {1,000} & 3.34 & 0.02 &  & 13.68 & 10.20 &  & 39.85 & 6.00\\
					{} & {3,000} & 3.57 & 0.15 &  & 13.85 & 6.48 &  & 61.55 & 5.50\\
					&  &  &  &  &  &  &  &  & \\
					{$T=240$} & {\ 100} & 1.64 & -0.50 &  & 13.44 & 16.67 &  & 7.75 & 6.15\\
					{} & {\ 500} & 1.99 & -0.39 &  & 7.87 & 8.01 &  & 12.95 & 5.55\\
					{} & {1,000} & 2.29 & -0.15 &  & 7.28 & 6.04 &  & 23.05 & 5.40\\
					{} & {3,000} & 2.64 & 0.02 &  & 6.97 & 3.82 &  & 44.55 & 4.70\\
					$\phi_{S}$=$0.16$, $\alpha_{S}$=$0.65$ &  &  &  &  &  &  &  &  & \\
					{$T= 60$} & {\ 100} & -20.09 & 1.57 &  & 37.57 & 256.48 &  & 24.55 & 2.60\\
					{} & {\ 500} & -23.78 & 1.80 &  & 37.42 & 37.75 &  & 56.20 & 4.05\\
					{} & {1,000} & -25.54 & 1.05 &  & 38.85 & 32.14 &  & 68.60 & 5.45\\
					{} & {3,000} & -28.27 & 0.35 &  & 42.26 & 24.96 &  & 82.80 & 7.00\\
					&  &  &  &  &  &  &  &  & \\
					{$T=120$} & {\ 100} & -12.99 & -6.16 &  & 25.50 & 378.75 &  & 16.25 & 4.90\\
					{} & {\ 500} & -16.99 & 0.65 &  & 23.97 & 19.49 &  & 46.85 & 4.80\\
					{} & {1,000} & -18.80 & 0.56 &  & 25.41 & 16.40 &  & 62.60 & 5.55\\
					{} & {3,000} & -22.08 & 0.44 &  & 28.91 & 12.73 &  & 80.55 & 5.75\\
					&  &  &  &  &  &  &  &  & \\
					{$T=240$} & {\ 100} & -8.52 & 0.16 &  & 17.30 & 19.66 &  & 10.80 & 4.55\\
					{} & {\ 500} & -11.19 & 0.19 &  & 15.08 & 11.66 &  & 36.15 & 4.90\\
					{} & {1,000} & -12.71 & 0.34 &  & 15.72 & 9.27 &  & 55.90 & 4.60\\
					{} & {3,000} & -15.93 & 0.18 &  & 18.70 & 7.02 &  & 81.90 & 5.60\\\hline\hline
				\end{tabular}
			}
		\end{center}
		
		{\footnotesize Notes: The DGP for Experiment 1 allows for Gaussian errors, no
			GARCH effects, without pricing errors, no missing factors, and without
			spatial/block error cross dependence. For further details of the experiments,
			see Table }\ref{TabExperiments}.%
		
\end{table}%

\pagebreak

\begin{center}
	{\footnotesize
		\begin{figure}[h]%
			\centering
			\caption{Power functions of the bias-corrected estimators of $\phi_{M}\,$,
				$\phi_{H}$ and $\phi_{S}$ for Experiment 1 }%
			\includegraphics[
			height=5.8608in,
			width=6.4524in
			]%
			{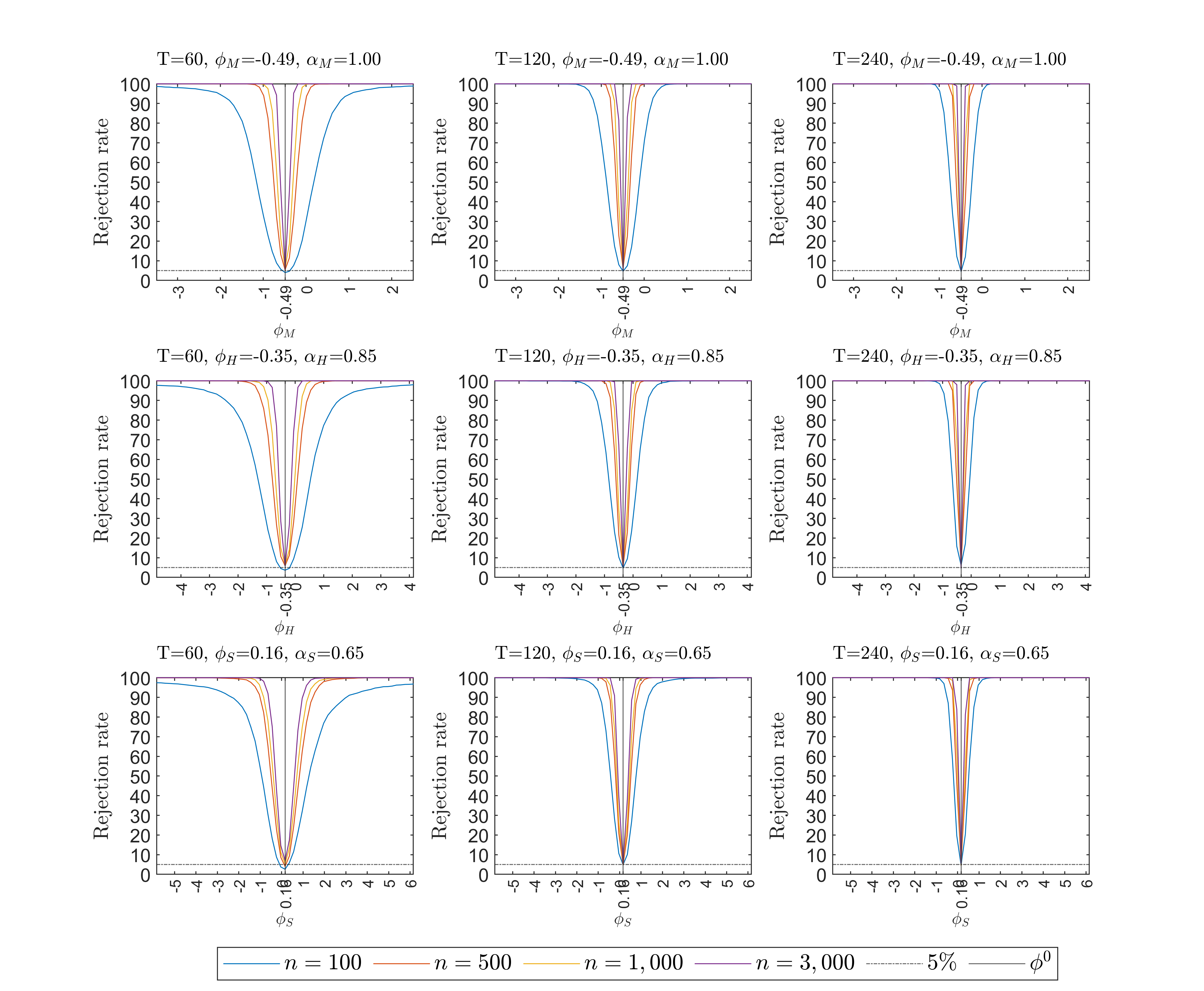}%
			\label{fig:s-a-e1}%
		\end{figure}
	}
\end{center}

{\footnotesize {Note: } See the notes to{ Table \ref{tab:s-a-e1}.}}

\pagebreak%

	\begin{table}[H]%
		\caption{Bias, RMSE and size for the two-step and bias-corrected (BC)
			estimators of $\phi$, for Experiment 2 with one strong and two semi-strong
			factors}\label{tab:s-a-e2}
		
		\begin{center}
			{\footnotesize
				\begin{tabular}
					[c]{rrrrrrrrrr}\hline\hline
					&  & \multicolumn{2}{c}{Bias(x100)} &  & \multicolumn{2}{c}{RMSE(x100)} &  &
					\multicolumn{2}{c}{Size(x100)}\\\cline{3-4}\cline{6-7}\cline{9-10}%
					$\phi_{M}$=$-0.49$, $\alpha_{M}$=$1$ & n & {Two Step} & BC &  & {Two Step} &
					BC &  & {Two Step} & BC\\\cline{2-4}\cline{6-7}\cline{9-10}%
					{$T= 60$} & {\ 100} & 0.65 & 0.27 &  & 26.62 & 45.08 &  & 11.75 & 3.70\\
					{} & {\ 500} & -0.07 & 0.27 &  & 17.26 & 14.47 &  & 28.25 & 5.75\\
					{} & {1,000} & -0.50 & 0.04 &  & 15.74 & 10.24 &  & 39.75 & 6.10\\
					{} & {3,000} & -0.87 & -0.05 &  & 14.80 & 5.72 &  & 60.70 & 5.90\\
					&  &  &  &  &  &  &  &  & \\
					{$T=120$} & {\ 100} & 0.42 & 0.02 &  & 17.50 & 20.45 &  & 7.05 & 4.90\\
					{} & {\ 500} & 0.25 & 0.23 &  & 9.47 & 9.16 &  & 14.80 & 6.60\\
					{} & {1,000} & -0.03 & 0.03 &  & 7.46 & 6.12 &  & 19.20 & 5.55\\
					{} & {3,000} & -0.36 & -0.09 &  & 6.31 & 3.52 &  & 36.70 & 5.65\\
					&  &  &  &  &  &  &  &  & \\
					{$T=240$} & {\ 100} & 0.27 & -0.14 &  & 12.17 & 13.18 &  & 6.20 & 5.05\\
					{} & {\ 500} & 0.34 & 0.11 &  & 5.84 & 5.81 &  & 8.15 & 5.05\\
					{} & {1,000} & 0.22 & 0.04 &  & 4.34 & 4.07 &  & 10.40 & 5.55\\
					{} & {3,000} & 0.04 & -0.03 &  & 3.07 & 2.36 &  & 19.15 & 5.25\\
					$\phi_{H}$=$-0.35$, $\alpha_{H}$=$0.85$ &  &  &  &  &  &  &  &  & \\
					{$T= 60$} & {\ 100} & 2.88 & -2.77 &  & 29.60 & 81.53 &  & 17.70 & 3.25\\
					{} & {\ 500} & 4.18 & -0.95 &  & 25.36 & 25.93 &  & 44.85 & 5.80\\
					{} & {1,000} & 4.62 & -0.02 &  & 25.27 & 18.42 &  & 57.40 & 5.00\\
					{} & {3,000} & 4.75 & 0.13 &  & 26.17 & 12.23 &  & 74.80 & 5.95\\
					&  &  &  &  &  &  &  &  & \\
					{$T=120$} & {\ 100} & 2.49 & -0.71 &  & 19.67 & 28.58 &  & 9.70 & 4.85\\
					{} & {\ 500} & 3.17 & -0.13 &  & 13.89 & 13.46 &  & 25.75 & 5.10\\
					{} & {1,000} & 3.30 & -0.05 &  & 13.75 & 10.34 &  & 40.05 & 6.00\\
					{} & {3,000} & 3.52 & 0.08 &  & 13.84 & 6.61 &  & 62.40 & 6.05\\
					&  &  &  &  &  &  &  &  & \\
					{$T=240$} & {\ 100} & 1.75 & -0.39 &  & 13.41 & 16.61 &  & 7.70 & 5.65\\
					{} & {\ 500} & 2.08 & -0.27 &  & 7.96 & 8.01 &  & 13.75 & 5.10\\
					{} & {1,000} & 2.23 & -0.23 &  & 7.22 & 6.08 &  & 22.20 & 5.15\\
					{} & {3,000} & 2.64 & 0.02 &  & 6.95 & 3.83 &  & 43.50 & 4.60\\
					$\phi_{S}$=$0.16$, $\alpha_{S}$=$0.65$ &  &  &  &  &  &  &  &  & \\
					{$T= 60$} & {\ 100} & -20.16 & 5.40 &  & 37.93 & 125.25 &  & 24.85 & 3.60\\
					{} & {\ 500} & -23.88 & 1.79 &  & 37.72 & 40.47 &  & 55.30 & 4.25\\
					{} & {1,000} & -25.44 & 1.10 &  & 38.75 & 33.04 &  & 67.85 & 5.35\\
					{} & {3,000} & -28.24 & 0.82 &  & 42.24 & 25.93 &  & 82.05 & 6.45\\
					&  &  &  &  &  &  &  &  & \\
					{$T=120$} & {\ 100} & -13.00 & 2.26 &  & 25.41 & 38.09 &  & 16.65 & 5.40\\
					{} & {\ 500} & -17.08 & 0.48 &  & 24.21 & 20.35 &  & 46.35 & 5.55\\
					{} & {1,000} & -18.80 & 0.53 &  & 25.43 & 16.73 &  & 62.75 & 5.50\\
					{} & {3,000} & -22.12 & 0.38 &  & 28.88 & 12.87 &  & 79.90 & 5.70\\
					&  &  &  &  &  &  &  &  & \\
					{$T=240$} & {\ 100} & -8.38 & 0.35 &  & 17.38 & 19.88 &  & 10.75 & 4.90\\
					{} & {\ 500} & -11.10 & 0.33 &  & 14.99 & 11.64 &  & 35.40 & 5.35\\
					{} & {1,000} & -12.66 & 0.43 &  & 15.60 & 9.15 &  & 55.40 & 4.55\\
					{} & {3,000} & -15.88 & 0.29 &  & 18.65 & 6.97 &  & 80.75 & 4.95\\\hline\hline
				\end{tabular}
			}
		\end{center}
		
		{\footnotesize Notes: The DGP for Experiment 2 allows for t(5) distributed
			errors, no GARCH effects, without pricing errors, no missing factors, and
			without spatial/block error cross dependence. For further details of the
			experiments, see Table \ref{TabExperiments}.}%
		
\end{table}%

\pagebreak

\begin{center}
	{\footnotesize
		\begin{figure}[h]%
			\centering
			\caption{Empirical Power Functions, experiment 2, for coefficient of the
				semi-strong factors}%
			\includegraphics[
			height=5.8608in,
			width=7.1122in
			]%
			{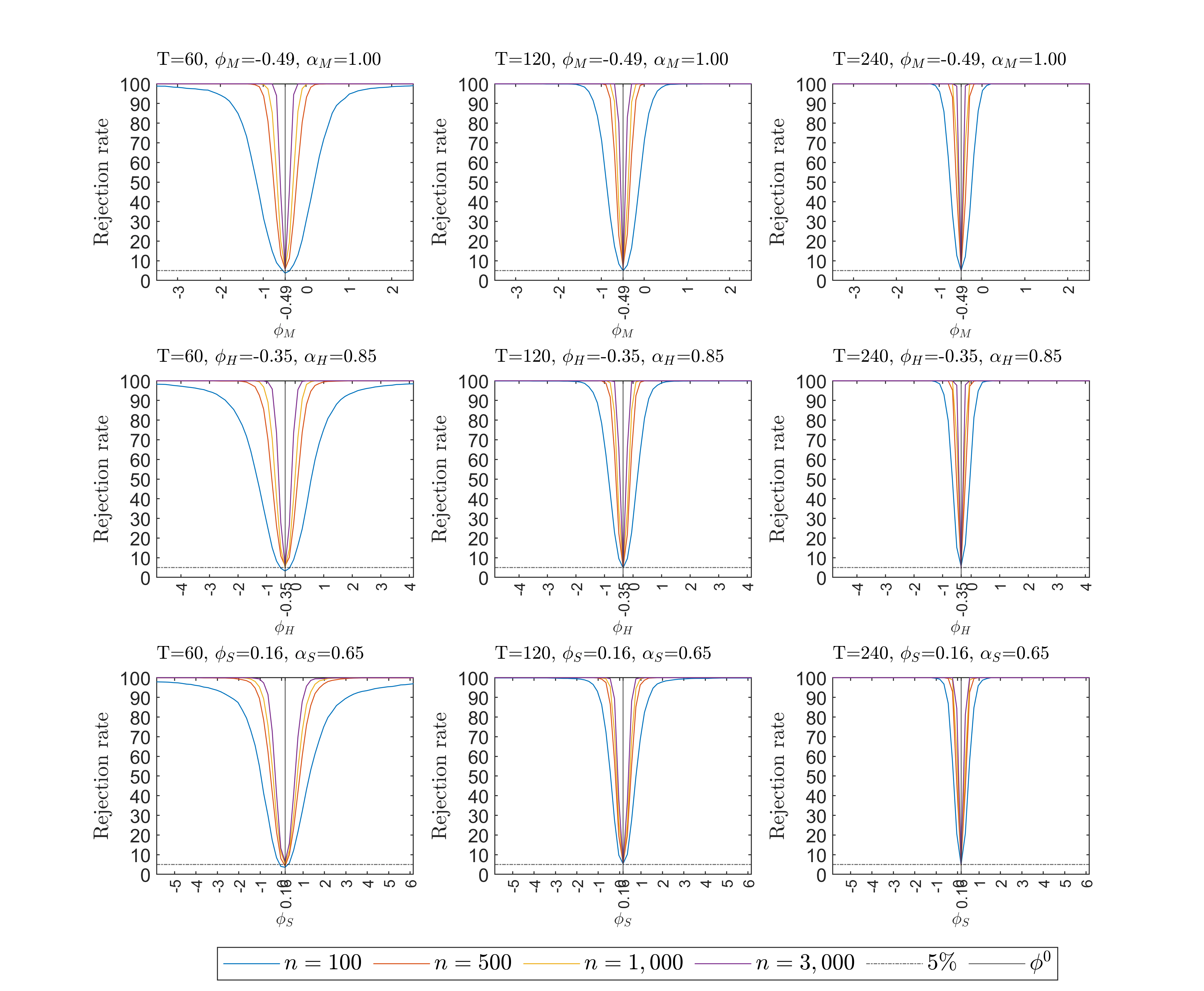}%
			\label{fig:s-a-e2}%
		\end{figure}
	}
\end{center}

{\footnotesize {Note: } See the notes to{ Table \ref{tab:s-a-e2}.}}

\pagebreak%

	\begin{table}[H]%
		\caption{Bias, RMSE and size for the two-step and bias-corrected (BC)
			estimators of $\phi$, for Experiment 3 with one strong and two semi-strong
			factors}\label{tab:s-a-e3}
		
		\begin{center}
			{\footnotesize
				\begin{tabular}
					[c]{rrrrrrrrrr}\hline\hline
					&  & \multicolumn{2}{c}{Bias(x100)} &  & \multicolumn{2}{c}{RMSE(x100)} &  &
					\multicolumn{2}{c}{Size(x100)}\\\cline{3-4}\cline{6-7}\cline{9-10}%
					$\phi_{M}$=$-0.49$, $\alpha_{M}$=$1$ & n & {Two Step} & BC &  & {Two Step} &
					BC &  & {Two Step} & BC\\\cline{2-4}\cline{6-7}\cline{9-10}%
					{$T= 60$} & {\ 100} & 0.51 & -6.00 &  & 26.18 & 227.89 &  & 11.05 & 3.40\\
					{} & {\ 500} & -0.14 & 0.29 &  & 17.03 & 14.71 &  & 28.20 & 5.65\\
					{} & {1,000} & -0.51 & 0.09 &  & 15.81 & 10.33 &  & 40.30 & 6.35\\
					{} & {3,000} & -0.92 & 0.00 &  & 14.83 & 5.65 &  & 61.15 & 5.65\\
					&  &  &  &  &  &  &  &  & \\
					{$T=120$} & {\ 100} & 0.21 & -0.19 &  & 17.45 & 20.63 &  & 7.85 & 4.50\\
					{} & {\ 500} & 0.13 & 0.15 &  & 9.35 & 9.05 &  & 13.80 & 6.25\\
					{} & {1,000} & -0.09 & 0.02 &  & 7.61 & 6.25 &  & 19.95 & 5.75\\
					{} & {3,000} & -0.37 & -0.05 &  & 6.31 & 3.50 &  & 36.65 & 4.90\\
					&  &  &  &  &  &  &  &  & \\
					{$T=240$} & {\ 100} & 0.25 & -0.13 &  & 12.15 & 13.17 &  & 5.35 & 4.45\\
					{} & {\ 500} & 0.31 & 0.10 &  & 5.77 & 5.78 &  & 8.00 & 5.20\\
					{} & {1,000} & 0.21 & 0.05 &  & 4.36 & 4.07 &  & 11.15 & 5.25\\
					{} & {3,000} & 0.06 & 0.00 &  & 3.07 & 2.36 &  & 18.80 & 5.20\\
					$\phi_{H}$=$-0.35$, $\alpha_{H}$=$0.85$ &  &  &  &  &  &  &  &  & \\
					{$T= 60$} & {\ 100} & 2.67 & -7.85 &  & 28.98 & 155.63 &  & 17.70 & 3.80\\
					{} & {\ 500} & 4.06 & -1.08 &  & 24.84 & 26.10 &  & 45.75 & 6.05\\
					{} & {1,000} & 4.57 & -0.15 &  & 24.93 & 18.93 &  & 56.85 & 6.35\\
					{} & {3,000} & 4.80 & 0.28 &  & 25.82 & 12.24 &  & 75.55 & 5.40\\
					&  &  &  &  &  &  &  &  & \\
					{$T=120$} & {\ 100} & 2.51 & -0.70 &  & 19.41 & 28.86 &  & 10.05 & 4.25\\
					{} & {\ 500} & 2.95 & -0.46 &  & 13.76 & 13.63 &  & 25.90 & 5.55\\
					{} & {1,000} & 3.31 & 0.05 &  & 13.56 & 10.28 &  & 39.70 & 5.70\\
					{} & {3,000} & 3.52 & 0.17 &  & 13.70 & 6.56 &  & 61.60 & 5.30\\
					&  &  &  &  &  &  &  &  & \\
					{$T=240$} & {\ 100} & 1.64 & -0.53 &  & 13.42 & 16.73 &  & 7.65 & 6.15\\
					{} & {\ 500} & 2.02 & -0.38 &  & 7.86 & 8.04 &  & 13.35 & 5.65\\
					{} & {1,000} & 2.32 & -0.13 &  & 7.24 & 6.05 &  & 22.35 & 5.45\\
					{} & {3,000} & 2.65 & 0.03 &  & 6.93 & 3.82 &  & 44.45 & 4.35\\
					$\phi_{S}$=$0.16$, $\alpha_{S}$=$0.65$ &  &  &  &  &  &  &  &  & \\
					{$T= 60$} & {\ 100} & -20.59 & 25.62 &  & 37.71 & 892.51 &  & 25.80 & 2.50\\
					{} & {\ 500} & -24.12 & 1.65 &  & 37.45 & 40.44 &  & 57.75 & 3.90\\
					{} & {1,000} & -25.89 & 1.16 &  & 38.88 & 34.07 &  & 68.95 & 5.40\\
					{} & {3,000} & -28.50 & 0.41 &  & 42.22 & 26.13 &  & 82.65 & 6.70\\
					&  &  &  &  &  &  &  &  & \\
					{$T=120$} & {\ 100} & -13.31 & 2.54 &  & 25.72 & 37.94 &  & 16.80 & 4.95\\
					{} & {\ 500} & -17.21 & 0.73 &  & 24.19 & 19.85 &  & 47.55 & 5.25\\
					{} & {1,000} & -19.04 & 0.61 &  & 25.61 & 16.70 &  & 63.75 & 5.65\\
					{} & {3,000} & -22.27 & 0.51 &  & 29.04 & 12.99 &  & 80.65 & 5.95\\
					&  &  &  &  &  &  &  &  & \\
					{$T=240$} & {\ 100} & -8.63 & 0.19 &  & 17.37 & 19.69 &  & 10.70 & 4.55\\
					{} & {\ 500} & -11.31 & 0.21 &  & 15.22 & 11.72 &  & 36.85 & 5.15\\
					{} & {1,000} & -12.87 & 0.33 &  & 15.91 & 9.31 &  & 56.35 & 4.75\\
					{} & {3,000} & -16.07 & 0.17 &  & 18.87 & 7.04 &  & 81.75 & 5.60\\\hline\hline
				\end{tabular}
			}
		\end{center}
		
		{\footnotesize Notes: The DGP for Experiment 3 allows for Gaussian errors,
			with GARCH effects, without pricing errors, no missing factors, and without
			spatial/block error cross dependence. For further details of the experiments,
			see Table \ref{TabExperiments}. }
\end{table}%

\pagebreak

\begin{center}
	{\footnotesize
		\begin{figure}[h]%
			\centering
			\caption{Empirical Power Functions, experiment 3, for coefficient of the
				semi-strong factors}%
			\includegraphics[
			height=5.8608in,
			width=7.1122in
			]%
			{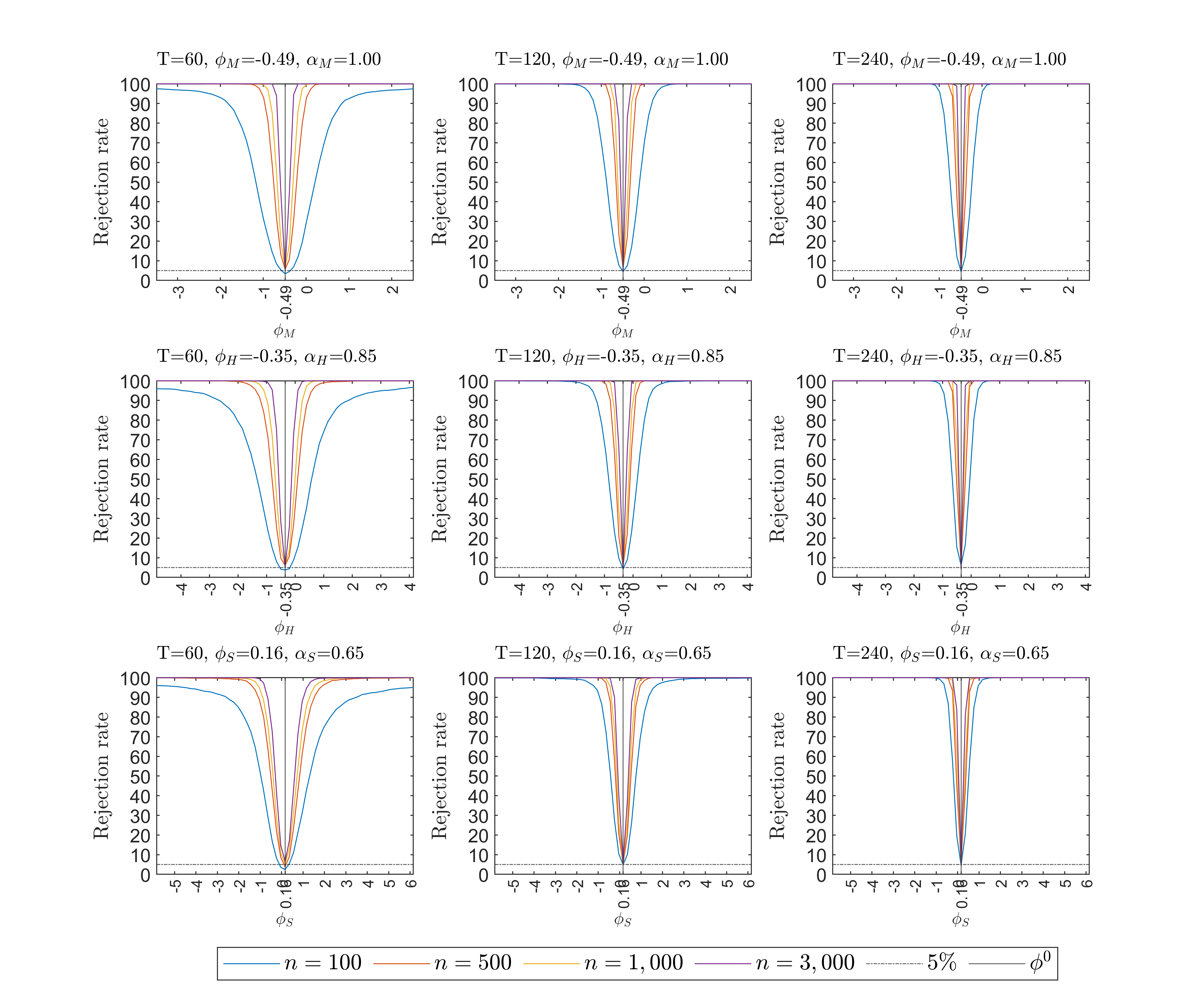}%
			\label{fig:s-a-e3}%
		\end{figure}
	}
\end{center}

{\footnotesize {Note: } See the notes to{ Table \ref{tab:s-a-e3}.}}

\pagebreak%

	\begin{table}[H]%
		\caption{Bias, RMSE and size for the two-step and bias-corrected (BC)
			estimators of $\phi$, for Experiment 4 with one strong and two semi-strong
			factors}\label{tab:s-a-e4}
		
		\begin{center}
			{\footnotesize
				\begin{tabular}
					[c]{rrrrrrrrrr}\hline\hline
					&  & \multicolumn{2}{c}{Bias(x100)} &  & \multicolumn{2}{c}{RMSE(x100)} &  &
					\multicolumn{2}{c}{Size(x100)}\\\cline{3-4}\cline{6-7}\cline{9-10}%
					$\phi_{M}$=$-0.49$, $\alpha_{M}$=$1$ & n & {Two Step} & BC &  & {Two Step} &
					BC &  & {Two Step} & BC\\\cline{2-4}\cline{6-7}\cline{9-10}%
					{$T= 60$} & {\ 100} & 0.59 & 0.17 &  & 26.59 & 107.83 &  & 12.15 & 3.75\\
					{} & {\ 500} & -0.16 & 0.23 &  & 17.26 & 14.75 &  & 28.05 & 5.65\\
					{} & {1,000} & -0.57 & 0.05 &  & 15.76 & 10.43 &  & 39.95 & 6.10\\
					{} & {3,000} & -0.93 & -0.04 &  & 14.85 & 5.82 &  & 60.70 & 5.95\\
					&  &  &  &  &  &  &  &  & \\
					{$T=120$} & {\ 100} & 0.37 & -0.86 &  & 17.48 & 44.37 &  & 7.05 & 4.95\\
					{} & {\ 500} & 0.20 & 0.21 &  & 9.49 & 9.24 &  & 15.15 & 6.45\\
					{} & {1,000} & -0.08 & 0.03 &  & 7.50 & 6.16 &  & 19.45 & 4.90\\
					{} & {3,000} & -0.40 & -0.08 &  & 6.35 & 3.53 &  & 36.70 & 5.85\\
					&  &  &  &  &  &  &  &  & \\
					{$T=240$} & {\ 100} & 0.27 & -0.13 &  & 12.20 & 13.22 &  & 6.35 & 4.85\\
					{} & {\ 500} & 0.34 & 0.12 &  & 5.86 & 5.83 &  & 8.80 & 5.10\\
					{} & {1,000} & 0.22 & 0.05 &  & 4.36 & 4.08 &  & 10.60 & 5.60\\
					{} & {3,000} & 0.03 & -0.02 &  & 3.08 & 2.36 &  & 19.15 & 5.25\\
					$\phi_{H}$=$-0.35$, $\alpha_{H}$=$0.85$ &  &  &  &  &  &  &  &  & \\
					{$T= 60$} & {\ 100} & 2.92 & -5.32 &  & 29.30 & 146.82 &  & 17.95 & 3.15\\
					{} & {\ 500} & 4.18 & -0.76 &  & 25.12 & 27.25 &  & 44.65 & 5.80\\
					{} & {1,000} & 4.61 & -0.06 &  & 25.02 & 19.04 &  & 56.50 & 4.85\\
					{} & {3,000} & 4.73 & 0.12 &  & 25.86 & 12.64 &  & 74.85 & 5.60\\
					&  &  &  &  &  &  &  &  & \\
					{$T=120$} & {\ 100} & 2.50 & -3.13 &  & 19.53 & 113.15 &  & 9.75 & 4.55\\
					{} & {\ 500} & 3.16 & -0.11 &  & 13.84 & 13.64 &  & 25.65 & 5.25\\
					{} & {1,000} & 3.27 & -0.02 &  & 13.64 & 10.41 &  & 40.10 & 6.05\\
					{} & {3,000} & 3.47 & 0.10 &  & 13.70 & 6.67 &  & 60.75 & 6.25\\
					&  &  &  &  &  &  &  &  & \\
					{$T=240$} & {\ 100} & 1.75 & -0.41 &  & 13.39 & 16.66 &  & 7.85 & 5.75\\
					{} & {\ 500} & 2.12 & -0.25 &  & 7.97 & 8.07 &  & 13.90 & 5.05\\
					{} & {1,000} & 2.25 & -0.22 &  & 7.17 & 6.09 &  & 21.70 & 5.40\\
					{} & {3,000} & 2.65 & 0.02 &  & 6.92 & 3.85 &  & 43.35 & 4.70\\
					$\phi_{S}$=$0.16$, $\alpha_{S}$=$0.65$ &  &  &  &  &  &  &  &  & \\
					{$T= 60$} & {\ 100} & -20.62 & 12.61 &  & 38.03 & 267.24 &  & 25.90 & 3.60\\
					{} & {\ 500} & -24.23 & 1.81 &  & 37.80 & 44.22 &  & 56.50 & 3.85\\
					{} & {1,000} & -25.79 & 1.07 &  & 38.78 & 34.41 &  & 68.90 & 5.45\\
					{} & {3,000} & -28.50 & 0.96 &  & 42.21 & 27.28 &  & 81.45 & 6.20\\
					&  &  &  &  &  &  &  &  & \\
					{$T=120$} & {\ 100} & -13.32 & 7.33 &  & 25.66 & 214.62 &  & 16.60 & 5.25\\
					{} & {\ 500} & -17.29 & 0.59 &  & 24.43 & 20.72 &  & 46.95 & 5.50\\
					{} & {1,000} & -19.05 & 0.56 &  & 25.63 & 16.96 &  & 62.30 & 5.80\\
					{} & {3,000} & -22.32 & 0.41 &  & 29.03 & 13.12 &  & 80.50 & 5.85\\
					&  &  &  &  &  &  &  &  & \\
					{$T=240$} & {\ 100} & -8.49 & 0.40 &  & 17.46 & 19.99 &  & 10.75 & 5.10\\
					{} & {\ 500} & -11.22 & 0.35 &  & 15.14 & 11.73 &  & 36.10 & 5.20\\
					{} & {1,000} & -12.80 & 0.44 &  & 15.77 & 9.19 &  & 55.95 & 4.40\\
					{} & {3,000} & -16.02 & 0.29 &  & 18.80 & 6.99 &  & 81.40 & 5.00\\\hline\hline
				\end{tabular}
			}
		\end{center}
		
		{\footnotesize Notes: The DGP for Experiment 4 allows for t(5) distributed
			errors, with GARCH effects, without pricing errors, no missing factors, and
			without spatial/block error cross dependence. For further details of the
			experiments, see Table \ref{TabExperiments}. }
\end{table}%

\pagebreak

\begin{center}
	{\footnotesize
		\begin{figure}[h]%
			\centering
			\caption{Empirical Power Functions, experiment 4, for coefficient of the
				semi-strong factors}%
			\includegraphics[
			height=5.8608in,
			width=7.1122in
			]%
			{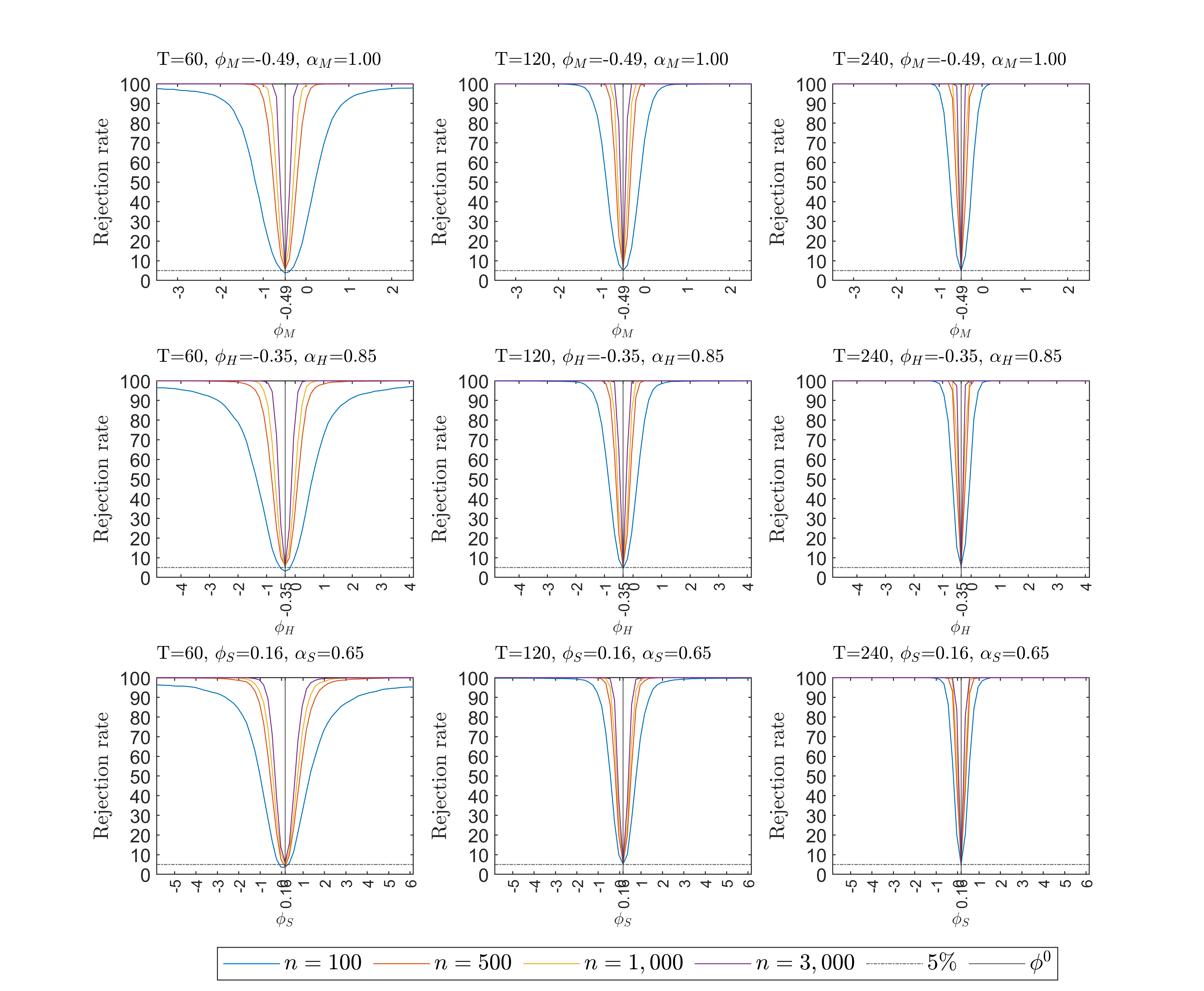}%
			\label{fig:s-a-e4}%
		\end{figure}
	}
\end{center}

{\footnotesize {Note: } See the notes to{ Table \ref{tab:s-a-e4}.}}

\pagebreak%

	\begin{table}[H]%

		\caption{Bias, RMSE and size for the two-step and bias-corrected (BC)
			estimators of $\phi$, for Experiment 5 with one strong and two semi-strong
			factors }\label{tab:s-a-e5}
		
		\begin{center}
			{\footnotesize
				\begin{tabular}
					[c]{rrrrrrrrrr}\hline\hline
					&  & \multicolumn{2}{c}{Bias(x100)} &  & \multicolumn{2}{c}{RMSE(x100)} &  &
					\multicolumn{2}{c}{Size(x100)}\\\cline{3-4}\cline{6-7}\cline{9-10}%
					$\phi_{M}$=$-0.49$, $\alpha_{M}$=$1$ & n & {Two Step} & BC &  & {Two Step} &
					BC &  & {Two Step} & BC\\\cline{2-4}\cline{6-7}\cline{9-10}%
					{$T= 60$} & {\ 100} & 0.42 & -7.58 &  & 26.27 & 285.21 &  & 11.35 & 3.75\\
					{} & {\ 500} & -0.15 & 0.28 &  & 17.05 & 14.79 &  & 28.10 & 5.80\\
					{} & {1,000} & -0.49 & 0.10 &  & 15.81 & 10.35 &  & 40.35 & 6.25\\
					{} & {3,000} & -0.92 & -0.00 &  & 14.83 & 5.65 &  & 61.10 & 5.75\\
					&  &  &  &  &  &  &  &  & \\
					{$T=120$} & {\ 100} & 0.06 & -0.35 &  & 17.63 & 20.82 &  & 7.85 & 4.95\\
					{} & {\ 500} & 0.13 & 0.14 &  & 9.40 & 9.15 &  & 14.85 & 6.45\\
					{} & {1,000} & -0.07 & 0.04 &  & 7.60 & 6.28 &  & 20.00 & 5.90\\
					{} & {3,000} & -0.37 & -0.06 &  & 6.32 & 3.50 &  & 37.00 & 4.80\\
					&  &  &  &  &  &  &  &  & \\
					{$T=240$} & {\ 100} & 0.08 & -0.31 &  & 12.48 & 13.53 &  & 6.30 & 5.35\\
					{} & {\ 500} & 0.31 & 0.10 &  & 5.83 & 5.84 &  & 8.35 & 5.35\\
					{} & {1,000} & 0.23 & 0.07 &  & 4.38 & 4.11 &  & 10.95 & 5.20\\
					{} & {3,000} & 0.06 & 0.00 &  & 3.08 & 2.36 &  & 18.75 & 5.25\\
					$\phi_{H}$=$-0.35$, $\alpha_{H}$=$0.85$ &  &  &  &  &  &  &  &  & \\
					{$T= 60$} & {\ 100} & 2.58 & -8.49 &  & 28.97 & 174.83 &  & 18.20 & 3.30\\
					{} & {\ 500} & 4.10 & -1.01 &  & 24.82 & 26.10 &  & 45.25 & 6.15\\
					{} & {1,000} & 4.58 & -0.12 &  & 24.96 & 18.91 &  & 57.15 & 6.50\\
					{} & {3,000} & 4.80 & 0.28 &  & 25.82 & 12.25 &  & 75.60 & 5.60\\
					&  &  &  &  &  &  &  &  & \\
					{$T=120$} & {\ 100} & 2.44 & -0.79 &  & 19.68 & 28.94 &  & 10.35 & 4.70\\
					{} & {\ 500} & 2.98 & -0.43 &  & 13.78 & 13.79 &  & 25.75 & 5.60\\
					{} & {1,000} & 3.32 & 0.07 &  & 13.57 & 10.29 &  & 40.00 & 6.05\\
					{} & {3,000} & 3.52 & 0.17 &  & 13.70 & 6.56 &  & 61.85 & 5.45\\
					&  &  &  &  &  &  &  &  & \\
					{$T=240$} & {\ 100} & 1.57 & -0.63 &  & 13.97 & 17.37 &  & 8.25 & 6.75\\
					{} & {\ 500} & 2.04 & -0.35 &  & 7.92 & 8.17 &  & 13.85 & 5.75\\
					{} & {1,000} & 2.33 & -0.12 &  & 7.28 & 6.09 &  & 22.70 & 5.50\\
					{} & {3,000} & 2.66 & 0.04 &  & 6.94 & 3.83 &  & 44.70 & 4.55\\
					$\phi_{S}$=$0.16$, $\alpha_{S}$=$0.65$ &  &  &  &  &  &  &  &  & \\
					{$T= 60$} & {\ 100} & -20.57 & 31.92 &  & 37.88 & 1112.05 &  & 25.45 & 2.55\\
					{} & {\ 500} & -24.15 & 1.61 &  & 37.49 & 40.53 &  & 57.90 & 3.95\\
					{} & {1,000} & -25.88 & 1.16 &  & 38.88 & 34.12 &  & 68.80 & 5.70\\
					{} & {3,000} & -28.51 & 0.40 &  & 42.23 & 26.13 &  & 82.65 & 6.60\\
					&  &  &  &  &  &  &  &  & \\
					{$T=120$} & {\ 100} & -13.21 & 2.55 &  & 26.09 & 37.89 &  & 16.70 & 5.50\\
					{} & {\ 500} & -17.24 & 0.70 &  & 24.27 & 20.01 &  & 47.90 & 5.40\\
					{} & {1,000} & -19.03 & 0.62 &  & 25.61 & 16.77 &  & 63.50 & 5.85\\
					{} & {3,000} & -22.26 & 0.53 &  & 29.04 & 13.01 &  & 80.55 & 6.05\\
					&  &  &  &  &  &  &  &  & \\
					{$T=240$} & {\ 100} & -8.55 & 0.29 &  & 18.03 & 20.67 &  & 11.85 & 6.25\\
					{} & {\ 500} & -11.35 & 0.16 &  & 15.31 & 11.89 &  & 37.35 & 5.85\\
					{} & {1,000} & -12.87 & 0.33 &  & 15.91 & 9.37 &  & 56.40 & 4.75\\
					{} & {3,000} & -16.07 & 0.18 &  & 18.86 & 7.07 &  & 81.75 & 5.65\\\hline\hline
				\end{tabular}
			}
		\end{center}
		
		{\footnotesize Notes: The DGP for Experiment 5 allows for Gaussian errors,
			with GARCH effects, with pricing errors ($\alpha_{\eta}=0.3$), no missing
			factors, and without spatial/block error cross dependence. For further details
			of the experiments, see Table \ref{TabExperiments}. }
\end{table}%

\pagebreak

\begin{center}
	{\footnotesize
		\begin{figure}[h]%
			\centering
			\caption{Empirical Power Functions, experiment 5, for coefficient of the
				semi-strong factors}%
			\includegraphics[
			height=5.8608in,
			width=7.1122in
			]%
			{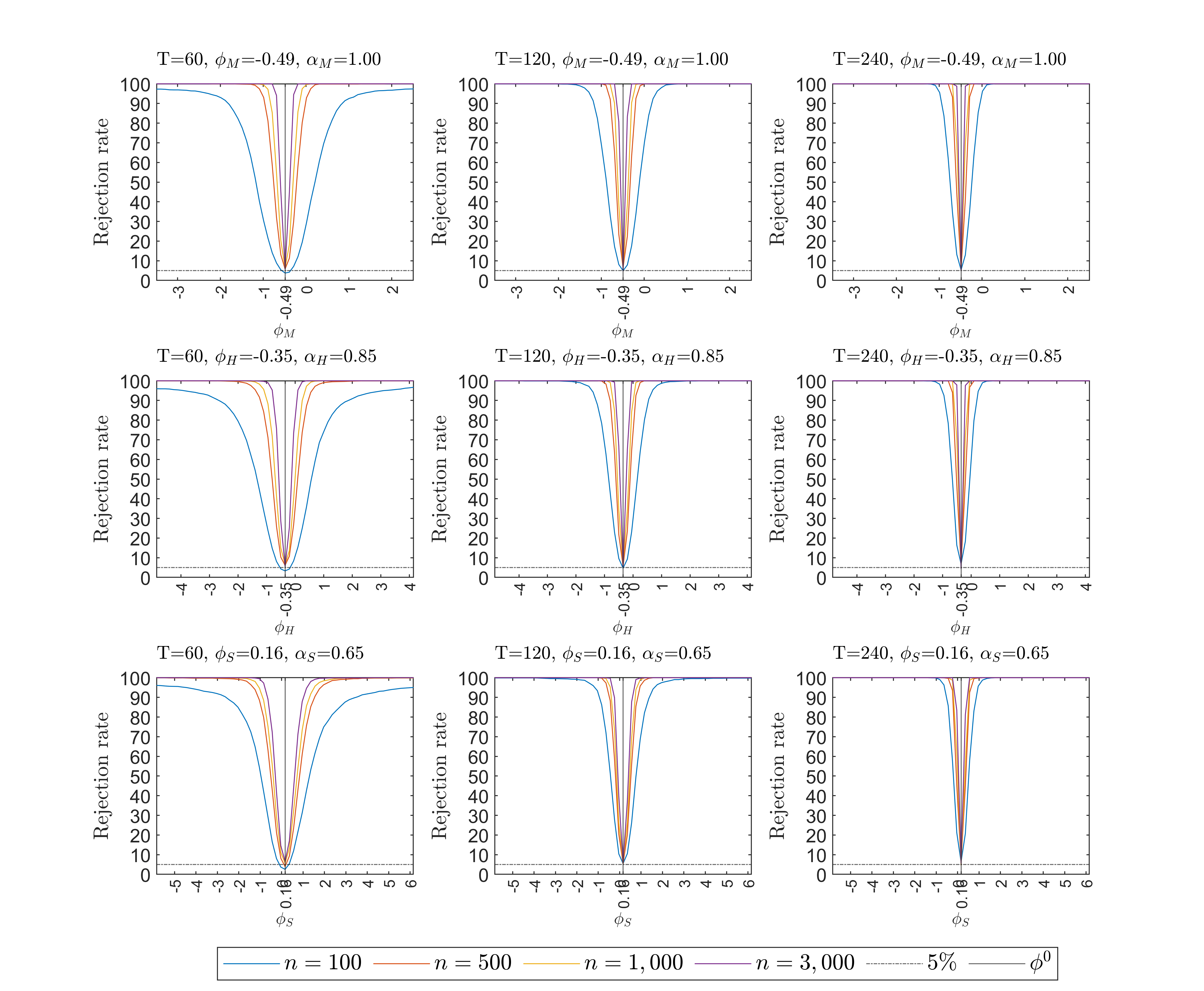}%
			\label{fig:s-a-e5}%
		\end{figure}
	}
\end{center}

{\footnotesize {Note: } See the notes to{ Table \ref{tab:s-a-e5}.}}

\pagebreak%

	\begin{table}[H]%

		\caption{Bias, RMSE and size for the two-step and bias-corrected (BC)
			estimators of $\phi$, for Experiment 6 with one strong and two semi-strong
			factors }\label{tab:s-a-e6}
		
		\begin{center}
			{\footnotesize
				\begin{tabular}
					[c]{rrrrrrrrrr}\hline\hline
					&  & \multicolumn{2}{c}{Bias(x100)} &  & \multicolumn{2}{c}{RMSE(x100)} &  &
					\multicolumn{2}{c}{Size(x100)}\\\cline{3-4}\cline{6-7}\cline{9-10}%
					$\phi_{M}$=$-0.49$, $\alpha_{M}$=$1$ & n & {Two Step} & BC &  & {Two Step} &
					BC &  & {Two Step} & BC\\\cline{2-4}\cline{6-7}\cline{9-10}%
					{$T= 60$} & {\ 100} & 0.49 & 0.05 &  & 26.69 & 103.65 &  & 12.25 & 3.95\\
					{} & {\ 500} & -0.17 & 0.24 &  & 17.27 & 14.81 &  & 28.25 & 5.95\\
					{} & {1,000} & -0.56 & 0.06 &  & 15.75 & 10.44 &  & 39.60 & 6.00\\
					{} & {3,000} & -0.93 & -0.05 &  & 14.85 & 5.82 &  & 60.75 & 5.80\\
					&  &  &  &  &  &  &  &  & \\
					{$T=120$} & {\ 100} & 0.22 & -0.76 &  & 17.65 & 33.47 &  & 7.45 & 5.15\\
					{} & {\ 500} & 0.20 & 0.22 &  & 9.54 & 9.32 &  & 15.30 & 6.95\\
					{} & {1,000} & -0.06 & 0.05 &  & 7.49 & 6.18 &  & 19.50 & 5.30\\
					{} & {3,000} & -0.40 & -0.09 &  & 6.36 & 3.54 &  & 37.05 & 5.70\\
					&  &  &  &  &  &  &  &  & \\
					{$T=240$} & {\ 100} & 0.11 & -0.30 &  & 12.57 & 13.63 &  & 7.55 & 6.10\\
					{} & {\ 500} & 0.33 & 0.11 &  & 5.91 & 5.88 &  & 8.60 & 5.10\\
					{} & {1,000} & 0.24 & 0.07 &  & 4.38 & 4.13 &  & 10.30 & 5.80\\
					{} & {3,000} & 0.03 & -0.03 &  & 3.09 & 2.37 &  & 19.25 & 5.45\\
					$\phi_{H}$=$-0.35$, $\alpha_{H}$=$0.85$ &  &  &  &  &  &  &  &  & \\
					{$T= 60$} & {\ 100} & 2.83 & -5.32 &  & 29.34 & 144.70 &  & 18.05 & 3.40\\
					{} & {\ 500} & 4.22 & -0.67 &  & 25.10 & 27.28 &  & 44.70 & 5.45\\
					{} & {1,000} & 4.62 & -0.03 &  & 25.04 & 19.02 &  & 56.30 & 4.80\\
					{} & {3,000} & 4.73 & 0.12 &  & 25.86 & 12.64 &  & 74.70 & 5.65\\
					&  &  &  &  &  &  &  &  & \\
					{$T=120$} & {\ 100} & 2.42 & -2.45 &  & 19.82 & 78.69 &  & 11.40 & 4.90\\
					{} & {\ 500} & 3.19 & -0.06 &  & 13.87 & 13.81 &  & 25.85 & 5.45\\
					{} & {1,000} & 3.28 & -0.00 &  & 13.65 & 10.44 &  & 40.80 & 6.35\\
					{} & {3,000} & 3.47 & 0.11 &  & 13.70 & 6.67 &  & 61.05 & 6.10\\
					&  &  &  &  &  &  &  &  & \\
					{$T=240$} & {\ 100} & 1.68 & -0.51 &  & 13.94 & 17.33 &  & 8.35 & 6.70\\
					{} & {\ 500} & 2.15 & -0.22 &  & 8.05 & 8.22 &  & 14.70 & 5.60\\
					{} & {1,000} & 2.26 & -0.20 &  & 7.23 & 6.14 &  & 21.75 & 5.65\\
					{} & {3,000} & 2.65 & 0.03 &  & 6.93 & 3.86 &  & 43.85 & 5.00\\
					$\phi_{S}$=$0.16$, $\alpha_{S}$=$0.65$ &  &  &  &  &  &  &  &  & \\
					{$T= 60$} & {\ 100} & -20.59 & 12.67 &  & 38.25 & 257.83 &  & 25.55 & 3.70\\
					{} & {\ 500} & -24.27 & 1.73 &  & 37.83 & 44.57 &  & 56.60 & 3.95\\
					{} & {1,000} & -25.78 & 1.10 &  & 38.79 & 34.52 &  & 68.85 & 5.30\\
					{} & {3,000} & -28.50 & 0.95 &  & 42.22 & 27.27 &  & 81.40 & 6.05\\
					&  &  &  &  &  &  &  &  & \\
					{$T=120$} & {\ 100} & -13.22 & 5.85 &  & 26.06 & 146.46 &  & 16.95 & 5.40\\
					{} & {\ 500} & -17.32 & 0.55 &  & 24.50 & 20.89 &  & 47.05 & 5.55\\
					{} & {1,000} & -19.04 & 0.58 &  & 25.64 & 17.05 &  & 62.35 & 5.95\\
					{} & {3,000} & -22.32 & 0.42 &  & 29.03 & 13.16 &  & 80.25 & 5.90\\
					&  &  &  &  &  &  &  &  & \\
					{$T=240$} & {\ 100} & -8.41 & 0.50 &  & 18.09 & 20.93 &  & 12.15 & 6.40\\
					{} & {\ 500} & -11.25 & 0.31 &  & 15.25 & 11.95 &  & 36.35 & 5.70\\
					{} & {1,000} & -12.80 & 0.44 &  & 15.77 & 9.26 &  & 55.35 & 5.05\\
					{} & {3,000} & -16.02 & 0.29 &  & 18.80 & 7.02 &  & 81.00 & 5.10\\\hline\hline
				\end{tabular}
			}
		\end{center}
		
		{\footnotesize Notes: The DGP for Experiment 6 allows for t(5) distributed
			errors, with GARCH effects, with pricing errors ($\alpha_{\eta}=0.3$), no
			missing factors, and without spatial/block error cross dependence. For further
			details of the experiments, see Table \ref{TabExperiments}.}
\end{table}%

\pagebreak

\begin{center}
	{\footnotesize
		\begin{figure}[h]%
			\centering
			\caption{Empirical Power Functions, experiment 6, for coefficient of the
				semi-strong factors}%
			\includegraphics[
			height=5.8608in,
			width=7.1122in
			]%
			{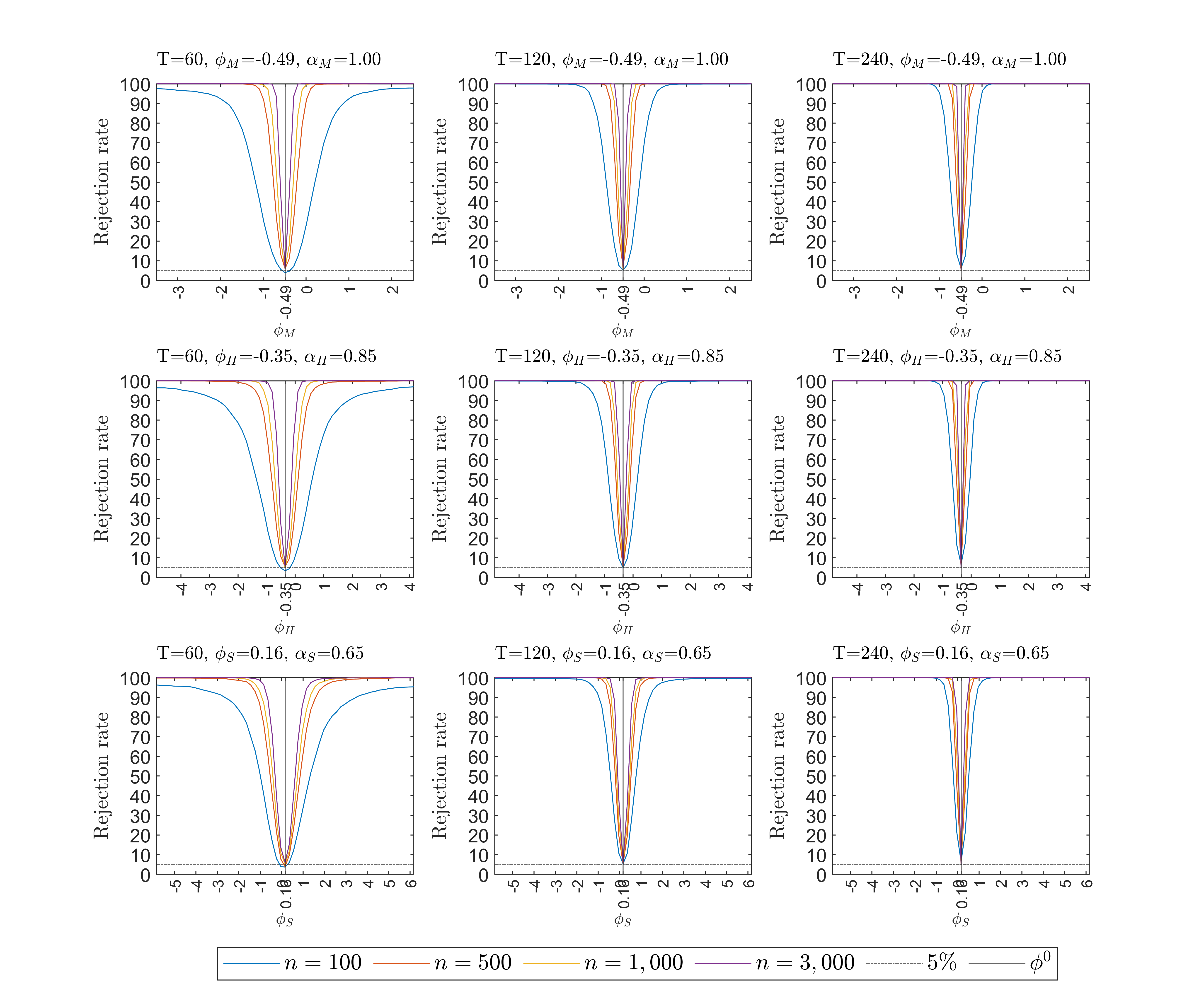}%
			\label{fig:s-a-e6}%
		\end{figure}
	}
\end{center}

{\footnotesize {Note: } See the notes to{ Table \ref{tab:s-a-e6}.}}

\pagebreak%

	\begin{table}[H]%

		\caption{Bias, RMSE and size for the two-step and bias-corrected (BC)
			estimators of $\phi$, for Experiment 7 with one strong and two semi-strong
			factors}\label{tab:s-a-e7}
		
		\begin{center}
			{\footnotesize
				\begin{tabular}
					[c]{rrrrrrrrrr}\hline\hline
					&  & \multicolumn{2}{c}{Bias(x100)} &  & \multicolumn{2}{c}{RMSE(x100)} &  &
					\multicolumn{2}{c}{Size(x100)}\\\cline{3-4}\cline{6-7}\cline{9-10}%
					$\phi_{M}$=$-0.49$, $\alpha_{M}$=$1$ & n & {Two Step} & BC &  & {Two Step} &
					BC &  & {Two Step} & BC\\\cline{2-4}\cline{6-7}\cline{9-10}%
					{$T= 60$} & {\ 100} & -0.33 & -14.31 &  & 26.54 & 555.11 &  & 11.80 & 3.20\\
					{} & {\ 500} & -0.50 & -0.18 &  & 17.67 & 15.13 &  & 29.80 & 6.40\\
					{} & {1,000} & -0.73 & -0.15 &  & 15.95 & 10.41 &  & 42.20 & 6.25\\
					{} & {3,000} & -0.91 & 0.04 &  & 14.87 & 5.81 &  & 61.20 & 5.90\\
					&  &  &  &  &  &  &  &  & \\
					{$T=120$} & {\ 100} & -0.13 & -0.49 &  & 17.44 & 20.30 &  & 7.55 & 5.05\\
					{} & {\ 500} & 0.01 & 0.01 &  & 9.49 & 8.86 &  & 14.20 & 5.85\\
					{} & {1,000} & -0.11 & 0.04 &  & 7.74 & 6.31 &  & 20.50 & 6.60\\
					{} & {3,000} & -0.27 & 0.08 &  & 6.40 & 3.60 &  & 36.65 & 5.65\\
					&  &  &  &  &  &  &  &  & \\
					{$T=240$} & {\ 100} & -0.00 & -0.39 &  & 12.82 & 14.02 &  & 7.30 & 6.75\\
					{} & {\ 500} & 0.15 & -0.08 &  & 6.14 & 6.13 &  & 9.85 & 6.95\\
					{} & {1,000} & 0.14 & -0.03 &  & 4.52 & 4.22 &  & 11.90 & 5.80\\
					{} & {3,000} & 0.07 & 0.01 &  & 3.05 & 2.32 &  & 18.30 & 5.15\\
					$\phi_{H}$=$-0.35$, $\alpha_{H}$=$0.85$ &  &  &  &  &  &  &  &  & \\
					{$T= 60$} & {\ 100} & 3.05 & -222.60 &  & 29.40 & 9435.95 &  & 17.55 & 2.70\\
					{} & {\ 500} & 4.42 & -0.75 &  & 25.03 & 26.57 &  & 46.25 & 4.95\\
					{} & {1,000} & 4.43 & -0.56 &  & 25.18 & 19.01 &  & 58.75 & 5.45\\
					{} & {3,000} & 4.63 & -0.15 &  & 25.85 & 12.14 &  & 76.15 & 5.75\\
					&  &  &  &  &  &  &  &  & \\
					{$T=120$} & {\ 100} & 2.45 & -0.89 &  & 19.84 & 28.94 &  & 10.55 & 4.95\\
					{} & {\ 500} & 3.24 & -0.00 &  & 13.84 & 13.44 &  & 26.75 & 5.95\\
					{} & {1,000} & 3.30 & 0.05 &  & 13.47 & 9.91 &  & 41.30 & 5.20\\
					{} & {3,000} & 3.29 & -0.25 &  & 13.67 & 6.56 &  & 62.45 & 5.75\\
					&  &  &  &  &  &  &  &  & \\
					{$T=240$} & {\ 100} & 1.67 & -0.55 &  & 13.68 & 16.80 &  & 7.45 & 6.10\\
					{} & {\ 500} & 2.28 & -0.04 &  & 8.00 & 8.00 &  & 13.35 & 5.50\\
					{} & {1,000} & 2.42 & -0.02 &  & 7.22 & 6.00 &  & 22.60 & 5.10\\
					{} & {3,000} & 2.54 & -0.13 &  & 6.85 & 3.82 &  & 44.00 & 5.25\\
					$\phi_{S}$=$0.16$, $\alpha_{S}$=$0.65$ &  &  &  &  &  &  &  &  & \\
					{$T= 60$} & {\ 100} & -20.23 & 62.14 &  & 37.82 & 2347.80 &  & 25.15 & 2.70\\
					{} & {\ 500} & -24.55 & 1.18 &  & 37.58 & 46.58 &  & 59.25 & 4.40\\
					{} & {1,000} & -26.04 & 0.76 &  & 38.91 & 33.25 &  & 69.25 & 5.25\\
					{} & {3,000} & -28.68 & 0.06 &  & 42.34 & 26.75 &  & 82.30 & 6.90\\
					&  &  &  &  &  &  &  &  & \\
					{$T=120$} & {\ 100} & -12.90 & 2.52 &  & 25.93 & 38.03 &  & 16.35 & 5.70\\
					{} & {\ 500} & -17.48 & 0.44 &  & 24.54 & 21.10 &  & 50.00 & 6.00\\
					{} & {1,000} & -19.25 & 0.09 &  & 25.81 & 16.81 &  & 63.35 & 5.45\\
					{} & {3,000} & -22.34 & 0.29 &  & 29.12 & 13.12 &  & 80.15 & 6.15\\
					&  &  &  &  &  &  &  &  & \\
					{$T=240$} & {\ 100} & -8.13 & 0.84 &  & 17.77 & 20.91 &  & 11.30 & 5.65\\
					{} & {\ 500} & -11.31 & 0.30 &  & 15.26 & 11.79 &  & 37.05 & 5.70\\
					{} & {1,000} & -12.84 & 0.38 &  & 15.96 & 9.45 &  & 56.40 & 5.15\\
					{} & {3,000} & -16.09 & 0.21 &  & 18.91 & 7.26 &  & 80.95 & 6.30\\\hline\hline
				\end{tabular}
			}
		\end{center}
		
		{\footnotesize Notes: The DGP for Experiment 7 allows for Gaussian errors,
			with GARCH effects, with pricing errors ($\alpha_{\eta}=0.3$), with one weak
			missing factor ($\alpha_{\gamma}=0.5$), and without spatial/block error cross
			dependence. For further details of the experiments, see \ref{TabExperiments}.
		}
\end{table}%

\pagebreak

\begin{center}
	{\footnotesize
		\begin{figure}[h]%
			\centering
			\caption{Empirical Power Functions, experiment 7 for coefficient of the
				semi-strong factors}%
			\includegraphics[
			height=5.8608in,
			width=7.1122in
			]%
			{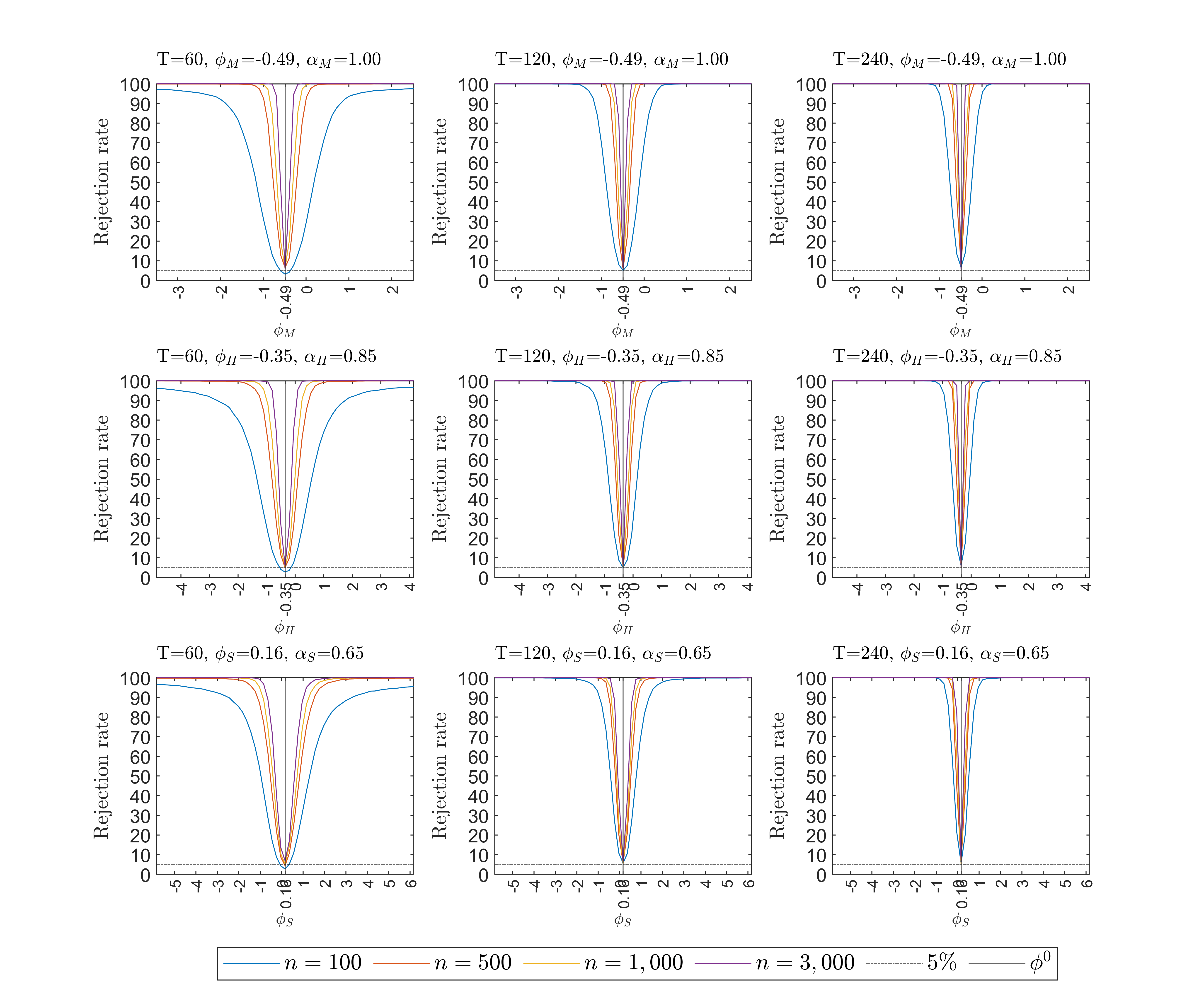}%
			\label{fig:s-a-e7}%
		\end{figure}
	}
\end{center}

{\footnotesize {Note: } See the notes to{ Table \ref{tab:s-a-e7}.}}

\pagebreak%

	\begin{table}[H]%

		\caption{Bias, RMSE and size for the two-step and bias-corrected (BC)
			estimators of $\phi$, for Experiment 8 with one strong and two semi-strong
			factors}\label{tab:s-a-e8}
		
		\begin{center}
			{\footnotesize
				\begin{tabular}
					[c]{rrrrrrrrrr}\hline\hline
					&  & \multicolumn{2}{c}{Bias(x100)} &  & \multicolumn{2}{c}{RMSE(x100)} &  &
					\multicolumn{2}{c}{Size(x100)}\\\cline{3-4}\cline{6-7}\cline{9-10}%
					$\phi_{M}$=$-0.49$, $\alpha_{M}$=$1$ & n & {Two Step} & BC &  & {Two Step} &
					BC &  & {Two Step} & BC\\\cline{2-4}\cline{6-7}\cline{9-10}%
					{$T= 60$} & {\ 100} & -0.27 & 9.06 &  & 27.12 & 251.34 &  & 11.45 & 3.75\\
					{} & {\ 500} & -0.74 & -0.67 &  & 17.81 & 16.71 &  & 29.75 & 6.30\\
					{} & {1,000} & -0.64 & 0.01 &  & 16.06 & 10.38 &  & 40.90 & 5.70\\
					{} & {3,000} & -0.85 & 0.12 &  & 14.90 & 5.96 &  & 59.65 & 6.15\\
					&  &  &  &  &  &  &  &  & \\
					{$T=120$} & {\ 100} & 0.04 & -0.32 &  & 17.49 & 20.53 &  & 7.35 & 4.75\\
					{} & {\ 500} & -0.06 & -0.04 &  & 9.49 & 8.89 &  & 13.95 & 5.80\\
					{} & {1,000} & 0.01 & 0.19 &  & 7.78 & 6.40 &  & 20.90 & 6.10\\
					{} & {3,000} & -0.24 & 0.14 &  & 6.39 & 3.63 &  & 37.25 & 5.90\\
					&  &  &  &  &  &  &  &  & \\
					{$T=240$} & {\ 100} & 0.09 & -0.29 &  & 12.84 & 14.07 &  & 7.10 & 6.35\\
					{} & {\ 500} & 0.20 & -0.01 &  & 6.08 & 6.06 &  & 9.50 & 6.05\\
					{} & {1,000} & 0.17 & 0.01 &  & 4.52 & 4.25 &  & 11.55 & 6.30\\
					{} & {3,000} & 0.06 & 0.01 &  & 3.07 & 2.35 &  & 18.15 & 5.15\\
					$\phi_{H}$=$-0.35$, $\alpha_{H}$=$0.85$ &  &  &  &  &  &  &  &  & \\
					{$T= 60$} & {\ 100} & 3.41 & 39.71 &  & 29.84 & 1913.29 &  & 17.65 & 3.15\\
					{} & {\ 500} & 4.52 & -1.03 &  & 25.21 & 30.79 &  & 44.70 & 4.95\\
					{} & {1,000} & 4.47 & -0.47 &  & 25.14 & 19.34 &  & 57.65 & 5.85\\
					{} & {3,000} & 4.59 & -0.21 &  & 25.88 & 12.19 &  & 75.20 & 5.45\\
					&  &  &  &  &  &  &  &  & \\
					{$T=120$} & {\ 100} & 2.68 & -0.54 &  & 20.19 & 28.91 &  & 11.20 & 5.05\\
					{} & {\ 500} & 3.38 & 0.31 &  & 13.97 & 13.39 &  & 26.60 & 5.35\\
					{} & {1,000} & 3.36 & 0.14 &  & 13.38 & 9.90 &  & 40.55 & 4.90\\
					{} & {3,000} & 3.33 & -0.18 &  & 13.64 & 6.52 &  & 61.95 & 5.20\\
					&  &  &  &  &  &  &  &  & \\
					{$T=240$} & {\ 100} & 1.77 & -0.40 &  & 13.95 & 17.05 &  & 7.45 & 6.00\\
					{} & {\ 500} & 2.38 & 0.12 &  & 8.01 & 7.92 &  & 14.30 & 4.70\\
					{} & {1,000} & 2.58 & 0.20 &  & 7.25 & 5.99 &  & 23.15 & 4.75\\
					{} & {3,000} & 2.54 & -0.13 &  & 6.81 & 3.84 &  & 42.85 & 4.95\\
					$\phi_{S}$=$0.16$, $\alpha_{S}$=$0.65$ &  &  &  &  &  &  &  &  & \\
					{$T= 60$} & {\ 100} & -20.01 & -141.72 &  & 37.89 & 5932.49 &  & 25.40 &
					3.15\\
					{} & {\ 500} & -24.52 & 3.22 &  & 37.63 & 77.87 &  & 57.05 & 4.45\\
					{} & {1,000} & -26.14 & 0.57 &  & 39.14 & 34.26 &  & 69.30 & 5.10\\
					{} & {3,000} & -28.54 & 0.84 &  & 42.33 & 26.81 &  & 81.00 & 5.85\\
					&  &  &  &  &  &  &  &  & \\
					{$T=120$} & {\ 100} & -13.01 & 2.45 &  & 26.01 & 36.96 &  & 16.80 & 5.60\\
					{} & {\ 500} & -17.52 & 0.34 &  & 24.66 & 20.94 &  & 48.35 & 6.55\\
					{} & {1,000} & -19.32 & -0.11 &  & 25.88 & 16.81 &  & 64.05 & 5.80\\
					{} & {3,000} & -22.35 & 0.26 &  & 29.14 & 13.22 &  & 79.70 & 6.20\\
					&  &  &  &  &  &  &  &  & \\
					{$T=240$} & {\ 100} & -8.14 & 0.79 &  & 17.57 & 20.56 &  & 11.15 & 5.85\\
					{} & {\ 500} & -11.42 & 0.12 &  & 15.36 & 11.78 &  & 37.65 & 5.55\\
					{} & {1,000} & -13.01 & 0.12 &  & 16.02 & 9.32 &  & 55.70 & 5.20\\
					{} & {3,000} & -16.10 & 0.20 &  & 18.91 & 7.40 &  & 80.10 & 6.70\\\hline\hline
				\end{tabular}
			}
		\end{center}
		
		{\footnotesize Notes: The DGP for Experiment 8 allows for t(5) distributed
			errors, with GARCH effects, with pricing errors ($\alpha_{\eta}=0.3$), with
			one weak missing factor ($\alpha_{\gamma}=0.5$), and without spatial/block
			error cross dependence. For further details of the experiments, see
			\ref{TabExperiments}.}%
		
\end{table}%

\pagebreak

\renewcommand{\thetable}{S-A-E8a}%

	\begin{table}[H]%

		\caption{Bias, RMSE and size for the two-step and bias-corrected (BC)
			estimators of $\phi$ for Experiment 8 with one strong and two semi-strong
			factors}\label{tab:s-a-e8a}
		
		\begin{center}
			{\footnotesize
				\begin{tabular}
					[c]{rrrrrrrrrr}\hline\hline
					&  & \multicolumn{2}{c}{Bias(x100)} &  & \multicolumn{2}{c}{RMSE(x100)} &  &
					\multicolumn{2}{c}{Size(x100)}\\\cline{3-4}\cline{6-7}\cline{9-10}%
					$\phi_{M}$=$-0.49$, $\alpha_{M}$=$1$ & n & Two Step & BC &  & Two Step & BC &
					& Two Step & BC\\\cline{2-4}\cline{6-7}\cline{9-10}%
					{$T=60$} & { 100} & -0.16 & 12.00 &  & 27.46 & 356.79 &  & 12.85 & 4.05\\
					& { 500} & -0.77 & -0.71 &  & 17.89 & 16.74 &  & 29.55 & 6.50\\
					& {1,000} & -0.66 & -0.02 &  & 16.09 & 10.45 &  & 40.95 & 5.70\\
					& {3,000} & -0.86 & 0.10 &  & 14.91 & 5.99 &  & 59.85 & 6.20\\
					&  &  &  &  &  &  &  &  & \\
					{$T=120$} & { 100} & 0.19 & -0.16 &  & 18.02 & 21.26 &  & 8.25 & 5.30\\
					& { 500} & -0.09 & -0.08 &  & 9.66 & 9.09 &  & 14.80 & 6.00\\
					& {1,000} & -0.01 & 0.16 &  & 7.84 & 6.51 &  & 20.95 & 6.70\\
					& {3,000} & -0.25 & 0.12 &  & 6.40 & 3.65 &  & 38.10 & 6.35\\
					&  &  &  &  &  &  &  &  & \\
					{$T=240$} & { 100} & 0.27 & -0.10 &  & 13.79 & 15.14 &  & 8.85 & 8.70\\
					& { 500} & 0.19 & -0.03 &  & 6.34 & 6.36 &  & 11.05 & 7.05\\
					& {1,000} & 0.14 & -0.03 &  & 4.63 & 4.40 &  & 12.05 & 7.50\\
					& {3,000} & 0.05 & -0.00 &  & 3.09 & 2.39 &  & 18.85 & 5.70\\
					$\phi_{H}$=$-0.35$, $\alpha_{H}$=$0.85$ &  &  &  &  &  &  &  &  & \\
					{$T=60$} & { 100} & 3.10 & 55.07 &  & 29.92 & 2828.62 &  & 18.85 & 3.10\\
					& { 500} & 4.53 & -0.98 &  & 25.32 & 30.87 &  & 45.10 & 5.45\\
					& {1,000} & 4.46 & -0.51 &  & 25.19 & 19.42 &  & 58.20 & 5.45\\
					& {3,000} & 4.59 & -0.21 &  & 25.88 & 12.26 &  & 75.20 & 5.75\\
					&  &  &  &  &  &  &  &  & \\
					{$T=120$} & { 100} & 2.39 & -0.98 &  & 20.59 & 29.45 &  & 11.90 & 5.85\\
					& { 500} & 3.39 & 0.32 &  & 14.17 & 13.75 &  & 27.35 & 5.45\\
					& {1,000} & 3.34 & 0.10 &  & 13.45 & 10.09 &  & 40.55 & 5.00\\
					& {3,000} & 3.33 & -0.18 &  & 13.63 & 6.58 &  & 62.30 & 5.20\\
					&  &  &  &  &  &  &  &  & \\
					{$T=240$} & { 100} & 1.52 & -0.70 &  & 14.82 & 18.29 &  & 10.60 & 8.55\\
					& { 500} & 2.38 & 0.12 &  & 8.27 & 8.27 &  & 15.65 & 6.05\\
					& {1,000} & 2.55 & 0.15 &  & 7.35 & 6.23 &  & 23.20 & 5.65\\
					& {3,000} & 2.54 & -0.13 &  & 6.83 & 3.90 &  & 42.75 & 5.95\\
					$\phi_{S}$=$0.16$, $\alpha_{S}$=$0.65$ &  &  &  &  &  &  &  &  & \\
					{$T=60$} & { 100} & -19.86 & -210.04 &  & 38.24 & 8784.76 &  & 26.15 & 3.40\\
					& { 500} & -24.49 & 3.17 &  & 37.68 & 75.99 &  & 56.80 & 4.95\\
					& {1,000} & -26.11 & 0.71 &  & 39.11 & 34.49 &  & 69.05 & 4.95\\
					& {3,000} & -28.55 & 0.78 &  & 42.34 & 26.88 &  & 81.20 & 6.05\\
					&  &  &  &  &  &  &  &  & \\
					{$T=120$} & { 100} & -12.88 & 2.68 &  & 26.48 & 38.62 &  & 17.60 & 5.85\\
					& { 500} & -17.41 & 0.56 &  & 24.62 & 21.19 &  & 47.90 & 6.25\\
					& {1,000} & -19.28 & -0.02 &  & 25.86 & 17.01 &  & 63.85 & 6.00\\
					& {3,000} & -22.36 & 0.23 &  & 29.18 & 13.36 &  & 79.65 & 6.75\\
					&  &  &  &  &  &  &  &  & \\
					{$T=240$} & { 100} & -8.03 & 0.96 &  & 18.42 & 22.23 &  & 13.25 & 7.35\\
					& { 500} & -11.34 & 0.24 &  & 15.47 & 12.20 &  & 36.80 & 6.50\\
					& {1,000} & -12.95 & 0.21 &  & 16.03 & 9.63 &  & 56.70 & 6.20\\
					& {3,000} & -16.10 & 0.19 &  & 18.96 & 7.59 &  & 80.25 & 7.25\\\hline\hline
				\end{tabular}
			}
		\end{center}
		
		{\footnotesize Notes: The DGP for Experiment 8 allows for t(5) distributed
			errors, with GARCH effects, with pricing errors ($\alpha_{\eta}=0.5$), with
			one weak missing factor ($\alpha_{\gamma}=0.5$), and without spatial/block
			error cross dependence. For further details of the experiments, see
			\ref{TabExperiments}.}%
		
\end{table}%

\pagebreak

{\footnotesize
	\begin{figure}[h]%
		\centering
		\caption{Empirical Power Functions, experiment 8 for coefficient of the
			semi-strong factors}%
		\includegraphics[
		height=5.8608in,
		width=7.1122in
		]%
		{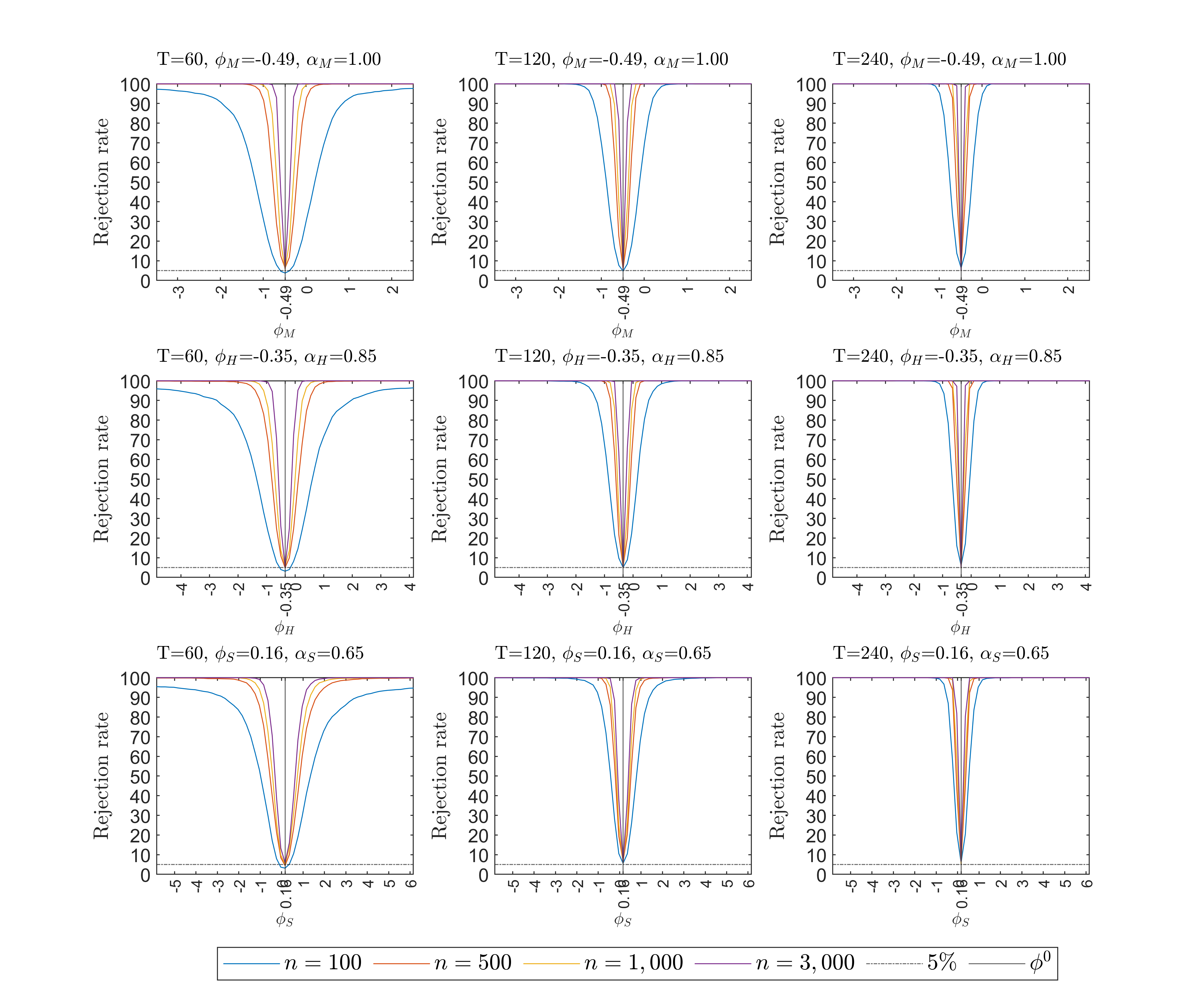}%
		\label{fig:s-a-e8}%
	\end{figure}
	{Note: } See the notes to{ Table \ref{tab:s-a-e8}.} }

{\footnotesize \pagebreak}

\renewcommand{\thefigure}{S-A-E8a}

{\footnotesize
	\begin{figure}[h]%
		\centering
		\caption{Empirical Power Functions, experiment 8a for coefficient of the
			semi-strong factors}%
		\includegraphics[
		height=5.8608in,
		width=7.1122in
		]%
		{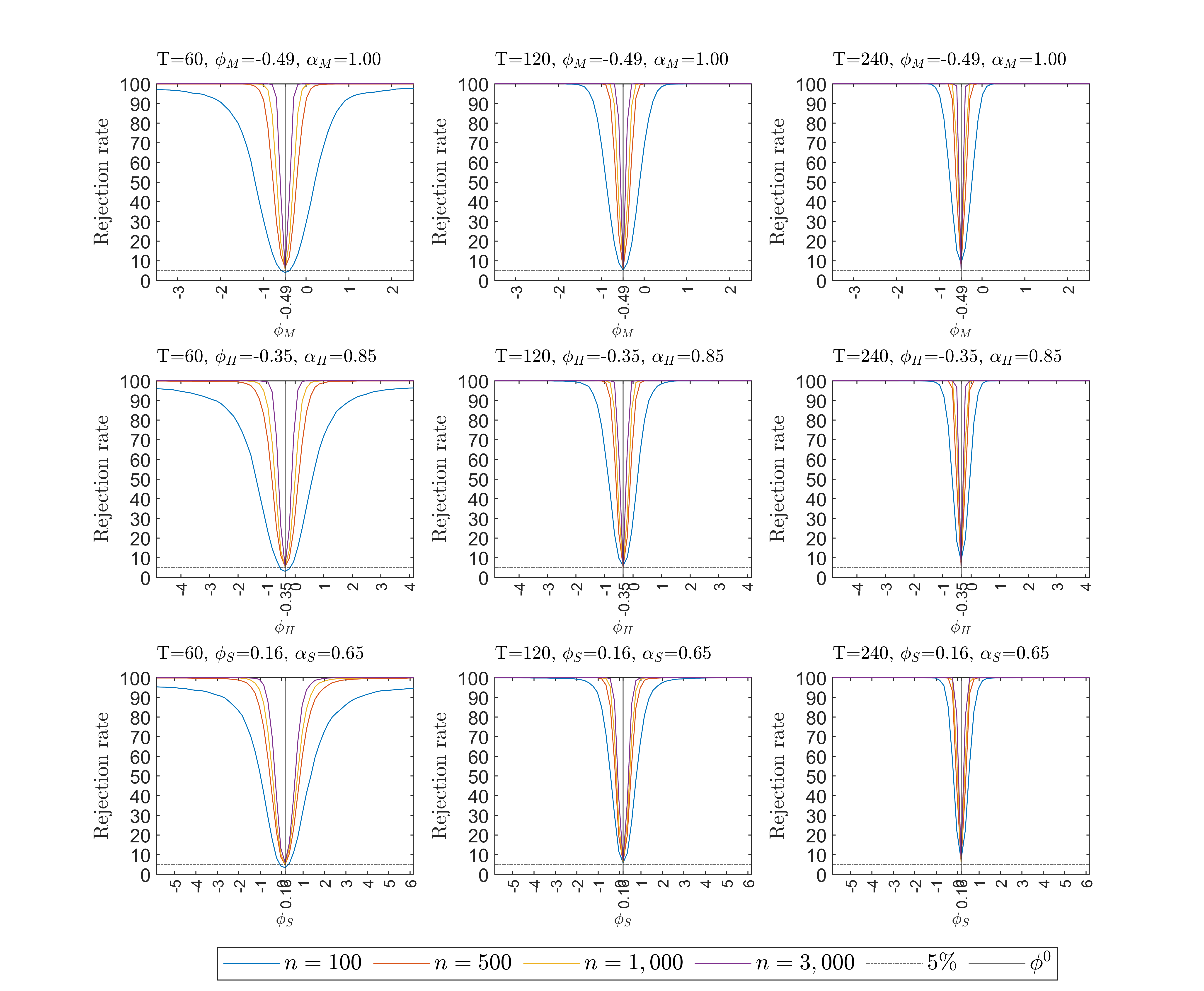}%
		\label{fig:s-a-e8a}%
	\end{figure}
	{Note: } See the notes to{ Table \ref{tab:s-a-e8a}.} }

{\footnotesize \pagebreak}

\renewcommand{\thetable}{S-A-E9}%

	\begin{table}[H]%

		\caption{Bias, RMSE and size for the two-step and bias-corrected (BC)
			estimators of $\phi$, for Experiment 9 with one strong and two semi-strong
			factors}\label{tab:s-a-e9}
		
		\begin{center}
			{\footnotesize
				\begin{tabular}
					[c]{rrrrrrrrrr}\hline\hline
					&  & \multicolumn{2}{c}{Bias(x100)} &  & \multicolumn{2}{c}{RMSE(x100)} &  &
					\multicolumn{2}{c}{Size(x100)}\\\cline{3-4}\cline{6-7}\cline{9-10}%
					$\phi_{M}$=$-0.49$, $\alpha_{M}$=$1$ & n & {Two Step} & BC &  & {Two Step} &
					BC &  & {Two Step} & BC\\\cline{2-4}\cline{6-7}\cline{9-10}%
					{$T= 60$} & {\ 100} & -0.51 & -15.47 &  & 27.16 & 662.68 &  & 11.80 & 3.40\\
					{} & {\ 500} & -0.55 & 1.29 &  & 18.02 & 84.39 &  & 31.25 & 7.75\\
					{} & {1,000} & -0.85 & -0.47 &  & 16.24 & 11.31 &  & 43.00 & 8.50\\
					{} & {3,000} & -0.99 & -0.16 &  & 14.97 & 6.25 &  & 61.30 & 7.75\\
					&  &  &  &  &  &  &  &  & \\
					{$T=120$} & {\ 100} & -0.17 & -0.59 &  & 17.92 & 21.49 &  & 7.15 & 4.50\\
					{} & {\ 500} & 0.03 & -0.01 &  & 9.70 & 9.20 &  & 13.95 & 6.15\\
					{} & {1,000} & -0.17 & -0.07 &  & 7.88 & 6.53 &  & 20.35 & 6.30\\
					{} & {3,000} & -0.33 & -0.01 &  & 6.45 & 3.71 &  & 36.90 & 6.50\\
					&  &  &  &  &  &  &  &  & \\
					{$T=240$} & {\ 100} & -0.12 & -0.54 &  & 12.73 & 14.02 &  & 6.80 & 6.85\\
					{} & {\ 500} & 0.13 & -0.12 &  & 6.18 & 6.19 &  & 9.65 & 6.55\\
					{} & {1,000} & 0.11 & -0.08 &  & 4.57 & 4.30 &  & 11.75 & 6.60\\
					{} & {3,000} & 0.06 & -0.00 &  & 3.09 & 2.37 &  & 17.90 & 5.45\\
					$\phi_{H}$=$-0.35$, $\alpha_{H}$=$0.85$ &  &  &  &  &  &  &  &  & \\
					{$T= 60$} & {\ 100} & 2.80 & 3.36 &  & 30.96 & 683.86 &  & 19.20 & 3.90\\
					{} & {\ 500} & 4.18 & 2.34 &  & 25.47 & 189.26 &  & 47.10 & 8.65\\
					{} & {1,000} & 4.41 & -0.83 &  & 25.43 & 22.18 &  & 59.30 & 9.35\\
					{} & {3,000} & 4.63 & -0.28 &  & 25.90 & 14.12 &  & 76.45 & 10.60\\
					&  &  &  &  &  &  &  &  & \\
					{$T=120$} & {\ 100} & 2.27 & -1.02 &  & 20.81 & 33.62 &  & 10.25 & 4.50\\
					{} & {\ 500} & 3.04 & -0.39 &  & 14.24 & 14.82 &  & 24.60 & 5.70\\
					{} & {1,000} & 3.17 & -0.24 &  & 13.77 & 11.21 &  & 38.65 & 6.10\\
					{} & {3,000} & 3.24 & -0.40 &  & 13.74 & 7.47 &  & 60.05 & 7.75\\
					&  &  &  &  &  &  &  &  & \\
					{$T=240$} & {\ 100} & 1.62 & -0.61 &  & 14.29 & 17.94 &  & 7.20 & 5.75\\
					{} & {\ 500} & 2.25 & -0.09 &  & 8.26 & 8.60 &  & 13.10 & 5.35\\
					{} & {1,000} & 2.36 & -0.10 &  & 7.47 & 6.57 &  & 21.10 & 5.65\\
					{} & {3,000} & 2.57 & -0.10 &  & 6.99 & 4.27 &  & 41.70 & 5.85\\
					$\phi_{S}$=$0.16$, $\alpha_{S}$=$0.65$ &  &  &  &  &  &  &  &  & \\
					{$T= 60$} & {\ 100} & -19.57 & 38.19 &  & 40.89 & 1634.08 &  & 26.25 & 4.75\\
					{} & {\ 500} & -24.60 & -17.77 &  & 38.37 & 944.75 &  & 59.35 & 10.05\\
					{} & {1,000} & -26.04 & 1.99 &  & 39.30 & 41.96 &  & 70.30 & 11.50\\
					{} & {3,000} & -28.69 & 0.77 &  & 42.55 & 32.04 &  & 82.85 & 11.70\\
					&  &  &  &  &  &  &  &  & \\
					{$T=120$} & {\ 100} & -12.73 & 4.51 &  & 29.32 & 56.35 &  & 14.45 & 5.50\\
					{} & {\ 500} & -17.45 & 0.82 &  & 25.67 & 26.62 &  & 44.50 & 8.70\\
					{} & {1,000} & -19.16 & 0.54 &  & 26.31 & 20.96 &  & 58.65 & 8.40\\
					{} & {3,000} & -22.27 & 0.58 &  & 29.33 & 16.06 &  & 77.55 & 8.95\\
					&  &  &  &  &  &  &  &  & \\
					{$T=240$} & {\ 100} & -8.13 & 1.07 &  & 21.26 & 26.72 &  & 10.60 & 6.50\\
					{} & {\ 500} & -11.34 & 0.35 &  & 16.58 & 15.28 &  & 31.25 & 7.40\\
					{} & {1,000} & -12.81 & 0.50 &  & 16.52 & 11.98 &  & 46.35 & 6.40\\
					{} & {3,000} & -16.03 & 0.35 &  & 19.08 & 9.09 &  & 75.35 & 7.25\\\hline\hline
				\end{tabular}
			}
		\end{center}
		
		{\footnotesize Notes: The DGP for Experiment 9 allows for Gaussian errors,
			with GARCH effects, with pricing errors ($\alpha_{\eta}=0.3$), with one weak
			missing factor ($\alpha_{\gamma}=0.5$), and with spatial error cross
			dependence ($\rho_{\varepsilon}=0.5$). For further details of the experiments,
			see Table \ref{TabExperiments}. }
\end{table}%

{\footnotesize \pagebreak}

\renewcommand{\thetable}{S-A-E9a}%

	\begin{table}[H]%

		\caption{Bias, RMSE and size for the two-step and bias-corrected (BC)
			estimators of $\phi$ for Experiment 9 with one strong and two semi-strong
			factors}\label{tab:s-a-e9a}
		
		\begin{center}
			{\footnotesize
				\begin{tabular}
					[c]{rrrrrrrrrr}\hline\hline
					&  & \multicolumn{2}{c}{Bias(x100)} &  & \multicolumn{2}{c}{RMSE(x100)} &  &
					\multicolumn{2}{c}{Size(x100)}\\\cline{3-4}\cline{6-7}\cline{9-10}%
					$\phi_{M}$=$-0.49$, $\alpha_{M}$=$1$ & n & Two Step & BC &  & Two Step & BC &
					& Two Step & BC\\\cline{2-4}\cline{6-7}\cline{9-10}%
					{$T=60$} & { 100} & -0.48 & 24.91 &  & 29.91 & 1802.71 &  & 10.25 & 3.05\\
					& { 500} & -0.58 & -1.22 &  & 19.37 & 92.46 &  & 26.95 & 6.75\\
					& {1,000} & -0.92 & -0.95 &  & 17.00 & 17.30 &  & 37.20 & 8.60\\
					& {3,000} & -1.03 & -0.35 &  & 15.24 & 8.41 &  & 55.50 & 8.65\\
					&  &  &  &  &  &  &  &  & \\
					{$T=120$} & { 100} & -0.02 & -1.97 &  & 19.95 & 35.08 &  & 8.60 & 6.20\\
					& { 500} & 0.03 & -0.10 &  & 10.58 & 10.55 &  & 13.75 & 6.10\\
					& {1,000} & -0.24 & -0.22 &  & 8.38 & 7.44 &  & 17.85 & 6.55\\
					& {3,000} & -0.40 & -0.12 &  & 6.67 & 4.26 &  & 33.75 & 6.90\\
					&  &  &  &  &  &  &  &  & \\
					{$T=240$} & { 100} & 0.02 & -0.44 &  & 13.81 & 15.39 &  & 8.65 & 7.40\\
					& { 500} & 0.12 & -0.14 &  & 6.62 & 6.70 &  & 10.25 & 6.35\\
					& {1,000} & 0.06 & -0.15 &  & 4.86 & 4.72 &  & 10.85 & 7.45\\
					& {3,000} & 0.04 & -0.03 &  & 3.26 & 2.62 &  & 16.95 & 6.45\\
					$\phi_{H}$=$-0.35$, $\alpha_{H}$=$0.85$ &  &  &  &  &  &  &  &  & \\
					{$T=60$} & { 100} & 1.63 & 29.05 &  & 36.11 & 1916.43 &  & 14.35 & 1.95\\
					& { 500} & 3.76 & 1.48 &  & 27.36 & 303.36 &  & 36.10 & 7.25\\
					& {1,000} & 4.26 & -1.84 &  & 26.21 & 39.83 &  & 48.45 & 8.20\\
					& {3,000} & 4.56 & -0.62 &  & 26.15 & 22.78 &  & 69.25 & 11.05\\
					&  &  &  &  &  &  &  &  & \\
					{$T=120$} & { 100} & 1.79 & -3.12 &  & 24.84 & 50.47 &  & 9.30 & 4.50\\
					& { 500} & 2.85 & -0.87 &  & 16.01 & 19.66 &  & 18.60 & 5.60\\
					& {1,000} & 3.03 & -0.57 &  & 14.77 & 15.08 &  & 28.80 & 6.90\\
					& {3,000} & 3.13 & -0.68 &  & 14.02 & 10.07 &  & 49.45 & 8.40\\
					&  &  &  &  &  &  &  &  & \\
					{$T=240$} & { 100} & 1.21 & -1.21 &  & 17.34 & 22.52 &  & 8.30 & 6.75\\
					& { 500} & 2.22 & -0.15 &  & 9.58 & 10.83 &  & 11.00 & 5.95\\
					& {1,000} & 2.27 & -0.24 &  & 8.46 & 8.56 &  & 18.20 & 6.15\\
					& {3,000} & 2.60 & -0.08 &  & 7.42 & 5.55 &  & 33.25 & 7.10\\
					$\phi_{S}$=$0.16$, $\alpha_{S}$=$0.65$ &  &  &  &  &  &  &  &  & \\
					{$T=60$} & { 100} & -18.49 & 110.60 &  & 50.00 & 7440.55 &  & 17.75 & 2.35\\
					& { 500} & -24.61 & 18.11 &  & 41.10 & 1183.88 &  & 44.00 & 6.75\\
					& {1,000} & -26.08 & 6.93 &  & 40.81 & 118.48 &  & 58.05 & 8.75\\
					& {3,000} & -28.71 & 2.78 &  & 43.19 & 66.65 &  & 75.45 & 11.30\\
					&  &  &  &  &  &  &  &  & \\
					{$T=120$} & { 100} & -12.19 & 13.59 &  & 37.47 & 192.49 &  & 10.25 & 4.45\\
					& { 500} & -17.28 & 2.13 &  & 28.78 & 40.29 &  & 30.50 & 8.05\\
					& {1,000} & -19.06 & 1.36 &  & 27.98 & 31.28 &  & 44.05 & 8.90\\
					& {3,000} & -22.25 & 1.02 &  & 30.06 & 23.69 &  & 66.95 & 8.90\\
					&  &  &  &  &  &  &  &  & \\
					{$T=240$} & { 100} & -8.15 & 1.74 &  & 28.35 & 38.76 &  & 8.85 & 5.65\\
					& { 500} & -11.28 & 0.68 &  & 19.74 & 22.51 &  & 19.25 & 7.65\\
					& {1,000} & -12.72 & 0.82 &  & 18.22 & 17.60 &  & 31.20 & 7.15\\
					& {3,000} & -15.99 & 0.51 &  & 19.76 & 13.26 &  & 62.25 & 8.00\\\hline\hline
				\end{tabular}
			}
		\end{center}
		
		{\footnotesize Notes: The DGP for Experiment 9 allows for Gaussian errors,
			with GARCH effects, with pricing errors ($\alpha_{\eta}=0.5$), with one weak
			missing factor ($\alpha_{\gamma}=0.5$), and with spatial error cross
			dependence ($\rho_{\varepsilon}=0.85$). For further details of the
			experiments, see Table \ref{TabExperiments}. }
\end{table}%

{\footnotesize \pagebreak}

\renewcommand{\thefigure}{S-A-E9} {\footnotesize
	\begin{figure}[h]%
		\centering
		\caption{Empirical Power Functions, experiment 9 for coefficient of the
			semi-strong factors}%
		\includegraphics[
		height=5.8608in,
		width=7.1122in
		]%
		{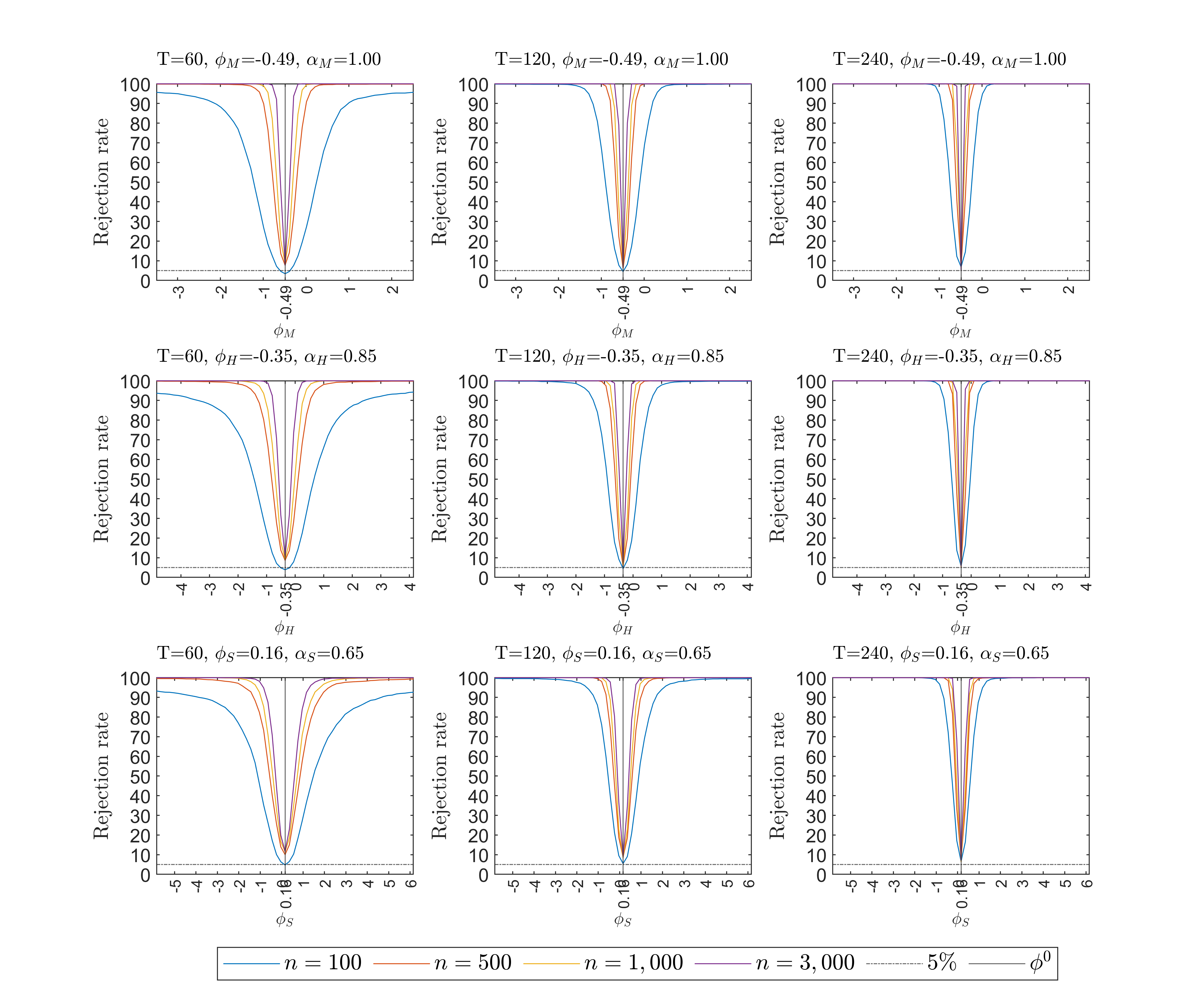}%
		\label{fig:s-a-e9}%
	\end{figure}
	{Note: } See the notes to{ Table \ref{tab:s-a-e9}.} }

{\footnotesize \pagebreak}

\renewcommand{\thefigure}{S-A-E9a}{\footnotesize
	\begin{figure}[h]%
		\centering
		\caption{Empirical Power Functions, experiment 9a for coefficient of the
			semi-strong factors}%
		\includegraphics[
		height=5.8608in,
		width=7.1122in
		]%
		{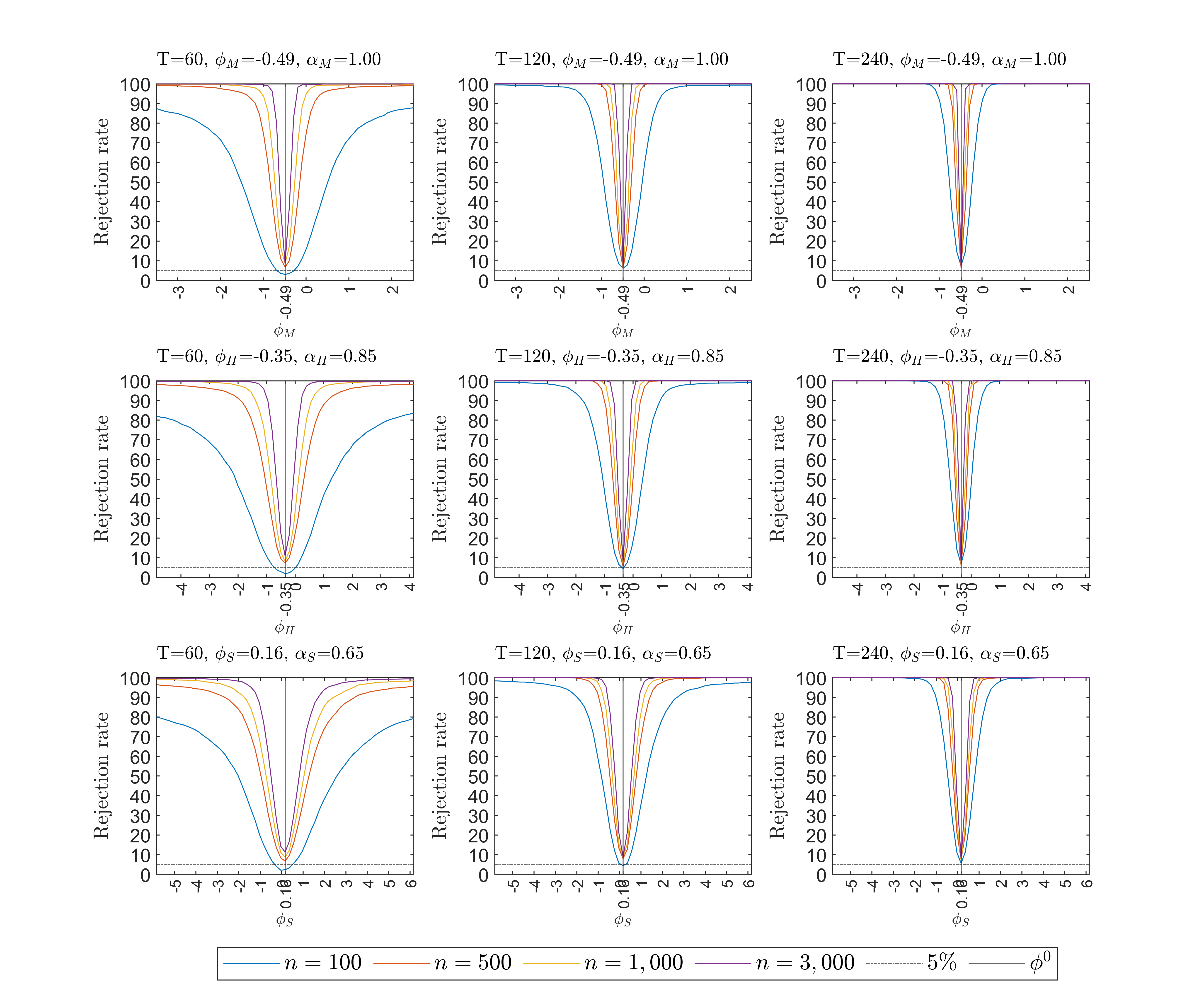}%
		\label{s-a-e9a}%
	\end{figure}
	{Note: } See the notes to{ Table \ref{tab:s-a-e9a}.} }

{\footnotesize \pagebreak}

\renewcommand{\thetable}{S-A-E10}%

	\begin{table}[H]%

		\caption{Bias, RMSE and size for the two-step and bias-corrected (BC)
			estimators of $\phi$, for Experiment 10 with one strong and two semi-strong
			factors}\label{tab:s-a-e10}
		
		\begin{center}
			{\footnotesize
				\begin{tabular}
					[c]{rrrrrrrrrr}\hline\hline
					&  & \multicolumn{2}{c}{Bias(x100)} &  & \multicolumn{2}{c}{RMSE(x100)} &  &
					\multicolumn{2}{c}{Size(x100)}\\\cline{3-4}\cline{6-7}\cline{9-10}%
					$\phi_{M}$=$-0.49$, $\alpha_{M}$=$1$ & n & {Two Step} & BC &  & {Two Step} &
					BC &  & {Two Step} & BC\\\cline{2-4}\cline{6-7}\cline{9-10}%
					{$T= 60$} & {\ 100} & -0.39 & -1.61 &  & 27.55 & 151.91 &  & 11.70 & 3.75\\
					{} & {\ 500} & -0.82 & -0.69 &  & 18.11 & 17.07 &  & 30.40 & 7.70\\
					{} & {1,000} & -0.73 & -0.31 &  & 16.33 & 11.30 &  & 42.50 & 7.05\\
					{} & {3,000} & -0.92 & -0.06 &  & 14.97 & 6.36 &  & 60.05 & 8.00\\
					&  &  &  &  &  &  &  &  & \\
					{$T=120$} & {\ 100} & 0.02 & -1.86 &  & 18.04 & 69.05 &  & 7.65 & 4.80\\
					{} & {\ 500} & -0.07 & -0.09 &  & 9.78 & 9.29 &  & 15.05 & 6.10\\
					{} & {1,000} & 0.01 & 0.15 &  & 7.90 & 6.62 &  & 19.70 & 6.75\\
					{} & {3,000} & -0.29 & 0.05 &  & 6.42 & 3.72 &  & 37.55 & 5.80\\
					&  &  &  &  &  &  &  &  & \\
					{$T=240$} & {\ 100} & 0.01 & -0.39 &  & 12.87 & 14.17 &  & 6.30 & 6.35\\
					{} & {\ 500} & 0.17 & -0.05 &  & 6.13 & 6.15 &  & 9.25 & 6.05\\
					{} & {1,000} & 0.17 & 0.00 &  & 4.55 & 4.30 &  & 11.30 & 6.40\\
					{} & {3,000} & 0.06 & 0.00 &  & 3.12 & 2.40 &  & 17.65 & 5.75\\
					$\phi_{H}$=$-0.35$, $\alpha_{H}$=$0.85$ &  &  &  &  &  &  &  &  & \\
					{$T= 60$} & {\ 100} & 3.25 & -8.09 &  & 31.16 & 504.30 &  & 19.35 & 3.65\\
					{} & {\ 500} & 4.29 & -1.32 &  & 25.64 & 33.61 &  & 46.05 & 8.10\\
					{} & {1,000} & 4.50 & -0.68 &  & 25.36 & 22.95 &  & 58.25 & 9.70\\
					{} & {3,000} & 4.57 & -0.37 &  & 25.89 & 14.12 &  & 76.05 & 9.50\\
					&  &  &  &  &  &  &  &  & \\
					{$T=120$} & {\ 100} & 2.48 & -1.13 &  & 21.16 & 32.42 &  & 11.15 & 4.50\\
					{} & {\ 500} & 3.15 & -0.09 &  & 14.31 & 14.78 &  & 25.55 & 5.70\\
					{} & {1,000} & 3.25 & -0.07 &  & 13.66 & 11.19 &  & 38.65 & 5.55\\
					{} & {3,000} & 3.31 & -0.26 &  & 13.70 & 7.36 &  & 60.50 & 6.70\\
					&  &  &  &  &  &  &  &  & \\
					{$T=240$} & {\ 100} & 1.78 & -0.40 &  & 14.58 & 18.18 &  & 8.25 & 6.05\\
					{} & {\ 500} & 2.35 & 0.08 &  & 8.31 & 8.60 &  & 13.35 & 5.20\\
					{} & {1,000} & 2.53 & 0.12 &  & 7.51 & 6.58 &  & 21.05 & 5.00\\
					{} & {3,000} & 2.57 & -0.10 &  & 6.96 & 4.28 &  & 41.35 & 6.00\\
					$\phi_{S}$=$0.16$, $\alpha_{S}$=$0.65$ &  &  &  &  &  &  &  &  & \\
					{$T= 60$} & {\ 100} & -19.35 & 25.35 &  & 40.86 & 579.34 &  & 25.25 & 4.60\\
					{} & {\ 500} & -24.56 & 2.56 &  & 38.40 & 68.65 &  & 57.50 & 9.85\\
					{} & {1,000} & -26.17 & 1.78 &  & 39.59 & 43.66 &  & 69.65 & 10.70\\
					{} & {3,000} & -28.53 & 1.55 &  & 42.55 & 32.38 &  & 81.95 & 11.90\\
					&  &  &  &  &  &  &  &  & \\
					{$T=120$} & {\ 100} & -12.71 & 10.19 &  & 29.68 & 304.30 &  & 14.70 & 6.05\\
					{} & {\ 500} & -17.42 & 0.86 &  & 25.76 & 26.78 &  & 43.95 & 8.75\\
					{} & {1,000} & -19.30 & 0.15 &  & 26.41 & 20.74 &  & 59.65 & 8.50\\
					{} & {3,000} & -22.31 & 0.47 &  & 29.35 & 16.12 &  & 77.35 & 8.95\\
					&  &  &  &  &  &  &  &  & \\
					{$T=240$} & {\ 100} & -8.09 & 1.05 &  & 21.11 & 26.50 &  & 10.55 & 6.65\\
					{} & {\ 500} & -11.52 & 0.08 &  & 16.67 & 15.20 &  & 30.35 & 7.15\\
					{} & {1,000} & -12.98 & 0.22 &  & 16.56 & 11.84 &  & 47.50 & 6.80\\
					{} & {3,000} & -16.05 & 0.32 &  & 19.11 & 9.25 &  & 75.50 & 8.00\\\hline\hline
				\end{tabular}
			}
		\end{center}
		
		{\footnotesize Notes: The DGP for Experiment 10 allows for t(5) distributed
			errors, with GARCH effects, with pricing errors ($\alpha_{\eta}=0.3$), with
			one weak missing factor ($\alpha_{\gamma}=0.5$), and with spatial error cross
			dependence ($\rho_{\varepsilon}=0.5$). For further details of the experiments,
			see Table \ref{TabExperiments}.}
\end{table}%

{\footnotesize \pagebreak}

\renewcommand{\thefigure}{S-A-E10}{\footnotesize
	\begin{figure}[h]%
		\centering
		\caption{Empirical Power Functions, experiment 10 for coefficient of the
			semi-strong factors}%
		\includegraphics[
		height=5.8608in,
		width=7.1122in
		]%
		{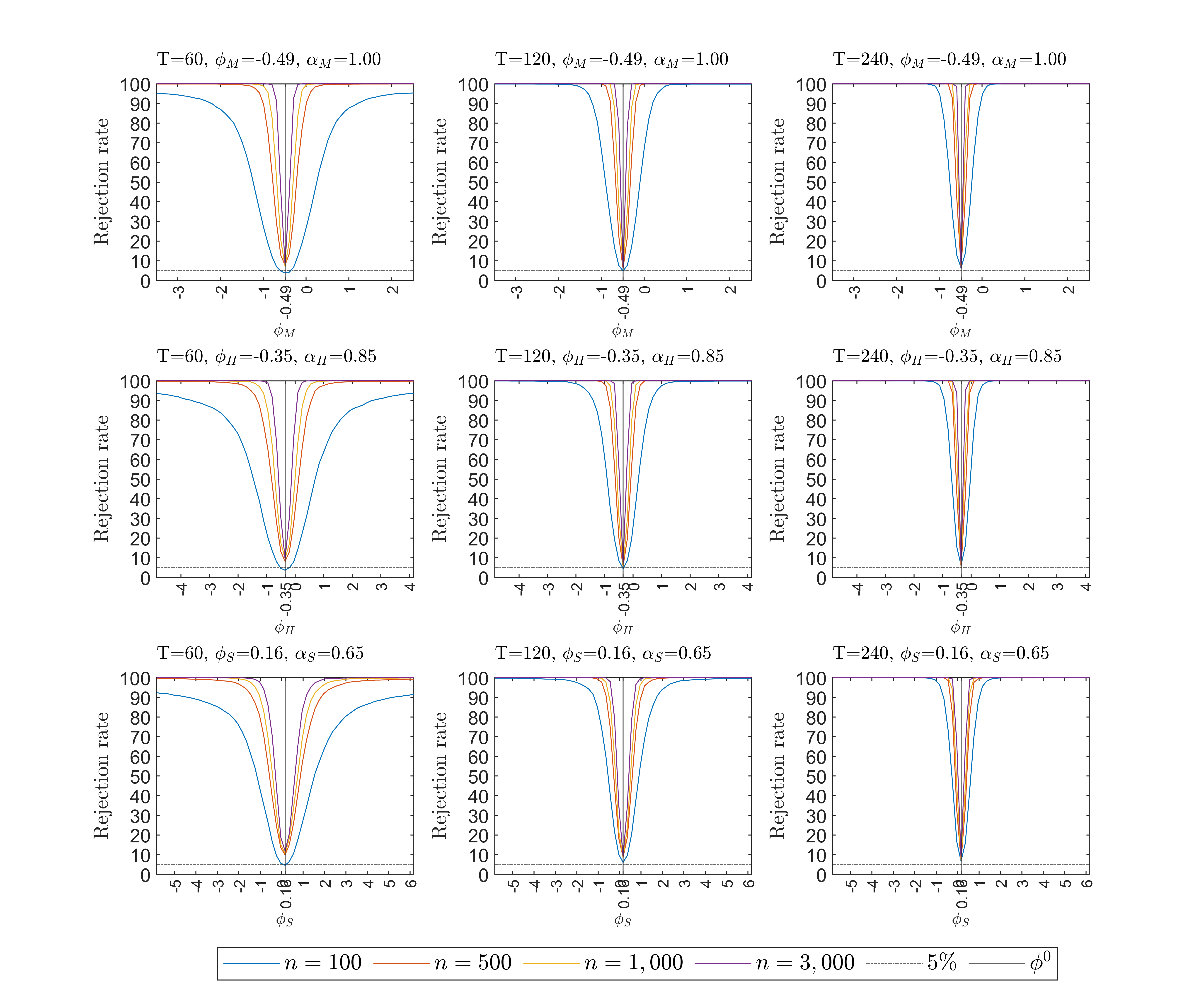}%
		\label{fig:s-a-e10}%
	\end{figure}
	{Note: } See the notes to{ Table \ref{tab:s-a-e10}.} }

{\footnotesize \pagebreak}

\renewcommand{\thetable}{S-A-E11}%

	\begin{table}[H]%

		\caption{Bias, RMSE and size for the two-step and bias-corrected (BC)
			estimators of $\phi$, for Experiment 11 with one strong and two semi-strong
			factors }\label{tab:s-a-e11}
		
		\begin{center}
			{\footnotesize
				\begin{tabular}
					[c]{rrrrrrrrrr}\hline\hline
					&  & \multicolumn{2}{c}{Bias(x100)} &  & \multicolumn{2}{c}{RMSE(x100)} &  &
					\multicolumn{2}{c}{Size(x100)}\\\cline{3-4}\cline{6-7}\cline{9-10}%
					$\phi_{M}$=$-0.49$, $\alpha_{M}$=$1$ & n & {Two Step} & BC &  & {Two Step} &
					BC &  & {Two Step} & BC\\\cline{2-4}\cline{6-7}\cline{9-10}%
					{$T= 60$} & {\ 100} & -0.05 & -0.72 &  & 26.31 & 47.59 &  & 11.30 & 3.40\\
					{} & {\ 500} & -0.33 & -0.07 &  & 17.63 & 15.12 &  & 29.40 & 6.70\\
					{} & {1,000} & -0.71 & -0.22 &  & 16.06 & 10.46 &  & 41.95 & 6.35\\
					{} & {3,000} & -0.85 & 0.05 &  & 14.87 & 5.91 &  & 60.85 & 7.00\\
					&  &  &  &  &  &  &  &  & \\
					{$T=120$} & {\ 100} & 0.10 & -0.46 &  & 17.08 & 19.81 &  & 7.15 & 4.55\\
					{} & {\ 500} & 0.13 & 0.10 &  & 9.51 & 8.93 &  & 13.90 & 6.40\\
					{} & {1,000} & -0.11 & -0.01 &  & 7.80 & 6.28 &  & 20.65 & 6.15\\
					{} & {3,000} & -0.21 & 0.09 &  & 6.41 & 3.60 &  & 37.25 & 6.20\\
					&  &  &  &  &  &  &  &  & \\
					{$T=240$} & {\ 100} & 0.20 & -0.24 &  & 12.33 & 13.43 &  & 6.15 & 5.30\\
					{} & {\ 500} & 0.24 & 0.01 &  & 6.12 & 6.11 &  & 10.05 & 6.75\\
					{} & {1,000} & 0.13 & -0.07 &  & 4.52 & 4.19 &  & 11.75 & 5.85\\
					{} & {3,000} & 0.11 & 0.02 &  & 3.05 & 2.32 &  & 18.35 & 4.75\\
					$\phi_{H}$=$-0.35$, $\alpha_{H}$=$0.85$ &  &  &  &  &  &  &  &  & \\
					{$T= 60$} & {\ 100} & 2.50 & -2.25 &  & 29.66 & 103.01 &  & 18.75 & 3.35\\
					{} & {\ 500} & 3.74 & -0.07 &  & 24.90 & 25.61 &  & 45.55 & 5.45\\
					{} & {1,000} & 3.60 & -0.52 &  & 24.93 & 18.83 &  & 59.20 & 6.05\\
					{} & {3,000} & 3.98 & 0.01 &  & 25.84 & 12.59 &  & 75.45 & 7.90\\
					&  &  &  &  &  &  &  &  & \\
					{$T=120$} & {\ 100} & 1.81 & -0.44 &  & 19.83 & 27.88 &  & 10.00 & 4.95\\
					{} & {\ 500} & 2.49 & 0.09 &  & 13.69 & 13.12 &  & 26.00 & 6.15\\
					{} & {1,000} & 2.45 & 0.01 &  & 13.32 & 10.12 &  & 40.65 & 6.00\\
					{} & {3,000} & 2.63 & 0.01 &  & 13.61 & 6.86 &  & 62.05 & 7.20\\
					&  &  &  &  &  &  &  &  & \\
					{$T=240$} & {\ 100} & 0.97 & -0.57 &  & 13.36 & 16.38 &  & 6.90 & 5.55\\
					{} & {\ 500} & 1.60 & -0.05 &  & 7.76 & 8.01 &  & 13.50 & 5.40\\
					{} & {1,000} & 1.72 & 0.01 &  & 6.95 & 5.93 &  & 20.35 & 5.20\\
					{} & {3,000} & 1.85 & -0.07 &  & 6.58 & 3.89 &  & 41.35 & 5.95\\
					$\phi_{S}$=$0.16$, $\alpha_{S}$=$0.65$ &  &  &  &  &  &  &  &  & \\
					{$T= 60$} & {\ 100} & -20.60 & -0.57 &  & 38.15 & 212.41 &  & 27.80 & 2.75\\
					{} & {\ 500} & -23.77 & 2.07 &  & 36.92 & 42.46 &  & 58.25 & 5.55\\
					{} & {1,000} & -25.88 & 1.00 &  & 39.00 & 34.14 &  & 69.55 & 6.15\\
					{} & {3,000} & -28.31 & 0.81 &  & 41.99 & 28.50 &  & 82.20 & 7.90\\
					&  &  &  &  &  &  &  &  & \\
					{$T=120$} & {\ 100} & -12.85 & 2.34 &  & 26.13 & 35.00 &  & 17.60 & 4.80\\
					{} & {\ 500} & -16.99 & 0.79 &  & 23.96 & 20.38 &  & 47.50 & 5.65\\
					{} & {1,000} & -19.15 & 0.29 &  & 25.77 & 16.81 &  & 64.20 & 5.85\\
					{} & {3,000} & -21.97 & 0.81 &  & 28.73 & 13.50 &  & 80.80 & 7.00\\
					&  &  &  &  &  &  &  &  & \\
					{$T=240$} & {\ 100} & -7.60 & 1.18 &  & 17.18 & 20.04 &  & 10.50 & 5.15\\
					{} & {\ 500} & -10.88 & 0.56 &  & 14.80 & 11.65 &  & 35.40 & 5.55\\
					{} & {1,000} & -12.61 & 0.61 &  & 15.89 & 9.46 &  & 55.65 & 5.80\\
					{} & {3,000} & -15.72 & 0.52 &  & 18.57 & 7.12 &  & 80.50 & 5.35\\\hline\hline
				\end{tabular}
			}
		\end{center}
		
		{\footnotesize Notes: The DGP for Experiment 11 allows for Gaussian errors,
			with GARCH effects, with pricing errors ($\alpha_{\eta}=0.3$), with one weak
			missing factor ($\alpha_{\gamma}=0.5$), and with block error cross dependence.
			For further details of the experiments, see Table \ref{TabExperiments}. }
\end{table}%

{\footnotesize \pagebreak}

\renewcommand{\thefigure}{S-A-E11} {\footnotesize
	\begin{figure}[h]%
		\centering
		\caption{Empirical Power Functions, experiment 11 for coefficient of the
			semi-strong factors}%
		\includegraphics[
		height=5.8608in,
		width=7.1122in
		]%
		{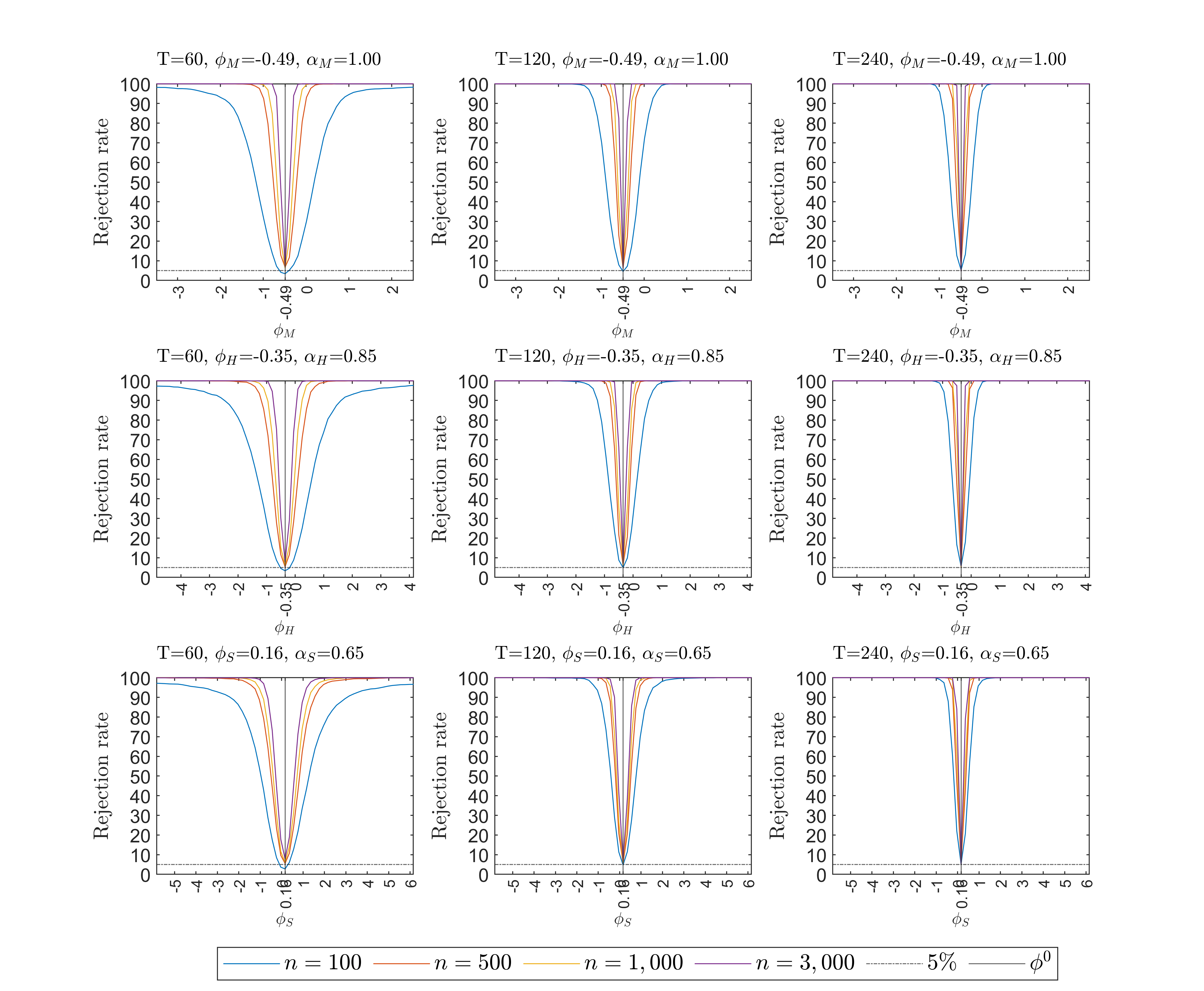}%
		\label{fig:s-a-e11}%
	\end{figure}
	{Note: } See the notes to{ Table \ref{tab:s-a-e11}.} }

{\footnotesize \pagebreak}

\renewcommand{\thetable}{S-A-E12}%

	\begin{table}[H]%

		\caption{Bias, RMSE and size for the two-step and bias-corrected (BC)
			estimators of $\phi$, for Experiment 12 with one strong and two semi-strong
			factors }\label{tab:s-a-e12}
		
		\begin{center}
			{\footnotesize
				\begin{tabular}
					[c]{rrrrrrrrrr}\hline\hline
					&  & \multicolumn{2}{c}{Bias(x100)} &  & \multicolumn{2}{c}{RMSE(x100)} &  &
					\multicolumn{2}{c}{Size(x100)}\\\cline{3-4}\cline{6-7}\cline{9-10}%
					$\phi_{M}$=$-0.49$, $\alpha_{M}$=$1$ & n & {Two Step} & BC &  & {Two Step} &
					BC &  & {Two Step} & BC\\\cline{2-4}\cline{6-7}\cline{9-10}%
					{$T= 60$} & {\ 100} & 0.09 & -0.14 &  & 27.00 & 129.71 &  & 11.65 & 4.10\\
					{} & {\ 500} & -0.57 & -0.25 &  & 17.72 & 15.76 &  & 29.85 & 6.10\\
					{} & {1,000} & -0.60 & -0.04 &  & 16.17 & 10.48 &  & 41.65 & 5.65\\
					{} & {3,000} & -0.80 & 0.10 &  & 14.89 & 6.07 &  & 60.25 & 6.75\\
					&  &  &  &  &  &  &  &  & \\
					{$T=120$} & {\ 100} & 0.22 & -0.35 &  & 17.19 & 20.06 &  & 6.90 & 4.15\\
					{} & {\ 500} & 0.07 & 0.06 &  & 9.47 & 8.91 &  & 13.70 & 5.40\\
					{} & {1,000} & 0.04 & 0.16 &  & 7.81 & 6.35 &  & 20.50 & 6.00\\
					{} & {3,000} & -0.18 & 0.14 &  & 6.40 & 3.62 &  & 37.75 & 5.60\\
					&  &  &  &  &  &  &  &  & \\
					{$T=240$} & {\ 100} & 0.30 & -0.13 &  & 12.34 & 13.45 &  & 6.50 & 5.90\\
					{} & {\ 500} & 0.30 & 0.07 &  & 6.02 & 6.01 &  & 8.90 & 6.30\\
					{} & {1,000} & 0.17 & -0.02 &  & 4.50 & 4.21 &  & 11.40 & 5.80\\
					{} & {3,000} & 0.10 & 0.02 &  & 3.09 & 2.35 &  & 18.15 & 5.05\\
					$\phi_{H}$=$-0.35$, $\alpha_{H}$=$0.85$ &  &  &  &  &  &  &  &  & \\
					{$T= 60$} & {\ 100} & 2.62 & 9.84 &  & 29.60 & 432.72 &  & 18.55 & 2.75\\
					{} & {\ 500} & 3.90 & 0.37 &  & 25.08 & 28.76 &  & 45.25 & 5.35\\
					{} & {1,000} & 3.57 & -0.62 &  & 24.92 & 19.22 &  & 58.20 & 5.95\\
					{} & {3,000} & 3.94 & -0.10 &  & 25.92 & 12.64 &  & 74.50 & 6.90\\
					&  &  &  &  &  &  &  &  & \\
					{$T=120$} & {\ 100} & 1.82 & -0.38 &  & 19.87 & 27.99 &  & 9.25 & 4.65\\
					{} & {\ 500} & 2.59 & 0.31 &  & 13.82 & 13.27 &  & 25.70 & 5.85\\
					{} & {1,000} & 2.45 & 0.02 &  & 13.31 & 10.26 &  & 39.70 & 5.95\\
					{} & {3,000} & 2.70 & 0.11 &  & 13.61 & 6.85 &  & 61.20 & 7.15\\
					&  &  &  &  &  &  &  &  & \\
					{$T=240$} & {\ 100} & 0.92 & -0.65 &  & 13.61 & 16.79 &  & 7.15 & 4.90\\
					{} & {\ 500} & 1.65 & 0.02 &  & 7.79 & 8.06 &  & 12.50 & 5.05\\
					{} & {1,000} & 1.77 & 0.07 &  & 7.03 & 6.03 &  & 21.70 & 5.55\\
					{} & {3,000} & 1.87 & -0.05 &  & 6.57 & 3.93 &  & 41.05 & 6.05\\
					$\phi_{S}$=$0.16$, $\alpha_{S}$=$0.65$ &  &  &  &  &  &  &  &  & \\
					{$T= 60$} & {\ 100} & -20.25 & 1.01 &  & 38.24 & 407.11 &  & 26.15 & 3.00\\
					{} & {\ 500} & -23.75 & 1.77 &  & 36.94 & 55.06 &  & 55.90 & 5.30\\
					{} & {1,000} & -26.03 & 0.80 &  & 39.26 & 35.51 &  & 69.75 & 6.50\\
					{} & {3,000} & -28.18 & 1.33 &  & 41.98 & 27.55 &  & 81.60 & 7.85\\
					&  &  &  &  &  &  &  &  & \\
					{$T=120$} & {\ 100} & -13.05 & 2.21 &  & 26.45 & 36.34 &  & 18.15 & 5.05\\
					{} & {\ 500} & -16.97 & 0.78 &  & 24.00 & 20.16 &  & 47.40 & 5.40\\
					{} & {1,000} & -19.18 & 0.24 &  & 25.77 & 17.03 &  & 64.35 & 6.10\\
					{} & {3,000} & -21.96 & 0.84 &  & 28.73 & 13.65 &  & 80.55 & 7.20\\
					&  &  &  &  &  &  &  &  & \\
					{$T=240$} & {\ 100} & -7.73 & 1.03 &  & 17.33 & 20.17 &  & 11.10 & 5.40\\
					{} & {\ 500} & -10.89 & 0.50 &  & 14.85 & 11.67 &  & 34.35 & 5.40\\
					{} & {1,000} & -12.73 & 0.44 &  & 15.92 & 9.49 &  & 54.85 & 5.60\\
					{} & {3,000} & -15.75 & 0.49 &  & 18.58 & 7.25 &  & 80.80 & 5.65\\\hline\hline
				\end{tabular}
			}
		\end{center}
		
		{\footnotesize Notes: The DGP for Experiment 12 allows for t(5) distributed
			errors, with GARCH effects, with pricing errors ($\alpha_{\eta}=0.3$), with
			one weak missing factor ($\alpha_{\gamma}=0.5$), and with block error cross
			dependence. For further details of the experiments, see \ref{TabExperiments}.
		}
\end{table}%

{\footnotesize \pagebreak}

\renewcommand{\thefigure}{S-A-E12}{\footnotesize
	\begin{figure}[h]%
		\centering
		\caption{Empirical Power Functions, experiment 12 for coefficient of the
			semi-strong factors}%
		\includegraphics[
		height=5.8608in,
		width=7.1122in
		]%
		{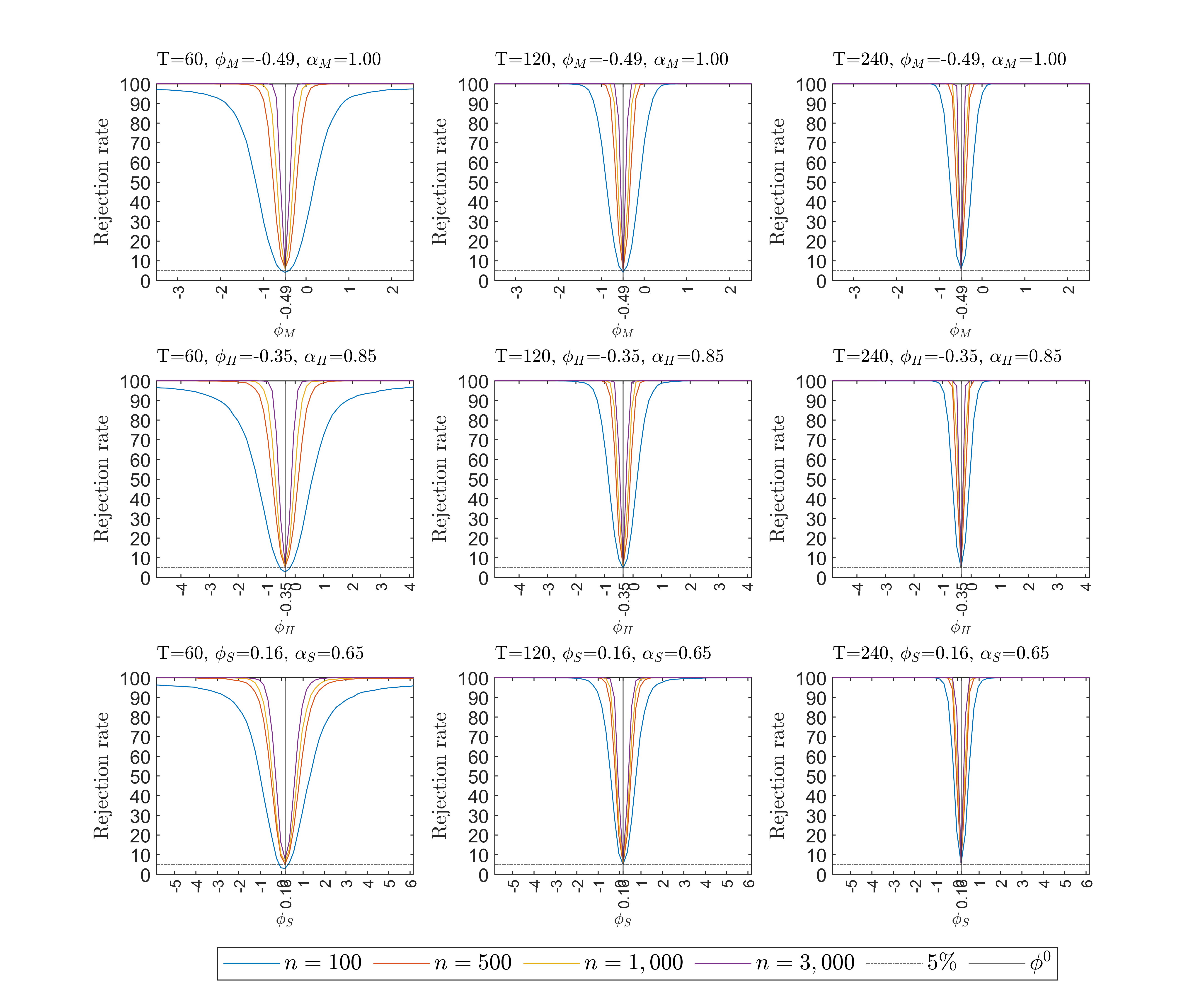}%
		\label{fig:s-a-e12}%
	\end{figure}
	{Note: } See the notes to{ Table \ref{tab:s-a-e12}.} }

{\footnotesize \pagebreak}

\subsection{Estimators of $\phi$ for one strong and two semi-strong factors,
	threshold estimator of the covariance matrix with and without
	misspecification}

This subsection examines the effects of either erroneously omitting
semi-strong factors or appropriately including semi-strong factors on the
small sample characteristics of the bias-corrected (BC) estimator of $\phi
_{M}$, the coefficient associated with the strong factor. Each table presents
results from three of the 12 experiments detailed in Table S-1, along with
their respective empirical power functions.

\setcounter{figure}{0} \renewcommand{\thetable}{S-B-E1-3} \renewcommand{\thefigure}{S-B-E\arabic{figure}}%

	\begin{table}[H]%

		\caption{Bias, RMSE and size for the bias-corrected estimators of $\phi_{M}$=$-0.49$, $\alpha_{M}$=$1$ with and without semi-strong factors $\left(
			\alpha_{H}=0.85,\alpha_{S}=0.65\right)  $ included in the regression for the
			cases of experiments 1, 2 and 3}\label{tab:s-b1-3}
		
		\begin{center}
			{\footnotesize
				\begin{tabular}
					[c]{rrrrrrrrrr}\hline\hline
					&  & \multicolumn{2}{c}{Bias(x100)} &  & \multicolumn{2}{c}{RMSE(x100)} &  &
					\multicolumn{2}{c}{Size(x100)}\\\cline{3-4}\cline{6-7}\cline{9-10}%
					Experiment 1 & n & With & Without &  & With & Without &  & With &
					Without\\\cline{2-4}\cline{6-7}\cline{9-10}
					&  & \multicolumn{2}{c}{semi-strong factors} &  &
					\multicolumn{2}{c}{semi-strong factors} &  & \multicolumn{2}{c}{semi-strong
						factors}\\\cline{2-4}\cline{6-7}\cline{9-10}%
					{$T=60$} & {\ 100} & -1.55 & 5.63 &  & 86.14 & 31.56 &  & 4.00 & 7.60\\
					& {\ 500} & 0.28 & 3.72 &  & 14.33 & 14.60 &  & 5.40 & 8.80\\
					& {1,000} & 0.10 & 2.96 &  & 10.12 & 11.02 &  & 6.20 & 10.85\\
					& {3,000} & -0.01 & 1.96 &  & 5.56 & 6.98 &  & 5.70 & 13.50\\
					&  &  &  &  &  &  &  &  & \\
					{$T=120$} & {\ 100} & 1.84 & 4.96 &  & 91.35 & 20.49 &  & 4.70 & 7.50\\
					& {\ 500} & 0.15 & 3.36 &  & 8.99 & 9.93 &  & 6.15 & 10.50\\
					& {1,000} & 0.01 & 2.72 &  & 6.22 & 7.40 &  & 5.65 & 12.15\\
					& {3,000} & -0.06 & 1.88 &  & 3.49 & 4.71 &  & 4.95 & 15.80\\
					&  &  &  &  &  &  &  &  & \\
					{$T=240$} & {\ 100} & -0.15 & 5.00 &  & 13.11 & 14.46 &  & 4.55 & 7.90\\
					& {\ 500} & 0.09 & 3.33 &  & 5.75 & 7.03 &  & 4.85 & 11.90\\
					& {1,000} & 0.03 & 2.73 &  & 4.06 & 5.33 &  & 5.30 & 14.40\\
					& {3,000} & -0.00 & 1.95 &  & 2.35 & 3.52 &  & 5.10 & 19.95\\
					Experiment 2 &  &  &  &  &  &  &  &  & \\
					{$T=60$} & {\ 100} & 0.27 & 6.04 &  & 45.08 & 32.20 &  & 3.70 & 7.25\\
					& {\ 500} & 0.27 & 3.69 &  & 14.47 & 14.60 &  & 5.75 & 7.75\\
					& {1,000} & 0.04 & 2.85 &  & 10.24 & 11.07 &  & 6.10 & 11.10\\
					& {3,000} & -0.05 & 1.94 &  & 5.72 & 7.03 &  & 5.90 & 14.45\\
					&  &  &  &  &  &  &  &  & \\
					{$T=120$} & {\ 100} & 0.02 & 5.24 &  & 20.45 & 20.58 &  & 4.90 & 8.30\\
					& {\ 500} & 0.23 & 3.46 &  & 9.16 & 10.04 &  & 6.60 & 10.35\\
					& {1,000} & 0.03 & 2.73 &  & 6.12 & 7.37 &  & 5.55 & 11.35\\
					& {3,000} & -0.09 & 1.83 &  & 3.52 & 4.70 &  & 5.65 & 16.10\\
					&  &  &  &  &  &  &  &  & \\
					{$T=240$} & {\ 100} & -0.14 & 5.08 &  & 13.18 & 14.44 &  & 5.05 & 8.75\\
					& {\ 500} & 0.11 & 3.39 &  & 5.81 & 7.06 &  & 5.05 & 11.65\\
					& {1,000} & 0.04 & 2.74 &  & 4.07 & 5.35 &  & 5.55 & 15.50\\
					& {3,000} & -0.03 & 1.92 &  & 2.36 & 3.51 &  & 5.25 & 20.55\\
					Experiment 3 &  &  &  &  &  &  &  &  & \\
					{$T=60$} & {\ 100} & -6.00 & 5.70 &  & 227.89 & 31.96 &  & 3.40 & 7.55\\
					& {\ 500} & 0.29 & 3.68 &  & 14.71 & 14.81 &  & 5.65 & 9.10\\
					& {1,000} & 0.09 & 2.94 &  & 10.33 & 11.25 &  & 6.35 & 11.55\\
					& {3,000} & 0.00 & 1.96 &  & 5.65 & 7.18 &  & 5.65 & 14.55\\
					&  &  &  &  &  &  &  &  & \\
					{$T=120$} & {\ 100} & -0.19 & 4.89 &  & 20.63 & 20.51 &  & 4.50 & 7.45\\
					& {\ 500} & 0.15 & 3.32 &  & 9.05 & 9.98 &  & 6.25 & 10.65\\
					& {1,000} & 0.02 & 2.68 &  & 6.25 & 7.46 &  & 5.75 & 12.15\\
					& {3,000} & -0.05 & 1.85 &  & 3.50 & 4.75 &  & 4.90 & 15.20\\
					&  &  &  &  &  &  &  &  & \\
					{$T=240$} & {\ 100} & -0.13 & 4.96 &  & 13.17 & 14.48 &  & 4.45 & 7.90\\
					& {\ 500} & 0.10 & 3.32 &  & 5.78 & 7.05 &  & 5.20 & 11.85\\
					& {1,000} & 0.05 & 2.72 &  & 4.07 & 5.34 &  & 5.25 & 14.10\\
					& {3,000} & 0.00 & 1.93 &  & 2.36 & 3.53 &  & 5.20 & 19.90\\\hline\hline
				\end{tabular}
			}
		\end{center}
		
		{\footnotesize \noindent Notes: The DGP includes one strong $\alpha_{M}=1$ and
			two semi-strong ($\alpha_{H}=0.85$, $\alpha_{S}=0.65$) factors, the regression
			with the two semi-strong factors includes them, the regression without
			excludes them. For further details of the experiments, see
			\ref{TabExperiments}. }
\end{table}%

{\footnotesize \pagebreak}

{\footnotesize
	\begin{figure}[ph]%
		\centering
		\caption{Empirical Power Functions, experiment 1, for coefficient of the
			$\phi_{M}$ factor with and without misspecification}%
		\includegraphics[
		height=5.8608in,
		width=7.1122in
		]%
		{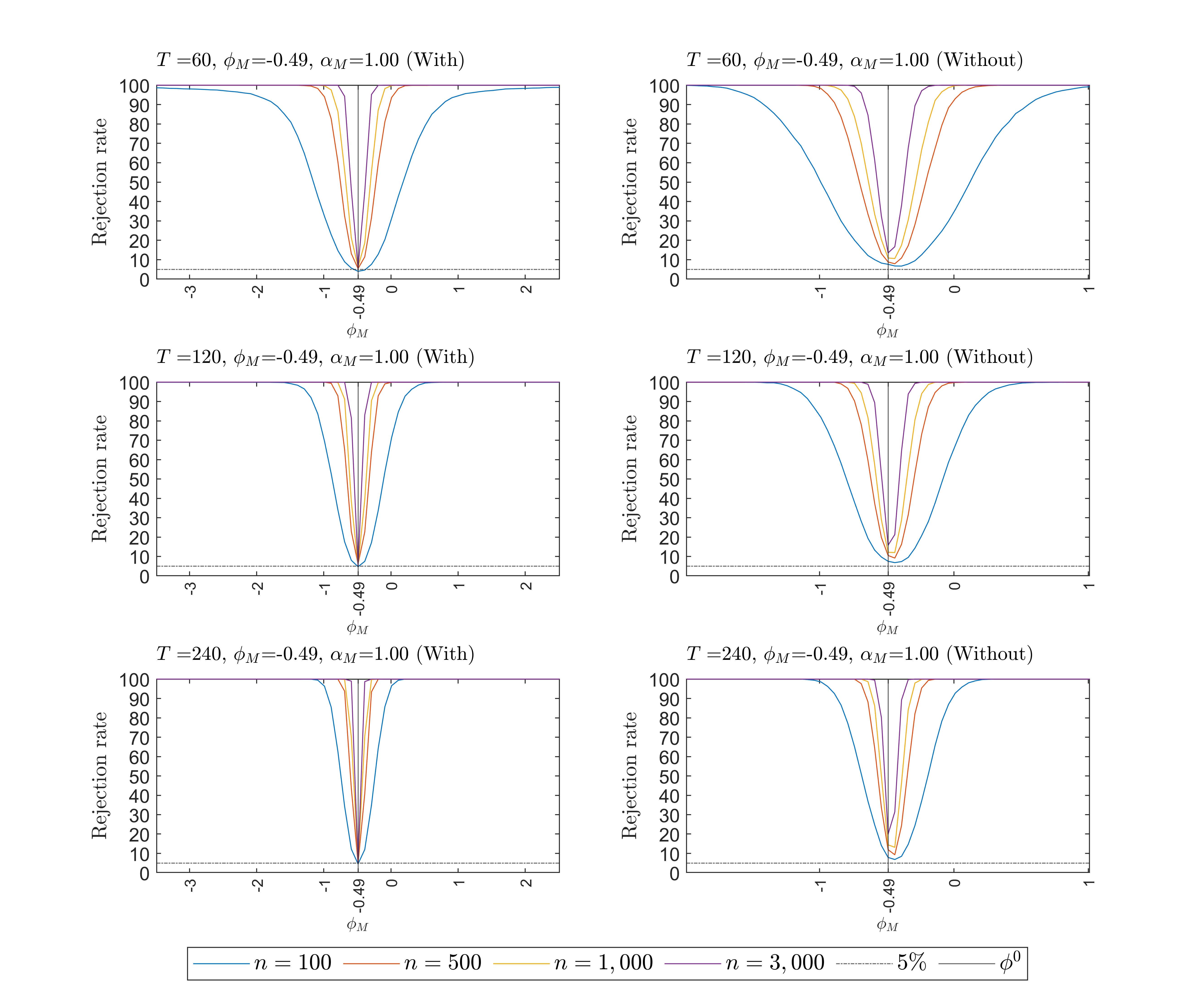}%
		\label{fig:s-b-e1}%
	\end{figure}
	{Note: } See the notes to{ Table \ref{tab:s-b1-3}.} }

{\footnotesize \pagebreak}

{\footnotesize
	\begin{figure}[ph]%
		\centering
		\caption{Empirical Power Functions, experiment 2, for coefficient of the
			$\phi_{M}$ factor with and without misspecification}%
		\includegraphics[
		height=5.8608in,
		width=7.1114in
		]%
		{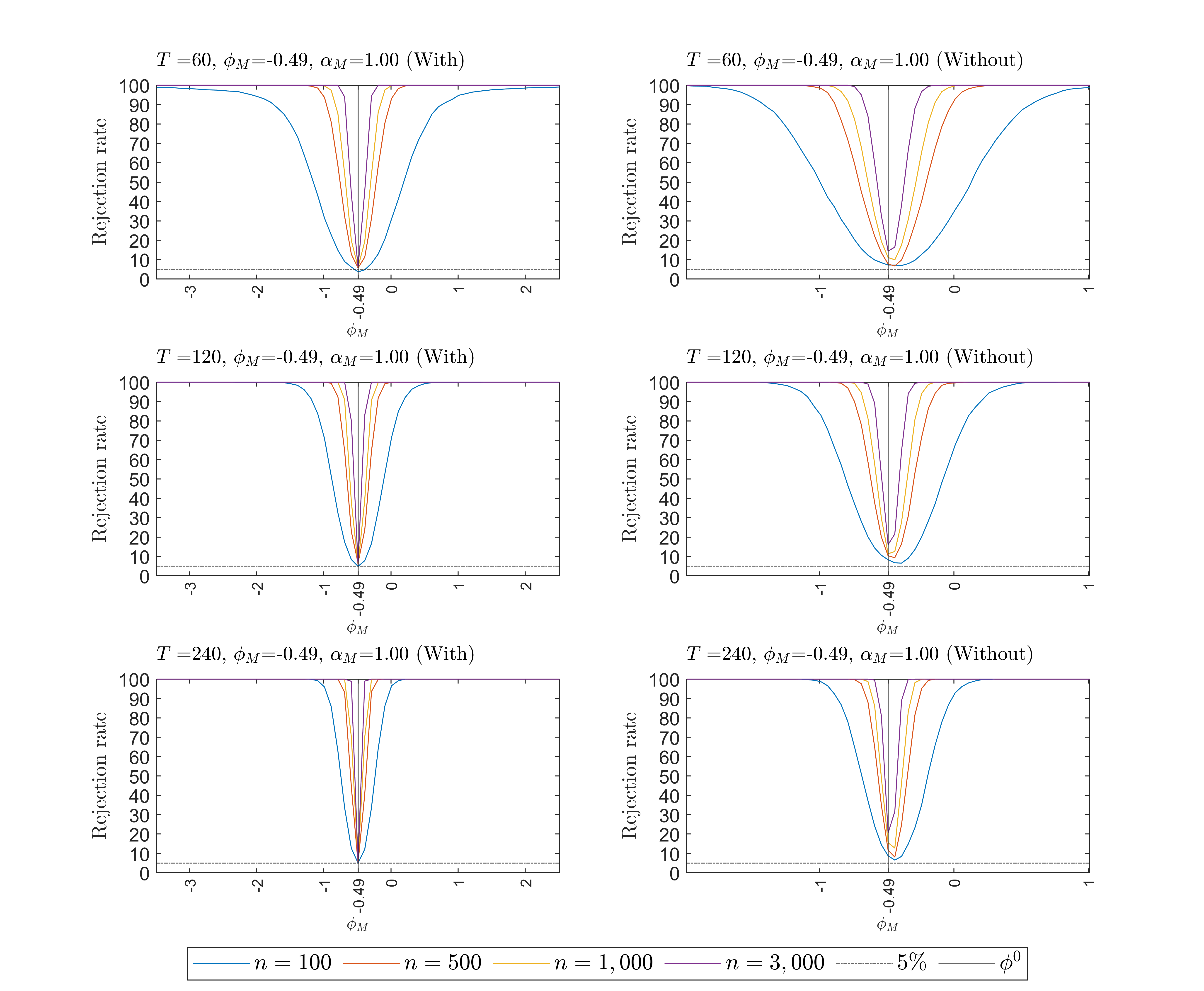}%
		\label{fig:s-b-e2}%
	\end{figure}
	{Note: } See the notes to{ Table \ref{tab:s-b1-3}.} }

{\footnotesize \pagebreak}

{\footnotesize
	\begin{figure}[ph]%
		\centering
		\caption{Empirical Power Functions, experiment 3, for coefficient of the
			$\phi_{M}$ factor with and without misspecification}%
		\includegraphics[
		height=5.8608in,
		width=7.1114in
		]%
		{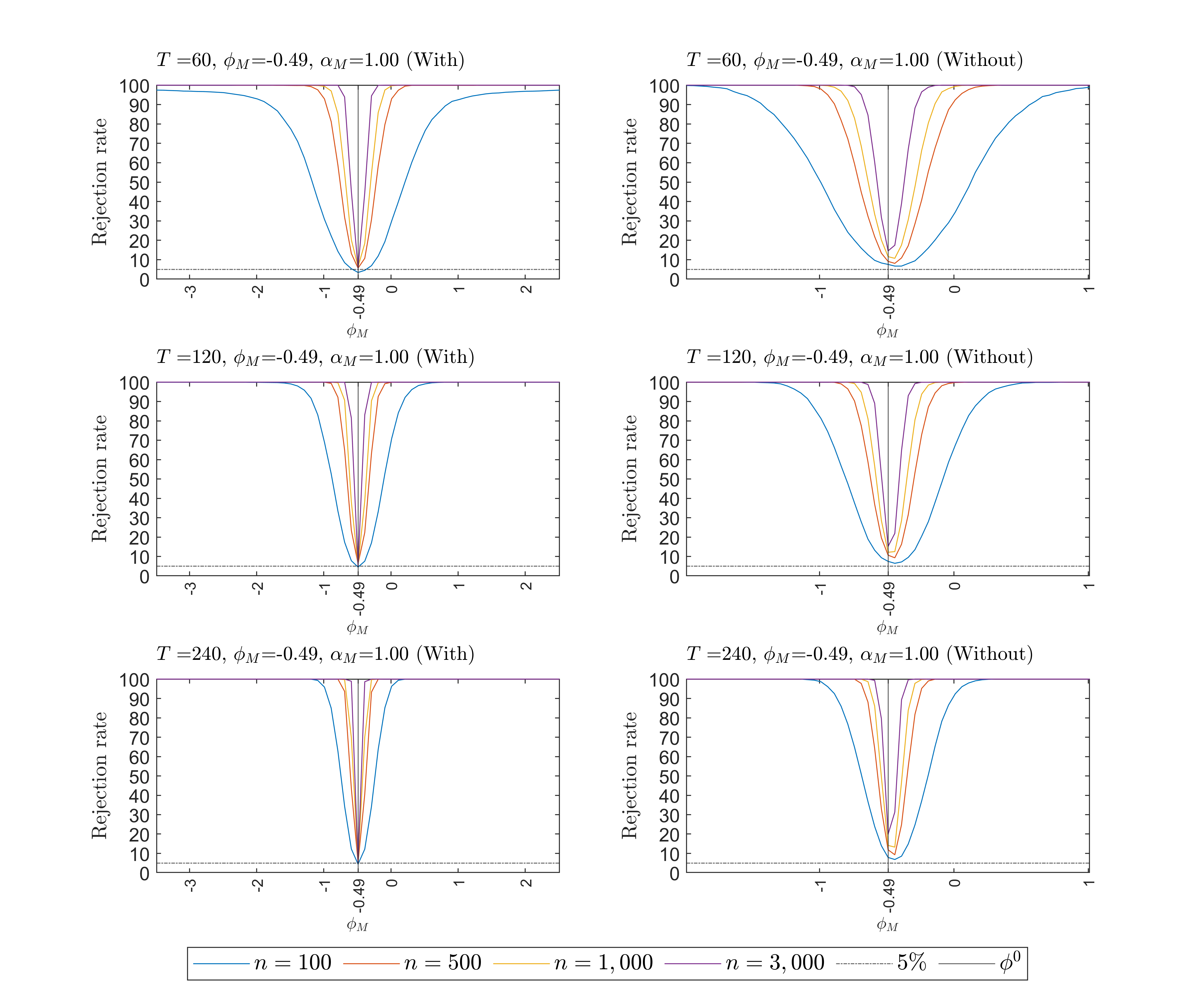}%
		\label{fig:s-b-e3}%
	\end{figure}
	{Note: } See the notes to{ Table \ref{tab:s-b1-3}.} }

\renewcommand{\thetable}{S-B-E4-6}%

	\begin{table}[H]%

		\caption{Bias, RMSE and size for the bias-corrected estimators of $\phi_{M}$
			=$-0.49$, $\alpha_{M}$=$1$ with and without semi-strong factors $\left(
			\alpha_{H}=0.85,\alpha_{S}=0.65\right)  $ included in the regression for the
			cases of experiments 4, 5 and 6}\label{tab:s-b-e4-6}
		
		\begin{center}
			{\footnotesize
				\begin{tabular}
					[c]{rrrrrrrrrr}\hline\hline
					&  & \multicolumn{2}{c}{Bias(x100)} &  & \multicolumn{2}{c}{RMSE(x100)} &  &
					\multicolumn{2}{c}{Size(x100)}\\\cline{3-4}\cline{6-7}\cline{9-10}%
					Experiment 4 & n & With & Without &  & With & Without &  & With &
					Without\\\cline{2-4}\cline{6-7}\cline{9-10}
					&  & \multicolumn{2}{c}{semi-strong factors} &  &
					\multicolumn{2}{c}{semi-strong factors} &  & \multicolumn{2}{c}{semi-strong
						factors}\\\cline{2-4}\cline{6-7}\cline{9-10}%
					{$T=60$} & {\ 100} & 0.17 & 6.08 &  & 107.83 & 32.68 &  & 3.75 & 7.25\\
					& {\ 500} & 0.23 & 3.64 &  & 14.75 & 14.88 &  & 5.65 & 7.85\\
					& {1,000} & 0.05 & 2.84 &  & 10.43 & 11.29 &  & 6.10 & 10.85\\
					& {3,000} & -0.04 & 1.94 &  & 5.82 & 7.21 &  & 5.95 & 14.15\\
					&  &  &  &  &  &  &  &  & \\
					{$T=120$} & {\ 100} & -0.86 & 5.15 &  & 44.37 & 20.62 &  & 4.95 & 8.10\\
					& {\ 500} & 0.21 & 3.42 &  & 9.24 & 10.11 &  & 6.45 & 10.75\\
					& {1,000} & 0.03 & 2.69 &  & 6.16 & 7.41 &  & 4.90 & 11.05\\
					& {3,000} & -0.08 & 1.81 &  & 3.53 & 4.75 &  & 5.85 & 15.55\\
					&  &  &  &  &  &  &  &  & \\
					{$T=240$} & {\ 100} & -0.13 & 5.03 &  & 13.22 & 14.46 &  & 4.85 & 8.70\\
					& {\ 500} & 0.12 & 3.37 &  & 5.83 & 7.07 &  & 5.10 & 11.15\\
					& {1,000} & 0.05 & 2.72 &  & 4.08 & 5.36 &  & 5.60 & 15.50\\
					& {3,000} & -0.02 & 1.90 &  & 2.36 & 3.52 &  & 5.25 & 20.00\\
					Experiment 5 &  &  &  &  &  &  &  &  & \\
					{$T=60$} & {\ 100} & -7.58 & 5.70 &  & 285.21 & 31.96 &  & 3.75 & 7.55\\
					& {\ 500} & 0.28 & 3.68 &  & 14.79 & 14.86 &  & 5.80 & 9.05\\
					& {1,000} & 0.10 & 2.92 &  & 10.35 & 11.25 &  & 6.25 & 11.80\\
					& {3,000} & -0.00 & 1.97 &  & 5.65 & 7.19 &  & 5.75 & 14.70\\
					&  &  &  &  &  &  &  &  & \\
					{$T=120$} & {\ 100} & -0.35 & 4.89 &  & 20.82 & 20.51 &  & 4.95 & 7.45\\
					& {\ 500} & 0.14 & 3.32 &  & 9.15 & 10.03 &  & 6.45 & 10.30\\
					& {1,000} & 0.04 & 2.67 &  & 6.28 & 7.48 &  & 5.90 & 12.20\\
					& {3,000} & -0.06 & 1.86 &  & 3.50 & 4.76 &  & 4.80 & 15.45\\
					&  &  &  &  &  &  &  &  & \\
					{$T=240$} & {\ 100} & -0.31 & 4.96 &  & 13.53 & 14.48 &  & 5.35 & 7.90\\
					& {\ 500} & 0.10 & 3.31 &  & 5.84 & 7.08 &  & 5.35 & 11.50\\
					& {1,000} & 0.07 & 2.71 &  & 4.11 & 5.36 &  & 5.20 & 14.10\\
					& {3,000} & 0.00 & 1.94 &  & 2.36 & 3.53 &  & 5.25 & 20.30\\
					Experiment 6 &  &  &  &  &  &  &  &  & \\
					{$T=60$} & {\ 100} & 0.05 & 6.08 &  & 103.65 & 32.68 &  & 3.95 & 7.25\\
					& {\ 500} & 0.24 & 3.63 &  & 14.81 & 14.93 &  & 5.95 & 8.25\\
					& {1,000} & 0.06 & 2.82 &  & 10.44 & 11.30 &  & 6.00 & 11.00\\
					& {3,000} & -0.05 & 1.95 &  & 5.82 & 7.22 &  & 5.80 & 14.40\\
					&  &  &  &  &  &  &  &  & \\
					{$T=120$} & {\ 100} & -0.76 & 5.15 &  & 33.47 & 20.62 &  & 5.15 & 8.10\\
					& {\ 500} & 0.22 & 3.42 &  & 9.32 & 10.16 &  & 6.95 & 10.65\\
					& {1,000} & 0.05 & 2.68 &  & 6.18 & 7.43 &  & 5.30 & 11.25\\
					& {3,000} & -0.09 & 1.82 &  & 3.54 & 4.75 &  & 5.70 & 15.40\\
					&  &  &  &  &  &  &  &  & \\
					{$T=240$} & {\ 100} & -0.30 & 5.03 &  & 13.63 & 14.46 &  & 6.10 & 8.70\\
					& {\ 500} & 0.11 & 3.37 &  & 5.88 & 7.10 &  & 5.10 & 11.85\\
					& {1,000} & 0.07 & 2.71 &  & 4.13 & 5.38 &  & 5.80 & 15.50\\
					& {3,000} & -0.03 & 1.91 &  & 2.37 & 3.52 &  & 5.45 & 20.40\\\hline\hline
				\end{tabular}
			}
		\end{center}
		
		{\footnotesize \noindent Notes: The DGP includes one strong $\alpha_{M}=1$ and
			two semi-strong ($\alpha_{H}=0.85$, $\alpha_{S}=0.65$) factors, the regression
			with the two semi-strong factors includes them, the regression without
			excludes them. For further details of the experiments, see
			\ref{TabExperiments}. }
\end{table}%

{\footnotesize \pagebreak}

{\footnotesize
	\begin{figure}[ph]%
		\centering
		\caption{ Empirical Power Functions, experiment 4, for coefficient of the
			$\phi_{M}$ factor with and without misspecification}%
		\includegraphics[
		height=5.8608in,
		width=7.1122in
		]%
		{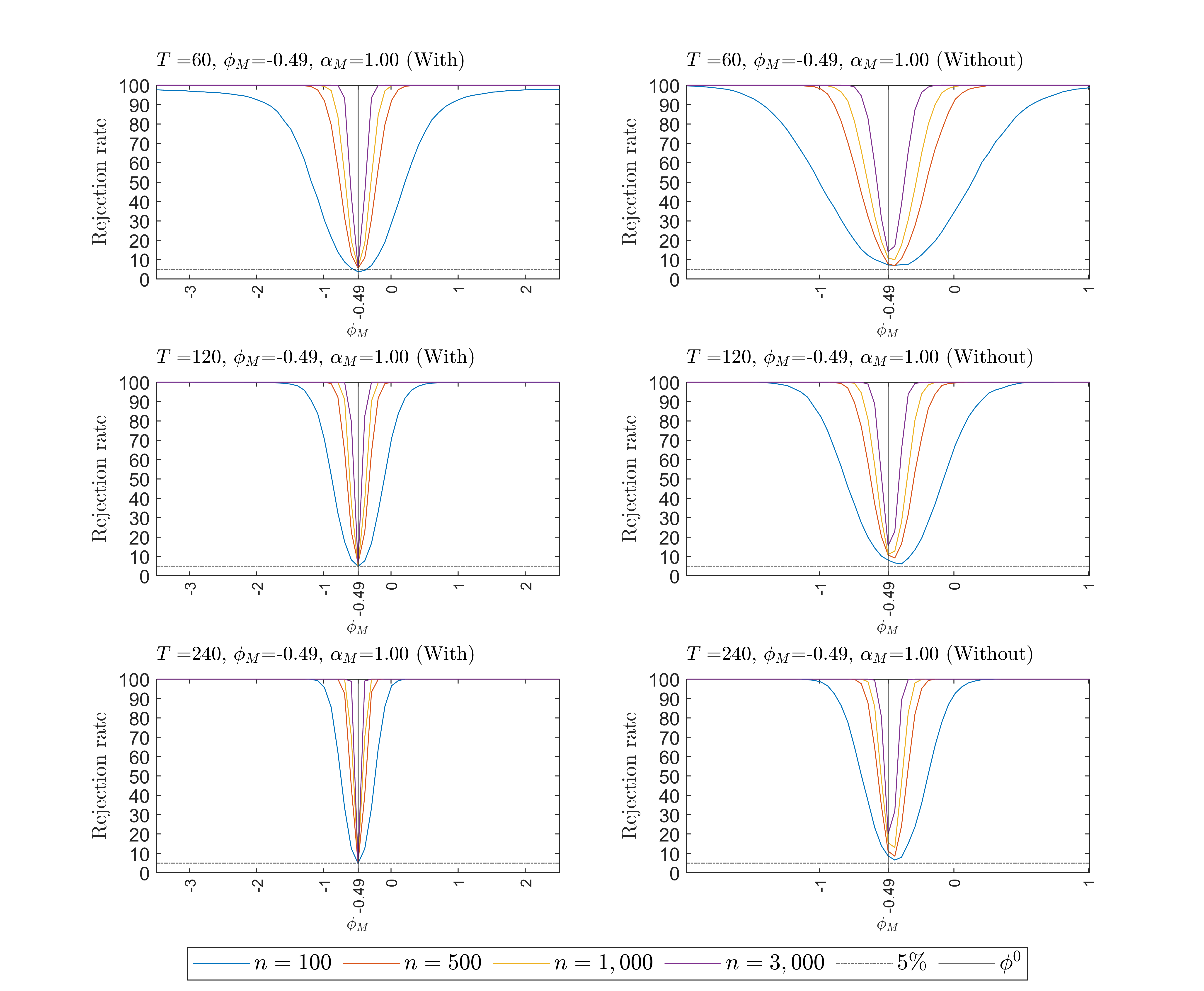}%
		\label{fig:s-b-e4}%
	\end{figure}
	{Note: } See the notes to{ Table \ref{tab:s-b-e4-6}.} }

{\footnotesize \pagebreak}

{\footnotesize
	\begin{figure}[ph]%
		\centering
		\caption{Empirical Power Functions, experiment 5, for coefficient of the
			$\phi_{M}$ factor with and without misspecification}%
		\includegraphics[
		height=5.8608in,
		width=7.1114in
		]%
		{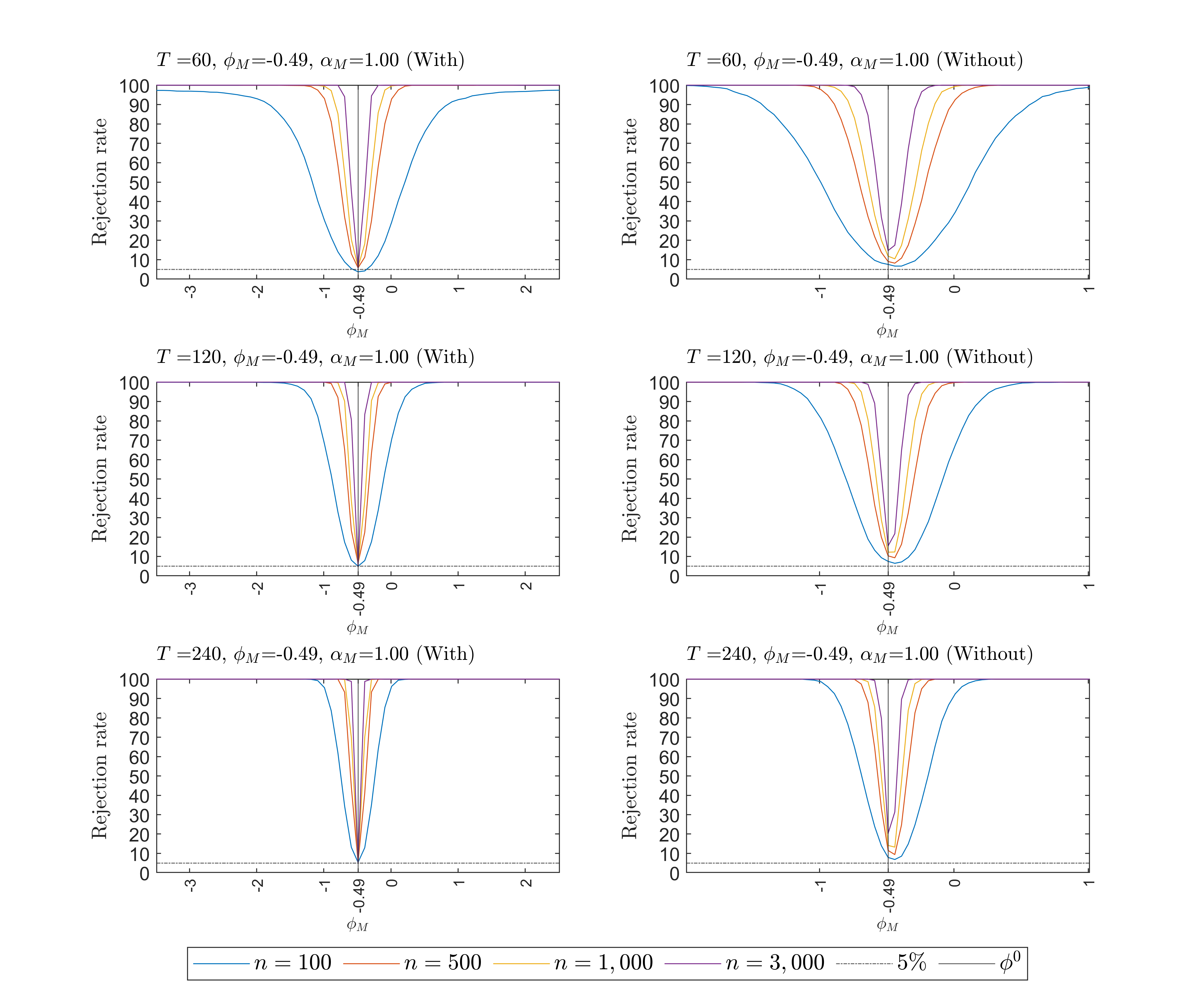}%
		\label{fig:s-b-e5}%
	\end{figure}
	{Note: } See the notes to{ Table \ref{tab:s-b-e4-6}.} }

{\footnotesize \pagebreak}

{\footnotesize
	\begin{figure}[ph]%
		\centering
		\caption{Empirical Power Functions, experiment 6, for coefficient of the
			$\phi_{M}$ factor with and without misspecification}%
		\includegraphics[
		height=5.8608in,
		width=7.1122in
		]%
		{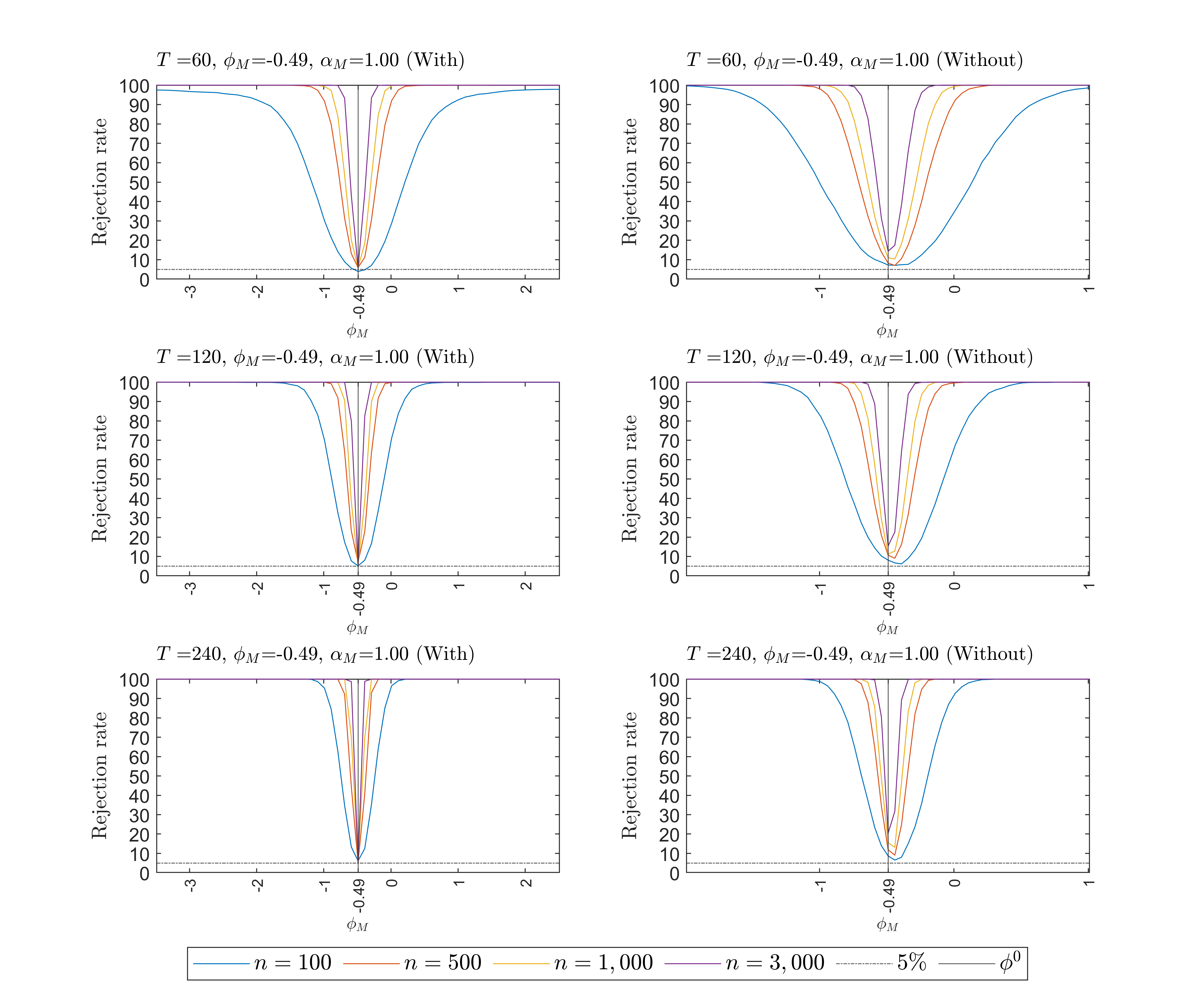}%
		\label{fig:s-b-e6}%
	\end{figure}
	{Note: } See the notes to{ Table \ref{tab:s-b-e4-6}.} }

\renewcommand{\thetable}{S-B-E7-9}%

	\begin{table}[H]%

		\caption{Bias, RMSE and size for the bias-corrected estimators of $\phi_{M}$
			=$-0.49$, $\alpha_{M}$=$1$ with and without semi-strong factors $\left(
			\alpha_{H}=0.85,\alpha_{S}=0.65\right)  $ included in the regression for the
			cases of experiments 7, 8 and 9}\label{tab:s-b-e7-9}
		
		\begin{center}
			{\footnotesize
				\begin{tabular}
					[c]{rrrrrrrrrr}\hline\hline
					&  & \multicolumn{2}{c}{Bias(x100)} &  & \multicolumn{2}{c}{RMSE(x100)} &  &
					\multicolumn{2}{c}{Size(x100)}\\\cline{3-4}\cline{6-7}\cline{9-10}%
					Experiment 7 & n & With & Without &  & With & Without &  & With &
					Without\\\cline{2-4}\cline{6-7}\cline{9-10}
					&  & \multicolumn{2}{c}{semi-strong factors} &  &
					\multicolumn{2}{c}{semi-strong factors} &  & \multicolumn{2}{c}{semi-strong
						factors}\\\cline{2-4}\cline{6-7}\cline{9-10}%
					{$T=60$} & {\ 100} & -14.31 & 4.76 &  & 555.11 & 31.07 &  & 3.20 & 5.35\\
					& {\ 500} & -0.18 & 3.25 &  & 15.13 & 15.20 &  & 6.40 & 9.65\\
					& {1,000} & -0.15 & 2.62 &  & 10.41 & 11.29 &  & 6.25 & 11.35\\
					& {3,000} & 0.04 & 1.99 &  & 5.81 & 7.17 &  & 5.90 & 14.95\\
					&  &  &  &  &  &  &  &  & \\
					{$T=120$} & {\ 100} & -0.49 & 4.70 &  & 20.30 & 19.98 &  & 5.05 & 6.55\\
					& {\ 500} & 0.01 & 3.15 &  & 8.86 & 9.80 &  & 5.85 & 10.15\\
					& {1,000} & 0.04 & 2.62 &  & 6.31 & 7.45 &  & 6.60 & 11.85\\
					& {3,000} & 0.08 & 1.99 &  & 3.60 & 4.80 &  & 5.65 & 17.05\\
					&  &  &  &  &  &  &  &  & \\
					{$T=240$} & {\ 100} & -0.39 & 4.91 &  & 14.02 & 14.48 &  & 6.75 & 8.50\\
					& {\ 500} & -0.08 & 3.13 &  & 6.13 & 7.18 &  & 6.95 & 12.85\\
					& {1,000} & -0.03 & 2.67 &  & 4.22 & 5.45 &  & 5.80 & 15.15\\
					& {3,000} & 0.01 & 1.97 &  & 2.32 & 3.50 &  & 5.15 & 20.90\\
					Experiment 8 &  &  &  &  &  &  &  &  & \\
					{$T=60$} & {\ 100} & 9.06 & 4.80 &  & 251.34 & 31.95 &  & 3.75 & 6.85\\
					& {\ 500} & -0.67 & 2.88 &  & 16.71 & 15.13 &  & 6.30 & 9.95\\
					& {1,000} & 0.01 & 2.75 &  & 10.38 & 11.32 &  & 5.70 & 10.25\\
					& {3,000} & 0.12 & 2.09 &  & 5.96 & 7.25 &  & 6.15 & 14.70\\
					&  &  &  &  &  &  &  &  & \\
					{$T=120$} & {\ 100} & -0.32 & 4.83 &  & 20.53 & 20.18 &  & 4.75 & 6.25\\
					& {\ 500} & -0.04 & 3.05 &  & 8.89 & 9.73 &  & 5.80 & 10.20\\
					& {1,000} & 0.19 & 2.76 &  & 6.40 & 7.60 &  & 6.10 & 12.70\\
					& {3,000} & 0.14 & 2.01 &  & 3.63 & 4.81 &  & 5.90 & 17.05\\
					&  &  &  &  &  &  &  &  & \\
					{$T=240$} & {\ 100} & -0.29 & 5.00 &  & 14.07 & 14.58 &  & 6.35 & 9.40\\
					& {\ 500} & -0.01 & 3.19 &  & 6.06 & 7.15 &  & 6.05 & 12.05\\
					& {1,000} & 0.01 & 2.68 &  & 4.25 & 5.46 &  & 6.30 & 15.90\\
					& {3,000} & 0.01 & 1.96 &  & 2.35 & 3.52 &  & 5.15 & 20.95\\
					Experiment 9 &  &  &  &  &  &  &  &  & \\
					{$T=60$} & {\ 100} & -15.47 & 4.04 &  & 662.68 & 33.36 &  & 3.40 & 6.95\\
					& {\ 500} & 1.29 & 2.84 &  & 84.39 & 15.74 &  & 7.75 & 11.40\\
					& {1,000} & -0.47 & 2.26 &  & 11.31 & 11.50 &  & 8.50 & 12.25\\
					& {3,000} & -0.16 & 1.69 &  & 6.25 & 7.20 &  & 7.75 & 15.35\\
					&  &  &  &  &  &  &  &  & \\
					{$T=120$} & {\ 100} & -0.59 & 4.09 &  & 21.49 & 21.20 &  & 4.50 & 7.00\\
					& {\ 500} & -0.01 & 2.95 &  & 9.20 & 9.95 &  & 6.15 & 9.15\\
					& {1,000} & -0.07 & 2.40 &  & 6.53 & 7.37 &  & 6.30 & 10.80\\
					& {3,000} & -0.01 & 1.74 &  & 3.71 & 4.68 &  & 6.50 & 14.30\\
					&  &  &  &  &  &  &  &  & \\
					{$T=240$} & {\ 100} & -0.54 & 4.06 &  & 14.02 & 14.60 &  & 6.85 & 8.50\\
					& {\ 500} & -0.12 & 2.85 &  & 6.19 & 7.11 &  & 6.55 & 10.80\\
					& {1,000} & -0.08 & 2.42 &  & 4.30 & 5.27 &  & 6.60 & 13.35\\
					& {3,000} & -0.00 & 1.75 &  & 2.37 & 3.37 &  & 5.45 & 17.50\\\hline\hline
				\end{tabular}
			}
		\end{center}
		
		{\footnotesize \noindent Notes: The DGP includes one strong $\alpha_{M}=1$ and
			two semi-strong ($\alpha_{H}=0.85$, $\alpha_{S}=0.65$) factors, the regression
			with the two semi-strong factors includes them, the regression without
			excludes them. For further details of the experiments, see
			\ref{TabExperiments}. }
\end{table}%

{\footnotesize \pagebreak}

{\footnotesize
	\begin{figure}[ph]%
		\centering
		\caption{Empirical Power Functions, experiment 7, for coefficient of the
			$\phi_{M}$ factor with and without misspecification}%
		\includegraphics[
		height=5.8608in,
		width=7.1114in
		]%
		{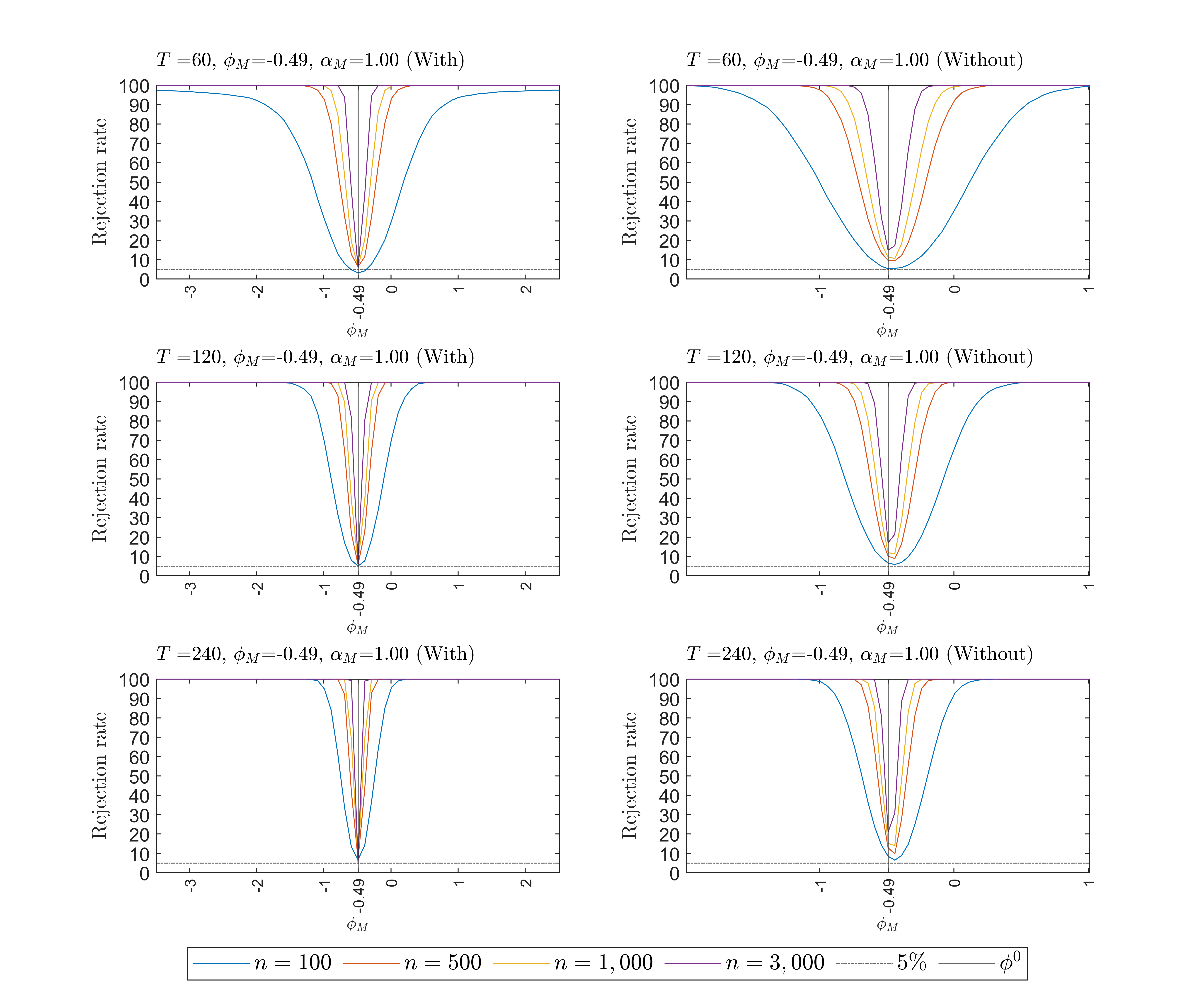}%
		\label{fig:s-b-e7}%
	\end{figure}
	{Note: } See the notes to{ Table \ref{tab:s-b-e7-9}.} }

{\footnotesize \pagebreak}

{\footnotesize
	\begin{figure}[ph]%
		\centering
		\caption{Empirical Power Functions, experiment 8, for coefficient of the
			$\phi_{M}$ factor with and without misspecification}%
		\includegraphics[
		height=5.8608in,
		width=7.1122in
		]%
		{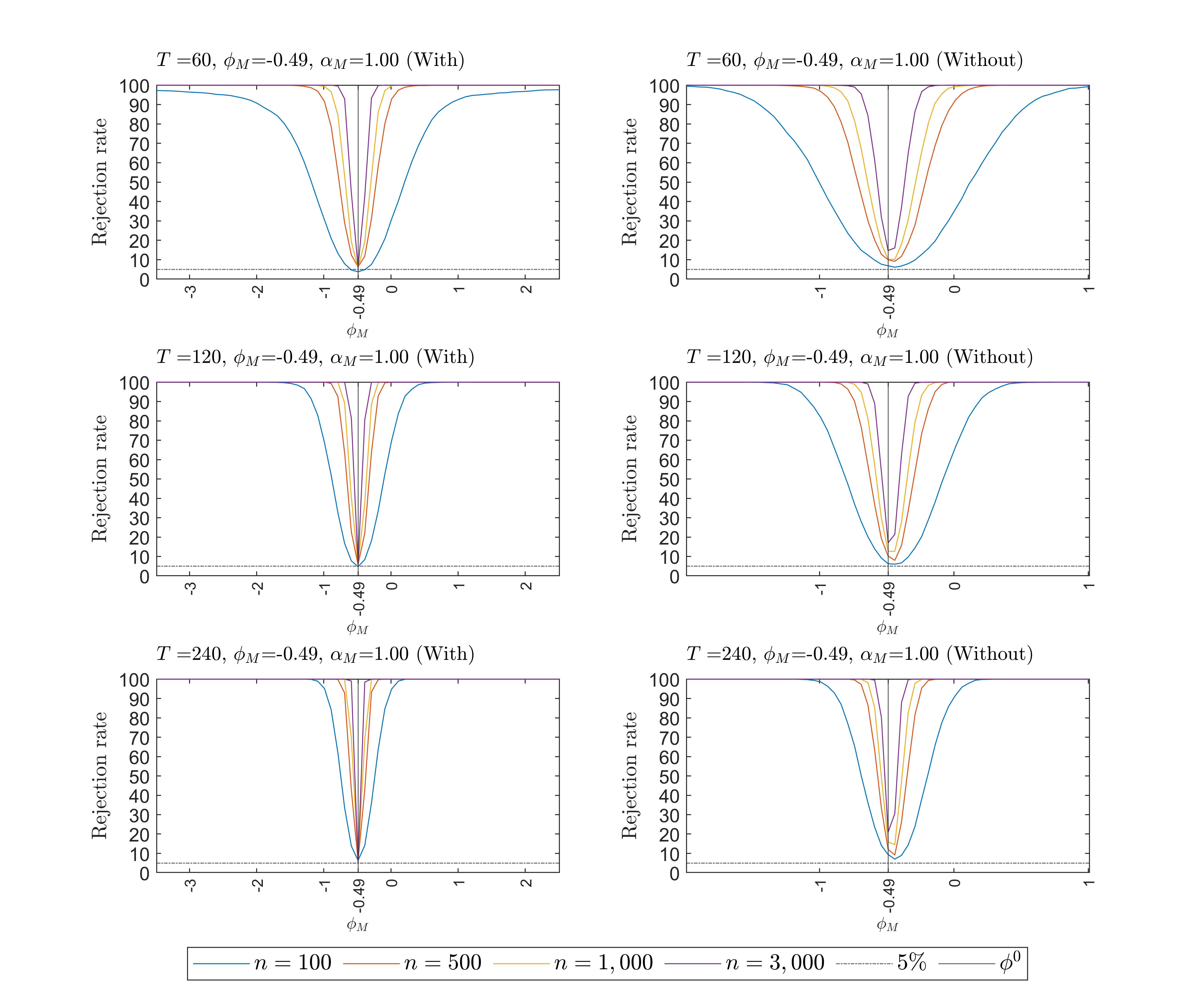}%
		\label{fig:s-b-e8}%
	\end{figure}
	{Note: } See the notes to{ Table \ref{tab:s-b-e7-9}.} }

{\footnotesize \pagebreak}

{\footnotesize
	\begin{figure}[ph]%
		\centering
		\caption{Empirical Power Functions, experiment 9, for coefficient of the
			$\phi_{M}$ factor with and without misspecification}%
		\includegraphics[
		height=5.8608in,
		width=7.1114in
		]%
		{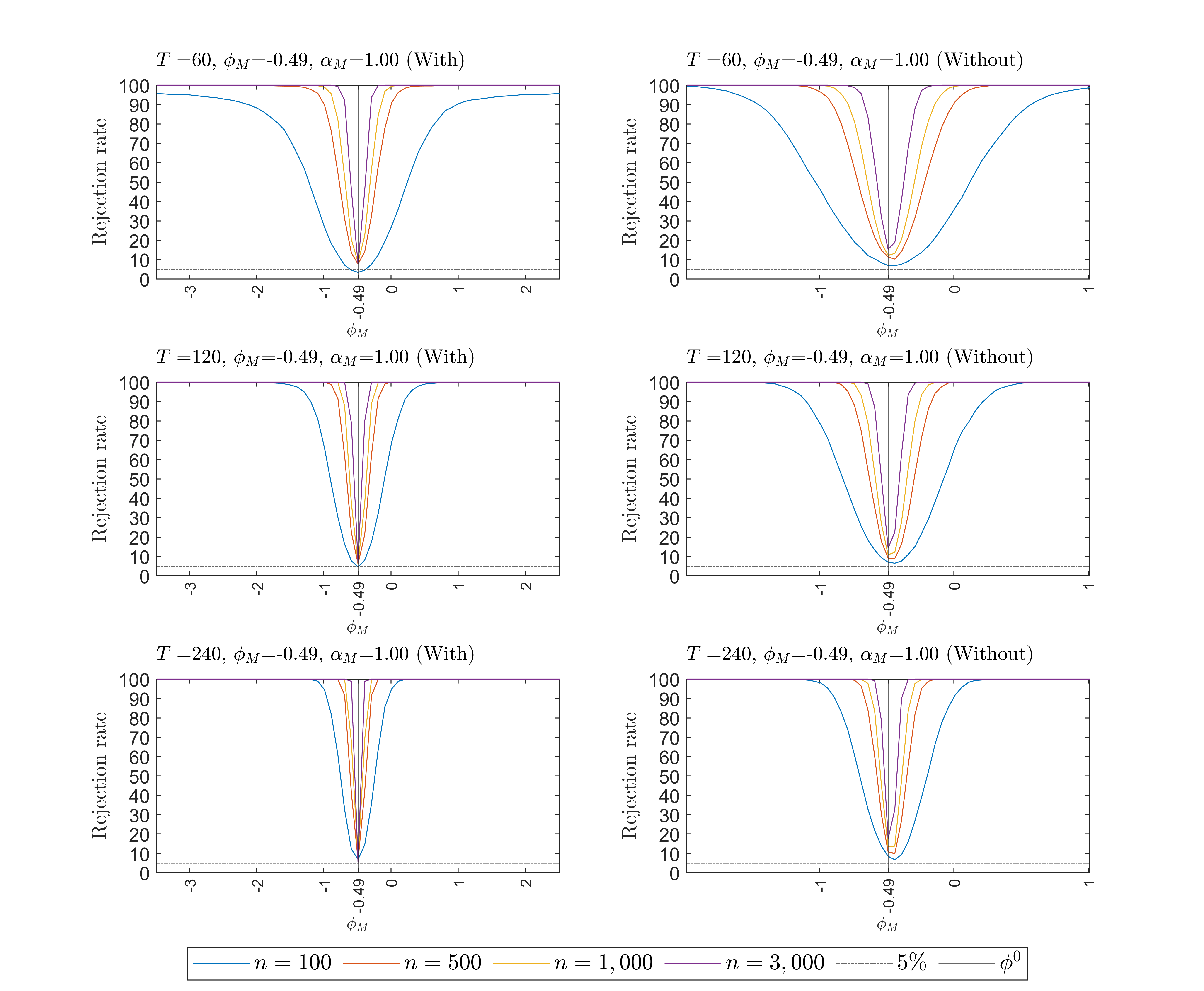}%
		\label{fig:s-b-e9}%
	\end{figure}
	{Note: } See the notes to{ Table \ref{tab:s-b-e7-9}.} }

\renewcommand{\thetable}{S-B-E10-12}%

	\begin{table}[H]%

		\caption{Bias, RMSE and size for the bias-corrected estimators of $\phi_{M}$
			=$-0.49$, $\alpha_{M}$=$1$ with and without semi-strong factors $\left(
			\alpha_{H}=0.85,\alpha_{S}=0.65\right)  $ included in the regression for the
			cases of experiments 10, 11 and 12}\label{tab:s-b-e10-12}
		
		\begin{center}
			{\footnotesize
				\begin{tabular}
					[c]{rrrrrrrrrr}\hline\hline
					&  & \multicolumn{2}{c}{Bias(x100)} &  & \multicolumn{2}{c}{RMSE(x100)} &  &
					\multicolumn{2}{c}{Size(x100)}\\\cline{3-4}\cline{6-7}\cline{9-10}%
					Experiment 10 & n & With & Without &  & With & Without &  & With &
					Without\\\cline{2-4}\cline{6-7}\cline{9-10}
					&  & \multicolumn{2}{c}{semi-strong factors} &  &
					\multicolumn{2}{c}{semi-strong factors} &  & \multicolumn{2}{c}{semi-strong
						factors}\\\cline{2-4}\cline{6-7}\cline{9-10}%
					{$T=60$} & {\ 100} & -1.61 & 4.07 &  & 151.91 & 33.92 &  & 3.75 & 7.45\\
					& {\ 500} & -0.69 & 2.46 &  & 17.07 & 15.76 &  & 7.70 & 11.35\\
					& {1,000} & -0.31 & 2.39 &  & 11.30 & 11.47 &  & 7.05 & 10.90\\
					& {3,000} & -0.06 & 1.79 &  & 6.36 & 7.31 &  & 8.00 & 14.95\\
					&  &  &  &  &  &  &  &  & \\
					{$T=120$} & {\ 100} & -1.86 & 4.22 &  & 69.05 & 21.53 &  & 4.80 & 7.25\\
					& {\ 500} & -0.09 & 2.81 &  & 9.29 & 9.92 &  & 6.10 & 9.85\\
					& {1,000} & 0.15 & 2.59 &  & 6.62 & 7.51 &  & 6.75 & 11.40\\
					& {3,000} & 0.05 & 1.76 &  & 3.72 & 4.69 &  & 5.80 & 14.55\\
					&  &  &  &  &  &  &  &  & \\
					{$T=240$} & {\ 100} & -0.39 & 4.20 &  & 14.17 & 14.82 &  & 6.35 & 8.60\\
					& {\ 500} & -0.05 & 2.87 &  & 6.15 & 7.09 &  & 6.05 & 10.80\\
					& {1,000} & 0.00 & 2.46 &  & 4.30 & 5.28 &  & 6.40 & 13.65\\
					& {3,000} & 0.00 & 1.75 &  & 2.40 & 3.40 &  & 5.75 & 18.80\\
					Experiment 11 &  &  &  &  &  &  &  &  & \\
					{$T=60$} & {\ 100} & -0.72 & 4.82 &  & 47.59 & 31.18 &  & 3.40 & 5.90\\
					& {\ 500} & -0.07 & 3.31 &  & 15.12 & 15.30 &  & 6.70 & 10.30\\
					& {1,000} & -0.22 & 2.56 &  & 10.46 & 11.28 &  & 6.35 & 11.45\\
					& {3,000} & 0.05 & 1.96 &  & 5.91 & 7.26 &  & 7.00 & 16.00\\
					&  &  &  &  &  &  &  &  & \\
					{$T=120$} & {\ 100} & -0.46 & 4.75 &  & 19.81 & 20.00 &  & 4.55 & 6.60\\
					& {\ 500} & 0.10 & 3.25 &  & 8.93 & 9.83 &  & 6.40 & 10.00\\
					& {1,000} & -0.01 & 2.59 &  & 6.28 & 7.41 &  & 6.15 & 11.95\\
					& {3,000} & 0.09 & 1.99 &  & 3.60 & 4.83 &  & 6.20 & 16.70\\
					&  &  &  &  &  &  &  &  & \\
					{$T=240$} & {\ 100} & -0.24 & 4.94 &  & 13.43 & 14.51 &  & 5.30 & 8.45\\
					& {\ 500} & 0.01 & 3.22 &  & 6.11 & 7.23 &  & 6.75 & 12.95\\
					& {1,000} & -0.07 & 2.64 &  & 4.19 & 5.42 &  & 5.85 & 14.65\\
					& {3,000} & 0.02 & 1.97 &  & 2.32 & 3.50 &  & 4.75 & 20.55\\
					Experiment 12 &  &  &  &  &  &  &  &  & \\
					{$T=60$} & {\ 100} & -0.14 & 4.87 &  & 129.71 & 32.05 &  & 4.10 & 6.75\\
					& {\ 500} & -0.25 & 2.94 &  & 15.76 & 15.19 &  & 6.10 & 9.45\\
					& {1,000} & -0.04 & 2.71 &  & 10.48 & 11.34 &  & 5.65 & 10.40\\
					& {3,000} & 0.10 & 2.05 &  & 6.07 & 7.34 &  & 6.75 & 15.20\\
					&  &  &  &  &  &  &  &  & \\
					{$T=120$} & {\ 100} & -0.35 & 4.86 &  & 20.06 & 20.21 &  & 4.15 & 6.35\\
					& {\ 500} & 0.06 & 3.16 &  & 8.91 & 9.76 &  & 5.40 & 10.35\\
					& {1,000} & 0.16 & 2.76 &  & 6.35 & 7.57 &  & 6.00 & 12.75\\
					& {3,000} & 0.14 & 2.01 &  & 3.62 & 4.82 &  & 5.60 & 16.95\\
					&  &  &  &  &  &  &  &  & \\
					{$T=240$} & {\ 100} & -0.13 & 5.03 &  & 13.45 & 14.61 &  & 5.90 & 9.25\\
					& {\ 500} & 0.07 & 3.27 &  & 6.01 & 7.18 &  & 6.30 & 12.40\\
					& {1,000} & -0.02 & 2.66 &  & 4.21 & 5.41 &  & 5.80 & 15.70\\
					& {3,000} & 0.02 & 1.96 &  & 2.35 & 3.52 &  & 5.05 & 20.65\\\hline\hline
				\end{tabular}
			}
		\end{center}
		
		{\footnotesize \noindent Notes: The DGP includes one strong $\alpha_{M}=1$ and
			two semi-strong ($\alpha_{H}=0.85$, $\alpha_{S}=0.65$) factors, the regression
			with the two semi-strong factors includes them, the regression without
			excludes them. For further details of the experiments, see
			\ref{TabExperiments}. }%
		
\end{table}%

{\footnotesize \pagebreak}

{\footnotesize
	\begin{figure}[ph]%
		\centering
		\caption{Empirical Power Functions, experiment 10, for coefficient of the
			$\phi_{M}$ factor with and without misspecification}%
		\includegraphics[
		height=5.8608in,
		width=7.1114in
		]%
		{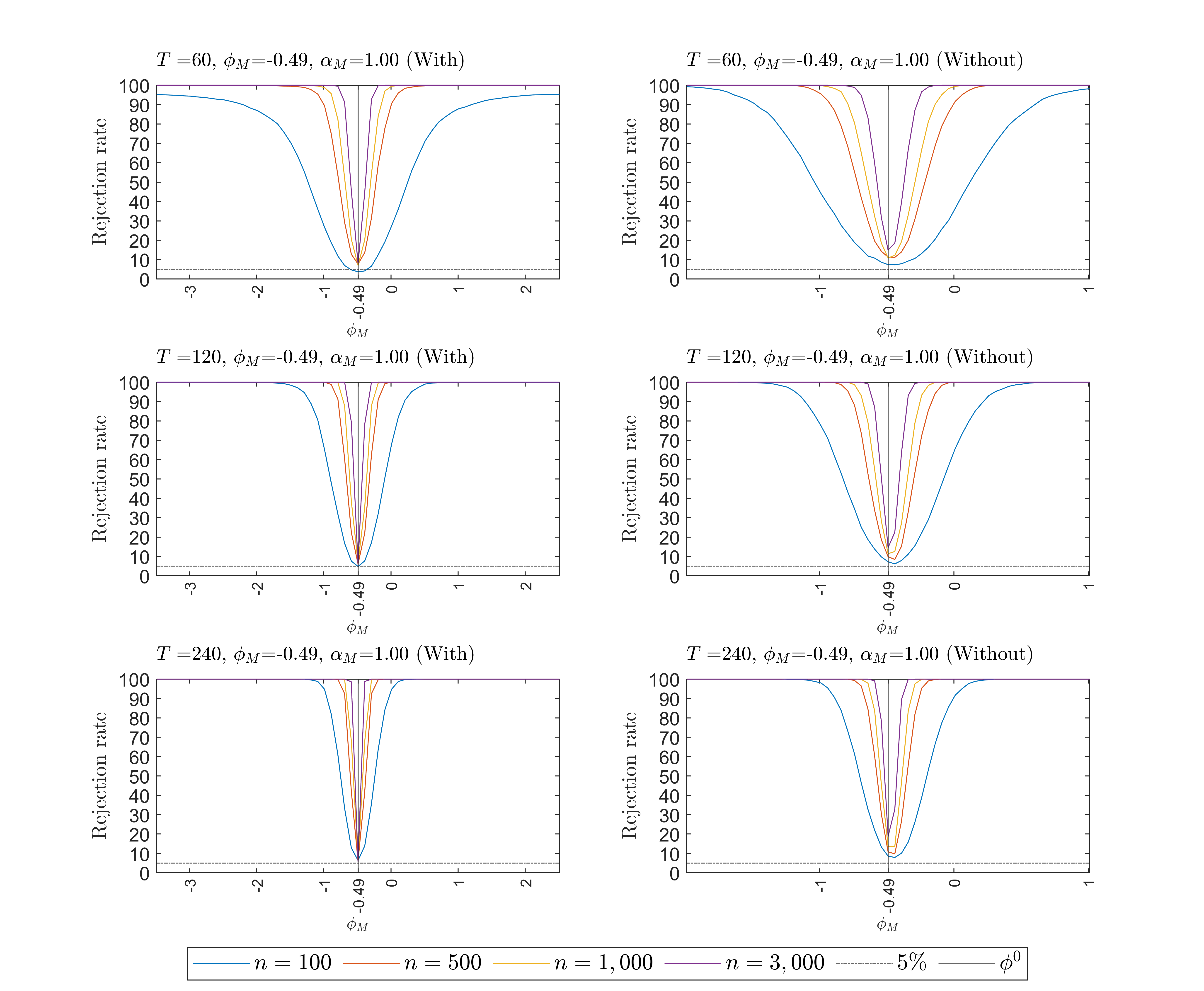}%
		\label{fig:s-b-e10}%
	\end{figure}
	{Note: } See the notes to{ Table \ref{tab:s-b-e10-12}.} }

{\footnotesize \pagebreak}

{\footnotesize
	\begin{figure}[ph]%
		\centering
		\caption{Empirical Power Functions, experiment 11, for coefficient of the
			$\phi_{M}$ factor with and without misspecification}%
		\includegraphics[
		height=5.8608in,
		width=7.1114in
		]%
		{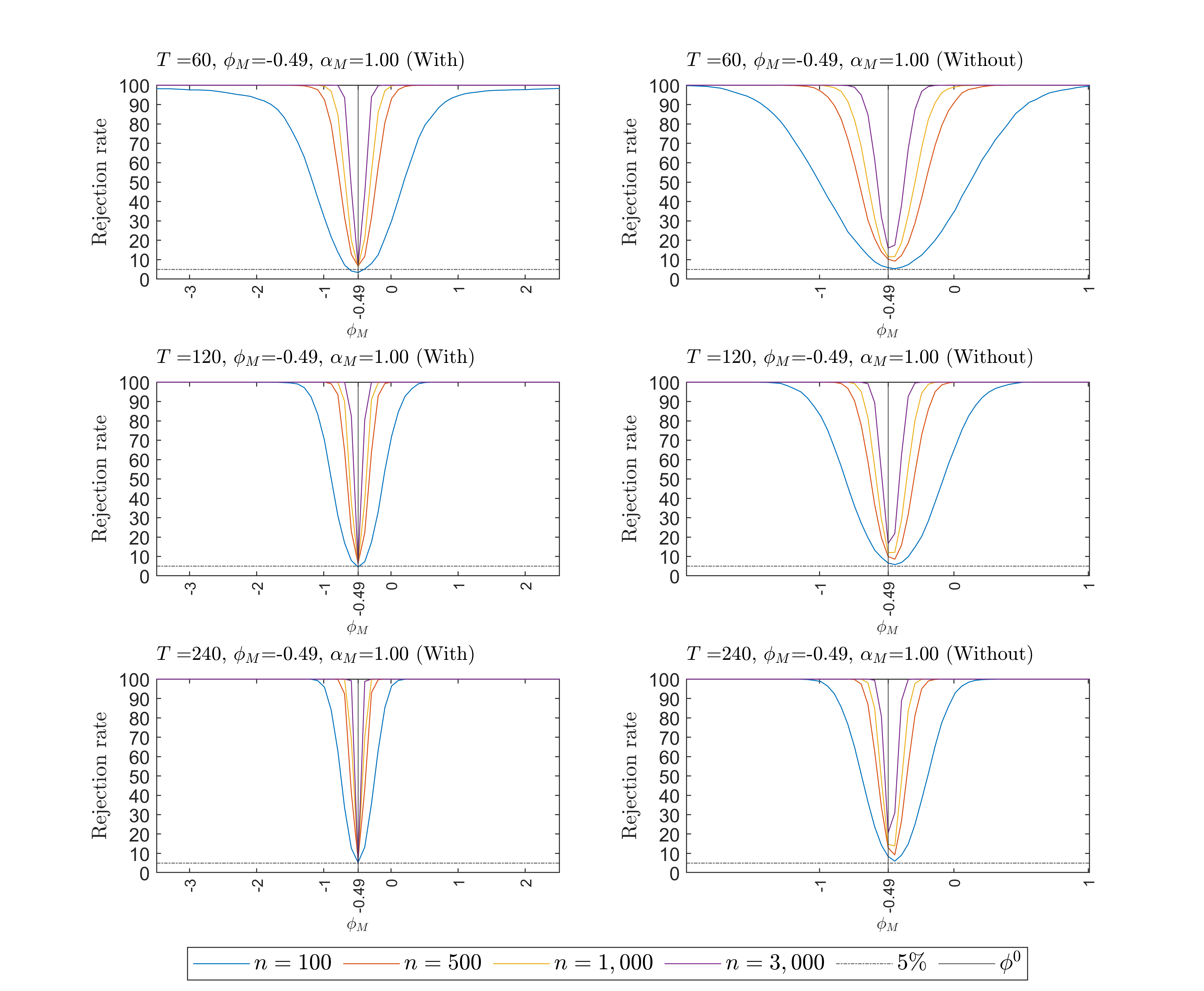}%
		\label{fig:s-b-e11}%
	\end{figure}
	{Note: } See the notes to{ Table \ref{tab:s-b-e10-12}.} }

{\footnotesize \pagebreak}

{\footnotesize
	\begin{figure}[h]%
		\centering
		\caption{Empirical Power Functions, experiment 12, for coefficient of the
			$\phi_{M}$ factor with and without misspecification}%
		\includegraphics[
		height=5.8608in,
		width=7.1122in
		]%
		{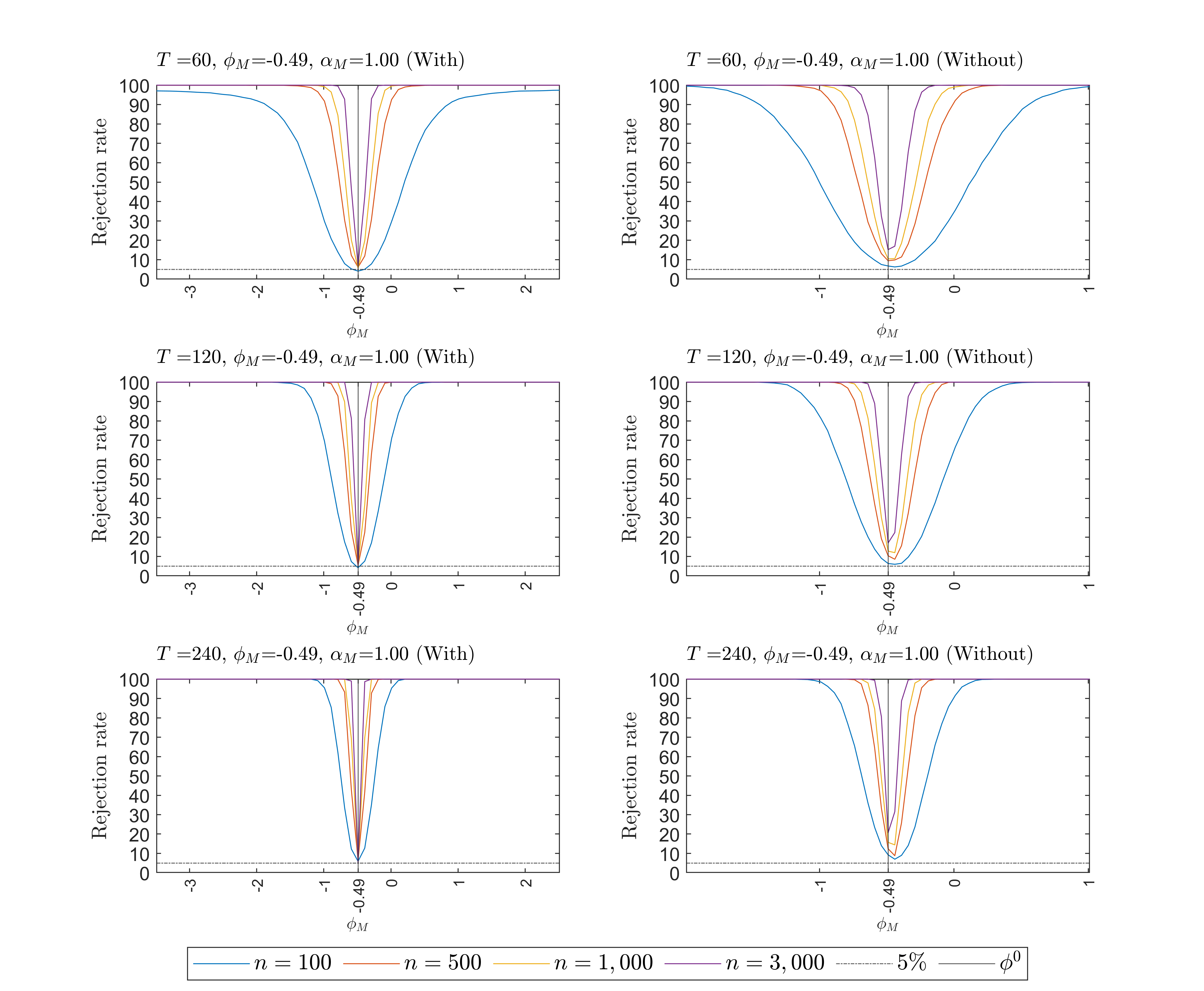}%
		\label{fig:s-b-e12}%
	\end{figure}
	{Note: } See the notes to{ Table \ref{tab:s-b-e10-12}.} }

{\footnotesize \pagebreak}

\subsection{Estimators of $\phi$ for one strong and two weak factors,
	threshold estimator of the covariance matrix with and without
	misspecification}

This subsection examines the effects of incorrectly omitting weak factors
($\alpha_{S}=\alpha_{H}=1/2$) and correctly including them on the small sample
properties of the bias-corrected (BC) estimator $\phi_{M}$ associated with the
strong factor ($\alpha_{M}=1$). Each table presents results from three of the
12 experiments detailed in Table S-1, along with their respective empirical
power functions.

\setcounter{figure}{0} \renewcommand{\thetable}{S-C-E1-3} \renewcommand{\thefigure}{S-C-E\arabic{figure}}%

	\begin{table}[H]%

		\caption{Bias, RMSE and size for the bias-corrected estimators of $\phi_{M}$ =
			$-0.49$, $\alpha_{M}$=$1$ with and without weak factors included in the
			regression for the cases of experiments 1,2 and 3}\label{tab:s-c-e1-3}
		
		\begin{center}
			{\footnotesize
				\begin{tabular}
					[c]{rrrrrrrrrr}\hline\hline
					&  & \multicolumn{2}{c}{Bias(x100)} &  & \multicolumn{2}{c}{RMSE(x100)} &  &
					\multicolumn{2}{c}{Size(x100)}\\\cline{3-4}\cline{6-7}\cline{9-10}%
					Experiment 1 & n & With & Without &  & With & Without &  & With &
					Without\\\cline{2-4}\cline{6-7}\cline{9-10}
					&  & \multicolumn{2}{c}{weak factors} &  & \multicolumn{2}{c}{weak factors} &
					& \multicolumn{2}{c}{weak factors}\\\cline{2-10}%
					{$T=60$} & {\ 100} & 22.05 & 3.67 &  & 454.21 & 30.53 &  & 1.15 & 6.10\\
					& {\ 500} & -10.55 & 1.66 &  & 332.83 & 13.13 &  & 2.25 & 6.15\\
					& {1,000} & -1.33 & 1.10 &  & 382.27 & 9.48 &  & 1.65 & 6.65\\
					& {3,000} & -19.15 & 0.58 &  & 932.38 & 5.38 &  & 2.00 & 6.20\\
					&  &  &  &  &  &  &  &  & \\
					{$T=120$} & {\ 100} & 6.52 & 3.04 &  & 237.70 & 19.49 &  & 2.90 & 5.80\\
					& {\ 500} & 0.73 & 1.37 &  & 35.05 & 8.75 &  & 3.20 & 7.30\\
					& {1,000} & 4.61 & 0.94 &  & 152.17 & 6.12 &  & 3.60 & 6.50\\
					& {3,000} & 0.02 & 0.50 &  & 6.94 & 3.42 &  & 2.90 & 5.00\\
					&  &  &  &  &  &  &  &  & \\
					{$T=240$} & {\ 100} & 8.56 & 2.91 &  & 385.16 & 13.50 &  & 3.45 & 6.45\\
					& {\ 500} & 0.22 & 1.24 &  & 7.22 & 5.80 &  & 4.00 & 5.80\\
					& {1,000} & 0.06 & 0.90 &  & 4.23 & 4.14 &  & 4.00 & 6.50\\
					& {3,000} & 0.01 & 0.51 &  & 2.41 & 2.38 &  & 4.90 & 5.25\\
					Experiment 2 &  &  &  &  &  &  &  &  & \\
					{$T=60$} & {\ 100} & -13.53 & 4.08 &  & 774.85 & 31.16 &  & 1.40 & 6.00\\
					& {\ 500} & -11.15 & 1.63 &  & 755.67 & 13.33 &  & 2.10 & 5.65\\
					& {1,000} & 10.51 & 1.01 &  & 256.15 & 9.61 &  & 2.00 & 6.55\\
					& {3,000} & -0.10 & 0.56 &  & 71.71 & 5.50 &  & 2.25 & 6.60\\
					&  &  &  &  &  &  &  &  & \\
					{$T=120$} & {\ 100} & 468.29 & 3.25 &  & 20351.06 & 19.57 &  & 2.25 & 6.60\\
					& {\ 500} & 5.00 & 1.46 &  & 141.50 & 8.89 &  & 3.85 & 7.55\\
					& {1,000} & -0.54 & 0.97 &  & 16.07 & 6.06 &  & 2.80 & 5.75\\
					& {3,000} & 0.00 & 0.45 &  & 9.71 & 3.45 &  & 3.35 & 5.45\\
					&  &  &  &  &  &  &  &  & \\
					{$T=240$} & {\ 100} & -1.38 & 2.96 &  & 67.20 & 13.49 &  & 4.15 & 7.50\\
					& {\ 500} & 0.04 & 1.28 &  & 6.83 & 5.85 &  & 4.15 & 5.60\\
					& {1,000} & 0.11 & 0.91 &  & 4.59 & 4.15 &  & 4.65 & 7.15\\
					& {3,000} & -0.03 & 0.48 &  & 2.46 & 2.39 &  & 5.15 & 6.10\\
					Experiment 3 &  &  &  &  &  &  &  &  & \\
					{$T=60$} & {\ 100} & 11.17 & 3.75 &  & 686.97 & 30.84 &  & 1.00 & 6.25\\
					& {\ 500} & -10.50 & 1.65 &  & 377.36 & 13.28 &  & 2.00 & 6.15\\
					& {1,000} & -4.65 & 1.09 &  & 120.83 & 9.59 &  & 1.50 & 6.15\\
					& {3,000} & -4.51 & 0.59 &  & 134.45 & 5.45 &  & 1.90 & 6.25\\
					&  &  &  &  &  &  &  &  & \\
					{$T=120$} & {\ 100} & 5.69 & 3.01 &  & 267.53 & 19.52 &  & 2.65 & 5.65\\
					& {\ 500} & -5.01 & 1.36 &  & 140.66 & 8.79 &  & 3.15 & 7.45\\
					& {1,000} & -0.91 & 0.94 &  & 23.35 & 6.14 &  & 3.35 & 6.45\\
					& {3,000} & -0.35 & 0.50 &  & 20.92 & 3.43 &  & 2.70 & 4.75\\
					&  &  &  &  &  &  &  &  & \\
					{$T=240$} & {\ 100} & 1.04 & 2.89 &  & 38.32 & 13.51 &  & 3.45 & 6.35\\
					& {\ 500} & 0.32 & 1.24 &  & 8.13 & 5.82 &  & 3.95 & 5.90\\
					& {1,000} & 0.08 & 0.90 &  & 4.42 & 4.14 &  & 4.00 & 6.45\\
					& {3,000} & 0.04 & 0.51 &  & 2.59 & 2.39 &  & 4.85 & 5.45\\\hline\hline
				\end{tabular}
			}
		\end{center}
		
		{\footnotesize \noindent Notes: The DGP includes one strong factor,
			$\alpha_{M}=1$, and two weak factors, $\alpha_{H}=\alpha_{S}=0.5$. The results
			labelled "With" include the weak factors correctly, and those labelled
			"Without" exclude the weak factors. For further details of the experiments,
			see Table \ref{TabExperiments}. }
\end{table}%

{\footnotesize \pagebreak}

{\footnotesize
	\begin{figure}[ph]%
		\centering
		\caption{Empirical Power Functions, experiment 1, for coefficient of the
			$\phi_{M}$ factor with and without misspecification}%
		\includegraphics[
		height=5.8608in,
		width=7.1122in
		]%
		{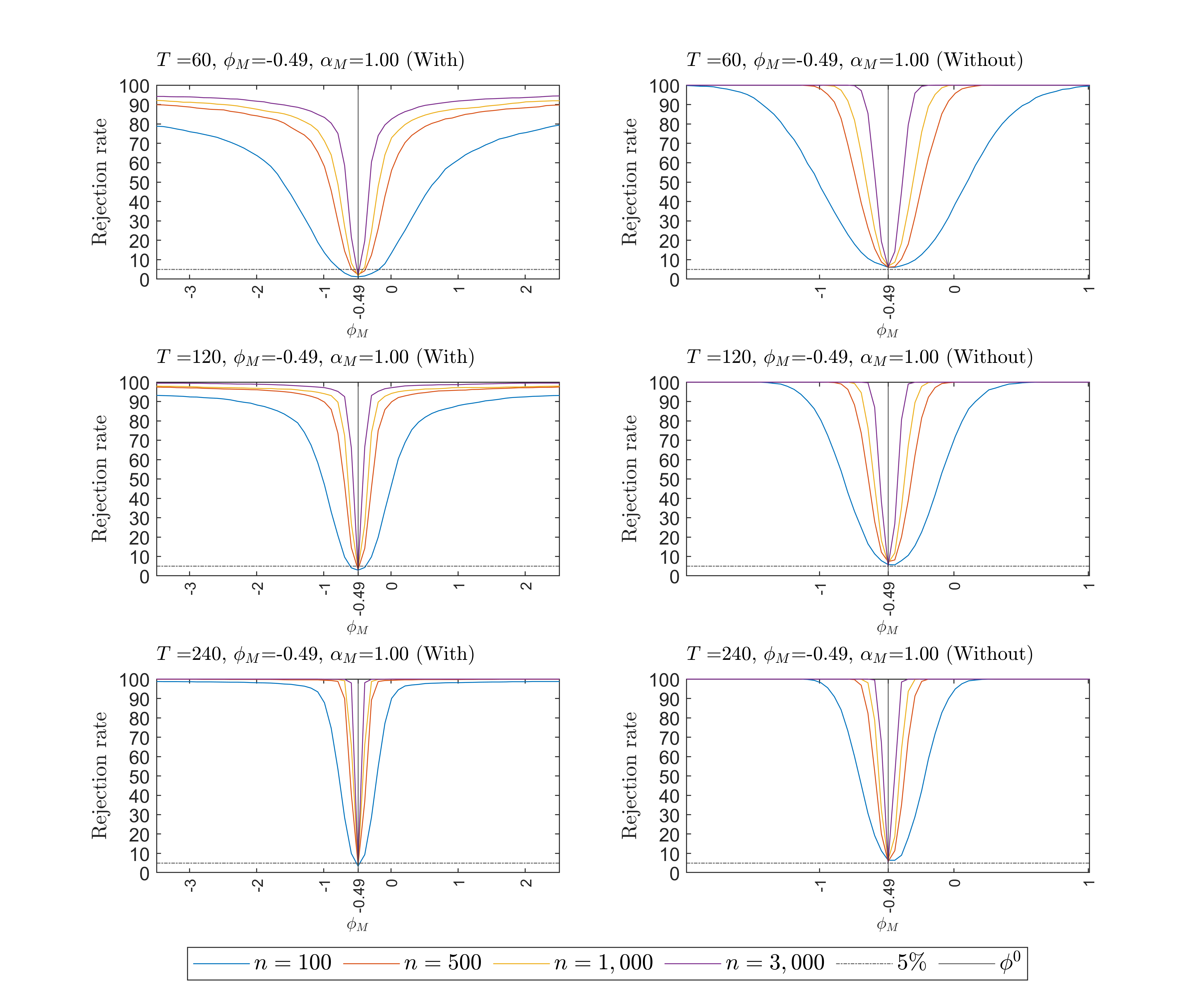}%
		\label{fig:s-c-e1}%
	\end{figure}
	{Note: } See the notes to{ Table \ref{tab:s-c-e1-3}.} }

{\footnotesize \pagebreak}

{\footnotesize
	\begin{figure}[ph]%
		\centering
		\caption{Empirical Power Functions, experiment 2, for coefficient of the
			$\phi_{M}$ factor with and without misspecification}%
		\includegraphics[
		height=5.8608in,
		width=7.1114in
		]%
		{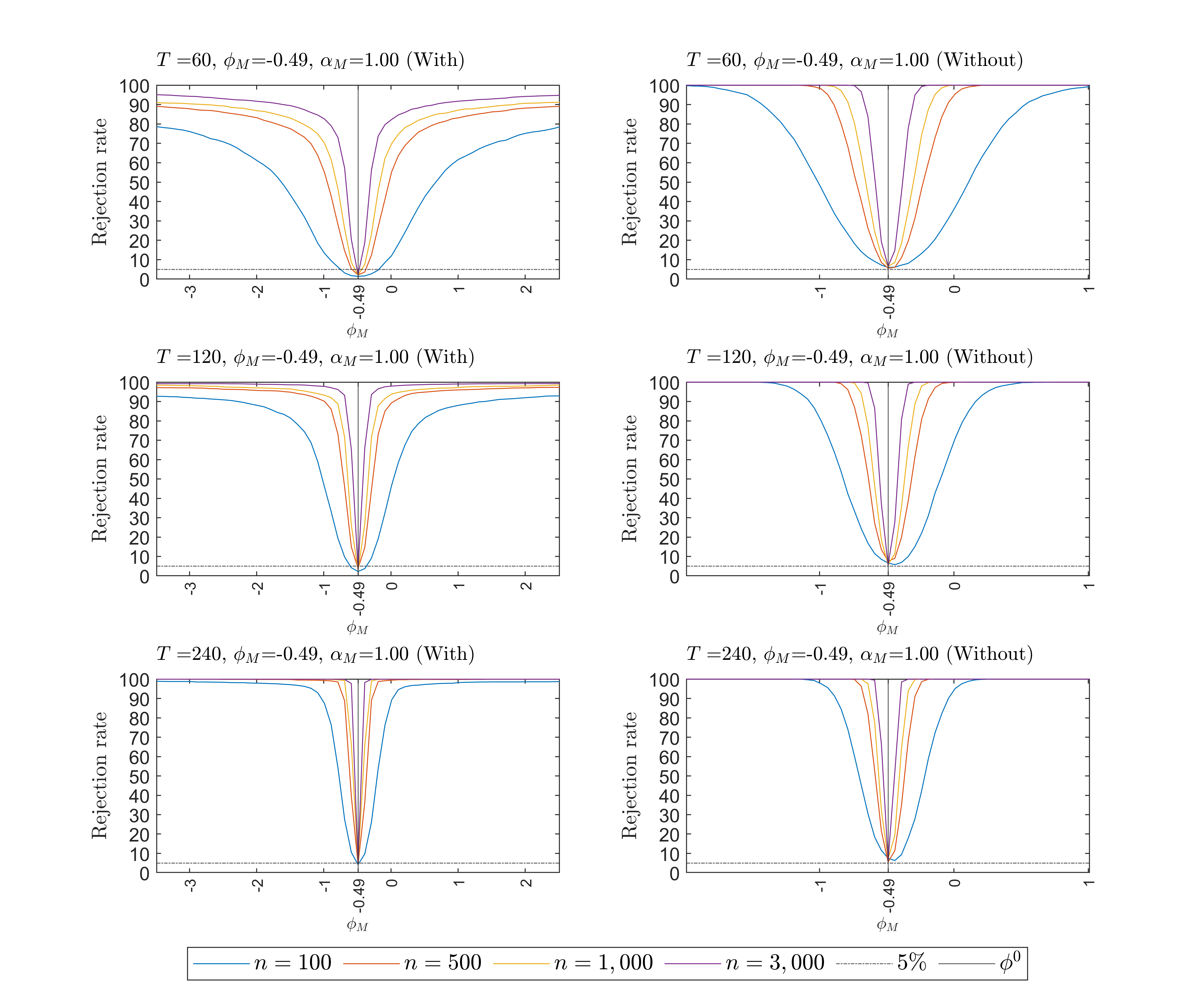}%
		\label{fig:s-c-e2}%
	\end{figure}
	{Note: } See the notes to{ Table \ref{tab:s-c-e1-3}.} }

{\footnotesize \pagebreak}

{\footnotesize
	\begin{figure}[h]%
		\centering
		\caption{Empirical Power Functions, experiment 3, for coefficient of the
			$\phi_{M}$ factor with and without misspecification}%
		\includegraphics[
		height=5.8608in,
		width=7.1122in
		]%
		{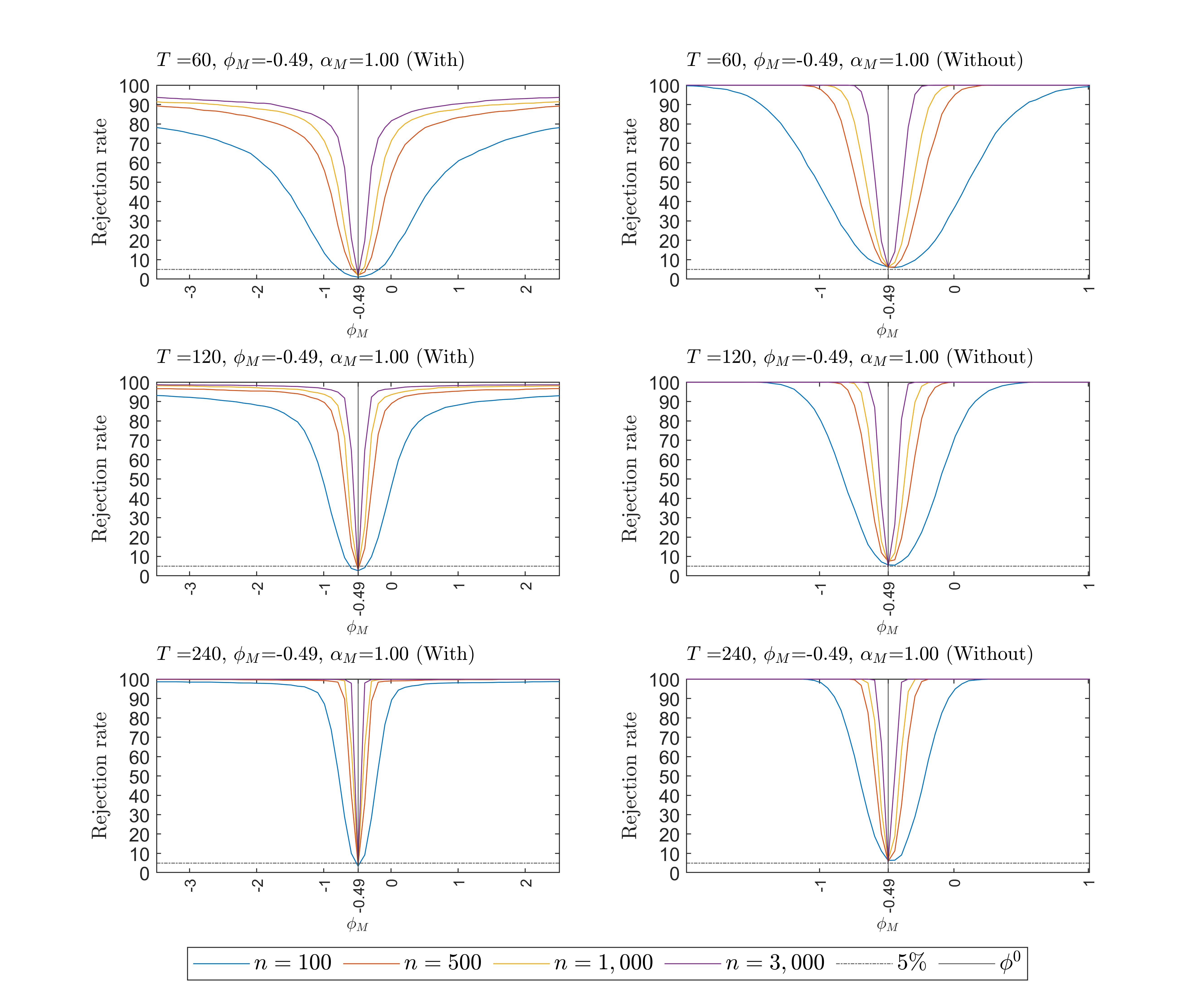}%
		\label{fig:s-c-e3}%
	\end{figure}
	{Note: } See the notes to{ Table \ref{tab:s-c-e1-3}.} }

{\footnotesize \pagebreak}

\renewcommand{\thetable}{S-C-E4-6}%

	\begin{table}[H]%

		\caption{Bias, RMSE and size for the bias-corrected estimators of $\phi_{M}$ =
			$-0.49$, $\alpha_{M}$=$1$ with and without weak factors included in the
			regression for the cases of experiments 4, 5 and 6}\label{tab:s-c-e4-6}
		
		\begin{center}
			{\footnotesize
				\begin{tabular}
					[c]{rrrrrrrrrr}\hline\hline
					&  & \multicolumn{2}{c}{Bias(x100)} &  & \multicolumn{2}{c}{RMSE(x100)} &  &
					\multicolumn{2}{c}{Size(x100)}\\\cline{3-4}\cline{6-7}\cline{9-10}%
					Experiment 4 & n & With & Without &  & With & Without &  & With &
					Without\\\cline{2-4}\cline{6-7}\cline{9-10}
					&  & \multicolumn{2}{c}{weak factors} &  & \multicolumn{2}{c}{weak factors} &
					& \multicolumn{2}{c}{weak factors}\\\cline{2-10}%
					{$T=60$} & {\ 100} & -24.42 & 4.12 &  & 1165.53 & 31.54 &  & 1.35 & 6.15\\
					& {\ 500} & -11.84 & 1.61 &  & 302.88 & 13.49 &  & 1.95 & 5.55\\
					& {1,000} & 5.16 & 1.01 &  & 229.86 & 9.72 &  & 1.90 & 6.75\\
					& {3,000} & -8.17 & 0.57 &  & 450.22 & 5.57 &  & 2.05 & 6.60\\
					&  &  &  &  &  &  &  &  & \\
					{$T=120$} & {\ 100} & -4.28 & 3.22 &  & 452.39 & 19.63 &  & 2.30 & 6.30\\
					& {\ 500} & -0.54 & 1.44 &  & 26.55 & 8.94 &  & 3.80 & 7.85\\
					& {1,000} & 0.98 & 0.96 &  & 31.58 & 6.08 &  & 2.75 & 5.80\\
					& {3,000} & -0.17 & 0.45 &  & 8.84 & 3.46 &  & 3.25 & 5.40\\
					&  &  &  &  &  &  &  &  & \\
					{$T=240$} & {\ 100} & -0.11 & 2.94 &  & 31.04 & 13.51 &  & 3.95 & 7.55\\
					& {\ 500} & -0.50 & 1.28 &  & 27.63 & 5.85 &  & 4.05 & 5.90\\
					& {1,000} & -0.05 & 0.91 &  & 5.23 & 4.16 &  & 4.65 & 6.90\\
					& {3,000} & 0.02 & 0.48 &  & 3.36 & 2.39 &  & 4.85 & 6.05\\
					Experiment 5 &  &  &  &  &  &  &  &  & \\
					{$T=60$} & {\ 100} & 12.72 & 3.75 &  & 679.29 & 30.84 &  & 0.95 & 6.25\\
					& {\ 500} & -10.07 & 1.66 &  & 364.25 & 13.34 &  & 2.10 & 6.15\\
					& {1,000} & -5.01 & 1.07 &  & 122.63 & 9.60 &  & 1.45 & 5.90\\
					& {3,000} & -4.43 & 0.60 &  & 133.49 & 5.47 &  & 1.90 & 6.25\\
					&  &  &  &  &  &  &  &  & \\
					{$T=120$} & {\ 100} & 6.87 & 3.01 &  & 277.10 & 19.52 &  & 2.55 & 5.65\\
					& {\ 500} & -4.67 & 1.36 &  & 127.57 & 8.85 &  & 3.65 & 7.60\\
					& {1,000} & -0.85 & 0.92 &  & 24.13 & 6.16 &  & 3.60 & 6.40\\
					& {3,000} & -0.35 & 0.51 &  & 20.50 & 3.44 &  & 2.65 & 5.20\\
					&  &  &  &  &  &  &  &  & \\
					{$T=240$} & {\ 100} & 0.54 & 2.89 &  & 31.90 & 13.51 &  & 4.15 & 6.35\\
					& {\ 500} & 0.29 & 1.25 &  & 7.91 & 5.86 &  & 3.85 & 6.00\\
					& {1,000} & 0.10 & 0.89 &  & 4.44 & 4.16 &  & 4.30 & 6.75\\
					& {3,000} & 0.04 & 0.52 &  & 2.58 & 2.40 &  & 5.20 & 5.45\\
					Experiment 6 &  &  &  &  &  &  &  &  & \\
					{$T=60$} & {\ 100} & -27.78 & 4.12 &  & 905.91 & 31.54 &  & 1.45 & 6.15\\
					& {\ 500} & -11.62 & 1.60 &  & 305.30 & 13.55 &  & 2.05 & 5.95\\
					& {1,000} & 5.28 & 0.99 &  & 231.07 & 9.73 &  & 1.65 & 6.50\\
					& {3,000} & -7.16 & 0.58 &  & 412.98 & 5.59 &  & 2.05 & 6.95\\
					&  &  &  &  &  &  &  &  & \\
					{$T=120$} & {\ 100} & -2.66 & 3.22 &  & 451.26 & 19.63 &  & 2.30 & 6.30\\
					& {\ 500} & -0.54 & 1.44 &  & 26.52 & 9.00 &  & 4.15 & 7.85\\
					& {1,000} & 0.92 & 0.94 &  & 30.26 & 6.10 &  & 2.90 & 6.30\\
					& {3,000} & -0.18 & 0.46 &  & 8.91 & 3.47 &  & 3.25 & 5.55\\
					&  &  &  &  &  &  &  &  & \\
					{$T=240$} & {\ 100} & -0.18 & 2.94 &  & 29.74 & 13.51 &  & 5.10 & 7.55\\
					& {\ 500} & -0.49 & 1.29 &  & 27.08 & 5.89 &  & 4.40 & 6.30\\
					& {1,000} & -0.02 & 0.90 &  & 5.12 & 4.17 &  & 5.20 & 6.80\\
					& {3,000} & 0.03 & 0.49 &  & 3.47 & 2.40 &  & 4.50 & 6.45\\\hline\hline
				\end{tabular}
			}
		\end{center}
		
		{\footnotesize \noindent Notes: The DGP includes one strong $\alpha_{M}=1$ and
			two weak $(\alpha_{H}=\alpha_{S}=0.5)$ factors, the regression with weak
			factors includes them, the regression without excludes them. For further
			details of the experiments, see Table \ref{TabExperiments}. }
\end{table}%

{\footnotesize \pagebreak}

{\footnotesize
	\begin{figure}[ph]%
		\centering
		\caption{Empirical Power Functions, experiment 4, for coefficient of the
			$\phi_{M}$ factor with and without misspecification}%
		\includegraphics[
		height=5.8608in,
		width=7.1122in
		]%
		{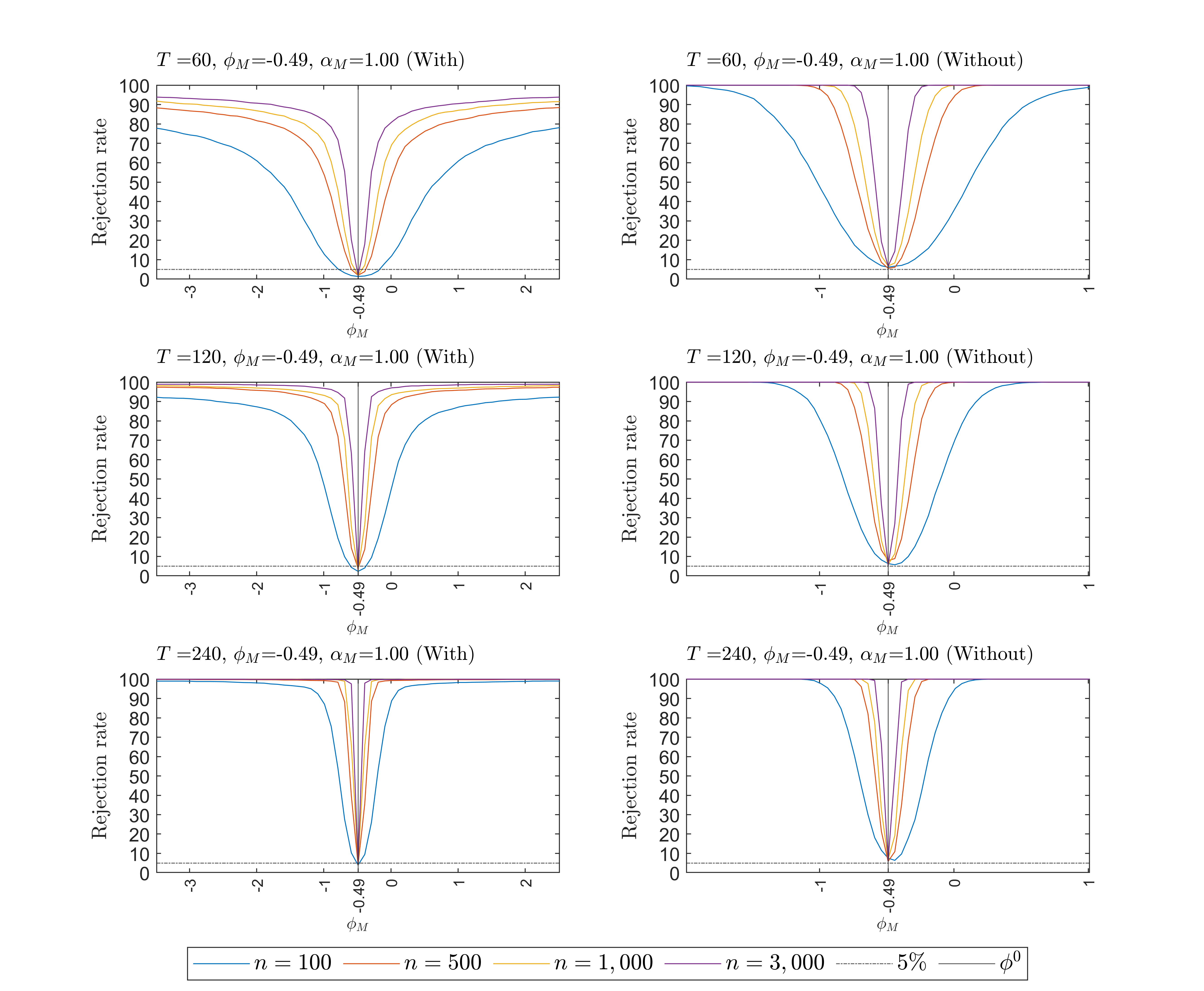}%
		\label{fig:s-c-e4}%
	\end{figure}
	{Note: } See the notes to{ Table \ref{tab:s-c-e4-6}.} }

{\footnotesize \pagebreak}

{\footnotesize
	\begin{figure}[ph]%
		\centering
		\caption{Empirical Power Functions, experiment 5, for coefficient of the
			$\phi_{M}$ factor with and without misspecification}%
		\includegraphics[
		height=5.8608in,
		width=7.1114in
		]%
		{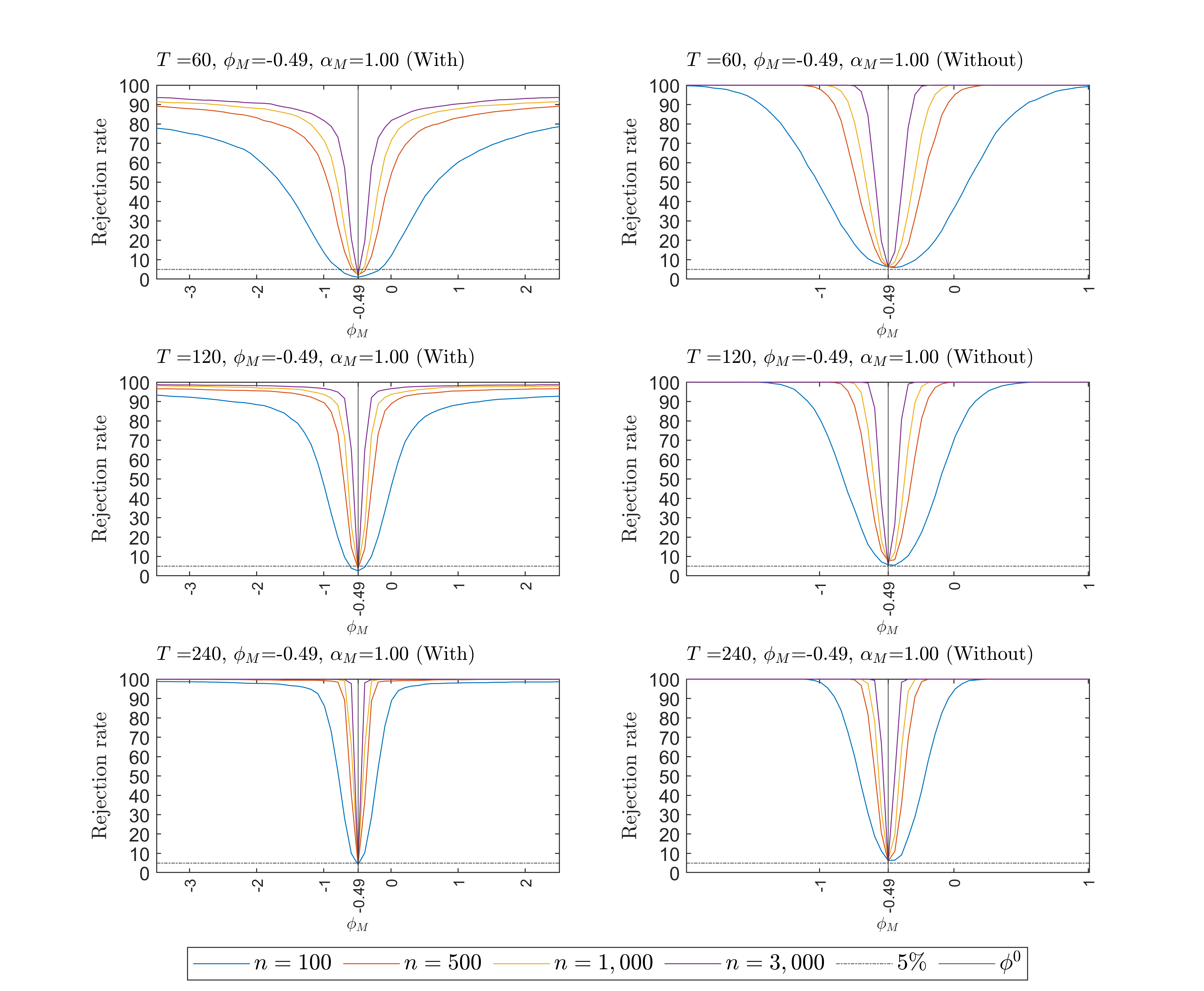}%
		\label{fig:s-c-e5}%
	\end{figure}
	{Note: } See the notes to{ Table \ref{tab:s-c-e4-6}.} }

{\footnotesize \pagebreak}

{\footnotesize
	\begin{figure}[h]%
		\centering
		\caption{Empirical Power Functions, experiment 6, for coefficient of the
			$\phi_{M}$ factor with and without misspecification}%
		\includegraphics[
		height=5.8608in,
		width=7.1122in
		]%
		{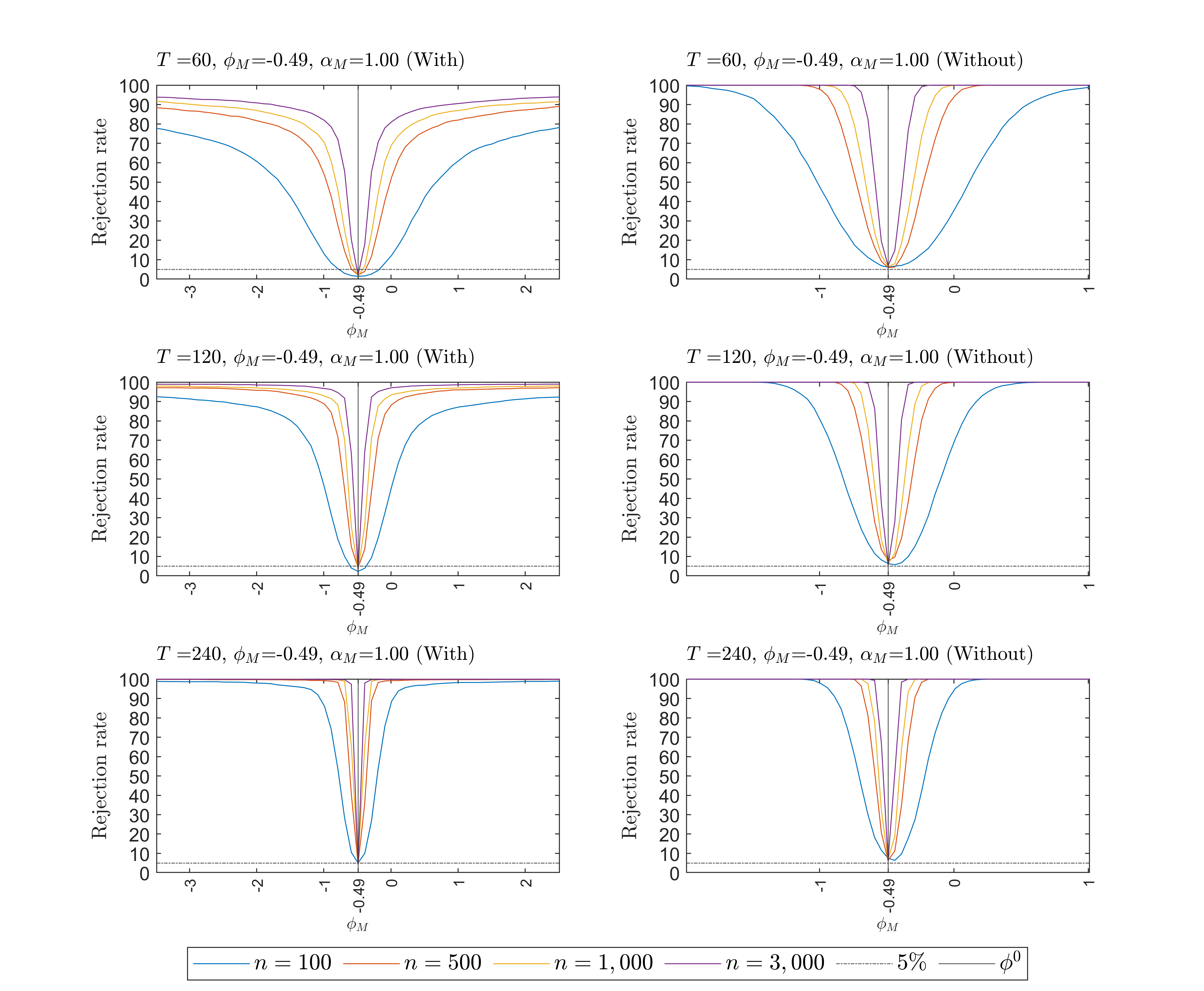}%
		\label{fig:s-c-e6}%
	\end{figure}
	{Note: } See the notes to{ Table \ref{tab:s-c-e4-6}.} }

{\footnotesize \pagebreak}

\renewcommand{\thetable}{S-C-E7-9}%

	\begin{table}[H]%

		\caption{Bias, RMSE and size for the bias-corrected estimators of $\phi_{M}$ =
			$-0.49$, $\alpha_{M}$=$1$ with and without weak factors included in the
			regression for the cases of experiments 7, 8 and 9}\label{tab:s-c-e7-9}
		
		\begin{center}
			{\footnotesize
				\begin{tabular}
					[c]{rrrrrrrrrr}\hline\hline
					&  & \multicolumn{2}{c}{Bias(x100)} &  & \multicolumn{2}{c}{RMSE(x100)} &  &
					\multicolumn{2}{c}{Size(x100)}\\\cline{3-4}\cline{6-7}\cline{9-10}%
					Experiment 7 & n & With & Without &  & With & Without &  & With &
					Without\\\cline{2-4}\cline{6-7}\cline{9-10}
					&  & \multicolumn{2}{c}{weak factors} &  & \multicolumn{2}{c}{weak factors} &
					& \multicolumn{2}{c}{weak factors}\\\cline{2-10}%
					{$T=60$} & {\ 100} & 3.08 & 2.82 &  & 859.39 & 30.18 &  & 1.30 & 5.35\\
					& {\ 500} & 24.88 & 1.20 &  & 840.21 & 13.62 &  & 1.85 & 6.65\\
					& {1,000} & 0.24 & 0.84 &  & 79.72 & 9.64 &  & 1.35 & 6.25\\
					& {3,000} & -4.91 & 0.61 &  & 528.52 & 5.51 &  & 2.20 & 6.40\\
					&  &  &  &  &  &  &  &  & \\
					{$T=120$} & {\ 100} & 4.42 & 2.67 &  & 216.43 & 19.15 &  & 2.10 & 6.00\\
					& {\ 500} & 1.74 & 1.23 &  & 147.10 & 8.59 &  & 2.95 & 6.05\\
					& {1,000} & -3.59 & 0.89 &  & 154.11 & 6.12 &  & 3.75 & 5.90\\
					& {3,000} & -0.84 & 0.61 &  & 41.40 & 3.52 &  & 3.75 & 5.75\\
					&  &  &  &  &  &  &  &  & \\
					{$T=240$} & {\ 100} & -0.94 & 2.74 &  & 22.50 & 13.27 &  & 4.95 & 5.80\\
					& {\ 500} & -0.83 & 1.11 &  & 38.11 & 6.04 &  & 6.15 & 7.15\\
					& {1,000} & -0.06 & 0.84 &  & 4.85 & 4.25 &  & 4.95 & 7.00\\
					& {3,000} & -0.01 & 0.55 &  & 2.46 & 2.38 &  & 4.70 & 6.50\\
					Experiment 8 &  &  &  &  &  &  &  &  & \\
					{$T=60$} & {\ 100} & -18.57 & 2.85 &  & 1090.60 & 31.04 &  & 1.30 & 5.70\\
					& {\ 500} & 3.08 & 0.84 &  & 248.52 & 13.70 &  & 1.90 & 6.10\\
					& {1,000} & 11.56 & 0.98 &  & 511.88 & 9.66 &  & 1.60 & 5.60\\
					& {3,000} & -0.24 & 0.71 &  & 45.00 & 5.66 &  & 2.05 & 6.50\\
					&  &  &  &  &  &  &  &  & \\
					{$T=120$} & {\ 100} & 15.36 & 2.83 &  & 488.96 & 19.33 &  & 2.40 & 5.00\\
					& {\ 500} & -8.41 & 1.12 &  & 424.57 & 8.53 &  & 3.55 & 5.70\\
					& {1,000} & -1.21 & 1.04 &  & 42.58 & 6.25 &  & 3.65 & 6.25\\
					& {3,000} & -0.68 & 0.64 &  & 43.26 & 3.56 &  & 3.60 & 6.00\\
					&  &  &  &  &  &  &  &  & \\
					{$T=240$} & {\ 100} & -4.70 & 2.85 &  & 184.48 & 13.30 &  & 5.10 & 6.30\\
					& {\ 500} & 0.07 & 1.17 &  & 7.15 & 5.98 &  & 5.45 & 6.30\\
					& {1,000} & -0.07 & 0.85 &  & 4.44 & 4.27 &  & 5.10 & 6.60\\
					& {3,000} & 0.01 & 0.54 &  & 2.41 & 2.40 &  & 4.80 & 6.70\\
					Experiment 9 &  &  &  &  &  &  &  &  & \\
					{$T=60$} & {\ 100} & 16.90 & 1.71 &  & 1051.65 & 32.31 &  & 1.35 & 6.65\\
					& {\ 500} & -10.51 & 0.95 &  & 739.68 & 14.35 &  & 2.35 & 7.50\\
					& {1,000} & -2.96 & 0.53 &  & 160.31 & 10.26 &  & 2.65 & 8.75\\
					& {3,000} & -0.09 & 0.43 &  & 106.32 & 5.80 &  & 2.70 & 8.35\\
					&  &  &  &  &  &  &  &  & \\
					{$T=120$} & {\ 100} & 61.02 & 1.74 &  & 2471.10 & 20.37 &  & 2.05 & 6.25\\
					& {\ 500} & -9.78 & 1.08 &  & 365.78 & 8.87 &  & 2.70 & 6.00\\
					& {1,000} & 0.35 & 0.73 &  & 77.55 & 6.36 &  & 3.00 & 7.30\\
					& {3,000} & 1.23 & 0.46 &  & 48.78 & 3.59 &  & 3.85 & 6.55\\
					&  &  &  &  &  &  &  &  & \\
					{$T=240$} & {\ 100} & 0.51 & 1.73 &  & 52.46 & 13.89 &  & 4.55 & 6.90\\
					& {\ 500} & -0.22 & 0.91 &  & 8.62 & 6.08 &  & 5.75 & 6.40\\
					& {1,000} & -0.20 & 0.71 &  & 7.44 & 4.27 &  & 5.50 & 7.15\\
					& {3,000} & -0.04 & 0.43 &  & 2.52 & 2.37 &  & 4.30 & 5.60\\\hline\hline
				\end{tabular}
			}
		\end{center}
		
		{\footnotesize \noindent Notes: The DGP includes one strong $\alpha_{M}=1$ and
			two weak $(\alpha_{H}=\alpha_{S}=0.5)$ factors, the regression with weak
			factors includes them, the regression without excludes them. For further
			details of the experiments, see Table \ref{TabExperiments}. }
\end{table}%

{\footnotesize \pagebreak}

{\footnotesize
	\begin{figure}[ph]%
		\centering
		\caption{Empirical Power Functions, experiment 7, for coefficient of the
			$\phi_{M}$ factor with and without misspecification}%
		\includegraphics[
		height=5.8608in,
		width=7.1114in
		]%
		{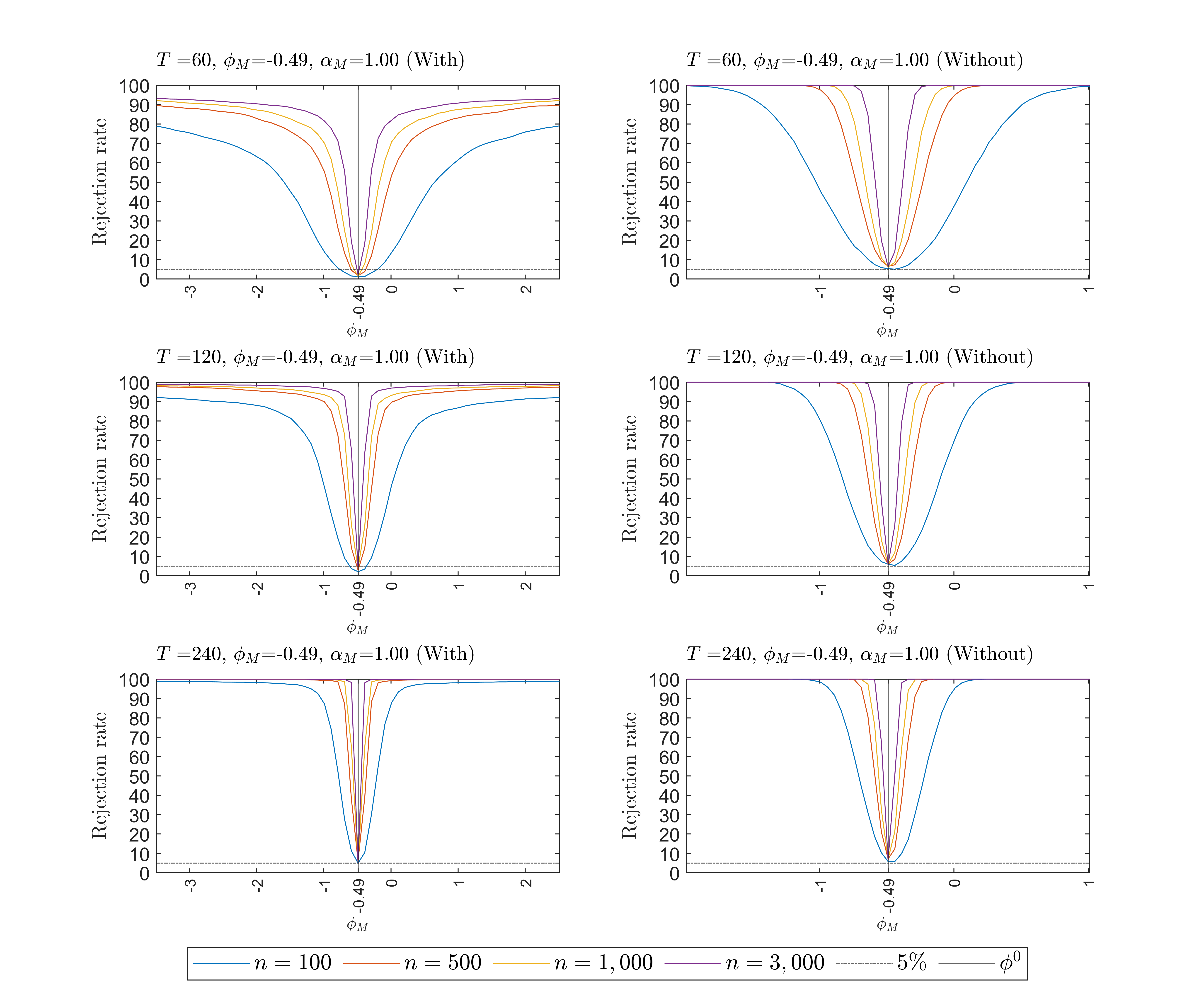}%
		\label{fig:s-c-e7}%
	\end{figure}
	{Note: } See the notes to{ Table \ref{tab:s-c-e7-9}.} }

{\footnotesize \pagebreak}

{\footnotesize
	\begin{figure}[ph]%
		\centering
		\caption{Empirical Power Functions, experiment 8, for coefficient of the
			$\phi_{M}$ factor with and without misspecification}%
		\includegraphics[
		height=5.8608in,
		width=7.1114in
		]%
		{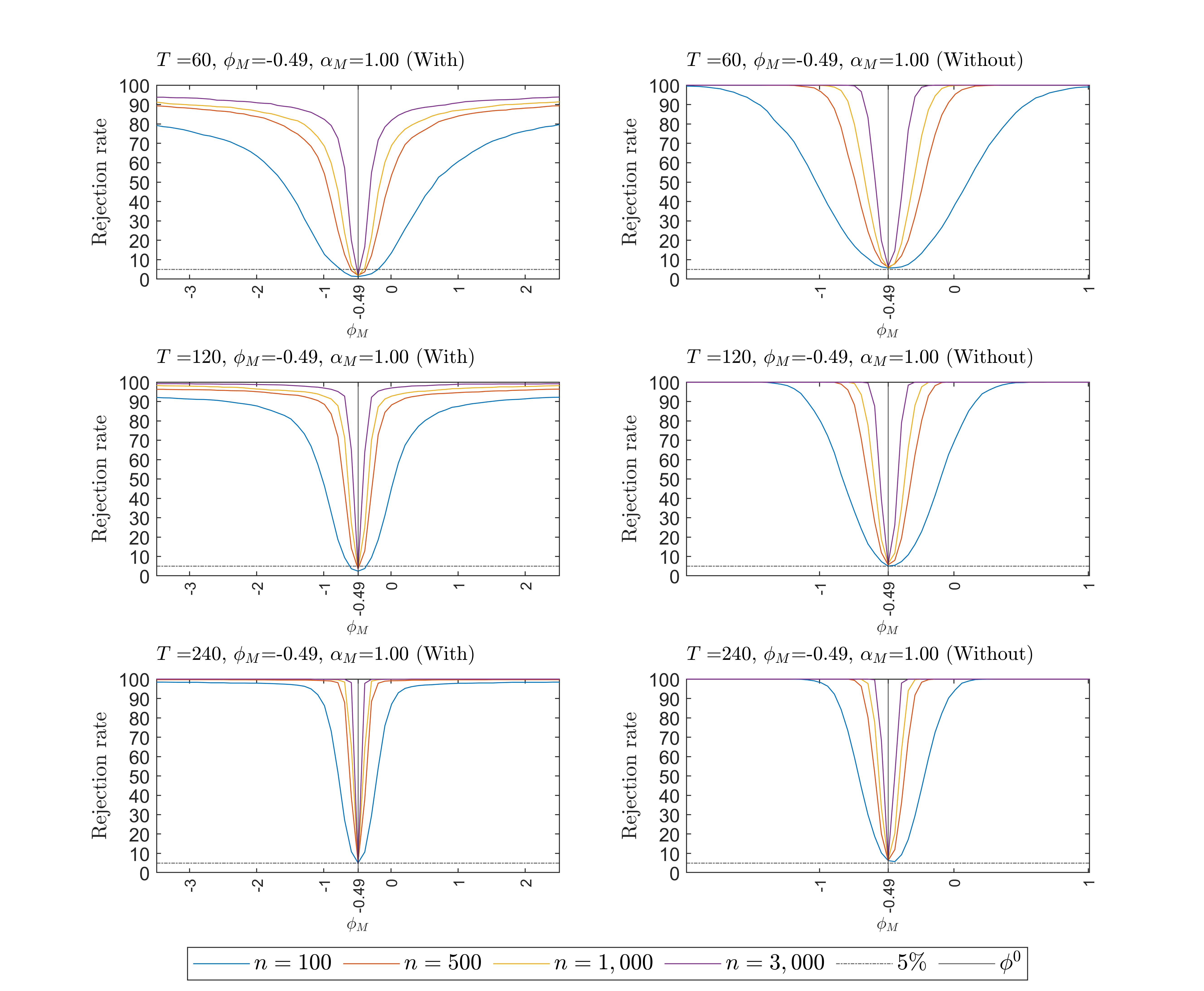}%
		\label{fig:s-c-e8}%
	\end{figure}
	{Note: } See the notes to{ Table \ref{tab:s-c-e7-9}.} }

{\footnotesize \pagebreak}

{\footnotesize
	\begin{figure}[ph]%
		\centering
		\caption{Empirical Power Functions, experiment 9, for coefficient of the
			$\phi_{M}$ factor with and without misspecification}%
		\includegraphics[
		height=5.8608in,
		width=7.1114in
		]%
		{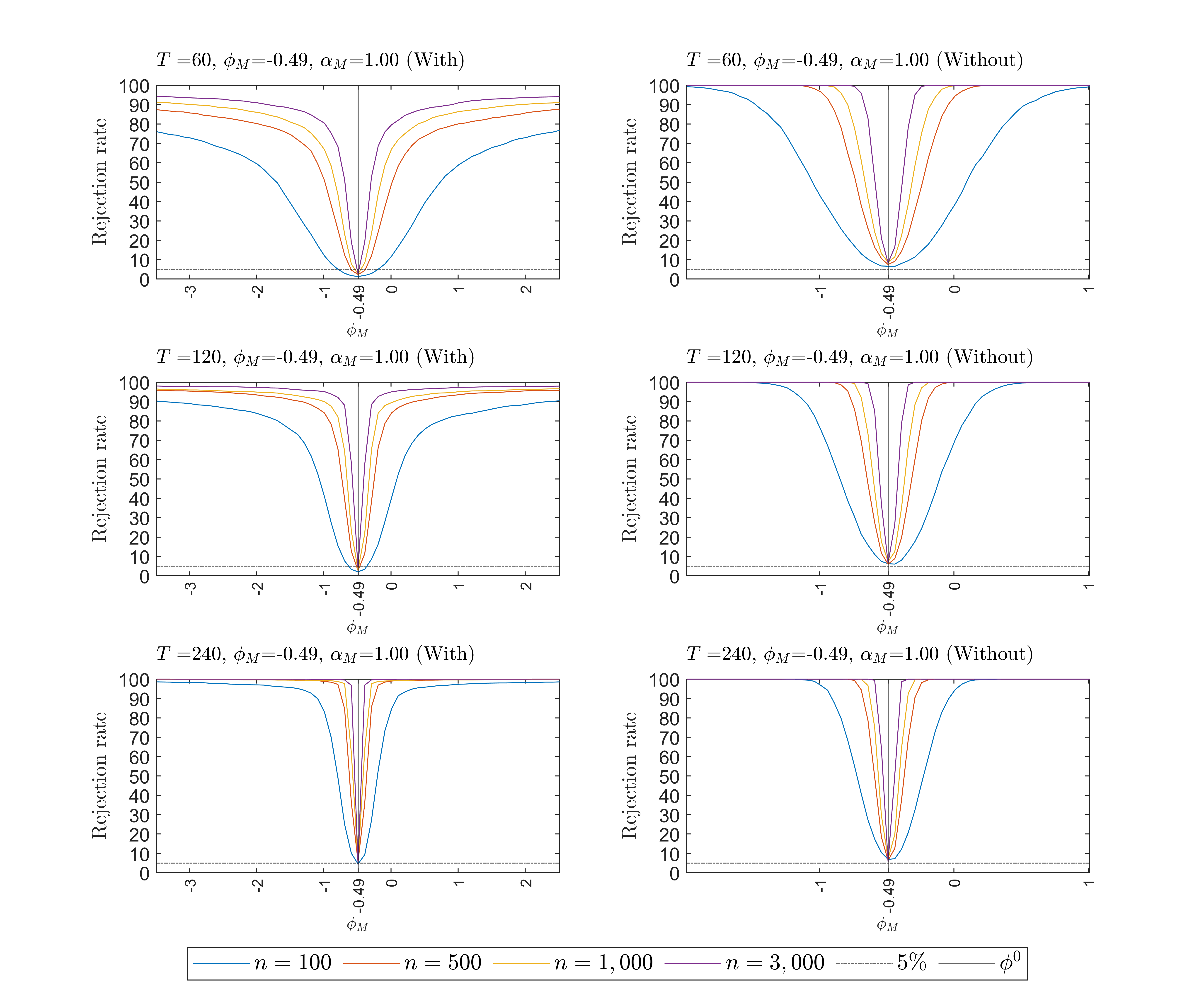}%
		\label{fig:s-c-e9}%
	\end{figure}
	{Note: } See the notes to{ Table \ref{tab:s-c-e7-9}.} }

{\footnotesize \pagebreak}

\renewcommand{\thetable}{S-C-E10-12}%

	\begin{table}[H]%

		\caption{Bias, RMSE and size for the bias-corrected estimators of $\phi_{M}$ =
			$-0.49$, $\alpha_{M}$=$1$ with and without weak factors included in the
			regression for the cases of experiments 10, 11 and 12}\label{tab:s-c-e10-12}
		
		\begin{center}
			{\footnotesize
				\begin{tabular}
					[c]{rrrrrrrrrr}\hline\hline
					&  & \multicolumn{2}{c}{Bias(x100)} &  & \multicolumn{2}{c}{RMSE(x100)} &  &
					\multicolumn{2}{c}{Size(x100)}\\\cline{3-4}\cline{6-7}\cline{9-10}%
					Experiment 10 & n & With & Without &  & With & Without &  & With &
					Without\\\cline{2-4}\cline{6-7}\cline{9-10}
					&  & \multicolumn{2}{c}{weak factors} &  & \multicolumn{2}{c}{weak factors} &
					& \multicolumn{2}{c}{weak factors}\\\cline{2-10}%
					{$T=60$} & {\ 100} & -1.30 & 1.80 &  & 581.15 & 33.08 &  & 1.55 & 6.95\\
					& {\ 500} & 19.94 & 0.54 &  & 738.70 & 14.45 &  & 2.10 & 8.20\\
					& {1,000} & 1.31 & 0.69 &  & 76.46 & 10.23 &  & 1.85 & 7.35\\
					& {3,000} & 1.02 & 0.53 &  & 333.48 & 5.91 &  & 2.70 & 8.30\\
					&  &  &  &  &  &  &  &  & \\
					{$T=120$} & {\ 100} & 27.62 & 1.91 &  & 810.64 & 20.78 &  & 1.90 & 6.30\\
					& {\ 500} & -4.30 & 0.95 &  & 195.46 & 8.90 &  & 2.40 & 6.25\\
					& {1,000} & 7.68 & 0.93 &  & 212.90 & 6.46 &  & 3.45 & 7.25\\
					& {3,000} & -0.46 & 0.49 &  & 34.61 & 3.60 &  & 3.20 & 5.50\\
					&  &  &  &  &  &  &  &  & \\
					{$T=240$} & {\ 100} & 0.01 & 1.87 &  & 109.19 & 14.08 &  & 4.60 & 7.30\\
					& {\ 500} & 0.27 & 0.94 &  & 19.59 & 6.08 &  & 5.20 & 6.50\\
					& {1,000} & 0.41 & 0.76 &  & 21.39 & 4.28 &  & 5.50 & 6.55\\
					& {3,000} & -0.04 & 0.76 &  & 2.57 & 2.40 &  & 4.70 & 7.10\\
					Experiment 11 &  &  &  &  &  &  &  &  & \\
					{$T=60$} & {\ 100} & 17.29 & 2.89 &  & 696.53 & 30.28 &  & 1.45 & 5.40\\
					& {\ 500} & -5.92 & 1.27 &  & 510.19 & 13.71 &  & 2.00 & 6.80\\
					& {1,000} & 45.15 & 0.78 &  & 2091.69 & 9.66 &  & 2.00 & 6.45\\
					& {3,000} & 2.77 & 0.59 &  & 75.60 & 5.61 &  & 2.65 & 7.05\\
					&  &  &  &  &  &  &  &  & \\
					{$T=120$} & {\ 100} & 4.02 & 2.73 &  & 135.30 & 19.16 &  & 2.55 & 6.20\\
					& {\ 500} & 0.44 & 1.34 &  & 24.35 & 8.61 &  & 3.50 & 5.60\\
					& {1,000} & 10.16 & 0.86 &  & 415.98 & 6.09 &  & 3.85 & 6.10\\
					& {3,000} & 0.10 & 0.62 &  & 7.88 & 3.54 &  & 4.65 & 6.70\\
					&  &  &  &  &  &  &  &  & \\
					{$T=240$} & {\ 100} & -0.31 & 2.77 &  & 14.31 & 13.28 &  & 4.10 & 6.15\\
					& {\ 500} & 0.07 & 1.19 &  & 6.78 & 6.06 &  & 5.45 & 6.90\\
					& {1,000} & -0.06 & 0.81 &  & 4.40 & 4.23 &  & 5.45 & 6.65\\
					& {3,000} & 0.01 & 0.55 &  & 2.37 & 2.38 &  & 4.65 & 6.50\\
					Experiment 12 &  &  &  &  &  &  &  &  & \\
					{$T=60$} & {\ 100} & 0.33 & 2.93 &  & 531.55 & 31.13 &  & 1.35 & 5.75\\
					& {\ 500} & -16.77 & 0.90 &  & 496.29 & 13.75 &  & 1.70 & 5.60\\
					& {1,000} & 17.22 & 0.93 &  & 766.48 & 9.68 &  & 1.65 & 6.15\\
					& {3,000} & -0.91 & 0.68 &  & 162.80 & 5.75 &  & 2.40 & 7.00\\
					&  &  &  &  &  &  &  &  & \\
					{$T=120$} & {\ 100} & -11.47 & 2.88 &  & 529.74 & 19.36 &  & 2.55 & 5.15\\
					& {\ 500} & -16.06 & 1.23 &  & 555.75 & 8.55 &  & 3.55 & 5.80\\
					& {1,000} & 0.99 & 1.03 &  & 20.02 & 6.22 &  & 3.50 & 6.50\\
					& {3,000} & -2.21 & 0.64 &  & 100.80 & 3.57 &  & 4.00 & 5.90\\
					&  &  &  &  &  &  &  &  & \\
					{$T=240$} & {\ 100} & -0.27 & 2.88 &  & 19.39 & 13.31 &  & 4.20 & 6.30\\
					& {\ 500} & 0.10 & 1.24 &  & 6.30 & 5.99 &  & 5.35 & 6.25\\
					& {1,000} & -0.05 & 0.83 &  & 4.37 & 4.23 &  & 5.10 & 6.70\\
					& {3,000} & 0.01 & 0.54 &  & 2.39 & 2.41 &  & 4.95 & 6.50\\\hline\hline
				\end{tabular}
			}
		\end{center}
		
		{\footnotesize \noindent Notes: The DGP includes one strong $\alpha_{M}=1$ and
			two weak $(\alpha_{H}=\alpha_{S}=0.5)$ factors, the regression with weak
			factors includes them, the regression without excludes them. For further
			details of the experiments, see Table \ref{TabExperiments}.}%
		
\end{table}%

{\footnotesize \pagebreak}

{\footnotesize
	\begin{figure}[ph]%
		\centering
		\caption{Empirical Power Functions, experiment 10, for coefficient of the
			$\phi_{M}$ factor with and without misspecification}%
		\includegraphics[
		height=5.8608in,
		width=7.1114in
		]%
		{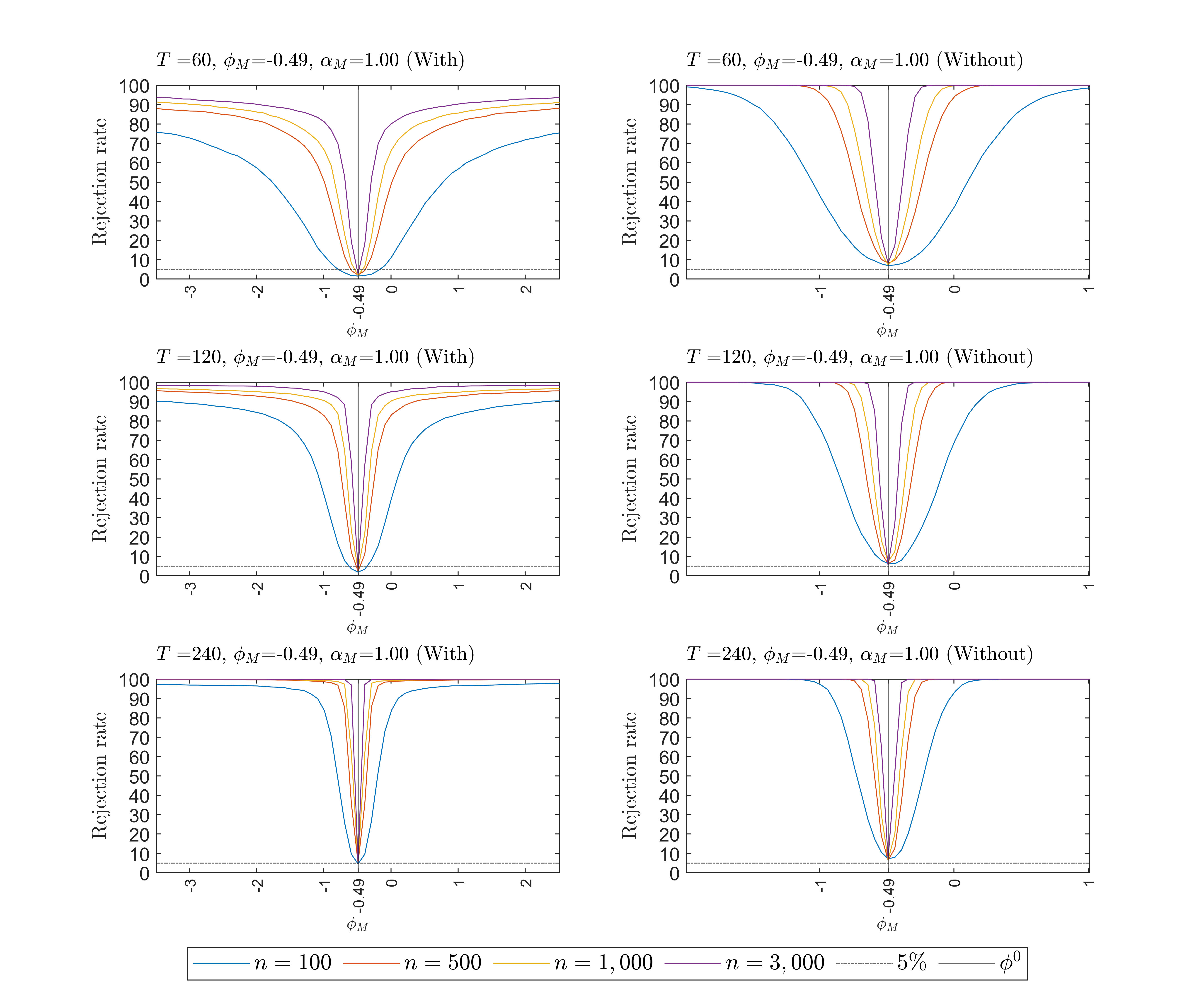}%
		\label{fig:s-c-e10}%
	\end{figure}
	{Note: } See the notes to{ Table \ref{tab:s-c-e10-12}.} }

{\footnotesize \pagebreak}

{\footnotesize
	\begin{figure}[ph]%
		\centering
		\caption{Empirical Power Functions, experiment 11, for coefficient of the
			$\phi_{M}$ factor with and without misspecification}%
		\includegraphics[
		height=5.8608in,
		width=7.1114in
		]%
		{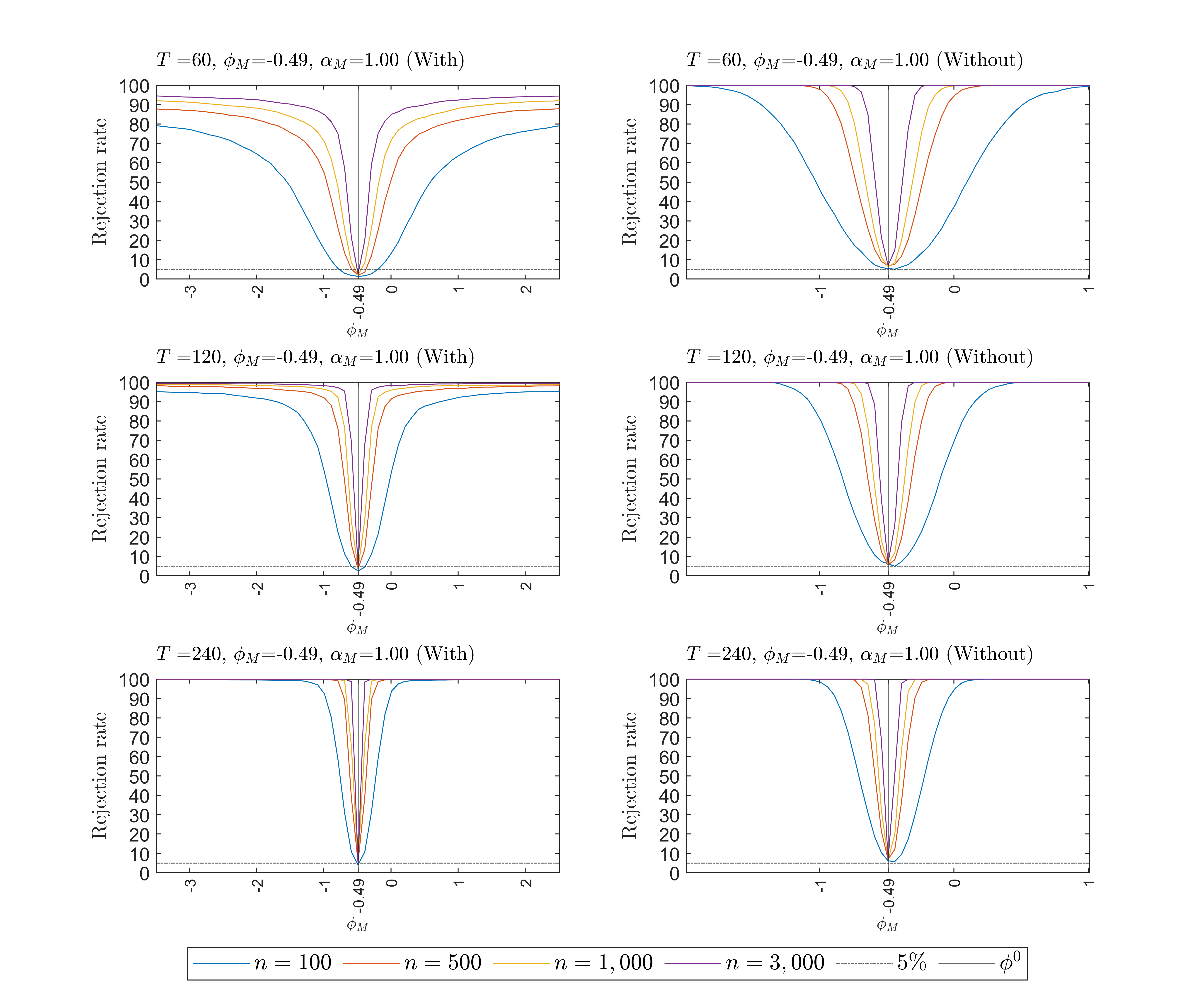}%
		\label{fig:s-c-e11}%
	\end{figure}
	{Note: } See the notes to{ Table \ref{tab:s-c-e10-12}.} }

{\footnotesize \pagebreak}

{\footnotesize
	\begin{figure}[ph]%
		\centering
		\caption{Empirical Power Functions, experiment 12, for coefficient of the
			$\phi_{M}$ factor with and without misspecification}%
		\includegraphics[
		height=5.8608in,
		width=7.1114in
		]%
		{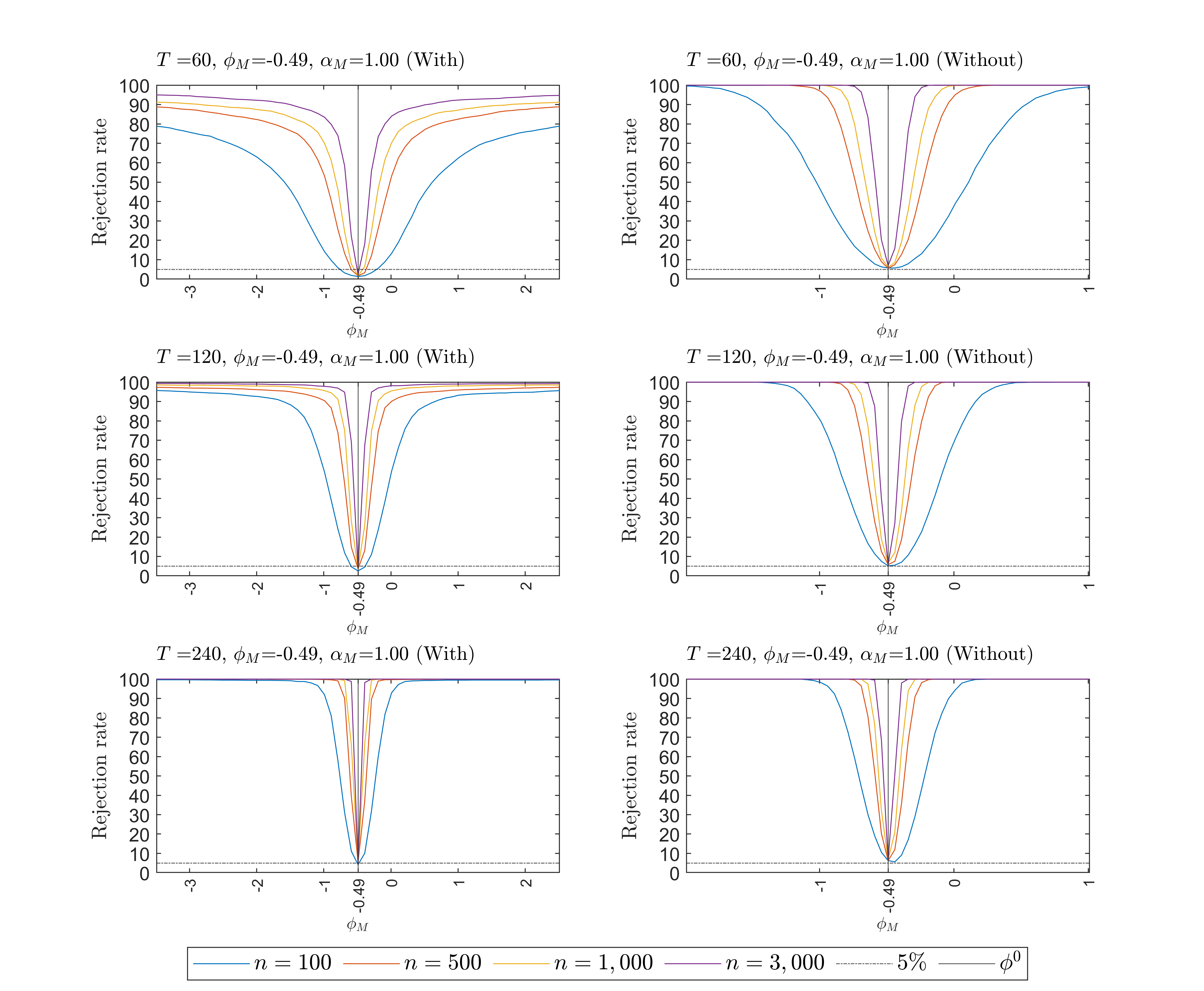}%
		\label{fig:s-c-e12}%
	\end{figure}
	{Note: } See the notes to{ Table \ref{tab:s-c-e10-12}.} }
\end{document}